\definecolor{darkgreen}{rgb}{.15,.6,.15}
\definecolor{darkcyan}{rgb}{.15,.5,.5}
\definecolor{darkred}{rgb}{.6,.15,.15}
\newcommand{\id}{\mathds{1}}
\newcommand{\tr}{\mathrm{tr}}
 \renewcommand{\i}{\,\ensuremath\mathrm{i}}
\newtheorem{theorem}{Theorem}
\newtheorem{lemma}[theorem]{Lemma}
\newcommand{\ket}[1]{\left.\left|{#1}\right.\right\rangle}
\newcommand{\sket}[1]{\left.|{#1}\right.\rangle}
\newcommand{\bra}[1]{\left.\left\langle{#1}\right.\right|}
\newcommand{\braket}[2]{\left\langle #1 \middle| #2 \right\rangle}
\newcommand{\ketbra}[2]{\ket{#1} \!\! \bra{#2}}
\newcommand{\sandwich}[3]
  {\left\langle  #1 \right| #2 \left| #3 \right\rangle}
\newcommand\vacbra{{\bra{\emptyset}}}
\newcommand\vacket{{\ket{\emptyset}}}
\DeclareMathAlphabet{\mathpzcc}{OT1}{pzc}{m}{it}
\DeclareMathAlphabet{\mathpzc}{T1}{pzc}{m}{it}{\huge}
\newcommand{\m}{\operatorname{\gamma}}
\newcommand{\mt}{\operatorname{\tilde{\gamma}}}
\newcommand{\fe}{\operatorname{f}^{\phantom{\dagger}}}
\newcommand{\fd}{\operatorname{f}^\dagger}
\newcommand{\ftd}{\tilde{\operatorname{f}}^\dagger}
\newcommand\Par[1][]{%
  \ifstrempty{#1}{%
    \mathcal P_\text{tot}
  }{
    \mathcal P_{#1}
  }
}
\def\and{\quad\text{and}\quad}
\newcommand\BC[1]{ {\{0,1\}^{\times #1}}}
\begin{document}
%\title{Majorana dimers in the holographic pentagon code}
\title{Majorana dimers and holographic quantum error-correcting codes}
\author{A.\ Jahn, M.\ Gluza, F.\ Pastawski, J.\ Eisert}
\address{Dahlem Center for Complex Quantum Systems, Freie Universit{\"a}t Berlin, 14195 Berlin, Germany}

\begin{abstract}
Holographic quantum error-correcting codes have been proposed as toy models that describe key aspects of the AdS/CFT correspondence. In this work, we introduce a versatile framework of Majorana dimers capturing the intersection of stabilizer and Gaussian Majorana states. This picture allows for an efficient contraction with a simple diagrammatic interpretation and is amenable to analytical study of  holographic quantum error-correcting codes. Equipped with this framework, we revisit the recently proposed hyperbolic pentagon code (HyPeC). Relating its logical code basis to Majorana dimers, we efficiently compute boundary state properties even for the non-Gaussian case of generic logical input.
The dimers characterizing these boundary states coincide with discrete bulk geodesics, leading to a geometric picture from which properties of entanglement, quantum error correction, and bulk/boundary operator mapping immediately follow. 
We also elaborate upon the emergence of the Ryu-Takayanagi formula from our model, which realizes many of the properties of the recent bit thread proposal.
Our work thus elucidates the connection between bulk geometry, entanglement, and quantum error correction in AdS/CFT, and lays the foundation for new models of holography.
\end{abstract}

\maketitle
\date{\today}
\tableofcontents
\section{Introduction}
The holographic principle -- the idea that certain theories of gravity are dual to lower-dimensional quantum field theory -- has had wide-ranging applications within theoretical physics.
In particular, the AdS/CFT correspondence has changed our understanding of theories of both (quantum) gravity and quantum field theory, by giving a specific relationship between gravity on $d{+}1$-dimensional negatively curved \textsl{Anti-de Sitter} spacetime (AdS) and $d$-dimensional conformal field theory (CFT) \cite{Maldacena98,Witten:1998qj}.
A number of simple models capturing key aspects of holography have been constructed \cite{Almheiri15,Qi2013,Pastawski2015,Hayden2016,Mintun2015,PhysRevD.86.065007}, largely relying on \textsl{tensor network} descriptions of bulk AdS geometry and boundary states. 
Tensor networks have long been understood as describing a state in terms of its entanglement structure \cite{verstraete2006matrix}, thus serving as an ideal tool to study holography in terms 
of notions of quantum information theory
\cite{fannes1992finitely,VerstraeteBig,schollwock2011density,orus2014practical,AreaReview}.
The basis of this work is the tensor network construction of the \emph{hyperbolic pentagon code}  (HyPeC), a class of holographic models often named \emph{HaPPY codes} after the authors' initials \cite{Pastawski2015}. These codes explicitly realize \textsl{holographic quantum error correction} \cite{Almheiri15} by providing an error-correctable mapping from bulk to boundary degrees of freedom, reproducing many of the features of AdS/CFT. 
However, the boundary states of the HyPeC differ from other tensor network models specifically designed to produce physical CFTs, such as the MERA \cite{PhysRevLett.101.110501}. 
%Unfortunately, the lack of an efficient scheme to contract the tensor network associated with the HyPeC prohibited the explicit construction of these boundary states.
For computational basis bulk inputs, where the tensor network becomes Gaussian and efficiently contractible, earlier studies revealed a pair-wise correlation structure in terms of boundary Majorana modes \cite{Jahn:2017tls}. 
As we show in this work, HyPeC states are in fact a special case of a \textsl{Majorana dimer model}, and can be described by entangled fermionic pairs.
Majorana dimers have previously been used to describe superconducting phases on lattices \cite{PhysRevB.94.115127,PhysRevB.94.115115}, as instances of
tensor networks that have a fermionic component 
\cite{PhysRevA.80.042333,PhysRevA.81.052338,CorbozPEPSFermions,PhysRevB.95.245127,PhysRevB.95.075108}.
We show that the contraction of dimer-based tensor networks is equivalent to combining entangled Majorana pairs, replacing the computational difficulties of contraction by simple rules on dimer diagrams. 
This graphical language directly visualizes parities, physical correlations, and the entanglement structure of quantum states spanning the entire fermionic Hilbert space.
By deriving the holographic properties of the HyPeC merely from emergent entangled pairs, we connect to recent proposals of AdS/CFT models based on bit-threads \cite{freedman2017bit,Cui:2018dyq}.
Thus, our work is also an important step towards integrating discrete tensor network models of AdS/CFT into a unified setting.

\section{A simple model of holography}
Consider the \emph{boundary} and \emph{bulk} Hilbert spaces denoted by $\mathcal H_\partial$ and  $\mathcal H_\text{bulk}$ respectively.
% We may sometimes also refer to $\mathcal H = \mathcal H_\partial \otimes \mathcal H_\text{bulk}$ to refere to the joint hilbert space.
A holographic quantum error-correcting code is formed by an \emph{encoding isometry} $E$ from the logical states in $H_\text{bulk}$ to boundary states in $\mathcal H_\text{code}\subset \mathcal H_\partial$. 
Indeed, $EE^\dagger$ is the projector onto the code $\mathcal H_\text{code}$ of the boundary Hilbert space $\mathcal H_\partial$.
Any bulk operator $\mathcal O$ acting on the states in $\mathcal H_\text{bulk}$ can be represented by at least one operator $\mathcal O_\partial$ acting on $\ket {\psi_\text{code}} \in \mathcal H_\text{code}$ with the property $E^\dagger {\mathcal O}_\partial E = \mathcal O$ while preserving the code subspace ($[{\mathcal O}_\partial, EE^\dagger]=0$). 
The specific form of such a mapping from bulk to boundary is the \emph{holographic dictionary} obtained in continuum AdS/CFT  by equating bulk and boundary partition functions \cite{Witten:1998qj}, which is equivalent to considering boundary CFT operators $\mathcal O_\partial$ as limits of fields on the gravitational AdS background \cite{Harlow:2011ke}.
As we visualize in Fig.\ \ref{FIG_ADS_EE_WEDGE} (left), oftentimes $\mathcal O_\partial$ acts non-trivially only on a subregion of the total boundary. 
Given a subregion $A$ on the boundary one can perform the so-called \emph{AdS/Rindler reconstruction} \cite{Susskind:1998dq,Banks:1998dd,PhysRevD.59.104021,Harlow:2011ke,PhysRevLett.117.021601,PhysRevD.74.066009} to associate to any boundary operator $\mathcal O_A$ a corresponding bulk operator $\mathcal O$ acting within the \emph{wedge} $\mathcal{W}[A]$ which is a subset of the bulk.

Due to the computational difficulties in studying continuum AdS/CFT, discrete toy models often provide an easier approach to understanding its properties. These models usually consider a space-like slice of the full AdS spacetime, discretized by a tiling whose open boundary edges correspond to the AdS boundary. Subsets of these open edges are then identified with subregions of the boundary CFT (see Fig.\ \ref{FIG_ADS_EE_WEDGE}, right).

What properties should the discretized boundary states in $H_\text{code}$ fulfill? As a bulk operator can be represented equivalently on different parts of the boundary, e.g.\ two regions $A$ and $B$, we are led to the condition
\begin{equation}
\label{EQ_BDY_OPERATORS}
\mathcal O_A \ket {\psi_\text{code}} = \mathcal O_B \ket {\psi_\text{code}} \ ,
\end{equation}
where $\mathcal O_A$ and $\mathcal O_B$ are boundary representations on $A$ and $B$ of an operator $\mathcal O$ inserted somewhere in the bulk.
For this condition to hold for any $\mathcal O$ and any suitable $A$ and $B$, the states in $H_\text{code}$ must necessarily possess multi-partite and nonlocal entanglement to allow for operators that act equivalently on distant parts of the boundary.

\begin{figure}[tb]
\centering
\includegraphics[height=0.18\textheight]{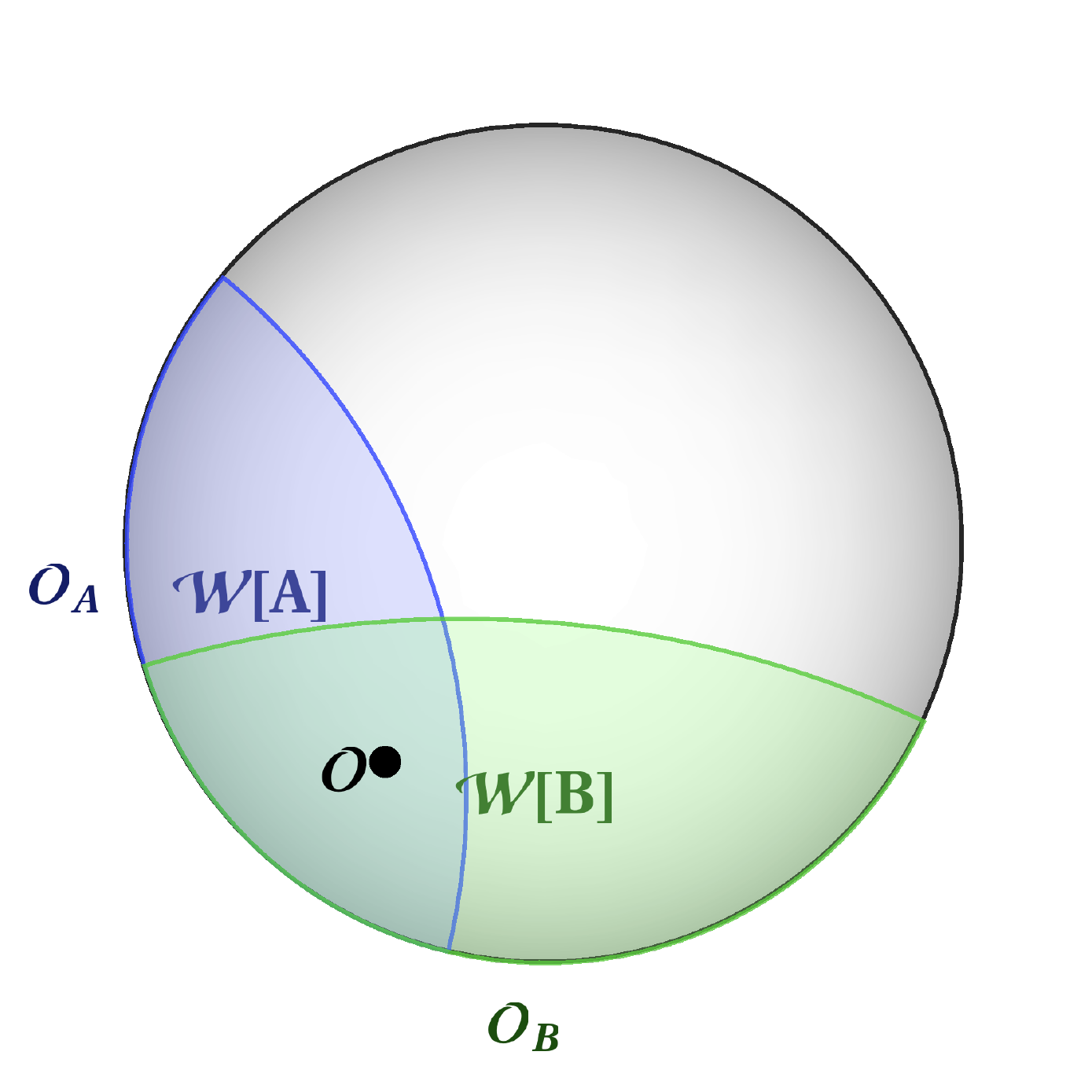}
\includegraphics[height=0.18\textheight]{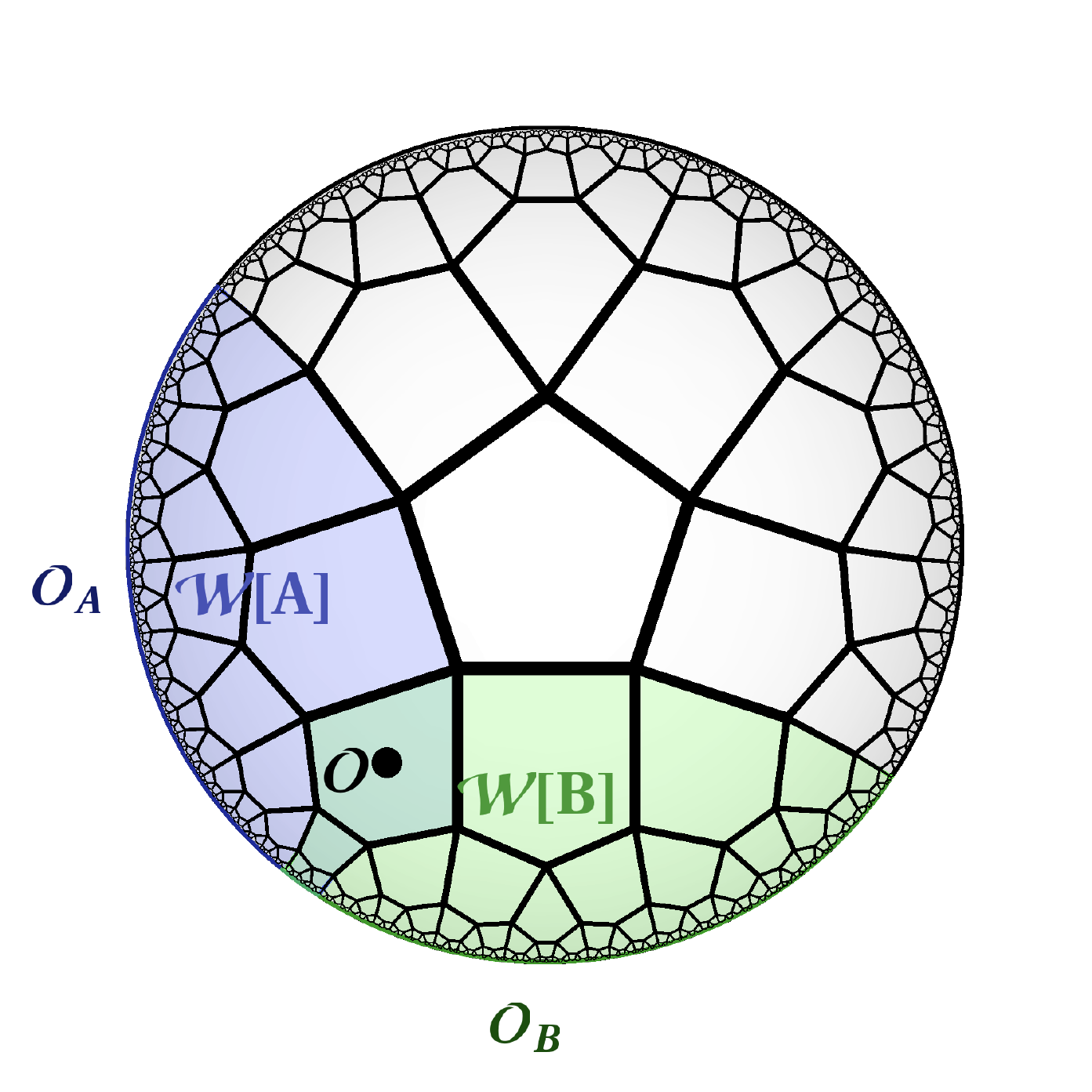}
\caption{Continuous (left) and discretized (right) reconstruction of an AdS bulk operator $\mathcal O$ along two (causal) wedges $\mathcal{W}[A]$ and $\mathcal{W}[B]$ \cite{Almheiri15}, leading to two boundary operators $\mathcal O_A$ and $\mathcal O_B$ with support on boundary regions $A$ and $B$. The AdS time slice is projected onto the Poincar\'e disk, with the AdS boundary corresponding to the black outer circle. The discretization is a $\{5,4\}$ tiling.}
\label{FIG_ADS_EE_WEDGE}
\end{figure}

In this work we show that the \emph{holographic pentagon code} implements these properties through an underlying fermionic structure.
To motivate the use of fermions in the context of holographic quantum error correction, consider a simple toy model of entangled fermionic modes.
Throughout, we denote fermionic canonically anti-commuting operators by $\fe_j$ satisfying $\fd_j \fe_k+\fe_k\fd_j = \delta_{j,k}$ and distinguish the vacuum state vector $\vacket$ satisfying $\fe_j\vacket =0$ for any $j$.
The counterpart of a Bell pair for fermions is the so-called BCS state which has the form 
\begin{align}
  \label{EQ_BCS}
  \ket{\psi_\text{BCS}} = (1+\fd_j \fd_k)\vacket \text{ .}
\end{align}
By a simple calculation, we find that 
\begin{align}
   \fe_j \ket{\psi_\text{BCS}} = \fd_k \ket{\psi_\text{BCS}} = \fd_k \vacket \text{ ,}
   \label{EQ_BCS_ACT}
\end{align}
which implies that if $j,k$ are boundary indices, we found a mapping between boundary operators that resembles \eqref{EQ_BDY_OPERATORS}.
For holographic quantum error-correction, however, this mapping is insufficient: After acting with the operator, the result \eqref{EQ_BCS_ACT} is an unentangled Fock state vector $\fd_k\vacket$, which is no longer in the desired code-space of entangled states. Furthermore, $\ket{\psi_\text{BCS}}$ does not exhibit any multi-partite entanglement necessary for holography \cite{walter2016multi}.
Fortunately, both problems can be resolved by fermionic mode fractionalization by means of \emph{Majorana dimers}. Consider the action of Majorana operators, defined as
\begin{align}
\m_{2k-1} &= \fd_k + \fe_k \text{ ,} & 
\m_{2k} &= \i\, (\fd_k - \fe_k) \text{ ,}
\end{align}
and fulfilling $\{\m_j, \m_k \} = 2\delta_{j,k}$, on the BCS state vector \eqref{EQ_BCS} as
\begin{align}
\m_{2j-1} \ket{\psi_\text{BCS}} &= -\i\,\m_{2k} \ket{\psi_\text{BCS}} &= (\fd_j + \fd_k) \vacket \text{ ,} \\
\m_{2k-1} \ket{\psi_\text{BCS}} &= \i\,\m_{2j} \ket{\psi_\text{BCS}} &= (\fd_j - \fd_k) \vacket \text{.}
\end{align}
This shows that a mapping between Majorana operators, unlike one relying on standard fermionic operators as in \eqref{EQ_BCS_ACT}, can be performed without destroying entanglement.
To achieve multi-partite entanglement, BCS-type states are insufficient. However, a suitable model is provided by the hyperbolic pentagon code (HyPeC). Let us briefly review its construction:
The HyPeC is an isometry between bulk and boundary degrees of freedom. An AdS time slice is discretized by a finite tiling of $M$ pentagons, the Poincar\'e disk projection of which is shown in Fig.~\ref{FIG_ADS_EE_WEDGE}. Each pentagon is associated with one logical qubit, i.e.\ one bulk degree of freedom, encoded in five spins (the pentagon edges) via the $[[5,1,3]]$ quantum error-correcting code. This code can be expressed by a six-leg tensor, with one ``bulk'' leg corresponding to the logical qubit and the remaining five to the physical spins.
The tiling is connected by tracing out spins on the edges of two adjacent pentagon tiles, i.e.\ by contracting the corresponding tensor indices. This contraction can be understood as a projection of the spins on the two connected edges onto a Bell pair. In this paper, we will usually consider this setup with each bulk input fixed to a certain state. Before contraction, the bulk is then effectively composed of a product state of $M$ local quantum states on five spins each. Contraction locally entangles the spins with each other, thus leading to a larger entangled state on the remaining $N$ spins at the boundary of the pentagon tiling.
If we consider instead an arbitrary bulk input on each pentagon\footnote{For the purposes of this paper, bulk inputs between different pentagons are assumed to be unentangled.}, contraction combines the local 5-spin Hilbert spaces into a larger $N$-spin Hilbert space that defines our code space $\mathcal H_\text{code}$.

By merit of the $[[5,1,3]]$ code, the five spins on the edges of each pentagon are \emph{absolutely maximally entangled}.
A pure state of $n$ qubits is absolutely maximally entangled if all of its reductions to $\lfloor{n/2}\rfloor$ subsystems are maximally mixed \cite{AMS1,AMS2,AMS3} and hence the states are maximally entangled over all such cuts. The isometric properties of the code follow from this construction.

A useful approach to understanding these states is to represent this spin picture of the HyPeC in terms of Majorana fermions \cite{Jahn:2017tls}. This is achieved by a Jordan-Wigner transformation between $L$ spins and $2L$ Majorana modes:
\begin{equation}
\begin{aligned}
\label{EQ_JORDANWIGNER}
\m_{2k-1} &= Z_1 Z_2 \dots Z_{k-1} X_k \text{ ,}\\
\m_{2k}  &=Z_1 Z_2 \dots Z_{k-1} Y_k \text{ ,}
\end{aligned}
\end{equation}
where we have used the $k$-site Pauli operators defined as
\begin{equation}
\begin{aligned}
X_k := {\id_2}^{\otimes(k-1)} \otimes \sigma_x \otimes {\id_2}^{\otimes(L-k)} \text{ ,}\\
Y_k := {\id_2}^{\otimes(k-1)} \otimes \sigma_y \otimes {\id_2}^{\otimes(L-k)} \text{ ,}\\
Z_k := {\id_2}^{\otimes(k-1)} \otimes \sigma_z \otimes {\id_2}^{\otimes(L-k)} \text{ ,}
\end{aligned}
\end{equation}
in terms of the Pauli matrices $\sigma_x,\sigma_y,\sigma_z$.
It will be useful to define the total parity operator
\begin{align}
  \Par = Z_1 Z_2 \dots Z_L = (-\i)^L\m_1\m_2\ldots\m_{2L}\ .
\end{align}
In the HyPeC, we take $L=5$ spins for each pentagon.
The logical eigenvectors $\ket{\bar{0}}$ and $\ket{\bar{1}}$ of the $[[ 5, 1, 3]]$ code have $\Par$ eigenvalues ${+}1$ and ${-}1$, respectively, corresponding to even and odd fermionic parity. For fixed bulk input (and thus parity), the stabilizers are quadratic in Majorana operators. Thus, $\ket{\bar{0}}$ and $\ket{\bar{1}}$ are ground state vectors of a Hamiltonian describing free Majorana modes, given by
\begin{align}
  H = \i \sum_{j=1}^{L=5} (P_\text{tot})^j \gamma_{j}\gamma_{j+5}
  \label{eq:H_penta}
\end{align}
where $P_\text{tot}=\pm1$ is the eigenvalue of $\Par$ and indices follow periodic boundary conditions. If we replace $P_\text{tot} \to \Par$, we recover the original $[[5,1,3]]$ stabilizer Hamiltonian with its two-fold degenerate ground state.
%Crucially, the ground states of Eq.~\eqref{eq:H_penta} when viewed as capturing spin degrees of freedom are extremely entangled, in fact they are absolutely maximally entangled.\ax{Duplicate}
%
Before considering contractions of these fermionic code states, we now develop a comprehensive framework for Majorana dimers that allows us to study the fermionic HyPeC in detail.

\section{Majorana dimers}
\label{SEC_MAJ_DIM}

\subsection{Definition}

Majorana dimers are effectively a reordering of the vacuum state in terms of Majorana modes. The $L$-fermion vacuum state vector is defined by being annihilated by all of the fermionic annihilation operators $\fe_k$ for $k \in \{ 1,2,\dots , L \}$ as
\begin{align}
  \fe_k \vacket = \frac{1}{2} \left( \m_{2k-1}+\i\,\m_{2k} \right)\vacket =0 \ .
\end{align}
Thus, the vacuum state effectively relates $L$ pairs of Majorana modes $(2k{-}1,2k)$ in an operator equation. By permuting Majorana indices, we can generalize this state to any pairing of modes.
Such a \textsl{Majorana dimer state} is determined via $L$ conditions on distinct pairs $(j,k)$ (choosing $j{<}k$ as convention) of Majorana operators
\begin{equation}
\label{EQ_DIMER_COND}
\left( \m_j + \i\, p_{j,k} \m_k \right) \ket\psi = 0 \text{ .}
\end{equation}
The \textsl{dimer parities} $p_{j,k} \in \lbrace -1, 1 \rbrace$ give each pair an ``orientation'' with respect to the index ordering. We refer to $p_{j,k}=1$ as ``even'' and $p_{j,k}=-1$ as ``odd''. 
To recapitulate, a Majorana dimer state is defined to be a (normalized) state vector of $L$ fermionic modes which is annihilated by $L$ independent conditions of the form \eqref{EQ_DIMER_COND}.
Note that we have fixed a vacuum state which under the Jordan-Wigner transformation corresponds to a product state in spins, but non-trivial Majorana dimer states can be highly entangled, as we shall see.

Equivalently, we may characterize Majorana dimer state vectors $\ket\psi$ as ground states of specific quadratic Hamiltonians: Multiplying \eqref{EQ_DIMER_COND} with its Hermitian conjugate from the left yields
\begin{align}
\bra\psi \left( 2 + 2\,\i\, p_{j,k} \m_j \m_k \right) \ket\psi = 0 \text{ ,}
\end{align} 
which implies that the Hermitian operator $\i \m_j \m_k$ has expectation value $-p_{j,k}$. We can now construct the Hamiltonian
\begin{equation}
H = \frac{\i\,}{2} \sum_{(j,k)\in \Omega} p_{j,k} \m_{j} \m_{k} \text{ ,}
\end{equation}
where we sum over all $L$ Majorana dimers $\Omega=\{(j,k)\}$. $H$ is a \emph{parent Hamiltonian} of $\ket\psi$, meaning that $\ket\psi$ is the unique ground state vector of $H$ with energy $-L$, being in the ${-}1$ eigenspace of all the summands.

These two equivalent characterizations are most intuitively visualized through a diagrammatic notation.
Consider $L$ fermionic modes, ordered as a chain visualized by an $L$-gon, with the Majorana modes shown as dots on the edge (mode).
Arrows between the Majorana modes represent the pairing.
For example, for $L=5$, the state visualized by
\begin{equation}
\label{DG_PENTA_EX}
\begin{aligned}
\includegraphics[height=0.12\textheight]{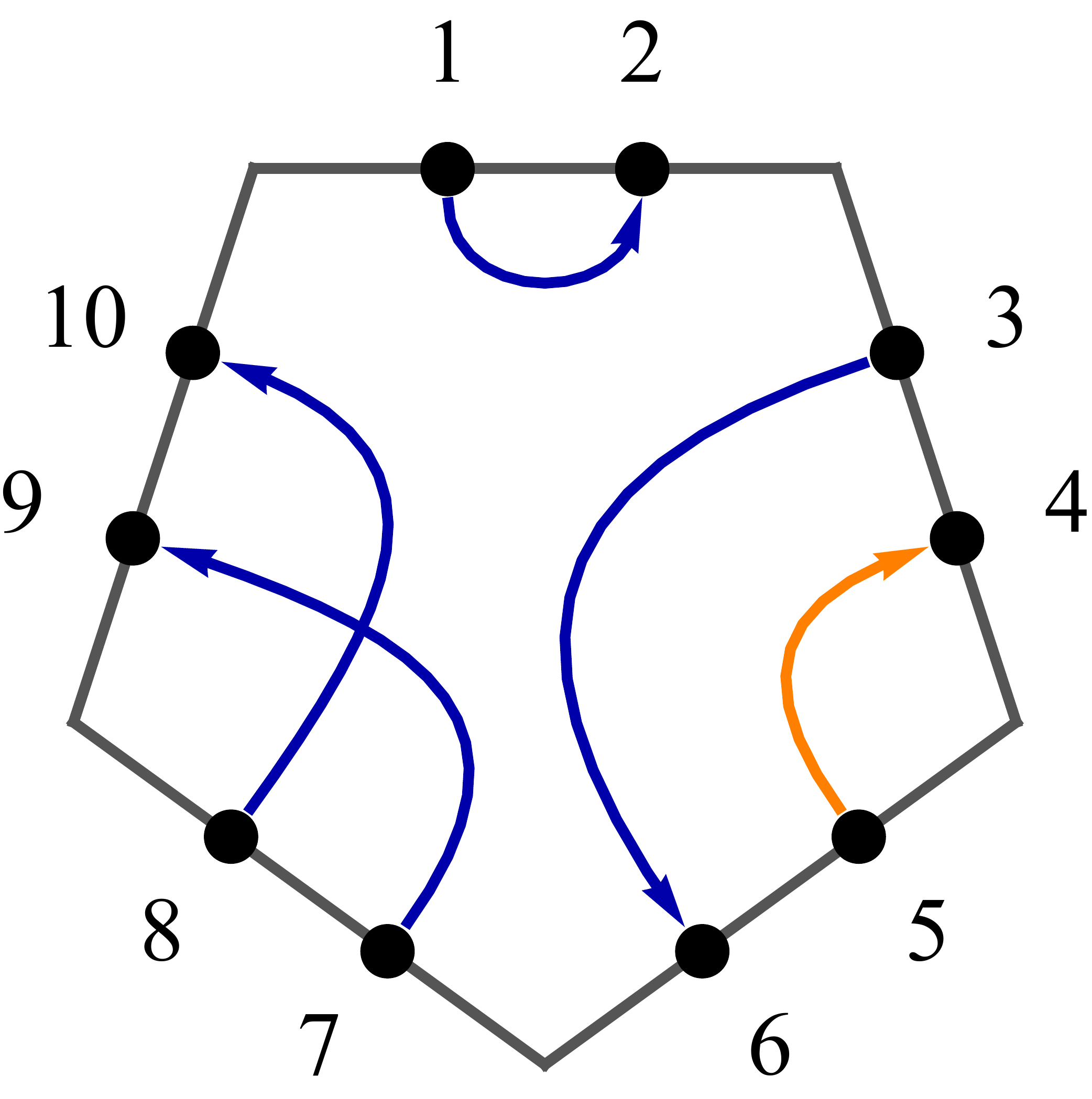}
\end{aligned}
\end{equation}
is the ground state of the Hamiltonian
\begin{equation}
H = \frac{\i\,}{2} \left( \m_1 \m_2 + \m_3 \m_6 - \m_4 \m_5 + \m_7 \m_9 + \m_8 \m_{10} \right) \text{ .}
\end{equation} 
An arrow $j \to k $ along the index orientation ($j<k$, blue) corresponds to a dimer parity $p_{j,k} = +1$, while an arrow against it ($j>k$, orange) corresponds to $p_{j,k} = -1$. Note that these diagrams only specify the state up to a scalar $c\in \mathbb{C}$, as $c$ affects neither the ground state property nor the dimer conditions \eqref{EQ_DIMER_COND}.
A particularly symmetric case is the aforementioned vacuum $\vacket$ represented by a diagram
\begin{equation}
\label{EQ_PENTA_VAC}
\begin{aligned}
\includegraphics[height=0.12\textheight]{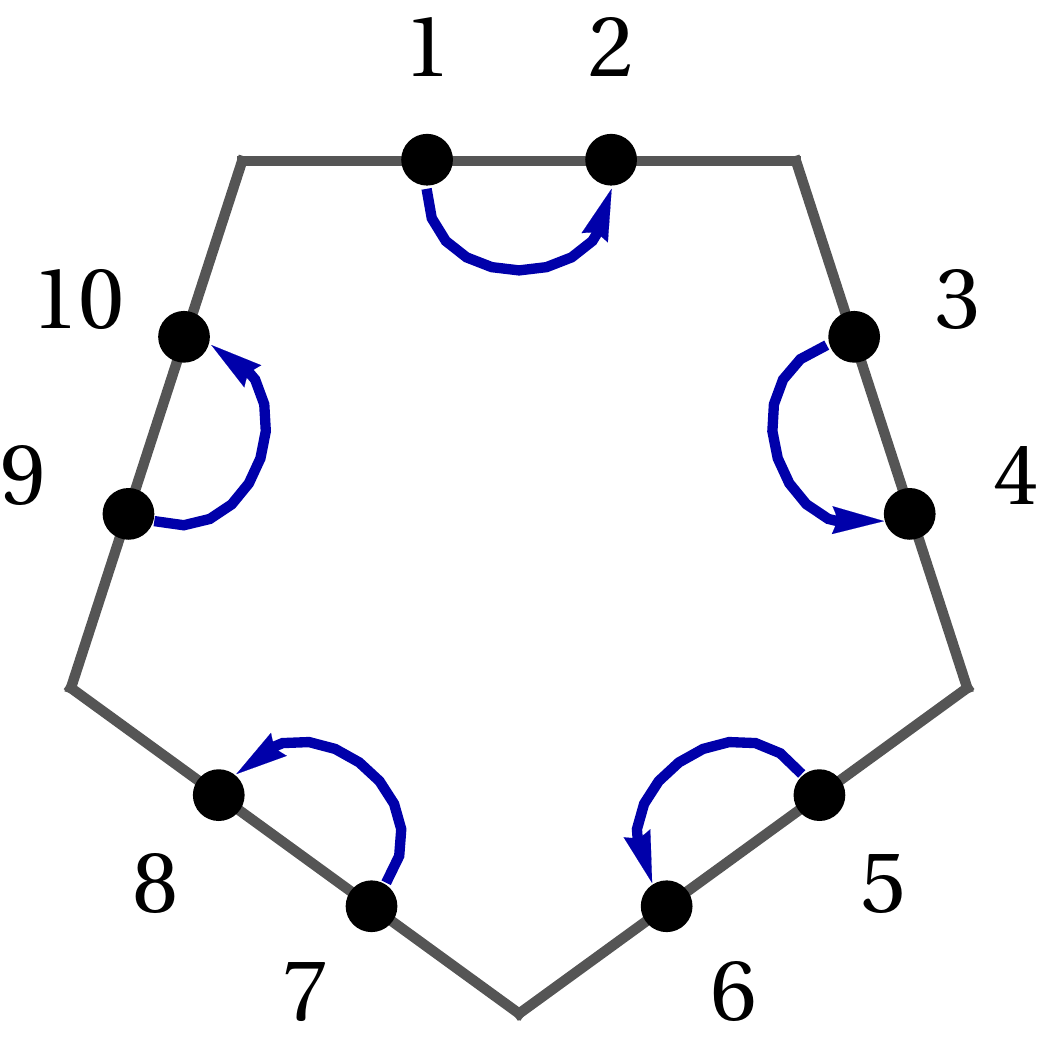}
\end{aligned}
\end{equation}
for $L=5$. Unsurprisingly, $\vacket$ is also the ground state vector of the Hamiltonian $H_0=\sum_k n_k$ with the local number operators $n_k = \fd_k \fe_k = (1 + \i \m_{2k-1} \m_{2k}) / 2$.
We can construct any Majorana dimer state from the vacuum by applying swap operators 
$S_{j,k}\coloneqq  \Par (\m_j-\m_k)/\sqrt{2}$ onto $\vacket$, $\Par$ being the total parity operator defined in the last section.
For example, the state expressed by diagram \eqref{DG_PENTA_EX} is given by $S_{8,9} S_{4,6} \vacket$.
It should be noted that while these swap operators violate the fermionic super-selection rule in an actual fermionic systems, we are merely interested in Majorana dimers as an effective representation of spins (such as the HyPeC).

As Majorana dimer states are Gaussian, all expectation values 
are determined by the entries of the \textsl{covariance matrix} with entries
\begin{equation}
	\Gamma_{j,k}^\psi = \frac{\i\,}{2}\bra\psi  [\m_j, \m_k] \ket\psi \ . 
  \label{eq:cov_def}
\end{equation}	
	We can read off $\Gamma_{j,k}^\psi$ directly from the corresponding diagram: As $\ket\psi$ is constructed from $\vacket$ by acting with a product $\mathcal{S}$ of swap operators mapping each index $k$ to an index $S(k)$, $\Gamma_{j,k}^\psi$ is simply $\Gamma_{j,k}^\emptyset$ with interchanged rows and columns
\begin{align}
\Gamma_{j,k}^\psi &= \frac{\i\,}{2} \vacbra \mathcal{S}^\dagger [\m_i, \m_j] \mathcal{S} \vacket \\
&= \frac{\i\,}{2} \vacbra [ \m_{S(i)}, \m_{S(j)} ] \vacket = \Gamma_{S(j),S(k)}^\emptyset \text{ .}
\end{align}
The only non-zero entries of the vacuum covariance matrix are $\Gamma_{2k,2k-1}^\emptyset = - \Gamma_{2k-1,2k}^\emptyset = 1$. We can thus infer $\Gamma_{j,k}^\psi$ from its diagram using the rules 
\begin{equation}
\Gamma_{j,k}^\psi = \begin{cases} 
-1 & \text{ for an arrow }j \to k \\
1 & \text{ for an arrow }k \to j \\
0 & \text{ if no arrow connects $j$ and $k$}
\end{cases} \text{ .}
\end{equation}
For example, the covariance matrix for diagram \eqref{DG_PENTA_EX} is 
\begin{equation}
\label{EQ_COVMM_EX}
\Gamma_{j,k}^\psi \;=\;
\begin{gathered}
\vspace{-5pt}
\includegraphics[width=0.25\textwidth]{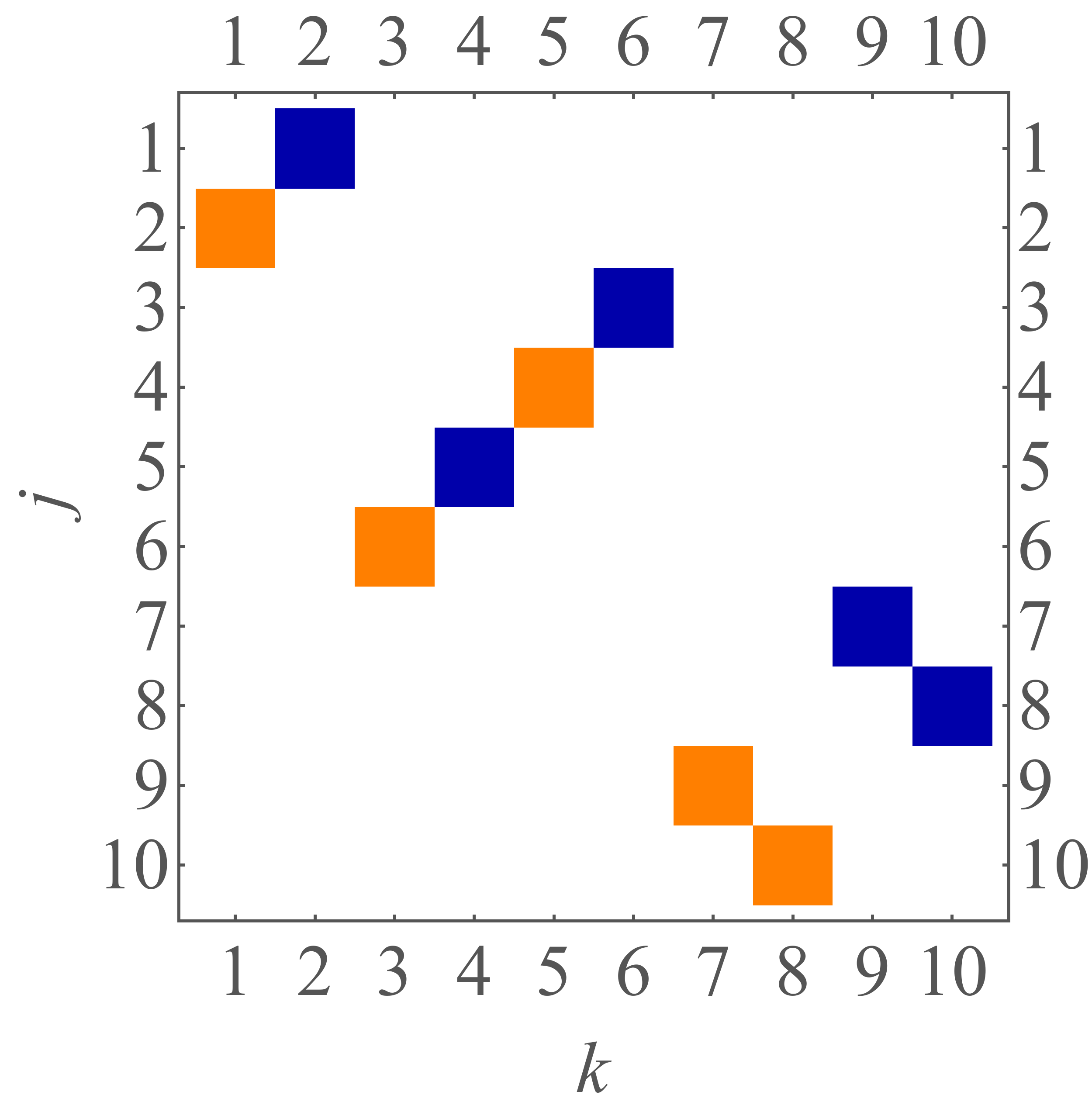}
\end{gathered}
\end{equation}
with color-coded entries (orange$={+}1$, blue$={-}1$). Note that we have chosen the colors to match with the dimer parities when reading the entries above the main diagonal ($j<k$). We assume that the state vector 
$\ket\psi$ is normalized.
Equivalently, we can think of the swap operators as acting on the Hamiltonian, yielding $H_\psi = \mathcal{S}\, H_0\, \mathcal{S}^\dagger$. Clearly, the spectrum of $H_\psi$ is simply a permutation of the spectrum of $H_0$, consistent with the covariance matrix picture.

By Eq.~\eqref{eq:H_penta}, the $[[5,1,3]]$ code states are ground states of Hamiltonians quadratic in Majorana operators, and can thus be represented as Majorana dimers.
As diagrams, they are given by
\begin{align}
\label{HAPPY_ZERO}
\ket{\bar{0}}_5\; = \quad
\begin{gathered}
\includegraphics[height=0.12\textheight]{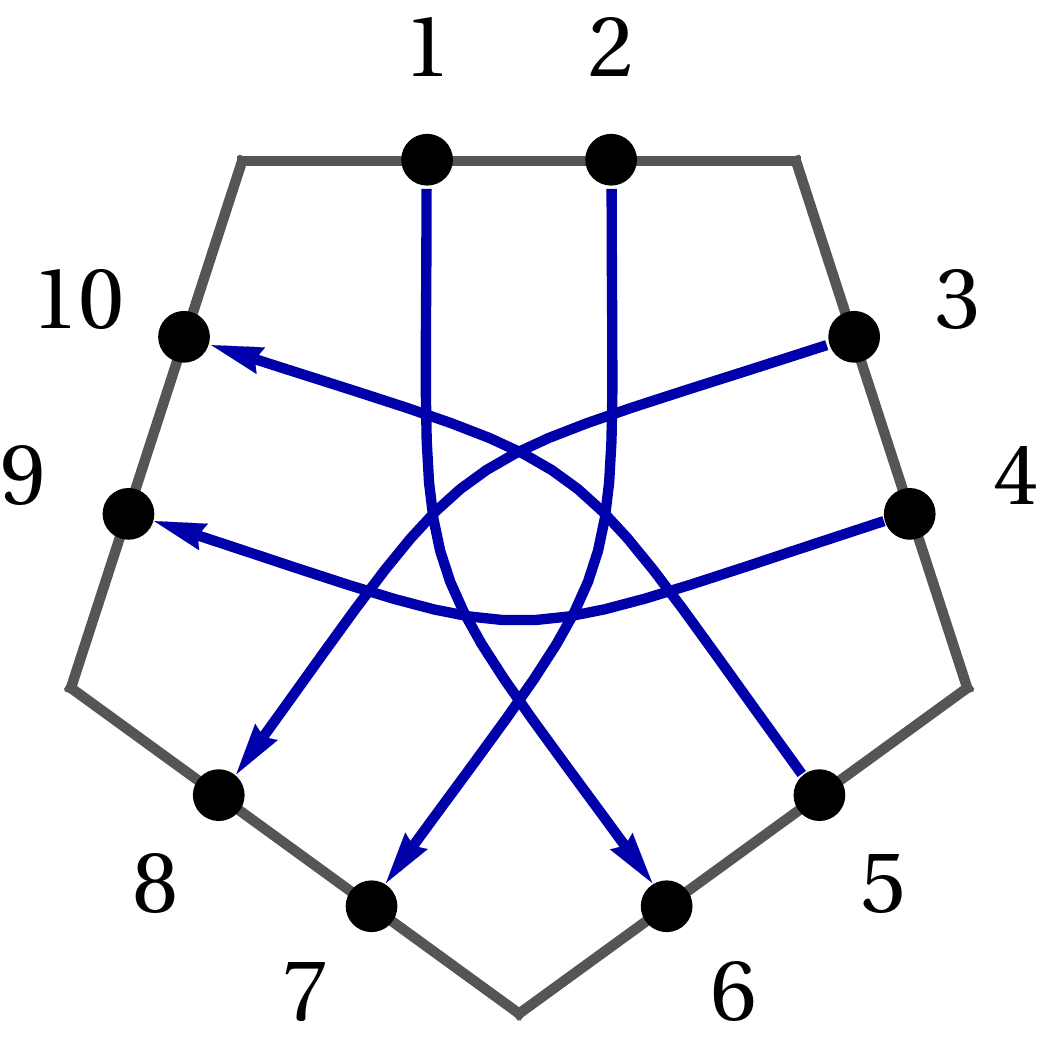}
\end{gathered} \\
\label{HAPPY_ONE}
\ket{\bar{1}}_5\; = \quad
\begin{gathered}
\includegraphics[height=0.12\textheight]{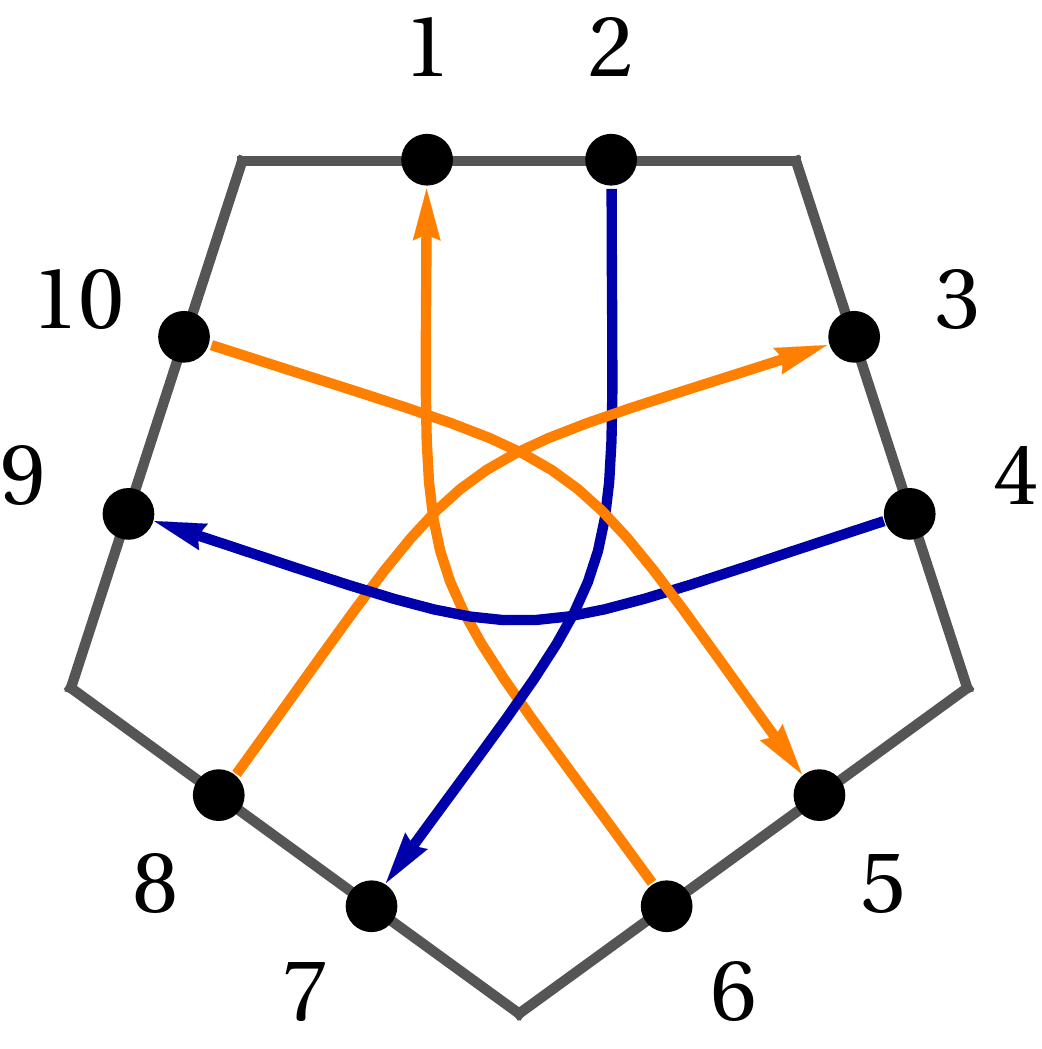}
\end{gathered} 
\end{align}
As we will see in the next section, the code distance $d=3$ between these two states in terms of Pauli operations can be shown graphically.

\subsection{Pauli operations and total parity}

As the Majorana operators are obtained from spin operators through a Jordan-Wigner transformation, local Pauli operations in the spin picture generally act non-locally on the Majorana dimers. Specifically, the reverse transformation of \eqref{EQ_JORDANWIGNER} is given by
\begin{equation}
\label{EQ_JORDANWIGNER_REV}
\begin{aligned}
X_k &= (-\i)^{k-1} \prod_{j=1}^{2k-1} \m_j \text{ ,} \\
Y_k &= (-\i)^{k-1} \left( \prod_{j=1}^{2k-2} \m_j \right) \m_{2k} \text{ ,} \\
Z_k &= -\i\, \m_{2k-1} \m_{2k} \text{ .}
\end{aligned}
\end{equation}
A Majorana operator $\m_k$ acting on a Majorana dimer state flips the parity of the dimer ending on site $k$. We show this by noting that if a state vector $\ket\psi$ is annihilated by the operator $\m_a + \i\, p \m_b$ (with dimer parity $p \in \lbrace -1,+1 \rbrace$ and $a \neq b$), then both $\m_a \ket\psi$ and $\m_b \ket\psi$ are annihilated by  $\m_a - \i\, p \m_b$:
\begin{align}
(\m_a - \i\, p \m_b) \m_a \ket\psi &= \m_a (\m_a + \i\, p \m_b) \ket\psi = 0 \text{ ,} \\
(\m_a - \i\, p \m_b) \m_b \ket\psi &= -\m_b (\m_a + \i\, p \m_b) \ket\psi = 0 \text{ .}
\end{align}
All other dimer conditions remain unaffected.
As a graphical notation, we highlight the affected edges of the state in red. 
Some examples of these operations on a Majorana dimer state vector $\ket\psi$ are shown here,
\begin{align}
X_2 \ket\psi\; = \;
\begin{gathered}
\includegraphics[height=0.09\textheight]{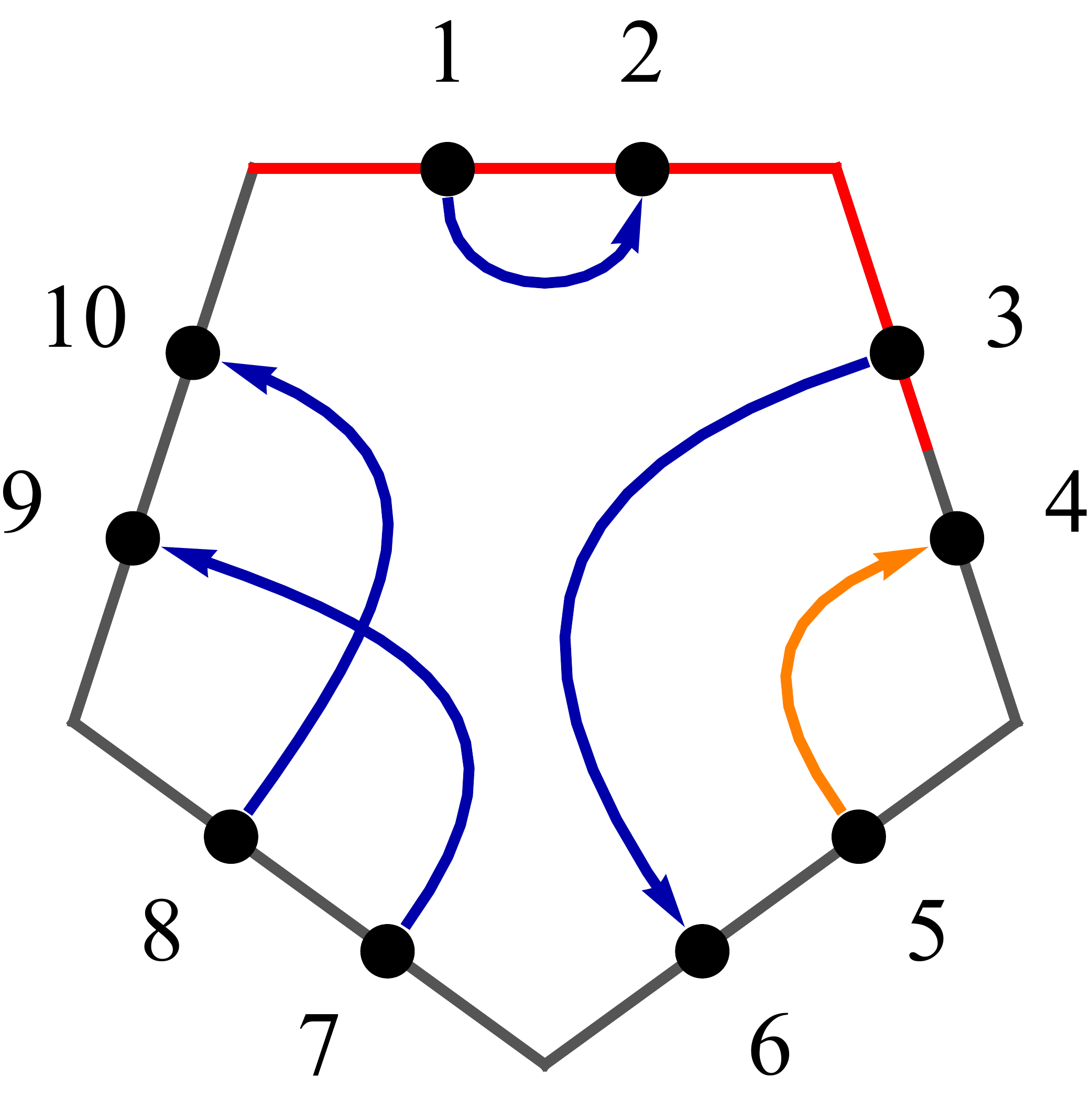}
\end{gathered}
\; = \;
\begin{gathered}
\includegraphics[height=0.09\textheight]{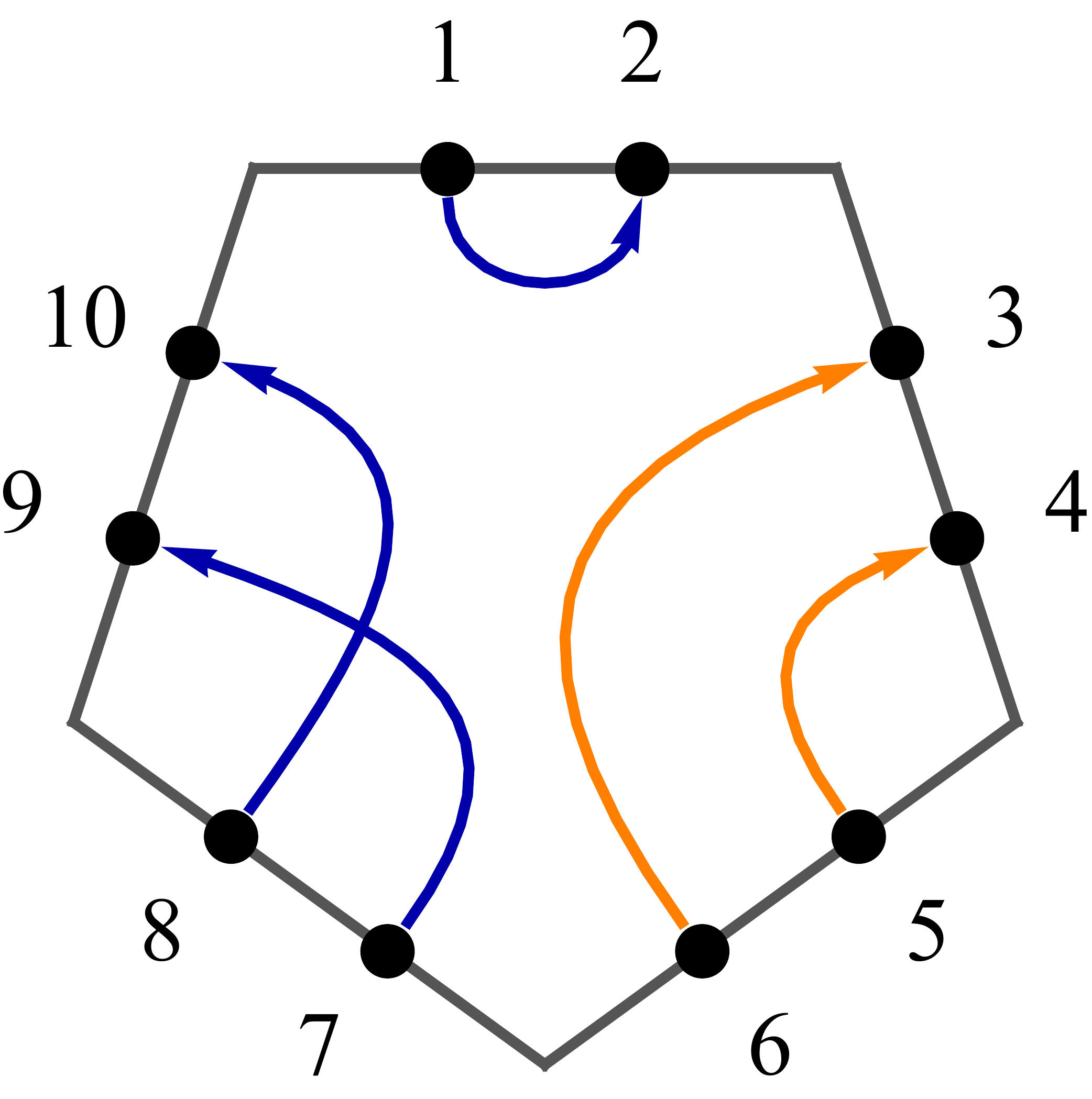}
\end{gathered} \\
Y_3 \ket\psi\; = \;
\begin{gathered}
\includegraphics[height=0.09\textheight]{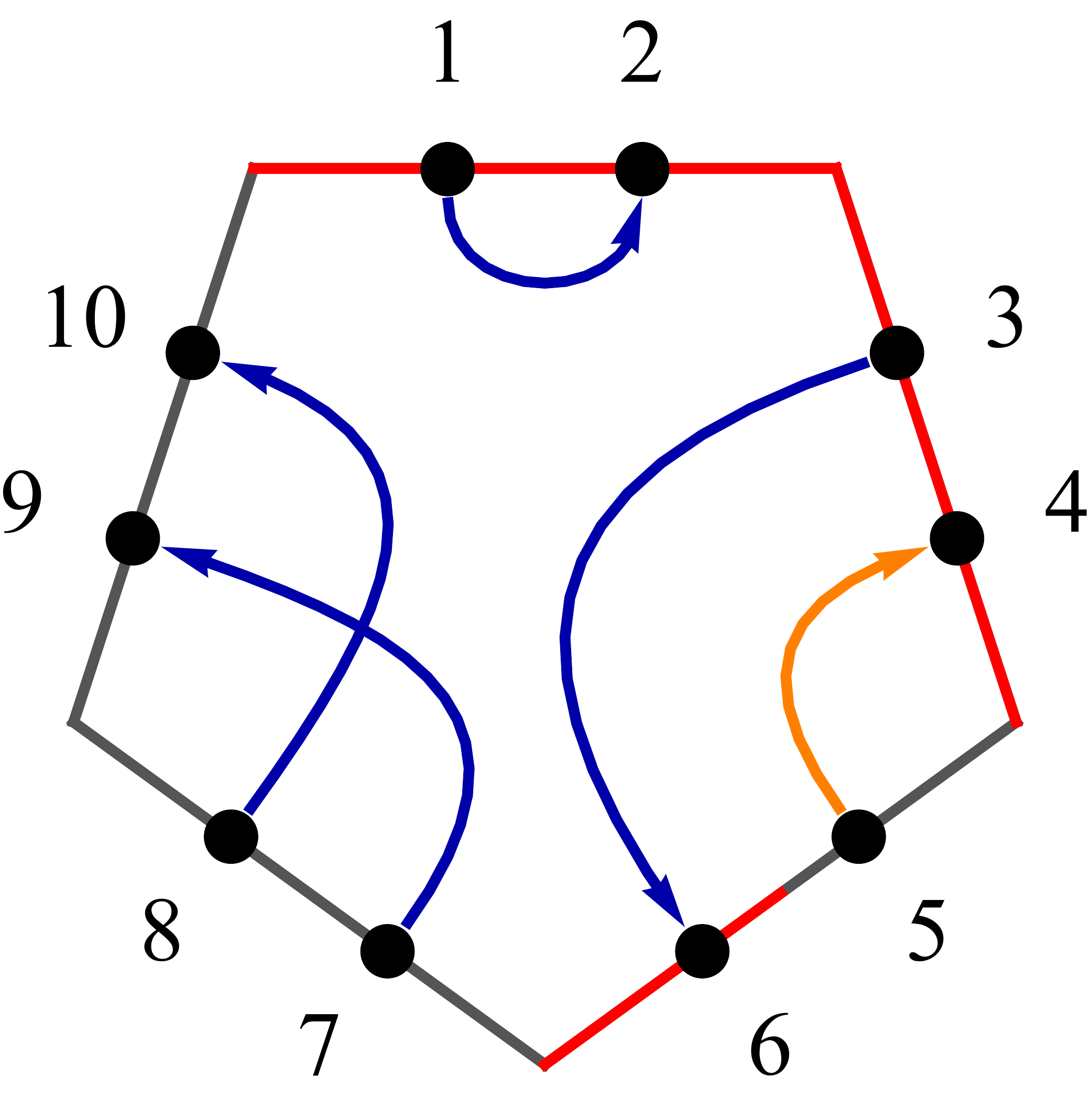}
\end{gathered}
\; = \;
\begin{gathered}
\includegraphics[height=0.09\textheight]{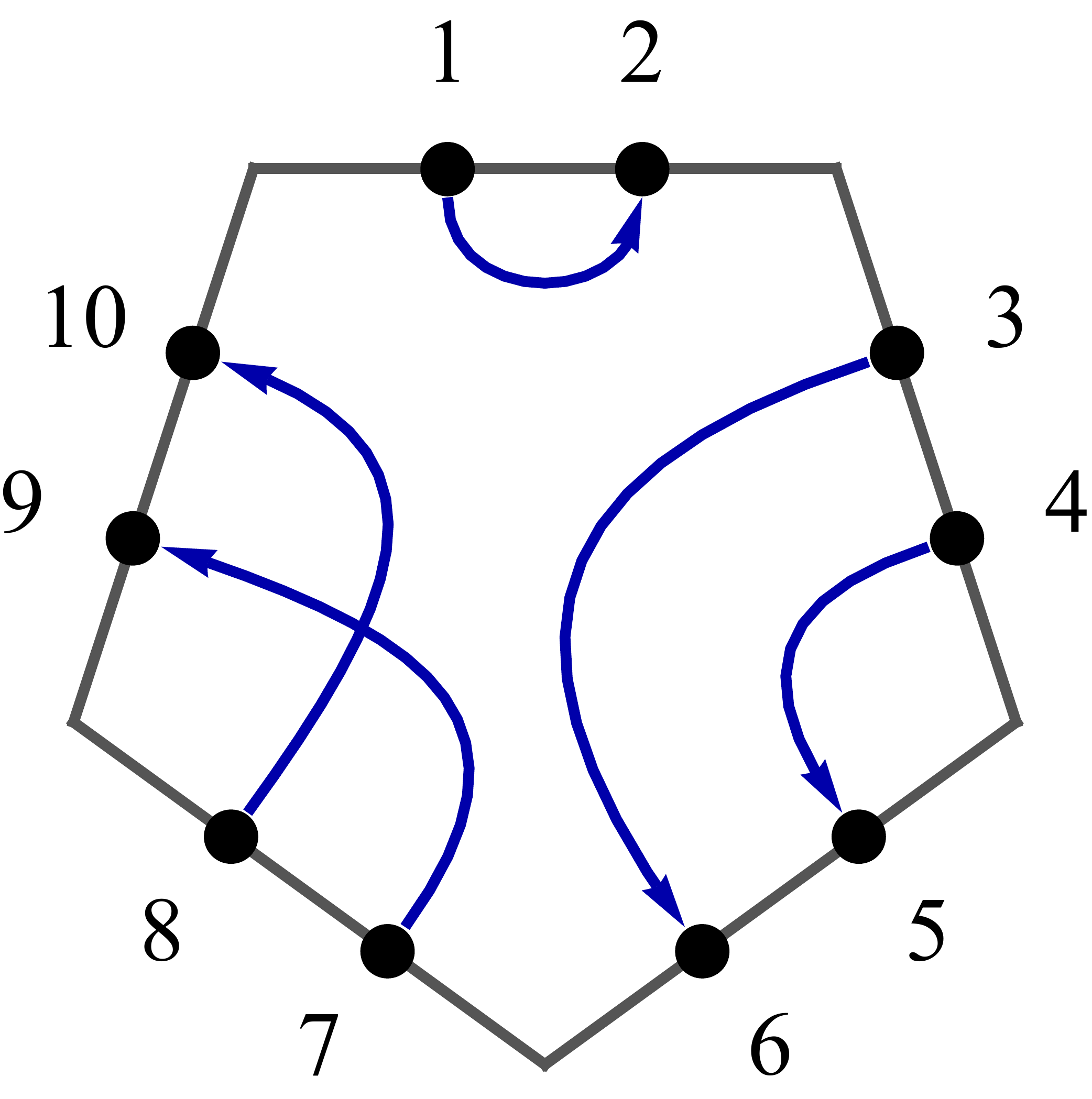}
\end{gathered} \\
Z_4 \ket\psi\; = \;
\begin{gathered}
\includegraphics[height=0.09\textheight]{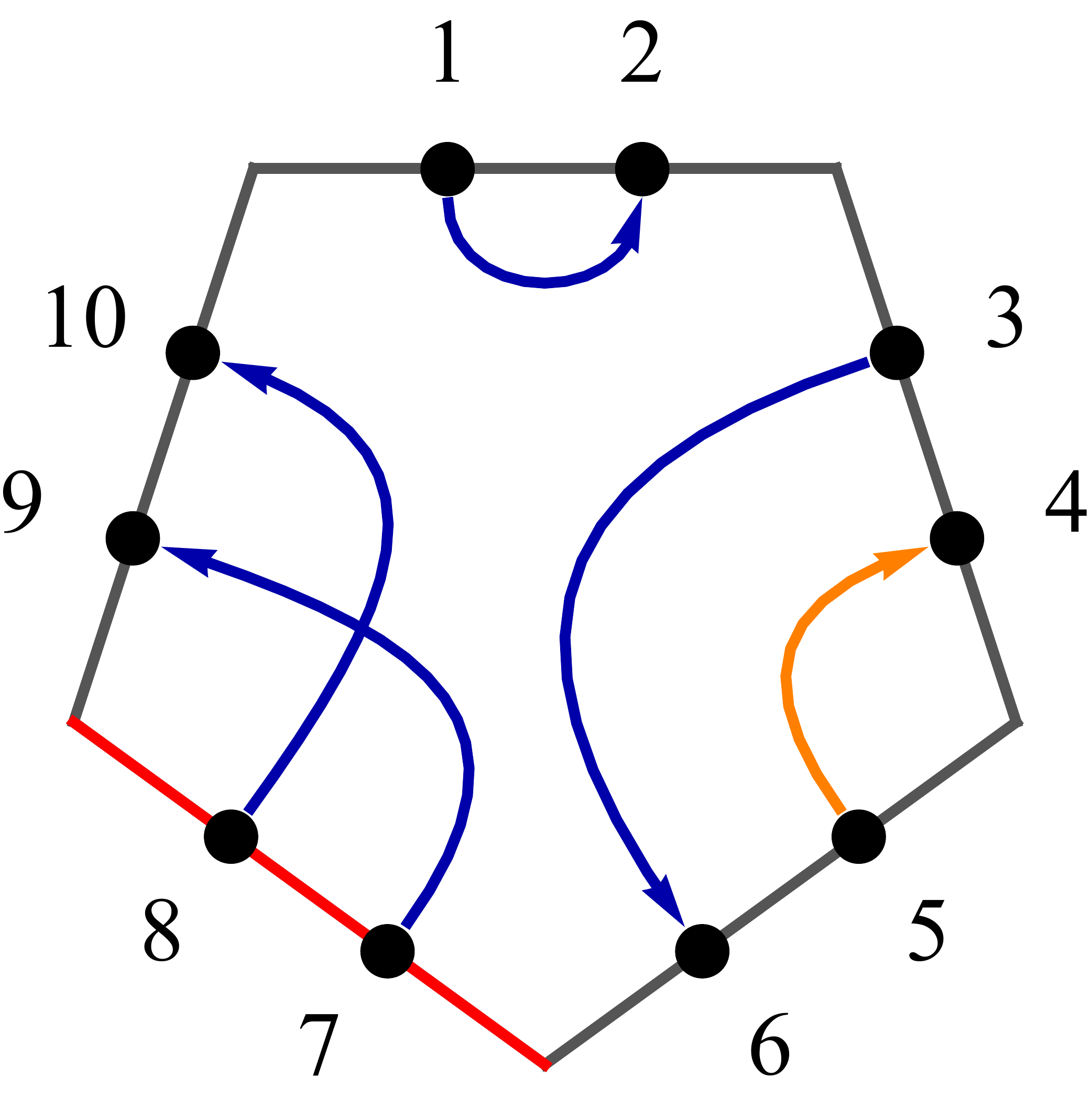}
\end{gathered}
\; = \;
\begin{gathered}
\includegraphics[height=0.09\textheight]{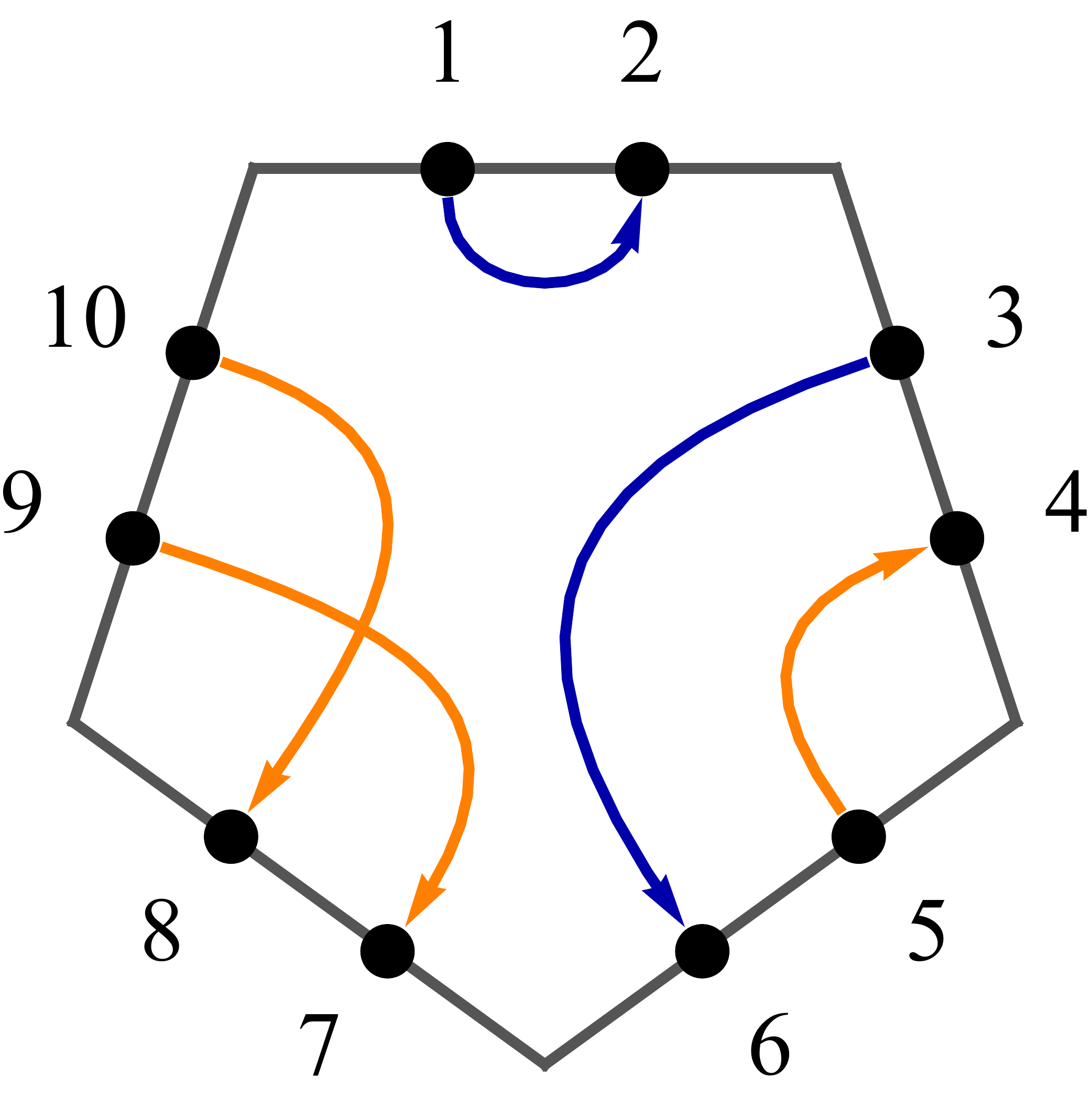}
\end{gathered} 
\end{align}
When both ends of a dimer are acted upon with a Majorana operator, the local parity stays the same.
Note that $Z_k$ operations only affect the $k$th edge, while $X_k$ and $Y_k$ combine a local Majorana operation with a $Z$ string on the first $k{-}1$ edges. Using this graphical calculus, we can now see that it requires three Pauli operations to map \eqref{HAPPY_ZERO} into \eqref{HAPPY_ONE} or each into itself. These correspond to bit flip (e.g.\ $\i \m_1 \m_3 \m_5 = X_1 Y_2 X_3$) and phase flip errors (e.g.\ $\i \m_1 \m_6 = Y_1 Z_2 Y_3$), respectively.

Now consider the total parity operator $\Par=\prod_i Z_i$, which affects all Majorana sites at once. Clearly, acting with $\Par$ leaves the state invariant (up to the parity eigenvalue), which implies that all Majorana dimer states have definite total parity $p_\text{tot}$. In fact, this parity is given by
\begin{equation}
\label{EQ_TOTAL_PARITY}
p_\text{tot} = (-1)^{N_c}\prod_{(j,k)} p_{j,k} \text{ ,}
\end{equation}
depending on the dimer parities $p_{j,k}$ of all dimers $(j,k)$ as well the number $N_c$ of crossings between dimers. This statement can be proven inductively: We start with the vacuum $\vacket$ with $p_\text{tot}=+1$. It corresponds to a diagram with $p_k=+1$ for all dimers $k$ and no crossings. We can now construct any state vector $\ket\psi$ from $\vacket$ by applying swap operators $S_{j,k}=\Par (\m_j-\m_k)/\sqrt{2}$. Since $S_{i,j}$ anticommutes with $\Par$, each swap inverts $p_\text{tot}$. To see that \eqref{EQ_TOTAL_PARITY} reflects this, consider how a swap $S_{j,k}$ affects the pairs ending in $j$ and $k$ for each possible initial configuration
as
\begin{align}
\footnotesize
\begin{gathered}
\includegraphics[height=0.08\textheight]{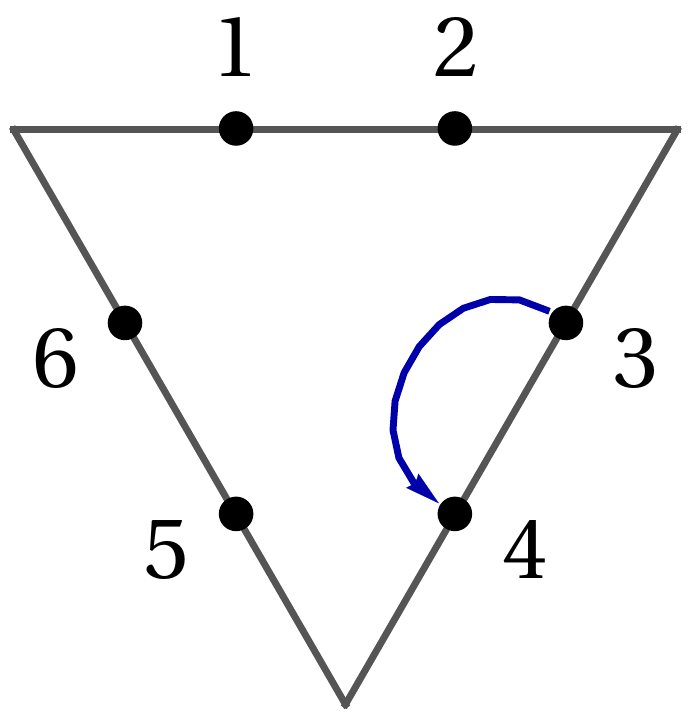}\\
S_{3,4} \bigg\updownarrow \phantom{S_{3,4}} \\
\includegraphics[height=0.08\textheight]{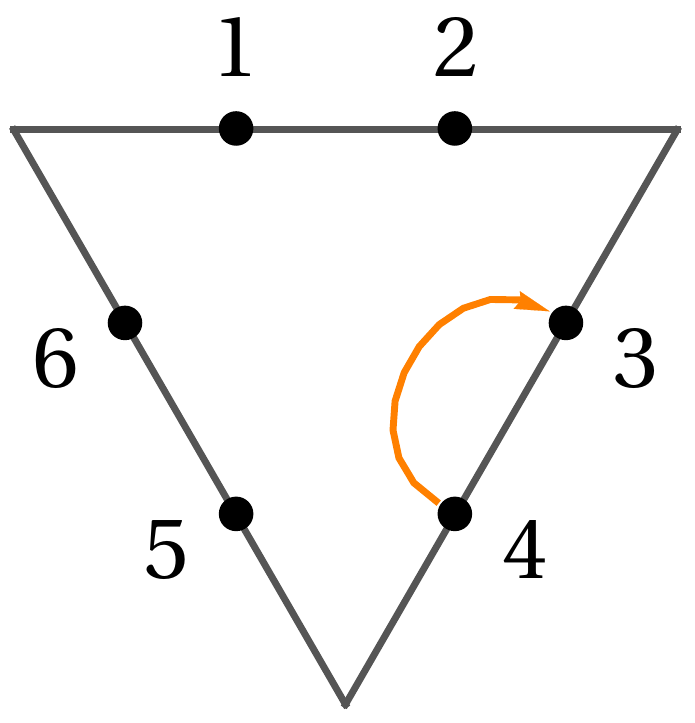}
\end{gathered}\,
\begin{gathered}
\includegraphics[height=0.08\textheight]{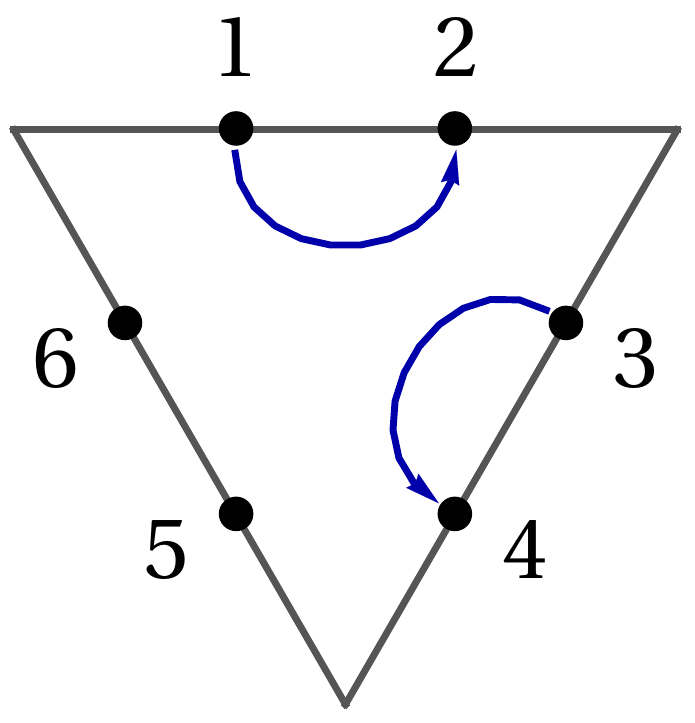}\\
S_{1,3} \bigg\updownarrow S_{2,4} \\
\includegraphics[height=0.08\textheight]{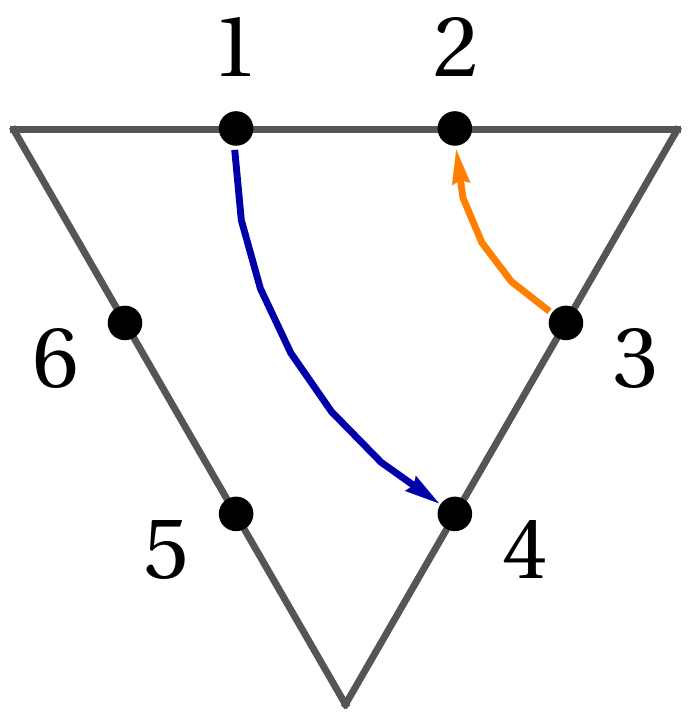}
\end{gathered}\,
\begin{gathered}
\includegraphics[height=0.08\textheight]{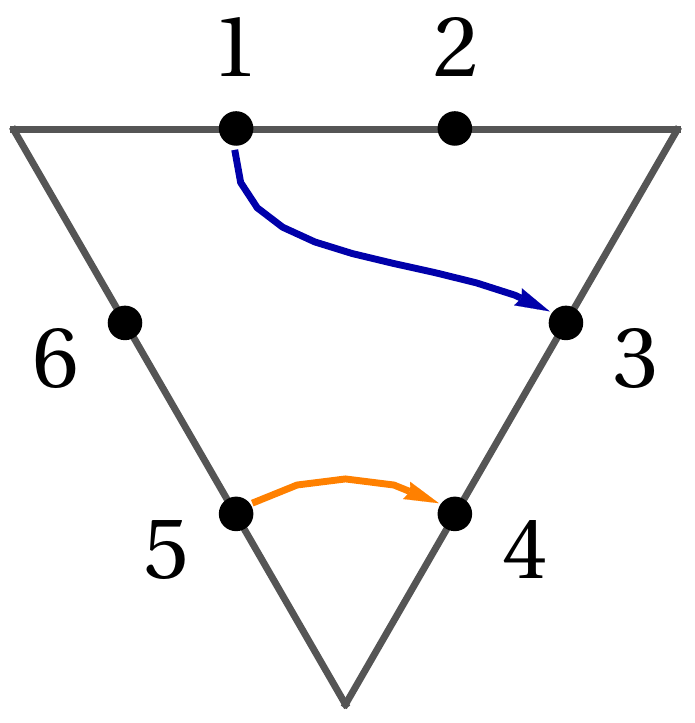}\\
S_{3,4} \bigg\updownarrow S_{1,5} \\
\includegraphics[height=0.08\textheight]{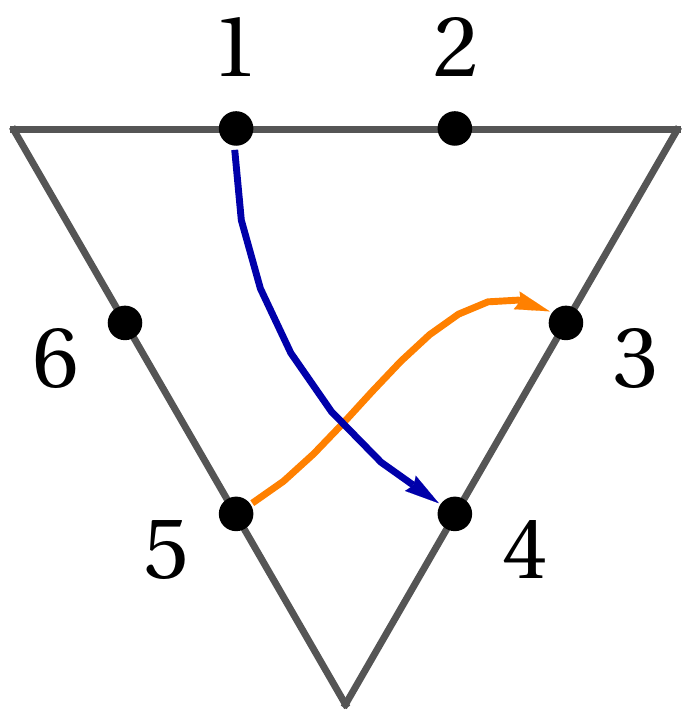}
\end{gathered}\,
\begin{gathered}
\includegraphics[height=0.08\textheight]{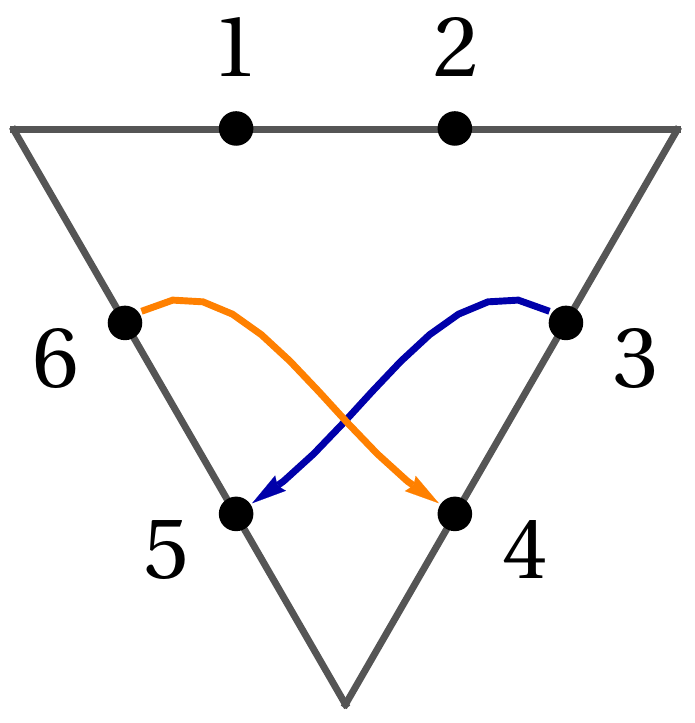}\\
S_{5,6} \bigg\updownarrow  \phantom{S_{3,4}} \\
\includegraphics[height=0.08\textheight]{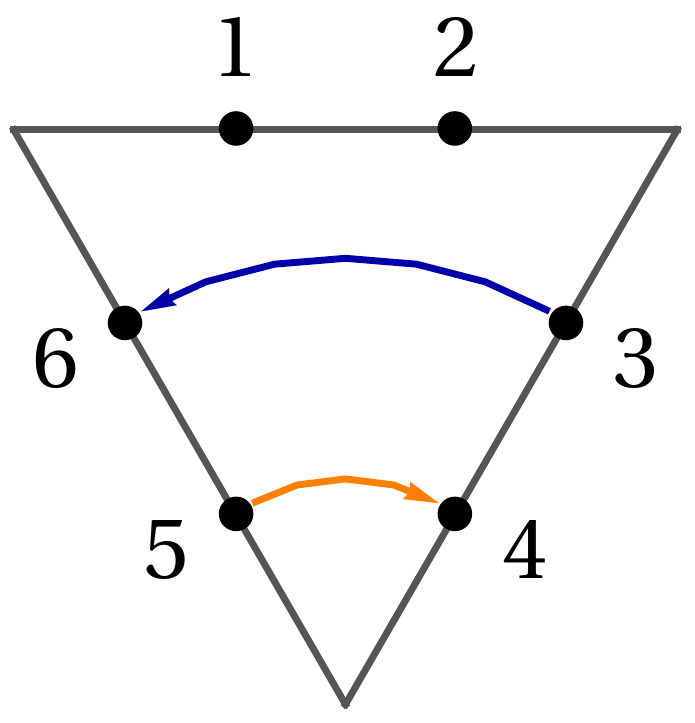}
\end{gathered}
\end{align}
Up to mirroring, relabeling and relative shifting of indices, all possible swaps belong to one the four categories shown above. The first two swaps flip one local parity but create no additional crossings; the last two either add or remove one crossing while flipping an even number of parities. Thus \eqref{EQ_TOTAL_PARITY} is always satisfied. Note that we are free to move around the dimer curve between the fixed endpoints, which means we can make two (or more) paths overlap. However, this will always change the number of crossings by an even number. For example, the logical $\bar{0}$ state of the $[[5,1,3]]$ code corresponds to both of the following diagrams (each with ten crossings):
\begin{align}
\ket{\bar{0}}_5\; = \;
\begin{gathered}
\includegraphics[height=0.09\textheight]{pentagon_net_happy_p.pdf}
\end{gathered} \;= \;
\begin{gathered}
\includegraphics[height=0.09\textheight]{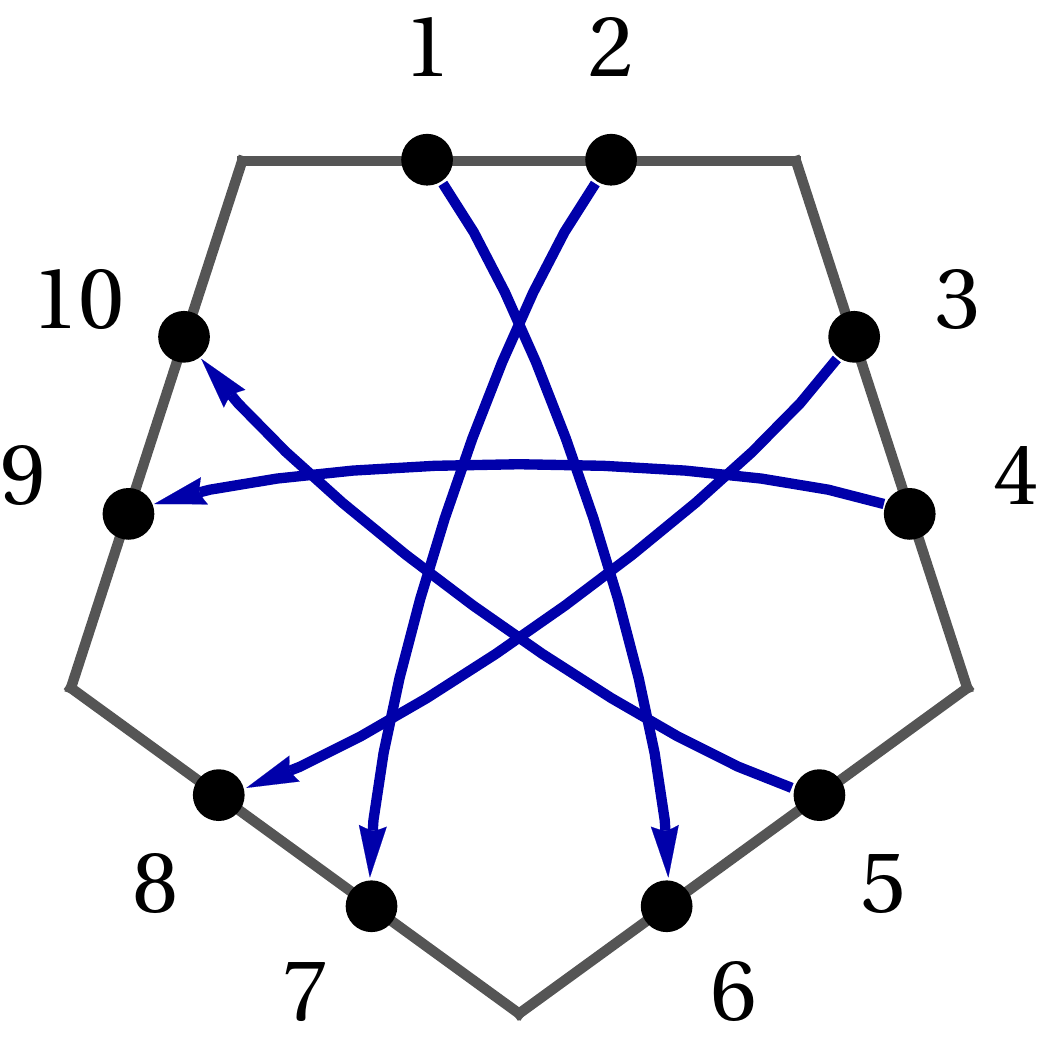}
\end{gathered} 
\end{align}
As expected, \eqref{EQ_TOTAL_PARITY} tells us that this state has positive parity.
For a fixed dimer configuration but variable dimer parities $p_k$, only the second factor of \eqref{EQ_TOTAL_PARITY} is relevant. Thus we find that acting with an $X_k$ or $Y_k$ operator, which changes an odd number of dimer parities, also flips the total parity. A $Z_k$ error, which always flips two dimer parities, leaves the total parity invariant.

\subsection{Contracting dimers}
\label{SEC_T_CONTR}
We will now show how the notion of tensor network contraction applies to Majorana dimer states.
To begin with, consider a state of $N$ spins 
\begin{align}
  \ket {\psi_T} = \sum_{j\in\{0,1\}^{\times N}} T_{j_1,\ldots,j_N} \ket{j_1,\ldots,j_N}\ .
\end{align}
Here the amplitudes $T_{j_1,\ldots,j_N} = \braket{j_1,\ldots,j_N}{\psi}$ can be viewed as a tensor $T$ which fully specifies the state vector $\ket {\psi_T}$.
A tensor network is a means of specifying a tensor  describing a state of a large number of spins through multiple contractions of tensors of a smaller rank.
Specifically, the contraction of two tensors $S$ and $T$ of rank $N_S$ and $N_T$
between the last index of $S$ and the first index of $T$ is defined to be a new tensor $U$ of rank $N_U=N_S+N_T-2$, with entries
\begin{align}
\label{EQ_TENSOR_CONTRACTION}
U_{k_1,k_2,\dots,k_{N_U}} &= S_{k_1,k_2,\dots,k_{N_S-1},0}\, T_{0, k_{N_S},k_{N_S+1},\dots,k_{N_U}} \nonumber \\
&+ S_{k_1,k_2,\dots,k_{N_S-1},1}\, T_{1, k_{N_S},k_{N_S+1},\dots,k_{N_U}} \text{ .}
\end{align}
We see that by contracting the respective tensors, this operation allows us to merge two state vectors $\ket{\psi_S}$ and $\ket{\psi_T}$ into a larger one $\ket{\psi_U}$ given by
\begin{equation}
\ket{\psi_U} = \sum_{k\in \BC {N^\prime}} U_{k_1,k_2,\dots,k_{N^\prime}} \ket{k_1,\ldots, k_{N'}}\text{ .}
\end{equation}
A tensor network state can thus describe a large state by the relatively few entries of its contracted tensors. This process can be generalized to fermions by identifying the spin basis with a fermionic one as
\begin{equation}
\label{EQ_SPIN_FERMION_BASES}
\ket{j_1,\ldots,j_N} \leftrightarrow (\fd_1)^{j_1} (\fd_2)^{j_2} \dots (\fd_N)^{j_N} \vacket \text{ .}
\end{equation}
In this picture, tensors are associated with pure fermionic states. As these expressions only use creation operators, they obey a \emph{Grassmann algebra}. The tensor contraction \eqref{EQ_TENSOR_CONTRACTION} can then be expressed by a Grassmann integration over fermionic degrees of freedom \cite{Bravyi2008}.
%A tensor network state is a state that is parametrized by a tensor whose entries can be evaluated by tensors of a smaller rank.
%Typically, one simply studies the macroscopic state and optimizes the micro-tensors.
%However, because the local tensors in the HaPPY holographic code are special it turns out that it is interesting to also consider the physical properties of the micro states.
%Summarizing, given a tensor of rank $N$ we have a description of a physical pure state, and contracting two tensors together gives a new tensors and state related to the previous ones.
%Our main results is that for the tensors which are ingredients of holographic error correction the states obtained from successive contractions are all intimately related, and can be concisely captured through fermionic degrees of freedom.
Specifically, a contraction of two fermionic state vectors $\ket\phi$ and $\ket\psi$ into a state vector $\ket\omega$ over the same indices as in \eqref{EQ_TENSOR_CONTRACTION} has the form
\begin{align}
\ket\omega &= \int \text{d}\fd_{M+1} \text{d}\fd_M\, (1 + \fd_M \fd_{M+1}) \ket\phi \ket\psi \\
&= \int \text{d}\fd_{M+1} \text{d}\fd_M\, e^{\,\fd_M \fd_{M+1}} \ket\phi \ket\psi \text{ ,}
\end{align} 
where we have used the Grassmann integration $\int \text{d}\fd_k \,{\fd_k}^n = \delta_{n,1}$ (for more information, see Refs.\ \cite{Berezin, cahill1999density,Bravyi2008, bravyi2004lagrangian}). Note that $\int \text{d}\fd_k$ acts like an annihilator $\fe_k$ on a fermionic state, with a subsequent projection onto the fermionic subspace excluding the $k$th mode. This requires a relabeling of the remaining degrees of freedom and a truncation of the Jordan-Wigner string in the corresponding spin representation.

We can now apply this machinery to Majorana dimer states. In our graphical language, tensor contraction is equivalent to connecting two polygon edges and integrating out the four Majorana modes on them. What happens to the dimers of the original states? It is easy to see that dimers $(j,k)$ of a state vector $\ket\phi$ not connected to the contracted edges remain dimers, i.e.,\ if $( \m_j + \i\, p_{j,k} \m_k ) \ket\phi$ vanishes, we also find
\begin{align}
&\left( \m_j + \i\, p_{j,k} \m_k \right)  \int \text{d}\fd_{M+1} \text{d}\fd_M\, e^{\,\fd_M \fd_{M+1}} \ket\phi \ket\psi \nonumber \\
&= \int \text{d}\fd_{M+1} \text{d}\fd_M\, e^{\,\fd_M \fd_{M+1}} \left( \m_j + \i\, p_{j,k} \m_k \right)  \ket\phi \ket\psi = 0 \text{ ,}
\end{align}
as $\m_j$ and $\m_k$ commute with the integration. We now claim that the dimers connected to the contracted edge become new dimers of the contracted state $\omega$. This leads to the following statement.
%\todo{Should we emphasize that there is no ambiguity in the two cases in the picture? Before the contraction we have two spins more than after. But these can be safely discarded so there is no problem}
%Alex: This is what I already explained a few lines before, about the "truncation of the Jordan-Wigner string"

\begin{theorem}[Contractions of Majorana dimer states]
\label{THM_MS_CLOSED}
The contraction of two Majorana dimer state vectors $\ket\phi$ and $\ket\psi$ yields either a new Majorana dimer state vector $\ket\omega$ or zero.
\end{theorem}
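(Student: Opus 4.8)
The plan is to verify the definition of a Majorana dimer state directly for $\ket\omega$: I will exhibit a complete list of $L' = L_\phi + L_\psi - 2$ conditions of the form \eqref{EQ_DIMER_COND} on \emph{distinct} pairs of Majorana modes, where $L_\phi,L_\psi$ are the mode numbers of $\ket\phi,\ket\psi$ and the contraction removes one mode from each, and separately pin down the one degenerate configuration in which the construction forces $\ket\omega=0$ rather than a dimer state. (That $\ket\omega$, if nonzero, is Gaussian and hence has a covariance matrix is automatic, since a Bell-pair contraction is a Gaussian operation; what needs proof is the special $\{0,\pm1\}$ matching structure.)

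The key input is the observation already recorded above, that the contraction map $\mathcal C=\int\text{d}\fd_{M+1}\text{d}\fd_M\,e^{\fd_M\fd_{M+1}}$ equals, up to a fixed phase, the overlap $\bra{\psi_{\text{Bell}}}$ on the two contracted modes $(M,M+1)$, where $\ket{\psi_{\text{Bell}}}=(1+\fd_M\fd_{M+1})\vacket$ is itself a Majorana dimer state carrying the two even, mutually non-crossing dimers $(2M{-}1,2M{+}2)$ and $(2M,2M{+}1)$. Its dimer conditions give the transfer identities $\bra{\psi_{\text{Bell}}}\m_{2M-1}=\i\,\bra{\psi_{\text{Bell}}}\m_{2M+2}$ and $\bra{\psi_{\text{Bell}}}\m_{2M}=\i\,\bra{\psi_{\text{Bell}}}\m_{2M+1}$ on the two-mode factor. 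Combining these with (i) the commutation of $\m_j$ with $\mathcal C$ for $j$ on a surviving mode and (ii) the definite-parity property $\Par\ket\phi=p_\phi\ket\phi$, which allows one to absorb the Jordan--Wigner string produced by relabelling ``the first mode of $\psi$'' as mode $M{+}1$, one concludes that acting on $\ket\omega$ with $\m_a$ for a surviving partner $a$ of a contracted Majorana can be traded, across the contraction, for acting with the Majorana operator on the mode reached by following the Bell dimer through the contracted edge.

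With this I would run a short case analysis on how the four contracted Majoranas $2M{-}1,2M$ of $\ket\phi$ and $2M{+}1,2M{+}2$ of $\ket\psi$ are matched. In the generic case, where $\ket\phi$ has dimers $(a,2M{-}1)$, $(b,2M)$ and $\ket\psi$ has dimers $(2M{+}1,c)$, $(2M{+}2,d)$, chaining the $\phi$-condition, the Bell transfer, and the $\psi$-condition yields two new conditions $(\m_a+\i\,q_1\m_d)\ket\omega=0$ and $(\m_b+\i\,q_2\m_c)\ket\omega=0$ with $q_1,q_2\in\{\pm1\}$ fixed by the involved dimer parities and a universal Bell sign; these are exactly the dimers obtained by ``threading'' the old ones through the contracted edge in the diagram. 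The surviving dimers of $\ket\phi$ and $\ket\psi$ contribute $(L_\phi-2)+(L_\psi-2)$ further conditions, so with the two new ones we have $L'=L_\phi+L_\psi-2$ conditions, each surviving Majorana appearing in exactly one, so the family is independent and complete and $\ket\omega$ (when nonzero) is a Majorana dimer state. The remaining configurations --- one or both contracted edges being a ``self-dimer'' $(2M{-}1,2M)$ or $(2M{+}1,2M{+}2)$, or coincidences among $a,b,c,d$ --- are handled the same way and still produce a complete dimer family, with one exception: if \emph{both} contracted edges are self-dimers, gluing closes a two-cycle, and from the Bell overlap one checks that $\mathcal C(\ket\phi\ket\psi)$ is nonzero precisely when the two self-dimer parities agree (equivalently, when the closed loop would have even total parity in the sense of \eqref{EQ_TOTAL_PARITY}) and vanishes otherwise --- this is the ``or zero'' alternative.

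The main obstacle is bookkeeping rather than anything conceptual. Jordan--Wigner strings of $\psi$-side operators all run through the contracted modes, so the transfer identities must be applied to the correctly-strung operators and the leftover $Z$-strings disposed of using the definite parity of $\ket\phi$ and the explicit Grassmann integration rules; tracking the resulting signs to fix the new parities $q_i$ and confirm consistency with \eqref{EQ_TOTAL_PARITY} is the fiddly part. A secondary point requiring care is the loop case: one must argue that contracting a closed dimer loop genuinely produces a scalar (possibly zero) times a dimer state on the remaining modes, rather than something outside the dimer class. An alternative route that trades string bookkeeping for linear algebra is to write the contraction as the Gaussian fusion rule for covariance matrices and check directly that fusing two antisymmetric signed permutation matrices through the Bell-pair block gives another one, the matrix to be inverted in the fusion formula being singular exactly in the vanishing case.
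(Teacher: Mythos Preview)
Your proposal is correct and follows essentially the same strategy as the paper: a case analysis on how the four Majorana modes on the contracted edges are paired, verifying in each case that the surviving conditions of the form \eqref{EQ_DIMER_COND} close up into a complete dimer matching on the remaining $L_\phi+L_\psi-2$ modes, and isolating the closed-loop configuration as the unique source of vanishing. The paper carries this out by direct Grassmann integration (Appendix~\ref{APP_CONTR_RULES}), whereas you phrase the same computations as ``transfer identities'' coming from the Bell state's own dimer conditions; these are two ways of writing the same manipulation, since $\mathcal C$ is precisely the Bell overlap and the Grassmann identities you would invoke are exactly what the paper computes line by line. Your covariance-matrix fusion alternative is not pursued in the paper but would also work.
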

An example for the contraction of two pentagon state vectors $\ket\phi$ and $\ket\psi$ is given by 
\begin{equation}
\label{EQ_CONTR_EX1}
\begin{gathered}
\includegraphics[height=0.12\textheight]{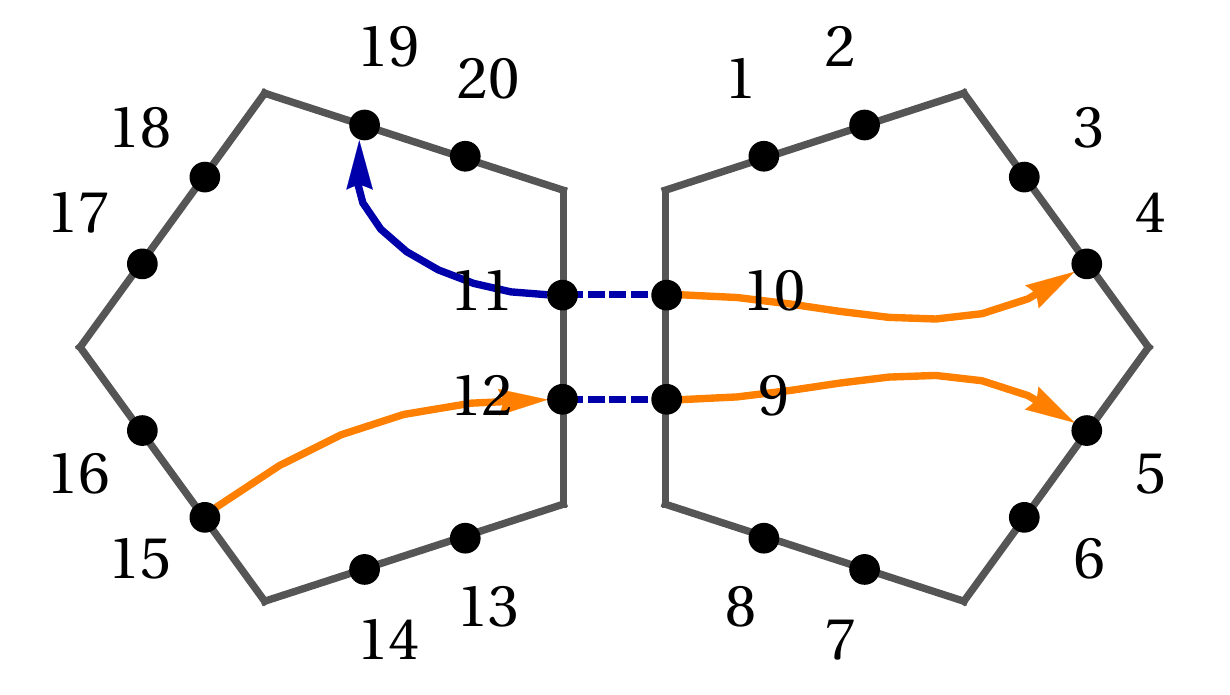}\\
\scalebox{1.2}{$\parallel$}\\
\includegraphics[height=0.12\textheight]{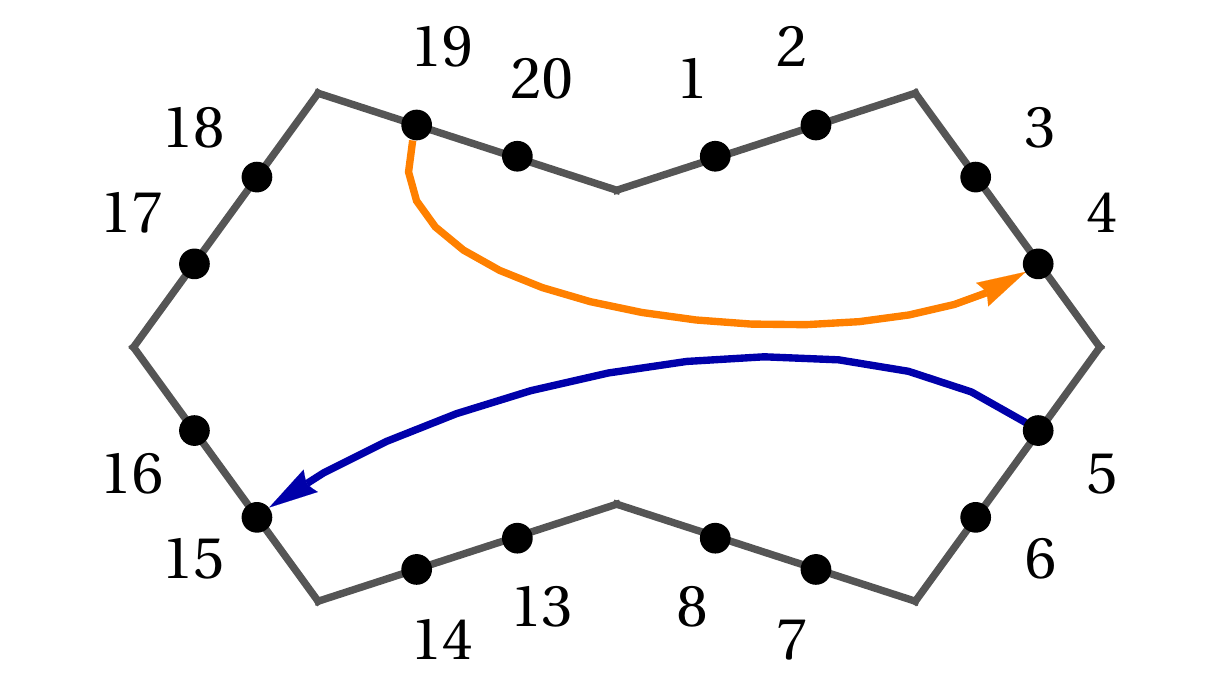}
\end{gathered}
\end{equation}
We have visualized the contraction by a pair of dashed lines. In this example, dimers not connected to the contracted edges are omitted. The upper diagram corresponds to the conditions
\begin{align}
\label{EQ_CONTR_EX_COND1}
( \m_4 - \i \m_{10} ) \ket\phi &= 0 \text{ ,}
& 
( \m_{11} + \i \m_{19} ) \ket\psi &= 0 \text{ ,} \\
\label{EQ_CONTR_EX_COND2}
( \m_5 - \i \m_9 ) \ket\phi &= 0  \text{ ,}
& 
( \m_{12} - \i \m_{15} ) \ket\psi &= 0  \text{ .}
\end{align}
We now prove that \eqref{EQ_CONTR_EX_COND1} implies $( \m_4 - \i \m_{19} ) \ket\omega = 0$ for the contracted state vector $\ket\omega$, i.e.,\ that the two original dimers fuse into a larger one:
\begin{align}
&( \m_4 - \i \m_{19} ) \ket\omega= \int \text{d}\fd_6 \text{d}\fd_5\, e^{\,\fd_5 \fd_6} ( \m_4 - \i \m_{19} ) \ket\psi \ket\phi \nonumber\\
&= \int \text{d}\fd_6 \text{d}\fd_5\, e^{\,\fd_5 \fd_6} ( \i \m_{10} + \m_{11} ) \ket\psi \ket\phi \nonumber\\
&= \int \text{d}\fd_6 \text{d}\fd_5\, e^{\,\fd_5 \fd_6} ( - \fd_5 + \fd_6 + \fe_5 + \fe_6 ) \ket\psi \ket\phi \nonumber\\
&= \int \text{d}\fd_6 \text{d}\fd_5\, ( - \fd_5 + \fd_6 - \fd_6 \fd_5 \fe_5  + \fd_5 \fd_6 \fe_6 ) \ket\psi \ket\phi \nonumber\\
&= 0 \text{ ,}
\end{align}
where we have used the identities $\int \text{d}\fd_k \fe_k = 0$ and $\lbrace \fe_k, \fd_l \rbrace  = \delta_{k,l}$. A similar proof using \eqref{EQ_CONTR_EX_COND2} leads to $( \m_5 + \i \m_{15} ) \ket\omega = 0$.

The full proof for all possible dimer contractions is given in Appendix \ref{APP_CONTR_RULES}. The resulting rules are:
\begin{itemize}
\item Contracting neighbouring edges $k$ and $k{+}1$ removes the Majorana modes $\lbrace 2k-1, 2k, 2k+1, 2k+2 \rbrace$. The dimers ending on $2k-1$ and $2k+2$ as well as $2k$ and $2k+1$ are fused into larger dimers. 
\item The dimer parity $p_{j,k}$ of a fused dimer is the product of parities of the original dimers. In addition, every crossing of the path of a contracted dimer with itself reverses $p_{j,k}$.
\item Every contraction that creates closed loops leads to a vanishing contracted state if at least one loop has an odd dimer parity.
\end{itemize}
The last case refers to diagrams such as the following:
\begin{align}
\label{EQ_CONTR_EX2}
\begin{gathered}
\includegraphics[height=0.12\textheight]{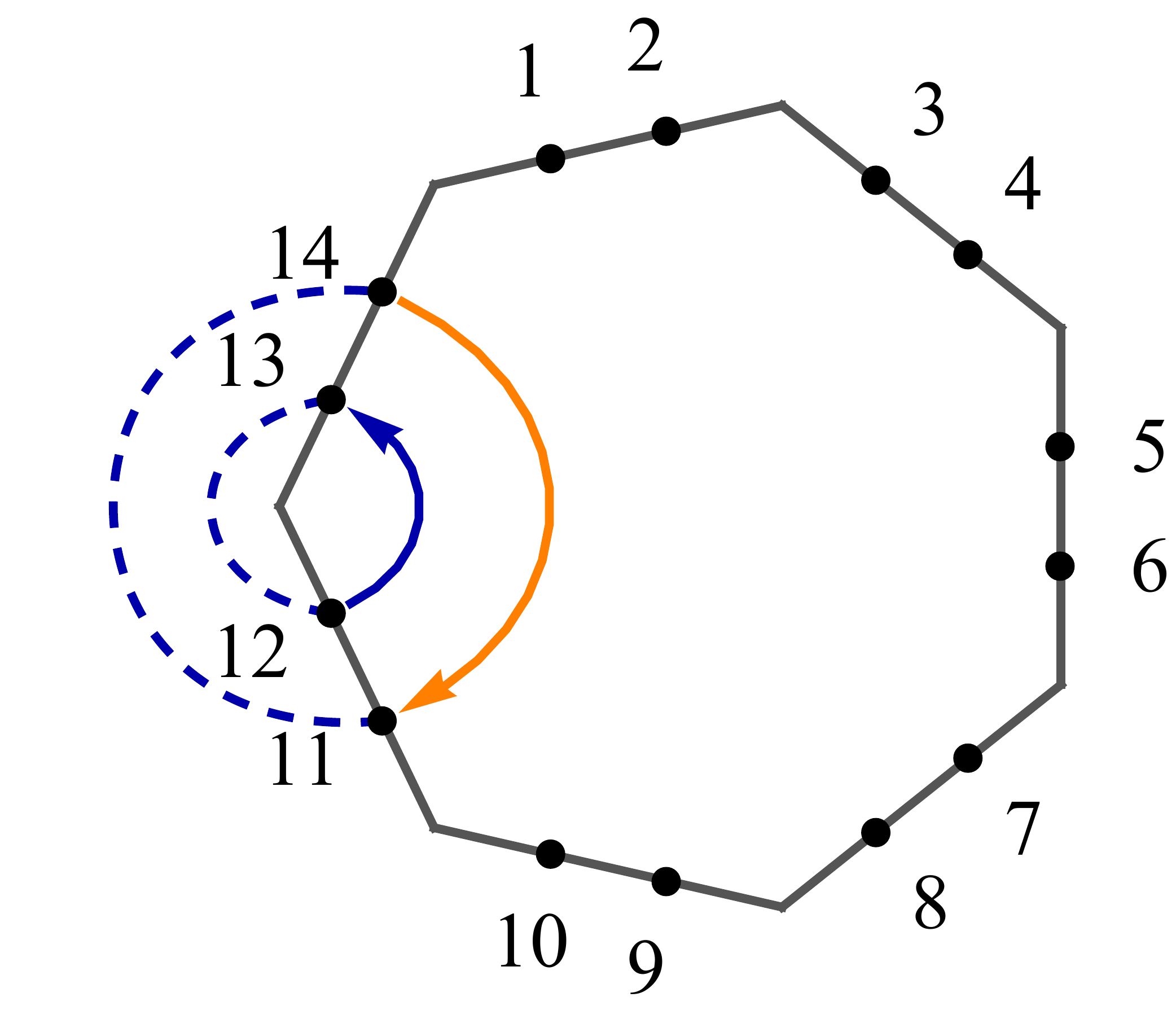}
\end{gathered} \quad = \; 0 \text{  .}
\end{align}
Loops with even total dimer parity only change the contracted state by a non-zero constant.

\subsection{Ordering and cyclic permutations}
\label{APP_CYCL_PERM}

As fermionic operators anti-commute, defining a chain of fermionic sites requires a definite ordering of the site indices. This is also true for the Majorana modes that make up Majorana dimer states.
For convention, we assume clockwise-oriented indices starting from an initial index which we call the \textsl{pivot} and mark by a little circle in the following diagrams. Shifting the pivot corresponds to a cyclic permutation of all indices:
\begin{equation}
\label{EQ_CYCL_PERM}
\begin{gathered}
\includegraphics[height=0.1\textheight]{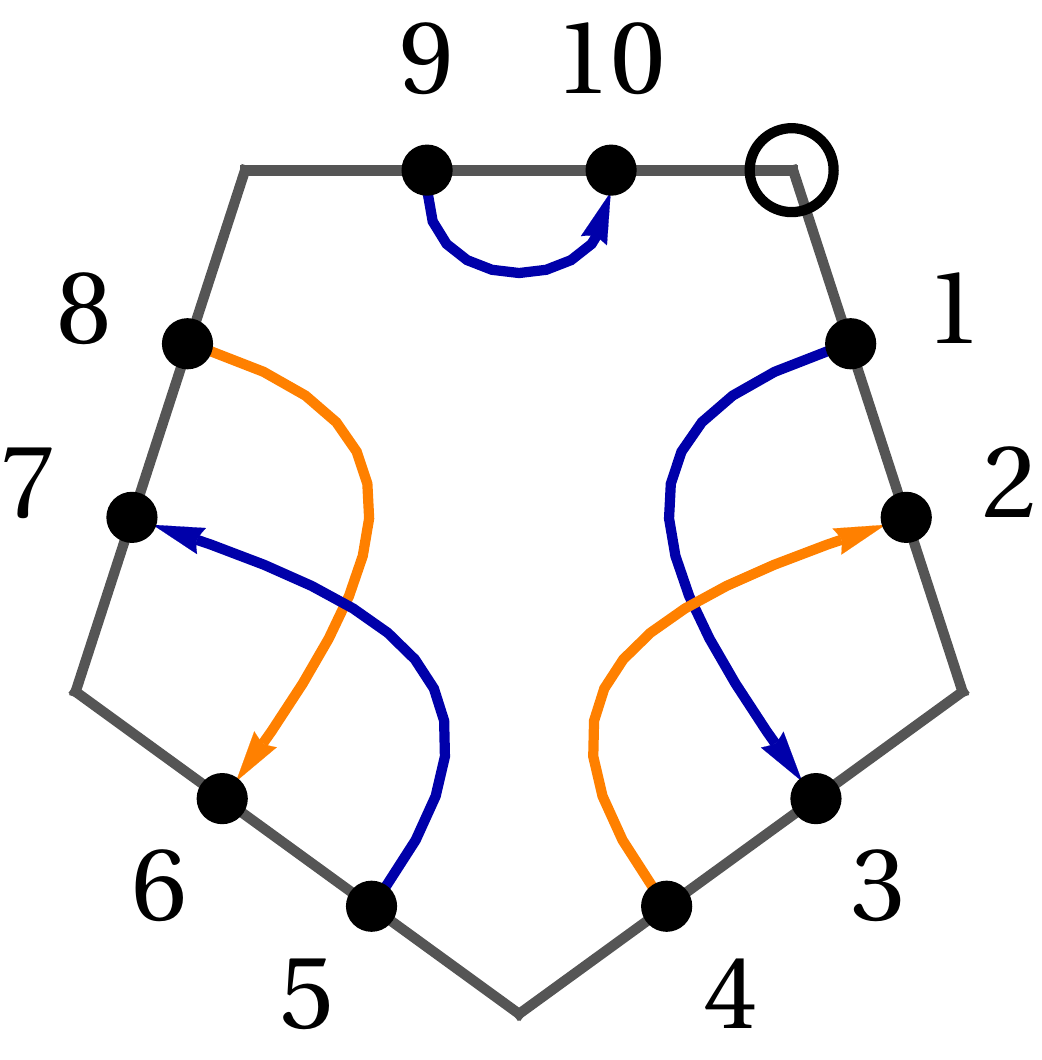}
\end{gathered}
\quad\rightarrow\quad
\begin{gathered}
\includegraphics[height=0.1\textheight]{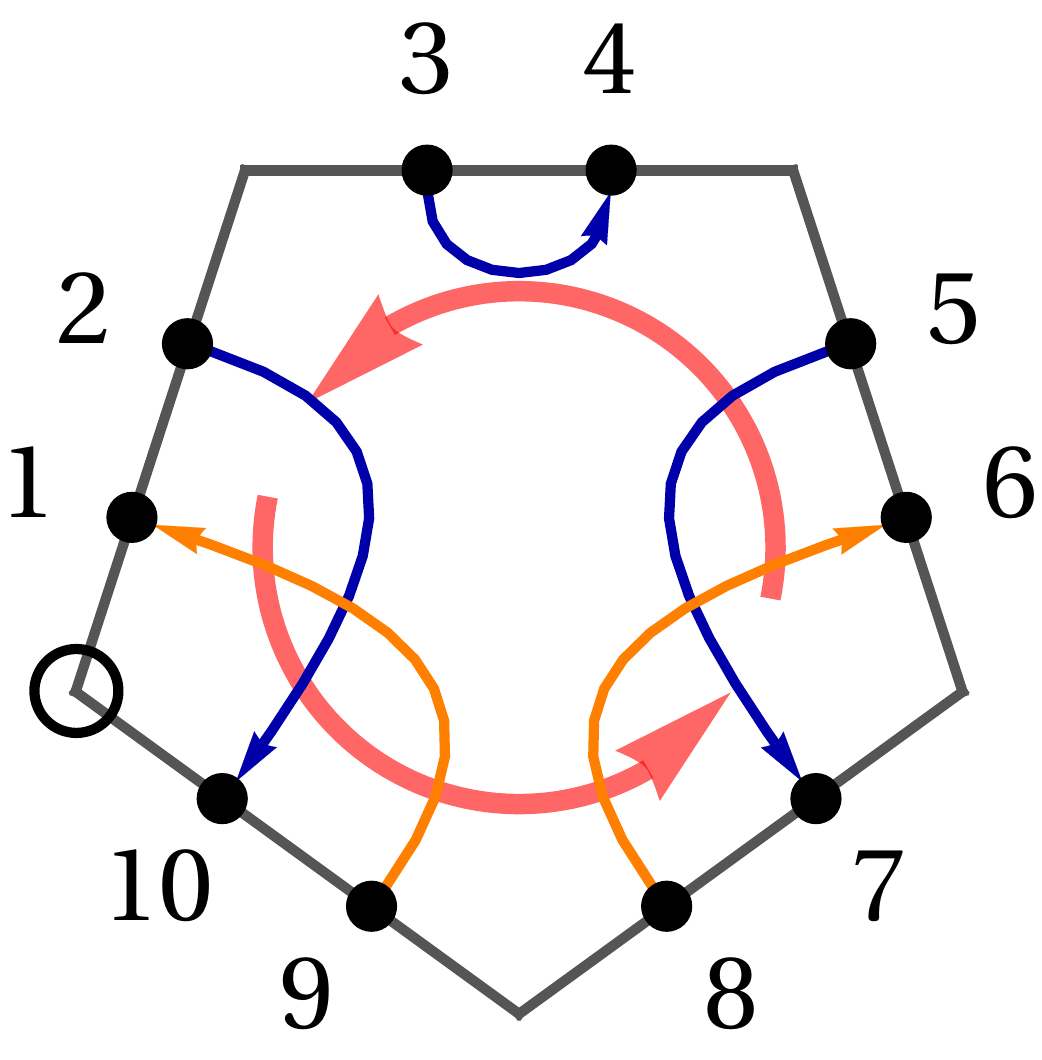}
\end{gathered}
\end{equation}
How does this transformation affect the dimer parities? First, let us interpret \eqref{EQ_CYCL_PERM} as a cyclic permutation of Majorana indices through an operator $\mathcal{M}_4$ acting on Majorana modes. For example, in the initial labeling we find $(\m_5 + \i \m_7)\ket\psi=0$. After a cyclic permutation $\ket\psi \mapsto \ket{\psi^\prime}= \mathcal{M}_4 \ket\psi$ with $\mathcal{M}_4 \m_k \mathcal{M}_4^{-1} = \m_{(k+4)\,\mathrm{mod}\, 10}$, it follows that 
\begin{align}
\label{EQ_CYCL_PERM_M_EX}
(\m_9 + \i \m_1)\ket{\psi^\prime}\propto (\m_1 - \i \m_9)\ket{\psi^\prime} = 0\ .
\end{align}
Hence, while a cyclic permutation of Majorana operators does not change the direction of the arrows, it flips the local parities of dimers ending on the edges between the initial and final pivot (with indices $i,i+1,\dots,f$), effectively acting like a product of all Majorana operators $\m_{2i-1},\m_{2i},\dots,\m_{2f}$ on these edges.

However, in our use of Majorana dimers as a description of the HyPeC, the underlying physical system is one of spins, where (Pauli) operators on different sites commute. Hence, we are interested in cyclic permutations not of Majorana modes but of the underlying spin degrees of freedom. This modifies the previous result, as each spin permutation effects a different Jordan-Wigner transformation. Consider an initial spin state vector
\begin{equation}
\ket\psi = \sum_{k\in \BC {N}} T_{k_1,k_2,\dots,k_{N}} \ket{k_1,k_2,\ldots, k_{N}} \ .
\end{equation}
A permutation $i \mapsto \sigma(i)$ of indices $i$ gives rise to the spin state vector $\sket{\tilde{\psi}} = \mathcal{S}_\sigma \ket\psi$, where $\mathcal{S}_\sigma$ is the spin-picture unitary permutation operator. Explicitly,
\begin{align}
\sket{\tilde{\psi}} &= \sum_{k\in \BC {N}} T_{k_1,k_2,\dots,k_N} \ket{\sigma(k_1),\sigma(k_2),\ldots,\sigma(k_N)} \nonumber \\
&= \sum_{j\in \BC {N}} \tilde{T}_{j_1,j_2,\dots,j_{N}} \ket{j_1,j_2,\ldots, j_{N}}\text{ }
\end{align}
with
\begin{align}
  \tilde{T}_{j_1,j_2,\dots,j_{N}} = T_{\sigma^{-1}(j_1), \sigma^{-1}(j_2), \ldots \sigma^{-1}(j_N)}\ .
\end{align}
In terms of fermionic operators, the initial and final state vectors $\ket\psi$ and $\sket{\tilde{\psi}}$ are identified as
\begin{align}
\ket\psi &= \sum_{k\in \BC {N}} T_{k_1,k_2,\dots,k_{N}}\, (\fd_1)^{k_1} (\fd_2)^{k_2} \dots (\fd_N)^{k_N} \vacket \text{ ,} \\
\sket{\tilde{\psi}} &= \sum_{k\in \BC {N}} \tilde{T}_{k_1,k_2,\dots,k_{N}}\, (\ftd_1)^{k_1} (\ftd_2)^{k_2} \dots (\ftd_N)^{k_N} \vacket \text{ .}
\end{align}
The operators $\fd_k$ and $\ftd_k$ are fermionic operators defined by the respective Jordan-Wigner transformations. Consider a one-step cyclic spin permutation $i \mapsto (i+1) \mod N$ through a permutation operator $\mathcal{S}_{+1}$. The Majorana operators transform as $\m_k \mapsto \mathcal{S}_{+1} \m_k \mathcal{S}_{+1}^\dagger$, and explicitly as
\begin{align}
\m_1 &= X_1 \mapsto X_2 = Z_1 \mt_3 \text{ ,} \nonumber\\
\m_2 &= Y_1 \mapsto Y_2 = Z_1 \mt_4 \text{ ,} \nonumber\\
\m_3 &= Z_1 X_2 \mapsto Z_2 X_3 = Z_1 \mt_5 \text{ ,} \nonumber\\
&\dots \\
\m_{2N-2} &= Z_1 \dots Z_{N-2} Y_{N-1}  \mapsto Z_2 \dots Z_{N-1} Y_N = Z_1 \mt_N \text{ ,} \nonumber\\
\m_{2N-1} &= Z_1 \dots Z_{N-1} X_N  \mapsto X_1 Z_2 \dots Z_N = - Z_1 \mt_1 \Par \text{ ,} \nonumber\\
\m_{2N} &= Z_1 \dots Z_{N-1} Y_N  \mapsto Y_1 Z_2 \dots Z_N = - Z_1 \mt_2 \Par \text{ .} \nonumber
\end{align}
Hence, the transformed Majorana operators are \textsl{not} the Majorana operators defined by the new Jordan-Wigner transformation! Instead, under the cyclic permutation $\mathcal{S}_{+1}$ all $\m_k$ for $k<2N-1$ transform as $\m_k \mapsto Z_1 \mt_{k+2}$, while $\m_{2N-1}$ and $\m_{2N}$ transform with an additional total parity $\Par = Z_1 Z_2 \dots Z_N$. This changes the dimer conditions: If the state vector $\ket\psi$ is annihilated by the operator $\m_j + \i\, p_{j,k} \m_k$ (with $j<k$), then this operator changes under cyclic spin permutation to
\begin{align}
(\m_j + \i\, p_{j,k} \m_k) \ket\psi &\mapsto \mathcal{S}_{+1} (\m_j + \i\, p_{j,k} \m_k) \mathcal{S}_{+1}^\dagger \mathcal{S}_{+1} \ket\psi \nonumber \\
&= ( \mathcal{S}_{+1} \m_j \mathcal{S}_{+1}^\dagger + \i\, p_{j,k}\, \mathcal{S}_{+1} \m_k \mathcal{S}_{+1}^\dagger ) \sket{\tilde{\psi}}  \nonumber \\
\end{align}
Let us distinguish this result by the parity of $\sket{\tilde{\psi}}$. For even parity $\Par\sket{\tilde{\psi}} = \sket{\tilde{\psi}}$, we find
\begin{align}
\label{EQ_PARITY_FLIP_PLUS}
&( \mathcal{S}_{+1} \m_j \mathcal{S}_{+1}^\dagger + \i\, p_{j,k}\, \mathcal{S}_{+1} \m_k \mathcal{S}_{+1}^\dagger ) \sket{\tilde{\psi}}
\nonumber\\
= &\begin{cases}
Z_1 (\mt_{j+2} + \i\, p_{j,k} \mt_{k+2}) \sket{\tilde{\psi}} \ , & j,k{<}2N{-}1 ,\\
Z_1 (\mt_{j+2} - \i\, p_{j,k} \mt_{k+2-2N}) \sket{\tilde{\psi}} \ , & j{<}2N{-}1,k {\ge} 2N{-}1 ,\\
- Z_1 (\mt_{1} + \i\, p_{j,k} \mt_{2}) \sket{\tilde{\psi}} \ , & j{=}2N{-}1,k{=}2N .\\
\end{cases}
\end{align}
For odd parity $\Par\sket{\tilde{\psi}} = -\sket{\tilde{\psi}}$, the result is given by
\begin{align}
\label{EQ_PARITY_FLIP_MINUS}
&( \mathcal{S}_{+1} \m_j \mathcal{S}_{+1}^\dagger + \i\, p_{j,k}\, \mathcal{S}_{+1} \m_k \mathcal{S}_{+1}^\dagger ) \sket{\tilde{\psi}}
\nonumber\\
= &\begin{cases}
Z_1 (\mt_{j+2} + \i\, p_{j,k} \mt_{k+2}) \sket{\tilde{\psi}} \ , & j,k{<}2N{-}1 ,\\
Z_1 (\mt_{j+2} + \i\, p_{j,k} \mt_{k+2-2N}) \sket{\tilde{\psi}} \ , & j{<}2N{-}1,k {\ge} 2N{-}1 ,\\
Z_1 (\mt_{1} + \i\, p_{j,k} \mt_{2}) \sket{\tilde{\psi}} \ , & j{=}2N{-}1,k{=}2N .\\
\end{cases}
\end{align}
For odd-parity states, we thus find from \eqref{EQ_PARITY_FLIP_MINUS} that the dimer parities flip as one dimer endpoint moves past the pivot, just as in \eqref{EQ_CYCL_PERM_M_EX}. However, for an even-parity state \eqref{EQ_PARITY_FLIP_PLUS} tells us that the dimer parities remain invariant under spin permutations. This is the only difference between cyclic permutations of Majorana modes and of the underlying spin sites.
Omitting arrows and showing only dimer parities, spin cyclic permutation correspond to diagrams such as 
\begin{align}
\begin{gathered}
\includegraphics[height=0.1\textheight]{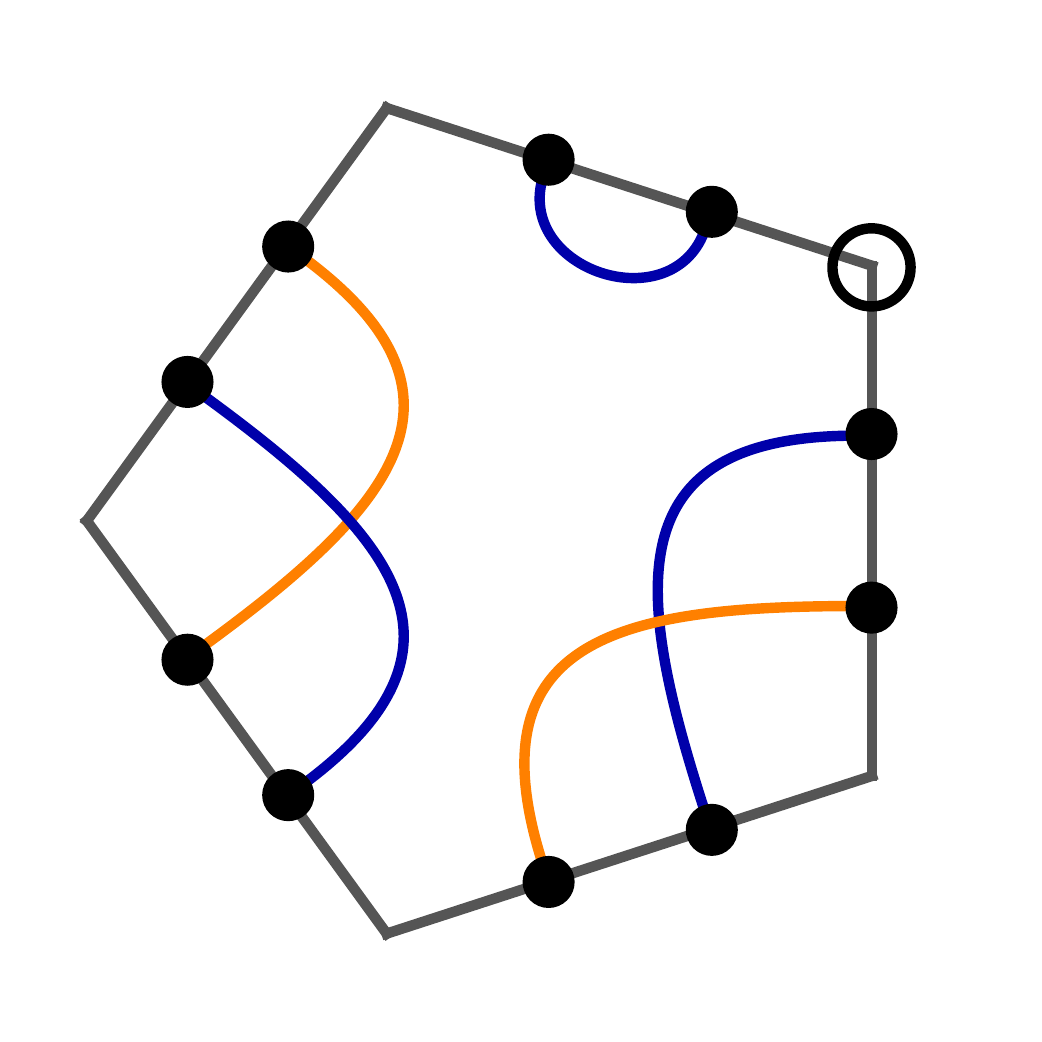}
\end{gathered}
&\mapsto \,
\begin{gathered}
\includegraphics[height=0.1\textheight]{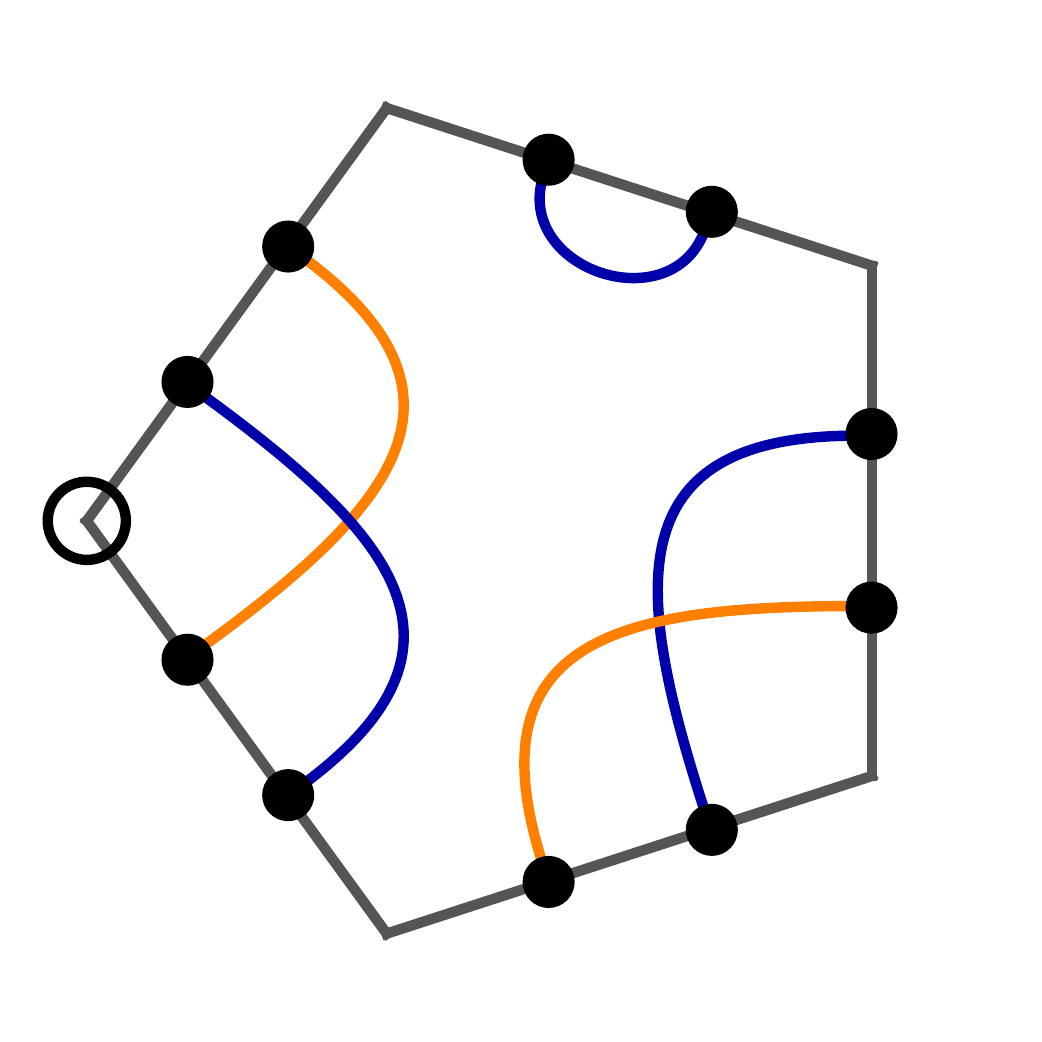}
\end{gathered} \\
\begin{gathered}
\includegraphics[height=0.1\textheight]{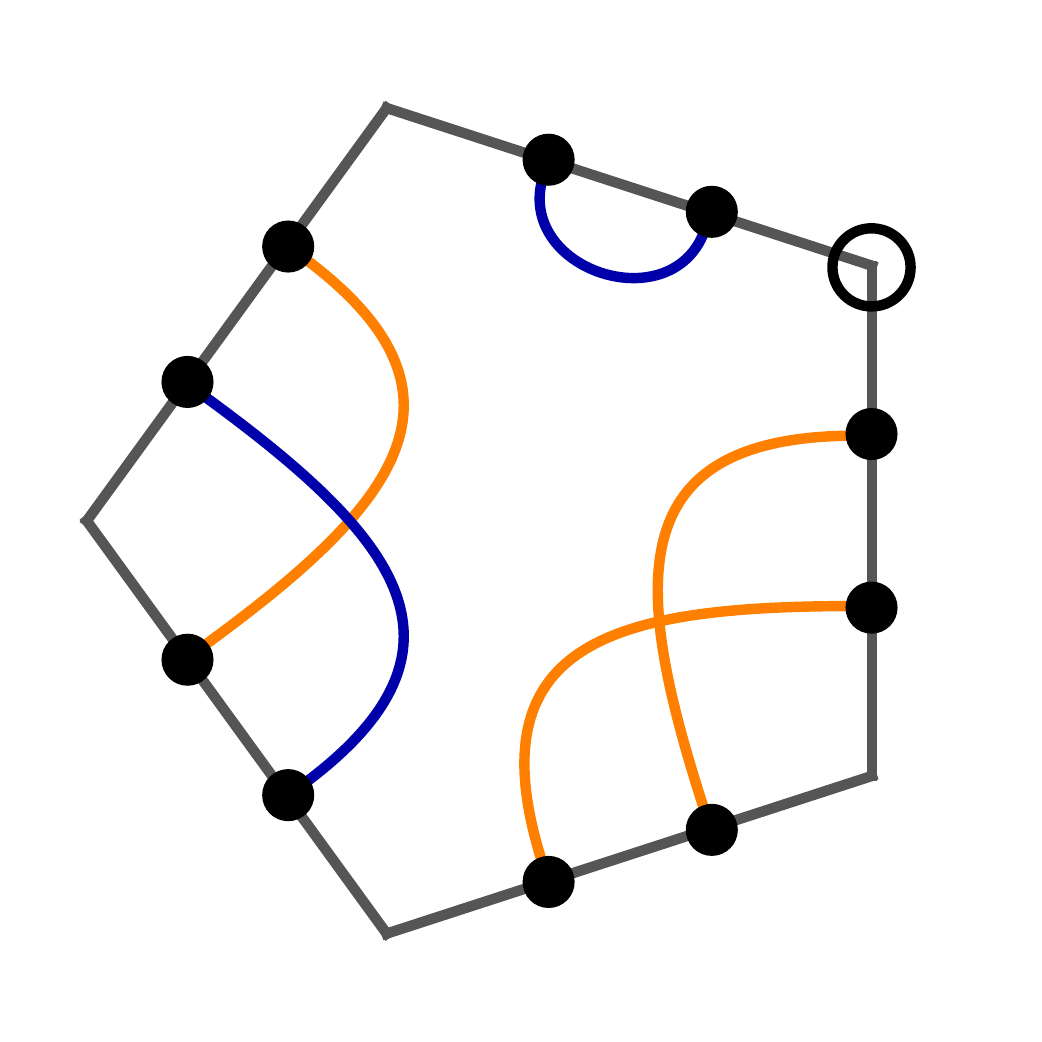}
\end{gathered}
&\mapsto \,
\begin{gathered}
\includegraphics[height=0.1\textheight]{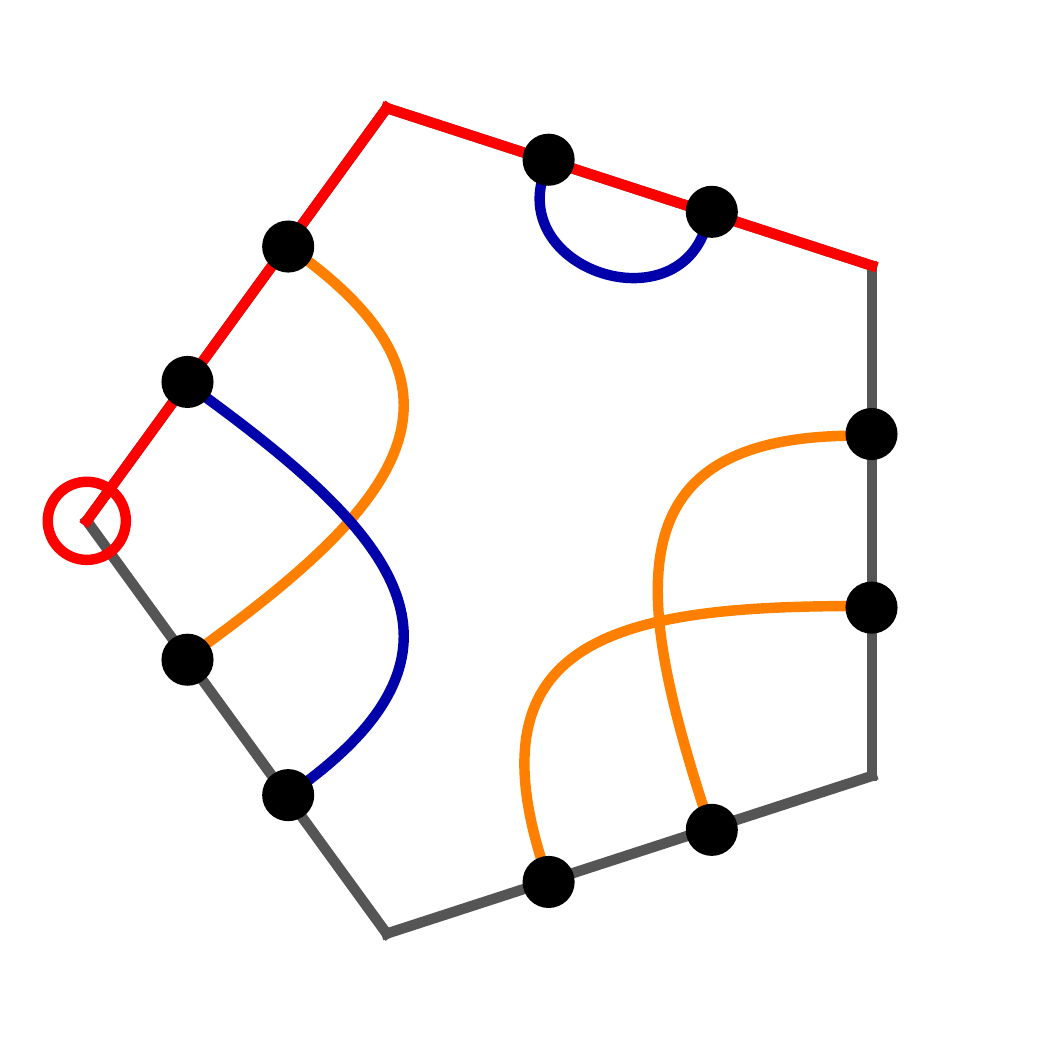}
\end{gathered}
=\,
\begin{gathered}
\includegraphics[height=0.1\textheight]{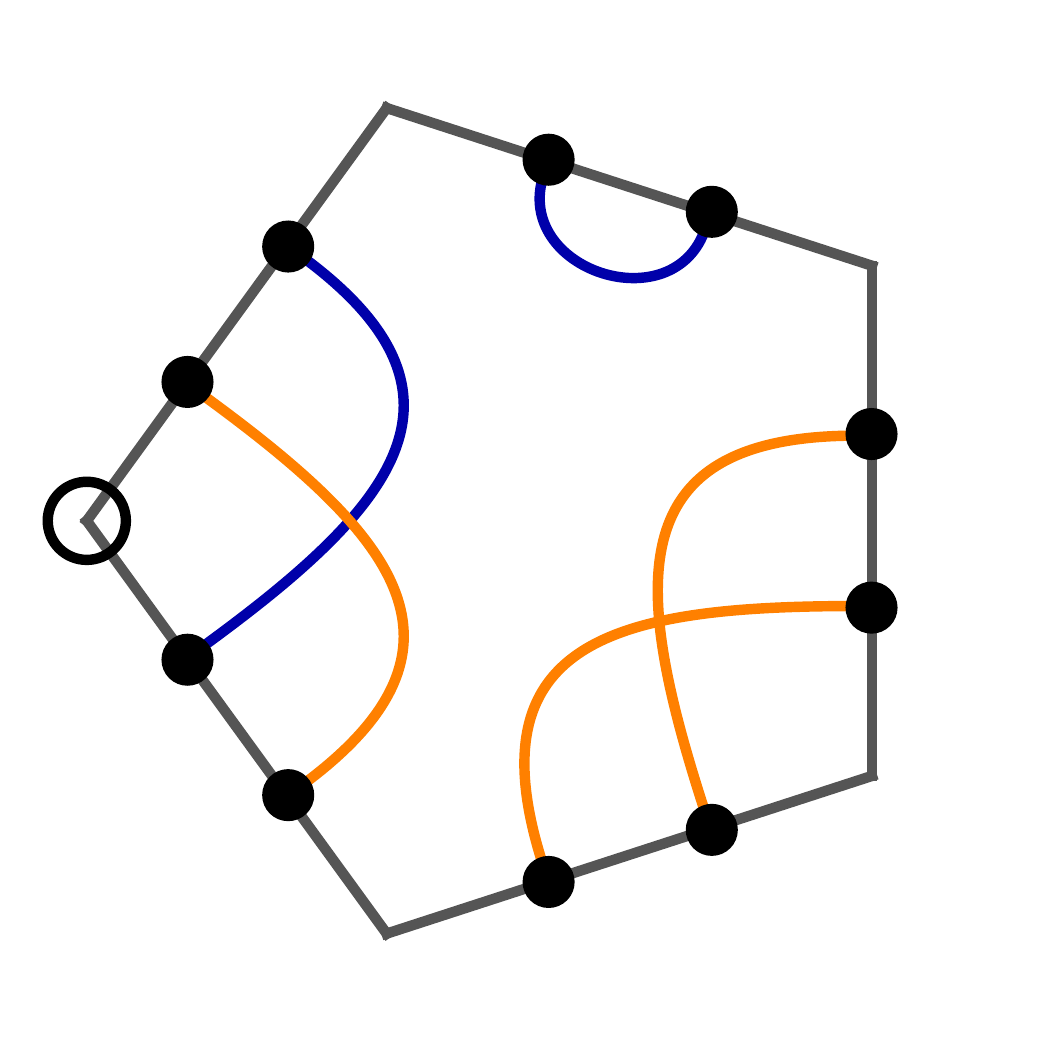}
\end{gathered}
\end{align}
These two diagrams illustrate a more general observation.
Namely, the upper diagram shows a parity-even Majorana dimer state, where changing the index labeling (i.e.,\ shifting the pivot) does not change the dimer parities. The lower one, however, is parity-odd: The red-shaded edges, following the path along which the pivot was moved, represent Majorana operators on the edges that flip the connected dimer parities. Note that there are two possible paths (clockwise and anti-clockwise) between the initial and final position of the pivot, and that both lead to the same final state up to a total sign.

As a special case, consider the behaviour of the logical code state vectors $\ket{\bar{0}}_5$ and $\ket{\bar{1}}_5$ of the $[[5,1,3]]$ code under cyclic permutations (here, for a clockwise shift of two edges):
\begin{align}
\label{EQ_PENTAGON_STATES}
\ket{\bar{0}}_5\; = \;
\begin{gathered}
\includegraphics[height=0.1\textheight]{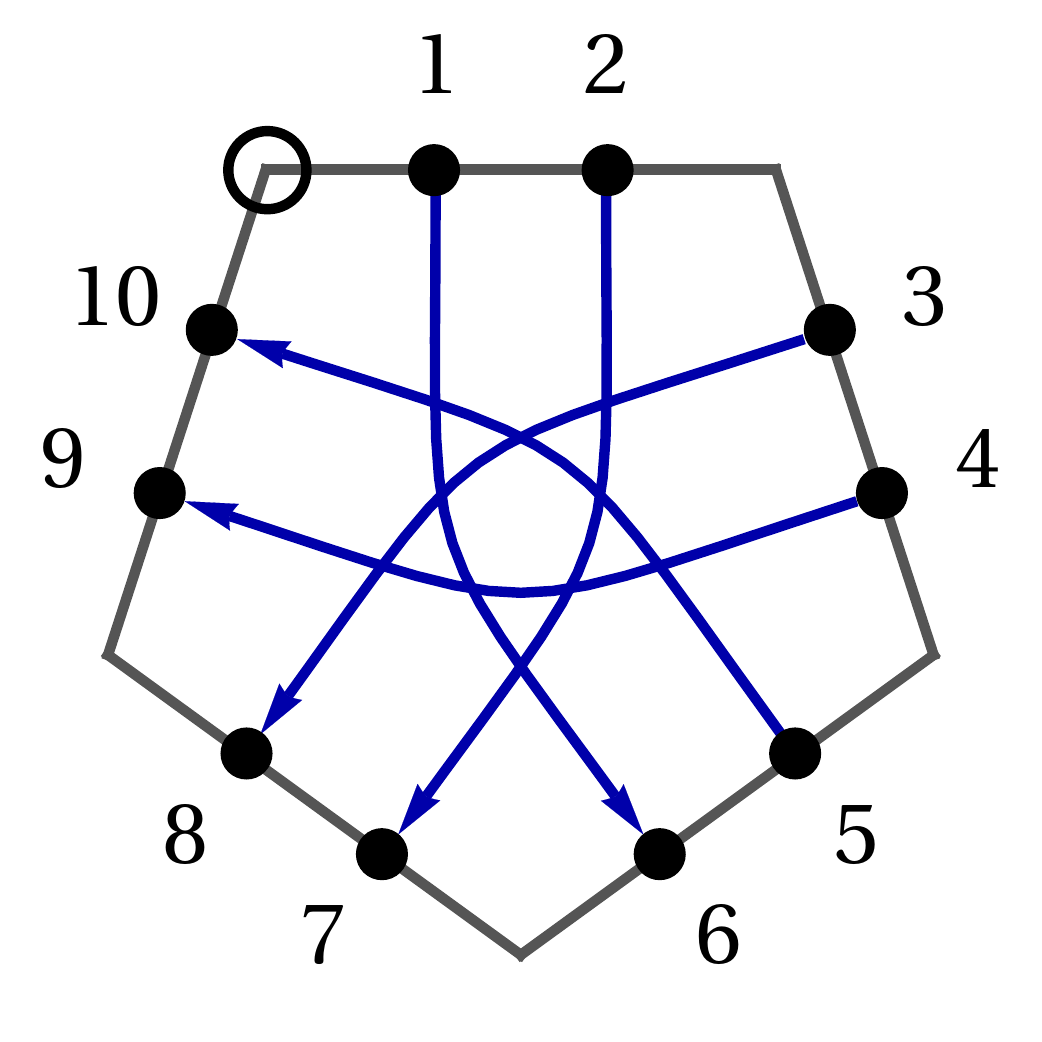}
\end{gathered}
\;\mapsto \;
&\begin{gathered}
\includegraphics[height=0.1\textheight]{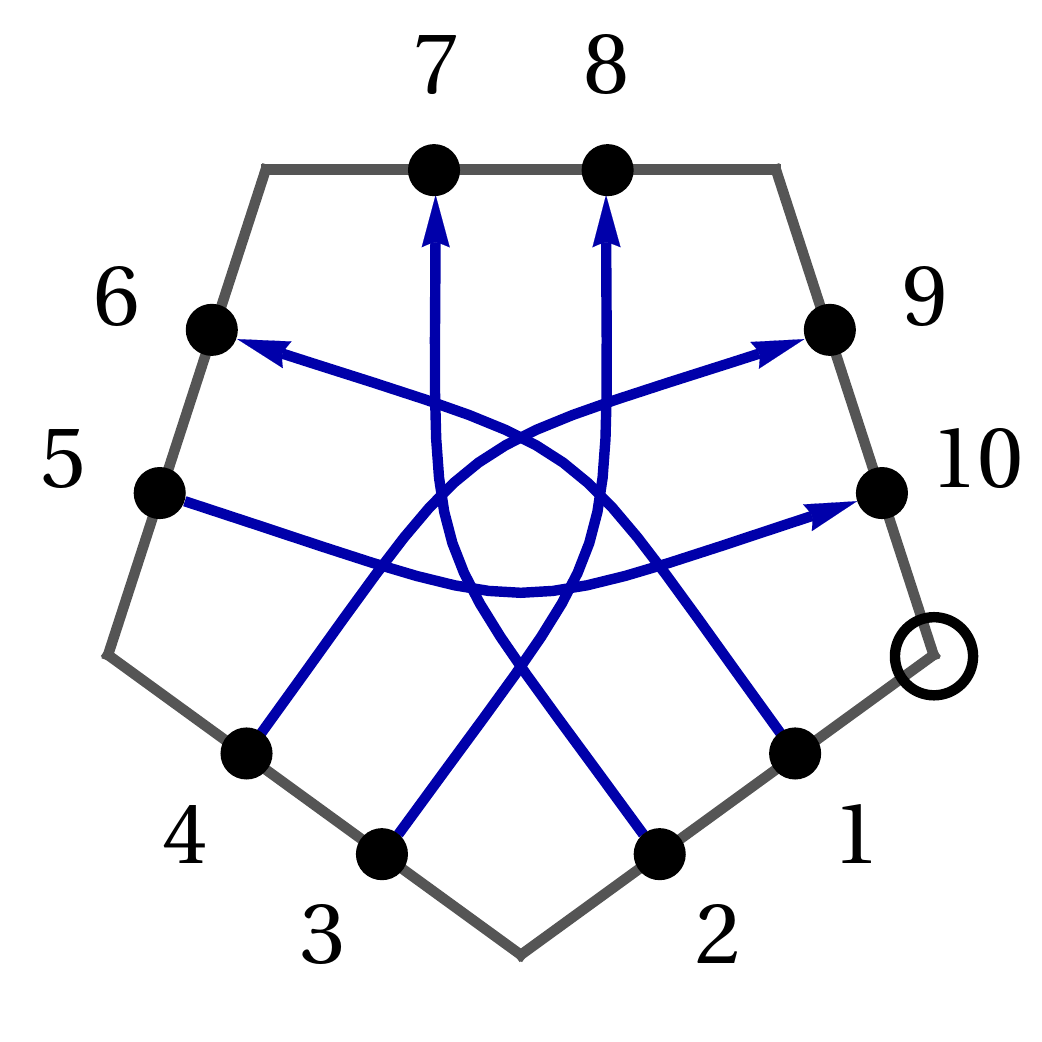}
\end{gathered} \\
\label{EQ_PENTAGON_STATES2}
\ket{\bar{1}}_5\; = \;
\begin{gathered}
\includegraphics[height=0.1\textheight]{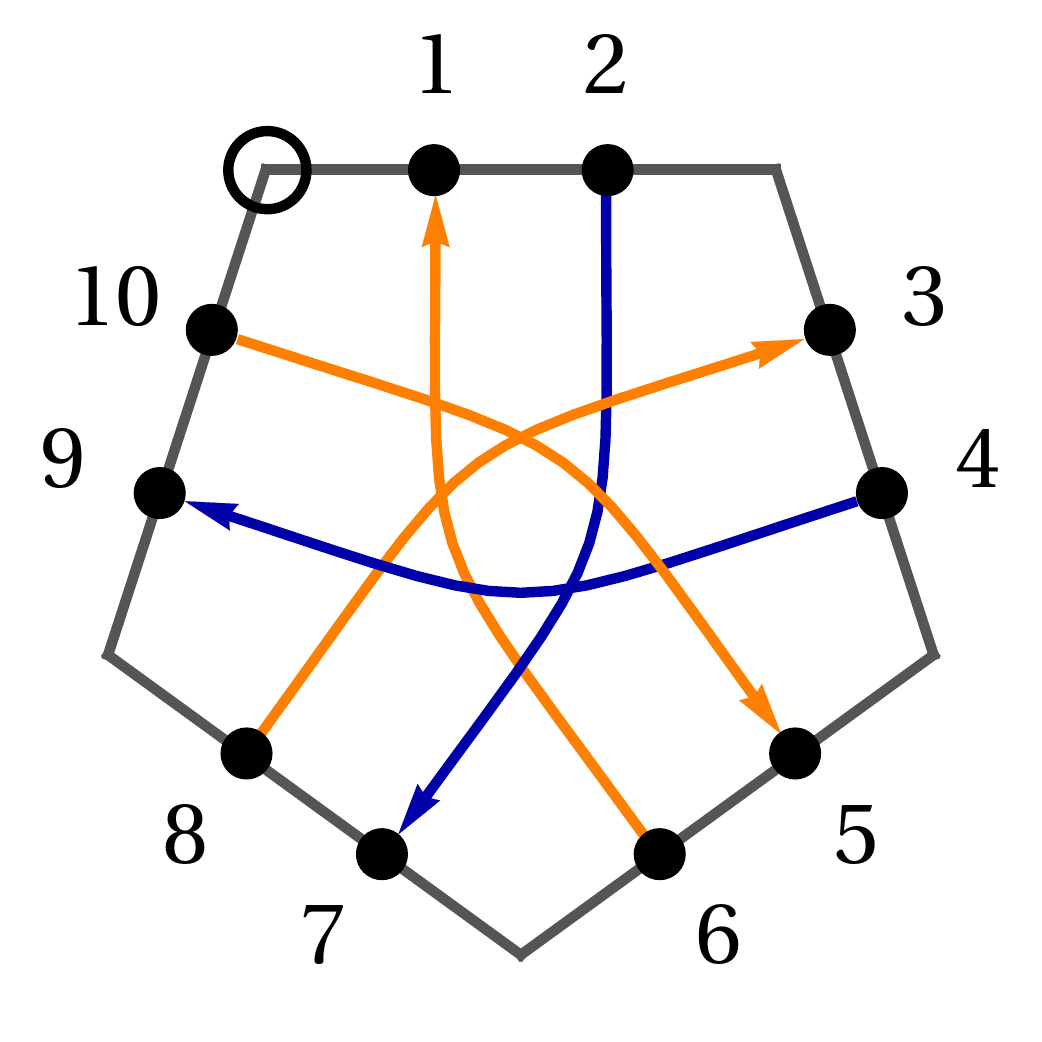}
\end{gathered} 
\;\mapsto \;
&\begin{gathered}
\includegraphics[height=0.1\textheight]{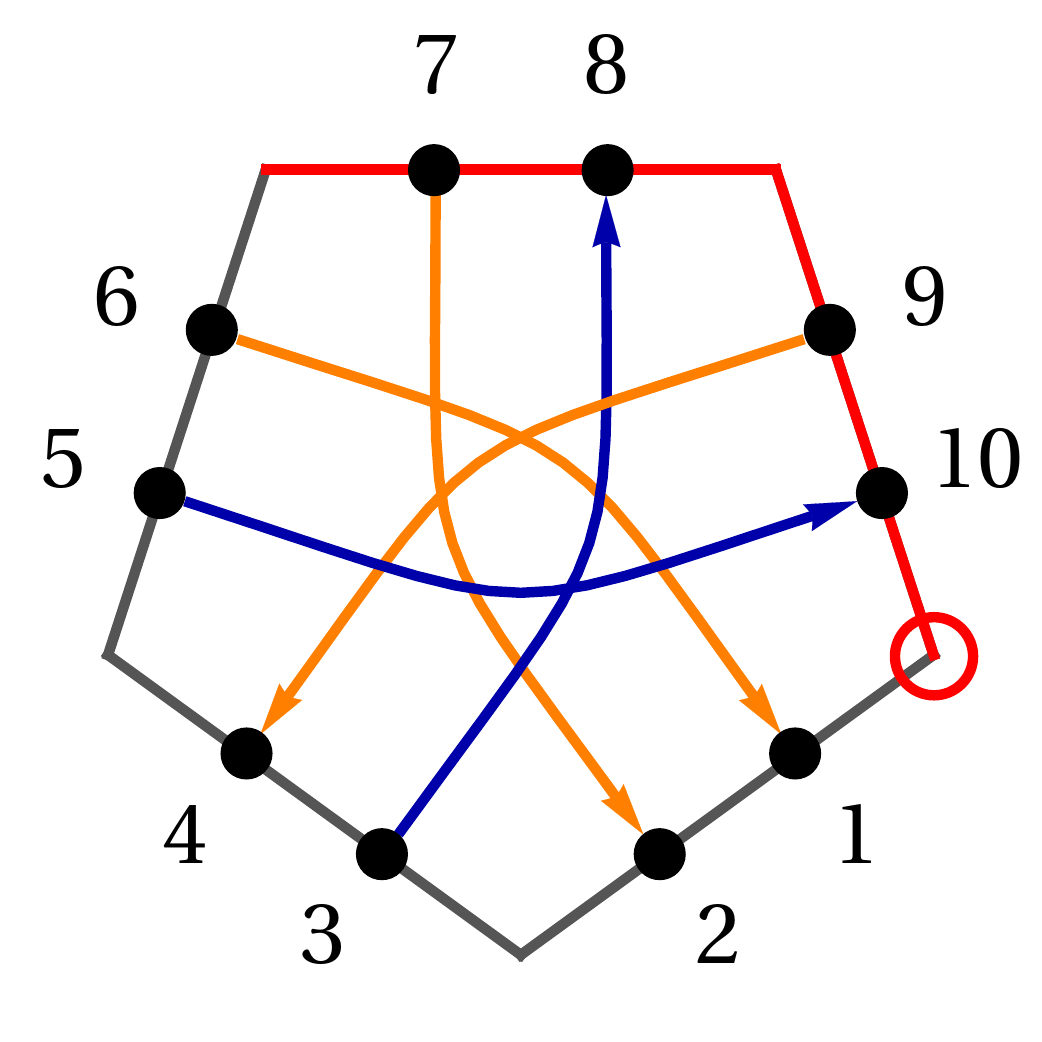}
\end{gathered} \nonumber\\
\;=\; 
&\begin{gathered}
\includegraphics[height=0.1\textheight]{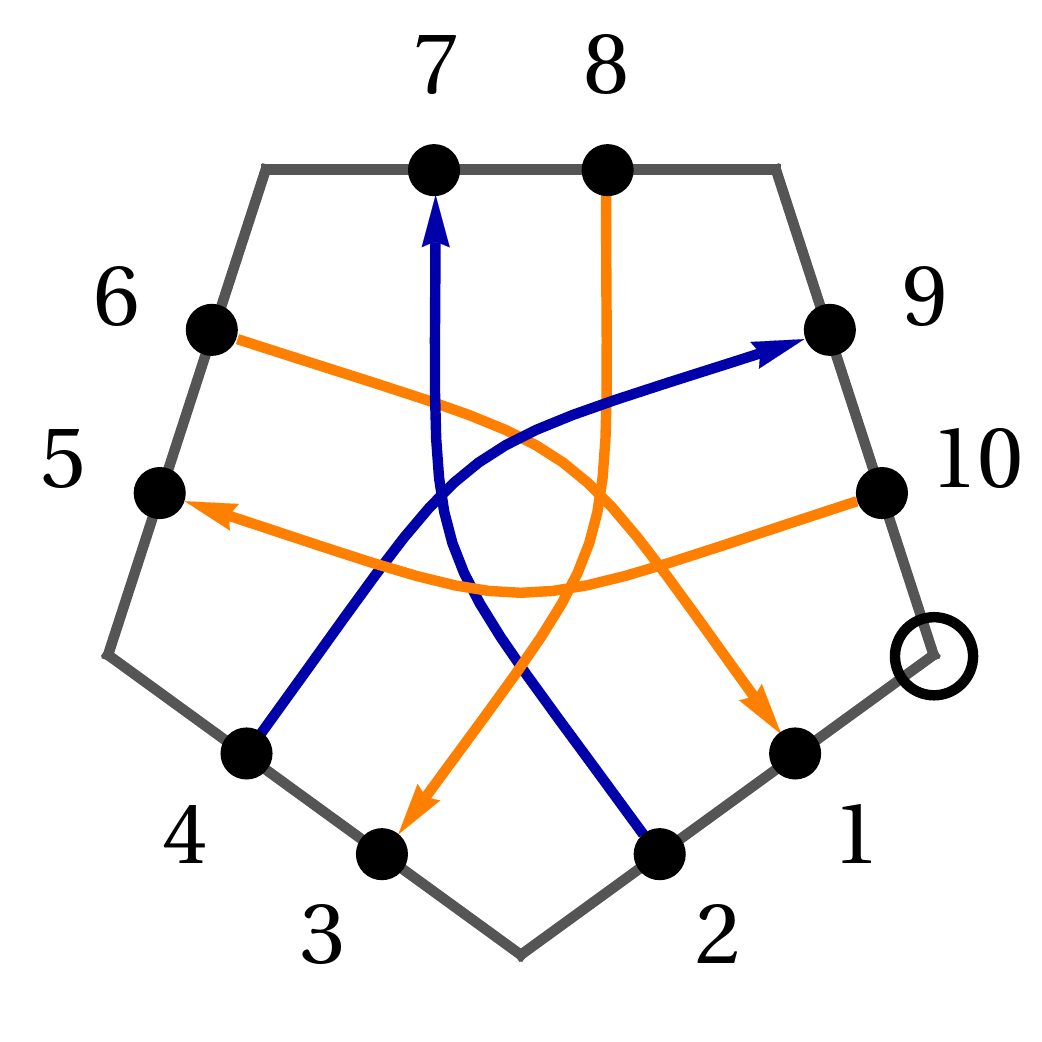}
\end{gathered}
\end{align}
We see that a cyclic permutation of these states is equivalent to a rotation of its dimer parities, which is simply a rotation of the corresponding diagram. This is because the tensors $T$ corresponding to these states are invariant under cyclic permutations of indices, i.e., $T_{i,j,k,l,m}=T_{m,i,j,k,l}$. The explicit construction of these states can be easily performed using \emph{matchgate tensors} \cite{Jahn:2017tls}.

\subsection{Computing entanglement}

The entanglement entropy $S_A{=}-\tr_A (\rho_A \log \rho_A)$ of a subsystem $A$ and its corresponding reduced density matrix $\rho_A{=}\tr_{A^\text{C}}(\rho_\text{tot})$ can be evaluated diagrammatically. Given the $2M \times 2M$ Majorana covariance matrix $\Gamma^A$ of the subspace belonging to $A$ (i.e.,\ the rows and columns of the full $2L \times 2L$ covariance matrix $\Gamma$ whose Majorana modes are contained in $A$), we can perform 
a special orthogonal transformation $\Gamma^A = Q \tilde{\Gamma}^A Q^T $ to the form
\begin{equation}
\tilde{\Gamma}^A = \bigoplus_{i=1}^M \left(\begin{matrix}
  0 & \lambda_i \\
  -\lambda_i & 0
\end{matrix}\right) \text{ ,}
\end{equation}
%where the $\lambda_i$ are the \textsl{symplectic eigenvalues} of $\Gamma^A$, 
where $\pm i \lambda_k$ are the eigenvalues of $\Gamma^A$,
%where the $\lambda_i$ are the \textsl{symplectic eigenvalues}
%\todo{Jens, for fermions this is not symplectic, right?}
%Alex: Should be symplectic, see arxiv.org/abs/quant-ph/0307073
%of $\Gamma^A$, 
some of which may be zero. From this form the entanglement entropy follows as
\begin{equation}
\label{EQ_EE_SYMPL}
S_A = \sum_{i=1}^M \left( - \frac{1+\lambda_i}{2} \log \frac{1+\lambda_i}{2} - \frac{1-\lambda_i}{2} \log \frac{1-\lambda_i}{2} \right) \text{ .}
\end{equation}
As we have found in Section \ref{SEC_MAJ_DIM}, the covariance matrix entries $\Gamma_{j,k}$ of Majorana dimer states can only be $\pm 1$ or zero. Consider the $j$th row (or column) of the sub-matrix $\Gamma^A$: The dimer connected to Majorana mode $j$ ends on another mode $k$ (with $1 \leq k \leq 2L$.) If $j,k \in A$, the $j$ and $k$th row will jointly contribute to a $\lambda_i$ of $\pm 1$, i.e.,\ zero entanglement entropy. However, if $j \in A, k \not\in A$, the $j$th row of $\Gamma^A$ will be zero. As the number of such ``dimer leaks'' must be even, each pair contributes to a vanishing $\lambda_i$. Thus each dimer connecting  $A$ with its complement $A^{\text{C}}$ contributes an entanglement entropy of $\frac{1}{2}\log 2$, i.e.,\ half of an EPR pair. Graphically, the entanglement entropy reduces to counting such dimers
\begin{equation}
\label{EQ_DIMER_EE}
S_A = \text{(\# dimers between $A$ and $A^{\text{C}}$)} \times \frac{\log 2}{2} \ .
\end{equation}
Consider the following example state.
\begin{equation}
\label{EQ_DIMER_EE_EXAMPLE}
\begin{gathered}
\includegraphics[height=0.15\textheight]{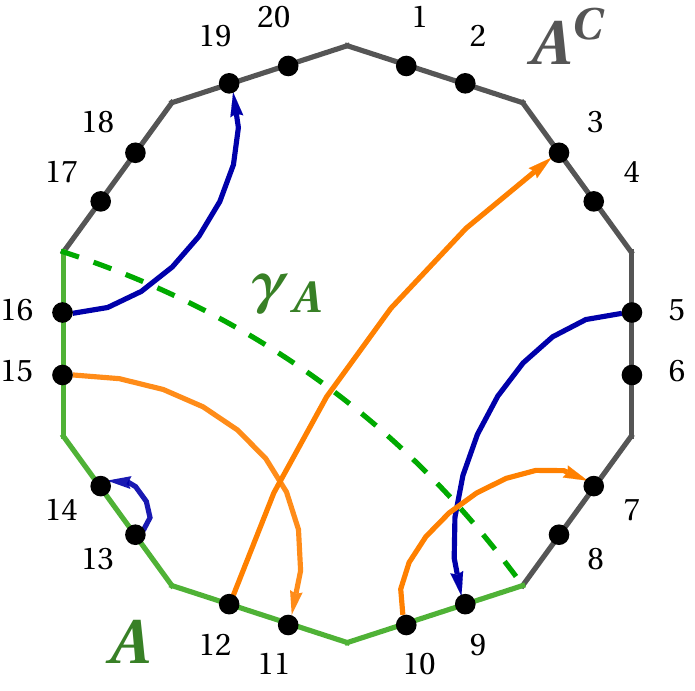}
\end{gathered}
\end{equation}
The subsystem $A$ comprises four edges with the Majorana modes $9$ to $16$. As four dimers connect $A$ with $A^{\text{C}}$, the entanglement entropy is given by $S_A = 2 \log 2$. Effectively, $S_A$ counts the number of dimers across the cut $\gamma_A$ separating $A$ from $A^{\text{C}}$ (shown as a dashed line). For contracted states, $S_A \leq |\gamma_A| \log 2$, where $|\gamma_A|$ is the length of the shortest cut through the contracted network. Thus, we recover the tensor network interpretation of the Ryu-Takayanagi surface $\gamma_A$, which appears in continuum AdS/CFT in the holographic entanglement entropy formula \cite{PhysRevLett.96.181602}
\begin{equation}
\label{EQ_RT_FORMULA}
S_A = \frac{|\gamma_A|}{4\, G_N} \ ,
\end{equation}
which expresses $S_A$ in terms of the area of the minimal surface $\gamma_A$, denoted $|\gamma_A|$, and Newton's constant $G_N$. In our two-dimensional bulk space, $\gamma_A$ is simply a geodesic and $|\gamma_A|$ its length. We will see later how the discrete analog of \eqref{EQ_RT_FORMULA}, where 
$1/(4G_N) \to \log 2$, is saturated in the HyPeC.
%As we will see later, the physical consequences of both approaches in the HaPPY model are equivalent.

The definition of entanglement entropy $S_A$ can be ambiguous under a mapping from spins to fermions, as fermionic operators on different sites do not commute. As long as we consider connected subsystems $A$, such a mapping \eqref{EQ_SPIN_FERMION_BASES} yields the same $S_A$, as both are related only by cyclical permutation of fermionic sites, which only leads to a sign flip along the permuted rows and columns of the covariance matrix $\Gamma$.
For a region $A$ composed of disconnected parts, $S_A$ is generally not preserved by a mapping from spins to fermions. If, as in the HyPeC model, we want to compute the spin entanglement entropy in the effective fermionic picture, we first need to permute the spin degrees of freedom so that $A$ becomes connected, and then apply the mapping to fermions. However, such a spin transposition usually breaks the Majorana dimer structure, as it leads to fermionic states that are not ground states of Hamiltonians quadratic in Majorana operators.
Thus \eqref{EQ_DIMER_EE} describes fermionic entanglement that remains valid in the spin picture only for connected regions $A$.

In the fermionic picture, we can easily generalize \eqref{EQ_DIMER_EE} to disjoint subsystems, such as the \textsl{mutual information}
\begin{align}
\label{EQ_DIMER_MI}
I(A:B) &= S_A + S_B - S_{AB} \nonumber\\
&= \text{(\# dimers between $A$ and $B$)} \times \log 2 \text{ .}
\end{align}
Compared to \eqref{EQ_DIMER_EE}, each dimer in \eqref{EQ_DIMER_MI} is counted twice. 
In terms of the geometry of the dimer graph itself, \eqref{EQ_DIMER_MI} corresponds to a system with an \emph{exact area law}. \cite{Bao:2015boa}
One of the properties of this form of the mutual information is an always vanishing \textsl{tripartite information} \cite{Casini:2008wt}
\begin{align}
I_3(A:B:C) &= I(A:B) + I(A:C) - I(A:BC) \nonumber\\
 &= 0 \text{ .}
\end{align}
This implies that Majorana dimer models are compatible with holographic theories, where $I_3 \leq 0$ \cite{Hayden:2011ag}. Furthermore, as we show in Appendix \ref{APP_EE_RULES}, the spectrum of R\'enyi entropies 
\begin{align}
S_A^{(n)}=\frac{1}{1-n} \log \tr(\rho_A^n )
\end{align}
is flat, a property of the underlying stabilizer state structure \cite{PhysRevLett.103.261601}. 
We show in Appendix \ref{APP_EE_RULES} that this property also follows from the Majorana dimer picture for arbitrary local superpositions of bulk input in the HyPeC under certain constraints on the (compact) boundary region considered.

To clarify the connection between Majorana dimers and EPR pairs, we can explicitly construct Bell states from pairs of dimers. Consider the following two even-parity dimers connecting edges $j$ and $k$ (with $j<k$) without crossing:
\begin{equation}
\label{EQ_EPR1}
\begin{gathered}
\includegraphics[height=0.08\textheight]{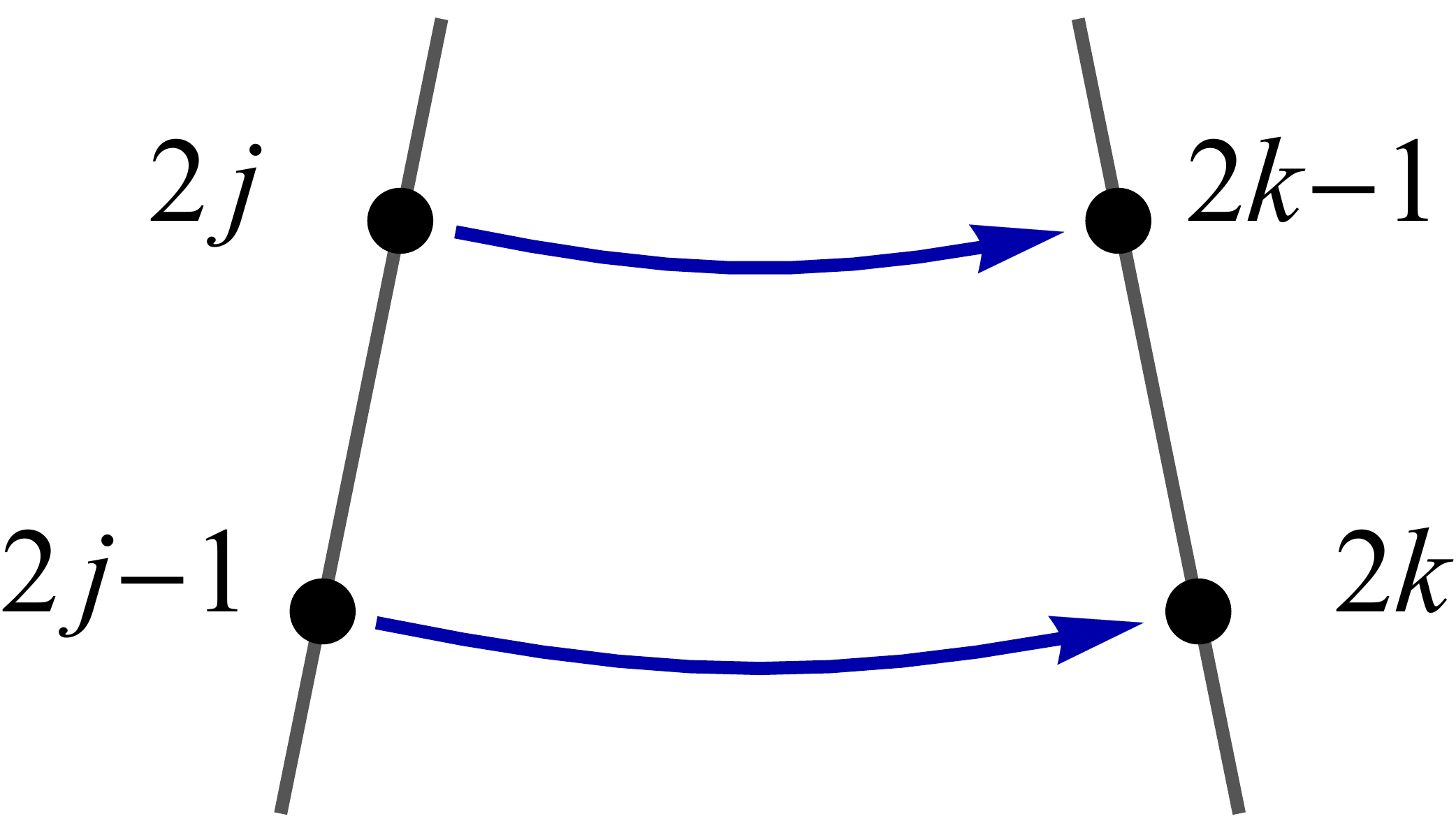}
\end{gathered}
\end{equation}
This corresponds to two conditions on the total state vector $\ket\psi$,
\begin{equation}
\label{EQ_EPR_COND}
\begin{aligned}
(\m_{2j-1} + \i\, \m_{2k}) \ket\psi &= \phantom{\i\,} (\fd_j + \fe_j - \fd_k + \fe_k) \ket\psi = 0 \text{ ,}\\
(\m_{2j} + \i\, \m_{2k-1}) \ket\psi &= \i\,(\fd_j - \fe_j + \fd_k + \fe_k) \ket\psi = 0 \text{ .}
\end{aligned}
\end{equation}
As no entanglement between edges $j$ and $k$ and the rest of the system exists, $\ket\psi$ should be factorizable with regards to these degrees of freedom:
\begin{equation}
\ket\psi \propto (a + b \fd_j + c \fd_k + d \fd_j \fd_k) (\dots) \vacket \text{ ,}
\end{equation}
where $(\dots)$ includes terms containing creation operators $\fd_i$ with $i\neq j,k$. Up to a complex phase, the parameters $a,b,c,d \in \mathbb{C}$ can be fixed using \eqref{EQ_EPR_COND}, which leads to $b=c=0$ and $a=d=1/\sqrt{2}$ (assuming normalization). This corresponds to a Bell state vector $\ket{\Phi^+} = (\ket{0} \ket{0} + \ket{1} \ket{1})/\sqrt{2}$ on sites $j$ and $k$. This analysis can be repeated for all possible dimer configurations, yielding Table \ref{TAB_BELL_STATES}. Conveniently, this allows us to form superpositions of dimers, for example
\begin{equation}
\label{EQ_EPR2}
\begin{gathered}
\includegraphics[height=0.05\textheight]{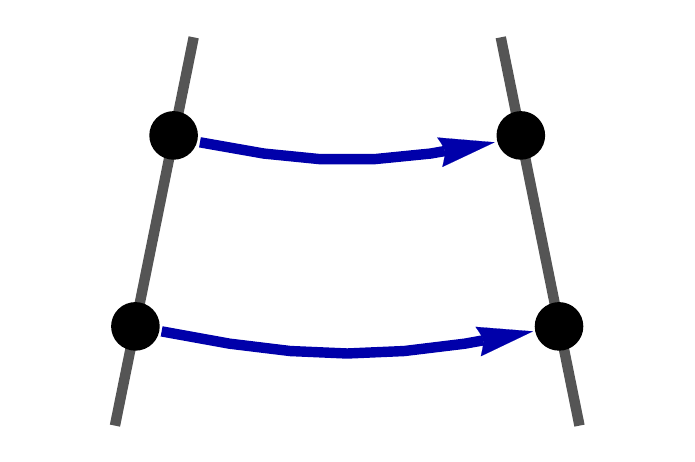}
\end{gathered}
\;=\;
\frac{1}{\sqrt{2}}
\begin{gathered}
\includegraphics[height=0.05\textheight]{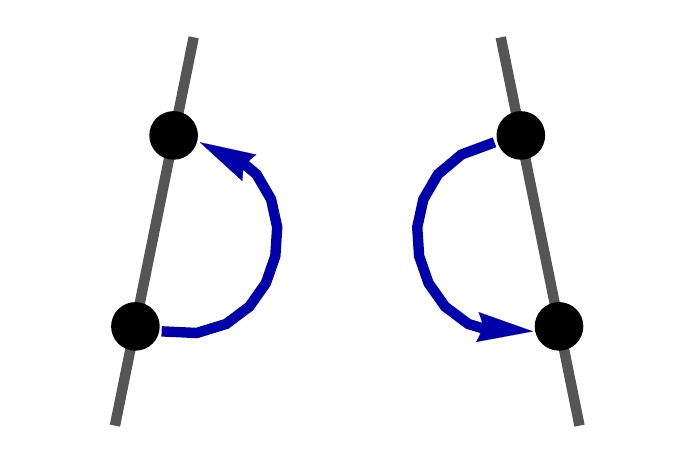}
\end{gathered}
\;+\;
\frac{1}{\sqrt{2}}
\begin{gathered}
\includegraphics[height=0.05\textheight]{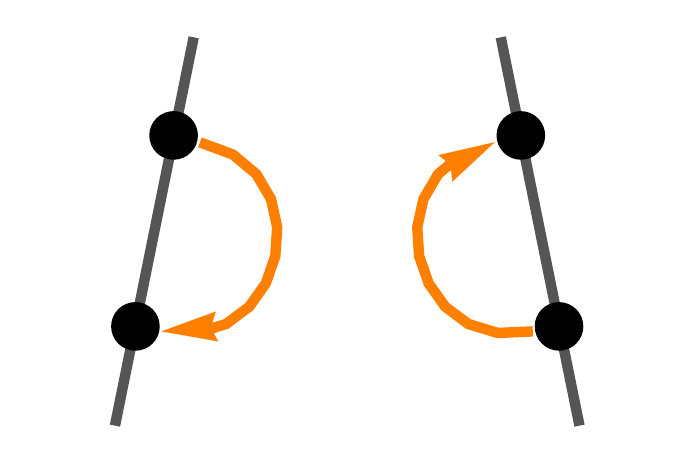}
\end{gathered} 
\end{equation}
Each diagram in this expression corresponds to a normalized Majorana dimer state. Note that this diagram confirms our intuition that a contraction, which is the sum of projections onto $\ket{0}$ and $\ket{1}$, is equivalent to connecting pairs of Majoranas via dimers. 
In a mild abuse of notation, we may thus write
\begin{equation}
\label{EQ_EPR3}
\begin{gathered}
\includegraphics[height=0.05\textheight]{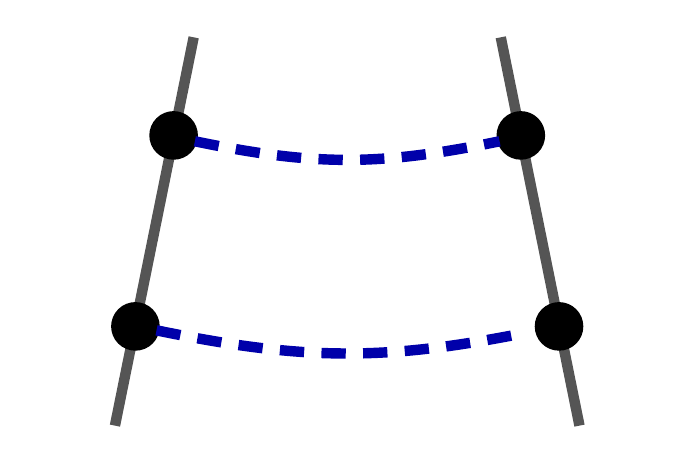}
\end{gathered}
\;=\;
\sqrt{2}
\begin{gathered}
\includegraphics[height=0.05\textheight]{epr_pair1.pdf}
\end{gathered}
\end{equation}
to express a contraction (dashed lines).
This also allows us to fix relative factors that appear through contraction, such as in the following projection of \eqref{EQ_EPR2} onto a $\ket{0}$ state vector:
\begin{align}
\label{EQ_EPR2EX}
\begin{gathered}
\includegraphics[height=0.05\textheight]{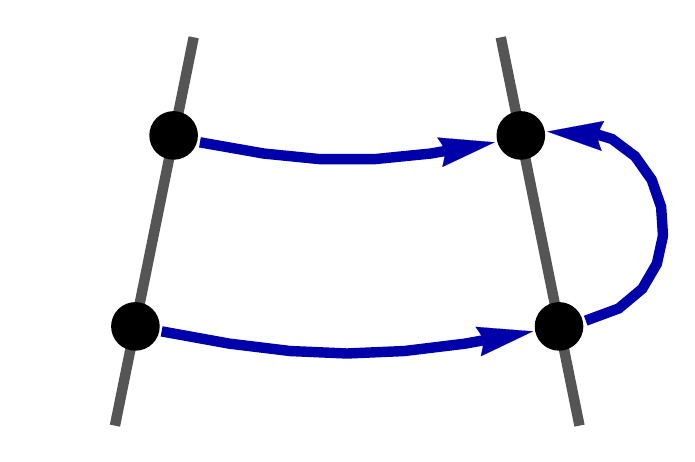}
\end{gathered}
\;&=\;
\frac{1}{\sqrt{2}}
\begin{gathered}
\includegraphics[height=0.05\textheight]{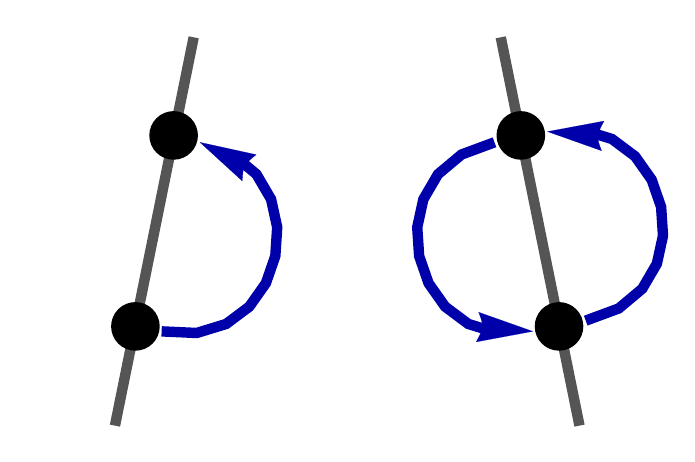}
\end{gathered}
\;+\;
\frac{1}{\sqrt{2}}
\begin{gathered}
\includegraphics[height=0.05\textheight]{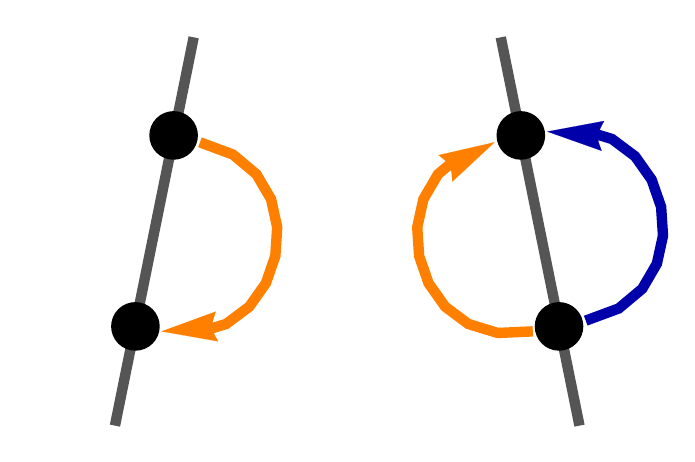}
\end{gathered} \nonumber\\
&=\;\frac{1}{\sqrt{2}}
\begin{gathered}
\includegraphics[height=0.05\textheight]{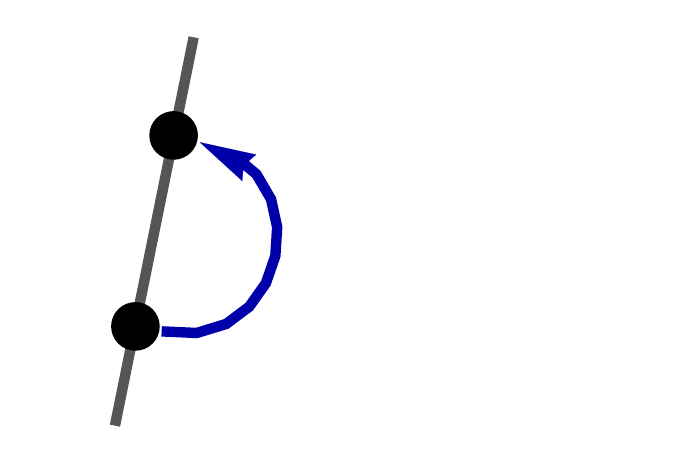}
\end{gathered}
\end{align}
The second term vanishes from the condition $\braket{0}{1}=0$, in agreement with the rule that loops of total odd parity vanish (compare Eq.\ \eqref{EQ_CONTR_EX2}). Note that the arrow orientation of the dimer for $\ket{0}$ is reversed, as it is used in its adjoint form $\bra{0}$ (more on Hermitian conjugates in the next section).
Projections like \eqref{EQ_EPR2EX} can be evaluated for each of the entries in Table \ref{TAB_BELL_STATES}, always leading to a resulting factor of $1/\sqrt{2}$. This result is heavily used in Appendix \ref{APP_EE_RULES}, where we study the entanglement properties of superpositions of HyPeC code states, where norms of Majorana dimer states become relevant.

\begin{table}
\begin{tabular}{c | c || c | c}
Majorana dimer & Bell state & Majorana dimer & Bell state \\
\hline
$\begin{gathered}
\includegraphics[height=0.04\textheight]{epr_pair1.pdf}
\end{gathered}$
 & $\frac{\ket{0} \ket{0} + \ket{1} \ket{1}}{\sqrt{2}}$
&$\begin{gathered}
\includegraphics[height=0.04\textheight]{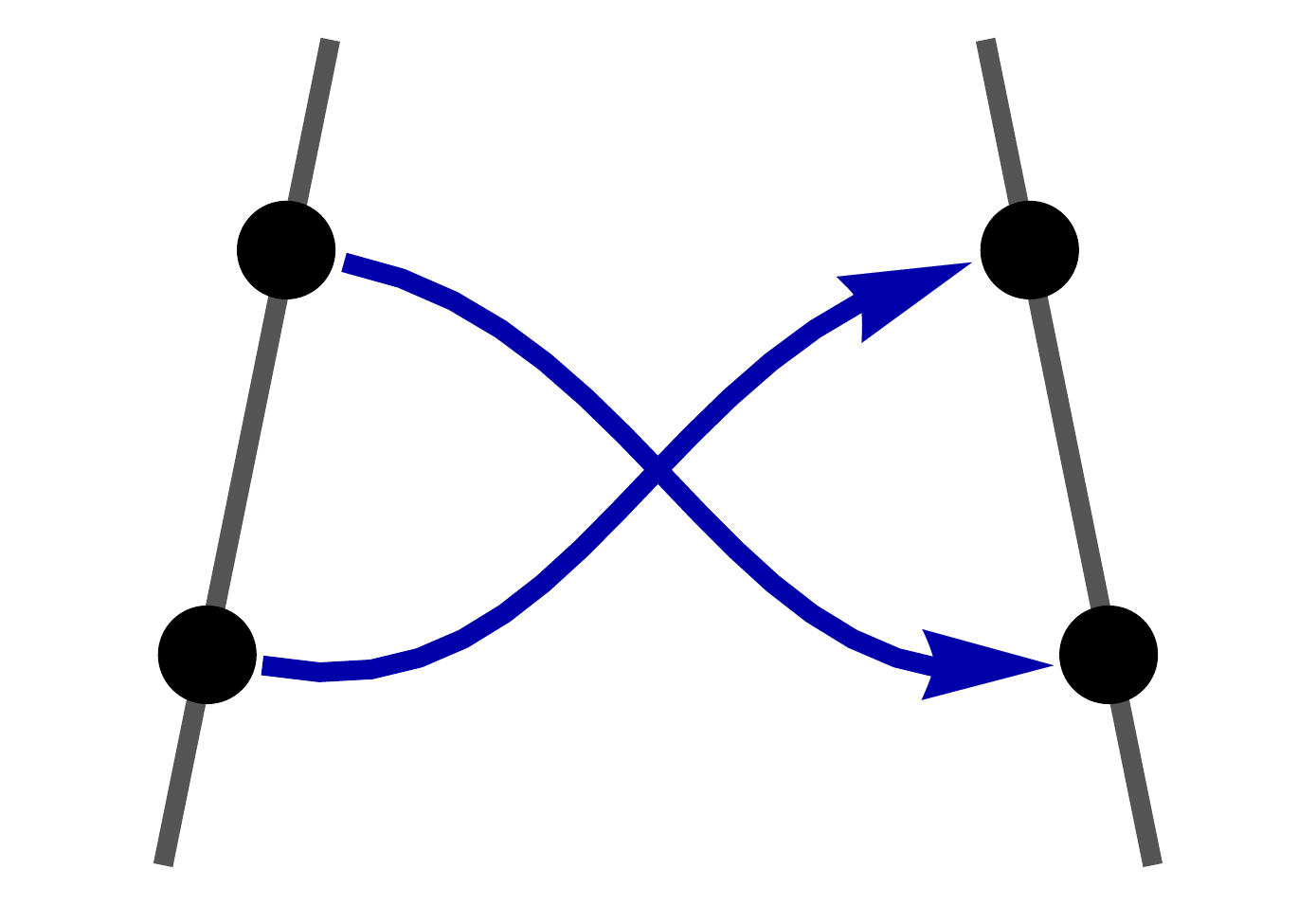}
\end{gathered}$
 & $\frac{\ket{0} \ket{1} + \i \ket{1} \ket{0}}{\sqrt{2}}$\\
$\begin{gathered}
\includegraphics[height=0.04\textheight]{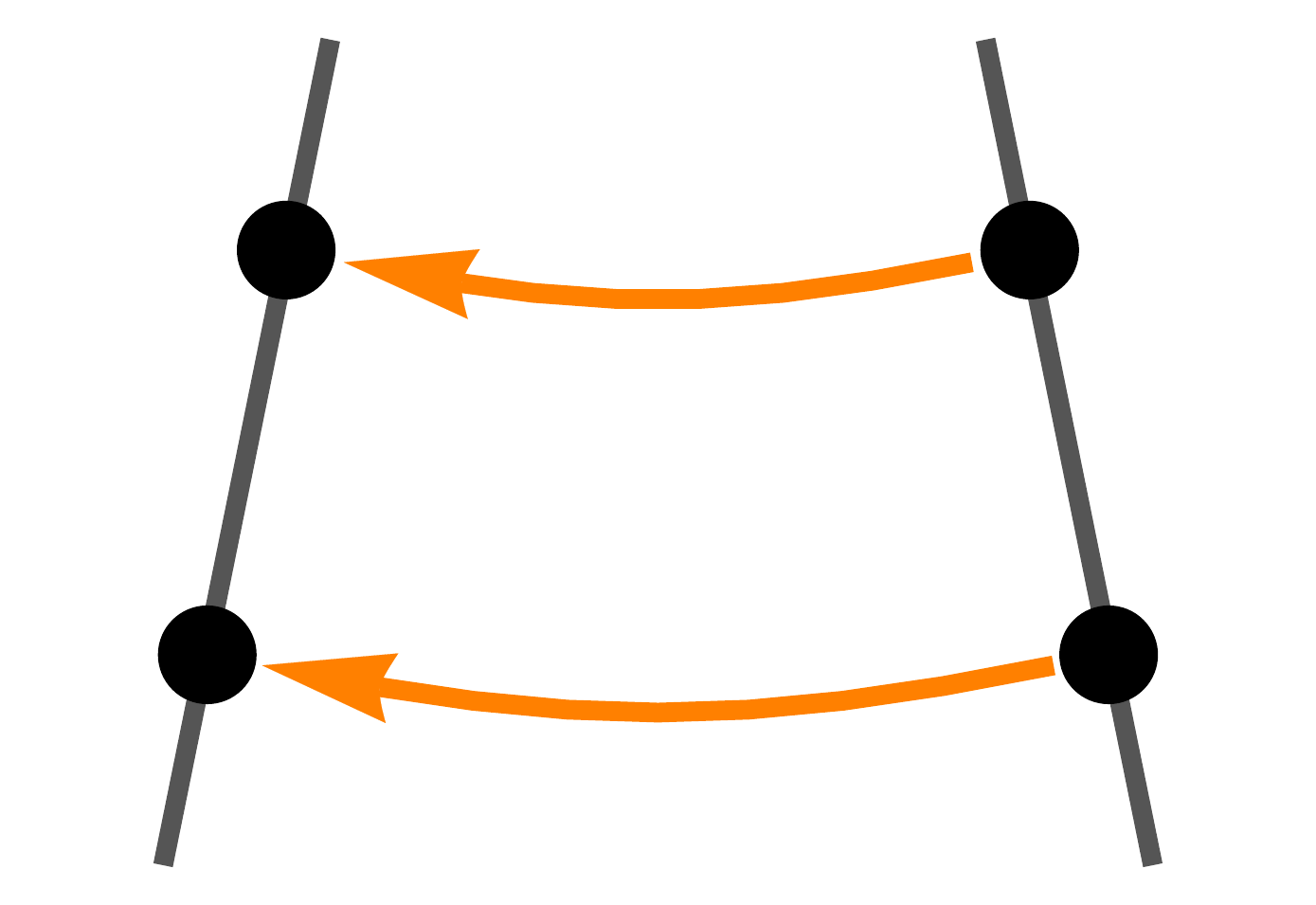}
\end{gathered}$
 & $\frac{\ket{0} \ket{0} - \ket{1} \ket{1}}{\sqrt{2}}$
&$\begin{gathered}
\includegraphics[height=0.04\textheight]{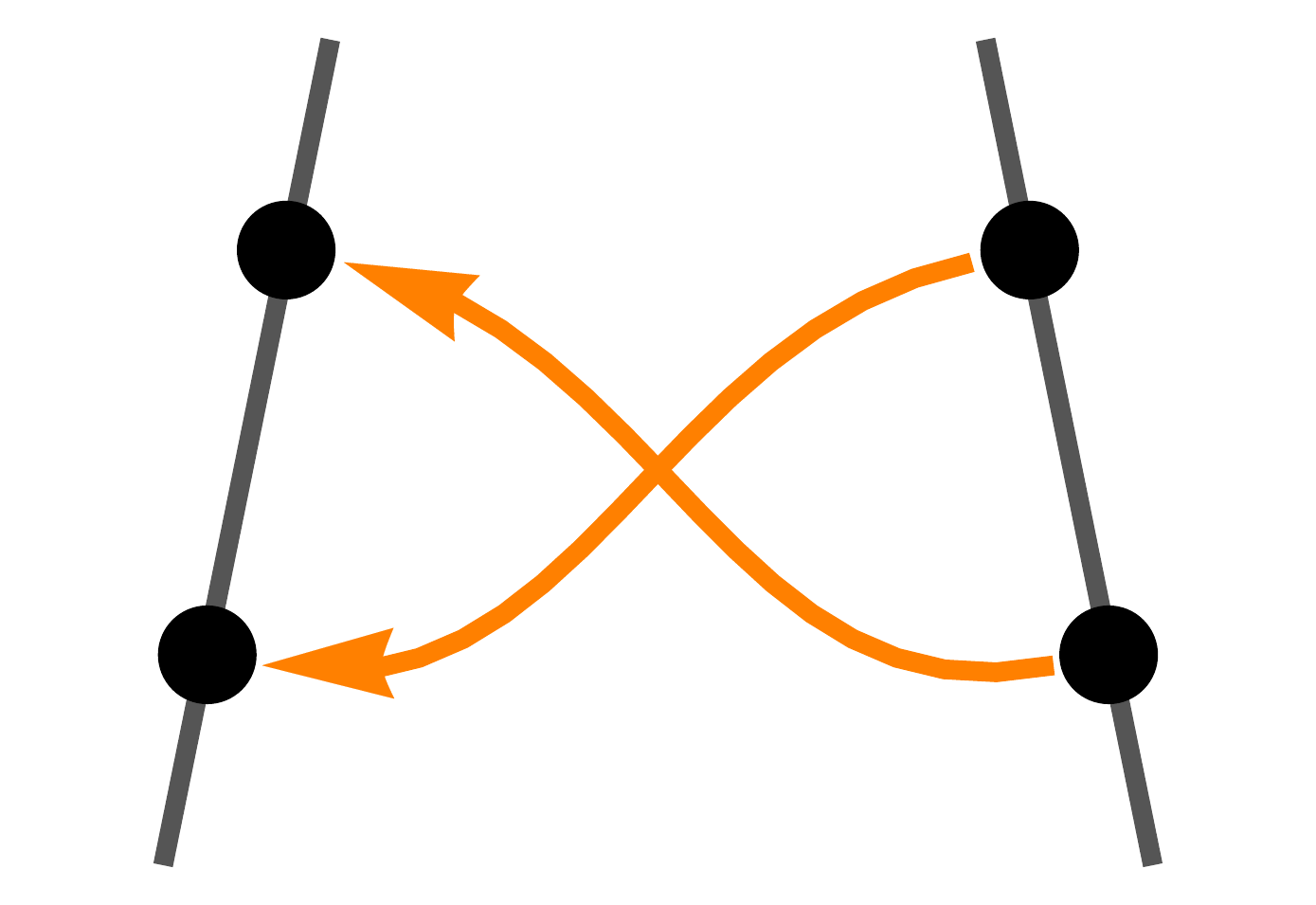}
\end{gathered}$
 & $\frac{\ket{0} \ket{1} - \i \ket{1} \ket{0}}{\sqrt{2}}$ \\
$\begin{gathered}
\includegraphics[height=0.04\textheight]{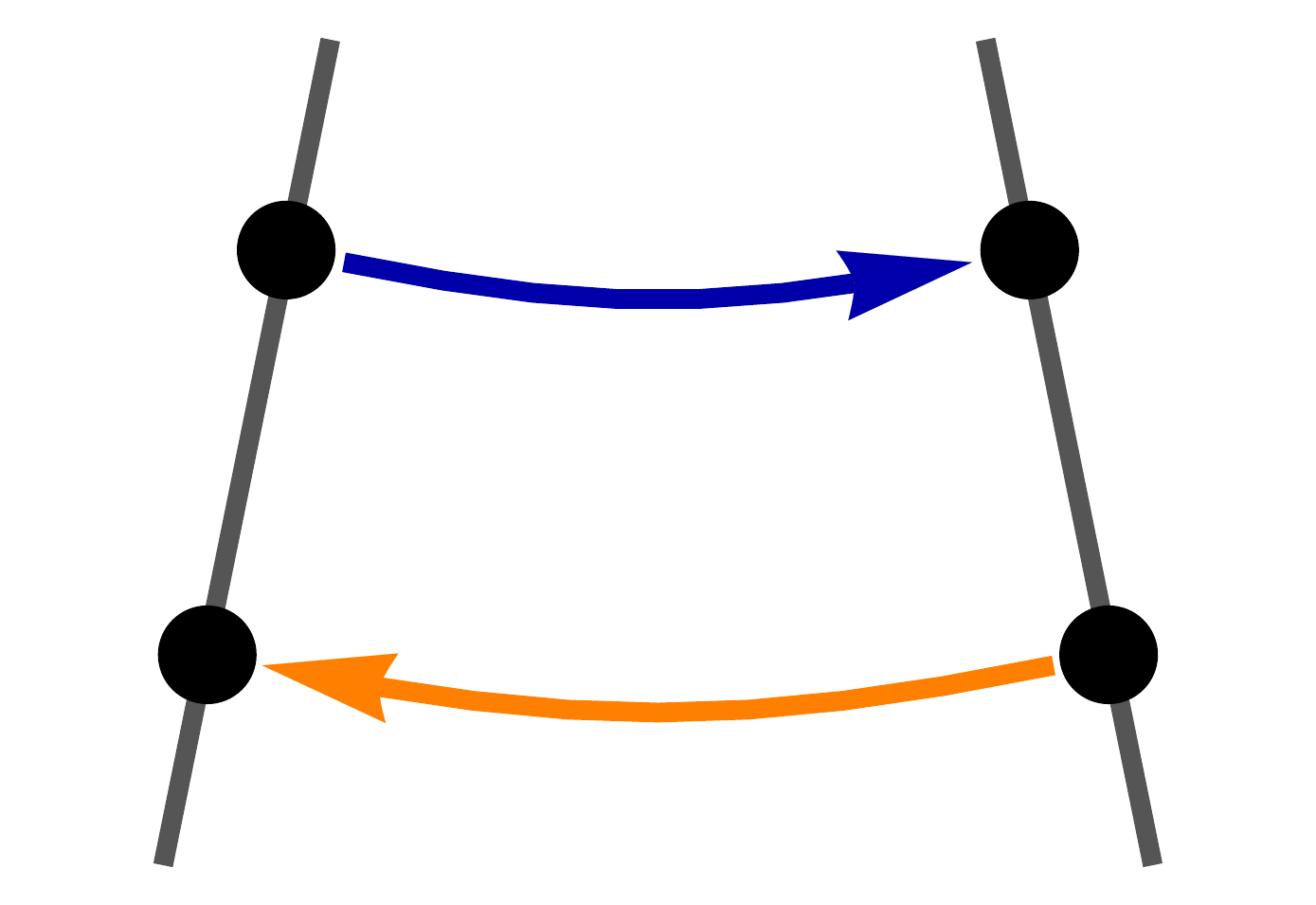}
\end{gathered}$
 & $\frac{\ket{0} \ket{1} + \ket{1} \ket{0}}{\sqrt{2}}$
&$\begin{gathered}
\includegraphics[height=0.04\textheight]{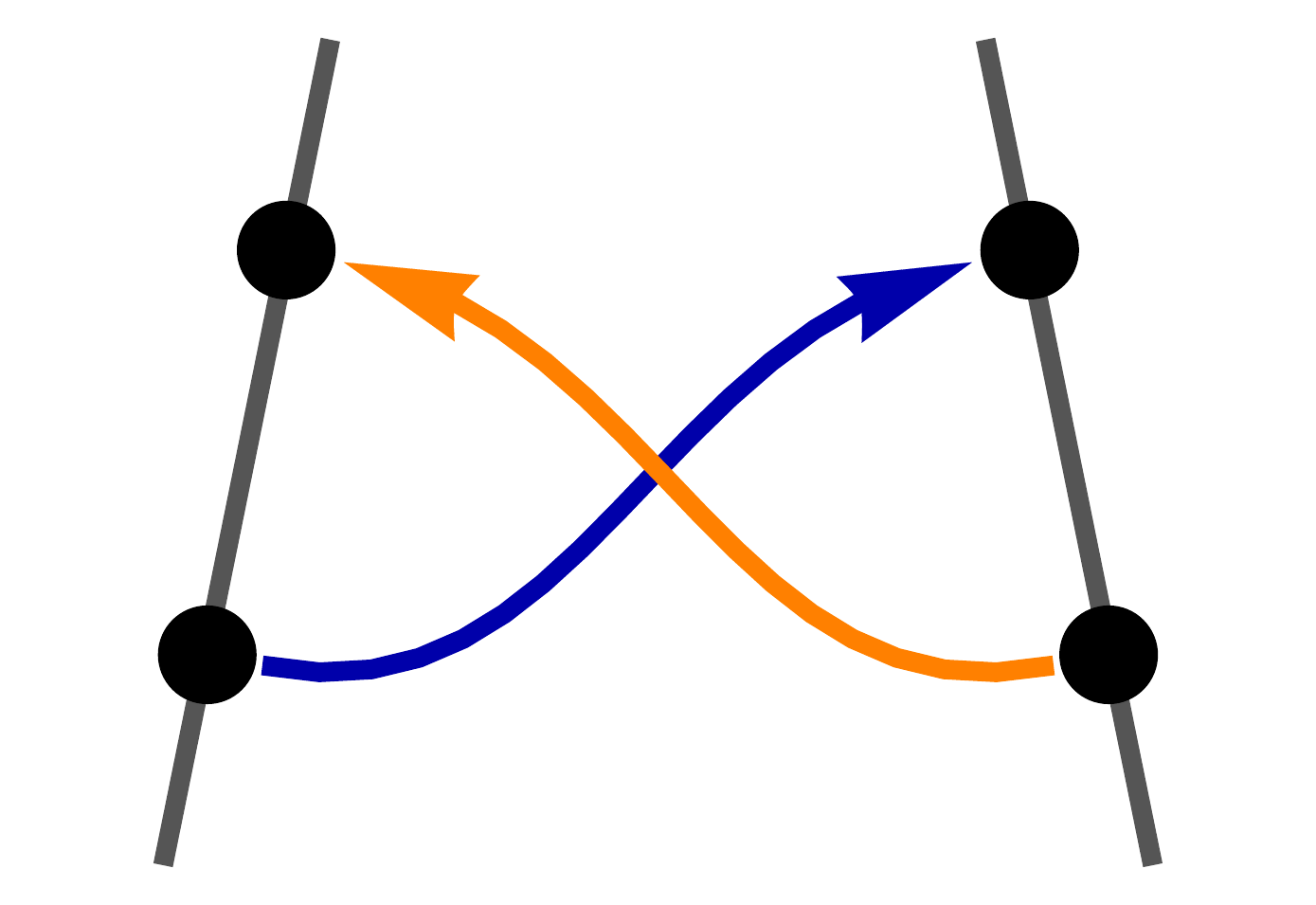}
\end{gathered}$
 & $\frac{\ket{0} \ket{0} + \i \ket{1} \ket{1}}{\sqrt{2}}$ \\
$\begin{gathered}
\includegraphics[height=0.04\textheight]{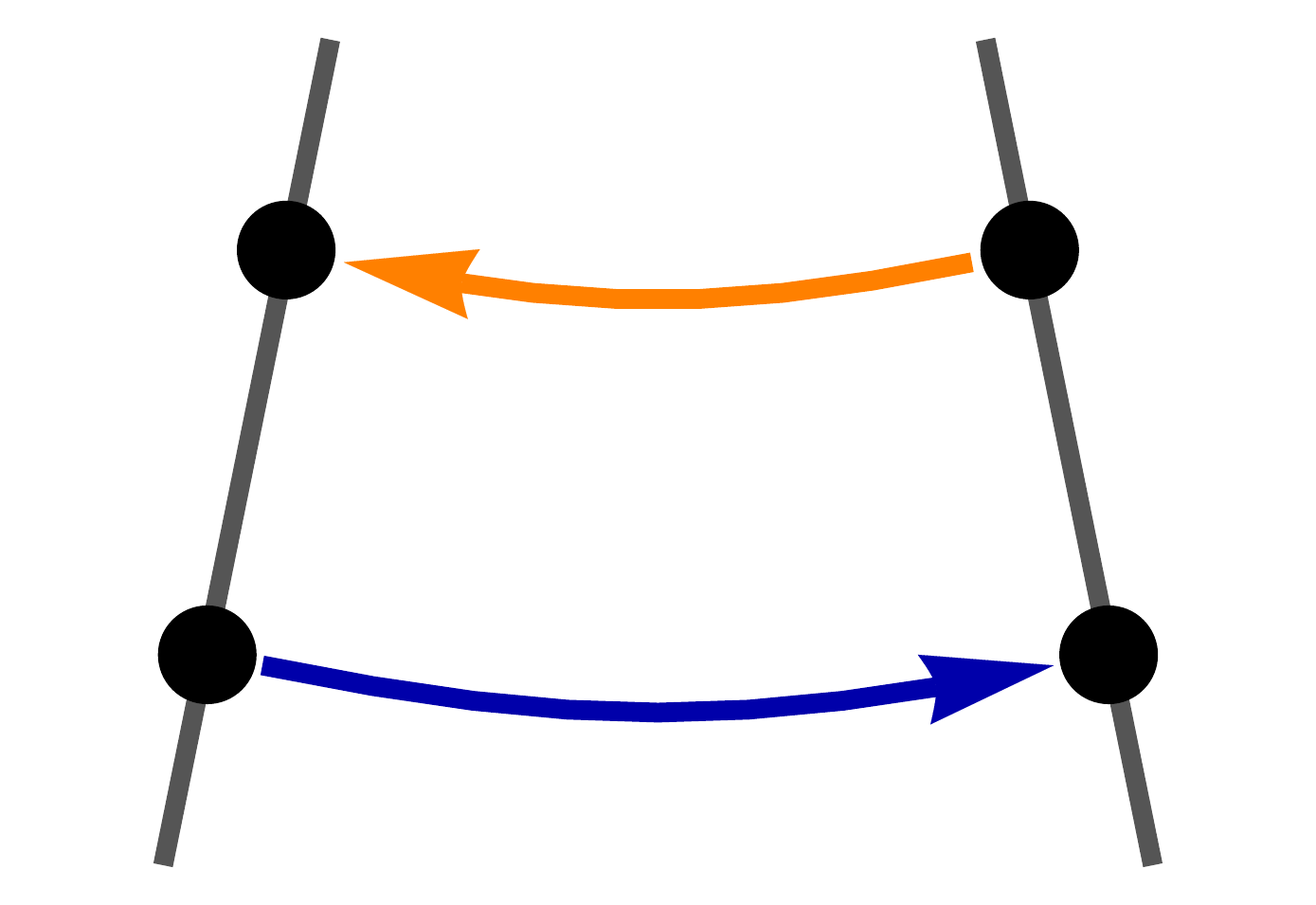}
\end{gathered}$
 & $\frac{\ket{0} \ket{1} - \ket{1} \ket{0}}{\sqrt{2}}$
&$\begin{gathered}
\includegraphics[height=0.04\textheight]{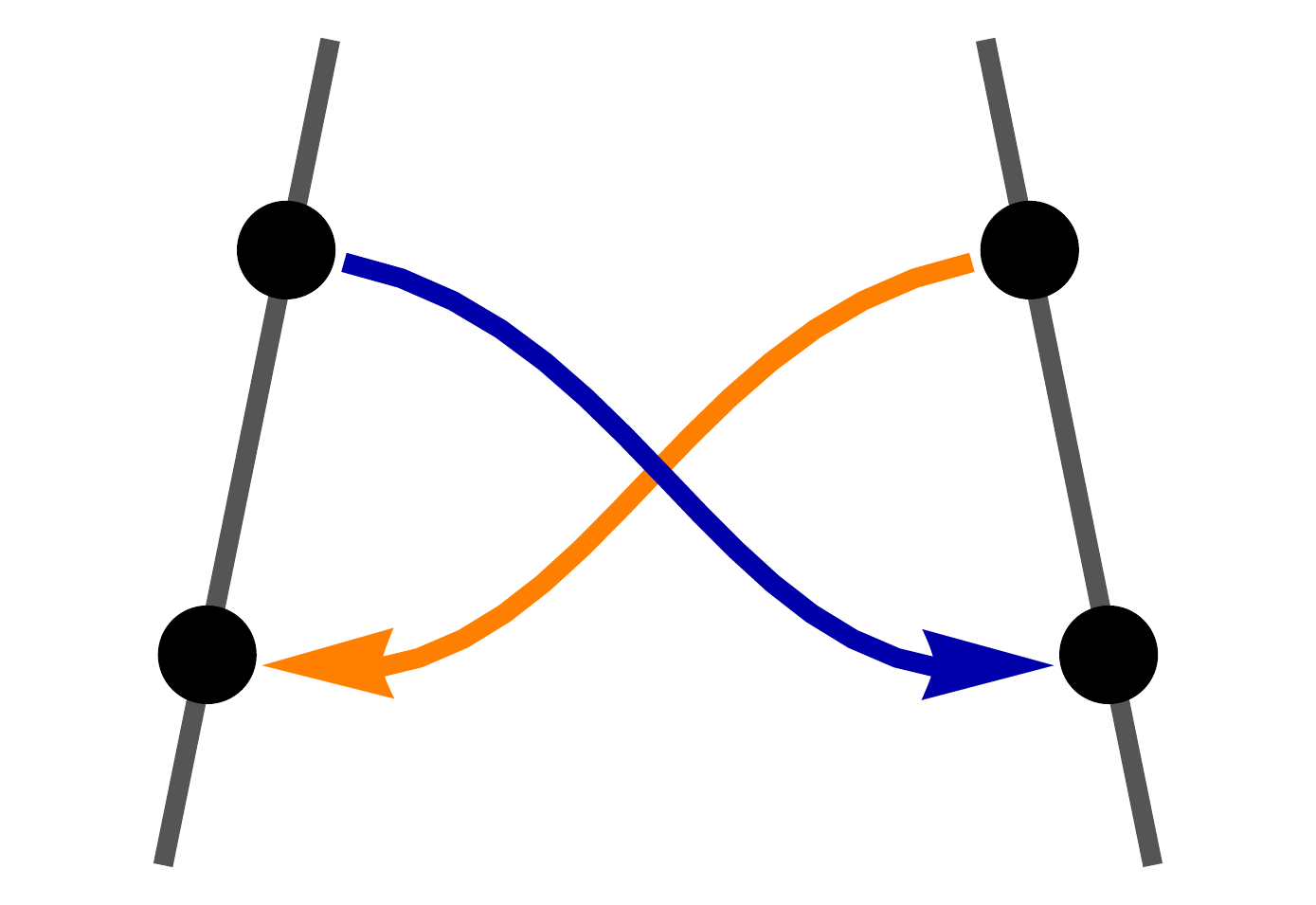}
\end{gathered}$
 & $\frac{\ket{0} \ket{0} - \i \ket{1} \ket{1}}{\sqrt{2}}$ \\
\end{tabular}
\caption{Bell states expressed as Majorana dimers.}
\label{TAB_BELL_STATES}
\end{table}

\subsection{Orthogonality and completeness}

Our diagrammatic notation can also express inner products. Consider the bra $\bra\psi$ corresponding to a ket $\ket\psi$. Clearly, if $(\m_j + \i\, \m_k) \ket\psi = 0$ then $\bra\psi (\m_j - \i\, \m_k) = 0$ holds for the adjoint. Thus, we can visualize adjoints by inverting all arrows and corresponding parities, for example (omitting labels):
\begin{align}
\label{EQ_BRA_AND_KET}
\ket\psi \;&=\;
\begin{gathered}
\includegraphics[height=0.09\textheight]{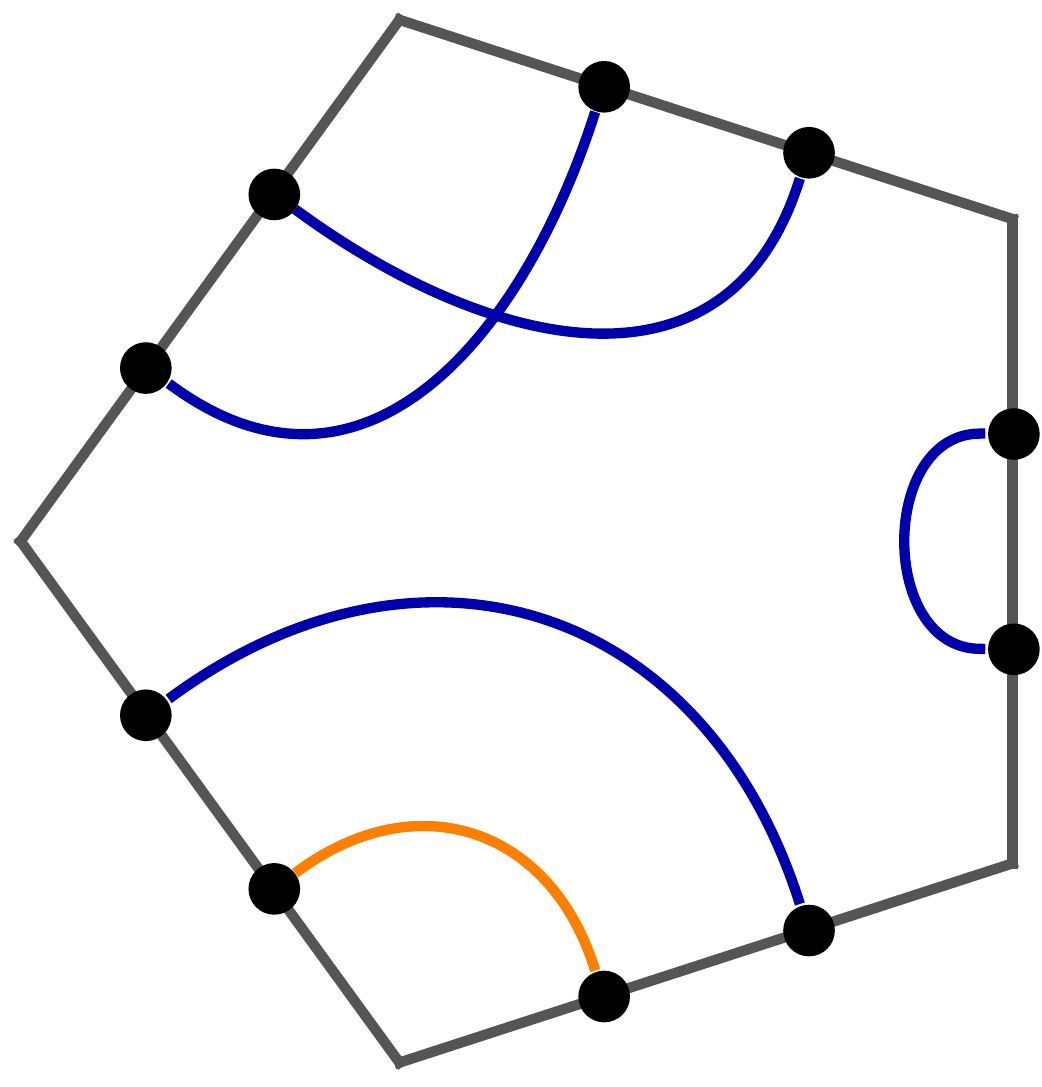}
\end{gathered}&
\bra\psi\;&=\;
\begin{gathered}
\includegraphics[height=0.09\textheight]{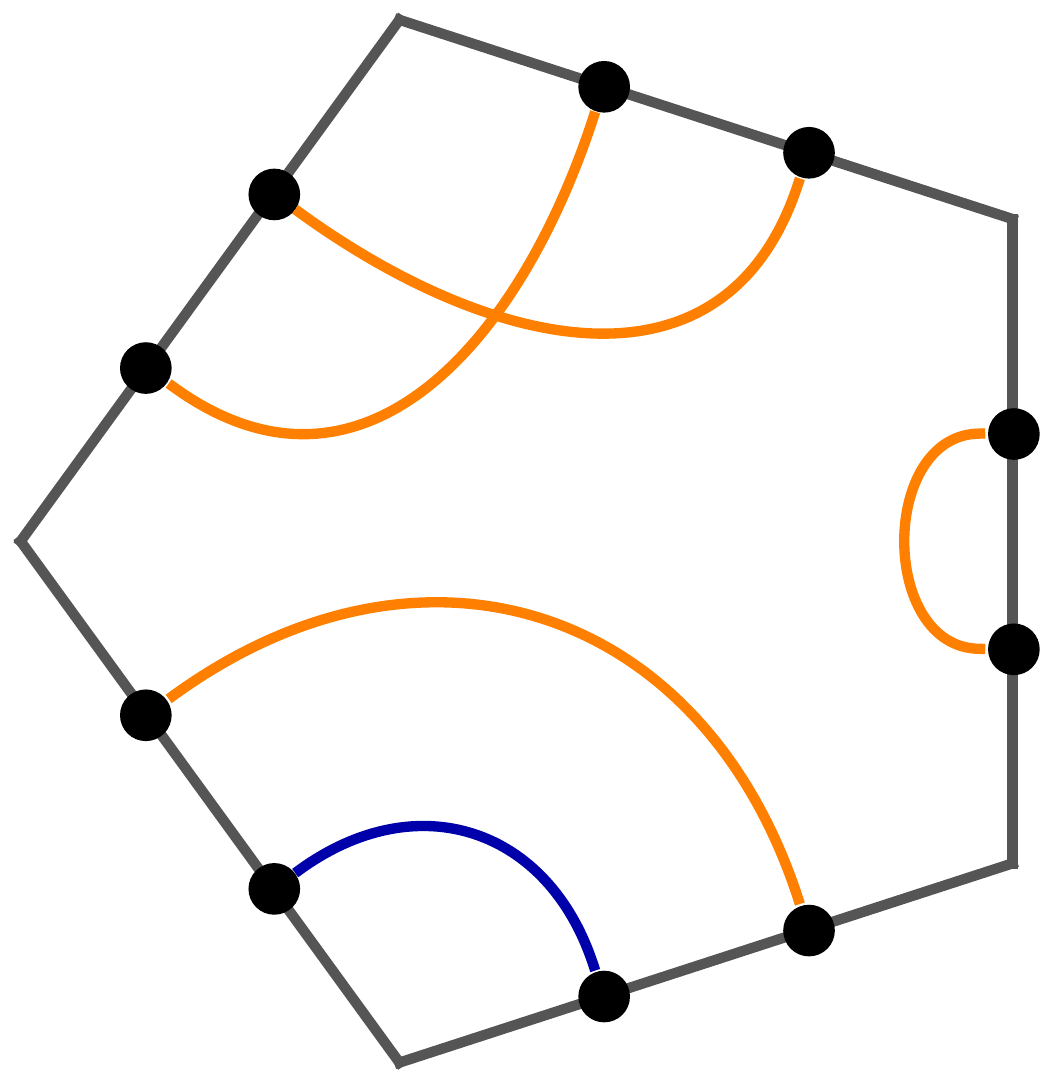}
\end{gathered}
\end{align}
The inner product $\braket\psi\psi$ is a contraction between $\ket\psi$ and $\bra\psi$ over all indices, expressed 
as
\begin{align}
\label{EQ_INNER_PROD}
\braket\psi\psi \;=\;
\begin{gathered}
\includegraphics[height=0.09\textheight]{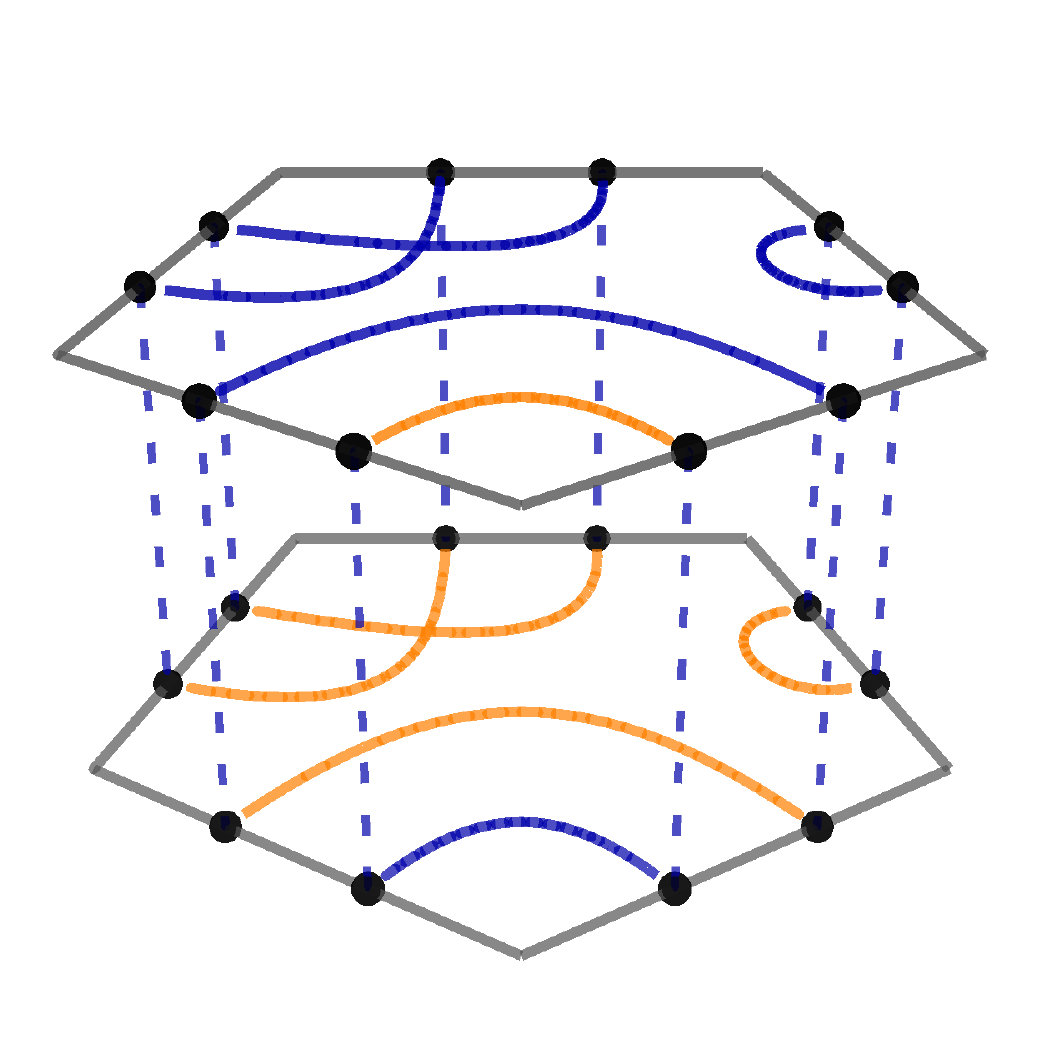}
\end{gathered}
\;=\;
\begin{gathered}
\includegraphics[height=0.1\textheight]{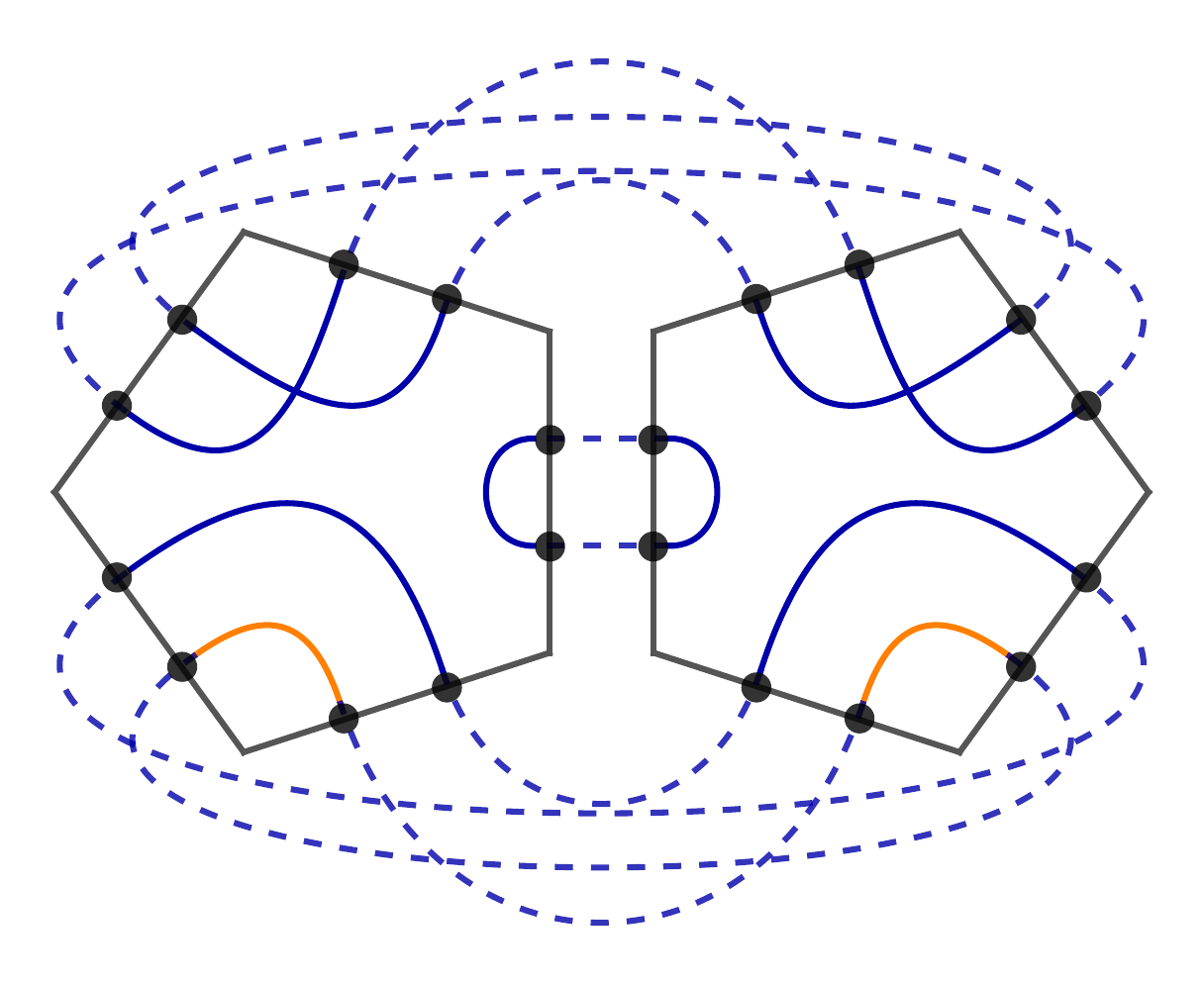}
\end{gathered}
\end{align}
The right-hand side, showing a ``flipped'' $\bra\psi$, represents a Choi-Jamiolkowski isomorphism expressing $\bra\psi$ in the same Hilbert space as $\ket\psi$. This involves an inversion of the orientation which flips all dimer parities. As all diagrams are defined for normalized state vectors that satisfy $\braket\psi\psi=1$.
Furthermore, we can easily evaluate whether two diagrams correspond to orthogonal states as a contraction $\braket\psi\phi$ of $\ket\psi$ and $\ket\phi$ vanishes for any odd-parity loop (see Appendix \ref{APP_CONTR_RULES}). In particular, $\ket\psi$ and $\ket\phi$ are orthogonal if they share the same correlation structure (i.e.,\ pairing of Majorana modes) but differ in at least one dimer parity.
This allows us to construct the complete Hilbert space with Majorana dimer states on $N$ edges by fixing a correlation structure and then flipping through all $2^N$ possible dimer parities, resulting in $2^N$ mutually orthogonal state vectors.
Since the Hilbert space is also $2^N$-dimensional, we can express any state in it by a superposition of Majorana dimer states under the given correlation structure.
This is equivalent to obtaining a orthogonal stabilizer state basis by considering all $2^N$ sign combinations of $N$ stabilizer generators.

\section{The HyPeC with Majorana dimers}
\subsection{Overview}

As we saw in the previous section, the computational basis logical code states of the $[[5,1,3]]$ quantum error-correcting code can be expressed as Majorana dimers. Furthermore, we showed that identifying Majorana dimer states as tensors and contracting them yields new Majorana dimer states, and that these contractions are easy to evaluate diagrammatically.
Because the HyPeC is built from tensors each representing the $[[5,1,3]]$ code, we find the following key result:
%In this context, the question arises: How do these flux qubit states combine after projecting them in Eq. \eqref{eq:eq_contr}?
%In the language of tensor networks, the resulting state vector is obtained by a tensor contraction. 
%Due to the special structure of the tensors at hand the outcome of this linear algebraic operation can be discussed through the stabilizer formalism by which several holographic properties of interest can be identified.
%However, when viewed in the dimer language, the multi-partite entanglement structure becomes particularly easy to understand and even more is true: Dimer states combine to dimer states under tensor contraction according to the following theorem.
\begin{theorem}[Representing the HyPeC with Majorana dimers]
The hyperbolic pentagon code (HyPeC) with logical bulk input fixed to local basis states $\bar{0}$ or $\bar{1}$ yields a Majorana dimer state on the boundary.
Each input corresponds to a (unique) pattern of dimer parities on the boundary state.
\end{theorem}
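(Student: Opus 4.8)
The plan is to prove the statement by induction on the contraction of the tensor network, with Theorem~\ref{THM_MS_CLOSED} supplying the inductive step. First I would observe that fixing the bulk leg of a single pentagon tensor to the basis state $\bar 0$ (resp.\ $\bar 1$) reduces it to the logical code vector $\ket{\bar 0}_5$ (resp.\ $\ket{\bar 1}_5$), which by the diagrams \eqref{HAPPY_ZERO}--\eqref{HAPPY_ONE} is a Majorana dimer state on five edges. For a tiling of $M$ pentagons with fixed bulk input $(\bar b_1,\dots,\bar b_M)$, the pre-contraction object is then the tensor product $\bigotimes_{a=1}^{M}\ket{\bar b_a}_5$; under the identification \eqref{EQ_SPIN_FERMION_BASES} a tensor product of Majorana dimer states is again a Majorana dimer state, whose dimer set is the disjoint union of the summands' dimers. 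Hence the uncontracted network is a Majorana dimer state on $5M$ edges.

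Next, the HyPeC boundary state is obtained from this product by contracting, pair by pair, the internal edges identified by the tiling. I would order these internal contractions arbitrarily and induct: by Theorem~\ref{THM_MS_CLOSED} together with the explicit contraction rules derived from it (which also govern contractions joining two edges of a single dimer state, and the possible formation of closed loops), each individual contraction sends a Majorana dimer state to a Majorana dimer state or to zero. After all internal edges are contracted we are left either with a Majorana dimer state on the $N$ boundary edges or with zero. To exclude the latter, I would invoke the isometry property of the HyPeC encoding $E$: since $E$ is an isometry, $E\bigl(\ket{\bar b_1}\otimes\cdots\otimes\ket{\bar b_M}\bigr)$ is a unit-norm, hence nonzero, boundary vector, and the fully contracted object agrees with it up to a nonzero normalization constant. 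Therefore every basis input yields a genuine (nonzero) Majorana dimer state on the boundary, which proves the first assertion.

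Then I would establish that the correspondence input $\mapsto$ pattern of boundary dimer parities is well defined and injective. For the \emph{pattern} to be meaningful, all $2^M$ inputs must produce boundary states sharing a single correlation structure (pairing of the $2N$ boundary Majorana modes): this holds because $\ket{\bar 0}_5$ and $\ket{\bar 1}_5$ have identical dimer pairings, differing only in parities, and because the contraction rules fix the pairing of a fused dimer solely from which dimers meet the contracted edges, never from their parities. Injectivity then follows from the orthogonality criterion for Majorana dimer states established above: two dimer states with the same correlation structure are orthogonal precisely when their parity patterns differ somewhere, so equal parity patterns force the boundary states to be proportional. Since $E$ maps the $2^M$ orthonormal logical inputs to $2^M$ mutually orthogonal boundary vectors, the associated parity patterns are pairwise distinct, so each bulk basis input corresponds to a unique pattern.

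The main obstacle I anticipate is not any single computation but two global consistency checks: ruling out that some sequence of contractions collapses the state to zero, and confirming that a consistent cyclic ordering of the surviving boundary modes persists through the contraction sequence so that the contraction rules apply verbatim at each step. The first is handled by the isometry of $E$; the second by the planarity of the $\{5,4\}$ tiling, which guarantees that at every stage the uncontracted edges form a cyclically ordered boundary and each remaining contraction identifies two of its edges.
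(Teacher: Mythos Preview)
Your proposal is correct and follows essentially the same approach as the paper: the first assertion is obtained exactly as you describe, by noting that the logical code states \eqref{HAPPY_ZERO}--\eqref{HAPPY_ONE} are Majorana dimer states and then invoking Theorem~\ref{THM_MS_CLOSED} inductively over the contractions, with the ordering consistency deferred to Appendix~\ref{APP_CONTR_ORDER}. For the uniqueness of the parity pattern, your argument via the isometry of $E$ together with the orthogonality criterion for dimer states with a common correlation structure is somewhat more direct than the paper's route, which establishes the stronger statement (Lemma~5 in Appendix~\ref{APP_CONTR_ORDER}) that any two basis-input boundary states differ in at least three dimer parities; uniqueness is then a byproduct of that code-distance result.
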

While fermionic modes require an explicit ordering, we show in Appendix \ref{APP_CONTR_ORDER} that different contraction orderings lead to equivalent boundary states.
We will now show how the geometry of the dimers in the HyPeC determines its properties, using the tools developed in the previous section.

\subsection{Dimers and entanglement structure}

First, we will consider the physical properties of the HyPeC for logical inputs fixed locally to $\bar{0}$ or $\bar{1}$. 
The code is constructed from a hyperbolic $\{5,4\}$ tiling\footnote{This \emph{Schl\"afli symbol} denotes a polygon tiling with four pentagons at each vertex.}, with each tile now set to \eqref{HAPPY_ZERO} or \eqref{HAPPY_ONE} (the full HyPeC also allows for superpositions between the two). 
As the model consists of asymptotically infinite tiles, we have to define a UV cutoff at which the tiling is truncated. We do this by starting with a central tile and iteratively adding tiles on all free edges. The number $n$ of iterations thus gives the graph distance between each boundary tile and the centre, determining the cutoff. Such a series of iterations for an all-$\bar{0}$ bulk input is visualized in Fig.\ \ref{FIG_HAPPY_CONTR_SERIES}.

\begin{center}
\begin{figure*}[htb]
\begin{equation*}
\begin{gathered}
n=0:\\
\vspace{7pt}
\includegraphics[height=0.14\textheight]{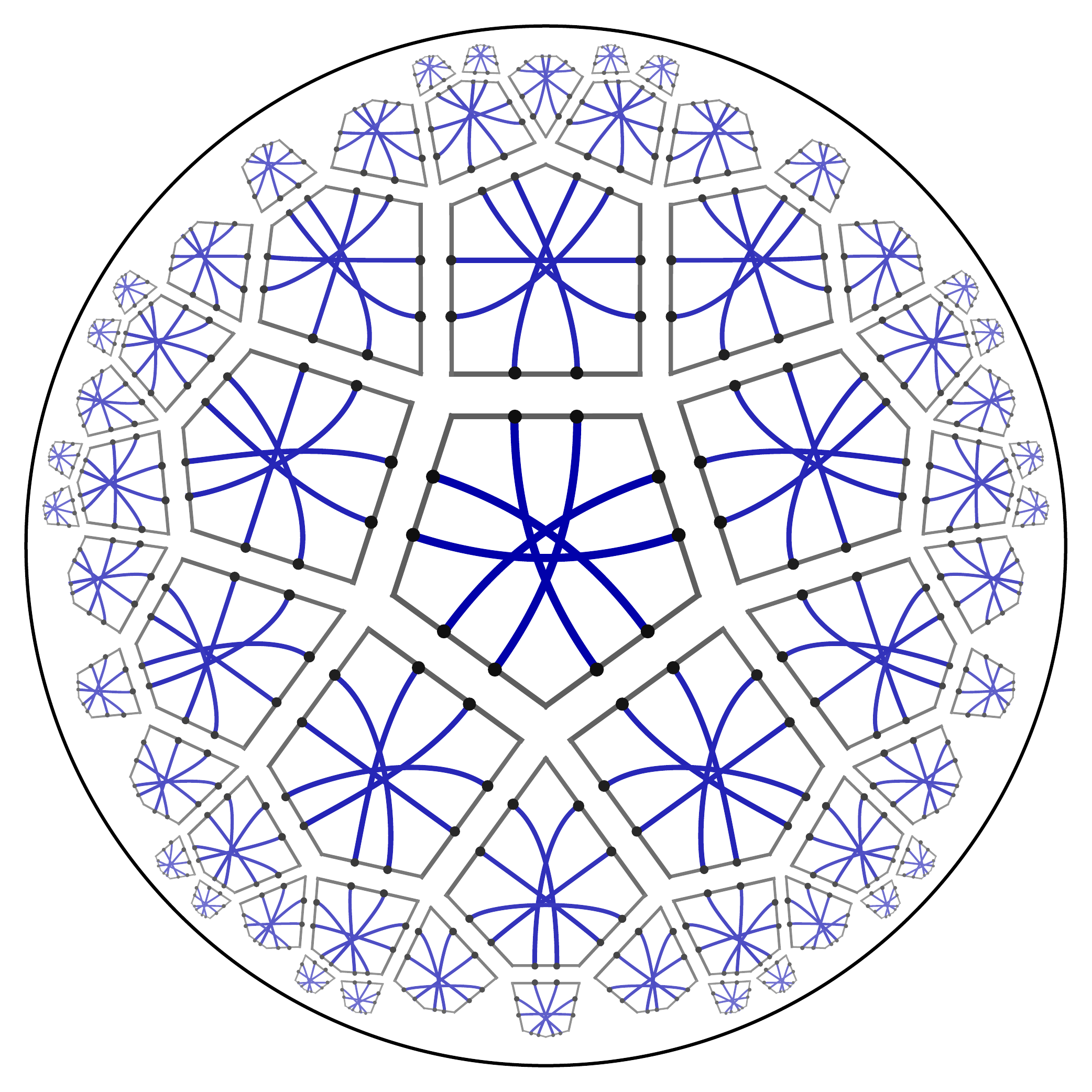}
\end{gathered}
\scalebox{1.3}{$\quad\rightarrow\quad$}
\begin{gathered}
n=1:\\
\vspace{7pt}
\includegraphics[height=0.14\textheight]{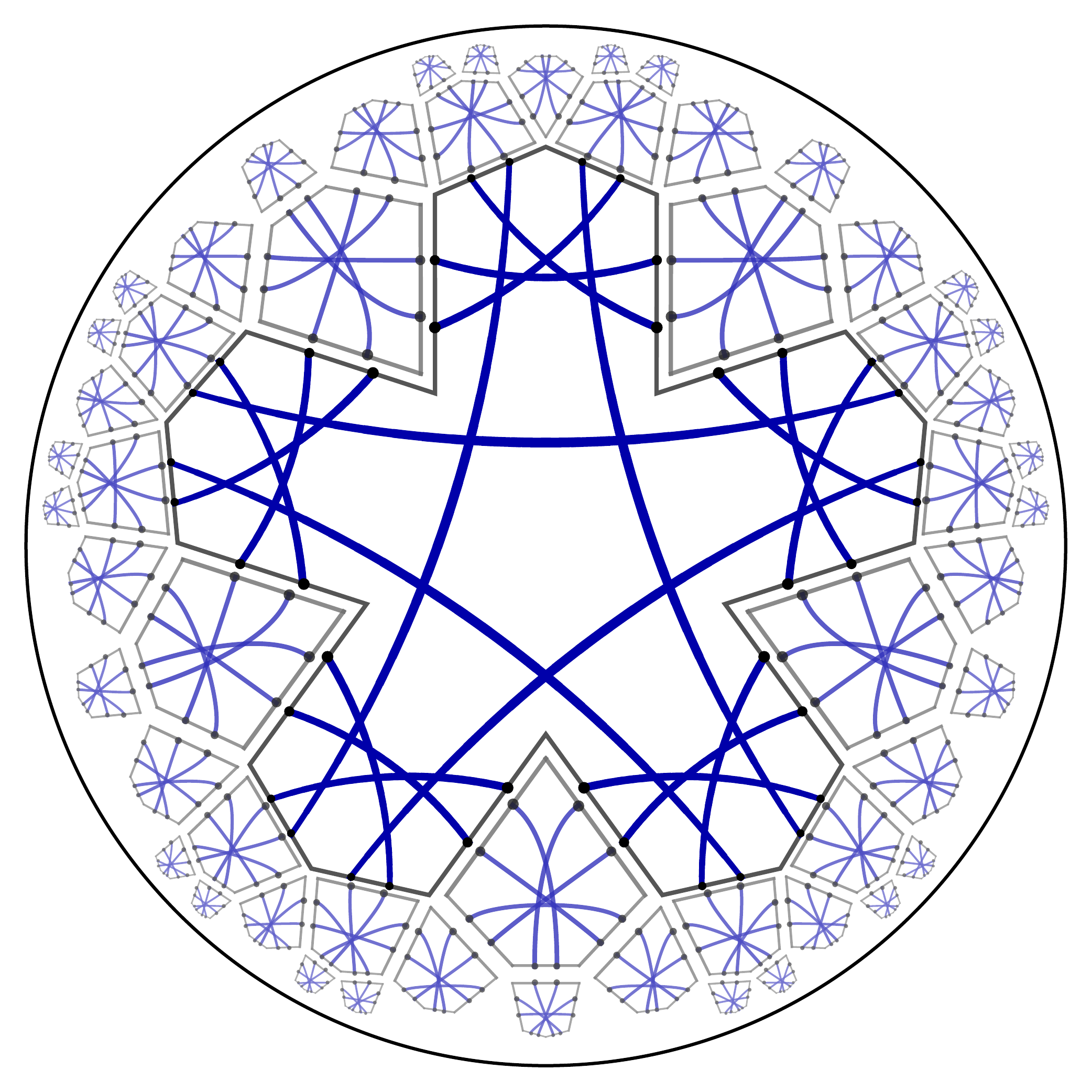}
\end{gathered}
\scalebox{1.3}{$\quad\rightarrow\quad$}
\begin{gathered}
n=2:\\
\vspace{7pt}
\includegraphics[height=0.14\textheight]{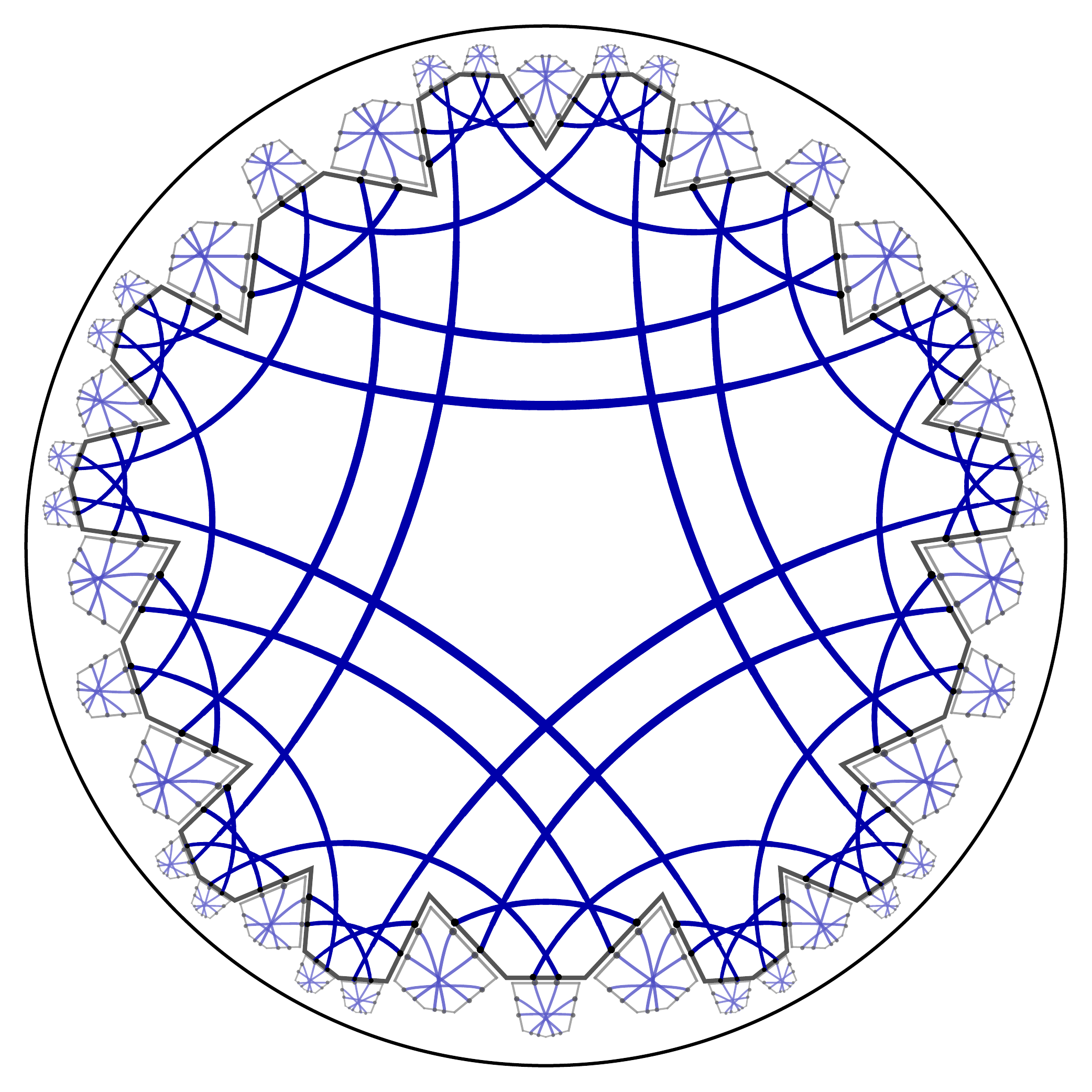}
\end{gathered}
\scalebox{1.3}{$\quad\rightarrow\quad$}
\begin{gathered}
n=3:\\
\vspace{7pt}
\includegraphics[height=0.14\textheight]{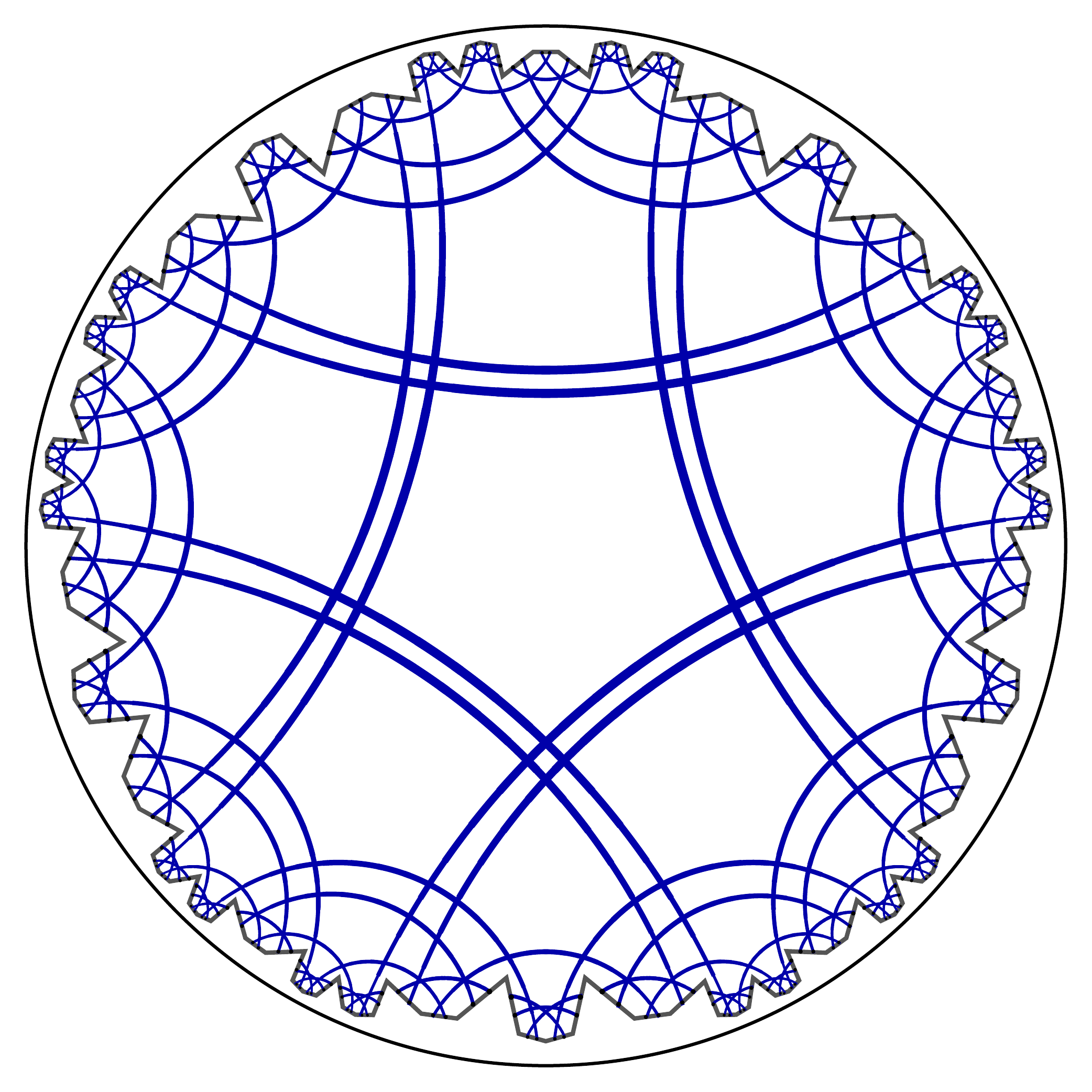}
\end{gathered}
\end{equation*}
\caption{Iterative contraction of the HyPeC for fixed bulk inputs of $\ket{\bar{0}}_5$ state vectors, with a cutoff after 3 iterations. Each step involves contracting a further layer of tiles, starting from the centre at $n=0$. The asymptotic boundary of the Poincar\'e disk is drawn as a black circle.
}
\label{FIG_HAPPY_CONTR_SERIES}
\end{figure*}
\end{center}

\vspace{-30pt}
The contracted dimers are drawn as geodesics in the Poincar\'e disk. This is not an arbitrary choice, as the dimers follow discrete geodesics (i.e.,\ shortest paths) in the $\{5,4\}$ tiling. Fig.\ \ref{FIG_HAPPY_DIMER_ASYMPT} shows the $n \to \infty$ limit both for an example of two dimers and the whole contraction.
Because of the particular property of the $\lbrace 5,4 \rbrace$ tiling that the pentagon edges connect to continuous geodesics, the asymptotic endpoints of a contracted dimer are also endpoints of such a geodesic.\footnote{In the dual $\lbrace 4,5 \rbrace$ tiling, each four-sided tile contains an intersection of two such geodesics meeting at right angles.} 
Tracing this geodesic back into the bulk, we see that it passes along all tiles that contained the uncontracted dimer pieces. Furthermore, as Fig.\ \ref{FIG_HAPPY_DIMER_ASYMPT} also shows, there are always two dimers with the same pair of asymptotic boundary points, resulting in a bulk geodesic that is dual to a boundary Bell state.
While Fig.\ \ref{FIG_HAPPY_DIMER_ASYMPT} only shows a uniform $\bar{0}$ bulk input, 
the dimer parities generally differ with the input. The dimer pairs then correspond to different types of Bell states, as in Table \ref{TAB_BELL_STATES}.
Note that the Majorana modes composing the effective fermions of these Bell states are located on neighbouring boundary edges at any finite cutoff.
This elucidates the code's error correction properties: Any product of pairs of Majorana operators $\i\,\m_j \m_k$ acting on dimer endpoints $(j,k)$ can only change the state up to a total sign, and is thus a representation of a logical parity operation in the bulk. While single Majorana operators are nonlocal in terms of spins, pairs of Majorana operators on neighboring sites can be expressed by a local pair of Pauli operators (compare \eqref{EQ_JORDANWIGNER_REV}). For each pair of HyPeC dimers, there then exists a boundary operator $\mathcal{O}$ of weight $|\mathcal{O}|=4$, i.e., consisting of four Pauli operators acting on the boundary, which represents a logical bulk operation. Thus, even for an infinitely large number of HyPeC tiles, the code distance $d$ never exceeds $d=3$, as such an $\mathcal{O}$ represents an error on the code space.	
\begin{figure*}
\begin{equation*}
\begin{gathered}
\includegraphics[height=0.14\textheight]{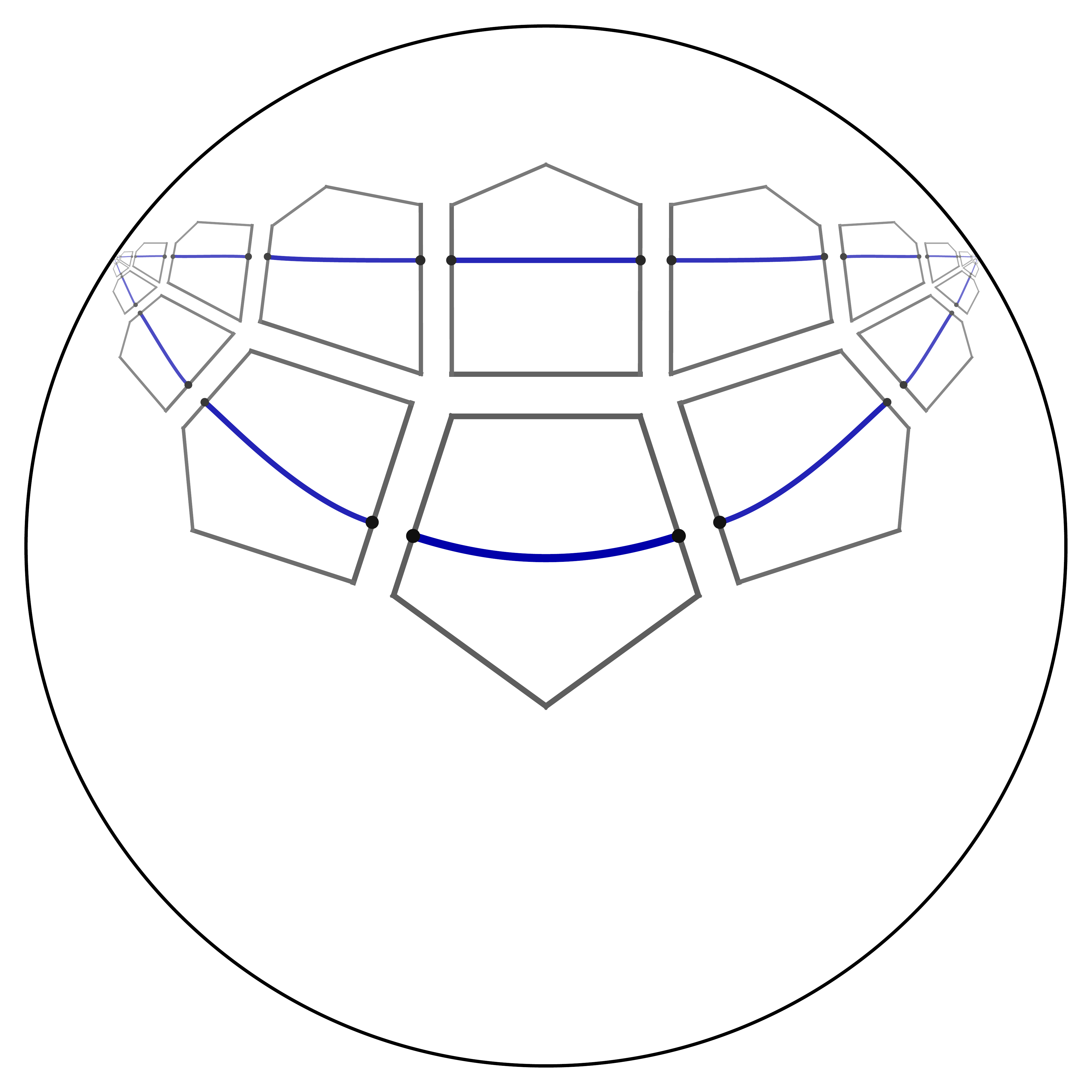}
\end{gathered}
\scalebox{1.3}{$\quad\rightarrow\quad$}
\begin{gathered}
\includegraphics[height=0.14\textheight]{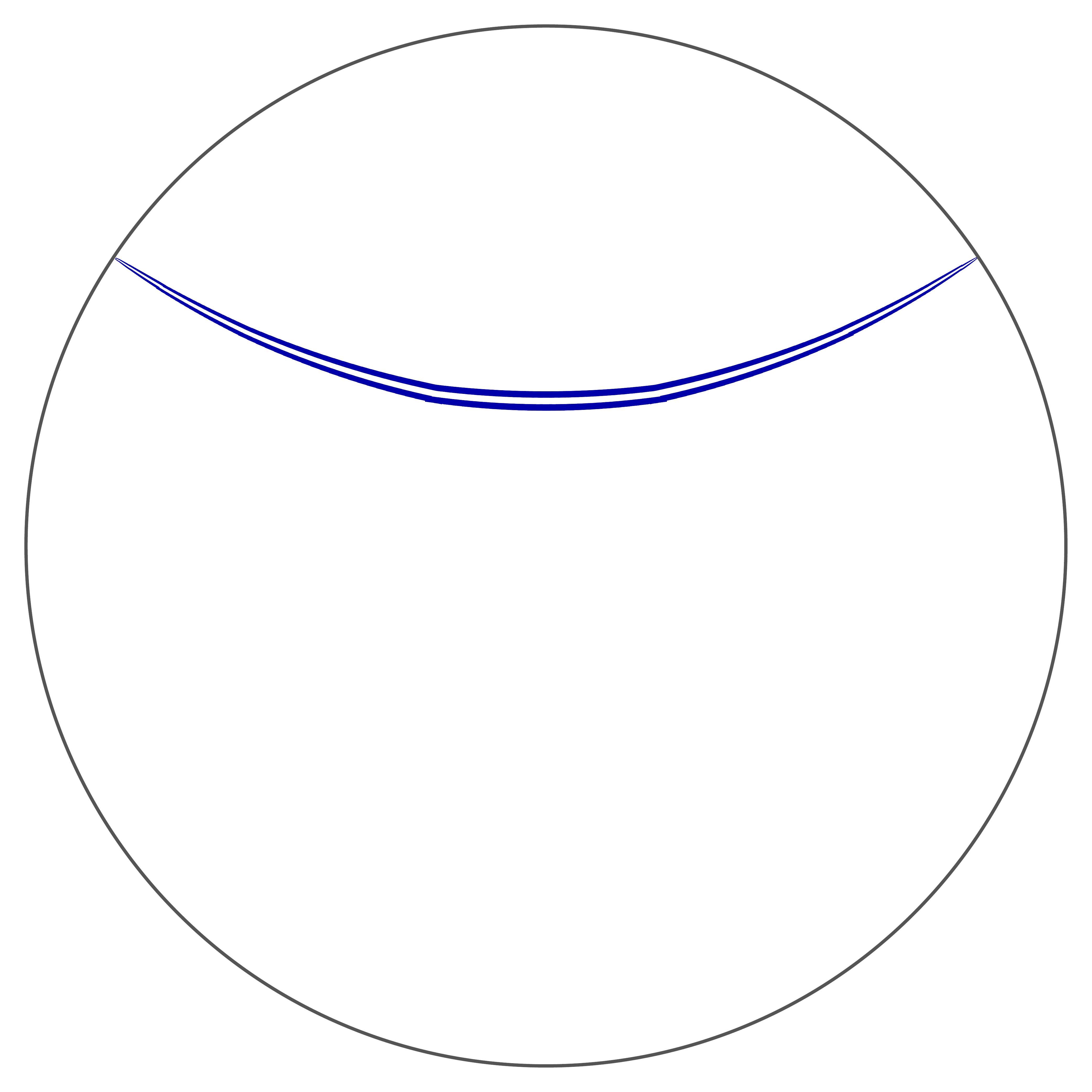}
\end{gathered}
\hspace{1.2cm}
\begin{gathered}
\includegraphics[height=0.14\textheight]{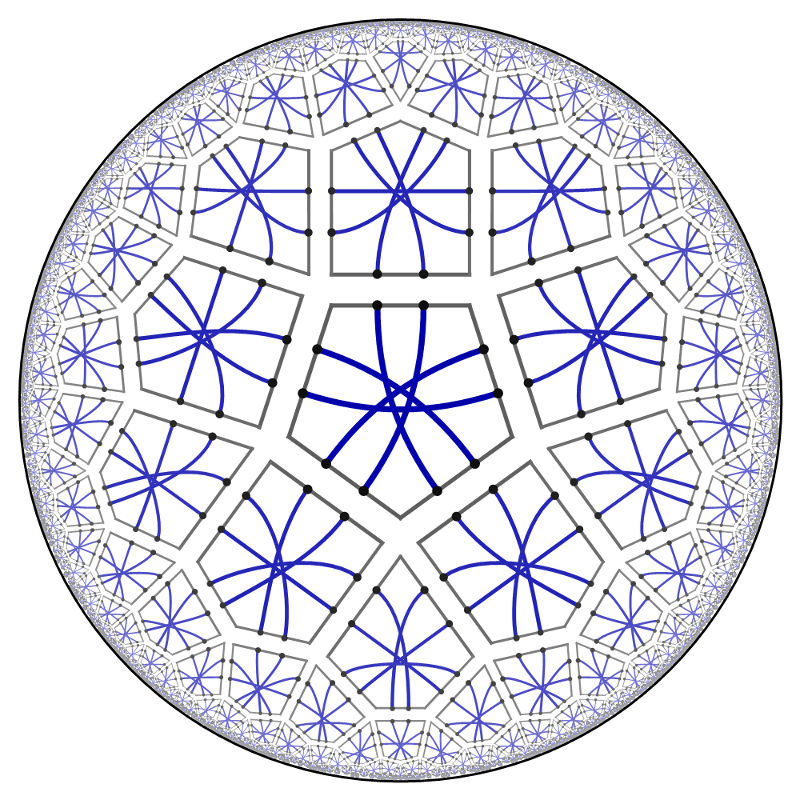}
\end{gathered}
\scalebox{1.3}{$\quad\rightarrow\quad$}
\begin{gathered}
\includegraphics[height=0.14\textheight]{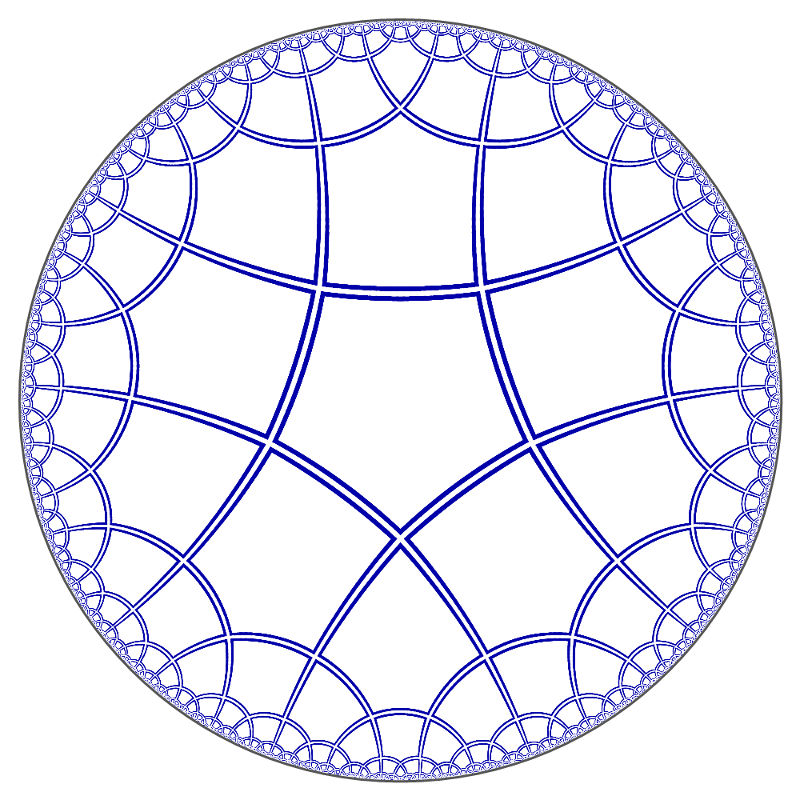}
\end{gathered}
\end{equation*}
\caption{\textsc{Left:} A Majorana dimer pair in an infinitely large contraction of HyPeC tiles. The endpoints of both dimers meet at the asymptotic boundary, thus the dimer pair can be drawn as a double-geodesic. 
%This double geodesic of Majorana dimers also leads to the distance 4 upper bound of the HyPeC as a subsystem code, obtained in \cite{Pastawski2015}. Indeed, there is a weight four physical realization of a logical operator obtaining a parity within the double geodesic. 
\textsc{Right:} Full contraction for a $\bar{0}$ input on all tiles, with all dimers pairing up.}
\label{FIG_HAPPY_DIMER_ASYMPT}
\end{figure*}

Given this picture of pair entanglement on the boundary through bulk geodesics, the dependence of the entanglement entropy $S_A$ on the boundary subsystem size $|A|=\ell$ can be explicitly computed. Clearly, the position of this subsystem affects the value of $S_A$, as the distribution of entangled pairs in Fig.~\eqref{FIG_HAPPY_DIMER_ASYMPT} does not respect translation invariance on the boundary. Thus, we consider the average expectation value $\mathbb{E}_\ell(S)$ of the entanglement entropy instead. The results in Fig.~\ref{FIG_HAPPY_EE} show an approximate logarithmic growth $S_A \propto \log\ell$, as expected of a critical theory. Fitting against the expected logarithmic scaling expected for a CFT \cite{CalabreseReview},
\begin{equation}
\label{EQ_CALABRESE_CARDY}
S_A = \frac{c}{3} \log \left( \frac{L}{\pi \epsilon} \sin\frac{\pi \ell}{L} \right) \simeq \frac{c}{3} \log \frac{\ell}{\epsilon} + O\left( (\ell/L)^2 \right) \text{ ,}
\end{equation}
we find a central charge $c \approx 4.2$ (dashed line in Fig.\ \ref{FIG_HAPPY_EE}). For a finite system of boundary size $L$, $S_A$ reaches its maximum at $\ell = L/2$, in agreement with the full form of \eqref{EQ_CALABRESE_CARDY}. Each iteration preserves the entanglement entropy scaling of the previous one up to $\ell \approx L/4$. We already observed this behaviour in a previous analysis using \emph{matchgate tensors} \cite{Jahn:2017tls}.

\begin{figure}[tb]
\centering
\includegraphics[height=0.18\textheight]{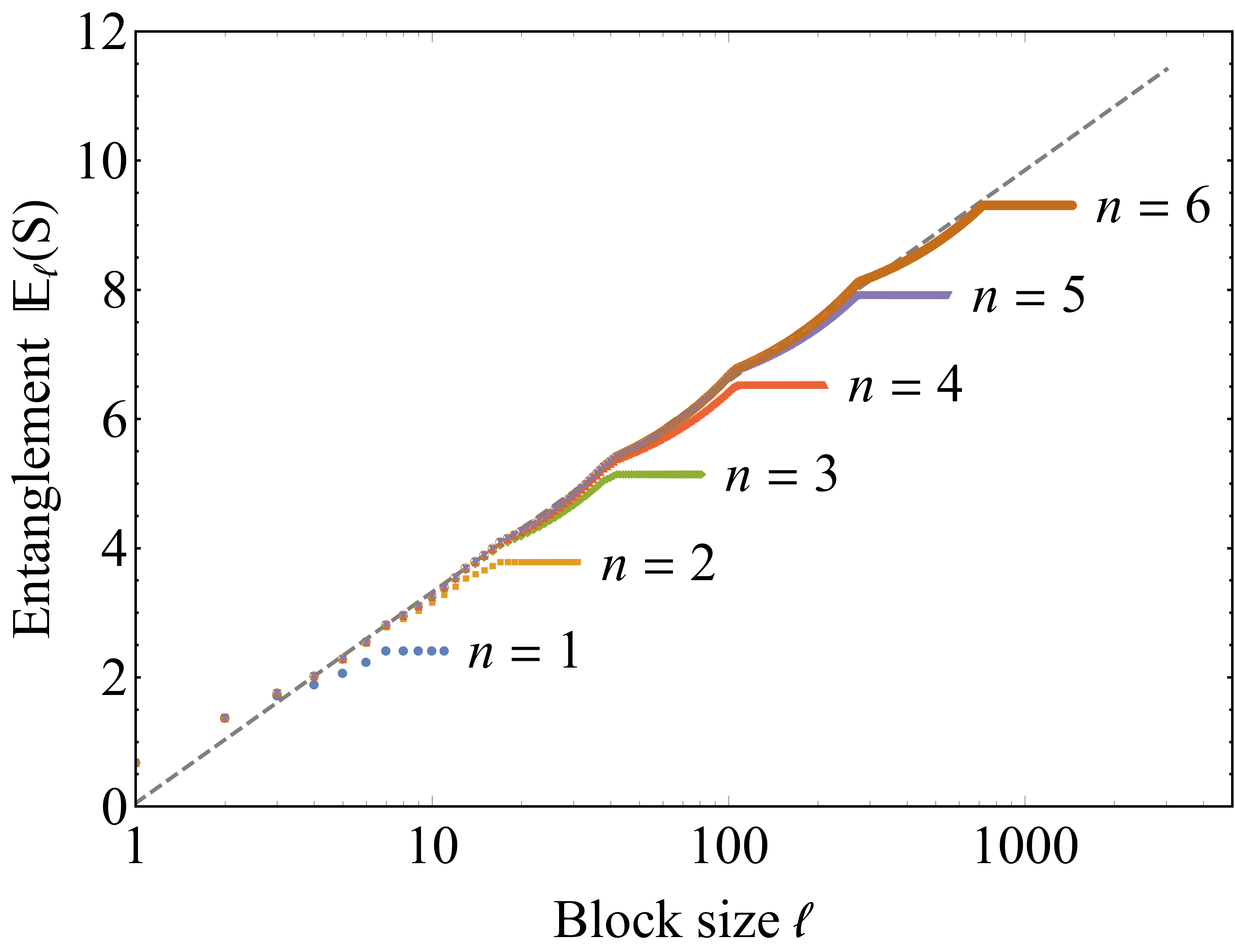}
\caption{Entanglement entropy scaling with the size $\ell$ of the subsystem block for successive iterations $n$ of the contraction. Dashed line is a logarithmic fit of the $n\to\infty$ limit.}
\label{FIG_HAPPY_EE}
\end{figure}

The logarithmic entanglement entropy scaling saturates the bound we observed in \eqref{EQ_DIMER_EE_EXAMPLE}:
The maximum entanglement between two boundary regions $A$ and $A^{\text{C}}$ is proportional to the maximum number of dimers that can connect them, or equivalently, the number of edges $|\gamma_A|$ of a minimal cut $\gamma_A$ through the bulk separating $A$ from $A^{\text{C}}$. Due to the hyperbolic geometry of the $\{ 5, 4 \}$ tiling, $|\gamma_A| \propto \log |A|$. As $\gamma_A$ is a geodesic in this discrete geometry, no other geodesic -- and thus no dimer -- can pass through it twice (up to cases such as in Fig.~\ref{FIG_GREEDY2}), turning the upper bound into an equality:
\begin{equation}
\label{EQ_EE_BOUND_SAT}
S_A =  |\gamma_A|\,\log 2 \propto \log\ell \text{ .}
\end{equation}
Clearly, this result is the same for each computational basis state input, as changing this input only changes dimer parities, leaving $S_A$ invariant.

We can modify the IR structure of our boundary states by modifying the central tiles. There are two possible approaches: One is the replacement of the dimer structure in these tiles, and the other is the complete removal of the tiles. In the first case, we simply reconnect the dimers with each other, so that they no longer follow geodesics. This breaks the conditions for the saturation of the bound \eqref{EQ_EE_BOUND_SAT}, so that we reduce the entanglement of the boundary states. 
The more tiles in the centre are changed (e.g.\ to the vacuum \eqref{EQ_PENTA_VAC}), the further long-range entanglement is suppressed, so that we approach a \textsl{gapped} boundary state with constant (i.e., area-law) entanglement.

The second case was already considered in the original HyPeC model: When removing entire tiles, auxiliary bulk degrees of freedom (open edges, or open legs in tensor network language) are opened up. The usual interpretation of this setup is that of a black hole, or when extending the open edges to a non-contracted auxiliary system, that of a \textsl{wormhole}. In both cases, the boundary state of this setup should exhibit an additional thermal entropy, which the Ryu-Takayanagi formula interprets as a deformation of geodesics around the horizon. 
In the language of Majorana dimers, this additional entropy is explained by dimers ending on the open edges: Following \eqref{EQ_EE_SYMPL}, any dimer in a region $A$ that connects to a site outside of $A$ contributes $\log(2)/2$ to the entanglement entropy $S_A$. When $A$ becomes large, this entropy contribution scales with the ``horizon area'' of the black hole, i.e.,\ the number of fermionic modes on the open edges. As we increase the radius of the black hole, $S_A$ will begin to grow linearly with the size of $A$, as expected of a thermal CFT.

As we show in Appendix \ref{APP_EE_RULES}, the entanglement entropy $S_A$ of compact subsystems $A$ of the HyPeC for logical bulk input can be generalized to arbitrary \emph{local} input, i.e.\ superpositions of $\bar{0}$ and $\bar{1}$ that factorize between the tiles. 
%One may argue that adding entanglement between bulk inputs indeed changes the effective bulk geometry.
Without additional entanglement between bulk sites, $S_A$ is independent of the specific bulk input, as long as the boundary regions $A$ and its complement $A^{\text{C}}$ are reducible to the same discrete bulk geodesic $\gamma_A$ via the \emph{greedy algorithm} \cite{Pastawski2015}, which can be completely re-derived using Majorana dimers. This algorithm iteratively removes tiles with three or four open edges (see Fig.\ \ref{FIG_GREEDY}), deforming $A$ into a region $A^\prime$ further in the bulk, while keeping $S_A=S_{A^\prime}$ invariant.
Fig.\ \ref{FIG_GREEDY2} illustrates how some boundary regions $B$ are not reducible to the same geodesic $\gamma_B$ as their complement regions $B^\text{C}$. In these cases, no cut along the pentagon edges can completely separate dimers with endpoints in $B$ from those with endpoints in $B^\text{C}$, leaving dimers in a \emph{residual bulk region}.
While \eqref{EQ_DIMER_EE} still holds if local bulk inputs are fixed to basis states $\bar{0}$ or $\bar{1}$, the entanglement entropy $S_B$ for local superpositions can generally be larger, as there is additional input-dependent entanglement between the residual dimers. For example, in the setup of Fig.\ \ref{FIG_GREEDY2} (bottom), $S_B$ can be up to $\log 2$ larger than the fixed-input result (see Appendix \ref{APP_EE_RULES} for details).

\begin{figure}[htb]
\centering
\begin{equation*}
\begin{gathered}
\includegraphics[height=0.1\textheight]{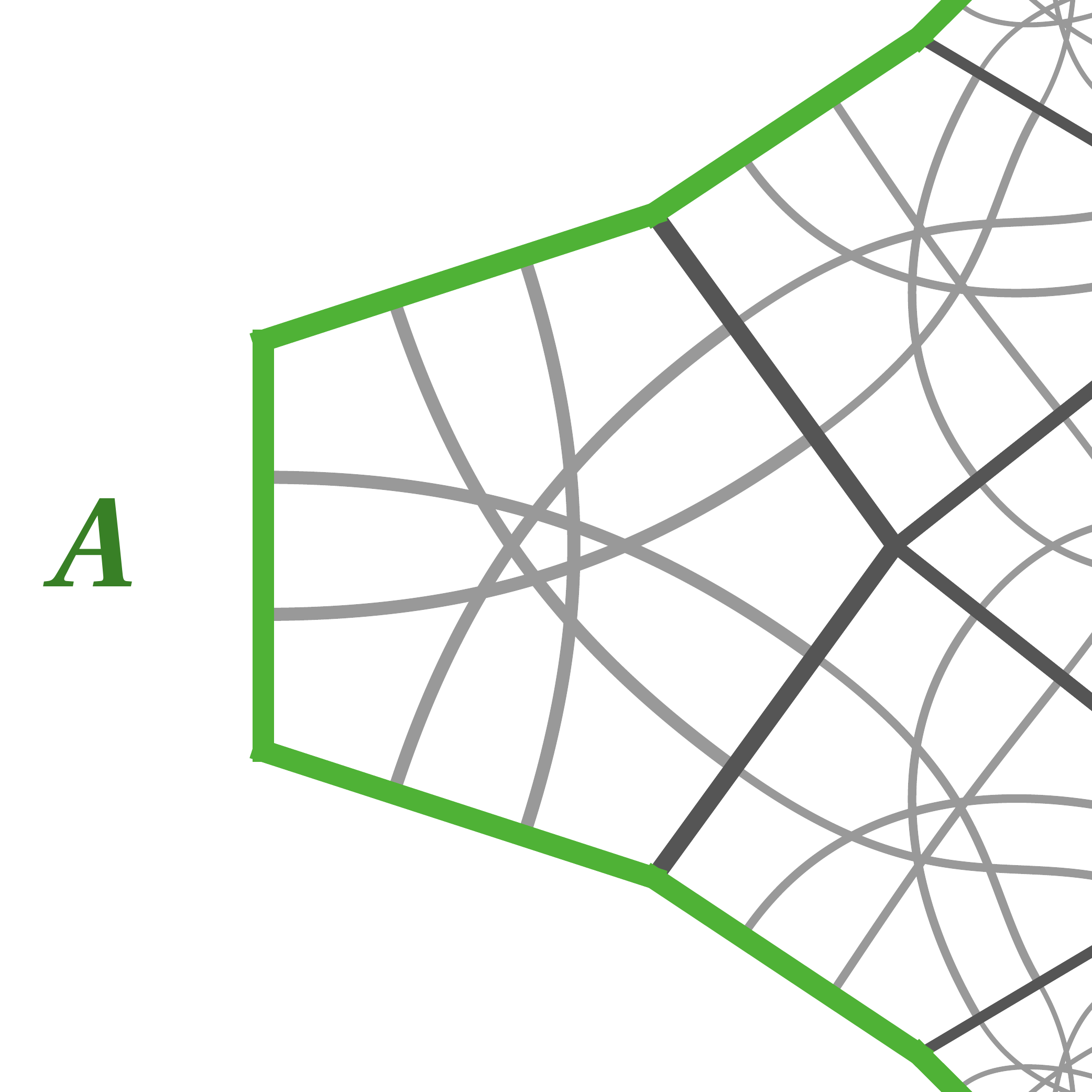}
\end{gathered}
\scalebox{1.3}{$\;\quad\quad\rightarrow\quad\;$}
\begin{gathered}
\includegraphics[height=0.1\textheight]{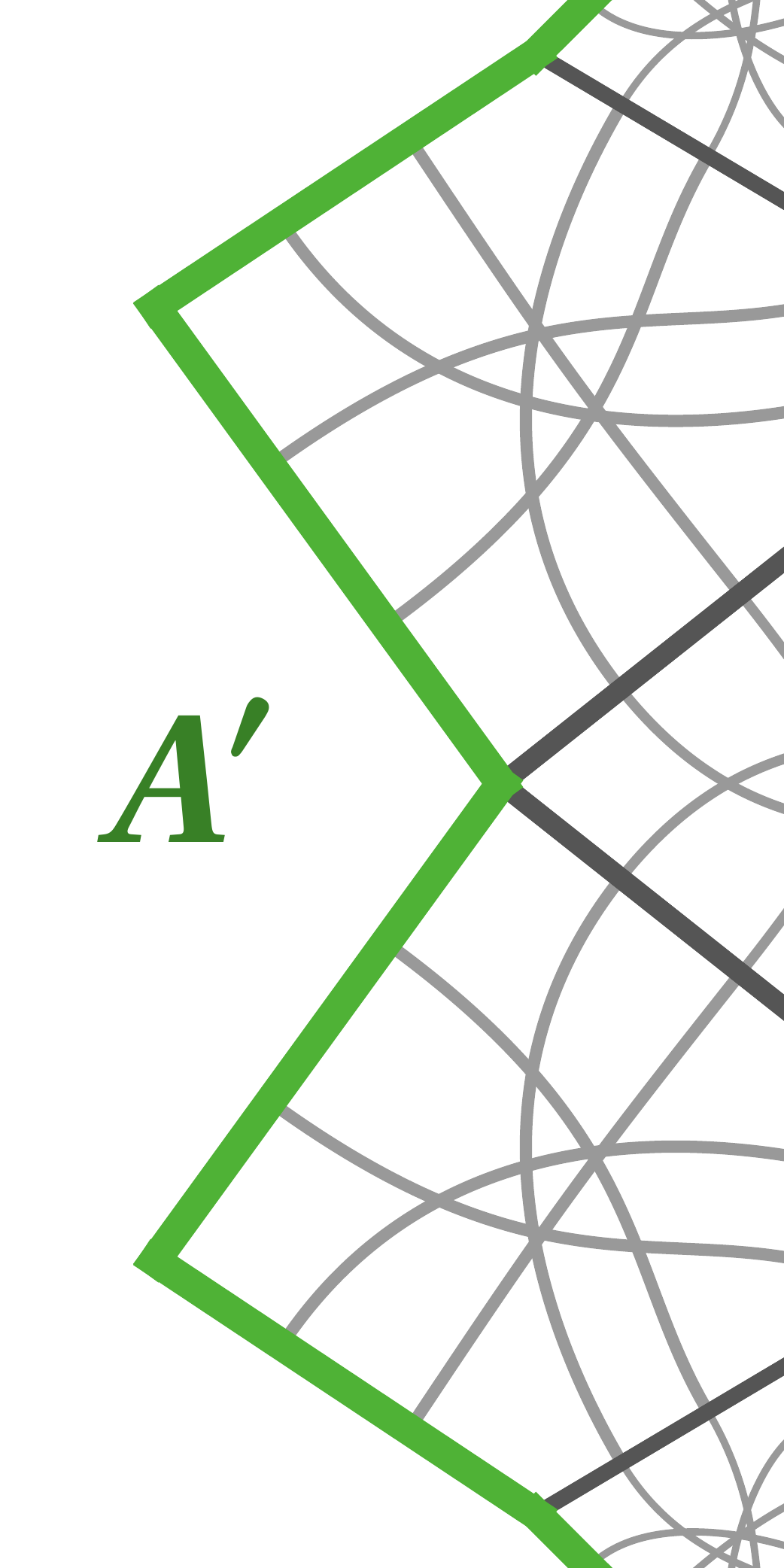}
\end{gathered}
\end{equation*}
\begin{equation*}
\begin{gathered}
\includegraphics[height=0.1\textheight]{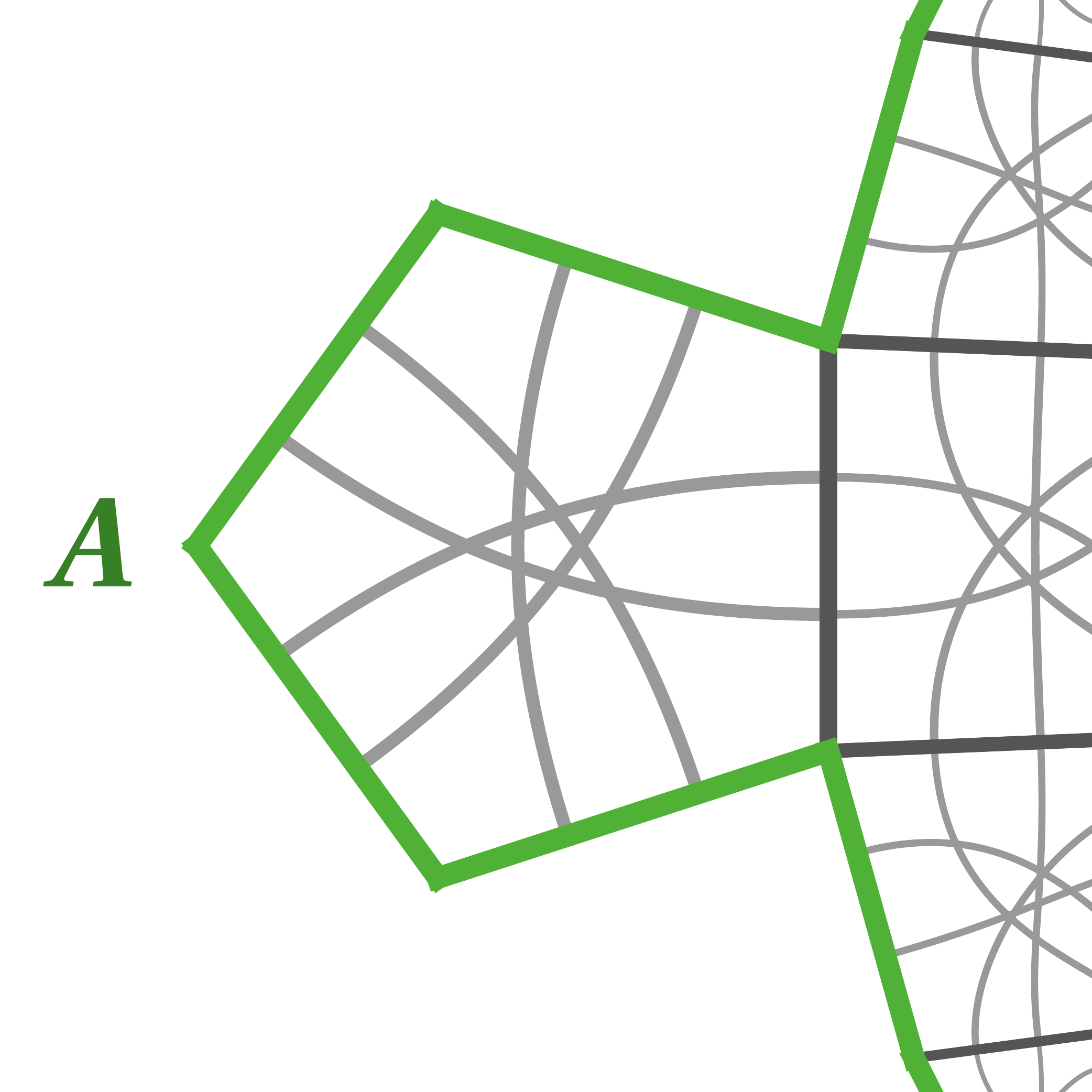}
\end{gathered}
\scalebox{1.3}{$\;\quad\quad\rightarrow\quad\;$}
\begin{gathered}
\includegraphics[height=0.1\textheight]{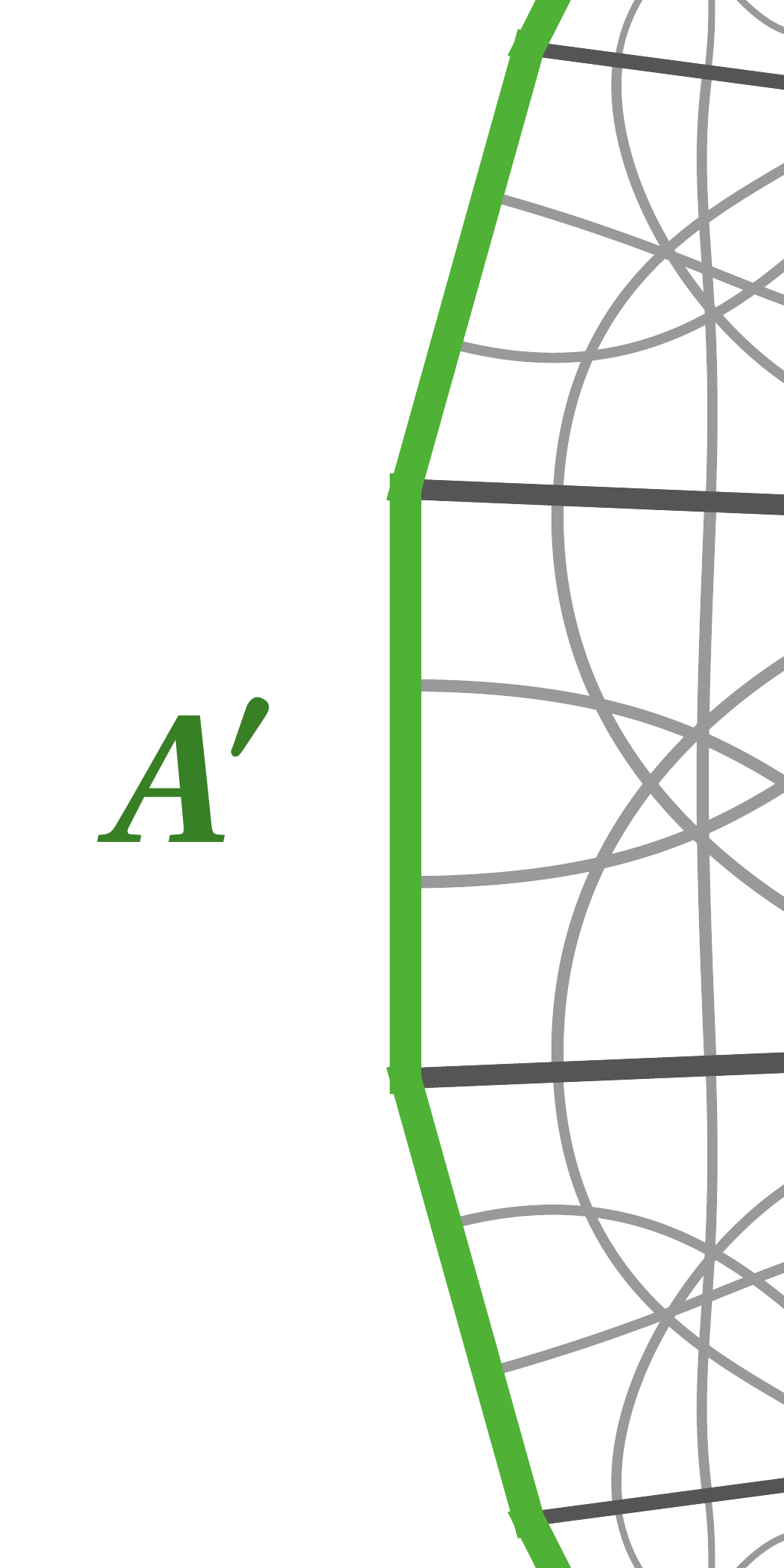}
\end{gathered}
\end{equation*}
\caption{The greedy algorithm: The boundary region $A$ is pushed into the bulk to a new region $A^\prime$ by removing pentagon tiles with three (top) and four (bottom) open edges. Each pentagon can be in an arbitrary local superposition of $\bar{0}$ and $\bar{1}$, shown as grey-shaded dimers.}
\label{FIG_GREEDY}
\end{figure}

\begin{figure}[htb]
\centering
\begin{equation*}
\begin{gathered}
\includegraphics[height=0.16\textheight]{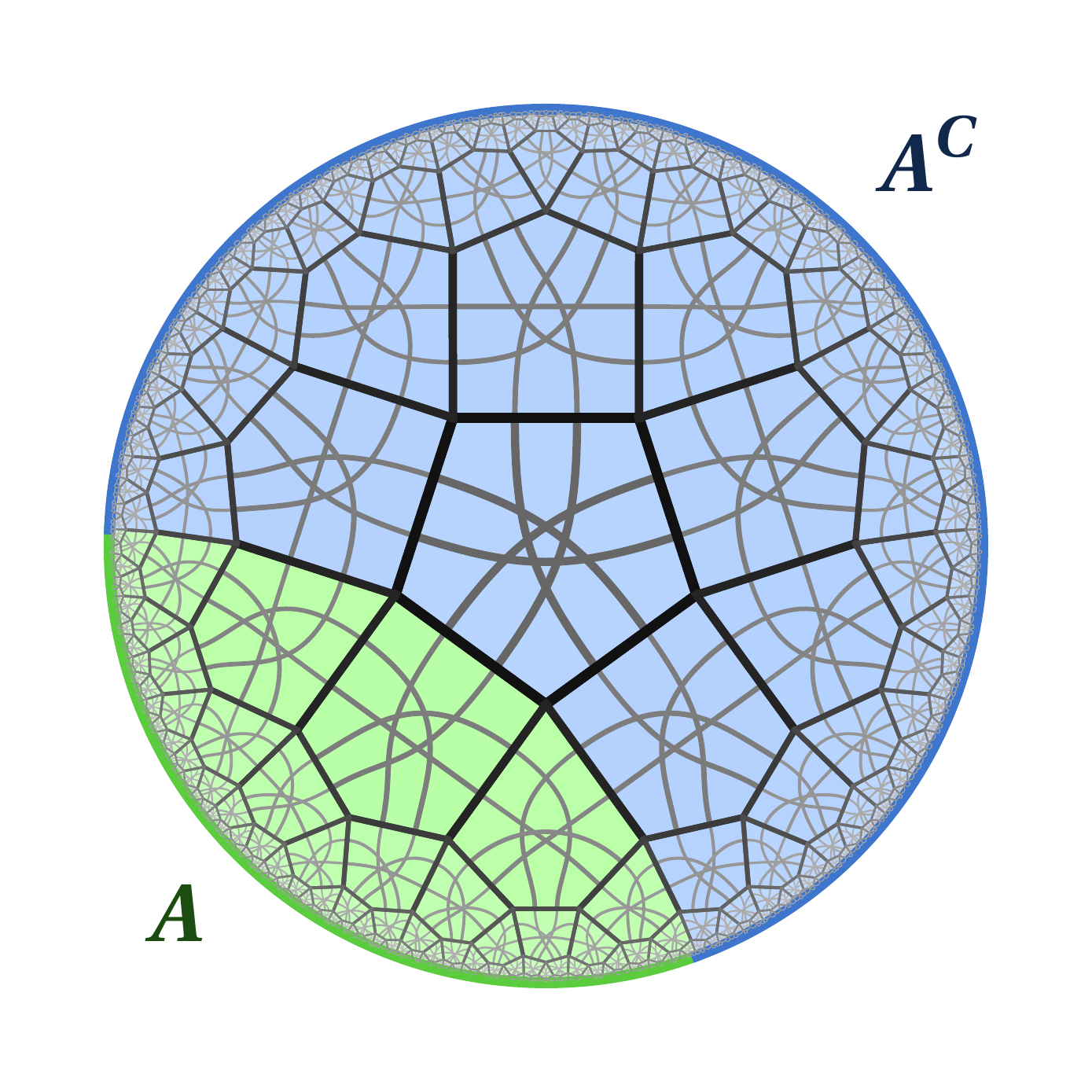}
\end{gathered}
\scalebox{1.3}{$\;\rightarrow\;$}
\begin{gathered}
\includegraphics[height=0.16\textheight]{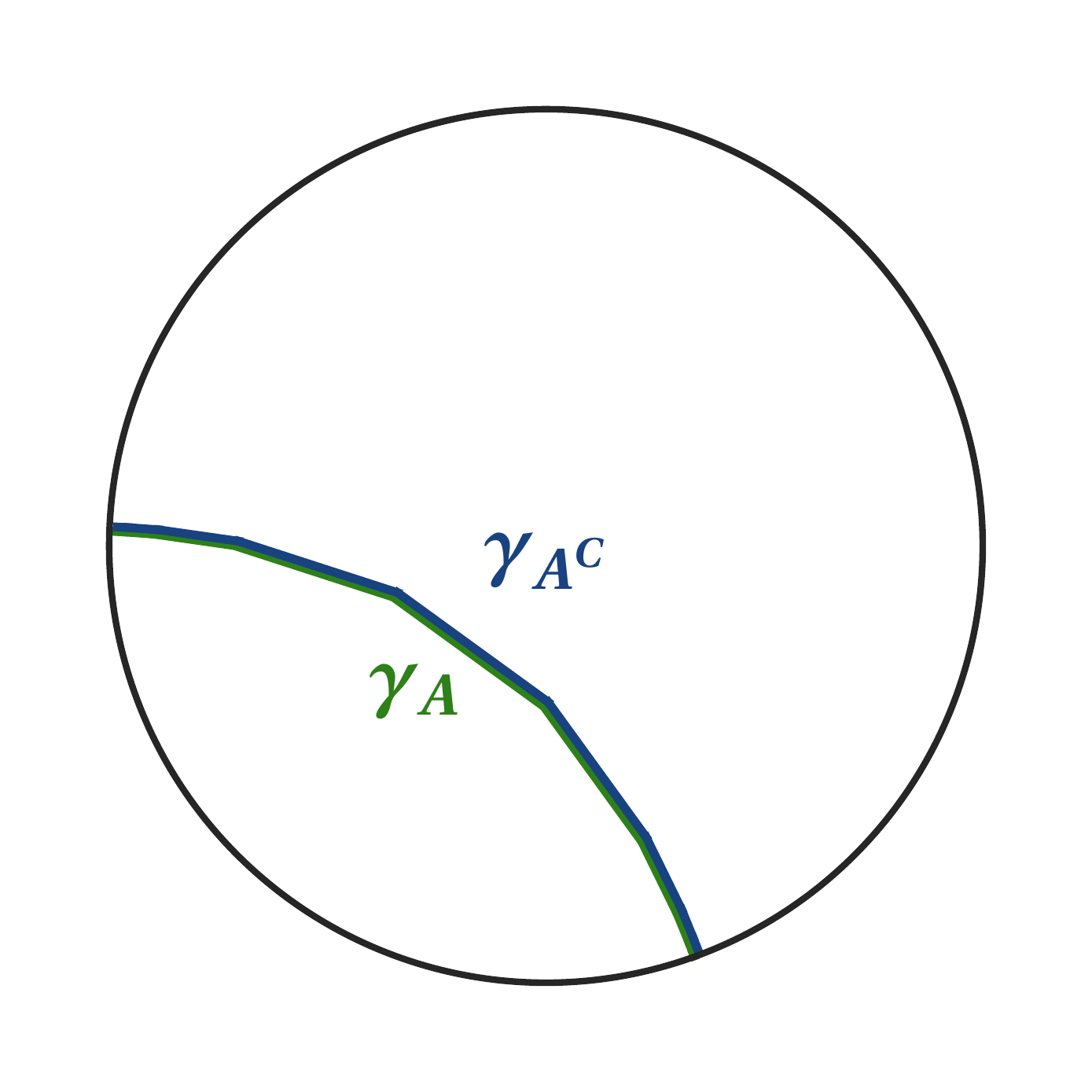}
\end{gathered}
\end{equation*}
\begin{equation*}
\begin{gathered}
\includegraphics[height=0.16\textheight]{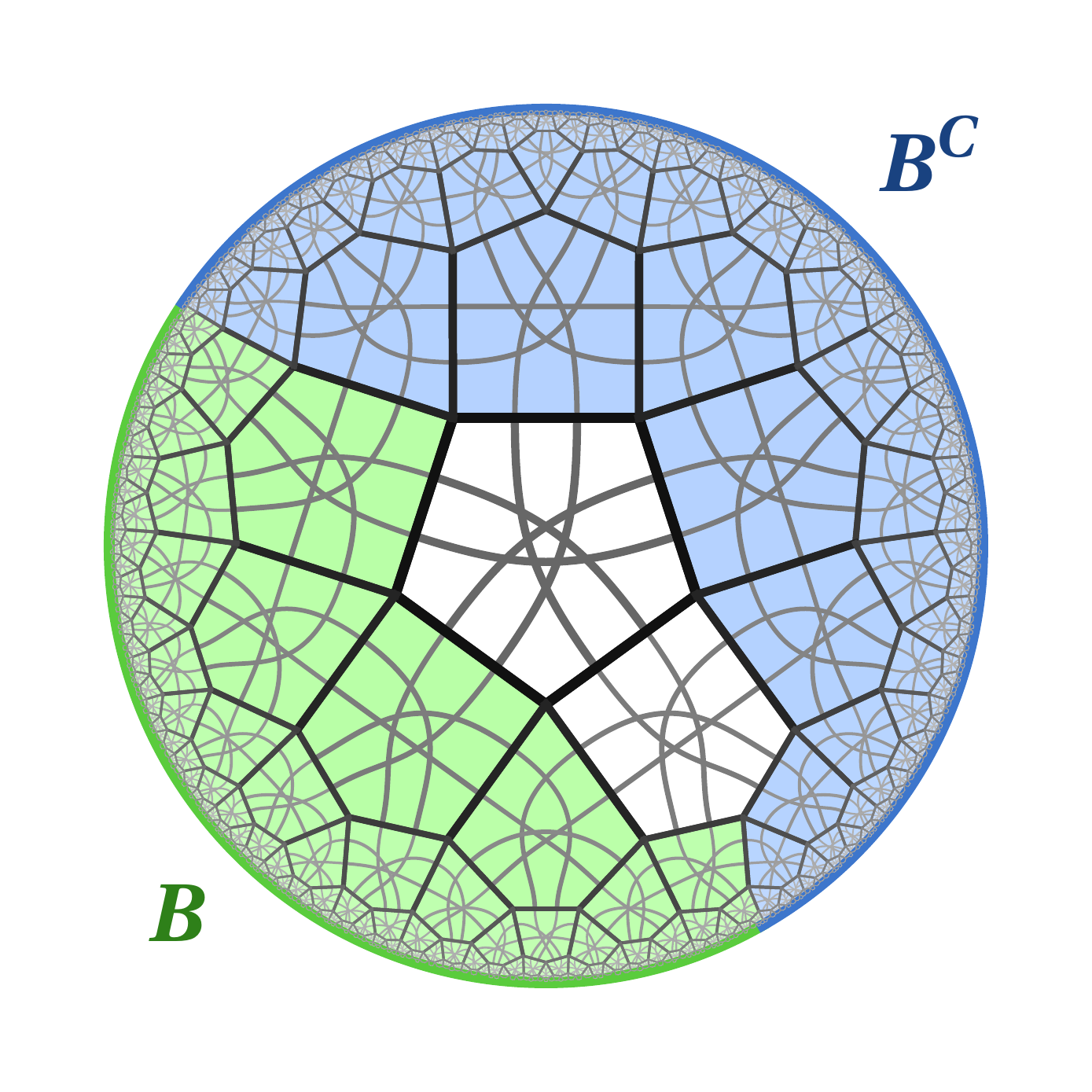}
\end{gathered}
\scalebox{1.3}{$\;\rightarrow\;$}
\begin{gathered}
\includegraphics[height=0.16\textheight]{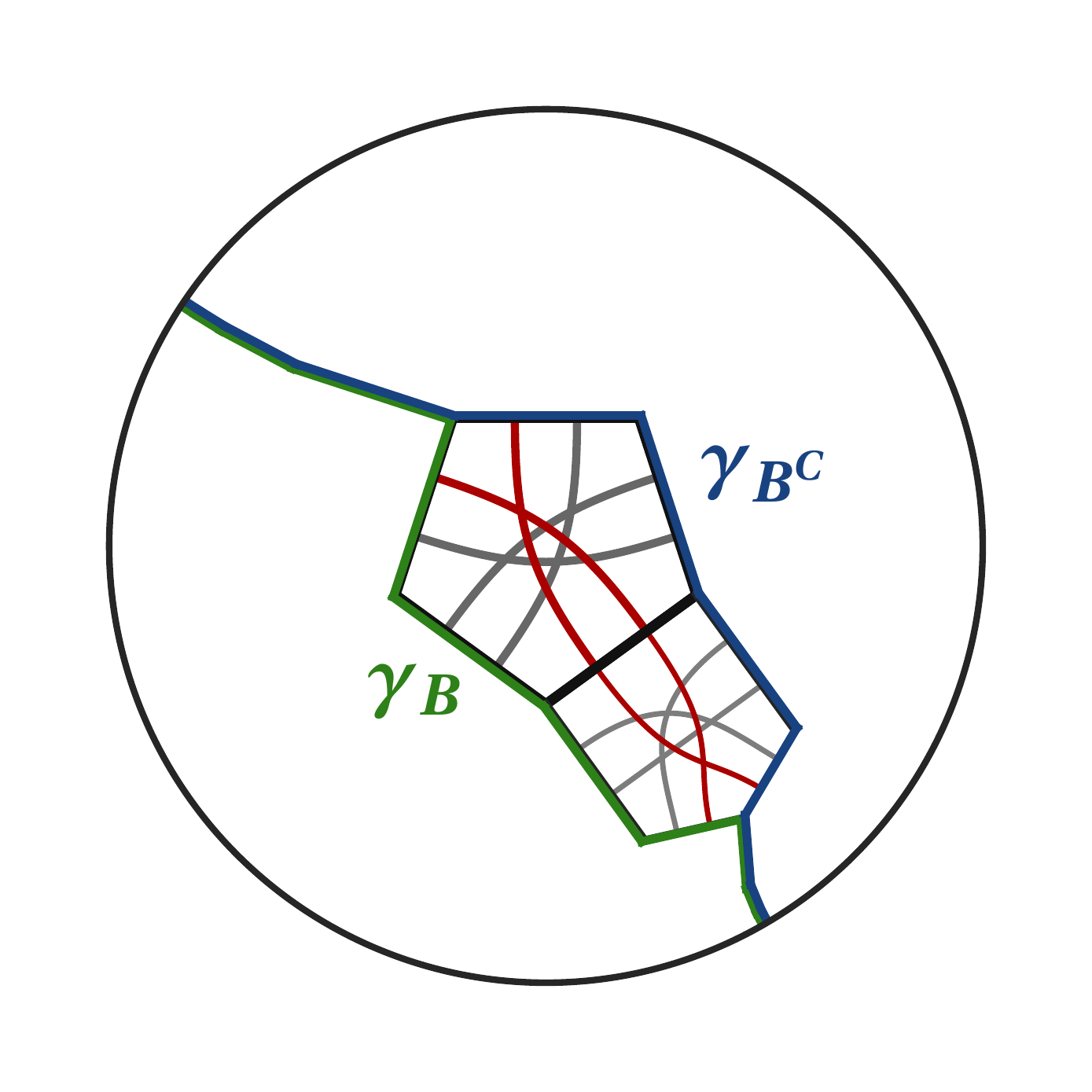}
\end{gathered}
\end{equation*}
\caption{Reducing boundary regions in the HyPeC with the greedy algorithm, for two boundary regions $A$ and $B$. $A$ reduces to the same geodesic $\gamma_A=\gamma_{A^{\text{C}}}$ as its complement $A^{\text{C}}$, while $B$ does not. On the left-hand side, the corresponding ``greedy wedge'' is shaded in the same color as the boundary regions. The residual dimers are shaded in red.}
\label{FIG_GREEDY2}
\end{figure}

\subsection{Scaling and RG flow}

As we saw in Fig.\ \ref{FIG_HAPPY_DIMER_ASYMPT}, contracting the HyPeC produces effective boundary EPR pairs connected along geodesics through the bulk. This allows for a naive interpretation in terms of IR/UV scaling: Longer geodesics that probe deeper into the bulk are then associated with the IR scale, while short-range geodesics close to the boundary correspond to UV modes. The iteration of contractions in Fig.\ \ref{FIG_HAPPY_CONTR_SERIES} is then interpreted as a renormalization group (RG) flow, with each new iteration adding additional degrees of freedom. As each tile connects to either one or two tiles of the previous iteration, there are two possible local scaling steps, both of which are shown in Fig.\ \ref{FIG_HAPPY_RENORM}. Thus either one or three new dimers are added in each local step.

\begin{figure}[tb]
\centering
\includegraphics[height=0.14\textheight]{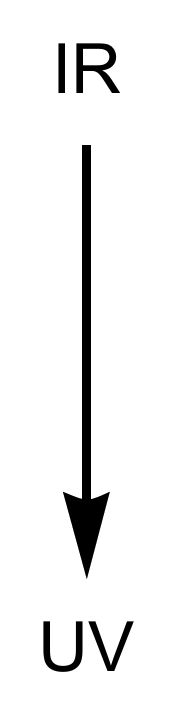}
\includegraphics[height=0.14\textheight]{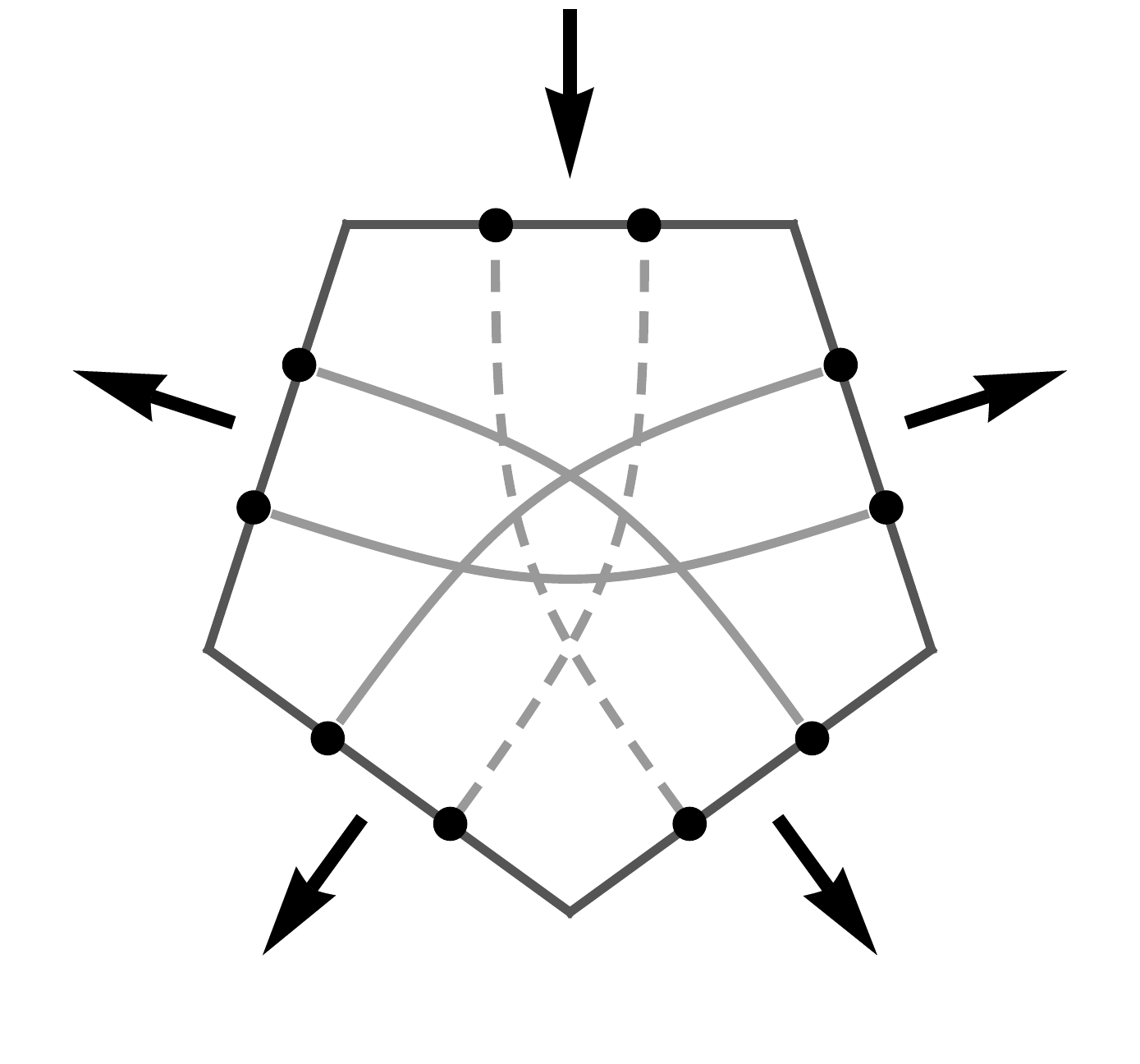}
\includegraphics[height=0.14\textheight]{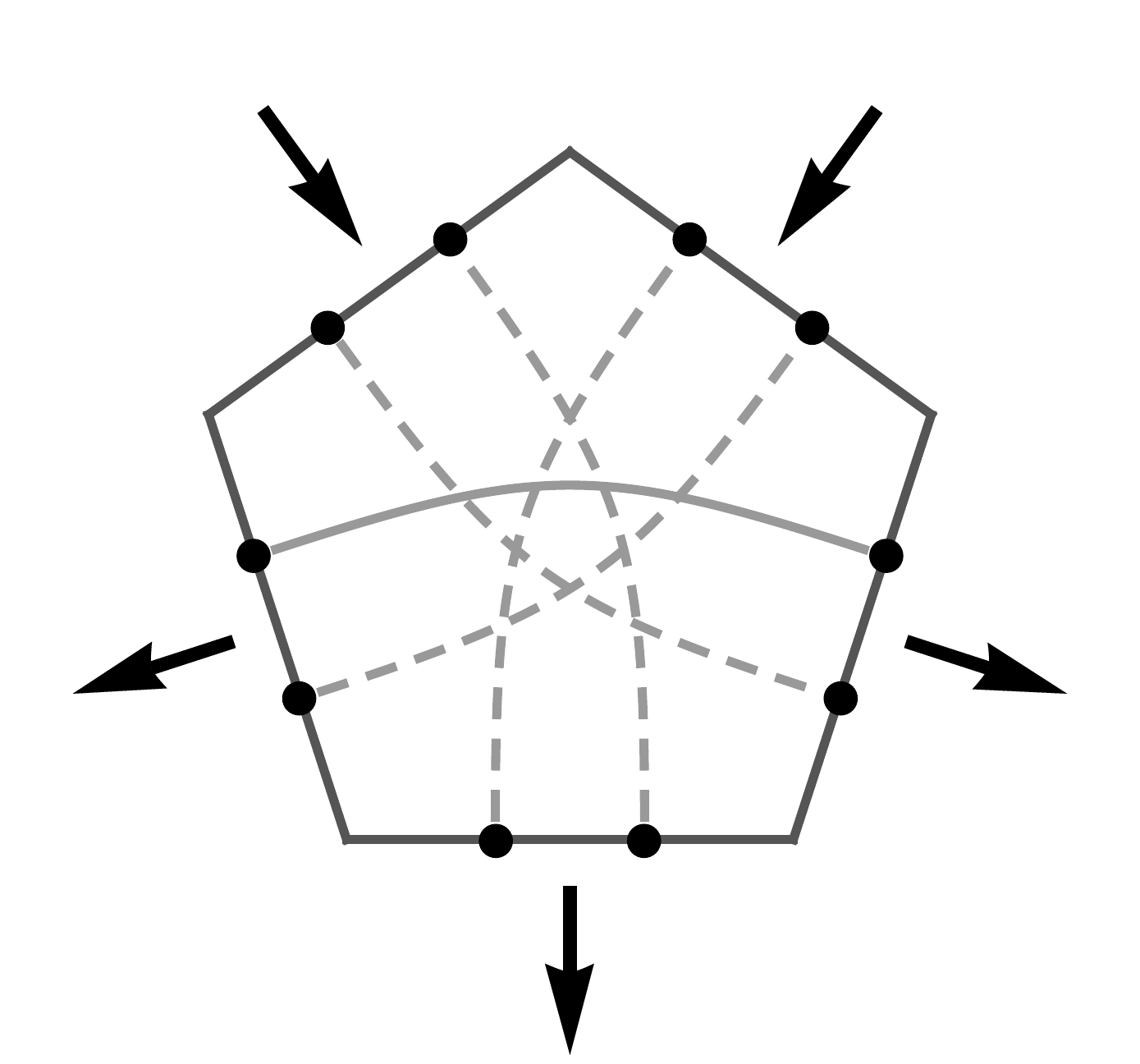}
\caption{Each tile in the HyPeC can act as a mapping of $1\to 4$ edges (left) or $2\to 3$ edge (right). Arrows distinguish between ``input'' (IR) and ``output'' (UV) edges. Dimers extended from previous tiles are drawn dashed, while new ones are drawn as solid curves. The dimer parities depend on the actual logical input on the tile.}
\label{FIG_HAPPY_RENORM}
\end{figure}

This IR$\to$UV renormalization step has a well-defined UV$\to$IR inverse constructed from the Hermititian conjugate of a specific tile. Consider the $2 \to 3$ step from Fig.\ \ref{FIG_HAPPY_RENORM}: The IR$\to$UV step consists of contracting the edges 1 and 2 of $\ket\psi = \alpha \ket{\bar{0}}_5 + \beta \ket{\bar{1}}_5$ (with $|\alpha|^2 + |\beta|^2=1$) onto the boundary state. To inverse this operation, we trace out the edges $3,4,5$ of $\ketbra\psi\psi$, which we can express using Majorana dimers (full calculation \eqref{EQ_HAPPY_SP_3C} in Appendix \ref{APP_EE_RULES}):
\begin{align}
\label{EQ_PENTA_TRACE}
\tr_{(3,4,5)}\, \ketbra{\psi}{\psi}&= \frac{|\alpha|^2 + |\beta|^2}{2}\;
\begin{gathered}
\includegraphics[height=0.06\textheight]{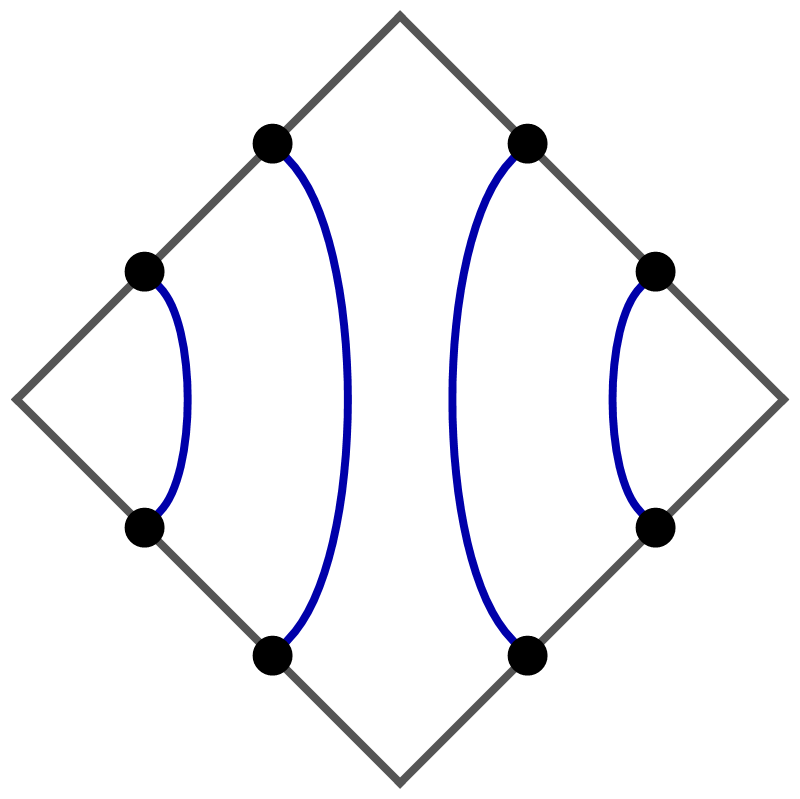}
\end{gathered} 
\;= \frac{\id_4}{4} \text{ .}
\end{align}
Note the representation of the identity as a set of dimers directly connecting two pairs of edges: Any Majorana dimer state is left invariant by contraction with such a state, and hence any other state, which we can always express in a dimer basis, as well.
We can similarly find that $\tr_{(2,3,4,5)}\, \ketbra{\psi}{\psi} = \id_2/2$. The inverse renormalization step is thus simply the reversal of Fig.\ \ref{FIG_HAPPY_RENORM}: Some dimers are closed into loops, thus tracing out the associated degrees of freedom. 
In fact, this result is closely related to the \textsl{perfect tensor} property of the HyPeC, whereby any pentagon tile can act as an isometry of $k \to 5-k$ edges as long as $k \leq 5-k$. We also use it in Appendix \ref{APP_EE_RULES} to prove the greedy algorithm with Majorana dimers.
\begin{center}
\begin{figure*}
\begin{equation*}
\begin{gathered}
n=1:\\
\vspace{7pt}
\includegraphics[height=0.14\textheight]{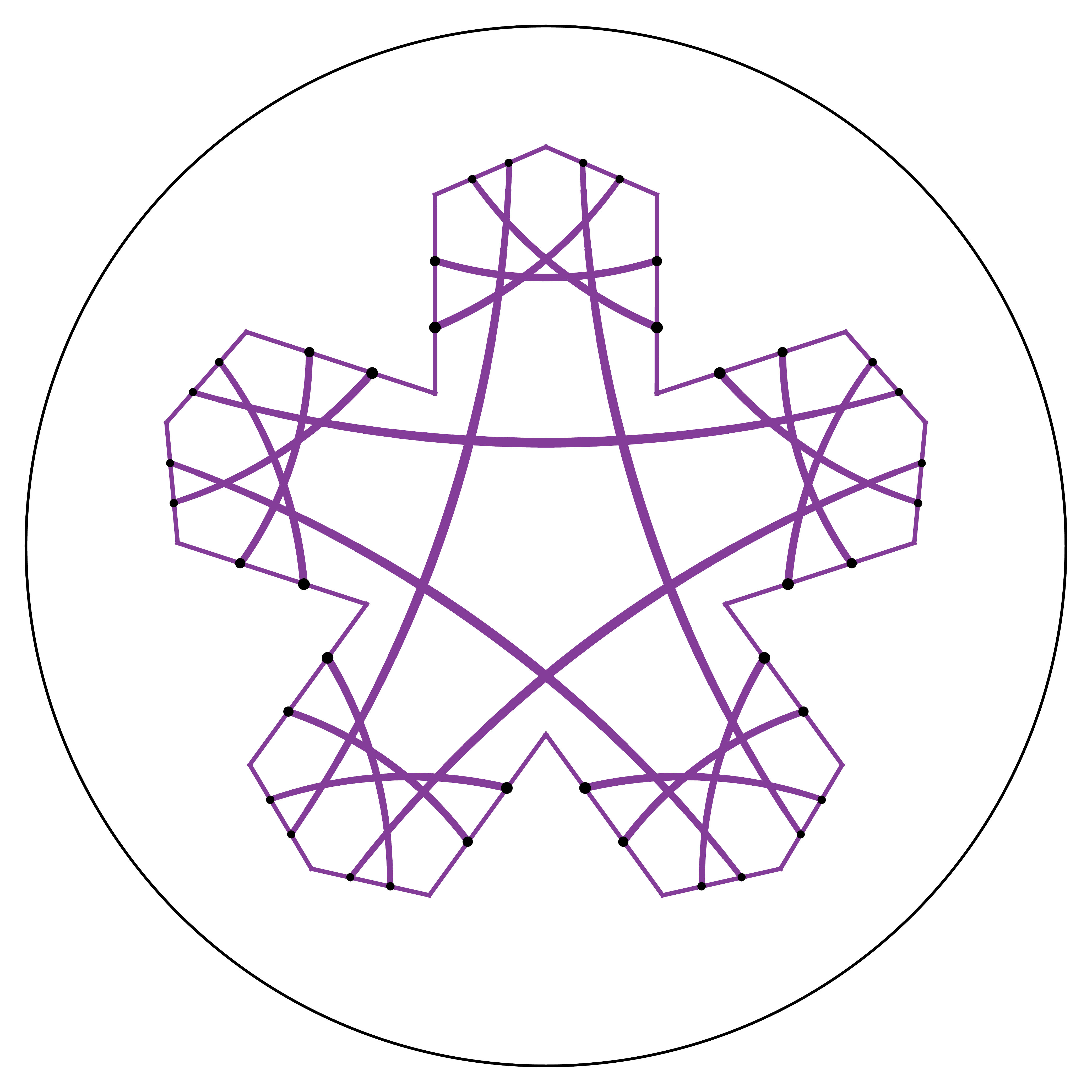}
\end{gathered}
\scalebox{1.3}{$\quad\rightarrow\quad$}
\begin{gathered}
n=2:\\
\vspace{7pt}
\includegraphics[height=0.14\textheight]{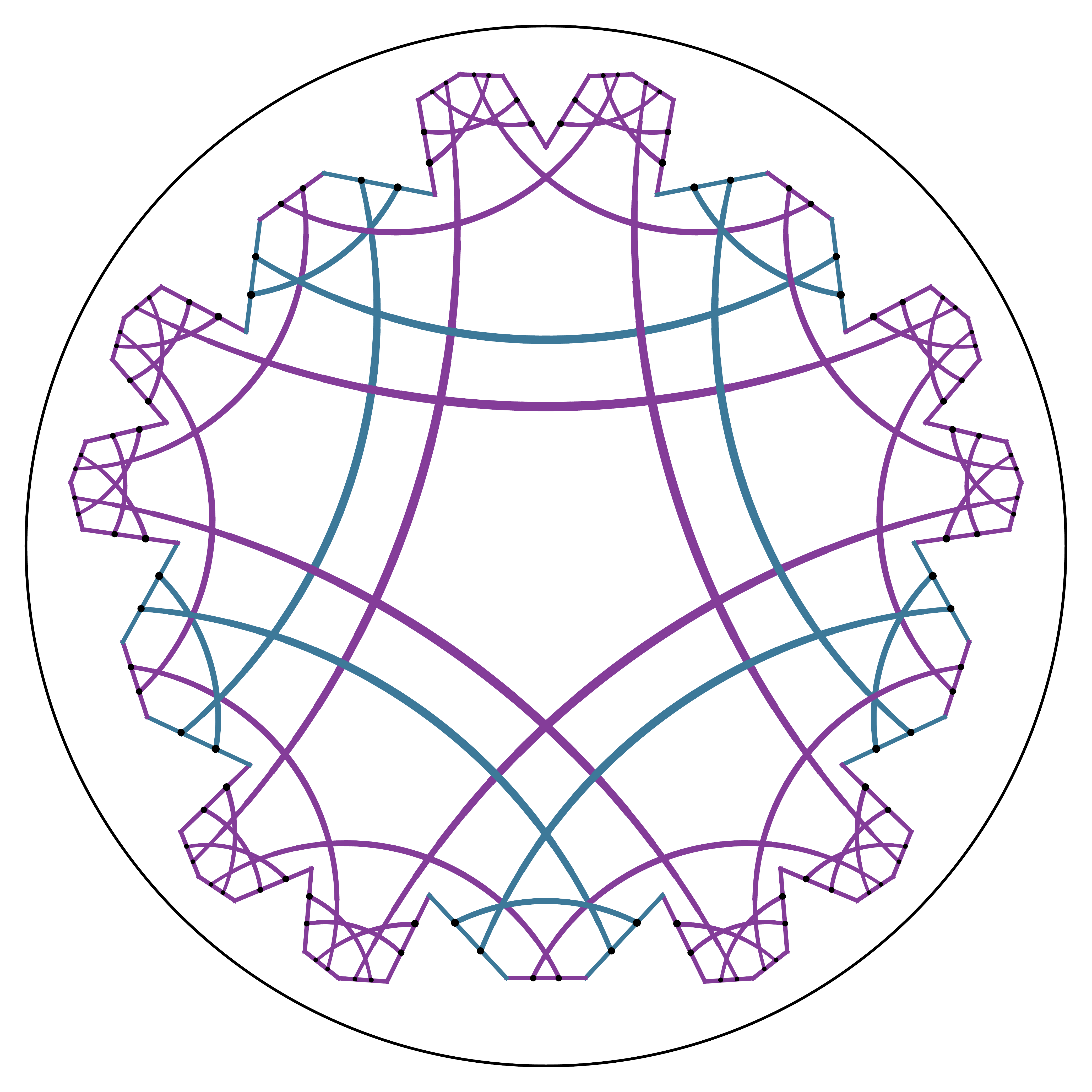}
\end{gathered}
\scalebox{1.3}{$\quad\rightarrow\quad$}
\begin{gathered}
n=3:\\
\vspace{7pt}
\includegraphics[height=0.14\textheight]{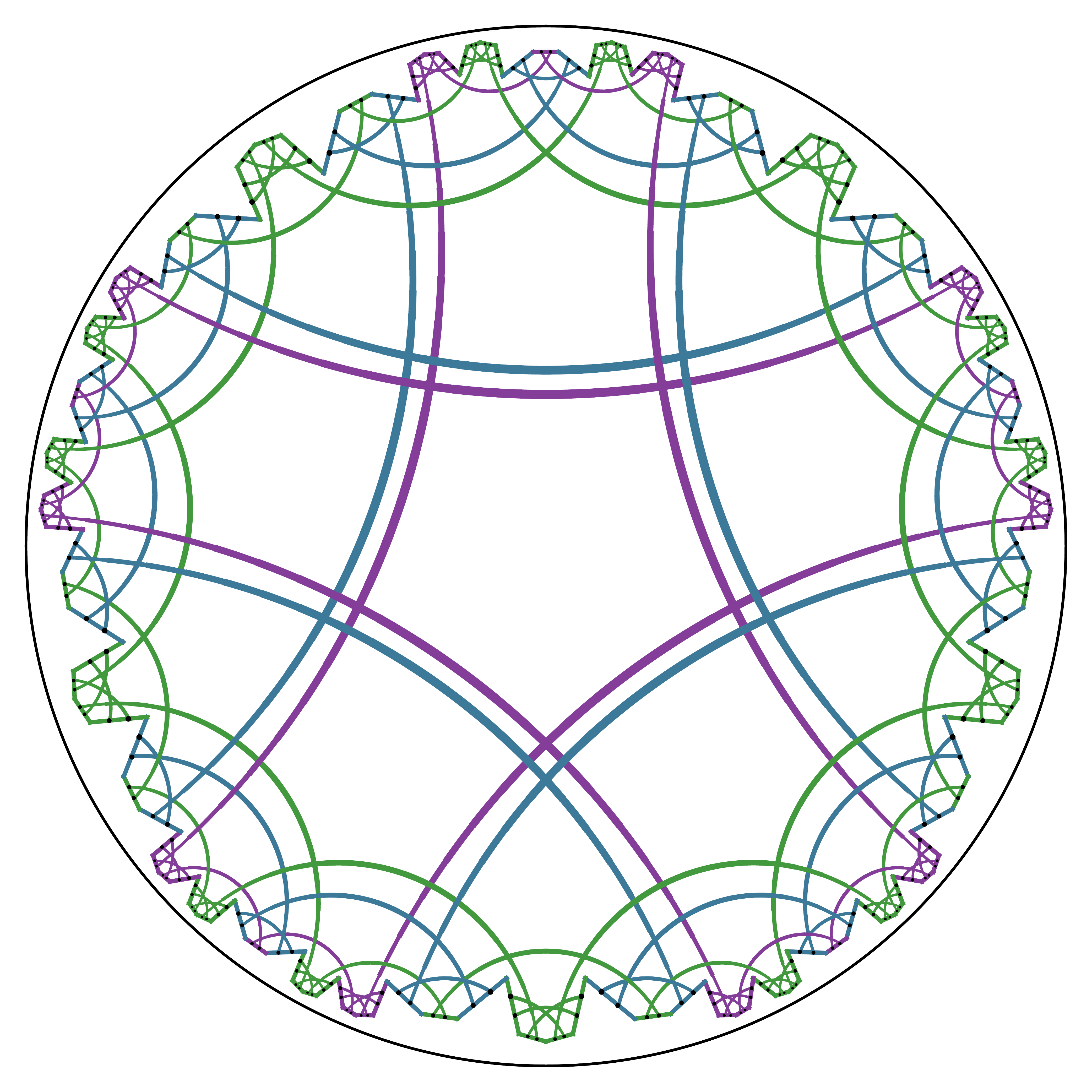}
\end{gathered}
\quad
\scalebox{1.3}{$\quad\rightarrow\quad\dots$}
\end{equation*}
\caption{Iterative contraction of the HyPeC, with Majorana dimers belonging to decoupled fermionic subsystems drawn in different colors. The number of such subsystems increases with iteration number $n$.}
\label{FIG_HAPPY_CONTR_SERIES2}
\end{figure*}
\end{center}

\vspace{-20pt}

While we saw that Majorana dimers form effective EPR pairs in the asymptotic limit of infinitely many contractions, we can also observe a separation of the boundary into separate fermionic subsystems at finite cutoff. The physical fermion corresponding to each uncontracted edge can be coupled to at most two other fermions/edges via the dimers it contains. These two fermions are again coupled to further fermions, so we end up with a -- necessarily closed -- chain of fermions, each only coupled to its two non-local ``neighbors''. However, as we contract more and more tiles, we find that our boundary fermions are separated into an increasing number of independent chains. This is shown in Fig.\ \ref{FIG_HAPPY_CONTR_SERIES2} for the first few iterations, where the dimers are colored according to the decoupled fermionic chain they belong to.
The appearance of additional decoupled subsystems at larger iterations is another sign of an RG flow: Increasing the number of iterations encodes more and more subsystems of varying sizes on the boundary. For the full HyPeC beyond basis-state input, correlations between these subsystems can be nonzero. As we will show next, however, such correlations can only be captured by $n$-point correlators with $n>2$.

\subsection{Correlation functions for general bulk input}

\begin{figure}[tb]
\centering
\begin{align*}
\begin{gathered}
\includegraphics[height=0.16\textheight]{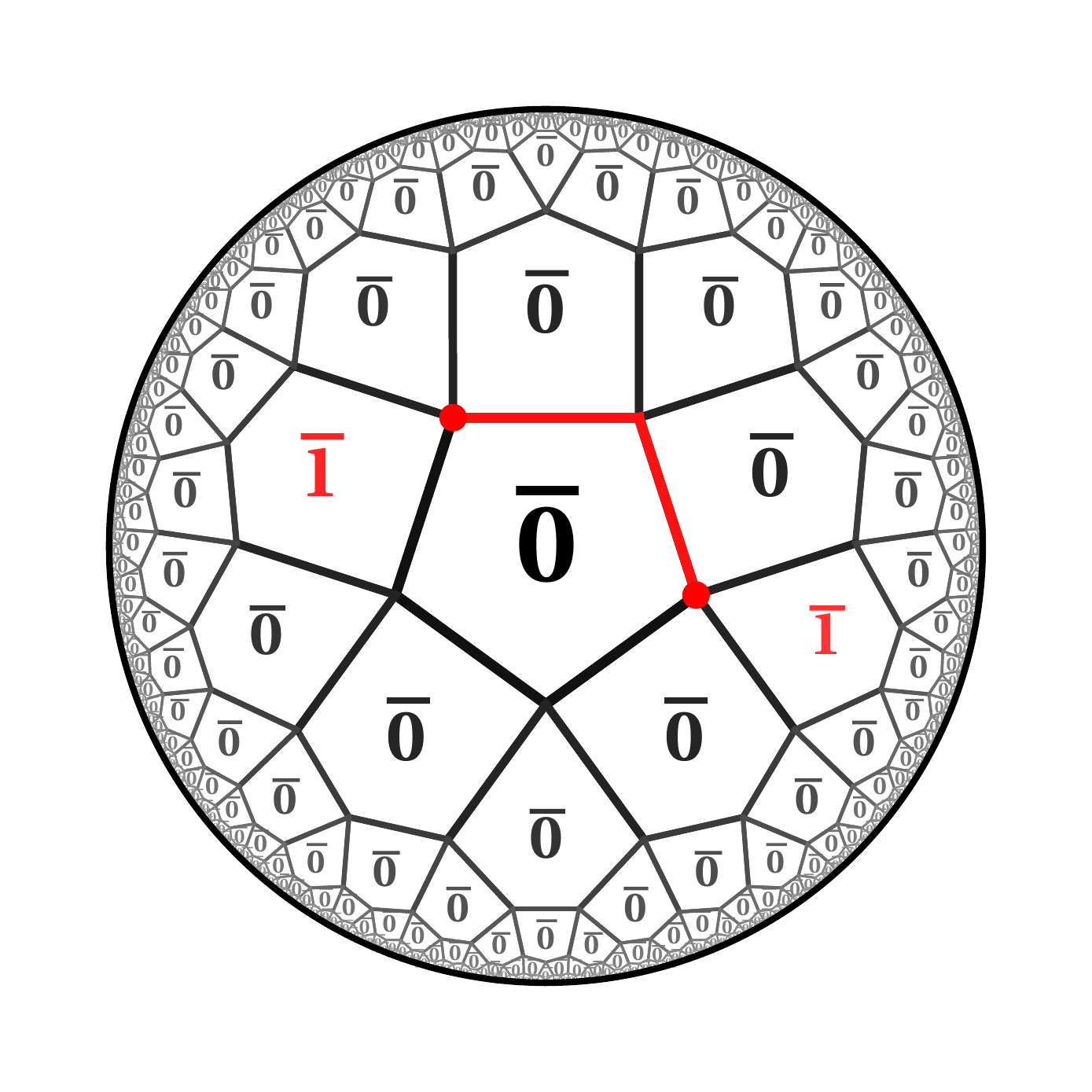}\vspace{-0.25cm}\\
\vspace{-0.25cm}
\scalebox{1.3}{$\updownarrow$}\\
\includegraphics[height=0.16\textheight]{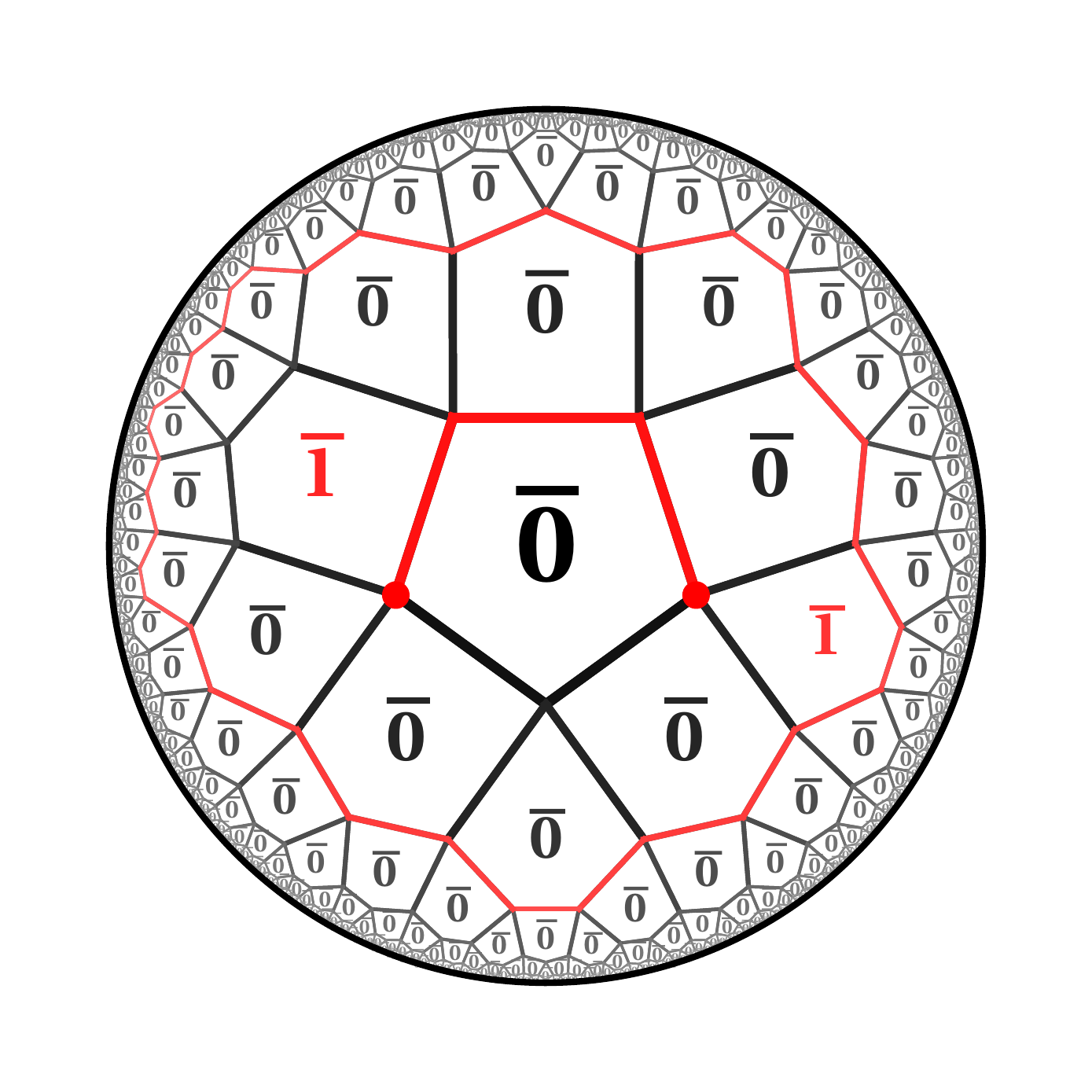}
\end{gathered}
\begin{gathered}
\includegraphics[height=0.16\textheight]{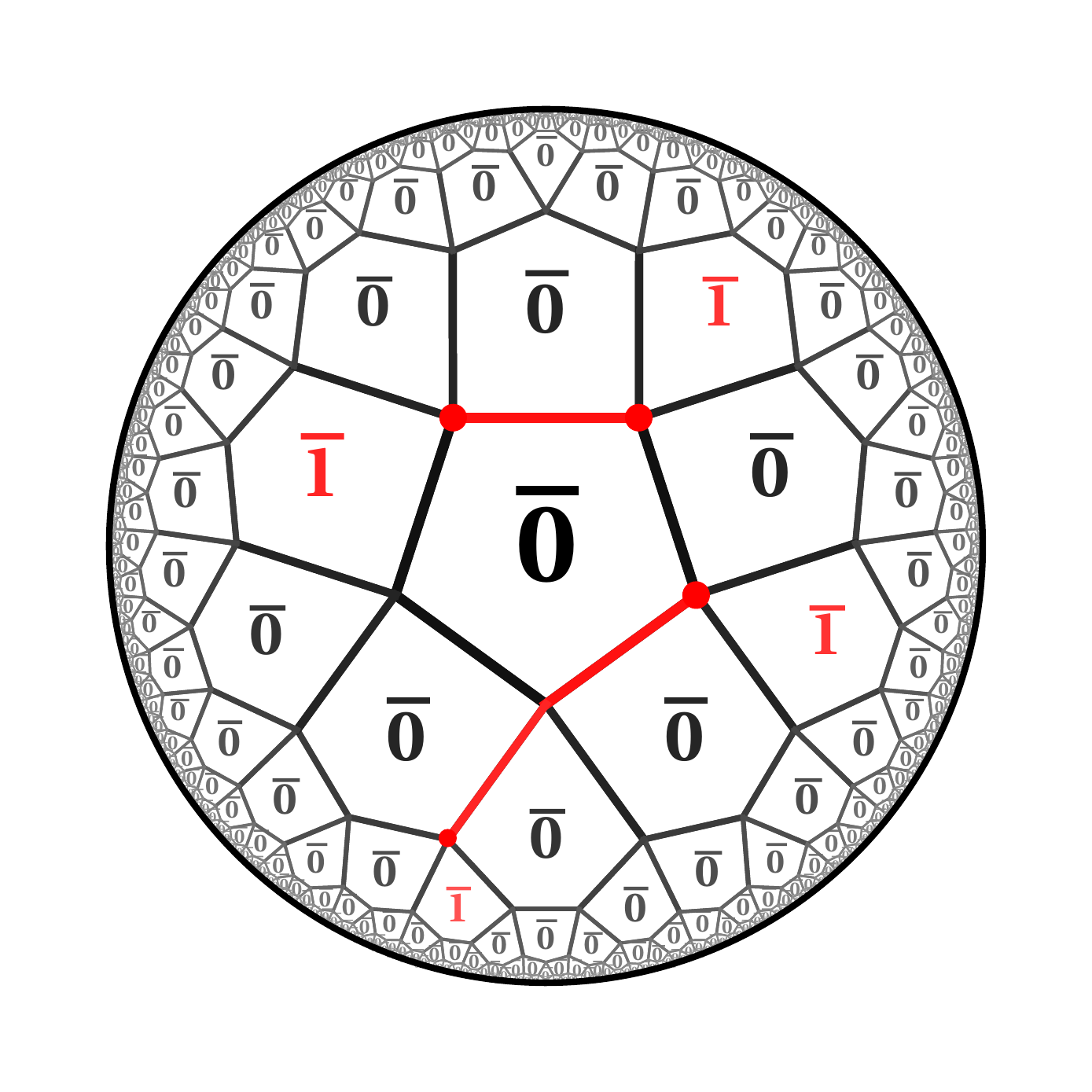}\vspace{-0.25cm}\\
\vspace{-0.25cm}
\scalebox{1.3}{$\updownarrow$}\\
\includegraphics[height=0.16\textheight]{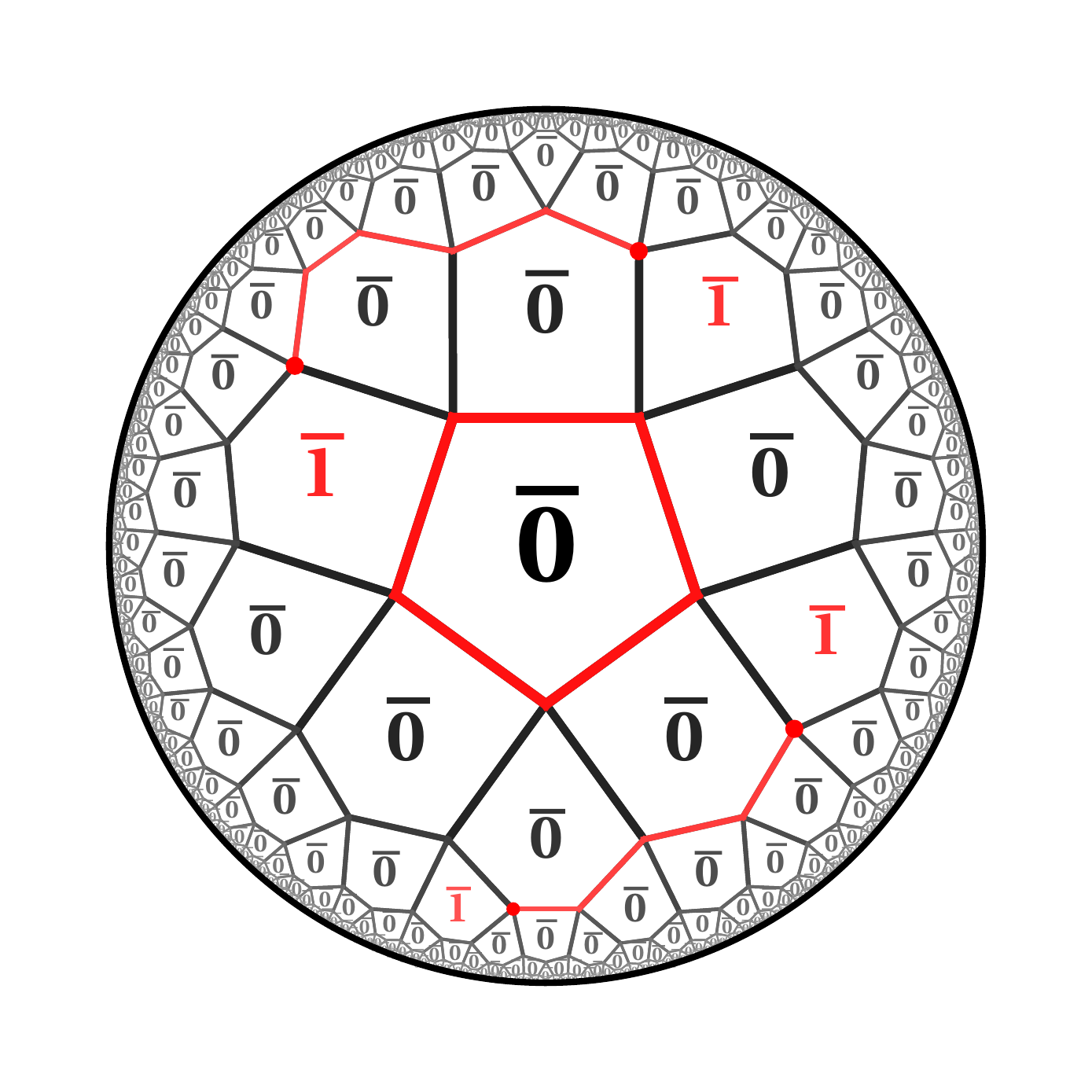}
\end{gathered}
\end{align*}
\caption{\textsc{Top:} Inserting two (left) and four (right) $\bar{1}$ tiles in the HyPeC. Beyond the local dimer parity flips in each tile, pairs of $\bar{1}$ tiles are connected by $Z$ strings (red lines), which flip dimer parities non-locally. The endpoints of these strings (red dots) set the local orientation of the $\bar{1}$ tiles connected to it. \textsc{Bottom:} Equivalent $Z$ string configuration of the upper two states.
}
\label{FIG_HAPPY_INPUT}
\end{figure}

By counting the dimers by the boundary distance over which they reach, the average correlation falloff of the Majorana covariance matrix $\Gamma$ defined in Eq.~\eqref{eq:cov_def}
%\begin{equation}
%\Gamma_{j,k} = \frac{\i\,}{2}\sandwich{\psi}{[\m_j,\m_k]}{\psi}
%\end{equation}
can be determined. For fixed input, this leads to a polynomial $\Gamma_{j,k} \propto |j-k|^{-1}$ falloff of two-point correlations \cite{Jahn:2017tls}, again resembling a CFT scaling. Naively, this holds only for the case of a fixed logical input $\bar{0}$ or $\bar{1}$ on each tile, as superpositions of Majorana dimer states are generally non-Gaussian and have a complicated two-point correlation structure.

However, we will now show that two-point correlations for the HyPeC with general bulk input, where local superpositions of $\bar{0}$ and $\bar{1}$ inputs are allowed, are surprisingly similar to the fixed-input case.
First, consider the dimer parity structure caused by local $\bar{1}$ inputs. As we showed in Fig.\ \ref{FIG_HAPPY_CONTR_SERIES}, using even-parity $\bar{0}$ input over the entire bulk leads to a simple contracted state, where all resulting dimer parities are even. When contracting over odd-parity $\bar{1}$ inputs, index permutations necessary during the contraction process can lead to additional dimer parity flips caused by $Z$ operators on some of the pentagon edges. After going through the contraction process, which is laid out in Appendix \ref{APP_CONTR_ORDER}, we find that these dimer parity flips can be grouped into strings of $Z$ operators between the tiles with $\bar{1}$ input. Possible configurations are shown in Fig.\ \ref{FIG_HAPPY_INPUT} for two and four $\bar{1}$ insertions. While neither the pairing of $\bar{1}$ tiles with $Z$ strings nor the paths of these strings are unique, we can freely deform them without changing the boundary state (bottom diagrams in Fig.\ \ref{FIG_HAPPY_INPUT}). Furthermore, we can freely add closed $Z$ loops around a set of tiles with an even number of $\bar{1}$ tiles in it, as this is equivalent to evaluating the total (even) parity of the contained tiles. 
Intriguingly, we can relate this to physical rotations of tiles: While the dimer parities of $\bar{0}$ tiles are invariant under cyclic permutations (in the spin picture), we showed in \eqref{EQ_PENTAGON_STATES2} that a rotation of a tile with $\bar{1}$ input is equivalent to tracing the shifted endpoint of the local ordering with a $Z$ string. A full ``rotation'' (leading to a closed $Z$ loop around a $\bar{1}$ tile) changes the state by a total minus sign. In other words, as shown in Fig.\ \ref{FIG_HAPPY_INPUT}, $Z$ string loops around tiles with an even number of $\bar{1}$ insertions leave the state invariant.
Thus, it is tempting to interpret the $\bar{0}$ tiles as local fermionic vacua and the $\bar{1}$ tiles as logical fermions, emergent from the underlying spin degrees of freedom.

The set of boundary states for all possible basis-state bulk inputs ($\bar{0}$ or $\bar{1}$ on each tile) gives us a basis set for the states of the full HyPeC. In general, boundary $n$-point functions for an arbitrary input can have a correlation structure completely different from the dimer structure we saw for logical basis-state input, and overlaps between different basis states can change the entanglement structure. Fortunately, as we show in Appendix \ref{APP_CONTR_ORDER}, the HyPeC boundary states for different basis inputs are all distinct by an operator of \emph{Majorana weight} $\mathfrak{w}{>}2$, i.e., at least three Majorana operators $\m_k$ are required to map one basis state to another.
This leads us to the following conclusion:
\begin{theorem}
\label{THM_TWO_POINT_CORR}
For a contraction of $N$ pentagon tiles of the HyPeC, two-point correlation functions of the boundary states are convex combinations of the covariance matrices for any logical basis input.
\end{theorem}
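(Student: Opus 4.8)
\emph{Proof strategy.} The plan is to reduce the statement to a short linear-algebra computation on the boundary state and then kill every interference term using the Majorana-weight separation established in Appendix~\ref{APP_CONTR_ORDER}. First I would write a general local bulk input as $\ket{\Psi_{\rm bulk}}=\bigotimes_{t=1}^{M}(\alpha_t\ket{\bar 0}_5+\beta_t\ket{\bar 1}_5)$ with $|\alpha_t|^2+|\beta_t|^2=1$, and expand it in the $2^M$ computational-basis bulk states, $\ket{\Psi_{\rm bulk}}=\sum_{\vec b\in\{0,1\}^{M}}c_{\vec b}\,\ket{\bar b_1}_5\otimes\cdots\otimes\ket{\bar b_M}_5$ with $c_{\vec b}=\prod_t\alpha_t^{1-b_t}\beta_t^{b_t}$, so that $\sum_{\vec b}|c_{\vec b}|^2=\prod_t(|\alpha_t|^2+|\beta_t|^2)=1$. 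Since tensor contraction is linear, the boundary state is $\ket{\Psi_\partial}=\sum_{\vec b}c_{\vec b}\ket{\omega_{\vec b}}$, where $\ket{\omega_{\vec b}}$ is the Majorana dimer boundary state obtained from the basis input $\vec b$ (by the theorem ``Representing the HyPeC with Majorana dimers''). Because the HyPeC encoding is an isometry and the basis inputs are orthonormal, the family $\{\ket{\omega_{\vec b}}\}$ is orthonormal, so $\ket{\Psi_\partial}$ is automatically normalized.

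Next I would insert this decomposition into the covariance matrix $\Gamma^{\Psi_\partial}_{j,k}=\tfrac{\i}{2}\bra{\Psi_\partial}[\m_j,\m_k]\ket{\Psi_\partial}$ and expand bilinearly. The diagonal terms ($\vec b=\vec b'$) assemble into $\sum_{\vec b}|c_{\vec b}|^2\,\Gamma^{\omega_{\vec b}}$, which is precisely the claimed convex combination with weights $q_{\vec b}=|c_{\vec b}|^2\ge 0$, $\sum_{\vec b}q_{\vec b}=1$. It then remains to show that every off-diagonal term $\bra{\omega_{\vec b'}}[\m_j,\m_k]\ket{\omega_{\vec b}}$ with $\vec b\neq\vec b'$ vanishes. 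For $j=k$ this is zero trivially; for $j\neq k$ it equals $2\bra{\omega_{\vec b'}}\m_j\m_k\ket{\omega_{\vec b}}$. Here I would invoke that, by the single-dimer-flip rule of Section~\ref{SEC_MAJ_DIM}, $\m_j\m_k$ sends a Majorana dimer state to another (normalized) Majorana dimer state with the \emph{same} dimer pairing, changing at most two dimer parities (none if $j,k$ are the two ends of one dimer, the parities of the dimers through $j$ and through $k$ otherwise). On the other hand, the appendix result guarantees that $\ket{\omega_{\vec b}}$ and $\ket{\omega_{\vec b'}}$ share a common pairing and differ in at least three dimer parities. Hence $\m_j\m_k\ket{\omega_{\vec b}}$ and $\ket{\omega_{\vec b'}}$ share a pairing and still differ in at least $3-2=1$ parity, and by the orthogonality criterion for same-pairing Majorana dimer states they are orthogonal. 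Every cross term thus vanishes and $\Gamma^{\Psi_\partial}=\sum_{\vec b}|c_{\vec b}|^2\,\Gamma^{\omega_{\vec b}}$, which is the asserted convex combination of the basis-input covariance matrices.

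The main obstacle is the ingredient imported from the appendix, namely that any two distinct computational-basis boundary states of the HyPeC are separated by a Majorana operator of weight larger than two, equivalently that they differ in at least three dimer parities. Everything else in the argument -- linearity of contraction, orthonormality from the isometry property, and the fact that a single $\m$ flips exactly one dimer parity -- is routine bookkeeping; this separation statement is where the geometry of the $\{5,4\}$ tiling and the surviving code distance $d=3$ genuinely enter, through tracking how the $\bar 1$-induced $Z$-strings of Fig.~\ref{FIG_HAPPY_INPUT} propagate through the contraction and arguing they can never collapse to a weight-$\le 2$ boundary operator. A minor point to handle carefully is the degenerate case in which $j,k$ lie on a common dimer of $\omega_{\vec b}$, where $\m_j\m_k\ket{\omega_{\vec b}}$ is merely a phase times $\ket{\omega_{\vec b}}$ itself and orthogonality to $\ket{\omega_{\vec b'}}$ is immediate.
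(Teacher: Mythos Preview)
Your proposal is correct and follows essentially the same approach as the paper: expand the boundary state in the computational-basis inputs, insert into the two-point function, and kill the cross terms using the Majorana-weight separation $\mathfrak{w}>2$ from Appendix~\ref{APP_CONTR_ORDER}. The only minor difference is that you explicitly restrict to product bulk inputs with factorized coefficients $c_{\vec b}=\prod_t\alpha_t^{1-b_t}\beta_t^{b_t}$, whereas the paper's argument works verbatim for arbitrary coefficients $c_{\vec b}$ (even entangled bulk input); since your vanishing-of-cross-terms argument never uses the factorization, your proof extends to that generality without change.
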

\begin{proof}
We denote by $\ket{b}:= \ket{b_1, b_2,\dots,b_N}$ the state vector for a fixed basis-state input $b_k$ on the $k$th pentagon. A general HyPeC boundary state vector is given by the superposition
\begin{equation}
\ket\omega = \sum_{b\in \{\bar{0},\bar{1}\}^{\times N}} c_b \ket{b} \text{ ,}
\end{equation}
with $c_b \in \mathbb{C}$. 
A fermionic two-point correlation function with entries
\begin{align}
G^{(2)}_{j,k} &= \frac{\i\,}{2}\sandwich{\omega}{[\m_j, \m_k]}{\omega} \\
&=\sum_{b,b^\prime \in \{\bar{0},\bar{1}\}^{\times N}} \frac{\i\, c_b^\star c_{b^\prime}}{2} \sandwich{b}{[ \m_j,\m_k ]}{b^\prime}
\end{align}
is generally a sum of $2^{2N}$ terms. However, we assumed that two boundary states for different basis-state inputs $b$ and $b^\prime$ are separated by a $\mathfrak{w}>2$ operator, i.e.\ fulfill the conditions
\begin{align}
\braket{b}{b^\prime} &= 0 \text{ ,} \\
\bra{b} \m_j \ket{b^\prime} &= 0 \text{ ,} \\
\bra{b} \m_j \m_k \ket{b^\prime} &= 0 \text{ .}
\end{align}
In other words, the expectation values of operators with Majorana weight $\mathfrak{w}\leq 2$ are diagonal in the logical basis. This implies 
\begin{align}
G^{(2)}_{j,k} &= \sum_{b \in \{\bar{0},\bar{1}\}^{\times N}} \frac{\i\,c_b^* c_{b}}{2} \sandwich{b}{[ \m_j,\m_k ]}{b} \nonumber\\
&= \sum_{b \in \{\bar{0},\bar{1}\}^{\times N}} |c_b|^2\, \Gamma^b_{j,k} \text{ ,}
\end{align}
where $\Gamma^b_{j,k}= \i \sandwich{b}{[\m_j,\m_k]}{b}/2$ are the covariance matrices for the Gaussian boundary state for a basis-state input $b$. 
\end{proof}
This enormously simplifies the computation of fermionic two-point correlation functions. For example, consider the contraction of only two pentagon states: There are four possible fixed logical bulk inputs, with a $\bar{0}$ or $\bar{1}$ input on either pentagon. When contracted, these lead to the ``boundary'' state vectors $\ket{\bar{0},\bar{0}}_8$, $\ket{\bar{0},\bar{1}}_8$, $\ket{\bar{1},\bar{0}}_8$, and $\ket{\bar{1},\bar{1}}_8$ on eight edges. Now consider a general logical input, i.e.,\ a state vector $\alpha_1 \ket{\bar{0}}_5 + \beta_1 \ket{\bar{1}}_5$ on the first tile and $\alpha_2 \ket{\bar{0}}_5 + \beta_2 \ket{\bar{1}}_5$ on the second (with $|\alpha_k|^2 + |\beta_k|^2=1$). As tensor contraction is a linear operation, the contracted state vector is given by
\begin{align}
\ket\omega = \alpha_1 \alpha_2 &\ket{\bar{0},\bar{0}}_8 + \beta_1 \beta_2 \ket{\bar{1},\bar{1}}_8  \\
+ \alpha_1 \beta_2 &\ket{\bar{0},\bar{1}}_8 + \beta_1 \alpha_2 \ket{\bar{1},\bar{0}}_8 \text{ .}
\end{align}
In dimers, the explicit basis-state contractions are
\begin{align}
\ket{\bar{0},\bar{0}}_8 \;&=
\begin{gathered}
\includegraphics[height=0.1\textheight]{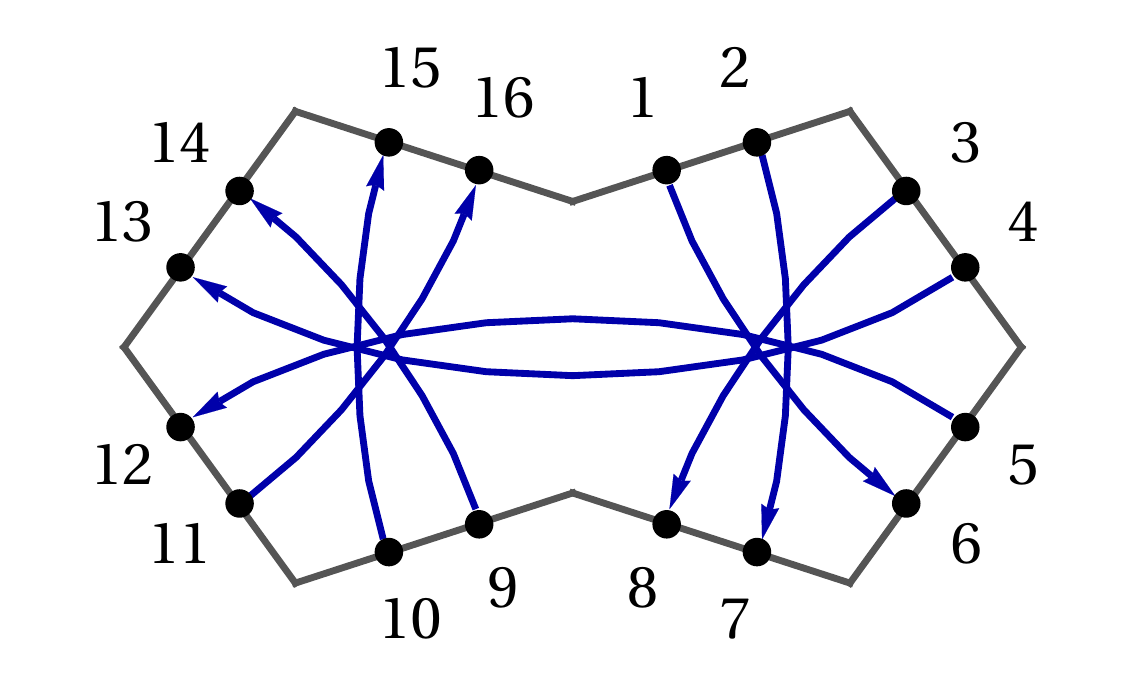}
\end{gathered} \ ,\\
\ket{\bar{0},\bar{1}}_8 \;&=
\begin{gathered}
\includegraphics[height=0.1\textheight]{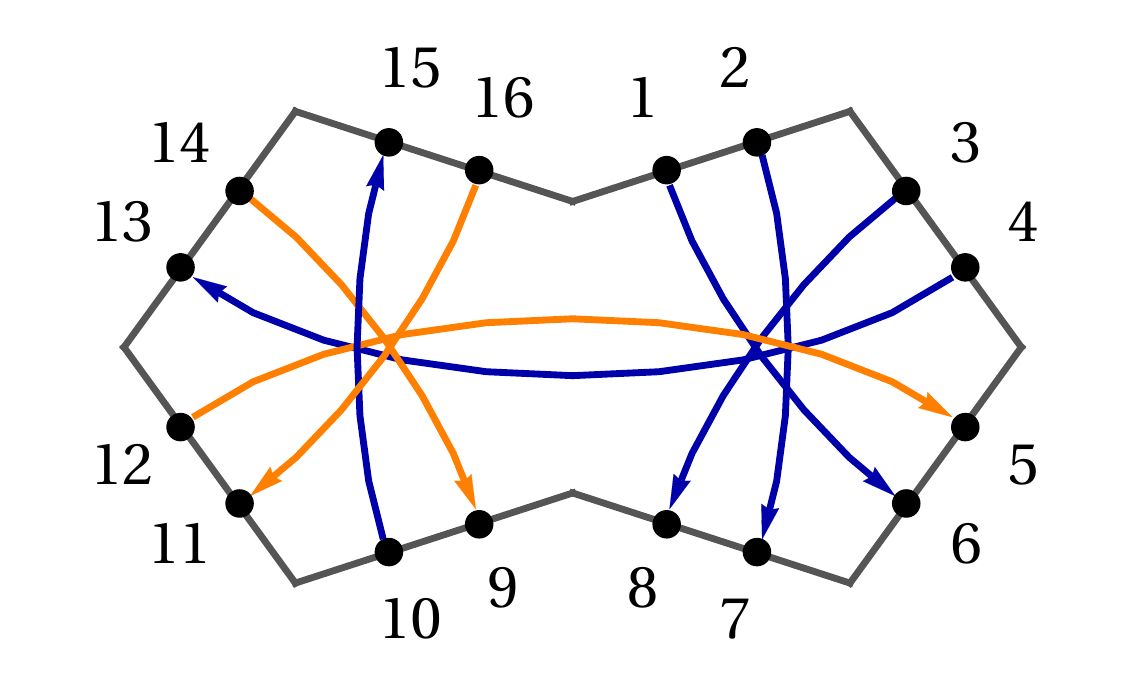}
\end{gathered}\ ,\\
\ket{\bar{1},\bar{0}}_8 \;&=
\begin{gathered}
\includegraphics[height=0.1\textheight]{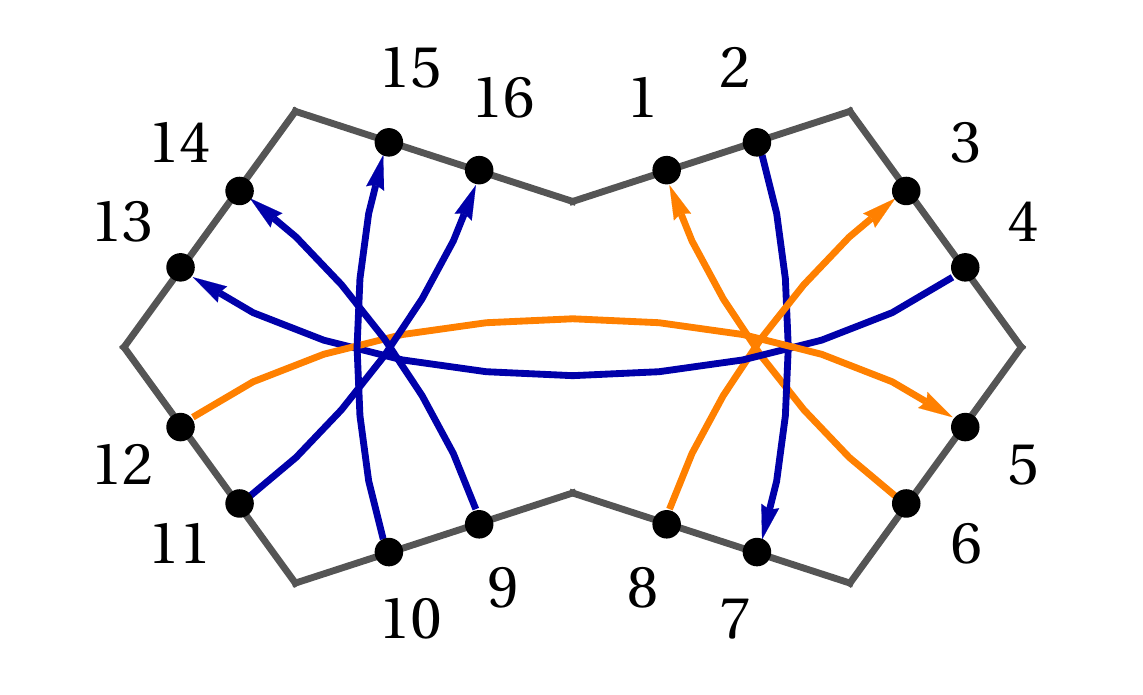}
\end{gathered}\ ,\\
\ket{\bar{1},\bar{1}}_8 \;&=
\begin{gathered}
\includegraphics[height=0.1\textheight]{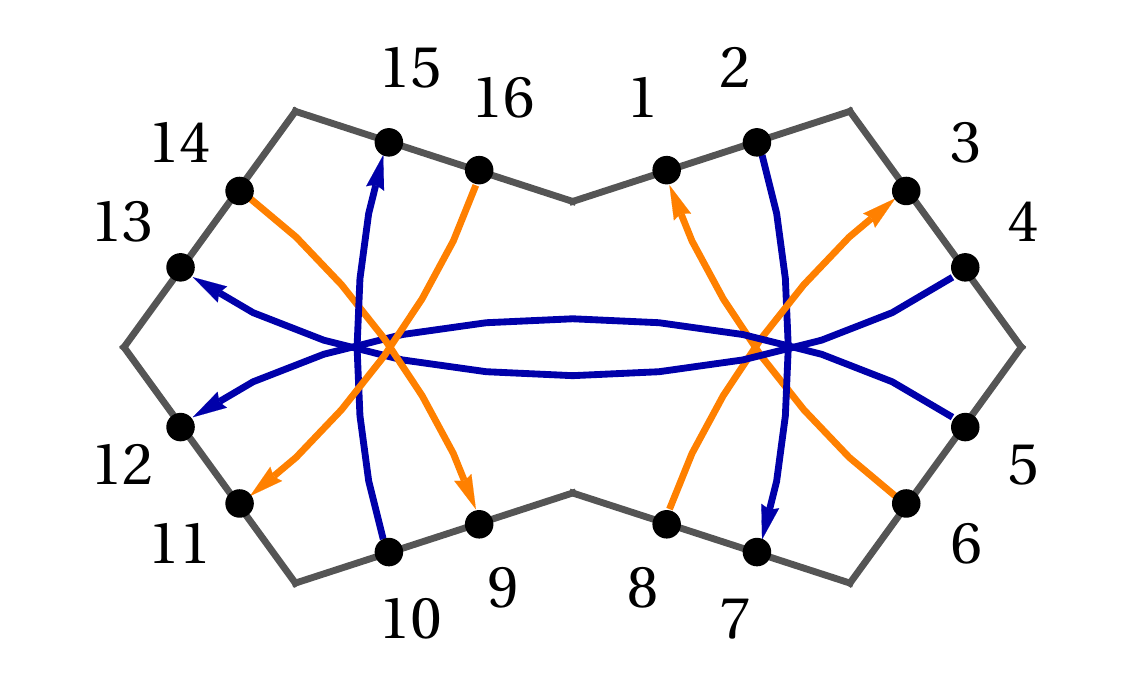}
\end{gathered} \ .
\end{align}
Note that in this labeling, the first pentagon is on the right. As we can see, each of these state vectors is distinguished from the others by at least three dimer parity flips, i.e.\ it requires more than two Majorana operators to map between them. As a result, $G^{(2)}$ only contains four diagonal terms,
\begin{align}
G^{(2)}_{j,k} = &|\alpha_1|^2 |\alpha_2|^2\, \Gamma_{j,k}^{\bar{0},\bar{0}} + |\beta_1|^2 |\beta_2|^2\, \Gamma_{j,k}^{\bar{1},\bar{1}} \nonumber\\
&+ |\alpha_1|^2 |\beta_2|^2\, \Gamma_{j,k}^{\bar{0},\bar{1}}+ |\beta_1|^2 |\alpha_2|^2\, \Gamma_{j,k}^{\bar{1},\bar{0}} \text{ ,}
\end{align}
with the covariance matrix $\Gamma_{j,k}^{b_1,b_2}$ containing two-point correlations for the basis state vector $\ket{b_1,b_2}_8$. 

This example, as well as the general case of Theorem \ref{THM_TWO_POINT_CORR}, implies that two-point functions $G^{(2)}$ preserve the correlation structure of the fixed-input covariance matrices $\Gamma$, whose entries only differ by signs (i.e.,\ dimer parities). Hence, if $\Gamma_{j,k}=0$ for a fixed logical input (no dimer connecting Majorana modes $j$ and $k$), then two-point correlations between the two modes vanish for \emph{any} bulk input. 
This is even true for the case of superpositions with entangled bulk input, where $G^{(2)}$ is still a convex sum. 
Higher-order correlation functions separate into a Gaussian part that follows Wick's theorem and have the form of products of dimer terms, and a non-Gaussian part which contains contributions from overlaps between boundary states for different logical input. 
To illustrate, consider a single pentagon with arbitrary logical input, described by $\ket\psi = \alpha \ket{\bar{0}}_5 + \beta \ket{\bar{1}}_5$ (with $|\alpha|^2 + |\beta|^2=1$). The $n$-point correlators $G^{(n)}$ until $n=4$ are given by
\begin{align}
G^{(1)}_j = &\sandwich{\psi}{\m_j}{\psi} = 0 \text{ ,} \\
G^{(2)}_{j,k} = &\i\,\sandwich{\psi}{\m_{[j} \m_{k]}}{\psi} = |\alpha|^2\, \Gamma^{\bar{0}}_{j,k} + |\beta|^2\, \Gamma^{\bar{1}}_{j,k} \text{ ,}\\
G^{(3)}_{j,k,l} = &-\i\,\sandwich{\psi}{ \m_{[j} \m_k \m_{l]}}{\psi} \nonumber\\
= &-\i\,\alpha^\star \beta \sandwich{\bar{0}}{\m_{[j} \m_k \m_{l]}}{\bar{1}} + \text{ h.c. ,} \\
G^{(4)}_{j,k,l,m} = &\sandwich{\psi}{\m_{[j} \m_k \m_l \m_{m]}}{\psi} \nonumber\\
= &|\alpha|^2\, (\Gamma^{\bar{0}}_{i,j}\Gamma^{\bar{0}}_{k,m} - \Gamma^{\bar{0}}_{i,k}\Gamma^{\bar{0}}_{j,m} + \Gamma^{\bar{0}}_{i,m}\Gamma^{\bar{0}}_{j,k}) \nonumber\\
&+ |\beta|^2\, (\Gamma^{\bar{1}}_{i,j}\Gamma^{\bar{1}}_{k,m} - \Gamma^{\bar{1}}_{i,k}\Gamma^{\bar{1}}_{j,m} + \Gamma^{\bar{1}}_{i,m}\Gamma^{\bar{1}}_{j,k}) \text{ .}
\end{align}
We have used square brackets around indices to denote anti-symmetrization. Gaussian contributions can occur only for even $n$, as only pairs of Majorana operators can map a dimer state onto itself. As $\ket{\bar{0}}_5$ and $\ket{\bar{1}}_5$ are mapped to each other by a $\mathfrak{w}=3$ operator, the non-Gaussian part appears at $n=3$:
The correlator $G^{(3)}_{j,k,l}$ can have non-zero entries for $j \in \{ 1,6 \}$, $k\in\{3,8\}$, $l\in\{5,10\}$ and its permutations, corresponding to the dimers differing between both input states (compare \eqref{HAPPY_ZERO} and \eqref{HAPPY_ONE}). As the exact entries of $G^{(3)}$ depend on the complex phases with which we define $\ket{\bar{0}}_5$ and $\ket{\bar{1}}_5$, they are not determined by the Majorana dimer structure.

Our example generalizes to large HyPeC contractions: The Gaussian part of $n$-point correlations $G^{(n)}$ is described by a convex combination of Gaussian covariance matrices, while all boundary states for fixed logical input that differ by $n$ dimer parities contribute to its non-Gaussian part. We can think of the latter as an ``interaction'' between code words that depends on how much the logical bulk input is in a superposition of the basis state vectors $\bar{0}$ and $\bar{1}$. For a completely classical version of the code, no non-Gaussianity appears.

\section{Generalized codes with Majorana dimers}
\subsection{Other stabilizer codes}
We have extensively focused on the $[[5,1,3]]$ stabilizer code as the building block of the HyPeC. However, we can construct Majorana dimer models for states on other $n$-gons, i.e.,\ more general $[[n,1,d]]$ stabilizer codes. We now show that these have properties very similar to the $n=5$ case.

We set a number of requirements to such generalizations:
\begin{itemize}
\item[I.] Stabilizer code: We require $n-1$ stabilizers (commuting products of Pauli operators) that lead to a twofold degenerate ground state, stabilizing one logical qubit.
\item[II.] Majorana dimer representation: All stabilizers should be products of two Majorana operators, up to a total parity operator $\Par$.
\item[III.] Rotational symmetry: All stabilizers $S_k$ should be cyclic permutations of $S_1$.
%\item Perfection: The states for arbitrary logical input should correspond to perfect tensors. For fixed input $\bar{b}$ with $b \in \{ 0,1 \}$, this is satisfied if $\tr_{A^\text{C}}\ket{\bar{b}}_n \bra{\bar{b}}_n \propto \id_{2^{|A|}}$ for any subset $A$ of sites with size $|A|\leq n/2$. To hold for arbitrary input (i.e.,\ superpositions of $\bar{0}$ and $\bar{1}$), it is also necessary that $\tr_{A^\text{C}}\ket{\bar{0}}_n \bra{\bar{1}}_n = 0$ under the same constraints on $A$.
\end{itemize}
We may also wish to construct $n$-qubit codes that correspond to \emph{perfect tensors}. For fixed input $\bar{b}$ with $b \in \{ 0,1 \}$, this requires an isometric reduced density matrix 
\begin{equation}
\rho_A =\tr_{A^\text{C}}\ket{\bar{b}}_n \bra{\bar{b}}_n \propto \id_{2^{|A|}}
\end{equation}
for any subset $A$ of sites with size $|A|\leq n/2$. To hold for arbitrary input (i.e.,\ superpositions of $\bar{0}$ and $\bar{1}$), it is also necessary that $\bar{0}$ and $\bar{1}$ are partially orthogonal on $A^\text{C}$, i.e.,
\begin{equation}
\tr_{A^\text{C}}\ket{\bar{0}}_n \bra{\bar{1}}_n = \tr_{A^\text{C}}\ket{\bar{1}}_n \bra{\bar{0}}_n = 0 \ ,
\end{equation}
again assuming $|A| \leq n/2$.
Unfortunately, perfect tensors for qubits require states that are maximally entangled for any subdivision of sites, a condition which cannot be satisfied for $n=4$ or any $n>6$ \cite{Goyeneche:2015fda,2017PhRvL.118t0502H}. As $n<3$ does not correspond to a physical tile and we already covered the $n=5$ case, this leaves only $n=3$ and $n=6$ to be studied with Majorana dimers. However, as we will see below, none of the corresponding Majorana dimer codes can be perfect for arbitrary bulk input.

Let us start with the $n=3$ case. We can easily find a stabilizer code that conditions I - III. The stabilizers $S$ are
\begin{align}
Y_1 Y_2 &= \i\m_1\m_4 \text{ ,}& Y_1 Y_3 &= \i\Par \m_2\m_5 \text{ ,}& Y_2 Y_3 &= \i\m_3\m_6 \text{ .}
\end{align}
The twofold degenerate ground state of the stabilizer Hamiltonian $H=-\sum_k S_k$ is spanned by one parity-even and one parity-odd Majorana dimer state with pairing between modes on opposite sites.
Furthermore, for a fixed logical input $\bar{0}$ or $\bar{1}$ (but not its superpositions), the boundary state is decribed by a perfect tensor. This implies that adding such triangular tiles into the pentagon code preserves its entanglement structure only for logical basis-state input.
Note that this code is equivalent to a repetition code under $Y_k \to Z_k$. We will explore the connection to GHZ states in the next section.
Contrary to the pentagon code, embedding the states of this ``triangle code'' into a regular $\{3,k\}$ bulk tiling does not lead to interesting bulk/boundary relations, as the dimers close into loops. 

Similarly, we can consider a ``square code'' for $n=4$, where we find yet another stabilizer code with similar properties. The following stabilizers lead to a familiar Majorana form: 
\begin{equation}
\begin{aligned}
X_1 X_2 Z_3 Z_4 &= -\i\Par\m_1\m_4 \text{ ,}& Z_1 X_2 X_3 Z_4 &= -\i\Par\m_3\m_6 \text{ ,} \\
Z_1 Z_2 X_3 X_4 &= -\i\Par\m_5\m_8 \text{ ,}& X_1 Z_2 Z_3 X_4 &= -\i\m_2\m_7 \text{ .}
\end{aligned}
\end{equation}
Note that by applying a total parity operator, we can map this to an equivalent code with stabilizers $S=\langle Y_1 Y_2,\, Y_2 Y_3,\, Y_3 Y_4,\, Y_1 Y_4\rangle$ (which again highlights the GHZ-type entanglement).
As in the triangle code, Majorana dimers at a distance of three Majorana sites are paired up. Again, this implies trivial bulk loops of dimers for a regular $\{ 4,k\}$ tiling. Furthermore, this code does not lead to a perfect tensor for any logical input, as this is impossible to achive with 4-leg tensors.
One may be tempted to construct a stabilizer code with $S=\langle X_1 Z_2 X_3,\, X_2 Z_3 X_4,\, X_1 X_3 Z_4,\, Z_1 X_2 X_4 \rangle$, where each element of $S$ can also be written as a product of two Majorana operators on opposite edges (see Table \ref{TAB_STAB_CODES}). However, this choice of $S$ only stabilizes a single state instead of a full qubit, as the ground state is non-degenerate. Interestingly, this ground state does fulfill the perfect tensor property for connected subsets of the boundary legs.

Beyond the familiar $n=5$ case (with permutations of $X_1 Z_2 Z_3 X_4$ as stabilizers), we can construct another code by exchanging $X_k \leftrightarrow Y_k$, which is equivalent to exchanging $\m_{2k-1} \leftrightarrow \m_{2k}$ and leads to the stabilizers
\begin{equation}
\begin{aligned}
Y_1 Y_3 Z_4 Z_5 &= \i\Par\m_2\m_5 \text{ ,}& Y_1 Z_2 Z_3 Y_4 &= \i\m_1\m_8 \text{ ,} \\
Z_1 Y_2 Y_4 Z_5 &= \i\Par\m_4\m_7 \text{ ,}& Y_2 Z_3 Z_4 Z_5 &= \i\m_3\m_{10} \text{ ,}\\
Z_1 Z_2 Y_3 Y_5 &= \i\Par\m_6\m_9 \text{ .}
\end{aligned}
\end{equation}
Naturally, this code inherits the properties of the original $[[5,1,3]]$ code, including the perfect tensor property for any logical input.

Attempting to generalize to $n=6$, we find the following choice for the elements of $S$:
\begin{equation}
\begin{aligned}
X_1 X_4 Z_5 Z_6 &= -\i\Par\m_1\m_8 \text{ ,}& X_1 Z_2 Z_3 X_4 &= -\i\m_2\m_7 \text{ ,} \\
Z_1 X_2 X_5 Z_6 &= -\i\Par\m_3\m_{10} \text{ ,}& X_2 Z_3 Z_4 X_5 &= -\i\m_4\m_9 \text{ ,}\\
Z_1 Z_2 X_3 X_6 &= -\i\Par\m_5\m_{12} \text{ ,}& X_3 Z_4 Z_5 X_6 &= -\i\m_6\m_{11} \text{ .}
\end{aligned}
\end{equation}
The $n=6$ case resembles the $n=3$ result, as partial traces $\tr_{A^\text{C}}\ketbra{\bar{0}}{\bar{1}}$ do not usually vanish. Contrary to the $n=3$ case, it is also possible to form subsystems $A$ of size $|A|\leq n/2$ that are disjoint.
In both cases, the reduced density matrix $\rho_A$ is not an isometry. In other words, this code is only perfect for basis-state inputs and connected subsystems $A$.

We find similar properties for $n>6$ codes: While it is impossible to construct a perfect tensor for all (possibly disjoint) boundary regions $A$, we can always construct a Majorana dimer code with basis states $\bar{0}$ and $\bar{1}$ that are each perfect for connected subsystems $A$ by connecting Majorana dimer modes on opposite edges.
For $n=4k{+}1$, $k\in \mathbb{N}$, this construction even leads to codes where $\tr_{A^\text{C}}\ket{\bar{0}} \bra{\bar{1}} =0$ for a connected subsystem $A$ with $|A| \leq n/2$. 
Such a \emph{block perfect} code leads to an isometric $\rho_A$ for superpositions of bulk input for any connected $A$. 
The $n=9$ case, whose stabilizers are permutations of $X_1 Z_2 Z_3 Z_4 Z_5 X_6$, is visualized in Table \ref{TAB_STAB_CODES}.
Note that block perfect holographic codes can also be constructed from CSS codes \citep{PhysRevA.98.052301}.

\begin{table}[htb]
\begin{tabular}{l | c | c | c | c}
$n$ & Stabilizer & $\ket{\bar{0}}_n$ & $\ket{\bar{1}}_n$ & P\\
\hline
3 &
$Y_1 Y_2$ &
$\begin{gathered}
\includegraphics[height=0.09\textheight]{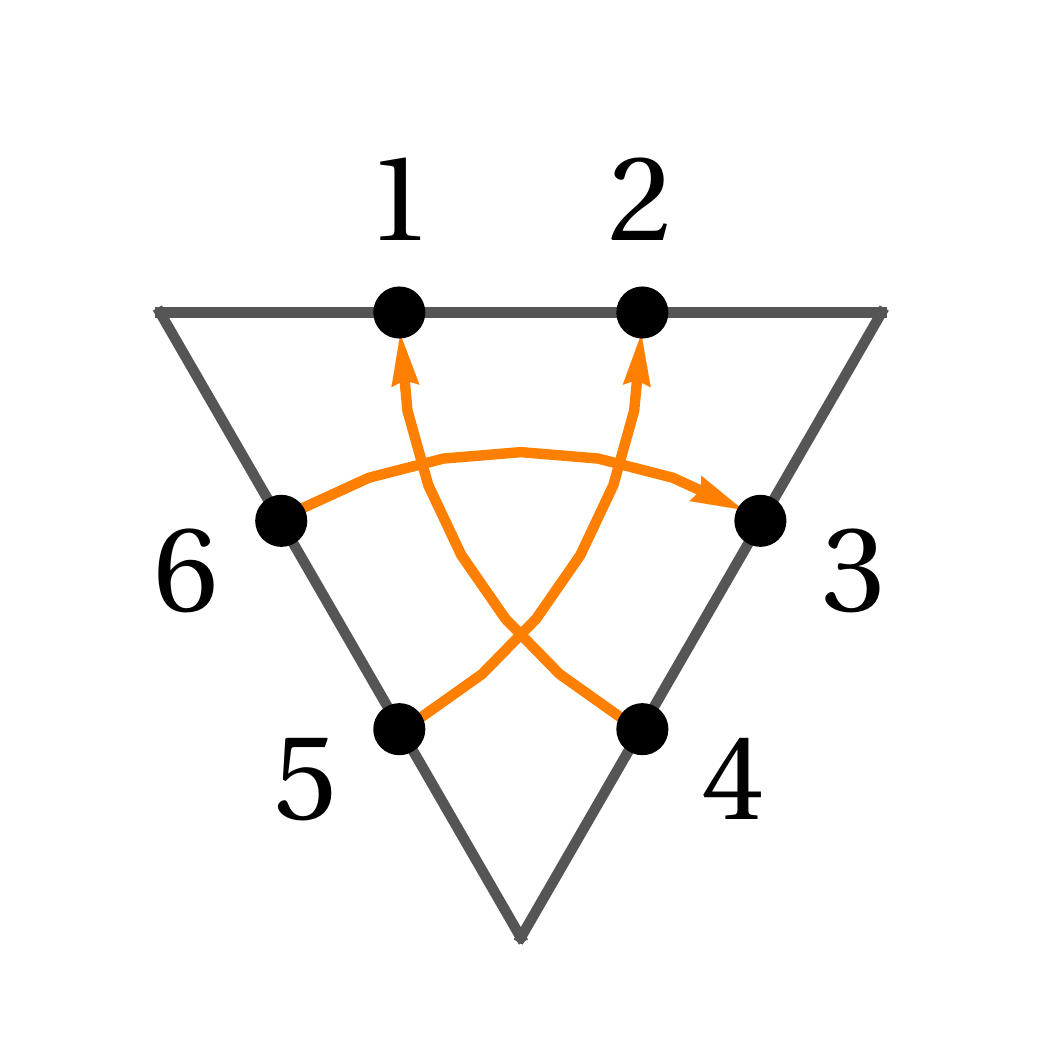}
\end{gathered}$
&$\begin{gathered}
\includegraphics[height=0.09\textheight]{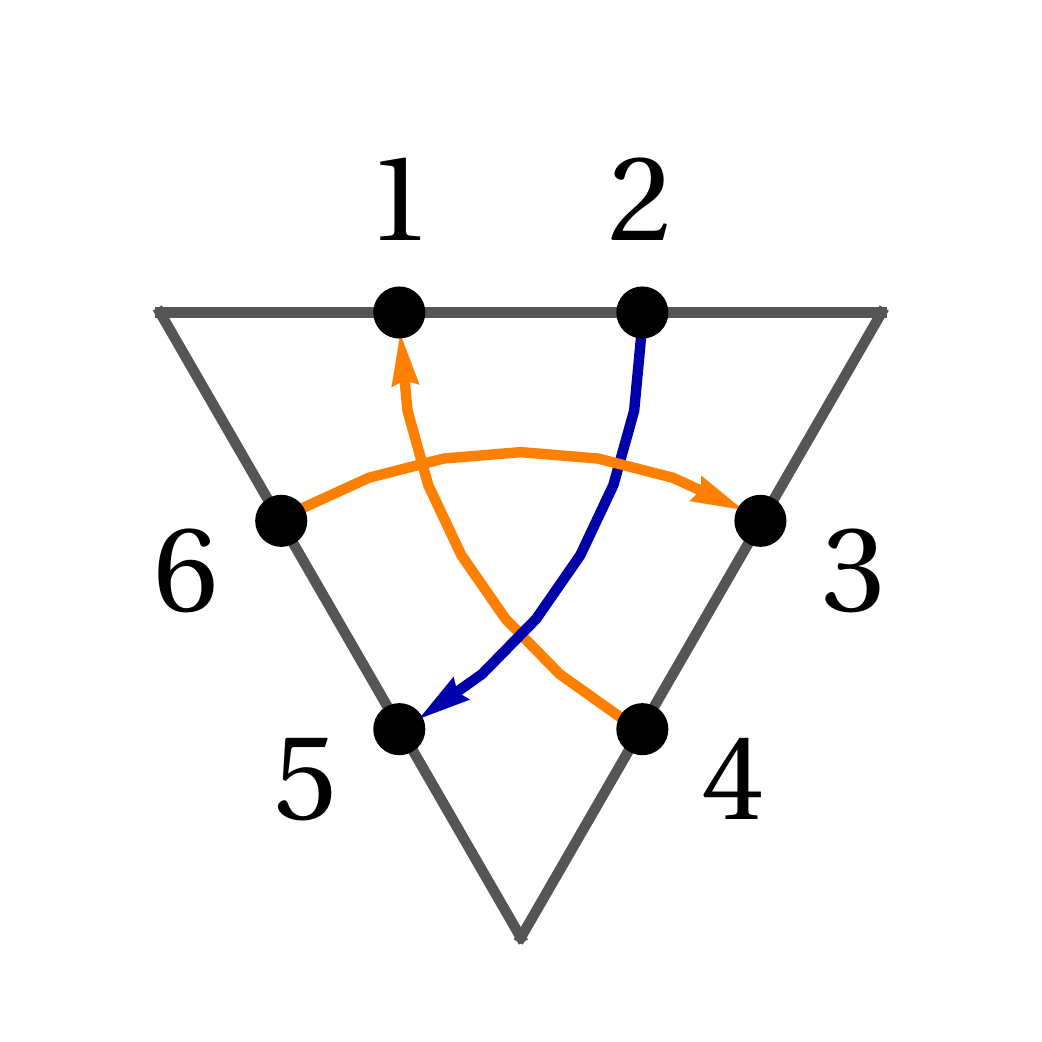}
\end{gathered}$ & 
$\times$ \\
4 &
$X_1 Z_2 Z_3 X_4$ &
$\begin{gathered}
\includegraphics[height=0.09\textheight]{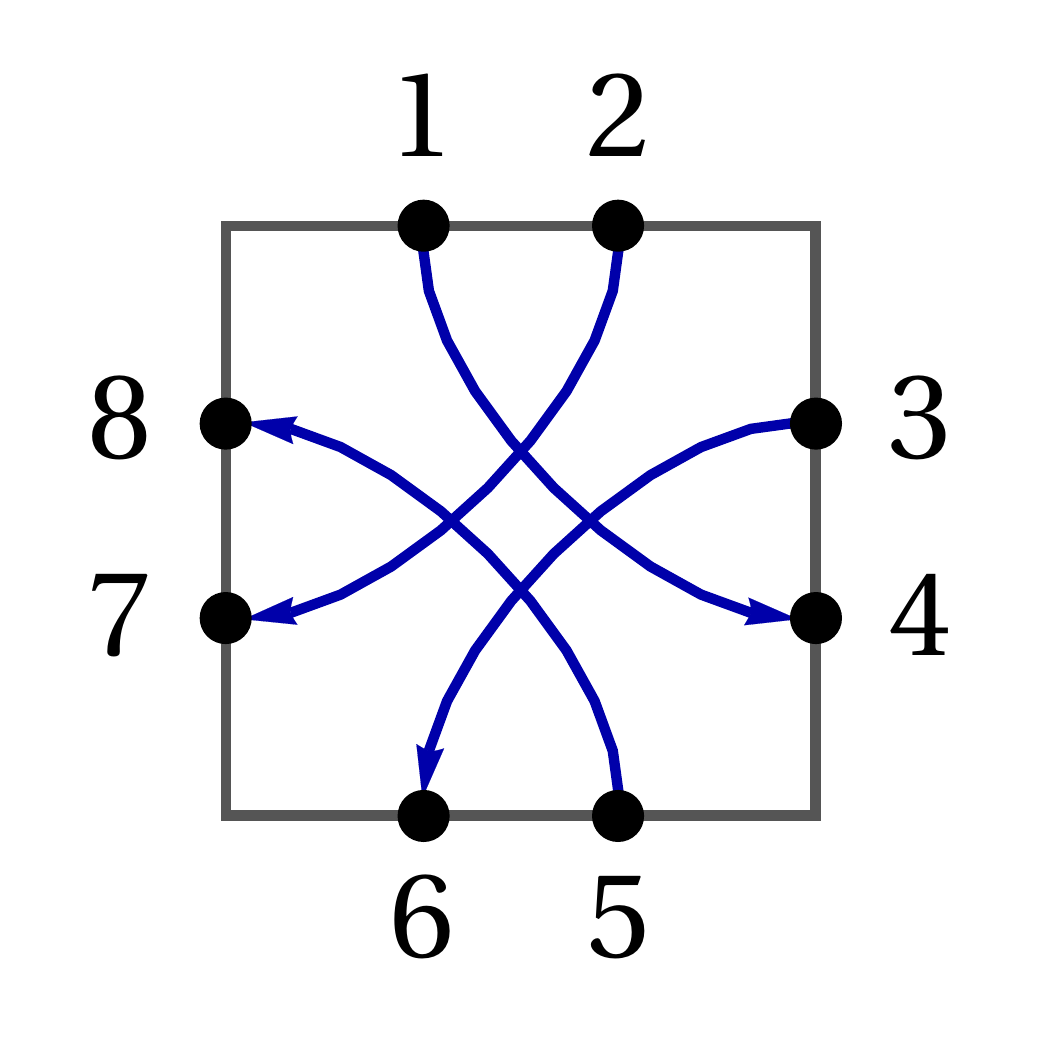}
\end{gathered}$
&$\begin{gathered}
\includegraphics[height=0.09\textheight]{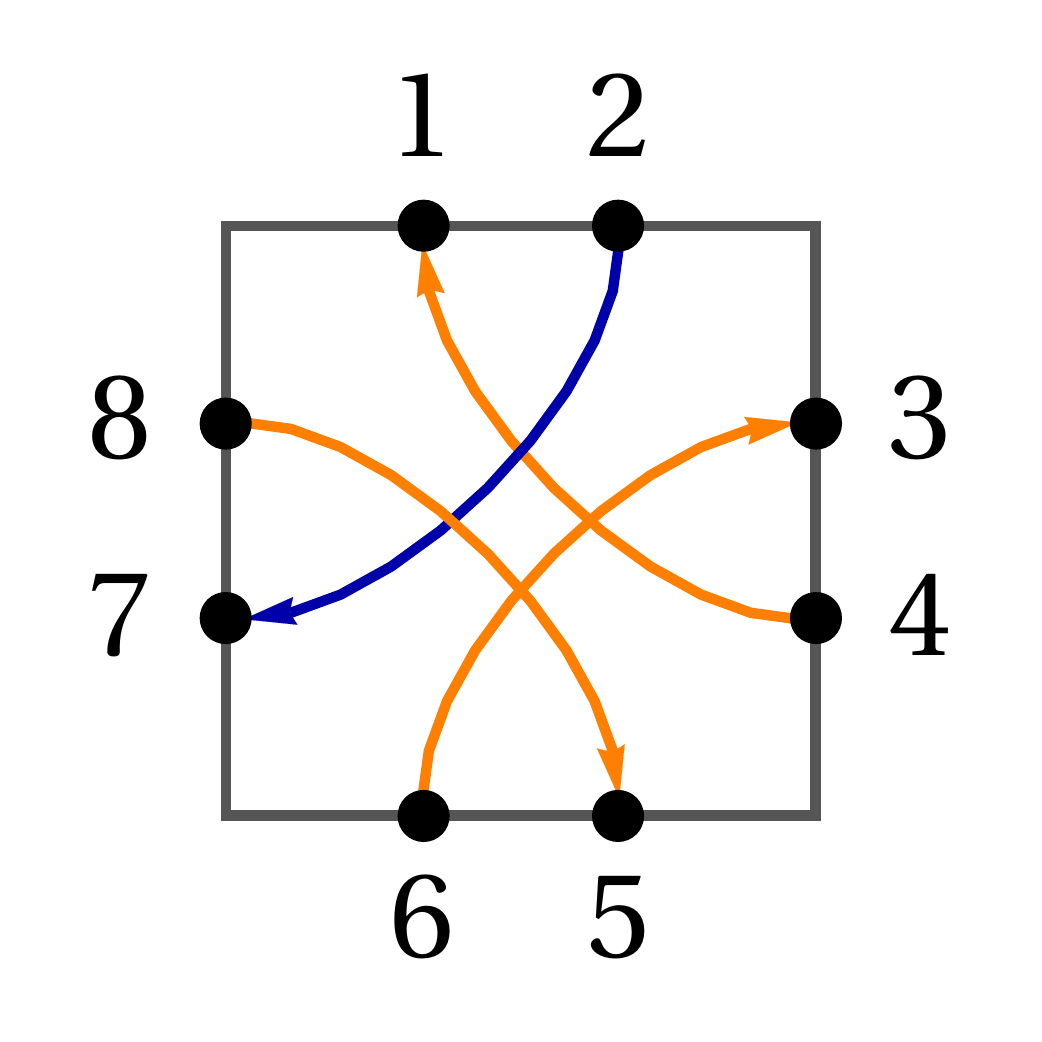}
\end{gathered}$ &
$\times$  \\
4 &
$X_1 Z_2 X_3$ &
$\begin{gathered}
\includegraphics[height=0.09\textheight]{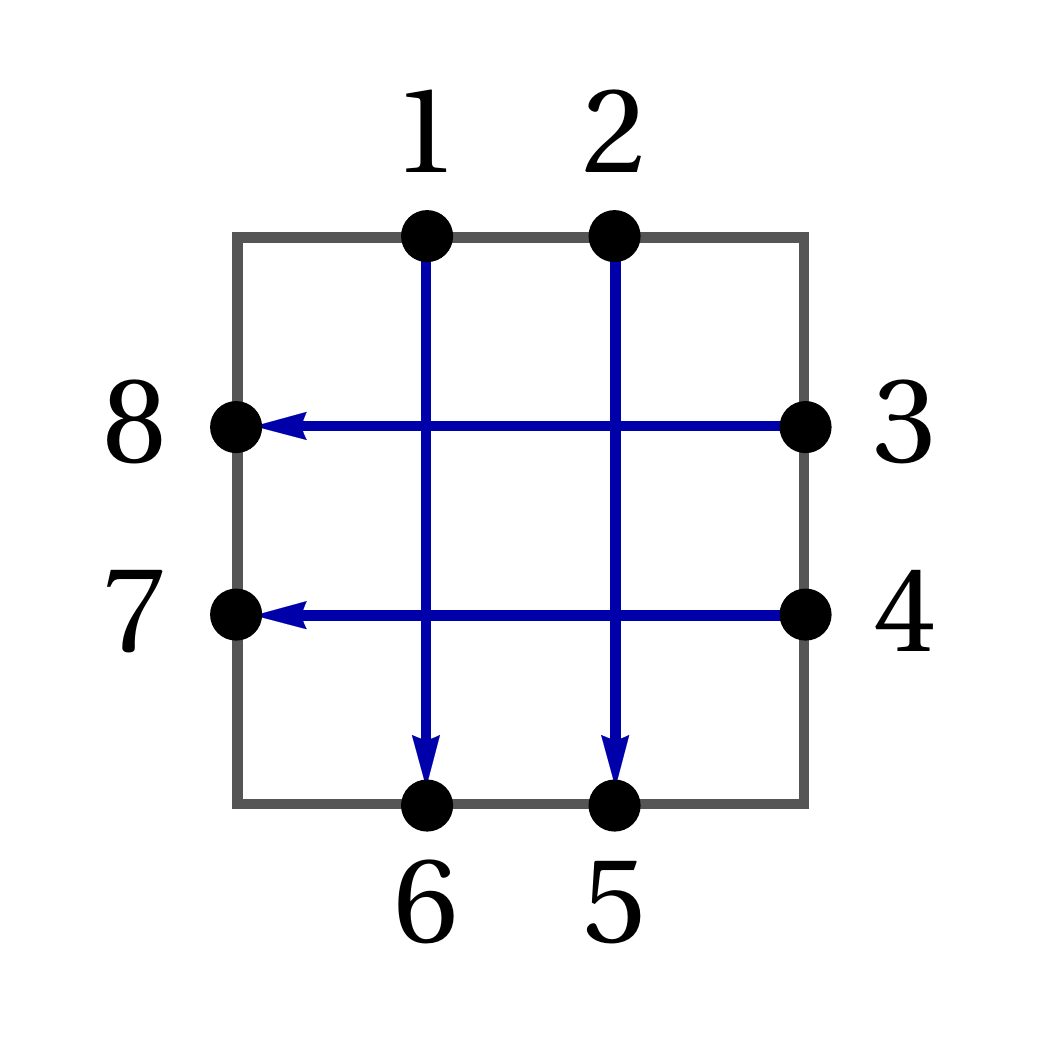}
\end{gathered}$
& - &
\checkmark \\
5 &
$X_1 Z_2 Z_3 X_4$ &
$\begin{gathered}
\includegraphics[height=0.09\textheight]{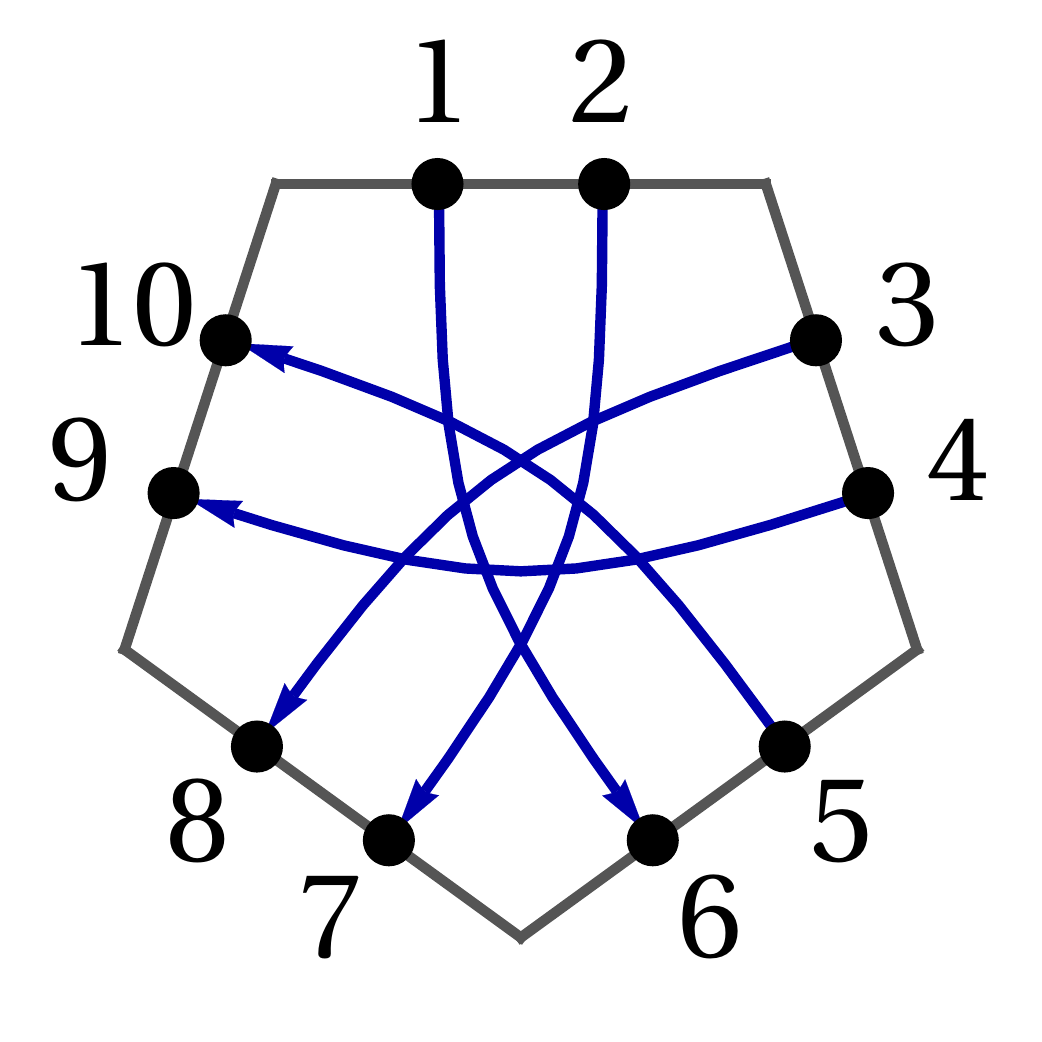}
\end{gathered}$
&$\begin{gathered}
\includegraphics[height=0.09\textheight]{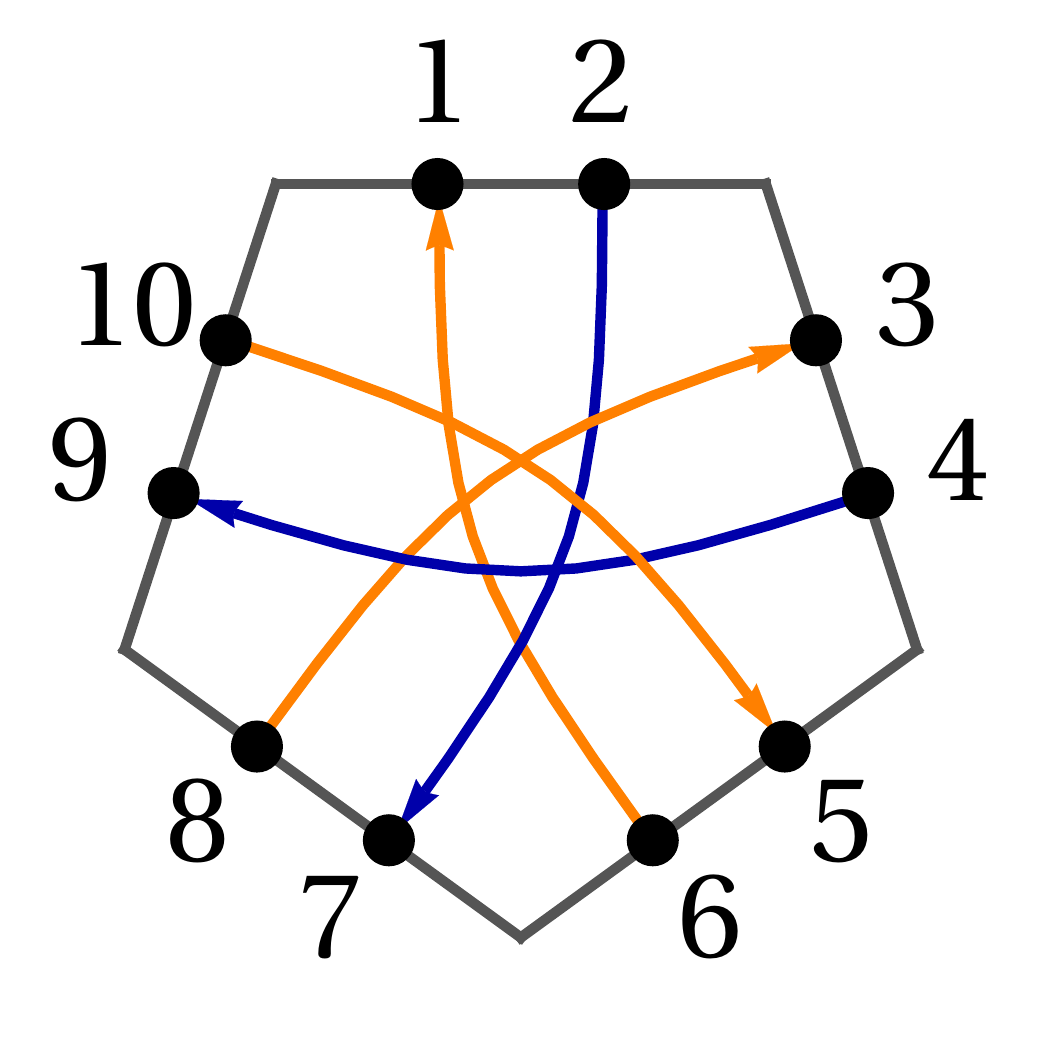}
\end{gathered}$ & 
\checkmark\checkmark\\
5 &
$Y_1 Z_2 Z_3 Y_4$ &
$\begin{gathered}
\includegraphics[height=0.09\textheight]{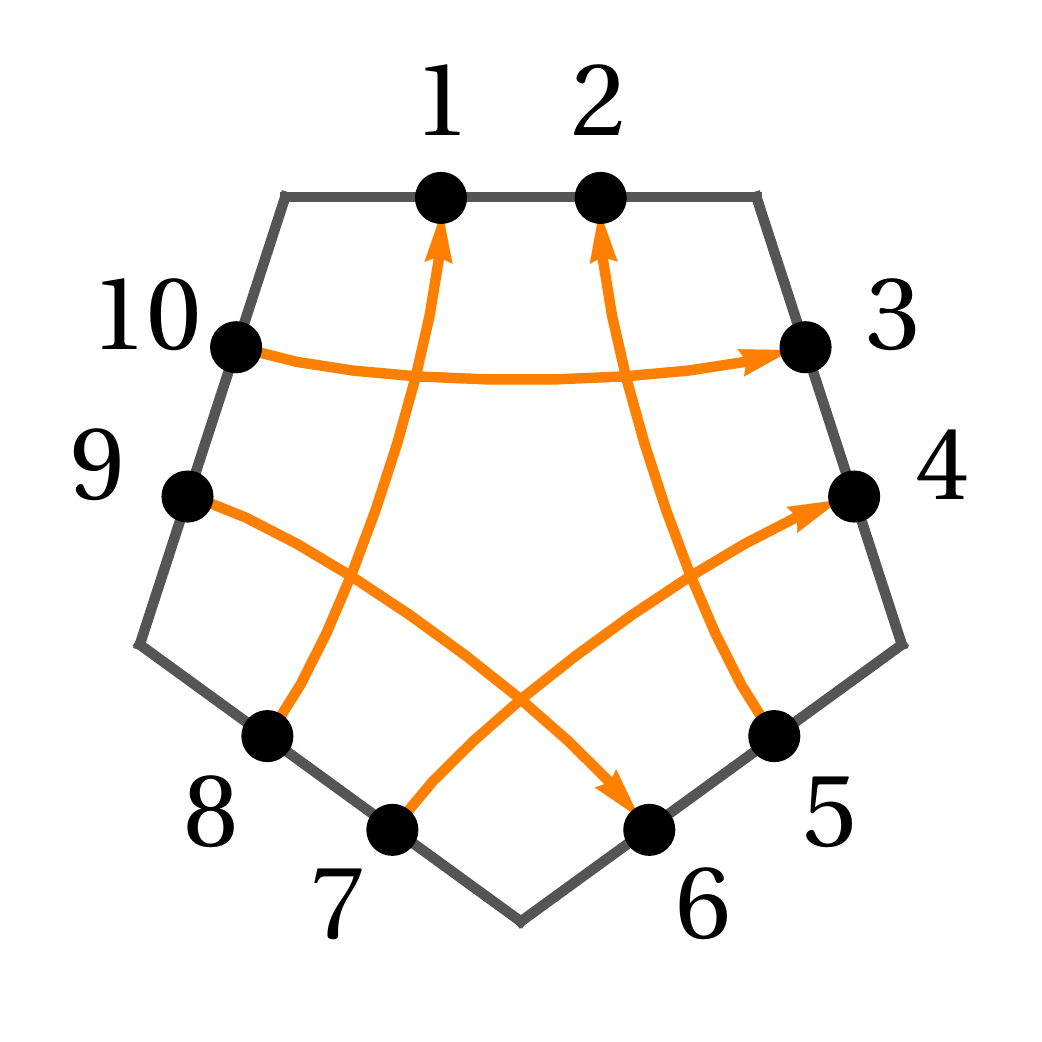}
\end{gathered}$
&$\begin{gathered}
\includegraphics[height=0.09\textheight]{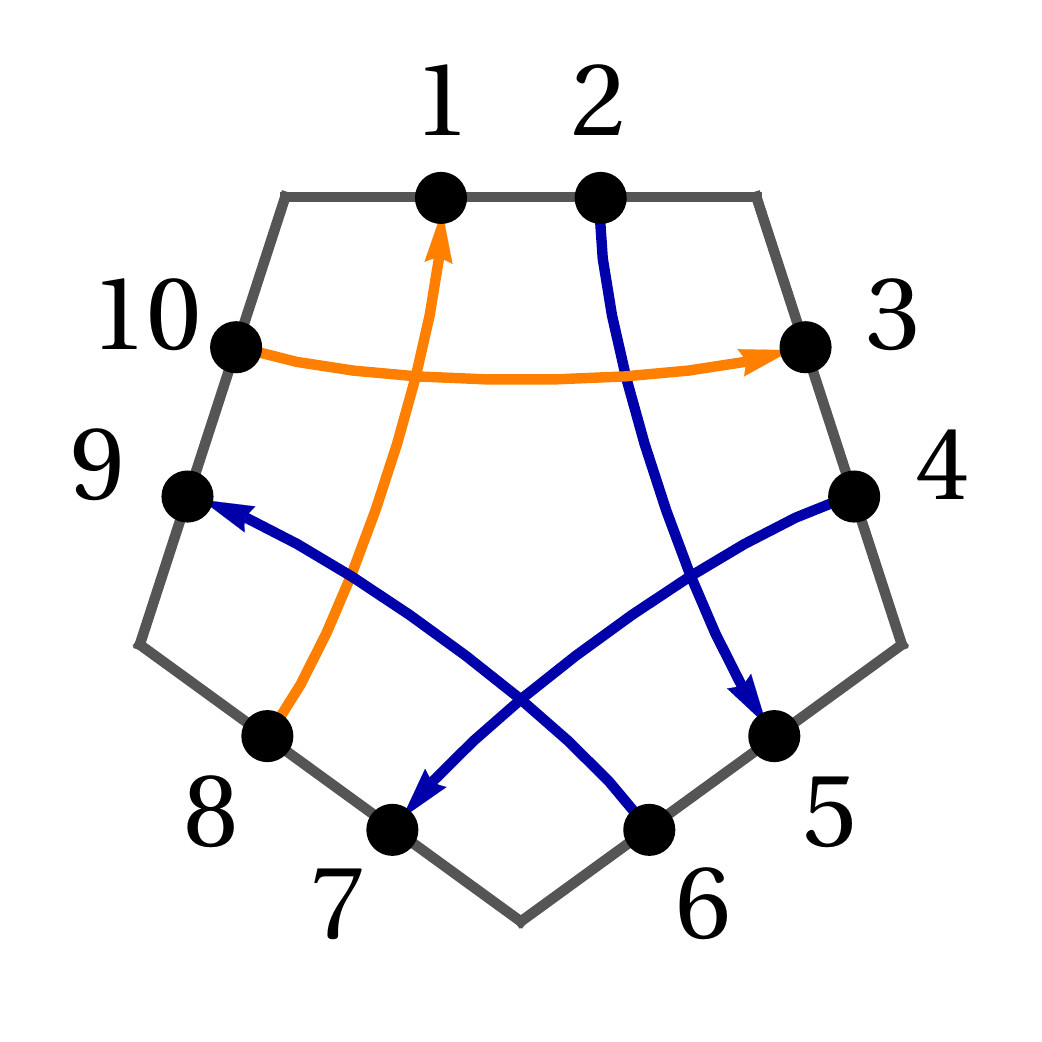}
\end{gathered}$ & 
\checkmark\checkmark\\
6 &
$X_1 Z_2 Z_3 X_4$ &
$\begin{gathered}
\includegraphics[height=0.095\textheight]{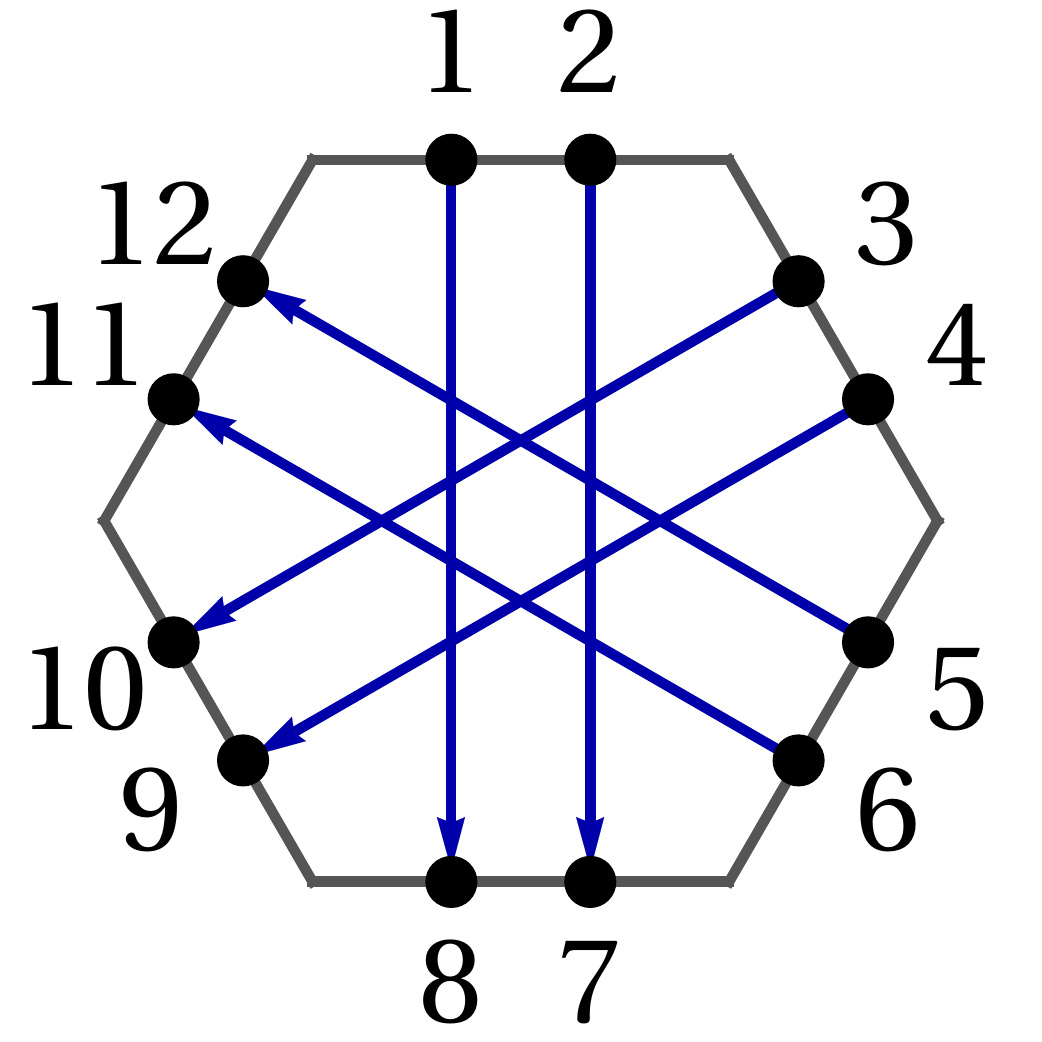}
\end{gathered}$
&$\begin{gathered}
\includegraphics[height=0.095\textheight]{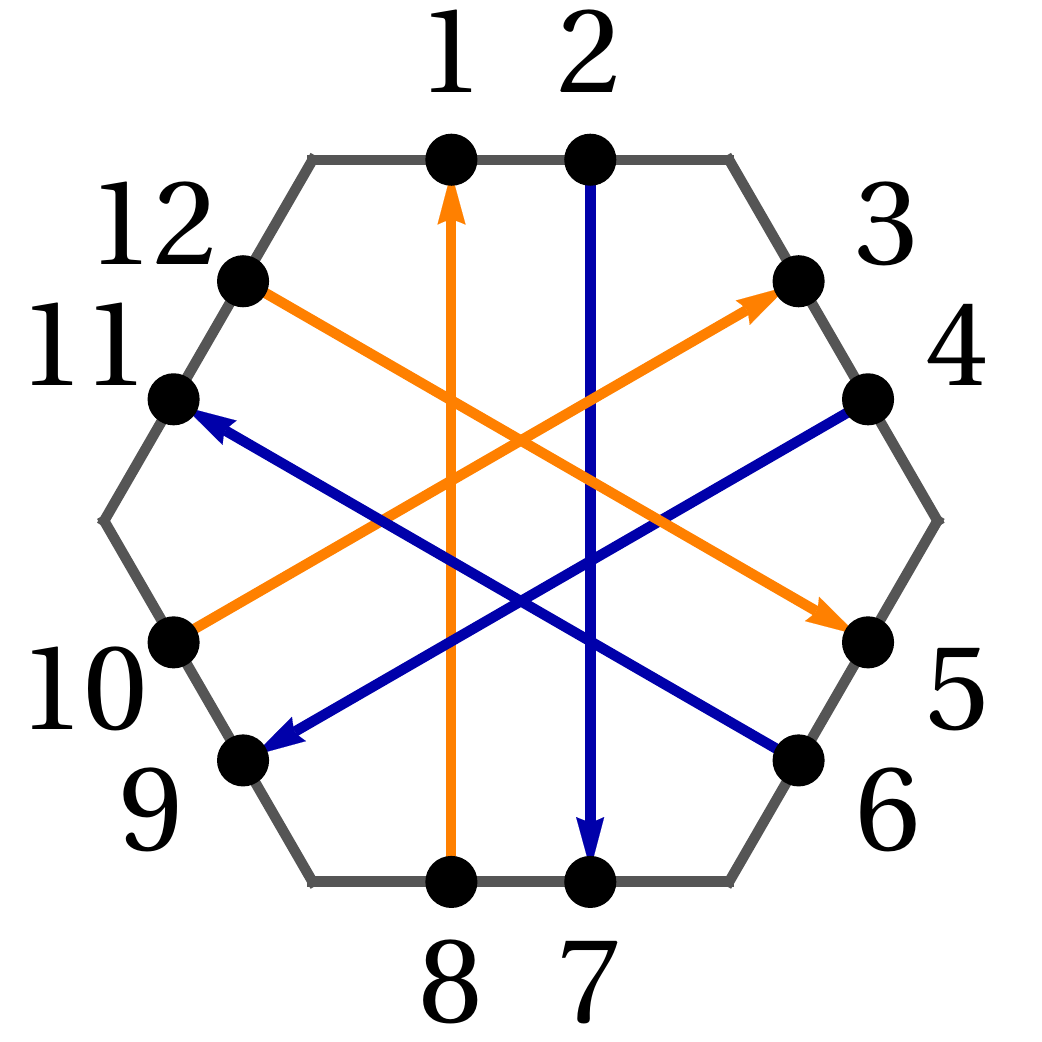}
\end{gathered}$ &
$\times$
\\
 & & & & \\
$\vdots$ & & & & \\
 & & & & \\
%7 &
%$Y_1 Z_2 Z_3 Y_4$ &
%$\begin{gathered}
%\includegraphics[height=0.1\textheight]{scode_n7_0.pdf}
%\end{gathered}$
%&$\begin{gathered}
%\includegraphics[height=0.1\textheight]{scode_n7_1.pdf}
%\end{gathered}$ &
%$\times$\\
9 &
$X_1 Z_2 Z_3 Z_4 Z_5 X_6$ &
$\begin{gathered}
\includegraphics[height=0.1\textheight]{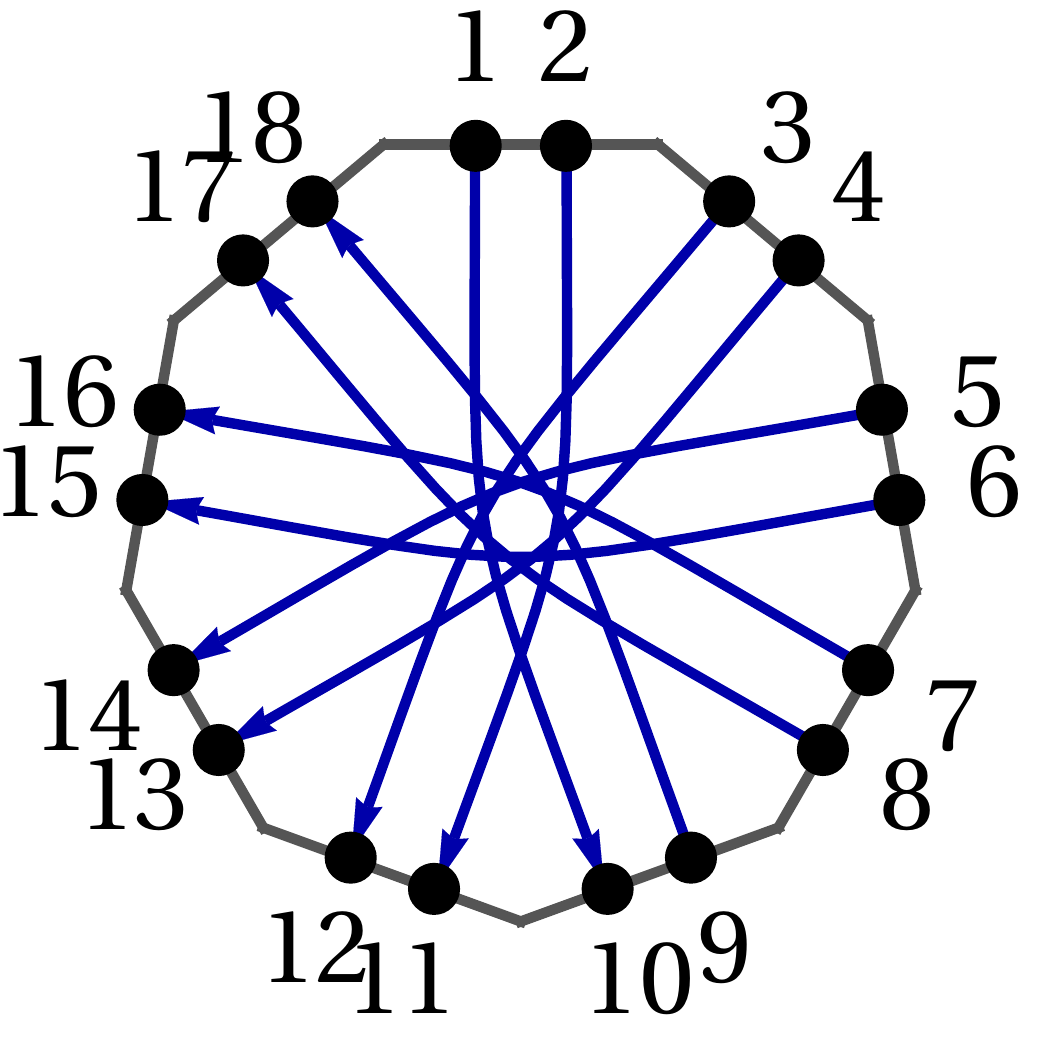}
\end{gathered}$
&$\begin{gathered}
\includegraphics[height=0.1\textheight]{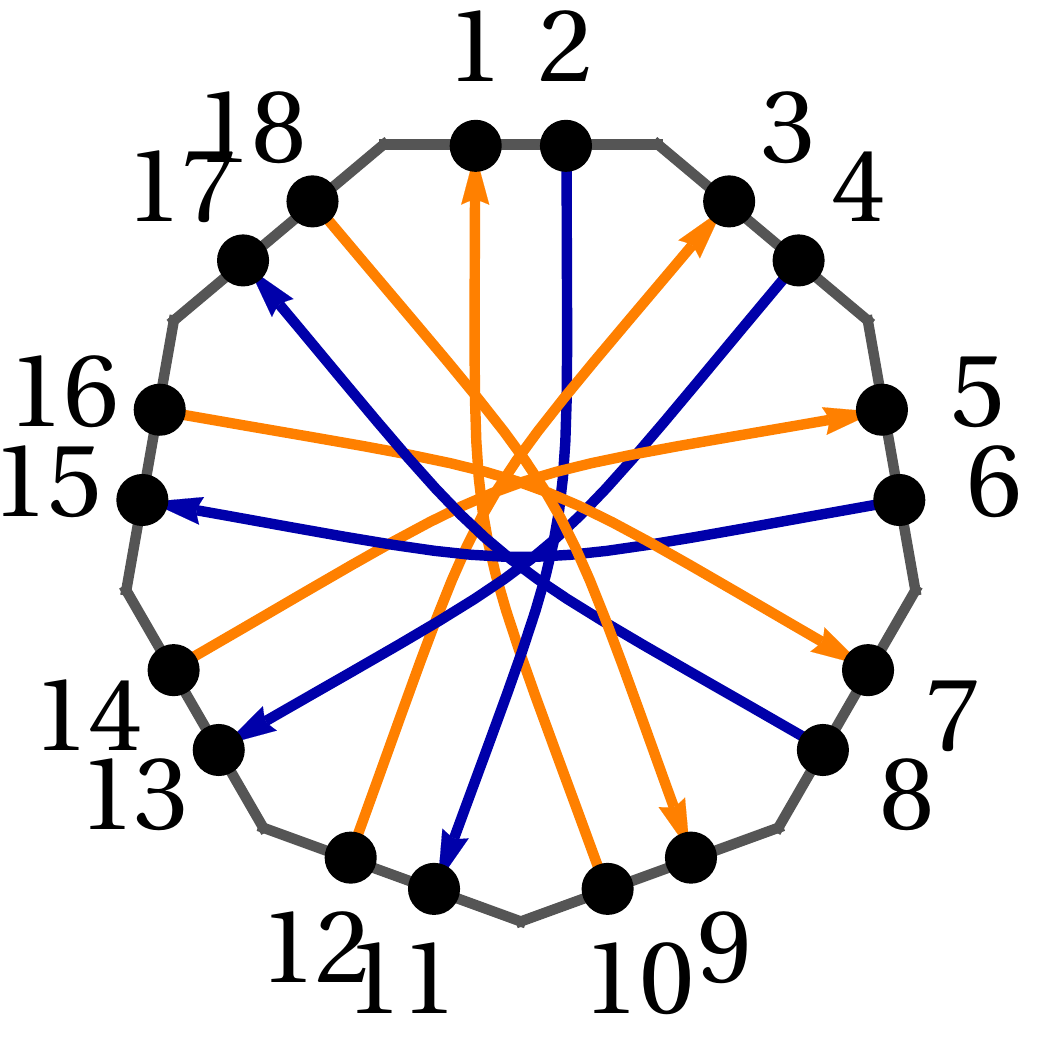}
\end{gathered}$ &
\checkmark
\end{tabular}
\caption{Possible generalizations of the $[[5,1,3]]$ pentagon code (fourth row) to an $n$-gon code. All stabilizers are cyclic permutations of the one given in the second column. The last column indicates whether boundary states lead to block perfect tensors (\checkmark) or fully perfect tensors (\checkmark\checkmark). 
}
\label{TAB_STAB_CODES}
\end{table}

While we would expect a bulk tiling of each of the two $n=5$ codes to lead to similar boundary properties, it would be interesting to investigate codes built from combinations of perfect and block perfect tensors.

\subsection{GHZ states}

The $n=3$ code considered previously possesses a peculiar property: The logical eigenstates are GHZ states in the $Y$ basis, i.e.,\ $\ket{\bar{0}}_3 = \ket{Y+}_3^\text{GHZ}$ and $\ket{\bar{1}}_3 = \ket{Y-}_3^\text{GHZ}$, using the definition
\begin{equation}
\label{EQ_GHZ_STATES}
\begin{aligned}
\ket{Y+}_n^\text{GHZ} &= \frac{1}{\sqrt{2}} \left( \ket{y_+}^{\otimes n} - \ket{y_-}^{\otimes n} \right) \text{ ,} \\
\ket{Y-}_n^\text{GHZ} &= \frac{1}{\sqrt{2}} \left( \ket{y_+}^{\otimes n} + \ket{y_-}^{\otimes n} \right)  \text{ ,} 
\end{aligned}
\end{equation}
where $\ket{y_\pm}$ are the eigenstates of $\sigma^y$ with $\sigma^y \ket{y_\pm} = \pm \ket{y_\pm}$. This is because $\ket\pm_n^\text{GHZ}$ is in the ${+}1$ eigenspace of the stabilizer $S_1 = Y_1 Y_2$ and its permutations, and thus in the ground state space of corresponding stabilizer Hamiltonian. The total parity $\Par \ket\pm_n^\text{GHZ} = \pm \ket\pm_n^\text{GHZ}$ follows from the relation $\sigma_z \ket{y_\pm} = - \ket{y_\mp}$.

We can easily generalize these $Y$-basis GHZ states to higher $n$. Using \eqref{EQ_JORDANWIGNER_REV}, we find $Y_k Y_{k+1} = \i \m_{2k-1}\m_{2k+2}$ for $k<n$ and $Y_1 Y_n = \i \Par \m_{2}\m_{2n-1}$. This fixes the Majorana dimers for any $n$ to a $(2k{+}2 \mod 2n)\mapsto(2k{-}1)$ pairing ($k \in \{1,\dots,n\}$), with the last dimer parity flipped in the $Y-$ state. For example, the GHZ state vectors on a pentagon are:
\begin{align}
\ket{Y+}_5^\text{GHZ} &\;=\;
\begin{gathered}
\includegraphics[height=0.11\textheight]{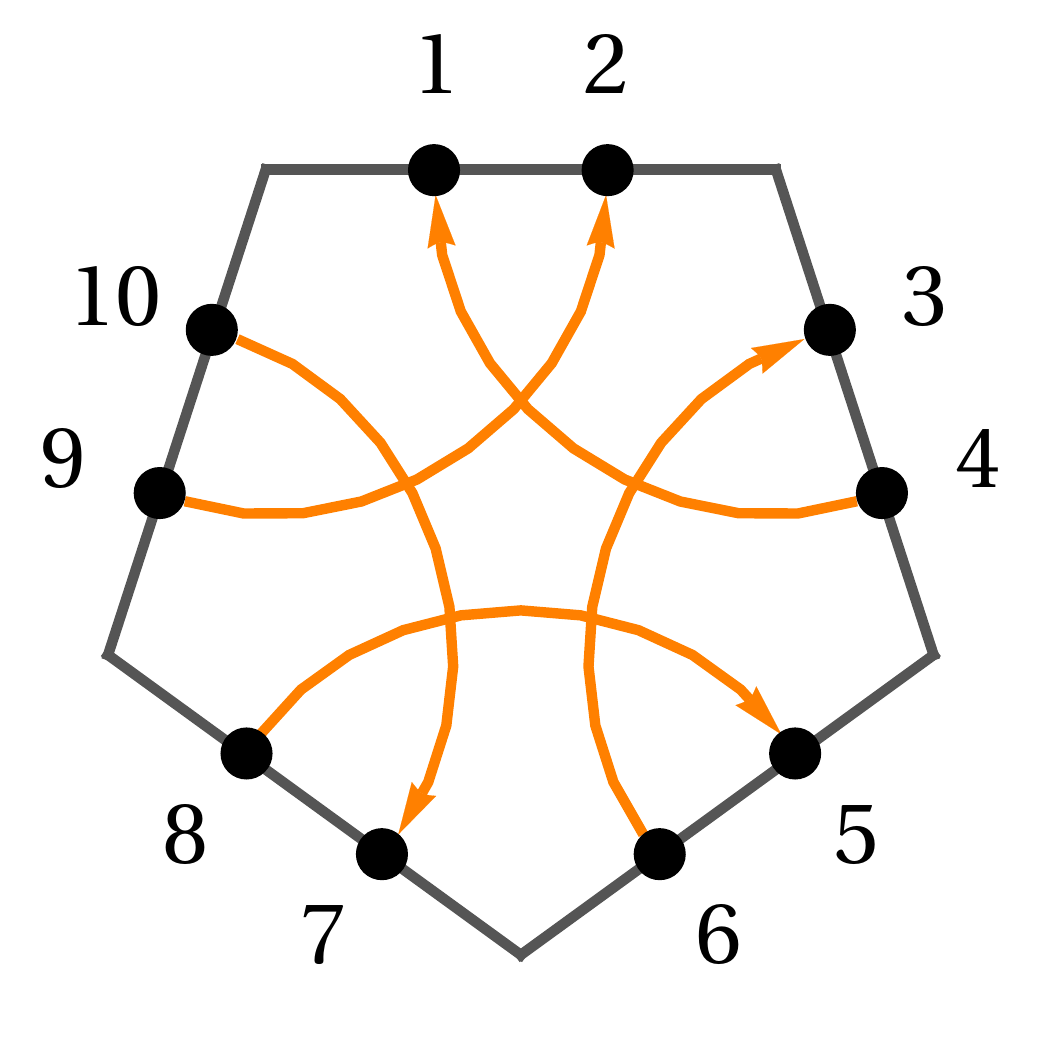}
\end{gathered} \text{ ,}\\
\ket{Y-}_5^\text{GHZ} &\;=\;
\begin{gathered}
\includegraphics[height=0.11\textheight]{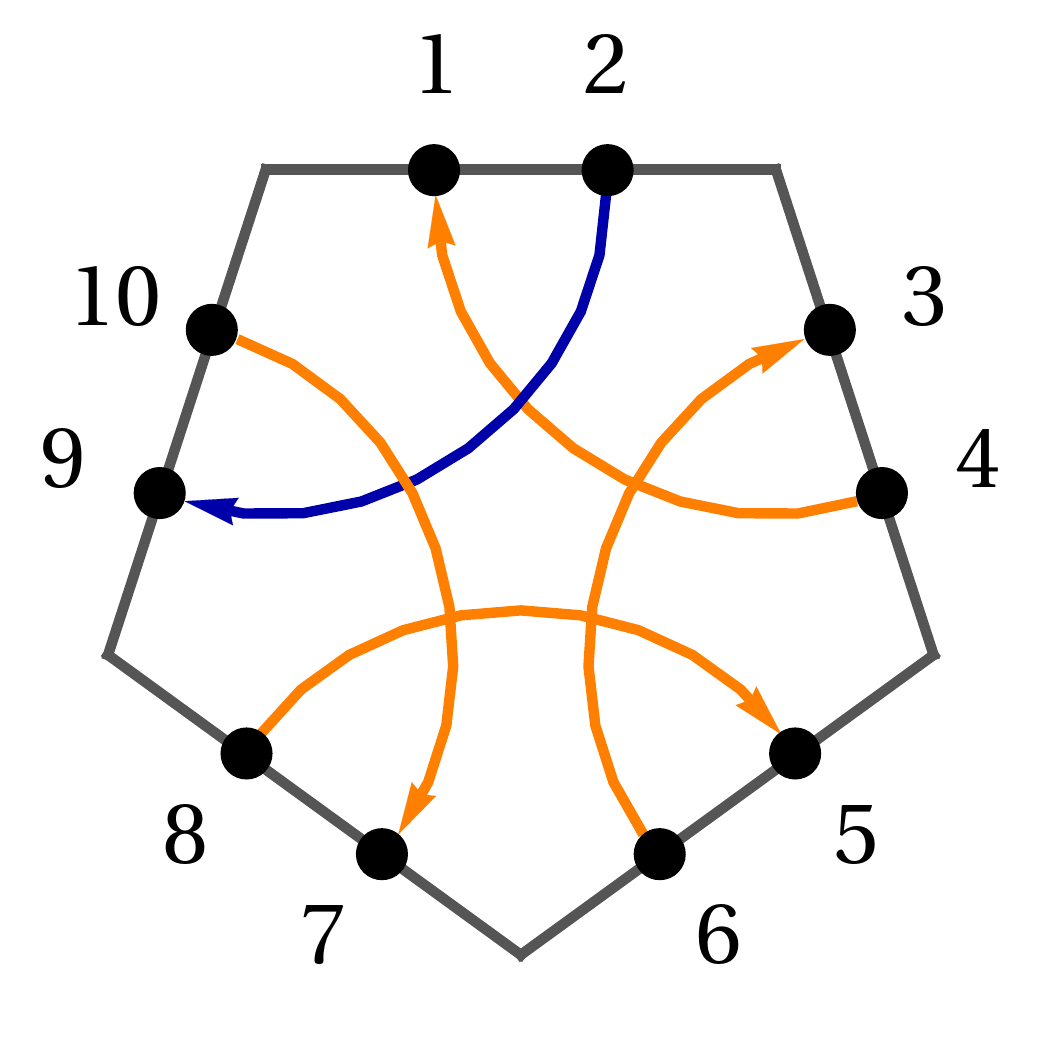}
\end{gathered} \text{ .}
\end{align}
Similarly, we can construct $n$-qubit GHZ states in the $X$ basis: As $X_k X_{k+1} = -\i\m_{2k}\m_{2k+1}$ and $X_1 X_n = \i\Par\m_1\m_{2n}$, we find a $2k \to 2k{+1} \mod 2n$ pairing, with the last dimer flipped in the $X+$ state. For the $n=5$ case, the corresponding diagrams are:
\begin{align}
\ket{X+}_5^\text{GHZ} &\;=\;
\begin{gathered}
\includegraphics[height=0.11\textheight]{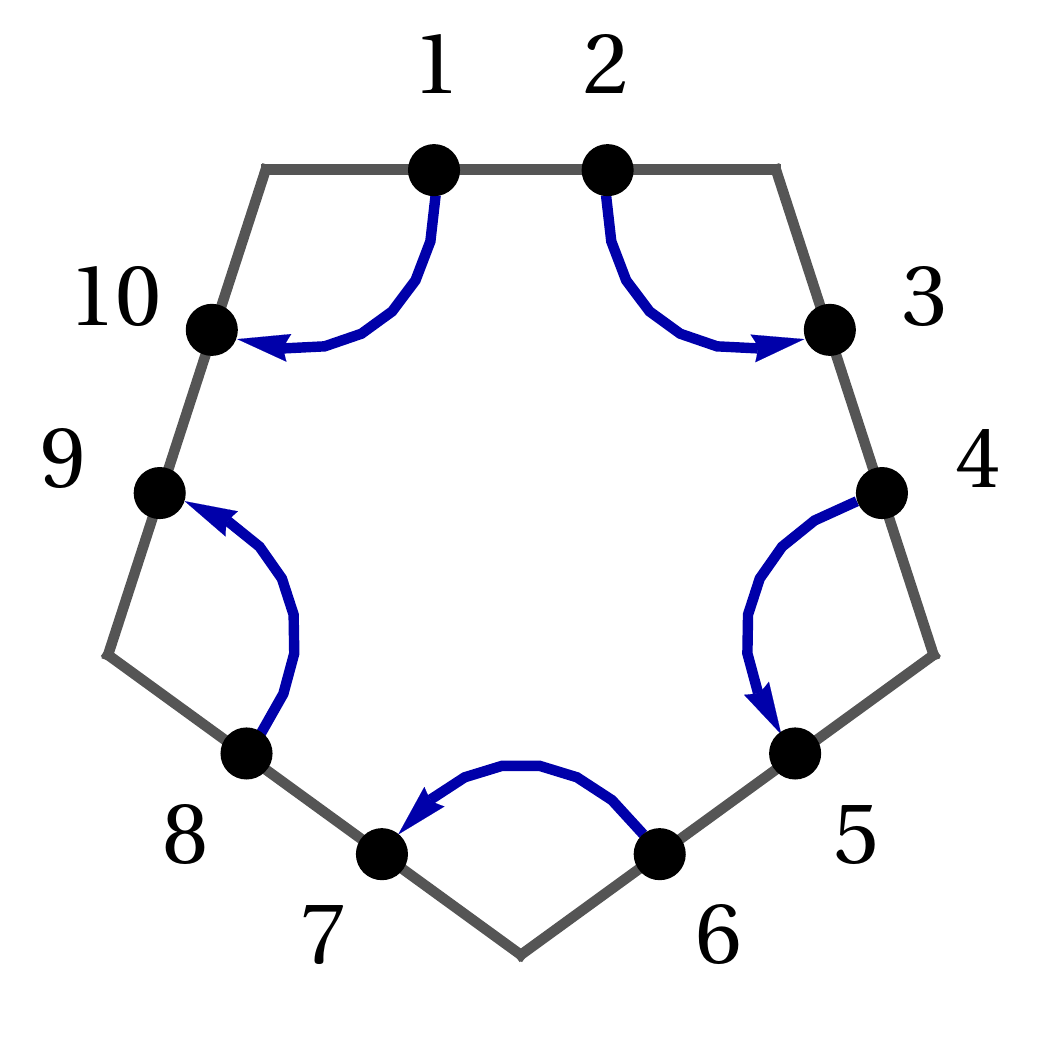}
\end{gathered} \text{ ,}\\
\ket{X-}_5^\text{GHZ} &\;=\;
\begin{gathered}
\includegraphics[height=0.11\textheight]{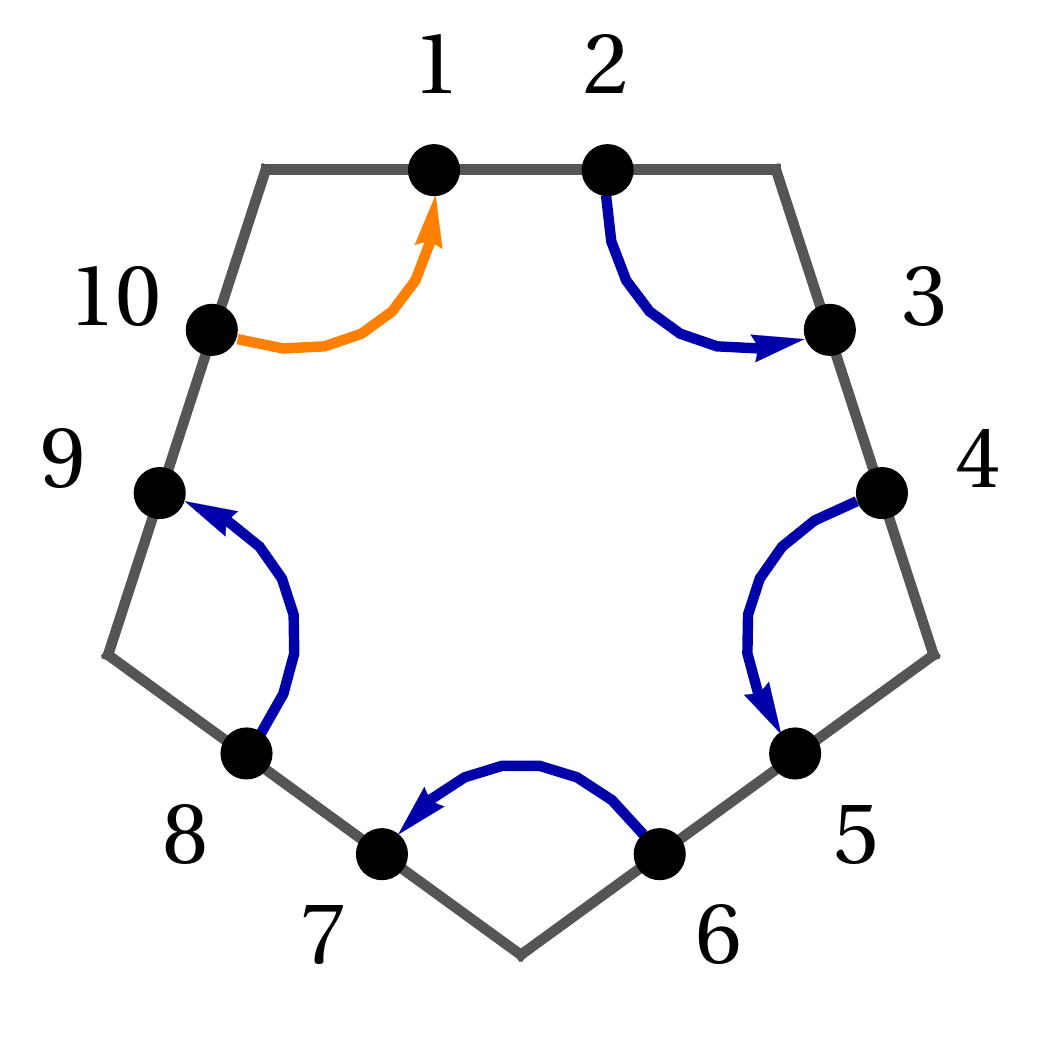}
\end{gathered} \text{ .}
\end{align}
As a general rule, the positive-parity GHZ states are rotationally invariant in dimer parities, while the negative-parity GHZ states are rotationally invariant in dimer \emph{orientation} (i.e.,\ direction of arrows). As shown in Sec.\ \ref{APP_CYCL_PERM}, this means that the underlying spin degrees of freedom are invariant under a cyclic permutation of indices of the tensors specifying the GHZ states.

In the Majorana dimer language, we can also see that the $[[5,1,3]]$ logical code states are extensions of GHZ states: All have a completely symmetric entanglement structure, but whereas the $X\pm$ and $Y\pm$ GHZ states connect Majorana modes at a distance of $d=1$ and $d=3$ modes, respectively, the $\bar{0}$ and $\bar{1}$ logical eigenstates pair modes $5$ sites apart. While an even $d$ cannot lead to rotational symmetry, we can systematically construct all of these states by considering all odd $d$. For example, the $n=9$ case in Table \ref{TAB_STAB_CODES} corresponds to a $d=9$ pairing.

\subsection{Majorana dimers and Majorana codes}

So far, we have only discussed quantum error correction in a system of spins which we effectively described by fermionic degrees of freedom. Another approach is to build quantum error correction in fundamentally fermionic systems and then describe the actions of Majorana operators in such codes \cite{Bravyi:2010de}.
While superficially similar to our treatment of the HyPeC, there are fundamental differences: The advantage of actual Majorana codes is the use of fermion super-selection to reduce the occurrence of logical errors by encoding them in operators that are odd in Majorana operators, and thus cannot occur if the system is in a purely bosonic environment. However, our Majorana dimer model encodes the $\bar{0}$ and $\bar{1}$ states in \emph{different} fermionic parity sectors, superpositions of which would thus be forbidden in a system composed of actual fermions.
It follows that our Majorana dimer description of the $[[5,1,3]]$ stabilizer code is different from the Bravyi-Terhal-Leemhuis prescription to turn stabilizer into Majorana codes, which uses four Majorana modes to encode one spin degrees of freedom. 
However, Majorana dimers can still be a useful tool for studying Majorana codes. Consider a simple \emph{Kitaev chain} \cite{Kitaev2001} of $2N$ Majorana modes in the ground state of the stabilizer Hamiltonian
\begin{equation}
H = -\i \sum_{k=1}^{N-1} \m_{2k}\m_{2k+1} \text{ .}
\end{equation}
The ground state is two-fold degenerate, but can be spanned by two Majorana dimer state vectors $\ket{\pm}_N$. Explicitly for $N=6$,
\begin{align}
\ket{+}_6 &= 
\begin{gathered}
\includegraphics[height=0.08\textheight]{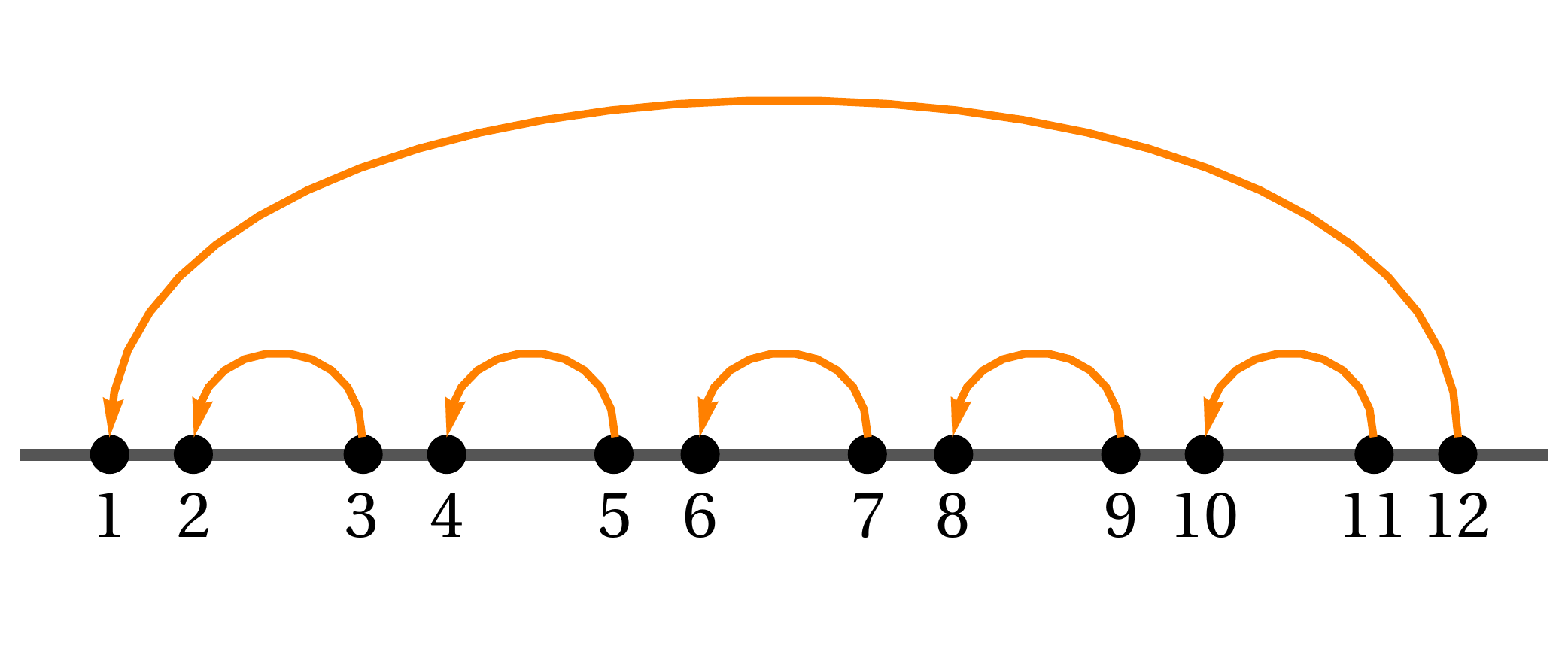}
\end{gathered} \text{ ,} \\
\ket{-}_6 &= 
\begin{gathered}
\includegraphics[height=0.08\textheight]{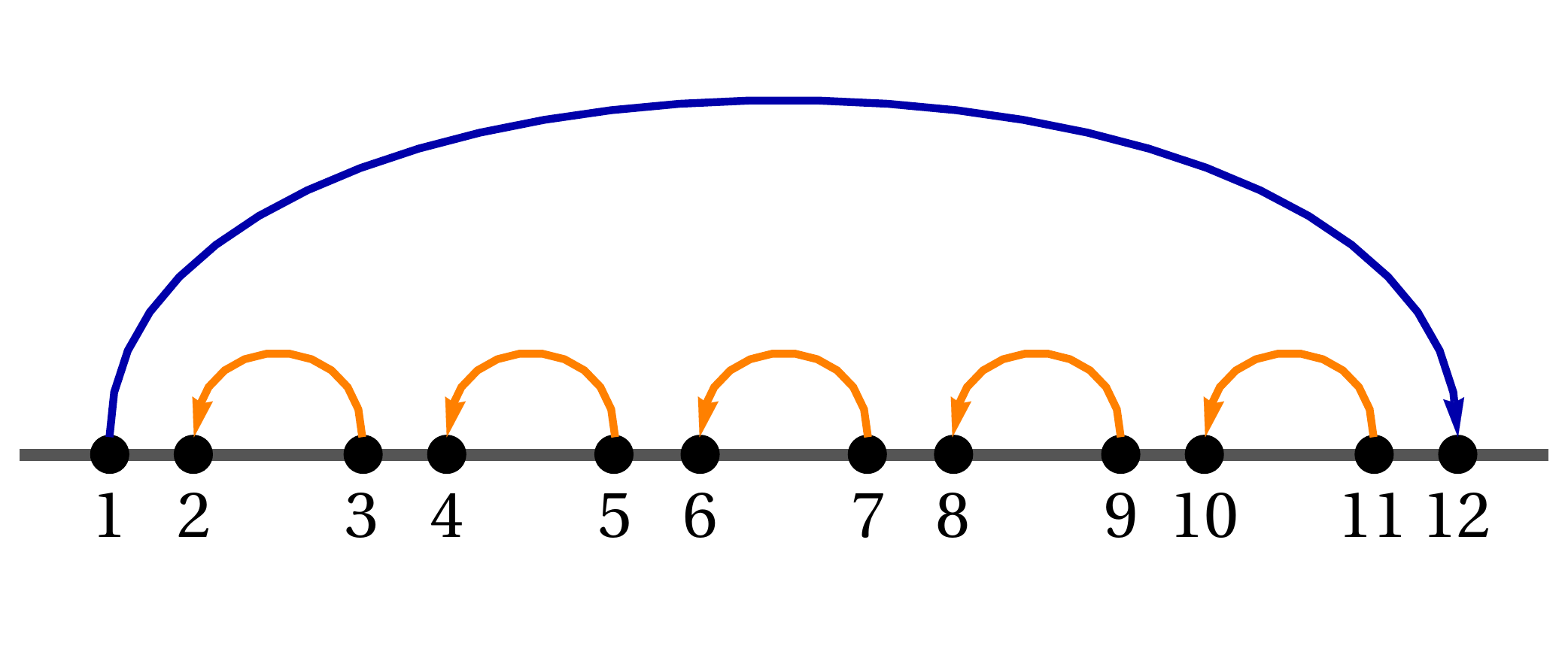}
\end{gathered} \text{ .}
\end{align}
From \eqref{EQ_TOTAL_PARITY} we immediately see that $\ket{\pm}_N$ has fermionic parity $\pm 1$. While the logical code states can be easily mapped into each other by applying the operator $\m_1$ or $\m_{12}$ (which flips the parity of the $1 \to 12$ dimer), any physical error has to respect fermion parity and locality and is therefore restricted to the form $\m_k \m_{k+1}$, i.e.,\ even nearest-neighbour terms. Thus, a phase error requires a string $\m_1\m_2\dots \m_{12}$ of Majorana operators with $\mathfrak{w}=2N$, endowing the ground state of the Kitaev chain with \emph{topological protection}. Clearly, this approach can be generalized to any Majorana dimer state of $2N$ Majorana modes: By fixing $N{-}k$ dimers, we leave a $k$-dimensional logical qubit subspace on the remaining $k$ possible dimers. If these remaining $k$ modes are far apart, then they will be robust against errors that are both even and local in Majorana operators.

Furthermore, we can express Majorana stabilizer codes with dimers even if the stabilizer generators are not quadratic in Majorana operators. Consider $N=4$ edges with eight Majorana modes under the stabilizers $S=\langle -\m_1\m_3\m_5\m_7,\, -\m_2\m_4\m_6\m_8\rangle$. The ${+}1$ eigenspace of each $S_k$ is spanned by two Majorana dimer states on the corresponding modes, one where both dimer parities are even and one where they are odd. We can thus define the logical 2-qubit state vectors $\ket{\bar{b}_1,\bar{b}_2}$ as follows:
\begin{equation}
\begin{aligned}
\ket{\bar{0},\bar{0}} &= 
\begin{gathered}
\includegraphics[height=0.09\textheight]{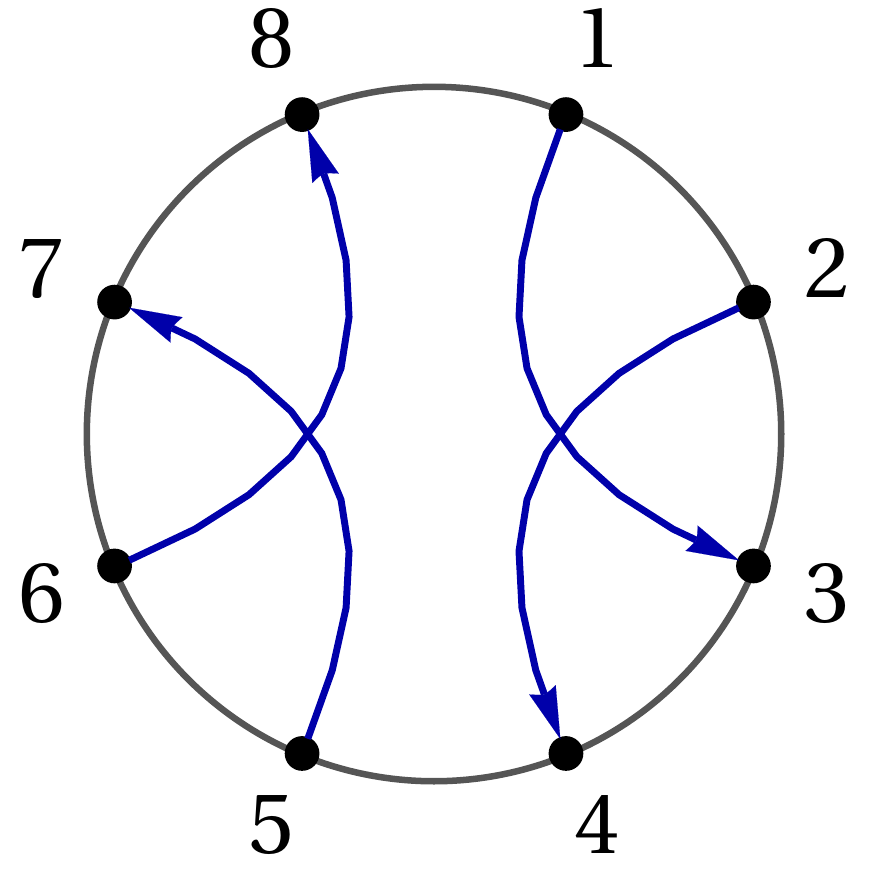}
\end{gathered} \text{ ,} &
\ket{\bar{0},\bar{1}} &= 
\begin{gathered}
\includegraphics[height=0.09\textheight]{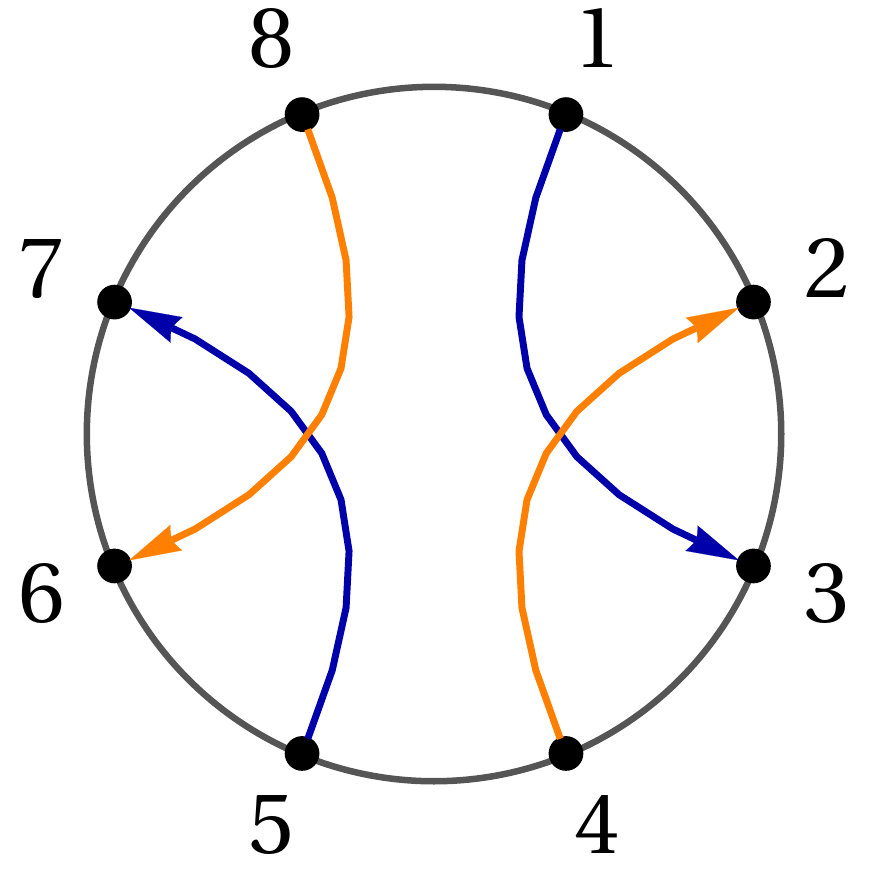}
\end{gathered} \text{ ,} \\
\ket{\bar{1},\bar{0}} &= 
\begin{gathered}
\includegraphics[height=0.09\textheight]{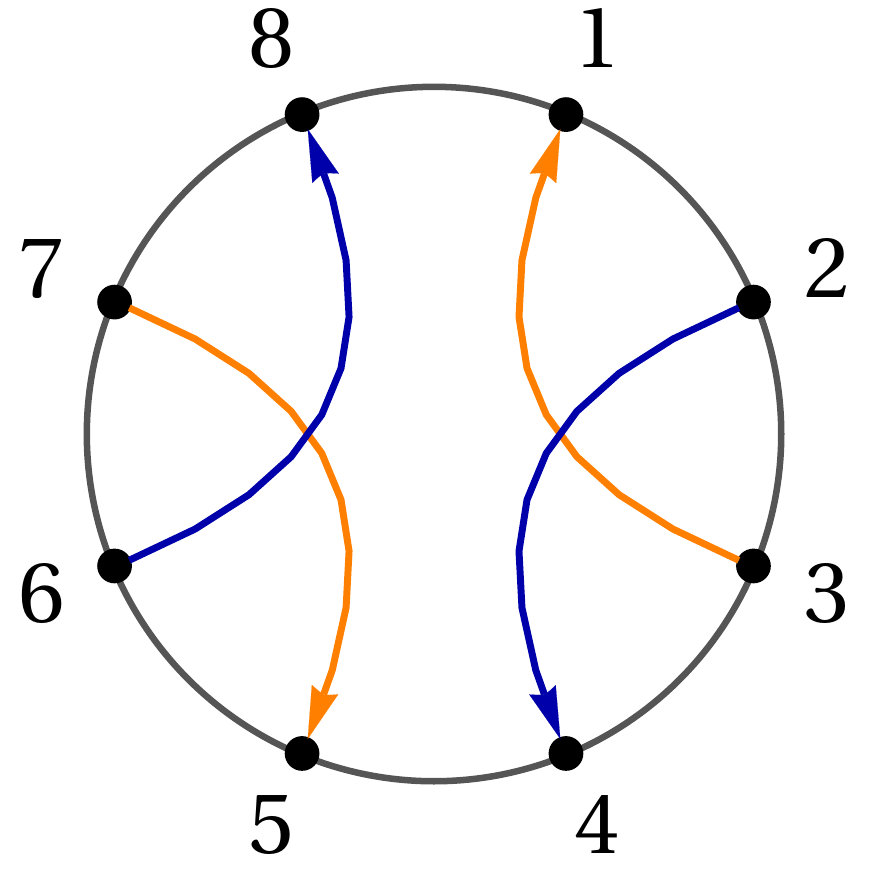}
\end{gathered} \text{ ,} &
\ket{\bar{1},\bar{1}} &= 
\begin{gathered}
\includegraphics[height=0.09\textheight]{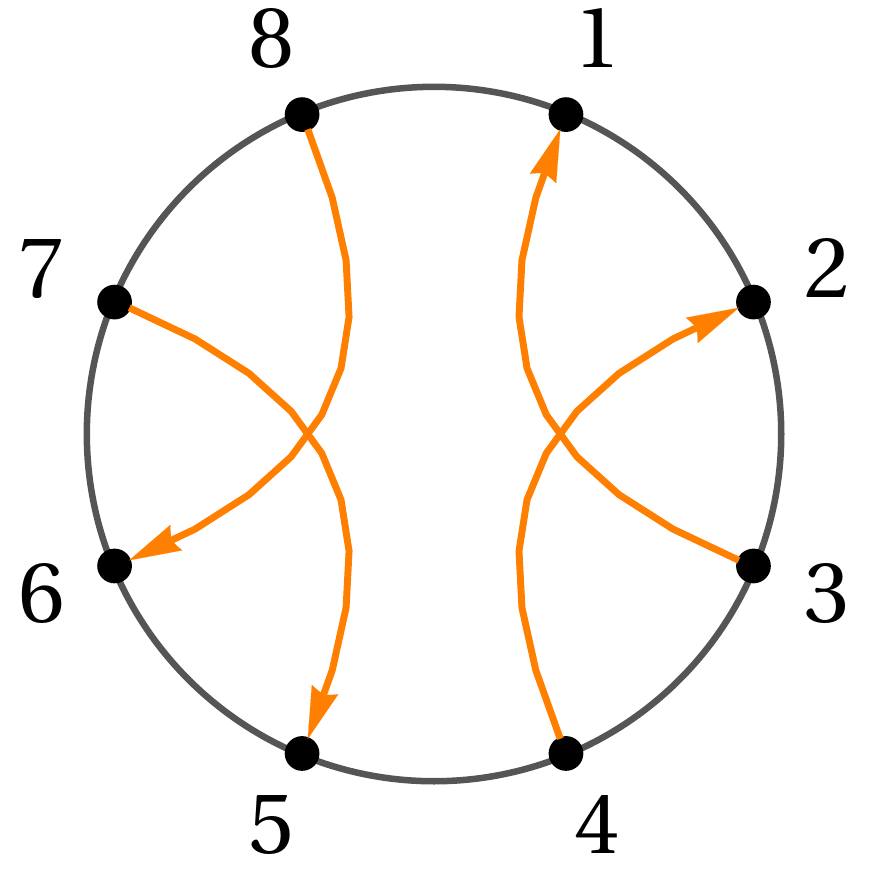}
\end{gathered} \text{ .}
\end{aligned}
\end{equation}
If we again assume physical errors to correspond to even nearest-neighbour Majorana operators, we find a code distance $d=2$ with regards to these errors. Explicitly, two such operators are required for both bit-flip and phase-flip errors, e.g.\ $\m_1\m_2\m_2\m_3 = \m_1\m_3$ for a phase flip and $\m_3\m_4\m_5\m_6$ for a bit flip. We identify these types of errors by the Majorana dimers whose parity differs between the logical code states: In the former case, we identify operators acting on both ends of one of these dimers, while the latter case corresponds to operators on one endpoint of each of them. We can thus systematically evaluate the error correction properties of any Majorana stabilizer code by expressing its logical basis in Majorana dimers.

\section{Majorana dimers and bit threads}

Our model bears close resemblance to the \emph{bit thread} proposal \cite{freedman2017bit}, a model for holographic states that re-derives the Ryu-Takayanagi (RT) formula by postulating that such states are composed by a flow of EPR pairs between boundary regions. In this proposal, the entanglement entropy $S_A$ of a boundary region $A$ is then equivalent to the \emph{maximal flow} of EPR pairs between $A$ and $A^\text{C}$ through the bulk, which is equivalent to the area of the minimal surface $\gamma_A$ in the standard RT prescription.

Clearly, this picture is satisfied by the Majorana dimer description of the HyPeC for compact regions $A$ (for which \eqref{EQ_DIMER_EE} holds). 
While each dimer only carries half the entanglement of an EPR pair, the phenomenological behaviour is identical: $S_A$ is determined by the number of dimers between $A$ and $A^\text{C}$, which is restricted by the minimal cut through the bulk tiling from the endpoints of $A$. The HyPeC leads to a special dimer configuration in which this bound is saturated for any compact region $A$ (up to degenerate cases shown in Fig.\ \ref{FIG_GREEDY2}). It thus defines a \emph{global} bit thread configuration, i.e.,\ one independent of the choice of $A$.
A special property of this configuration is that the dimers/bit threads follow discrete geodesics through the bulk, so that the bulk metric is emergent from the entanglement structure. 
Note that in the asymptotic limit of infinitely many tiles, shown in Fig.\ \ref{FIG_HAPPY_THREADS}, each geodesic can be identified with a pair of dimers, thus forming an effective EPR pair. Curiously, the resulting entanglement entropy resembles a classical fracton models on a $\{ 4,5 \}$ tiling, where the Shannon entropy scales with the number of dual geodesics, each doubling the ground state degeneracy \cite{Yan:2018nco}.

However, global bit thread configurations are generally not sufficient to reproduce the RT formula for disjoint subsystems; in such cases, the bit thread flow must differ according to the choice of the subsystems to reproduce the correct holographic multi-partite entanglement \cite{Cui:2018dyq}. This cannot be fulfilled with fermionic dimers, as the global entanglement structure is fixed for a specific state. However, the HyPeC is only a Majorana dimer model effectively, with its underlying spin degrees of freedom converted to fermionic modes through a Jordan-Wigner transformation. Thus in general, when considering disjoint subsystems, the entanglement entropy cannot be determined by a dimer counting. It is an interesting future question whether the entanglement between disjoint subsystems (or equivalently, for transpositions of boundary regions) leads to a multi-partite entanglement of the HyPeC also resembling the bit thread picture.

\begin{figure}[tb]
\centering
\includegraphics[height=0.19\textheight]{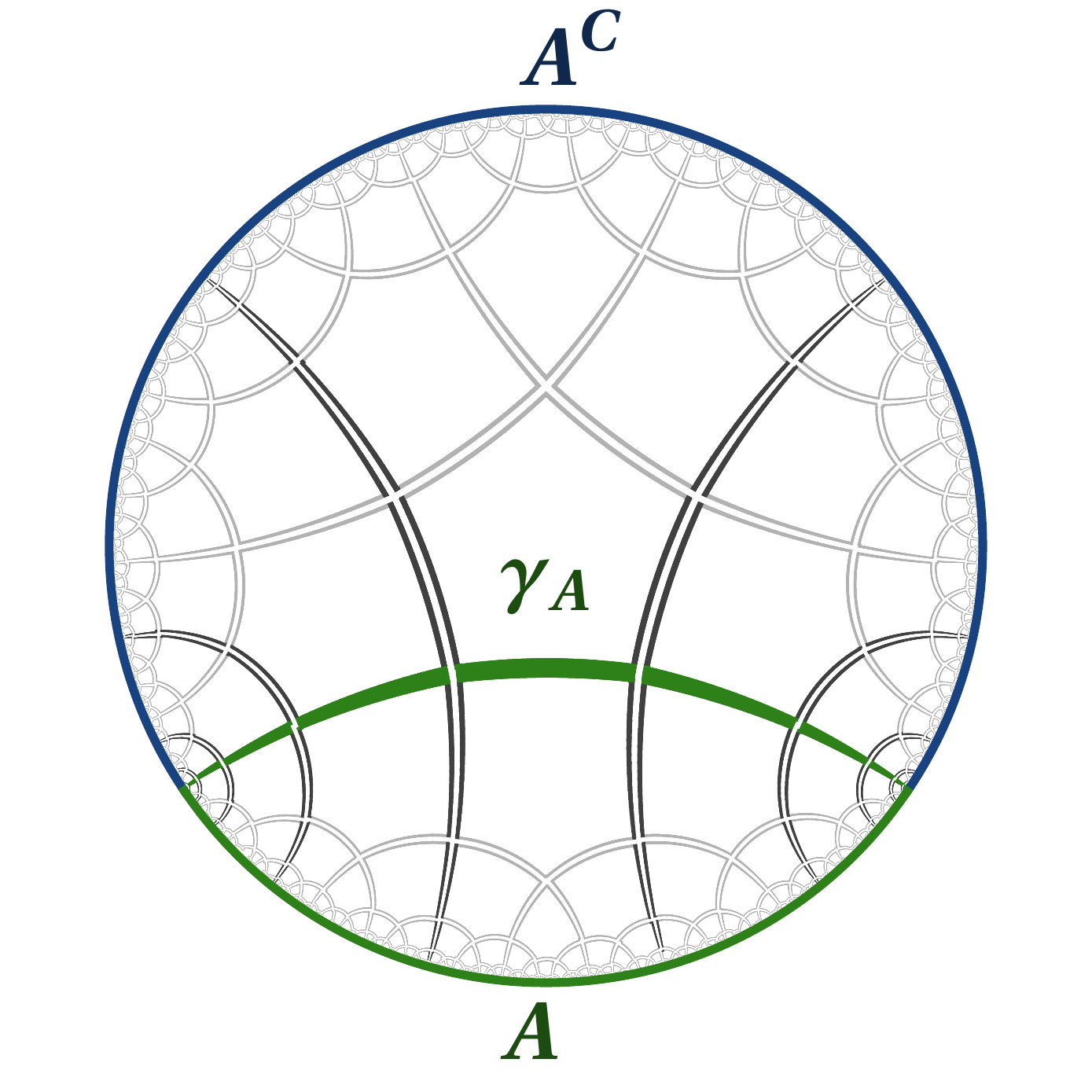}
\caption{Effective bit thread picture in the asymptotically large HyPeC: All dimers are paired along bulk geodesics and any boundary region $A$ has a maximal flow of bit threads (dimer pairs) through the Ryu-Takayanagi surface $\gamma_A$. The threads that pass $\gamma_A$, each contributing $\log 2$ to the entanglement entropy $S_A$, are highlighted. Note that both the number of such threads and $S_A$ are UV-divergent.
}
\label{FIG_HAPPY_THREADS}
\end{figure}

\section{Discussion}
%\subsection{Summary}
In this work, we have studied the intersection of stabilizer states and fermionic Gaussian states, both efficiently describable classes of quantum states with a wide range of applications in quantum information theory and both condensed matter and high energy physics.
For this purpose, we have introduced a novel graphical formalism for describing Majorana dimer states, free fermionic states characterized by entangled Majorana modes. 
These can describe stabilizer states such as those of the $[[5,1,3]]$ quantum error-correcting code. 
We applied this formalism to the recently constructed hyperbolic pentagon code (HyPeC), a discrete toy model of the AdS/CFT correspondence \cite{Pastawski2015}.
For logical bulk input fixed to code basis states, the HyPeC was found to correspond to Majorana dimers along discrete bulk geodesics. With the bulk geometry thus encoded in boundary state entanglement, we reproduced the logarithmic scaling of the entanglement entropy $S_A$ with its subsystem size for connected subsystems $A$, reproducing the Ryu-Takayanagi formula through a calculation that sharply resembles the recent \emph{bit thread} proposal. We also extended our results to bulk inputs containing local superpositions on each pentagon tile. For this case, where boundary states are generally non-Gaussian, Majorana dimers quantify the dependence of the entanglement entropy on residual bulk regions.
%For a disconnected subsystem $A$, $S_A$ is bounded from above by the RT prediction, while the mutual information $I(A:B)$ of two regions is bounded from below. The tripartite information $I_3(A:B:C)$ is always zero for Majorana dimer states. 
%In the case of general logical bulk input, the entanglement structure becomes dependent on the logical input within the residual bulk regions left by the greedy algorithm.
We also provided a method for computing non-Gaussian $n$-point correlations function of the HyPeC for arbitrary bulk input, finding that the Majorana dimer structure -- i.e.,\ boundary correlations only between pairs of Majorana modes -- is preserved for $n=2$, a feature related to the quantum error-correcting properties of the code.
Furthermore, we showed that Majorana dimers can describe a range of entangled states, including GHZ states and models such as the Kitaev chain, while also allowing for complicated non-Gaussian states by expansion in a Majorana dimer basis.
Finally, tensor networks based on Majorana dimers provide a particularly simple model of an RG flow, where an IR$\to$UV transformation is interpreted as an addition of new dimer degrees of freedom upon contraction.

%\subsection{Outlook}

As this work has focused on the specific Majorana dimer structure of the $[[5,1,3]]$ code and the HyPeC that is built upon it, we have only glimpsed at the general relationship between Majorana fermions and stabilizers. While our graphical formalism for Majorana dimers can be used to describe a wide range of entangled quantum states, including generalized stabilizer codes, only a subset of these could be covered here. 
%At the same time it should be clear that our framework is versatile enough to accommodate such general situations. 
%Indeed, our approach allows for a direct construction of any quantum error correcting stabilizer code with a Gaussian fermionic analogue by directly fixing its entanglement structure.
As this formalism allows for the construction of quantum states from their entanglement symmetries, a more systematic study of Majorana dimer states and their symmetries would be useful in the future.
With our approach allowing for a direct analytical contraction of dimer-based tensor networks through simple graphical rules, and a possible description of non-Gaussian states through dimer superpositions, there appears to be a vast number of potential applications. 
Within the Gaussian setting, an interesting question is the deformation of Gaussian stabilizer states. As each Majorana dimer state can be expressed by a \emph{matchgate tensor} \cite{Jahn:2017tls}, one may consider smooth deformation of the HyPeC (and other stabilizer models) while retaining an efficiently contractible tensor network. 
Under such deformations, it is conceivable that a picture with some effective degrees of freedom localized to geodesics is retained.
For example, there exists a possible connection to \emph{ribbon operators} \cite{PhysRevB.94.205123} which appear in the study of topological phases of matter away from fixed point models. This would also involve exploring the similarities between Majorana dimers and anyon models.
One may also wish to address the actual recovery rates of logical qubits in holographic codes, which have been studied both in the original HyPeC proposal \cite{Pastawski2015} and in extensions such as the Calderbank-Shor-Steane (CSS) holographic heptagon code \cite{PhysRevA.98.052301}. 
Their remarkable property of a resilience of logical qubits further in the bulk may be studied more directly with Majorana dimers, where an explicit mapping between bulk and boundary degrees of freedom is provided.
While the toy models studied here are inherently discrete, the many properties of the HyPeC resembling a \emph{conformal field theory} (CFT) motivate further studies on its continuum limits, analogous to continuous MERA \cite{PhysRevLett.115.171602,PhysRevLett.110.100402}. While rigorous studies of the continuum quantum fields corresponding to lattice models are ongoing \cite{Osborne:2019bsq}, the \emph{quasiregular} symmetries expected on the boundary of regular hyperbolic bulk tilings \cite{Boyle:2018uiv} may require a different notion of a CFT for regular discretizations than the familiar continuum formulation. We hope that the present work
stimulates further endeavours in this direction.

\emph{Acknowledgements.} 
We would like to thank Pawel Caputa, Xiao-Liang Qi, Tadashi Takayanagi, Nicholas Tarantino, Han Yan, Michael Walter and Carolin Wille for valuable comments and discussions.
This work has been supported by the ERC (TAQ), the Templeton Foundation, 
the Studienstiftung des Deutschen Volkes, and the DFG (EI 519/14-1, EI 519/15-1, CRC 183 B1, FOR 2724).
This work has also received funding from the European Union's Horizon 2020
research and innovation programme under grant agreement No 817482 (PASQuanS).

\appendix
\begin{widetext}
\newpage

\section{Dimer contraction rules}
\label{APP_CONTR_RULES}

We will now prove the contraction rules from the main text by considering all possible Majorana dimer configurations that can be contracted, showing that they result in either a new Majorana dimer state or vanish. 
As the Majorana dimer diagrams are defined as an effective representation of spins, we define contractions in the Majorana picture to be consistent with the result obtained by reversing the Jordan-Wigner transformation, contracting the corresponding spin degrees of freedoms, and applying a new Jordan-Wigner transformation on the remaining ones.
This is equivalent to always contracting the first two fermionic degrees of freedom under the given ordering, as this requires no re-ordering when projecting onto the $\ket{0,0}$ and $\ket{1,1}$ subspaces in the spin picture during contraction.

Note that any contraction is equivalent to a self-contraction. For example, when contracting two state vectors $\ket\phi$ and $\ket\psi$ over some fermionic degree of freedom, we can simply view this as a self-contraction of the tensor product $\ket\phi \ket\psi$. By using cyclic permutations, we can relate the contraction of any edge index pair $(j,k)$ with $j<k$ to the canonical case $(i,j)=(1,2)$. Equivalently, as we will consider below, we can apply the contraction rules to the last two edges under a given ordering (which avoids re-numbering all the edges).
As we will see throughout this section, the contraction rules rely purely on the dimer parities of dimers connected to contracted edges, so changing the index ordering for states with even total parity does not affect the result, as dimer parities are left invariant. To see that the same logic holds for parity-odd states, consider the following re-ordered versions of the contraction \eqref{EQ_CONTR_EX1} from the main text, where we assume the product state vector $\ket\phi \ket\psi$ to have odd total parity, and again omit dimers that are not connected to the contracted edge:
\begin{align}
\begin{gathered}
\begin{gathered}
\includegraphics[height=0.12\textheight]{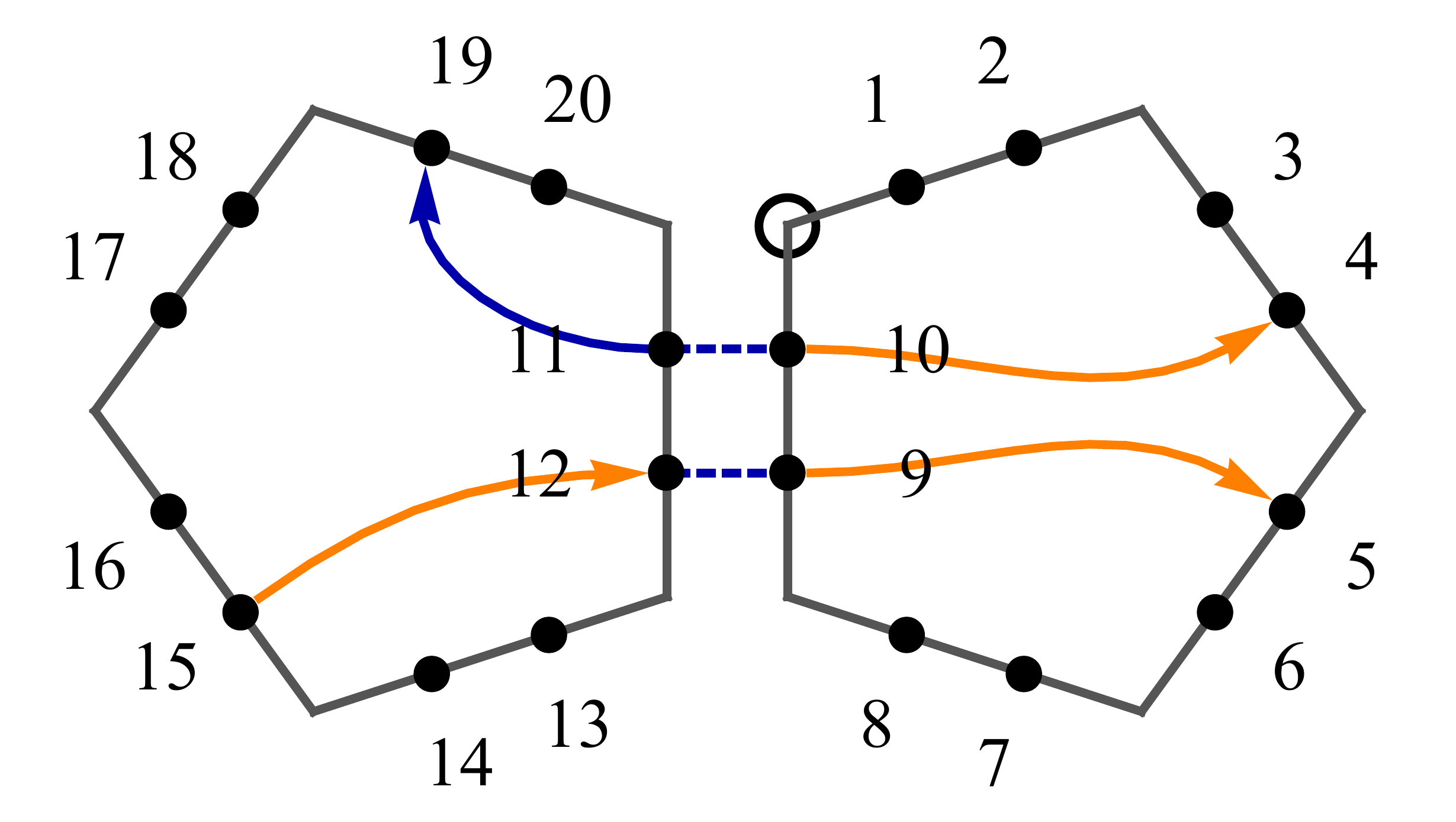}
\end{gathered}
\quad \scalebox{1.2}{$\rightarrow$} \\
\hspace{-23pt}\scalebox{1.2}{$\parallel$} \\
\begin{gathered}
\includegraphics[height=0.12\textheight]{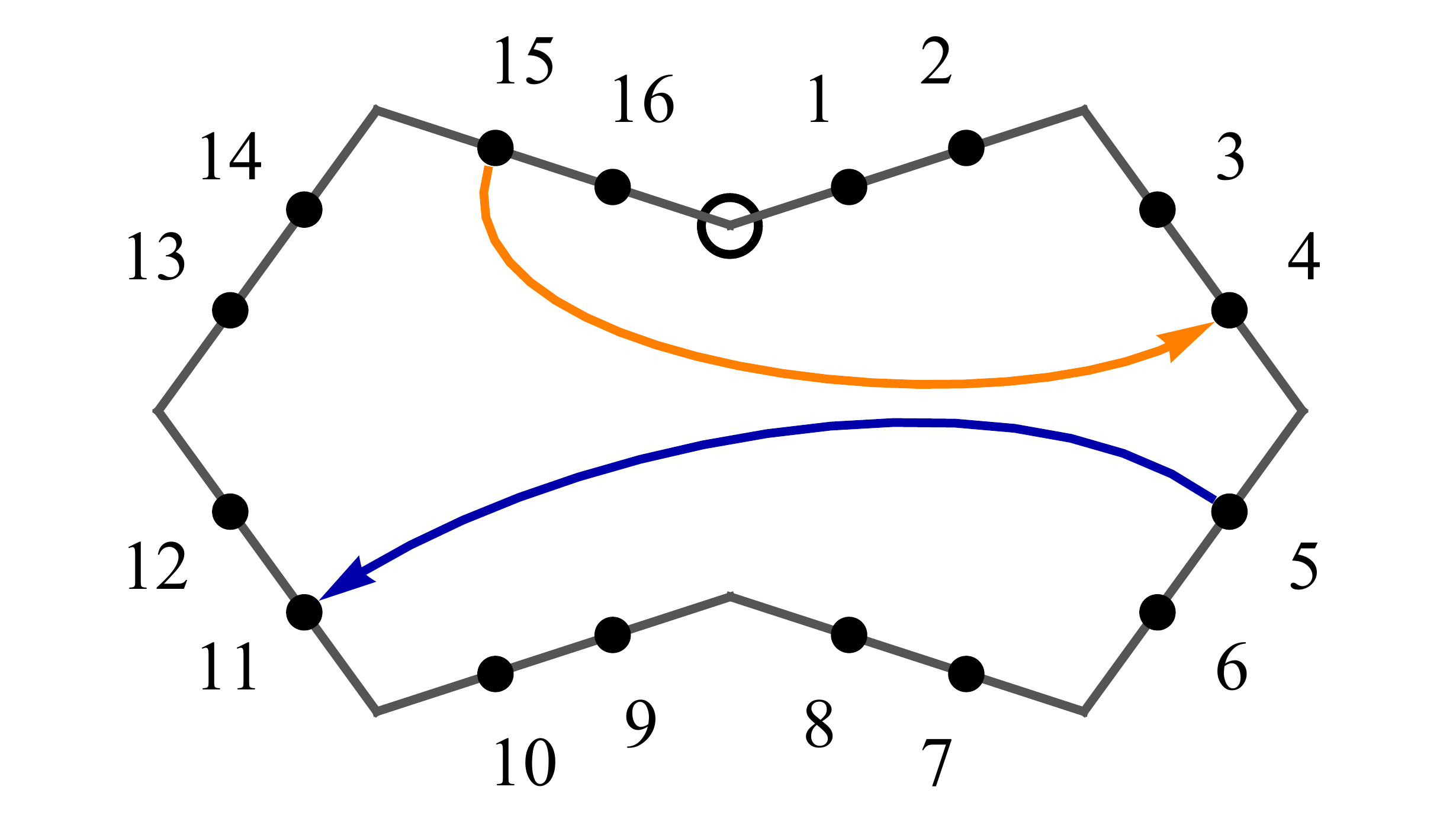}
\end{gathered}
\quad \scalebox{1.2}{$\leftarrow$}
\end{gathered}
\quad
\begin{gathered}
\begin{gathered}
\includegraphics[height=0.12\textheight]{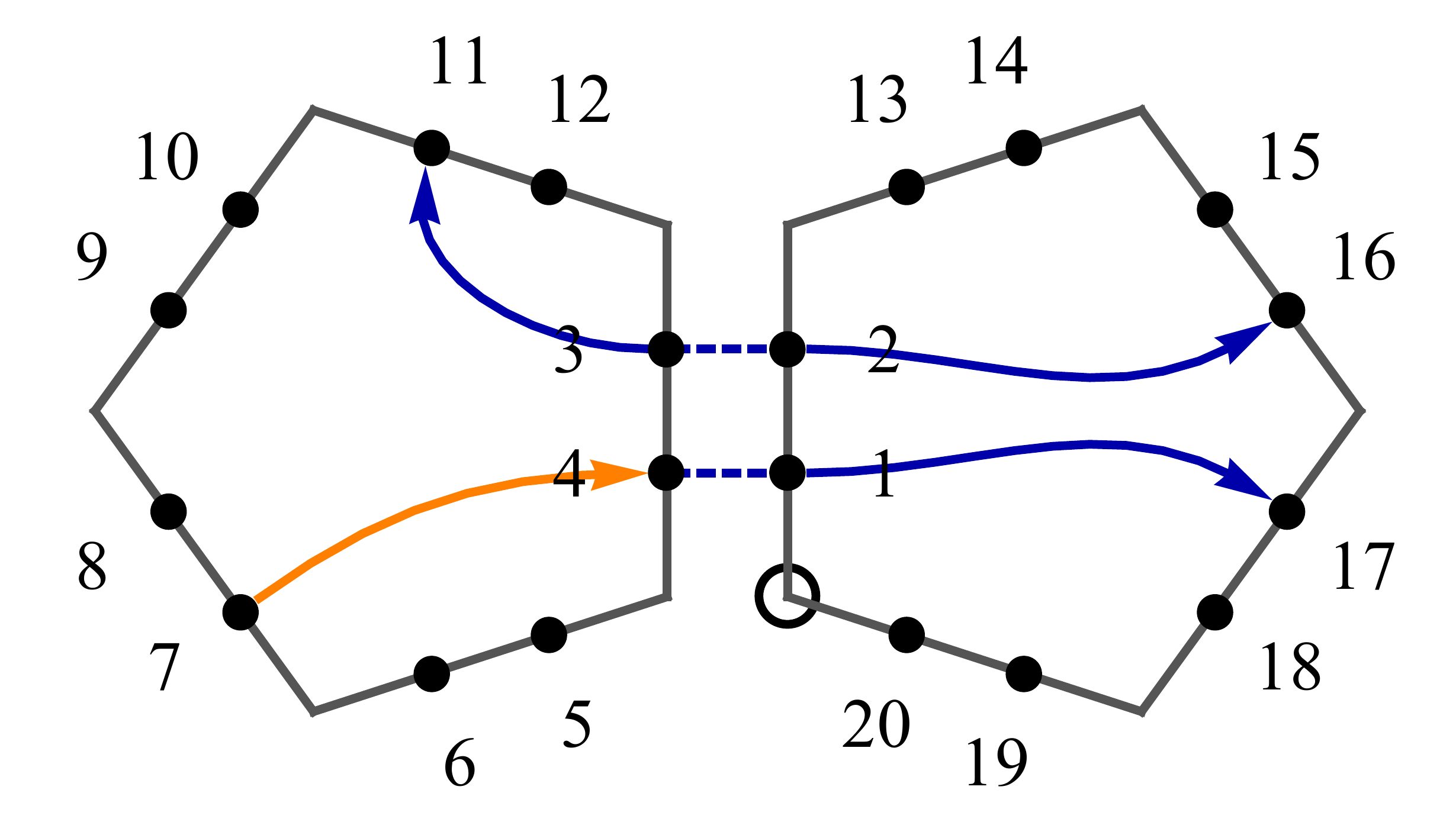}
\end{gathered}
\quad \scalebox{1.2}{$\rightarrow$} \\
\hspace{-23pt}\scalebox{1.2}{$\parallel$} \\
\begin{gathered}
\includegraphics[height=0.12\textheight]{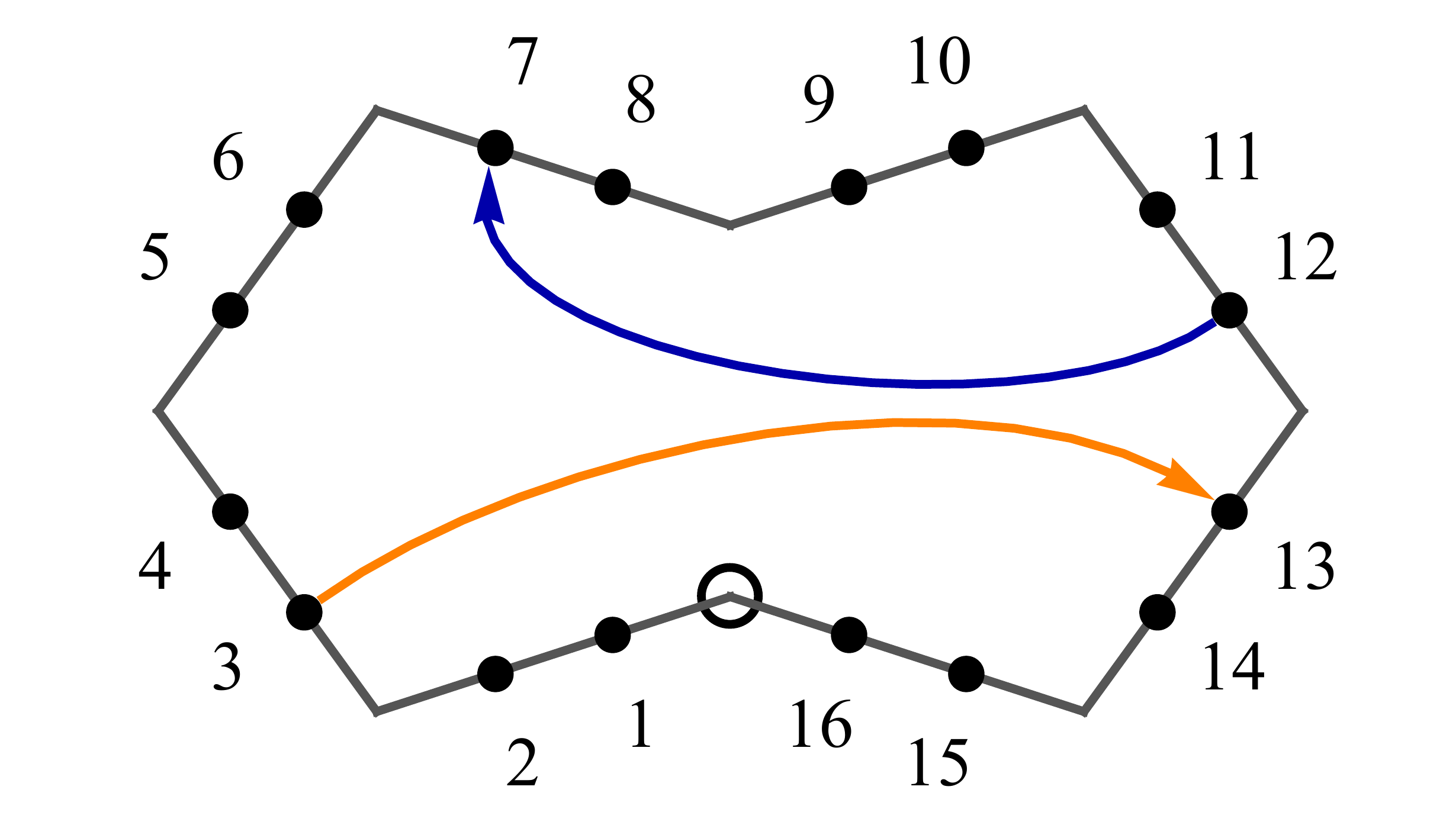}
\end{gathered}
\quad \scalebox{1.2}{$=$}
\end{gathered}
\quad
\begin{gathered}
\includegraphics[height=0.12\textheight]{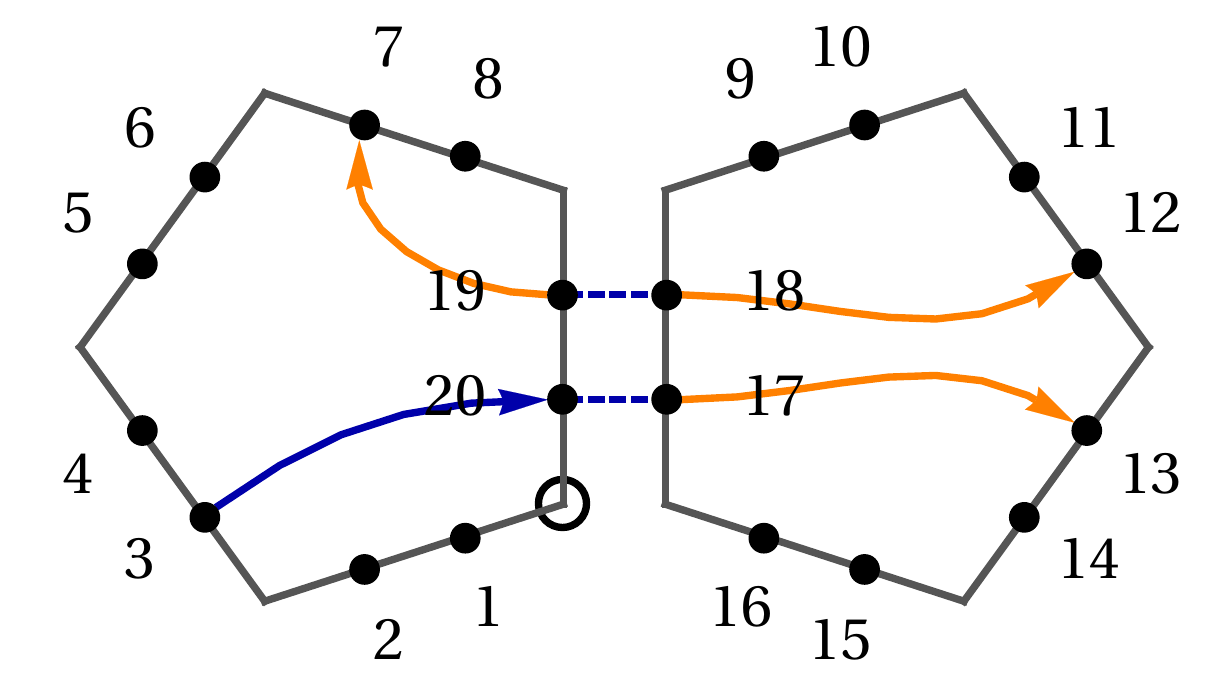}\\
\scalebox{1.2}{$\parallel$}\\
\includegraphics[height=0.12\textheight]{pentagon_net_contr_ex2e.pdf}
\end{gathered}
\end{align}
The two diagrams on the left correspond to the contraction \eqref{EQ_CONTR_EX1} proved in the main text. Cyclic permutations relate this to a self-contraction of the first two fermions (Majorana modes $1$ to $4$, centre) and alternatively of the two last fermions (Majorana modes $17$ to $20$, right). The pivot of the permutation is again represented by a small circle. As we see, applying the dimer contraction rules from the main text leads to equivalent results under cyclic permutations. Note, however, that forming product state vectors $\ket\phi \ket\psi$ requires an ordering of the modes in $\ket\phi$ before the ones in $\ket\psi$, which can still lead to additional parity shifts when contracting in parity-odd states. We resolve these ambiguities in Appendix \ref{APP_CONTR_ORDER}.

Let us now prove the general case of the previous diagram, written as a self-contraction of an arbitrary Majorana dimers state (of which a product state is only a special case). We start with contractions of the form
\begin{equation}
\label{EQ_SELF_CONTR_EX1}
\begin{gathered}
\includegraphics[height=0.15\textheight]{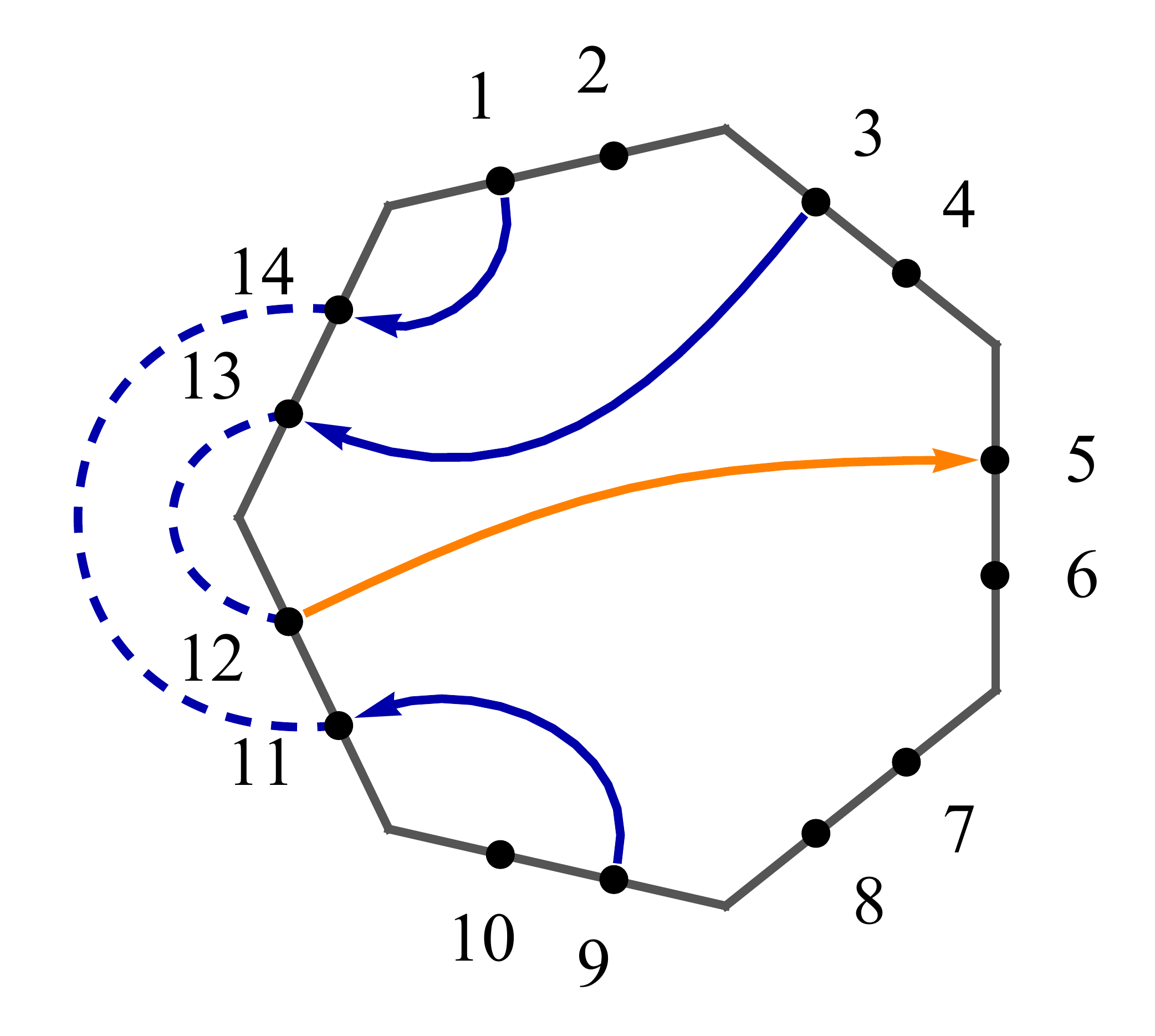}
\end{gathered}
\quad \scalebox{1.5}{$=$} 
\begin{gathered}
\includegraphics[height=0.15\textheight]{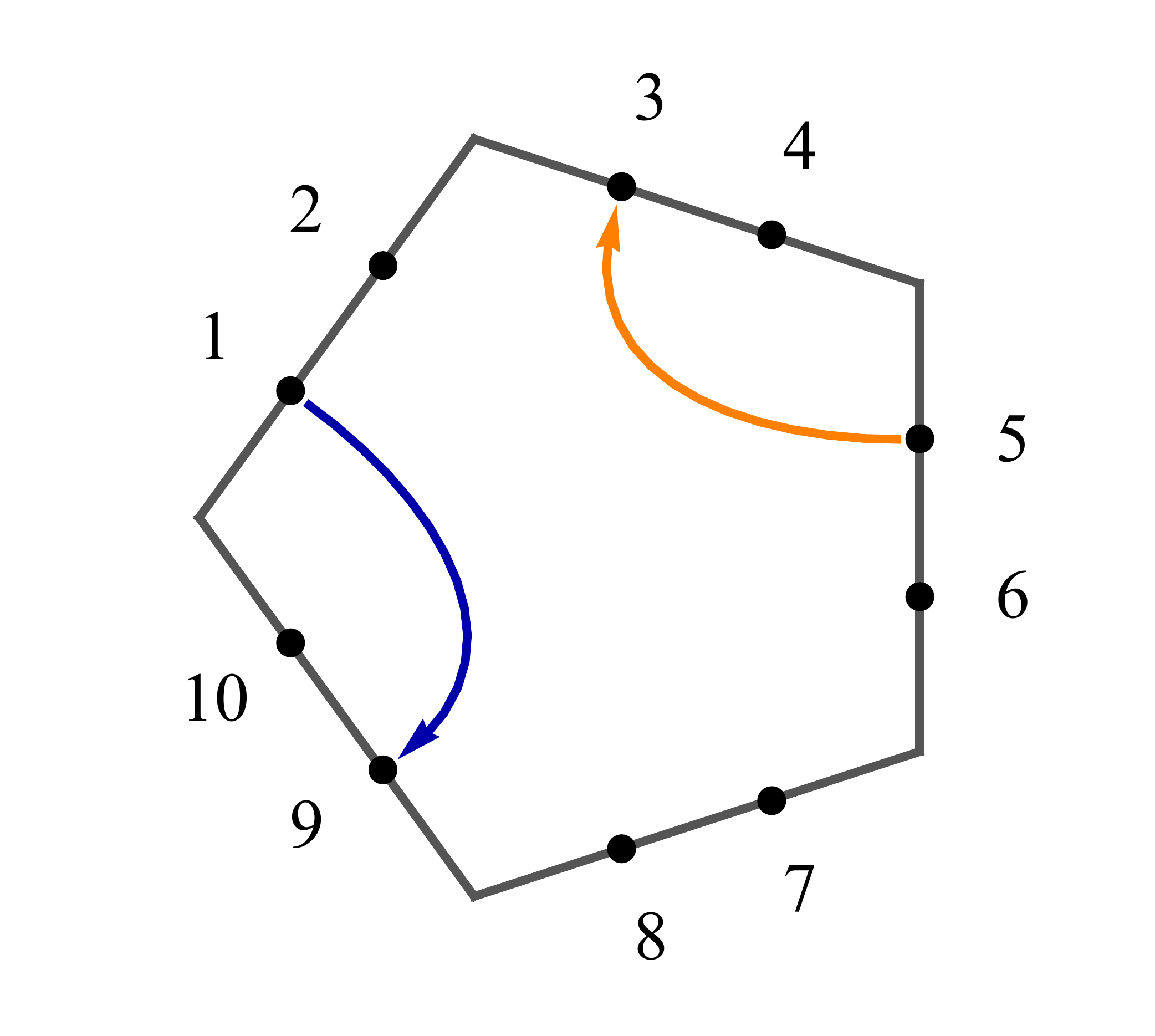}
\end{gathered}
\end{equation}
We start with an $N$-fermion state vector $\ket\chi$ ($N=7$ in \eqref{EQ_SELF_CONTR_EX1}) that obeys the Majorana dimer conditions
\begin{align}
\label{EQ_COND1A}
(\m_a + \i\, p_{a,2N} \m_{2N})\ket\chi &= 0 \text{ ,} & (\m_b + \i\, p_{b,2N-3} \m_{2N-3})\ket\chi &= 0 \text{ ,}  \\
\label{EQ_COND1B}
(\m_c + \i\, p_{c,2N-1} \m_{2N-1})\ket\chi &= 0 \text{ ,} & (\m_d + \i\, p_{d,2N-2} \m_{2N-2})\ket\chi &= 0  \text{ ,}
\end{align}
where we assume for now that $a<b$ and $c<d$, so that the dimer lines do not cross ($a=1$, $b=9$, $c=3$, $d=5$ in \eqref{EQ_SELF_CONTR_EX1}). We claim that after contraction, the contracted state vector $\ket\omega$ is again a Majorana dimer state with conditions
\begin{align}
(\m_a + \i\, p_{a,2N} p_{b,2N-3} \m_b)\ket\omega &= 0 \text{ ,} 
\label{EQ_CONTR_COND1A}\\
(\m_c + \i\, p_{c,2N-1} p_{d,2N-2} \m_d)\ket\omega &= 0 \text{ ,}
\label{EQ_CONTR_COND1B}
\end{align}
which means that the parities along a contracted path are multiplied. We write these conditions as $O_k \ket\omega =0$ with $k \in \lbrace 1,2 \rbrace$ denoting the two cases. Using the tools developed in section \ref{SEC_T_CONTR}, we will now prove them simultaneously:
\begin{align}
O_k \ket\omega &= O_k \int \text{d}\fd_N \text{d}\fd_{N-1}\, e^{\,\fd_{N-1} \fd_N} \ket\chi \nonumber \\
&= \int \text{d}\fd_N \text{d}\fd_{N-1}\, e^{\,\fd_{N-1} \fd_N}
\begin{cases}
\left(\m_a + \i\, p_{a,2N} p_{b,2N-3} \m_b\right) \ket\chi & \text{for $k=1$} \\
\left(\m_c + \i\, p_{c,2N-1} p_{d,2N-2} \m_d\right) \ket\chi & \text{for $k=2$}
\end{cases} \text{ .}
\end{align}
Using \eqref{EQ_COND1A} and \eqref{EQ_COND1B}, we can rewrite this purely in terms of operators acting locally on the contracted edges:
\begin{align}
O_k \ket\omega &= \int \text{d}\fd_N \text{d}\fd_{N-1}\, e^{\,\fd_{N-1} \fd_N}
\begin{cases}
p_{a,2N} (\m_{2N-3} - \i\, \m_{2N}) \ket\chi & \text{for $k=1$} \\
p_{c,2N-1} (\m_{2N-3} - \i\, \m_{2N-1}) \ket\chi & \text{for $k=2$}
\end{cases} \nonumber \\
&\propto \int \text{d}\fd_N \text{d}\fd_{N-1}\, e^{\,\fd_{N-1} \fd_N}
\begin{cases}
\left( \fe_{N-1} + \fd_{N-1} - \fe_N  + \fd_N \right) \ket\chi \\
\left( \fe_{N-1} - \fd_{N-1} + \fe_N  + \fd_N \right) \ket\chi
\end{cases} \nonumber\\
&= \int \text{d}\fd_N \text{d}\fd_{N-1}\,
\begin{cases}
\left(  \fd_{N-1} + \fd_N  -  \fd_N \fd_{N-1} \fe_{N-1} - \fd_{N-1} \fd_N \fe_N \right) \ket\chi \\
\left( - \fd_{N-1} + \fd_N -  \fd_N \fd_{N-1} \fe_{N-1} + \fd_{N-1} \fd_N \fe_N \right) \ket\chi
\end{cases} \nonumber\\
&= \int \text{d}\fd_N \text{d}\fd_{N-1}\,
\begin{cases}
\left( \fe_{N-1} \fd_{N-1}\fd_N + \fd_{N-1} \fe_N \fd_N \right) \ket\chi \\
\left( \fe_{N-1}\fd_{N-1}\fd_N  - \fd_{N-1} \fe_N \fd_N \right) \ket\chi
\end{cases}
= 0 \text{ .}
\end{align}
We previously assumed $a<b$ and $c<d$. What happens if e.g.\ $c>d$? As the condition \eqref{EQ_CONTR_COND1B} for the contracted state vector $\ket\omega$ still hold, we just multiply both sides by $-\i\, p_{c,2N-1} p_{d,2N-2}$, yielding
\begin{equation}
(\m_d - \i\, p_{c,2N-1} p_{d,2N-2} \m_c)\ket\omega = 0 \text{ .}
\end{equation}
In other words, contracting out two Majorana dimers that cross each other flips the parity of the resulting dimer. For our example, the corresponding diagram has the form
\begin{equation}
\begin{gathered}
\includegraphics[height=0.15\textheight]{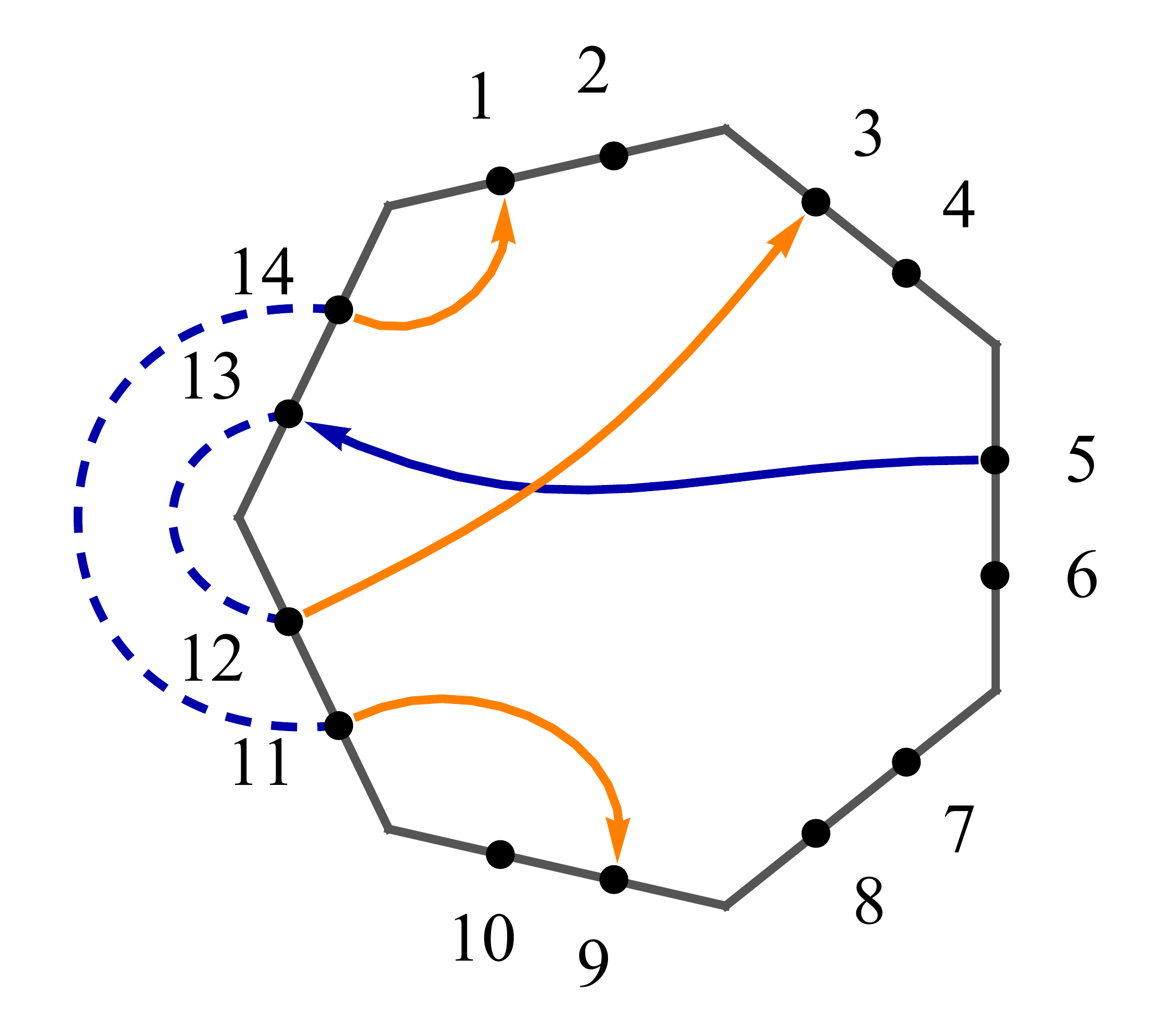}
\end{gathered}
\quad \scalebox{1.5}{$=$} 
\begin{gathered}
\includegraphics[height=0.15\textheight]{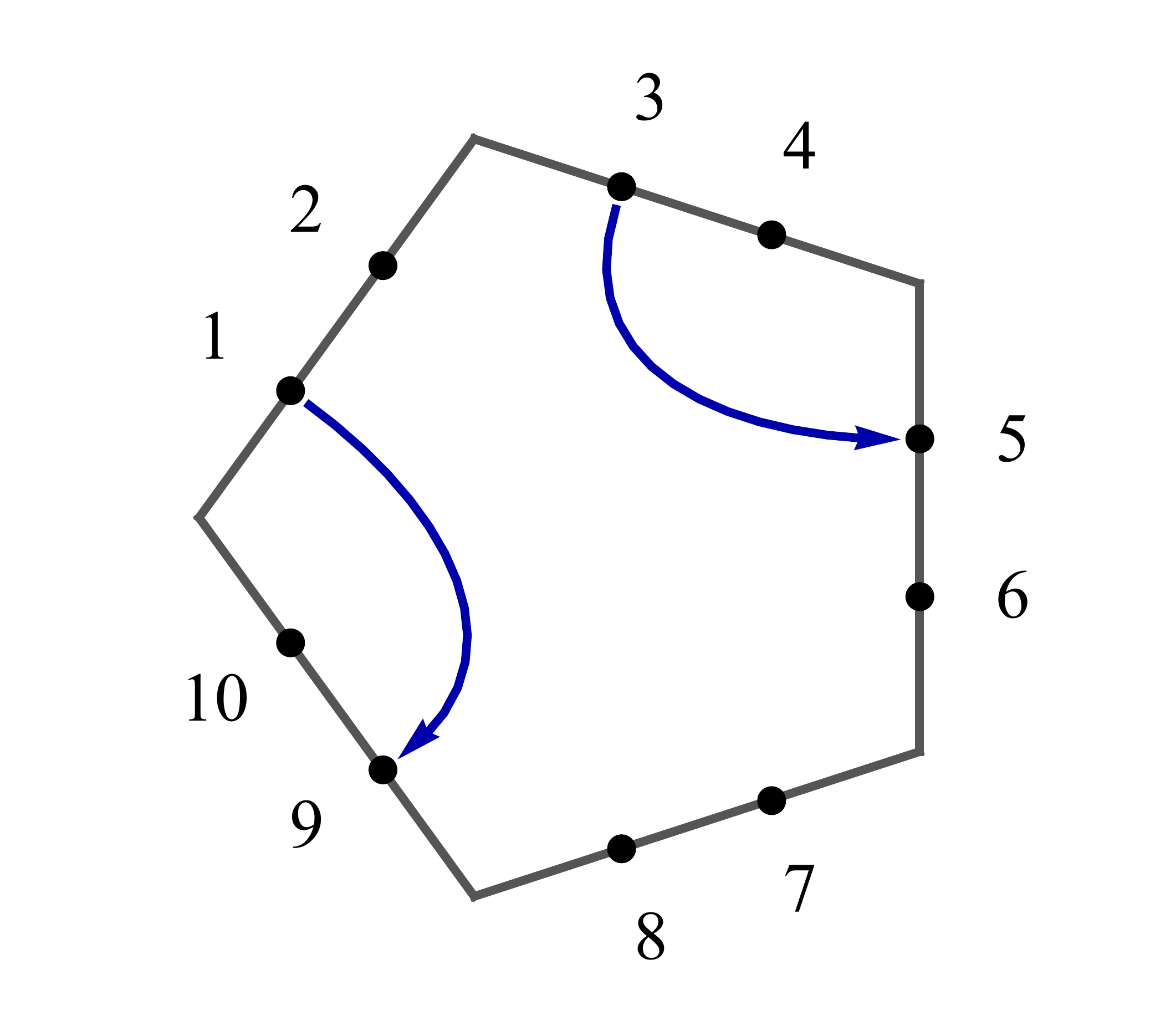}
\end{gathered}
\end{equation}

Self-contractions also allow for special cases involving dimers on the contracted edge itself, which we will now prove, as well. First, consider the case where one of the contracted edges contains a local dimer, such as the contraction
\begin{equation}
\label{EQ_SELF_CONTR_EX2}
\begin{gathered}
\includegraphics[height=0.15\textheight]{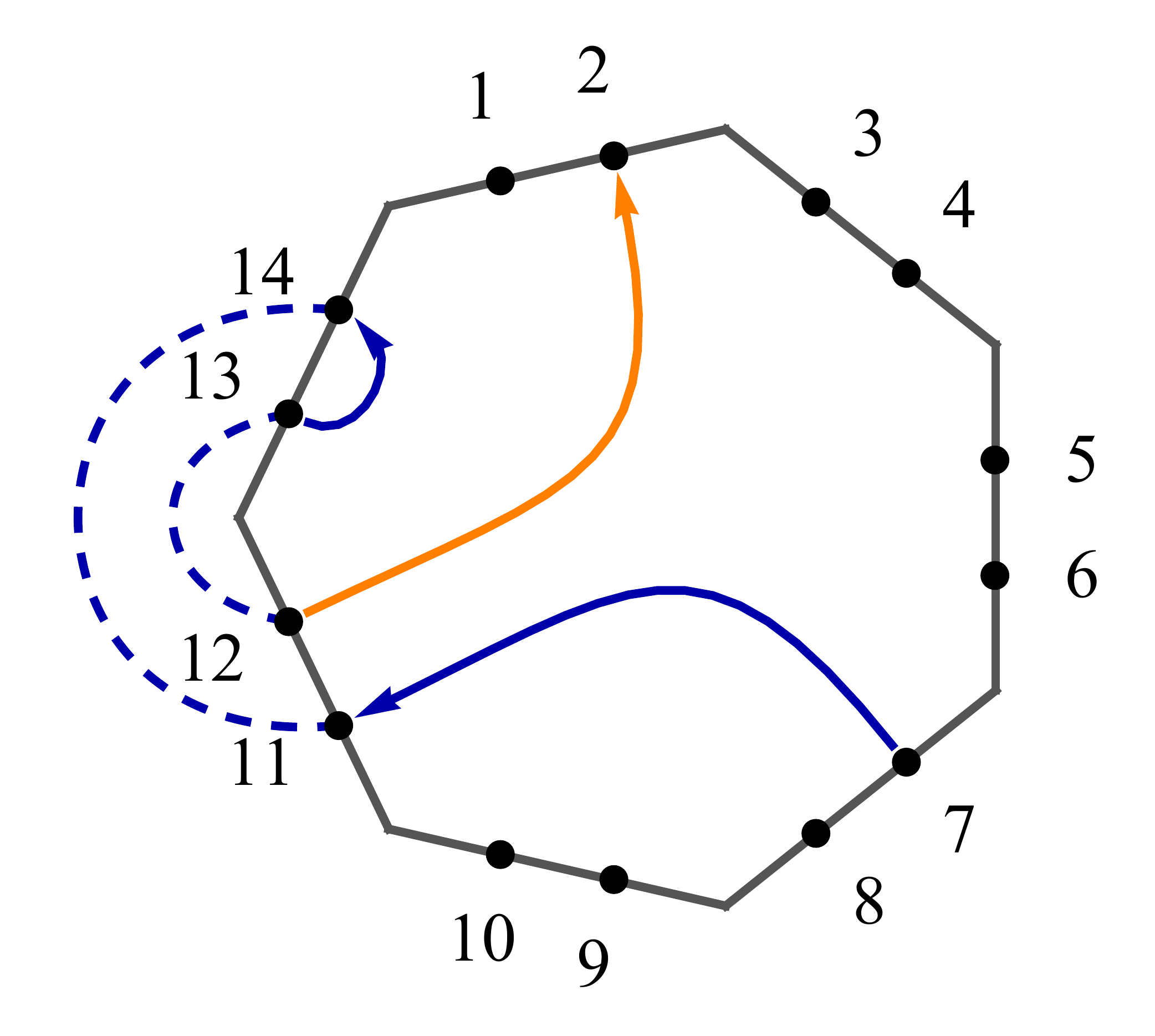}
\end{gathered}
\quad \scalebox{1.5}{$=$} 
\begin{gathered}
\includegraphics[height=0.15\textheight]{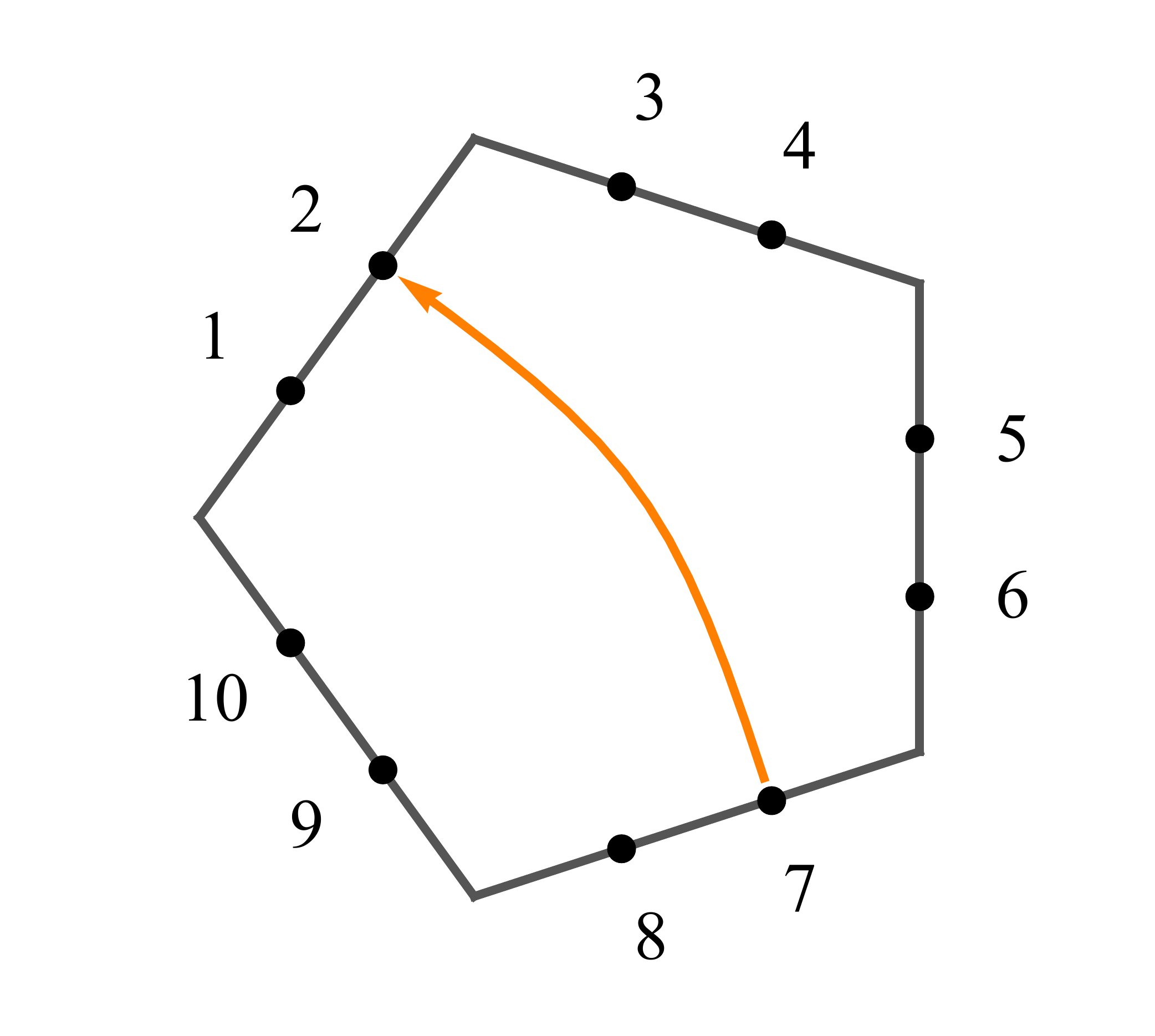}
\end{gathered}
\end{equation}
The contracted path contains contributions from three parities. Without loss of generality, we assume that the local dimer is located on the $N$th edge, so that we start with the conditions
\begin{align}
(\m_{2N-1} + \i\, p_{2N-1,2N} \m_{2N})\ket\chi = &\left( (1-p_{2N-1,2N}) f_N^\dagger + (1+p_{2N-1,2N}) f_N \right) \ket\chi = 0 \text{ ,}\\
(\m_a + \i\, p_{a,2N-2} \m_{2N-2})\ket\chi &= 0 \text{ ,}\hspace{1.5cm}
(\m_b + \i\, p_{b,2N-3} \m_{2N-3})\ket\chi = 0 \text{ .}
\end{align}
In our example \eqref{EQ_SELF_CONTR_EX2}, $a=2$ and $b=7$. The first line simply becomes $\fd_N \ket\chi = 0$ for $p_{2N-1,2N}=-1$ and $\fe_N \ket\chi = 0$ for $p_{2N-1,2N}=+1$. The latter case implies that $\int\text{d}\fd_N \ket\chi= 0$ as well, as Grassmann integrations and annihilators act equivalently. We now prove that these assumptions for the uncontracted $\ket\chi$ imply that
\begin{equation}
\label{EQ_CONTR_COND2}
(\m_a + \i\,  p_{a,2N-2} p_{2N-1,2N} p_{b,2N-3} \, \m_b) \ket\omega = 0
\end{equation}
for the contracted $\ket\omega$, similar to \eqref{EQ_CONTR_COND1A} and \eqref{EQ_CONTR_COND1B}. The proof is similar to the previous setup:
\begin{align}
(\m_a + \i\,  p_{a,2N-2} p_{2N-1,2N} p_{b,2N-3} \, \m_b) \ket\omega 
&= \int \text{d}\fd_N \text{d}\fd_{N-1}\, e^{\,\fd_{N-1} \fd_N} (\m_a + \i\,  p_{a,2N-2} p_{2N-1,2N} p_{b,2N-3} \, \m_b) \ket\chi \nonumber\\
&= p_{a,2N-2} \int \text{d}\fd_N \text{d}\fd_{N-1}\, e^{\,\fd_{N-1} \fd_N} (p_{2N-1,2N}\, \m_{2N-3} - \i\, \m_{2N-2}) \ket\chi \nonumber\\
&= 2 p_{a,2N-2} \int \text{d}\fd_N \text{d}\fd_{N-1}\, e^{\,\fd_{N-1} \fd_N}
\begin{cases}
-\fe_{N-1} \ket\chi & \text{ for $p_{2N-1,2N}=-1$} \\
\fd_{N-1} \ket\chi & \text{ for $p_{2N-1,2N}=+1$} \\
\end{cases} \nonumber\\
&= 2 p_{a,2N-2}
\begin{cases}
-\int \text{d}\fd_N \text{d}\fd_{N-1}\, \fd_{N-1} \fd_N \fe_{N-1} \ket\chi \\
\int \text{d}\fd_N \text{d}\fd_{N-1}\, \fd_{N-1} \ket\chi \\
\end{cases} \nonumber\\
&= 2 p_{a,2N-2}
\begin{cases}
\int \text{d}\fd_N \text{d}\fd_{N-1}\, \fd_{N-1} \fe_{N-1} \fd_N \ket\chi \\
\int \text{d}\fd_{N-1} \fd_{N-1} \int \text{d}\fd_N\, \ket\chi \\
\end{cases} \nonumber\\
&= 0 \text{ .}
\end{align}
Again, crossing the two initial paths so that $a>b$ introduces an additional minus sign to the contracted parity.
The next case to consider contains a Majorana pair across the two contracted edges, such as in the diagram
\begin{equation}
\label{EQ_SELF_CONTR_EX3}
\begin{gathered}
\includegraphics[height=0.15\textheight]{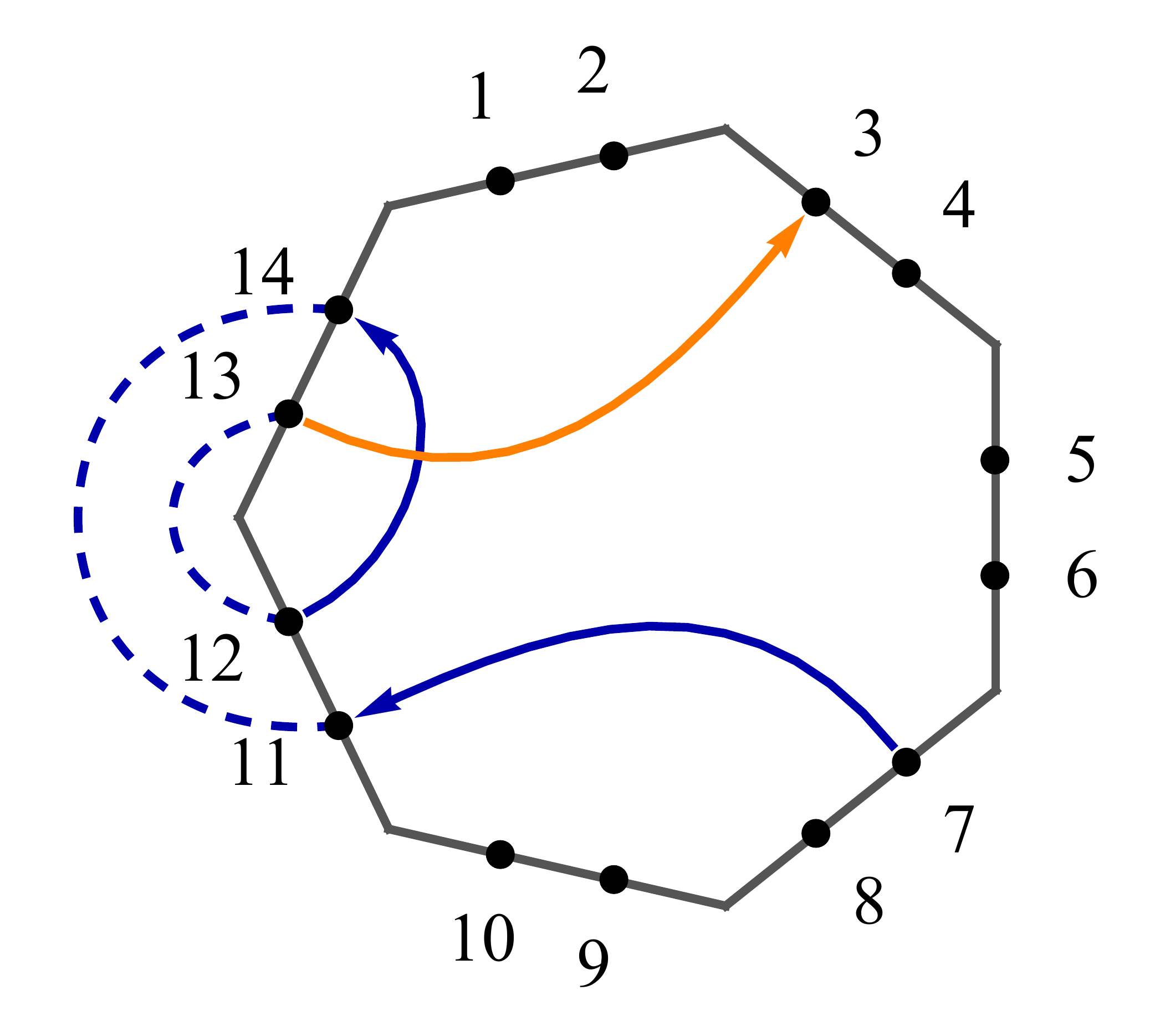}
\end{gathered}
\quad \scalebox{1.5}{$=$} 
\begin{gathered}
\includegraphics[height=0.15\textheight]{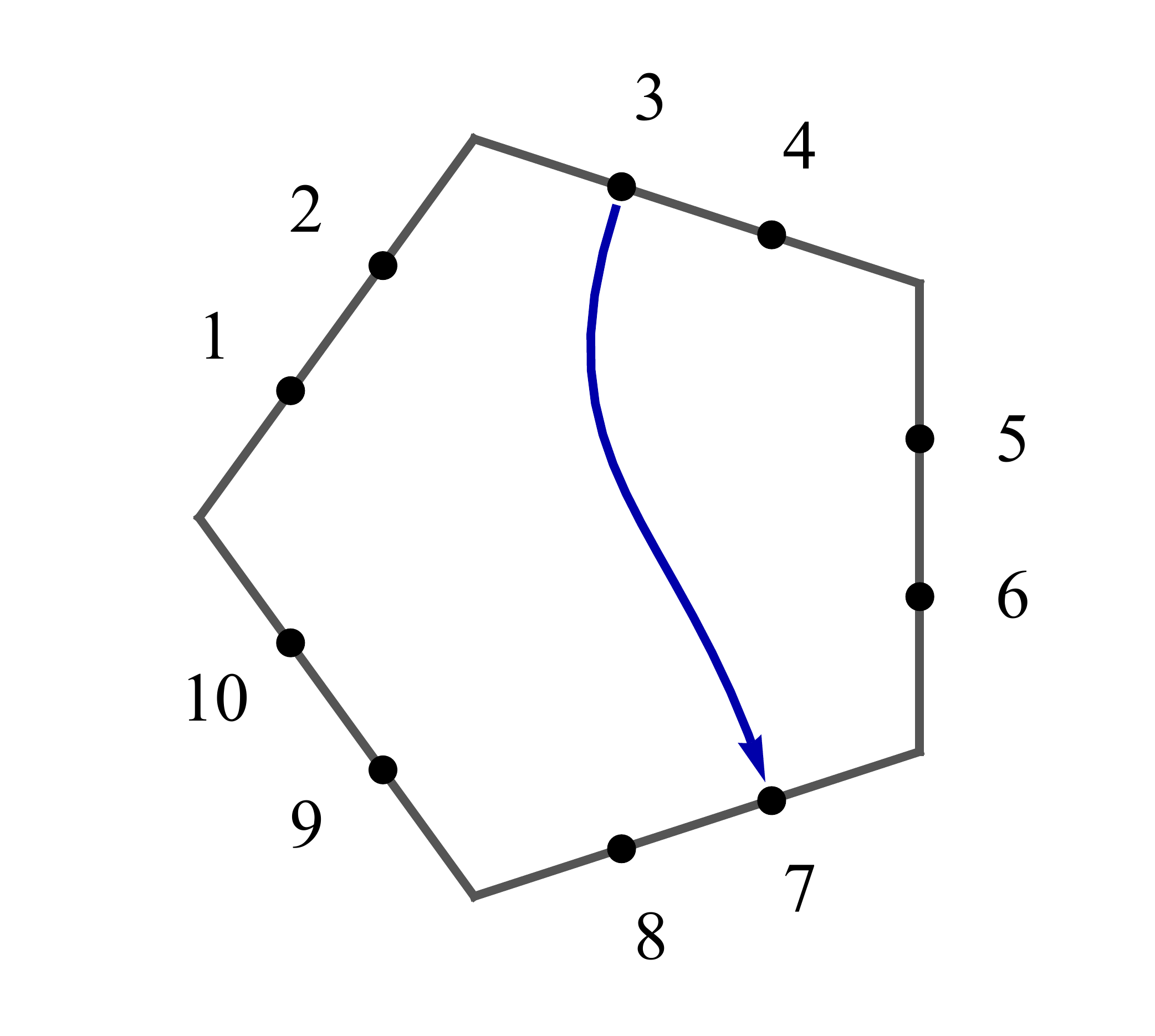}
\end{gathered}
\end{equation}
Note that this kind of contraction always contains a crossing. Again without loss of generality, we assume that the dimer on the contracted edges connects Majorana modes $2N-3$ and $2N$. The full conditions for the uncontracted state are
\begin{align}
(\m_{2N-2} +  \i\, p_{2N-2,2N}\, \m_{2N}) \ket\chi &= \left( \i\, (\fd_{N-1} - \fe_{N-1}) -  p_{2N-2,2N} (\fd_N - \fe_N) \right) \ket\chi = 0 \text{ ,}\\
(\m_a + \i\, p_{a,2N-1} \m_{2N-1})\ket\chi &= 0 \text{ ,}\hspace{2cm}
(\m_b + \i\, p_{b,2N-3} \m_{2N-3})\ket\chi = 0 \text{ ,}
\end{align}
with $a=3$ and $b=7$ in \eqref{EQ_SELF_CONTR_EX3}. The first condition can be rewritten into the form
\begin{equation}
( p_{2N-2,2N}\, \fd_{N-1} +  \i\, \fd_N ) \ket\chi = ( p_{2N-2,2N}\, \fe_{N-1} +  \i\, \fe_N ) \ket\chi \text{ .}
\end{equation}
We now prove the contracted state fulfills 
\begin{equation}
(\m_a - \i\,  p_{a,2N-1} p_{2N-2,2N} p_{b,2N-3} \, \m_b) \ket\omega = 0 \text{ .}
\end{equation} 
Note that additional minus sign in comparison to \eqref{EQ_CONTR_COND2} due to the crossing. The proof is given by
\begin{align}
(\m_a - \i\,  p_{a,2N-1} p_{2N-2,2N} p_{b,2N-3} \, \m_b) \ket\omega 
&= \int \text{d}\fd_N \text{d}\fd_{N-1}\, e^{\,\fd_{N-1} \fd_N} (\m_a - \i\,  p_{a,2N-1} p_{2N-2,2N} p_{b,2N-3} \, \m_b) \ket\chi \nonumber\\
&= -p_{a,2N-1} \int \text{d}\fd_N \text{d}\fd_{N-1}\, e^{\,\fd_{N-1} \fd_N} (p_{2N-2,2N}\, \m_{2N-3} + \i\, \m_{N2-1}) \ket\chi \nonumber\\
&= -2 p_{a,2N-1} \int \text{d}\fd_N \text{d}\fd_{N-1}\, e^{\,\fd_{N-1} \fd_N} (p_{2N-2,2N} \fd_{N-1}+ \i\; \fd_N) \ket\chi \nonumber\\
&= -2 p_{a,2N-1} \int \text{d}\fd_N \text{d}\fd_{N-1}\, (p_{2N-2,2N} \fd_{N-1} + \i\; \fd_N) \ket\chi \nonumber\\
&= -2 p_{a,2N-1} \int \text{d}\fd_N \text{d}\fd_{N-1}\, \left( p_{2N-2,2N} \fe_{N-1} + \i\; \fe_N \right) \ket\chi \nonumber\\
&= 0 \text{ .}
\end{align}

Finally, consider contractions that involve paths that get completely removed by contraction. Up to parities, there are two possible dimer configurations for such contractions:
\begin{align}
\label{EQ_SELF_CONTR_EX3A}
\begin{gathered}
\includegraphics[height=0.15\textheight]{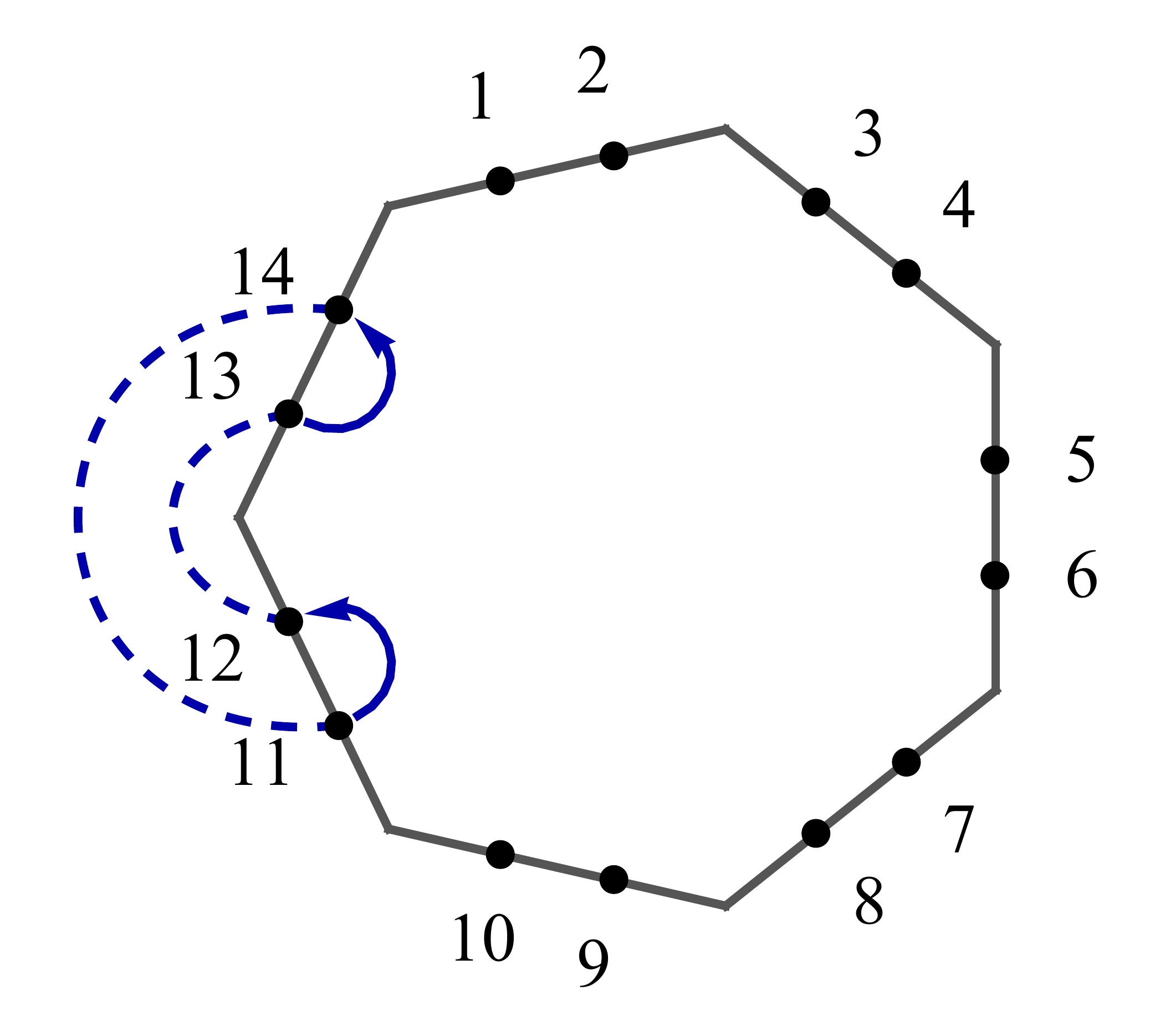}
\end{gathered}
\quad \scalebox{1.5}{$=$} 
\begin{gathered}
\includegraphics[height=0.15\textheight]{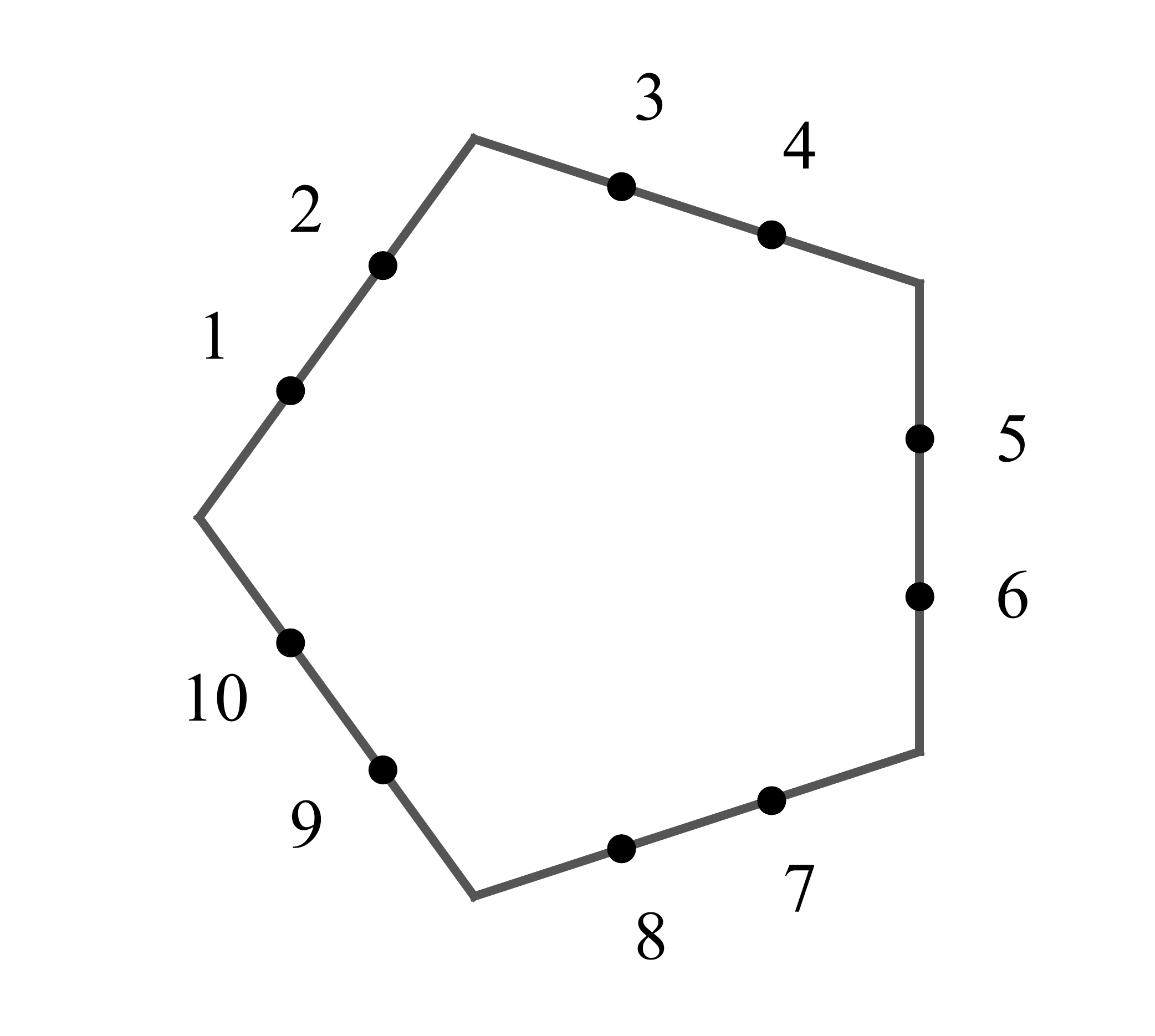}
\end{gathered}\\
\label{EQ_SELF_CONTR_EX3B}
\begin{gathered}
\includegraphics[height=0.15\textheight]{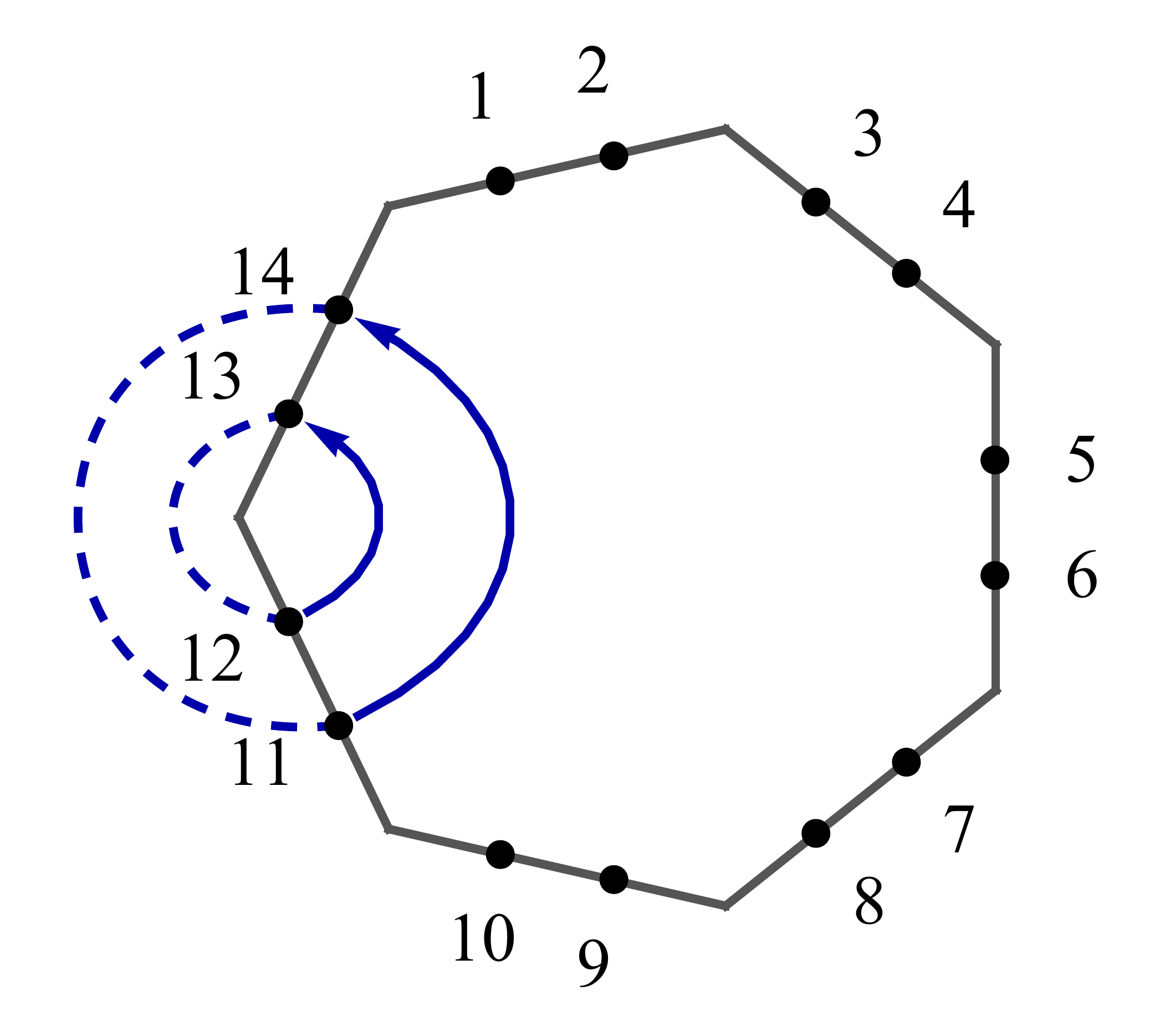}
\end{gathered}
\quad \scalebox{1.5}{$=$} 
\begin{gathered}
\includegraphics[height=0.15\textheight]{heptagon_self_contr_5c.pdf}
\end{gathered}
\end{align}
Clearly, such contractions can only affect the state on the remaining edges by an overall constant $C$. Unless $C=0$, this constant can be absorbed into an appropriate normalization. But when does $C=0$ occur? Let us consider the first diagram \eqref{EQ_SELF_CONTR_EX3A}, which can be generalized to the conditions 
\begin{align}
(\m_{2N-3} + \i\, p_{2N-3,2N-2}\, \m_{2N-2}) \ket\chi &= 0 \text{ , }\\
(\m_{2N-1} + \i\, p_{2N-1,2N}\, \m_{2N}) \ket\chi &= 0 \text{ .}
\end{align}
We claim that the contracted state vector $\ket\omega$ vanishes if $(p_{2N-3,2N-2}, p_{2N-1,2N}) \in \lbrace (1,-1), (-1,1) \rbrace$. These two cases correspond to either $f_{N-1} \ket\chi = 0$ and $f_N^\dagger \ket\chi = 0$ or $f_{N-1}^\dagger \ket\chi = 0$ and $f_N \ket\chi = 0$. It is easy to see that the contraction
\begin{align}
\label{EQ_SELF_CONTR_RESULT}
\ket\omega = \int \text{d}\fd_N \text{d}\fd_{N-1}\, e^{\,\fd_{N-1} \fd_N} \ket\chi = \int \text{d}\fd_N \text{d}\fd_{N-1}\, \ket\chi + \int \text{d}\fd_N \text{d}\fd_{N-1}\, \fd_{N-1} \fd_N \ket\chi
\end{align}
is annihilated in either case (recall that integrals $\int \text{d}f_{k}^\dagger$ act like annihilation operators $f_k$). The second diagram \eqref{EQ_SELF_CONTR_EX3A}, corresponding to the conditions 
\begin{align}
(\m_{2N-3} + \i\, p_{2N-3,2N}\, \m_{2N}) \ket\chi &= 0 \text{ , }\\
(\m_{2N-2} + \i\, p_{2N-2,2N-1}\, \m_{2N-1}) \ket\chi &= 0 \text{ ,}
\end{align}
is more involved. We seek to prove that $\ket\omega$ vanishes if $(p_{2N-3,2N}, p_{2N-2,2N-1}) \in \lbrace (1,-1), (-1,1), (-1,-1) \rbrace$, i.e.,\ for at least one odd parity. In terms of creation and annihilation operators, these three cases can be rewritten as
\begin{align}
\fe_{N-1} \ket\chi = \begin{cases} 
\fe_N \ket\chi & \text{ for }(p_{2N-3,2N}, p_{2N-2,2N-1})=(1,-1) \\ 
\fe_N \ket\chi & \text{ for }(p_{2N-3,2N}, p_{2N-2,2N-1})=(-1,1) \\
-\fd_N \ket\chi & \text{ for }(p_{2N-3,2N}, p_{2N-2,2N-1})=(-1,-1) \\
\end{cases} \text{ ,} \\
\fd_{N-1} \ket\chi = \begin{cases} 
\fd_N \ket\chi & \text{ for }(p_{2N-3,2N}, p_{2N-2,2N-1})=(1,-1) \\ 
-\fd_N \ket\chi & \text{ for }(p_{2N-3,2N}, p_{2N-2,2N-1})=(-1,1) \\
\fe_N \ket\chi & \text{ for }(p_{2N-3,2N}, p_{2N-2,2N-1})=(-1,-1) \\
\end{cases} \text{ .}
\end{align}
For the first two cases, the contraction \eqref{EQ_SELF_CONTR_RESULT} turns into
\begin{align}
\ket\omega = \int \text{d}\fd_N \text{d}\fd_{N}\, \ket\chi \pm \int \text{d}\fd_N \text{d}\fd_{N-1}\, \fd_{N-1} \fd_{N-1} \ket\chi = 0 \text{ .}
\end{align}
For the third case, we get
\begin{align}
\ket\omega = \int \text{d}\fd_N \text{d}\fd_{N-1}\, (1 - \fd_{N} \fe_{N}) \ket\chi 
= \int \text{d}\fd_N \text{d}\fd_{N-1} \fe_N\,  \fd_{N} \ket\chi = 0 \text{ .}
\end{align}
To summarize, we see that a self-contracted loop leads to vanishing contracted state if the total parity of the loop is odd, as postulated in the main text. Even-parity loops contribute an overall constant $C \neq 0$ to the contracted state.

\newpage
\section{Graphical computation of entanglement entropies}
\label{APP_EE_RULES}

In this section, we derive formula \eqref{EQ_DIMER_EE} for the entanglement entropy $S_A$ of a Majorana dimer state using diagrammatic tools and extend it to the computation of R\'enyi entropies $S_A^{(n)}$. Furthermore, we generalize these proofs to full the HyPeC with arbitrary bulk input, recovering previously known conditions on  the boundary regions $A$ \cite{Pastawski2015}.
Following \eqref{EQ_BRA_AND_KET}, we can visualize a density matrix $\rho=\ket\psi\bra\psi$ of some Majorana dimer state vector $\ket\psi$ as
\begin{equation}
\rho \;=\quad
\begin{gathered}
\includegraphics[height=0.1\textheight]{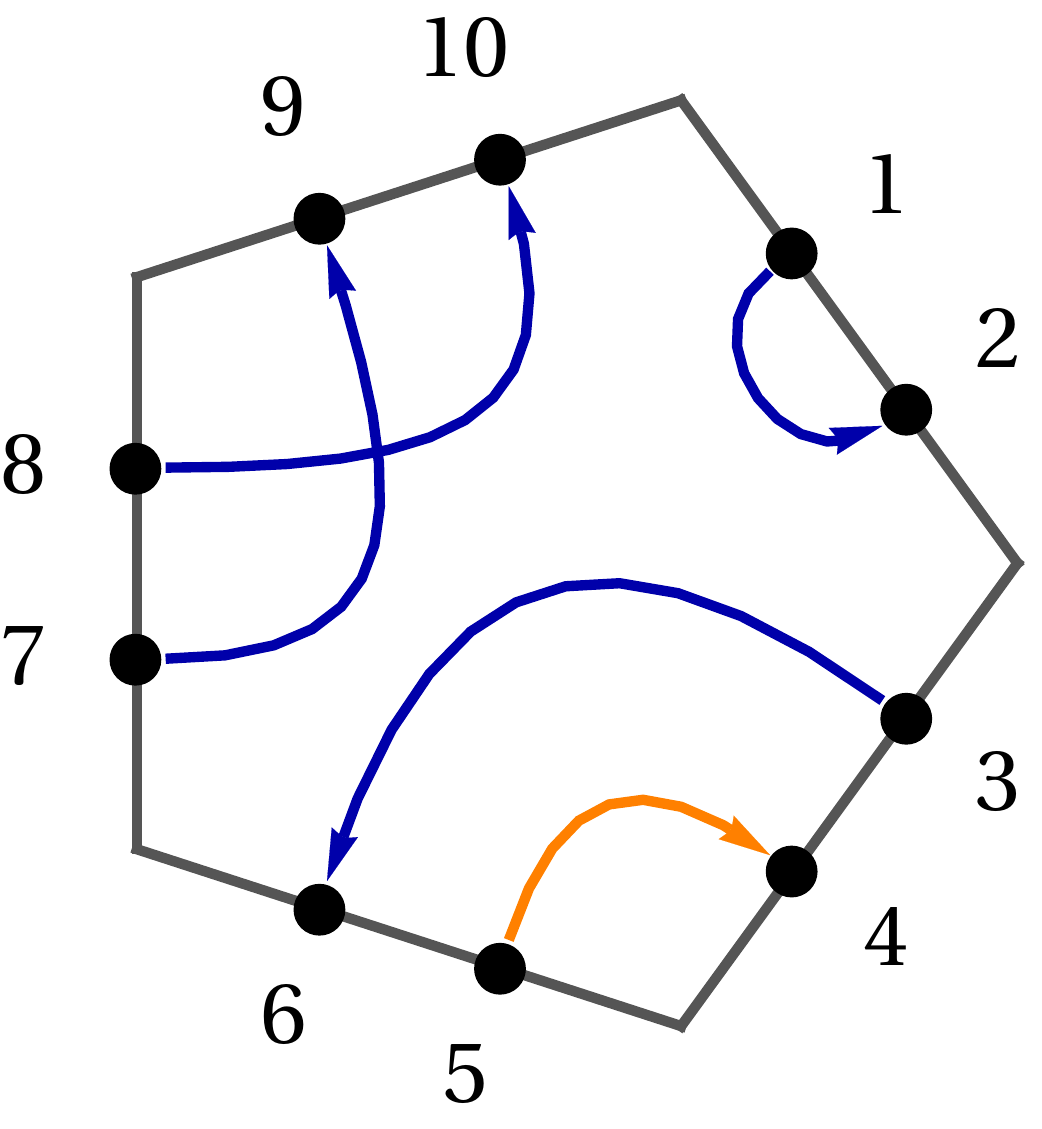}
\end{gathered}\;
\left(
\begin{gathered}
\includegraphics[height=0.1\textheight]{pentagon_net_3a.pdf}
\end{gathered}
\right)^\dagger
\quad=\quad
\begin{gathered}
\includegraphics[height=0.1\textheight]{pentagon_net_3a.pdf}
\end{gathered}\quad
\begin{gathered}
\includegraphics[height=0.1\textheight]{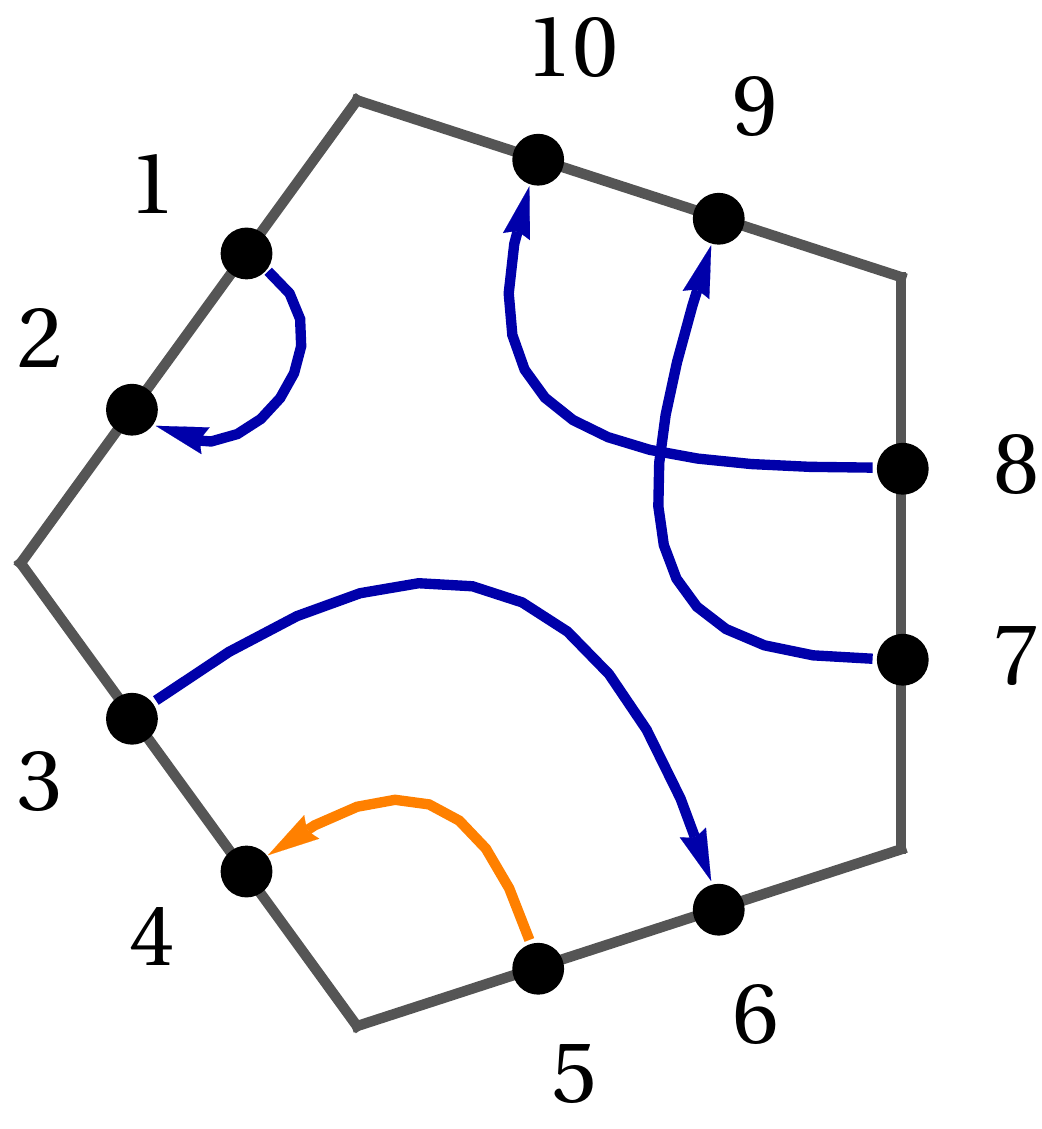}
\end{gathered}
\end{equation}
Here we are effectively using a Choi-Jamiolkowski isomorphism, representing a density matrix as a state in a doubled Hilbert space. 
In order to produce a reduced density matrix $\rho_A$ of some subsystem $A$, we sum over a complete set of states projected onto the edges that are part of $A^{\text{C}}$ (the complement of $A$), which we saw in \eqref{EQ_EPR2} to be equivalent to a contraction. For instance, the green-shaded subsystem $A$ in the following example leads to a reduced density matrix of the form
\begin{align*}
\label{EQ_RHO_A}
\rho_{\textcolor{darkgreen}{A}} \;=
\begin{gathered}
\includegraphics[height=0.1\textheight]{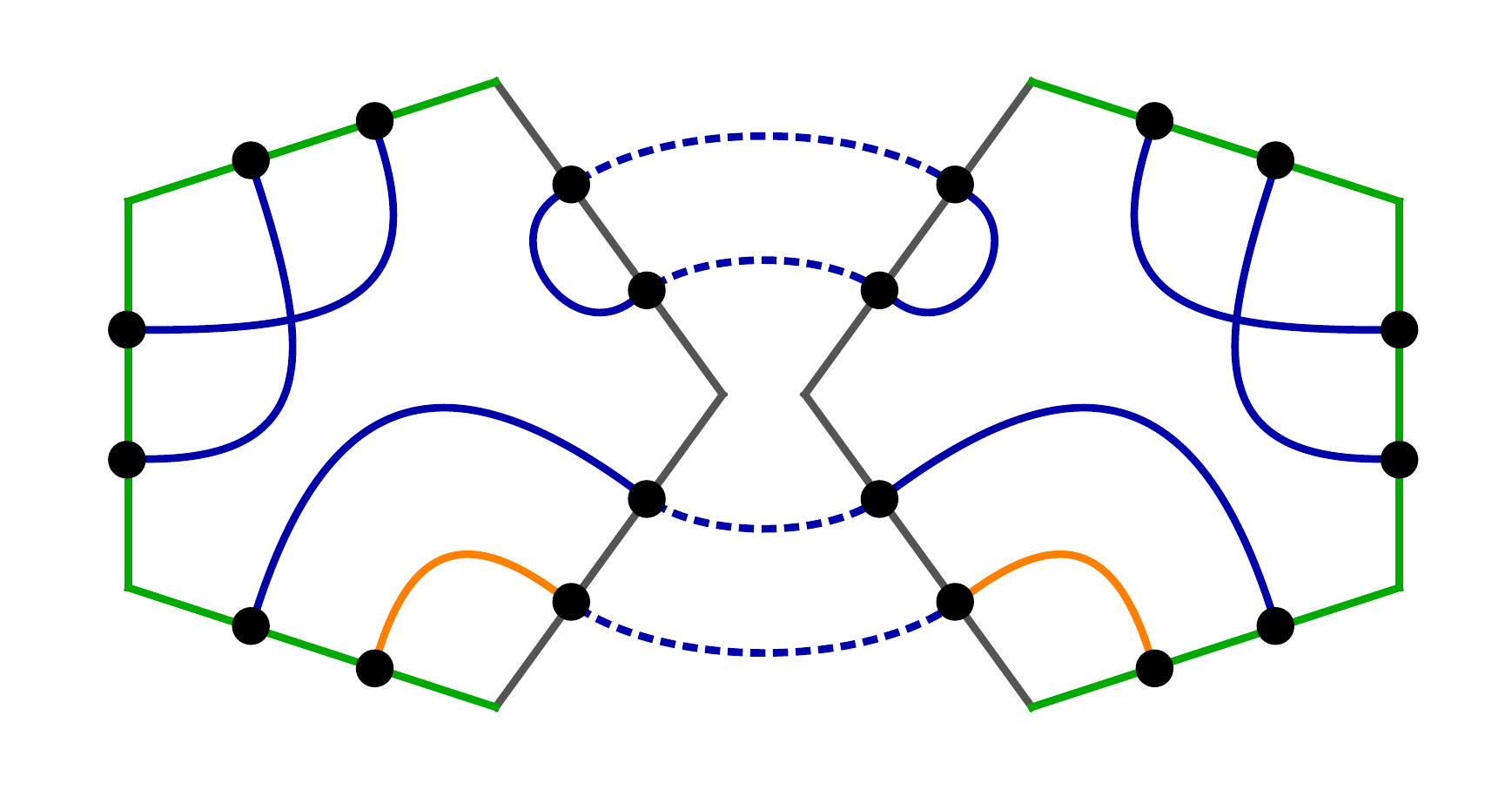}
\end{gathered}
\;=\;&
\begin{gathered}
\includegraphics[height=0.1\textheight]{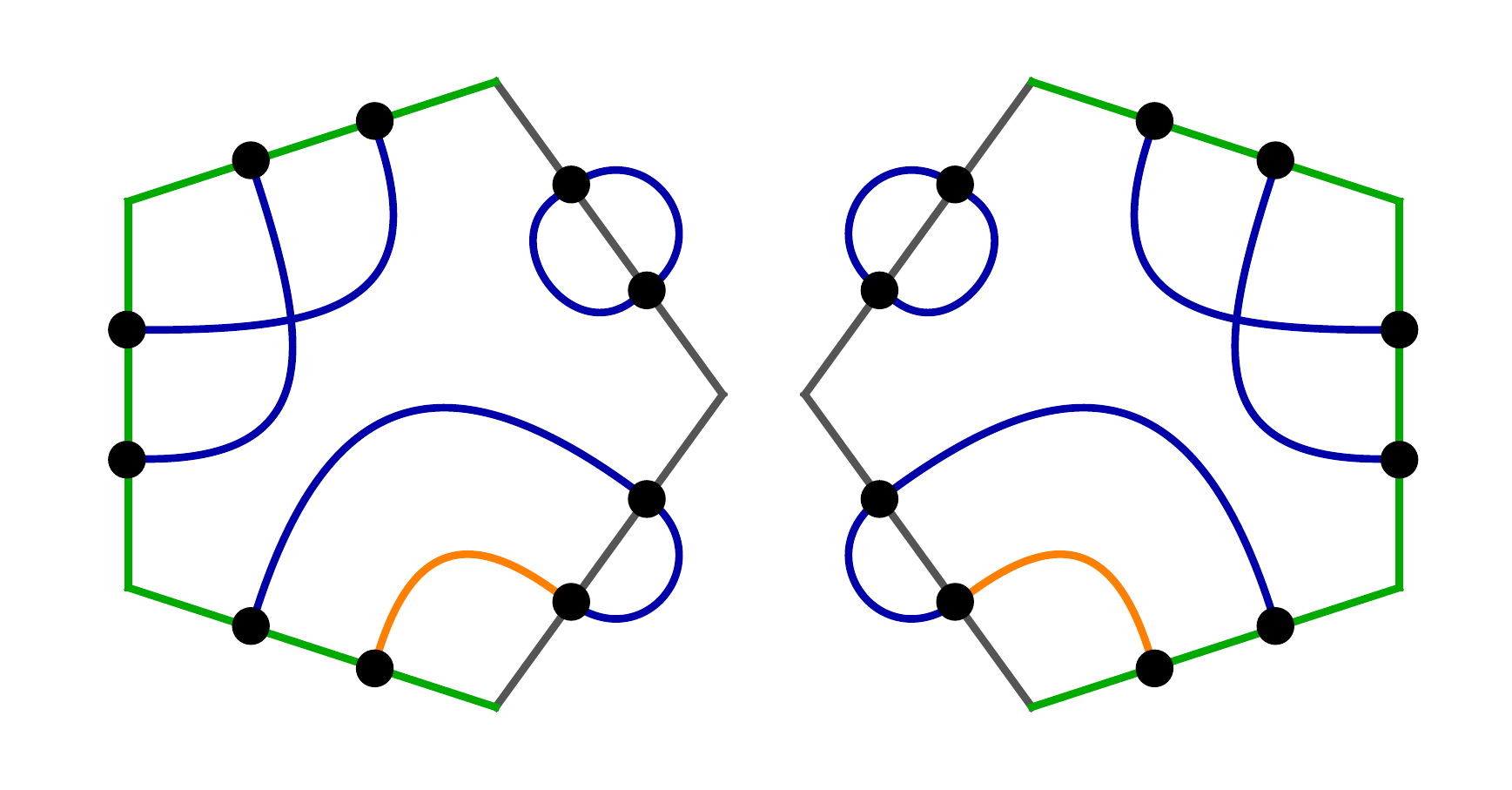}
\end{gathered}
+
\begin{gathered}
\includegraphics[height=0.1\textheight]{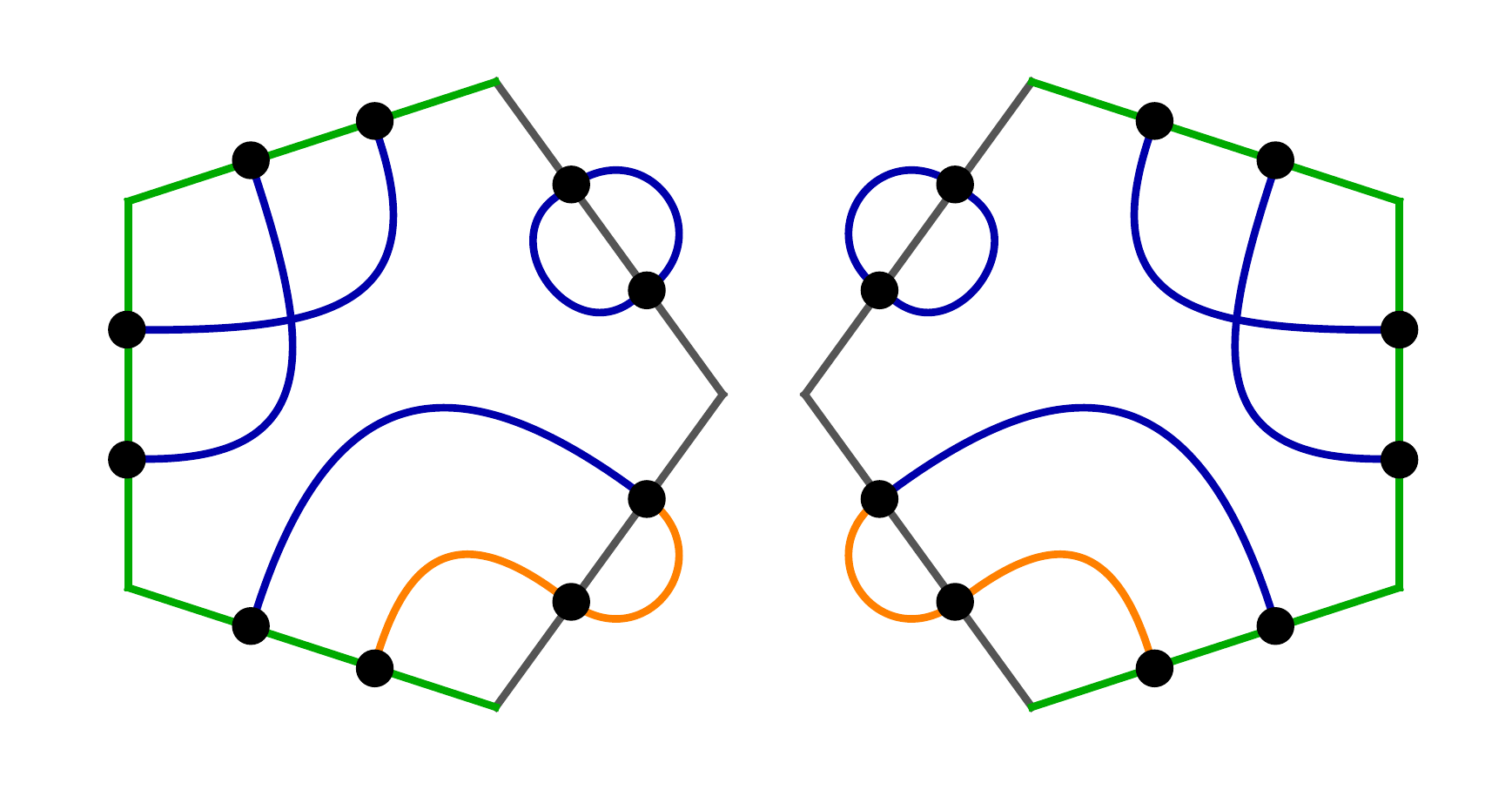}
\end{gathered} \\
\;+&
\begin{gathered}
\includegraphics[height=0.1\textheight]{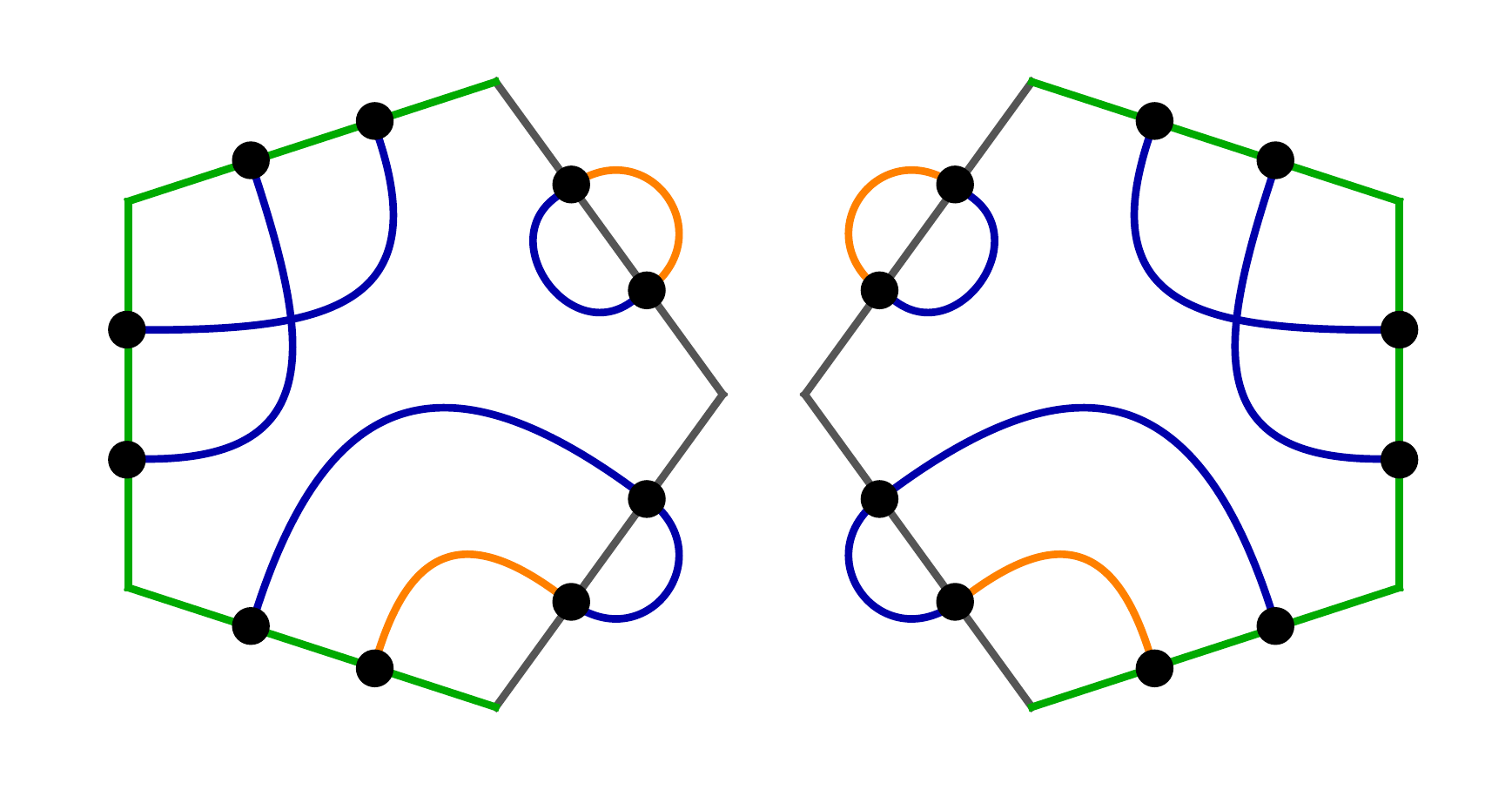}
\end{gathered}
+
\begin{gathered}
\includegraphics[height=0.1\textheight]{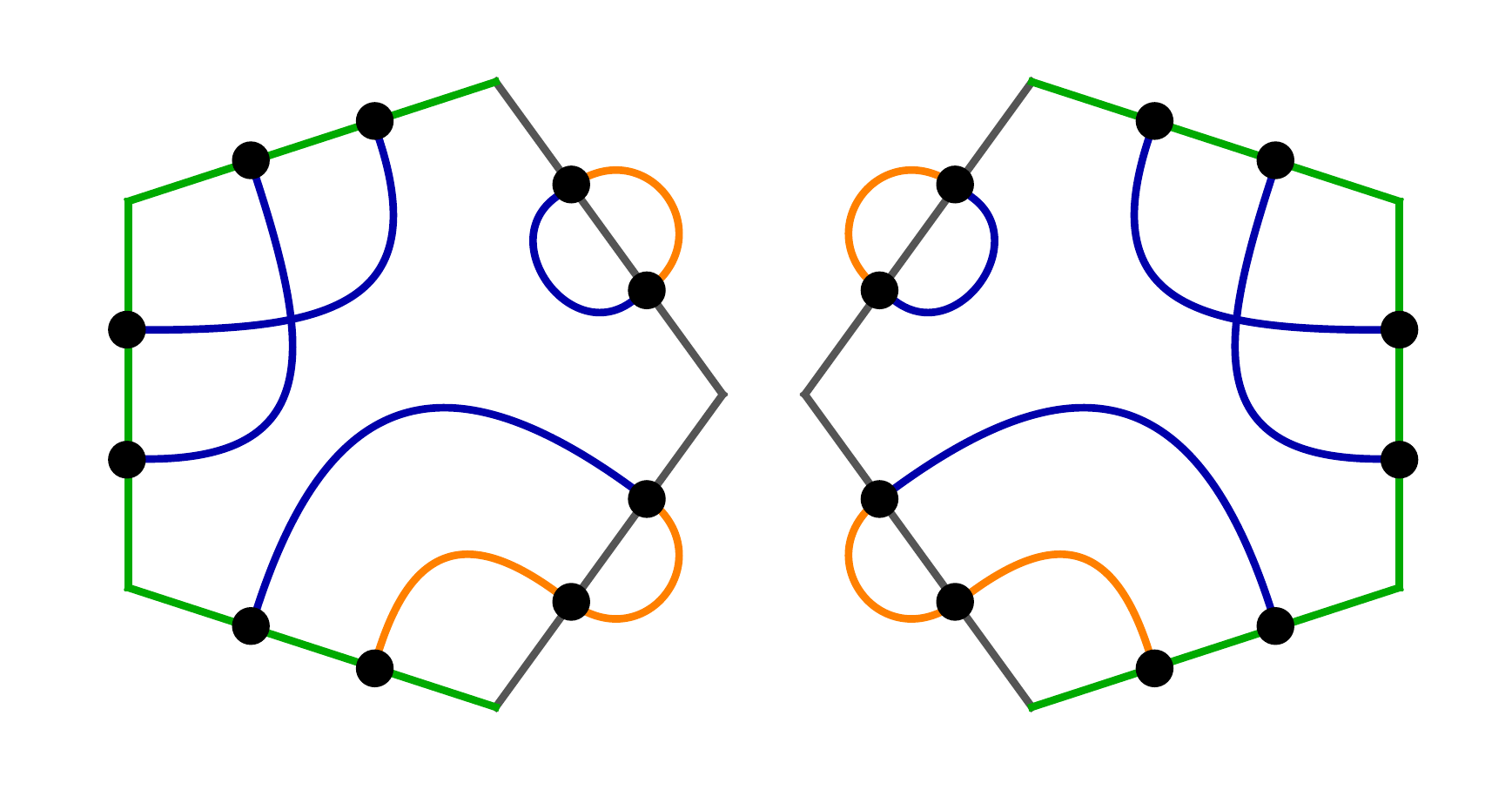}
\end{gathered} 
\end{align*}
\begin{align}
\;=\;
\frac{1}{2}
\begin{gathered}
\includegraphics[height=0.1\textheight]{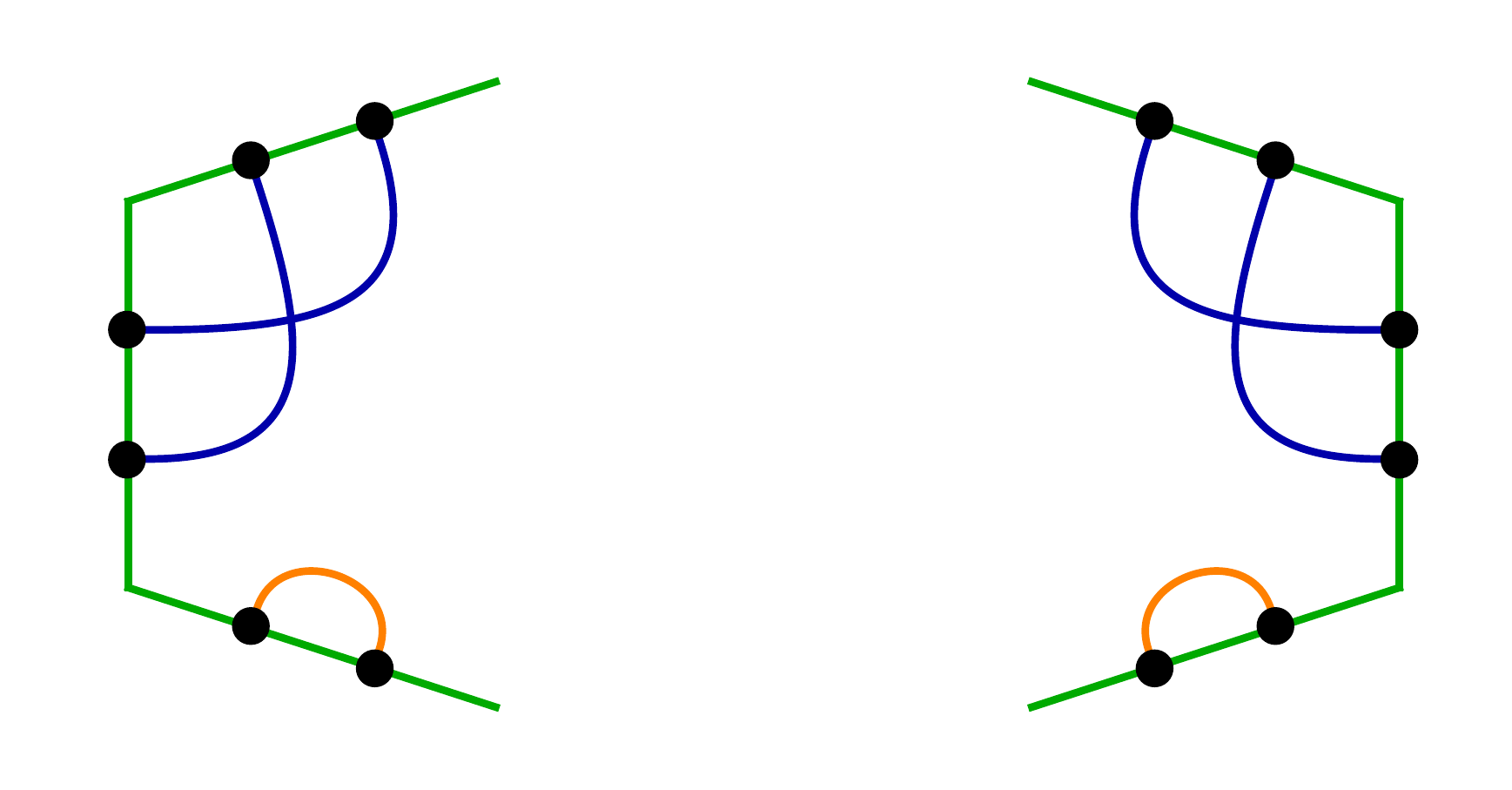}
\end{gathered}
+\;
\frac{1}{2}
\begin{gathered}
\includegraphics[height=0.1\textheight]{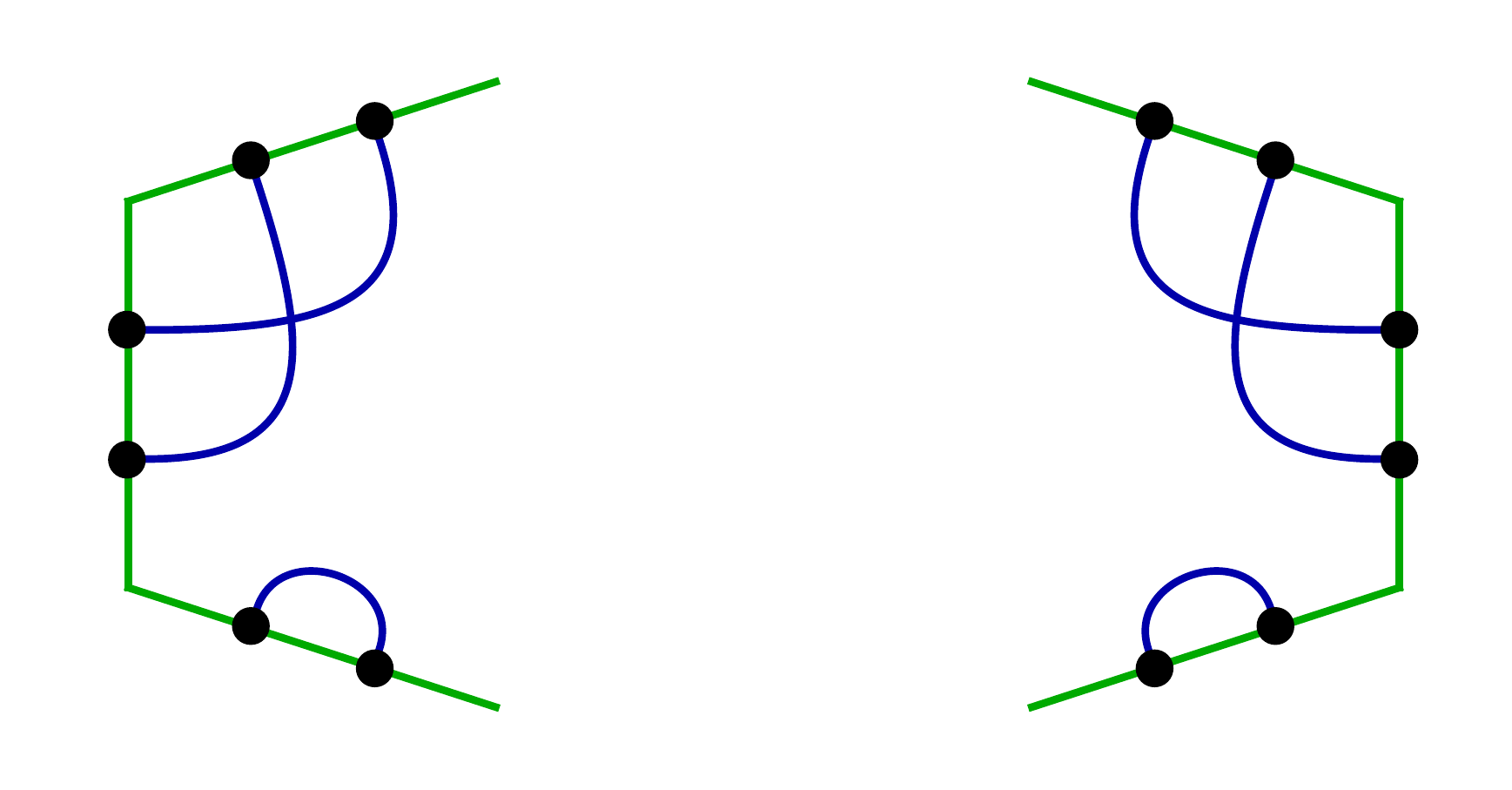}
\end{gathered}
\;=\;
\frac{1}{\sqrt{2}}
\begin{gathered}
\includegraphics[height=0.1\textheight]{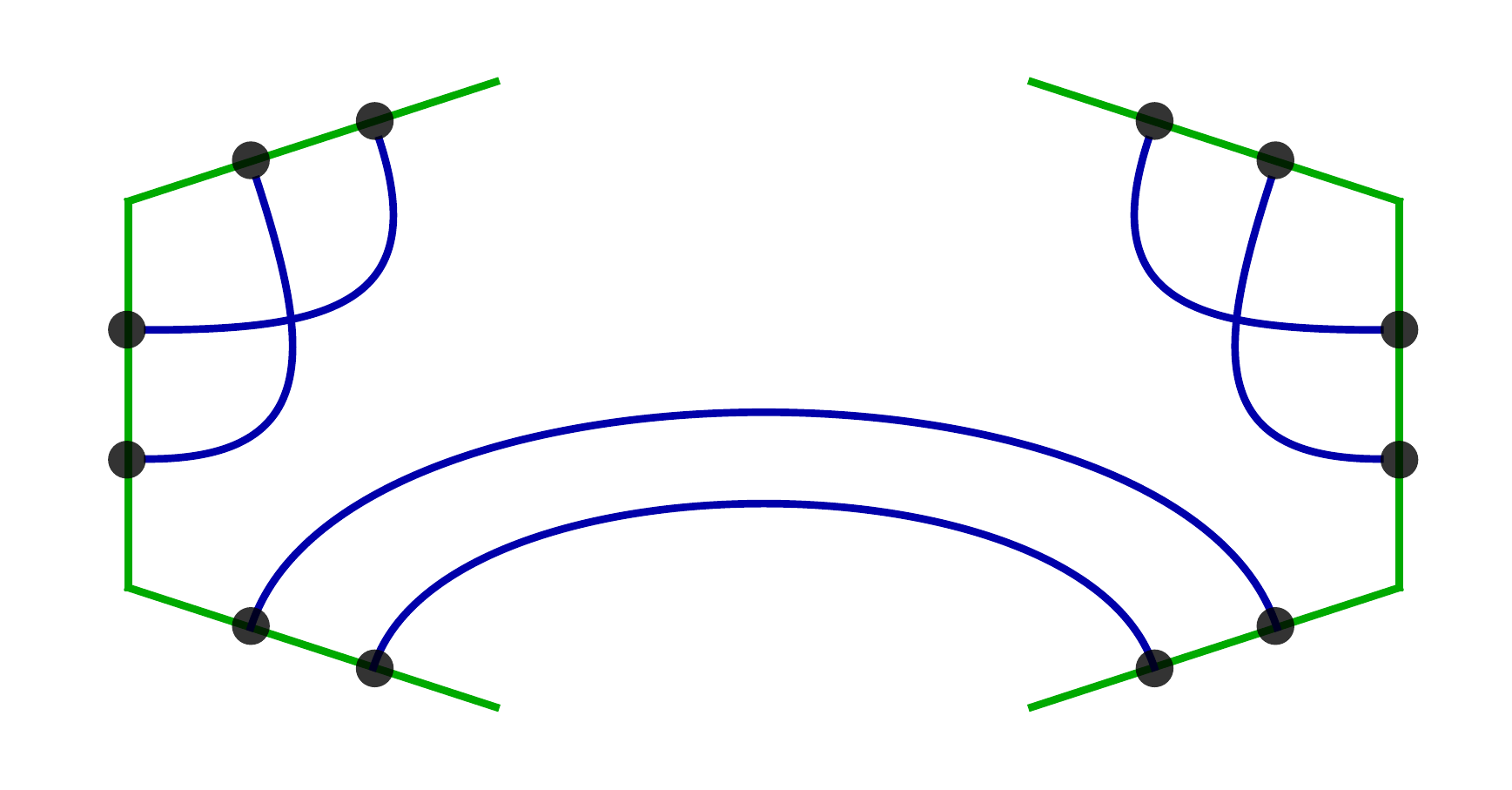}
\end{gathered}
\end{align}
We have omitted Majorana labels for clarity. 
In the first step, we used \eqref{EQ_EPR2} to relate partial trace and contraction, then applied \eqref{EQ_EPR2EX} in the second, yielding proper normalization factors. The third step merely uses \eqref{EQ_EPR2} in reverse. In summary, we see that normalization (requiring $\tr \rho_A = 1$ at each step) leads to a simple rule: Each contraction that glues two pairs of dimer together produces a factor of $1/\sqrt{2}$.

The entanglement entropy now follows from the eigenvalue spectrum of $\rho_A$. We can compute the eigenstates by projecting a full basis of Majorana dimer states onto the contracted edges of $\ket\psi$. For simplicity, we choose the basis of local Fock states, i.e.,\ with dimers only between the Majorana modes on each edge. As two edges are contracted out, there are four such basis states, of which only two are non-vanishing. These eigenvectors $\ket{\psi_{A,1}}$ and $\ket{\psi_{A,2}}$ are given by
\begin{align}
\label{EQ_RHOA_ES_EXAMPLE}
\ket{\psi_{\textcolor{darkgreen}{A},1}} \;&= 
\sqrt{2}
\begin{gathered}
\includegraphics[height=0.1\textheight]{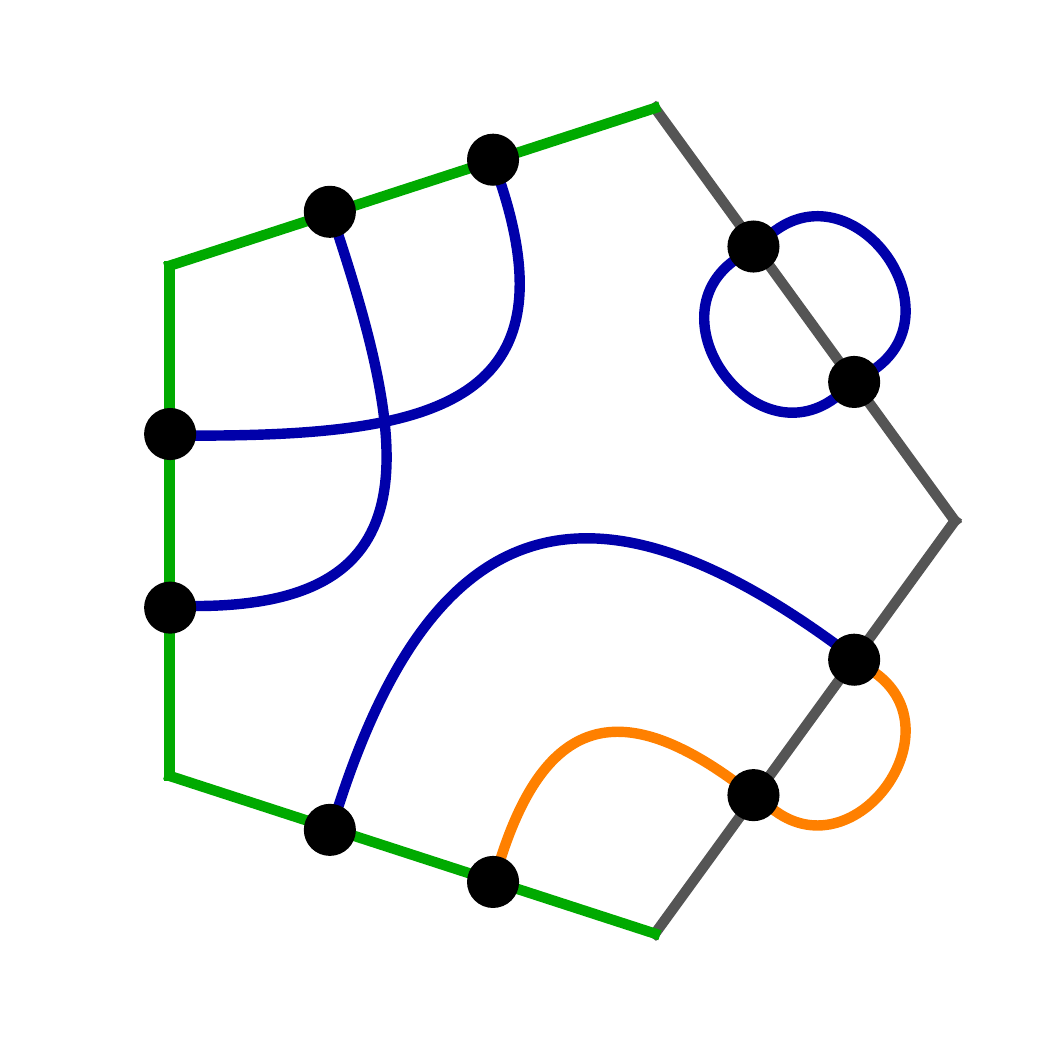}
\end{gathered} 
\;=\;
\begin{gathered}
\includegraphics[height=0.1\textheight]{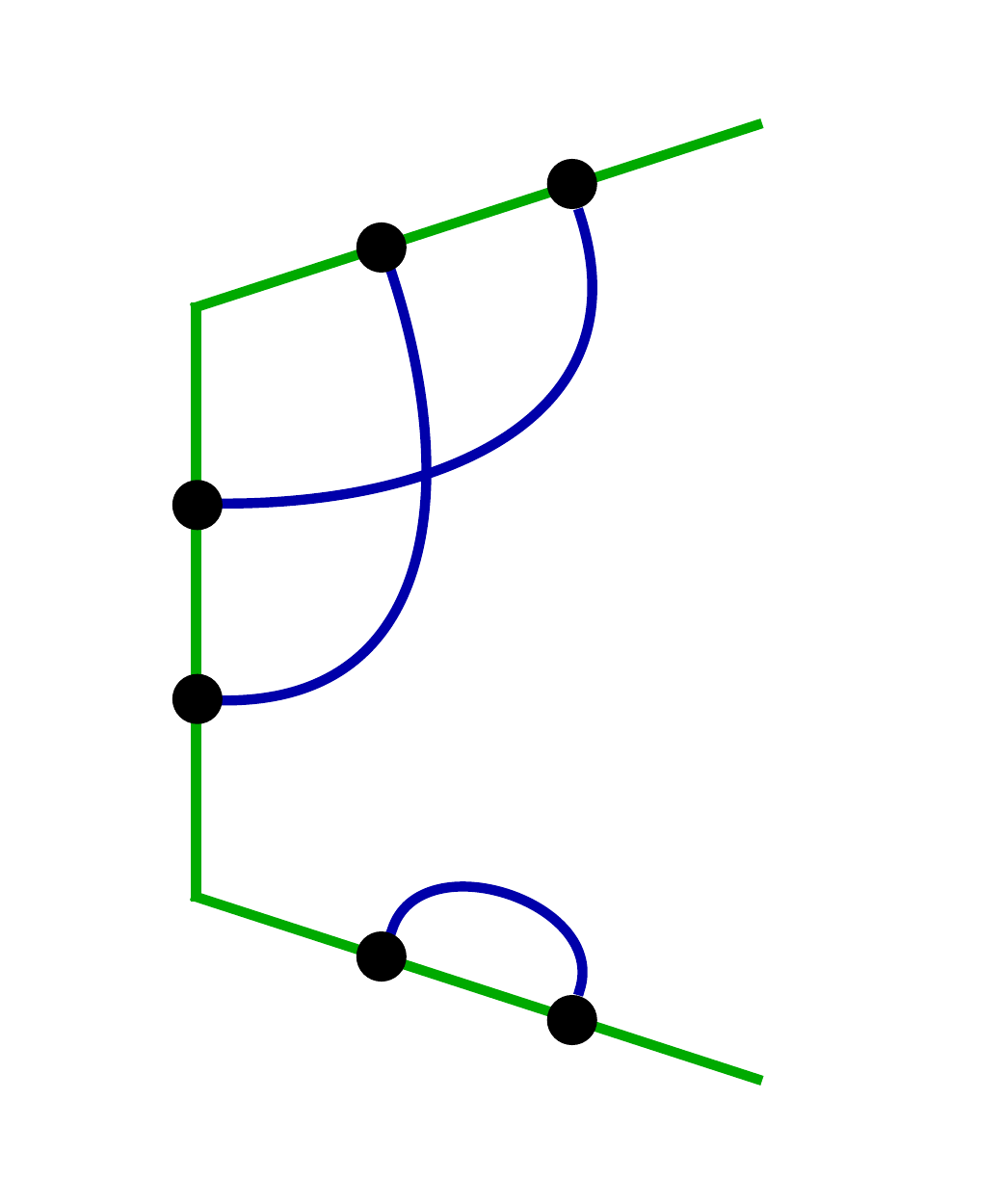}
\end{gathered} \;
&
\ket{\psi_{\textcolor{darkgreen}{A},2}} \;&=
\sqrt{2}
\begin{gathered}
\includegraphics[height=0.1\textheight]{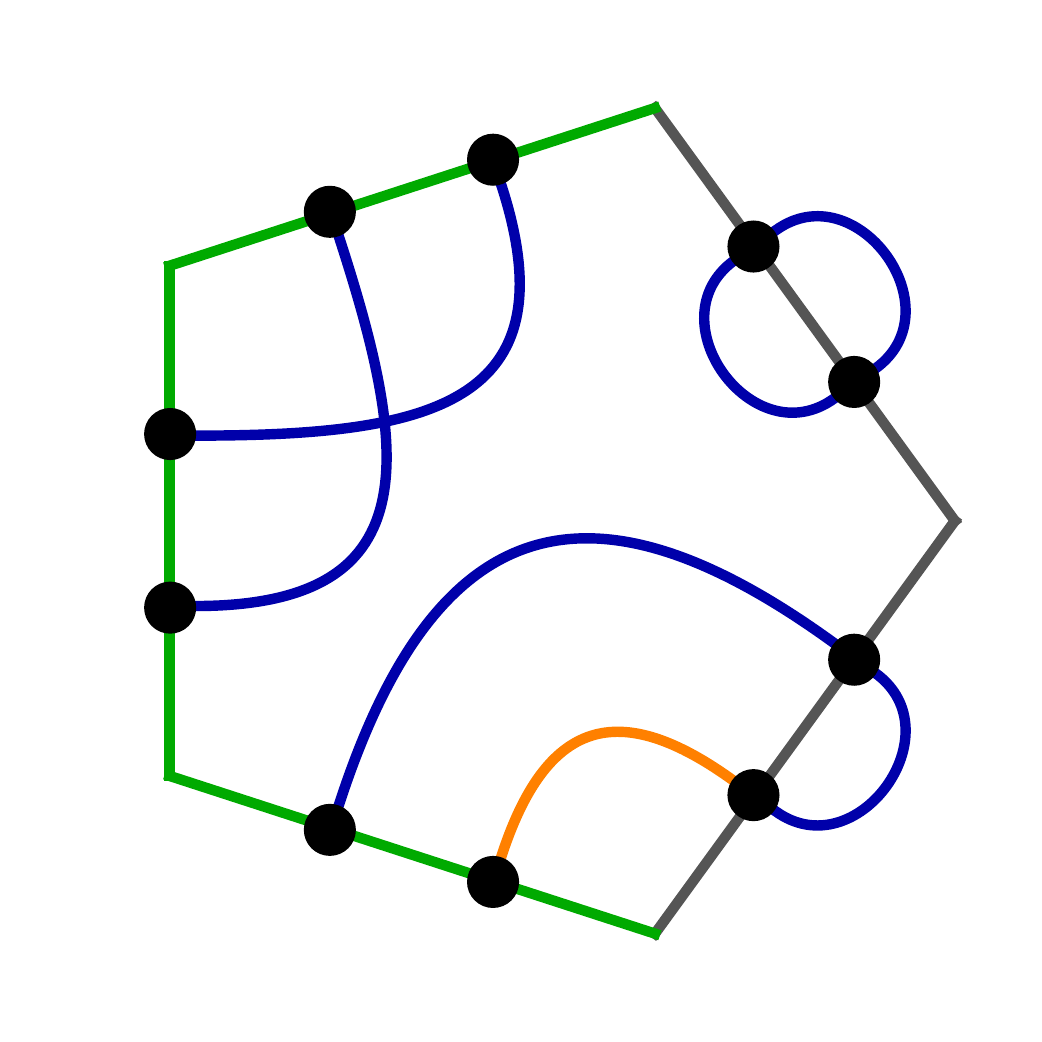}
\end{gathered} 
\;=\;
\begin{gathered}
\includegraphics[height=0.1\textheight]{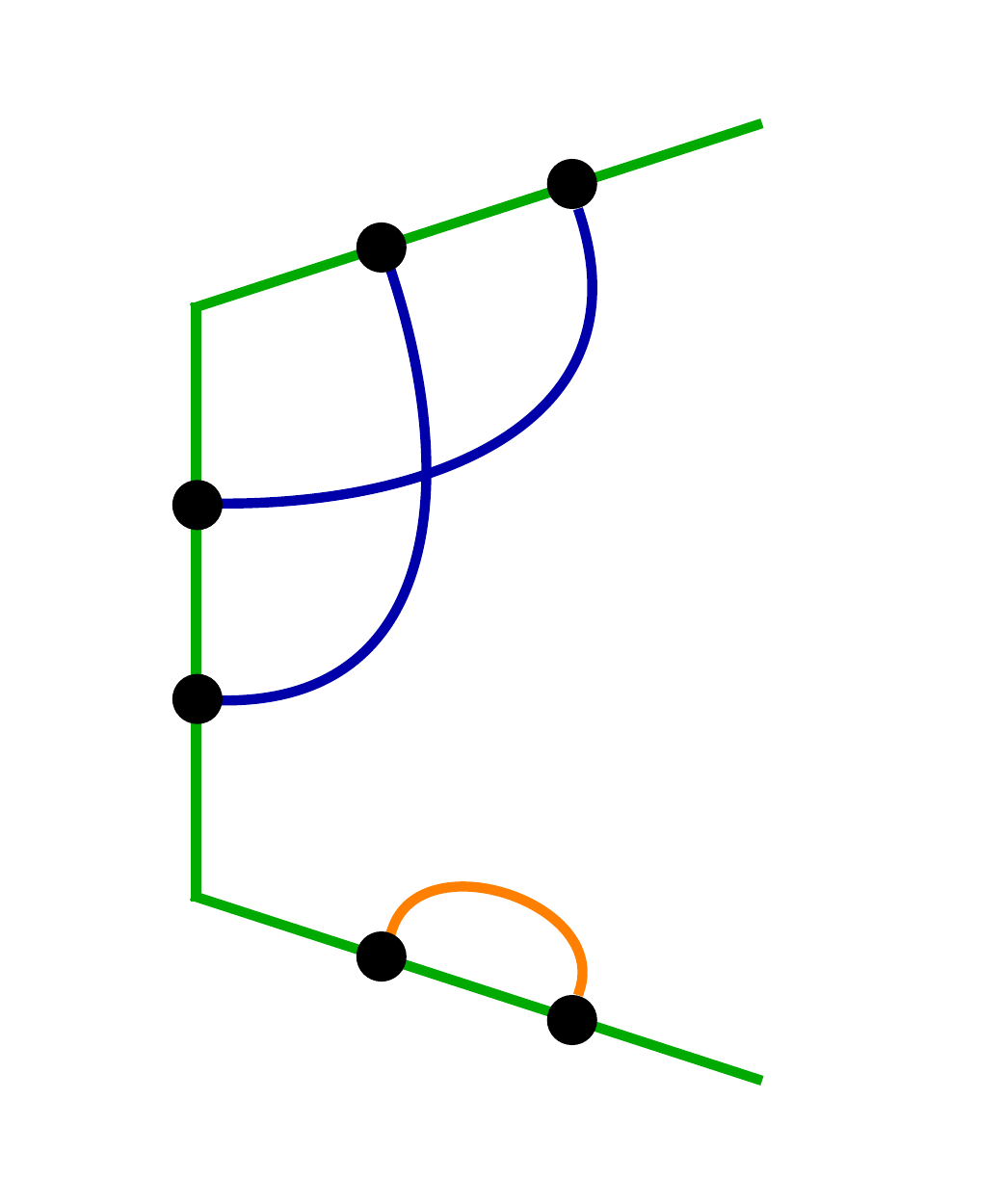}
\end{gathered}
\end{align}
Here both diagrams represent normalized states. To see that our construction indeed yields eigenstates of $\rho_A$, consider the eigenvalue equation for the second eigenvector $\ket{\psi_{A,2}}$:
\begin{equation}
\rho_{\textcolor{darkgreen}{A}} \ket{\psi_{\textcolor{darkgreen}{A},2}} \;=\;
\frac{1}{\sqrt{2}}
\begin{gathered}
\includegraphics[height=0.116\textheight]{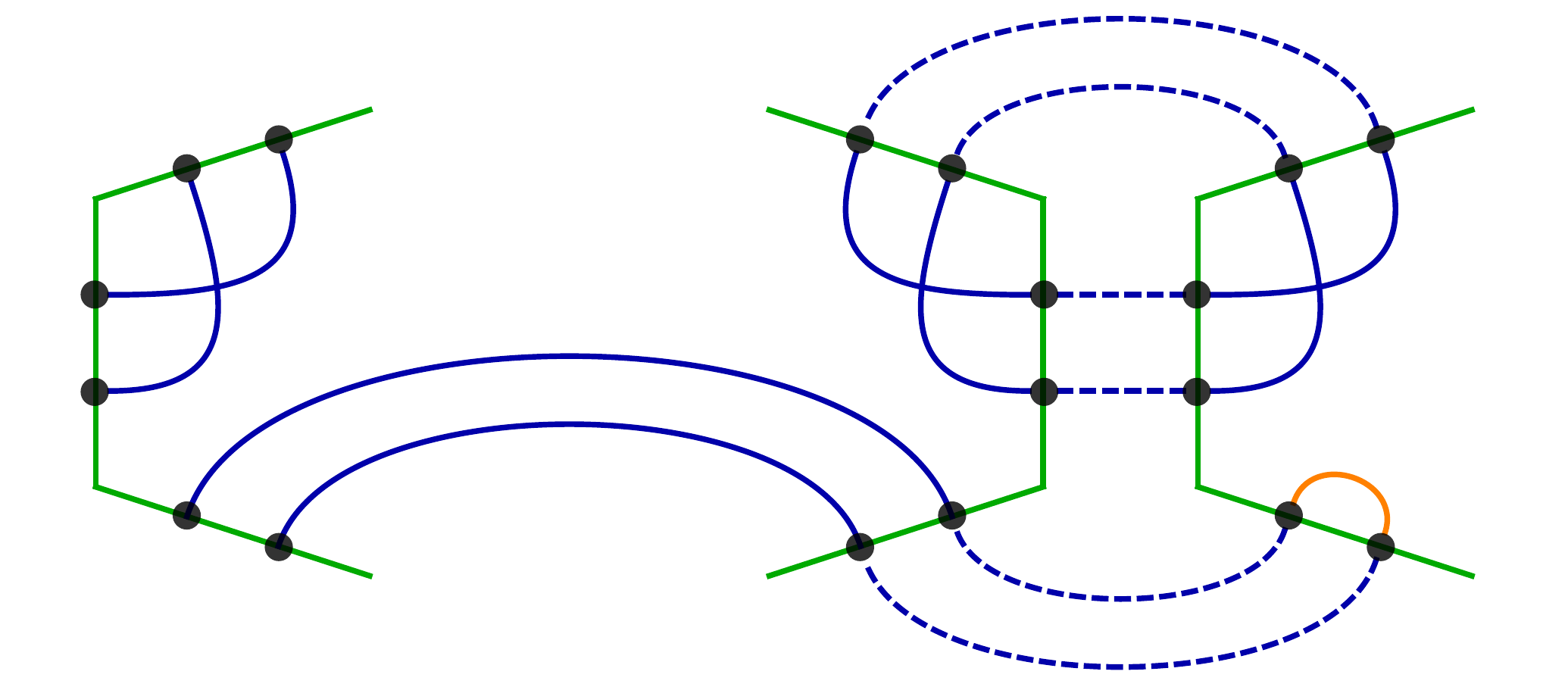}
\end{gathered}
\;=\;\frac{1}{2}
\begin{gathered}
\includegraphics[height=0.1\textheight]{pentagon_net_rhoA_2b}
\end{gathered}
\end{equation}
With a similar diagram for $\rho_A \ket{\psi_{A,1}}$ it is found that both eigenvalues are $1/2$. Thus, the entanglement entropy is given by
\begin{equation}
S_A = -\tr_A\, \rho_A \log\rho_A = - \frac{1}{2} \log\frac{1}{2} - \frac{1}{2} \log\frac{1}{2} = \log{2} \text{ .}
\end{equation}
We find that evaluating $S_A$ reduces to counting the dimers connecting the ``ket'' edges with the ''bra'' edges, which determines how mixed $\rho_A$ is. In general, any reduced density matrix contains $2m$ ``mixing dimers'' that span an eigenspace of $2^m$ orthogonal dimer states (whose diagrammatic representation is not unique for $m>1$). The entanglement entropy follows as
\begin{equation}
\label{EQ_EE_DIMER_PROOF}
S_A = - 2^m \left( \frac{1}{2^m}\log\frac{1}{2^m} \right) = m \log{2} \text{ .}
\end{equation}
Equivalently, as $2m$ dimers connect $A$ with the complementary (contracted) region $A^{\text{C}}$, each dimer contributes $\frac{1}{2}\log{2}$ to the entanglement entropy $S_A$, as in \eqref{EQ_DIMER_EE}.

Using a similar strategy, we can compute the R\'enyi entropy $S_A^{(n)} = \log (\tr \rho_A^n)/(1{-}n)$. This requires evaluating the $n$th power of the reduced density matrix $\rho_A$. As an example, consider the square of \eqref{EQ_RHO_A}:
\begin{align}
\label{EQ_RENYI_DIMERS}
\rho_{\textcolor{darkgreen}{A}}^2 \;= \;
\left(
\frac{1}{\sqrt{2}}
\begin{gathered}
\includegraphics[height=0.1\textheight]{pentagon_net_rhoA_1b}
\end{gathered}
\right)^2
&=\;
\frac{1}{2}
\begin{gathered}
\includegraphics[height=0.116\textheight]{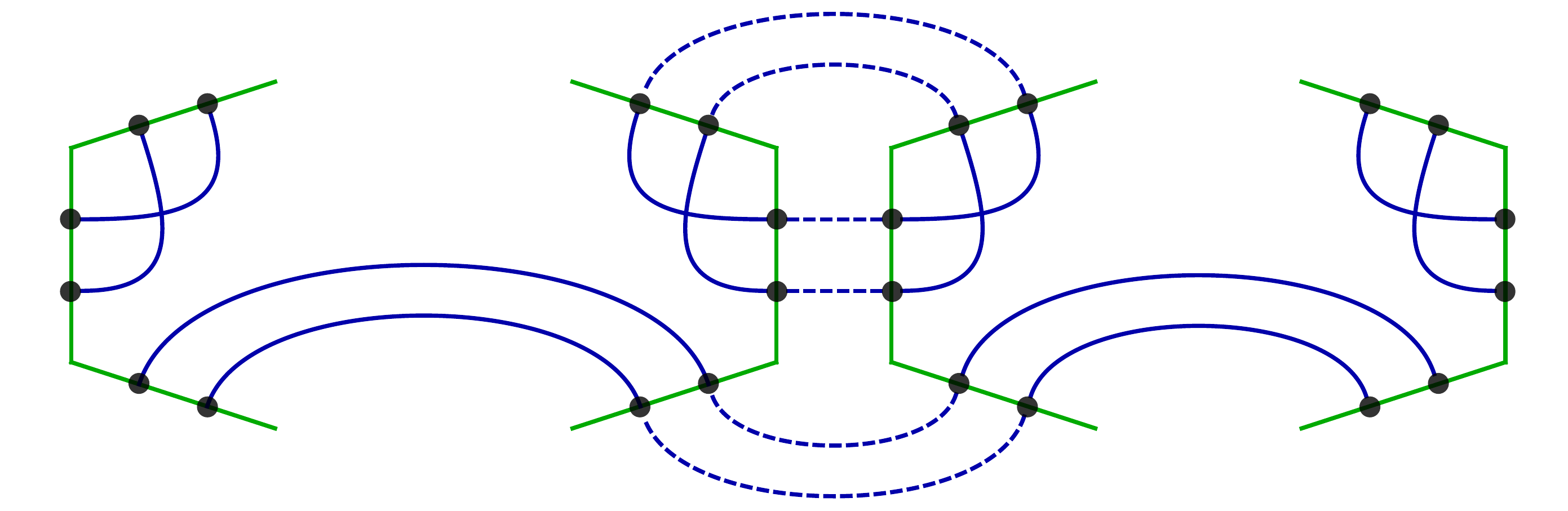}
\end{gathered} \nonumber\\
&=\;
\frac{1}{2\sqrt{2}}
\begin{gathered}
\includegraphics[height=0.1\textheight]{pentagon_net_rhoA_1b}
\end{gathered}
\;=\;\frac{1}{2}\, \rho_{\textcolor{darkgreen}{A}}\ .
\end{align}
Thus, it follows that $\rho_A^n = \rho_A / 2^{n-1}$, and hence $S_A^{(n)} = \log(\tr \rho_A / 2^{n-1})/(1{-}n)=\log{2}$. This property of a ``flat R\'enyi spectrum'', i.e.,\ $S_A^{(n)}=S_A$, holds for any Majorana dimer state. For a generic $\rho_A$, \eqref{EQ_RENYI_DIMERS} involves $n-1$ contractions of $2m$ mixing dimers, leading to the following analog of \eqref{EQ_EE_DIMER_PROOF} for R\'enyi entropies:
\begin{equation}
S_A^{(n)} = \frac{1}{1-n} \log \tr \rho_A^n = \frac{1}{1-n} \log \frac{\tr \rho_A}{2^{m (n-1)}} = m \log 2 = S_A \text{ .}
\end{equation}
Note that this property of a flat entanglement spectrum is a proven feature of stabilizer codes states \cite{PhysRevLett.103.261601}, thus making Majorana dimers ideal for the study of stabilizers. Contrary to the stabilizer picture, however, we can also diagrammatically evaluate the entanglement entropy for classes of superpositions, as we will now see. First, we take a look at superpositions of input states of the HyPeC. Consider a single tile of the $[[5,1,3]]$ code. For arbitrary bulk input, the boundary state is given by
\begin{equation}
\label{EQ_HAPPY_SUPERPOS}
\ket\psi = \alpha\ket{\bar{0}}_5 + \beta\ket{\bar{1}}_5
\; = \quad \alpha\quad
\begin{gathered}
\includegraphics[height=0.1\textheight]{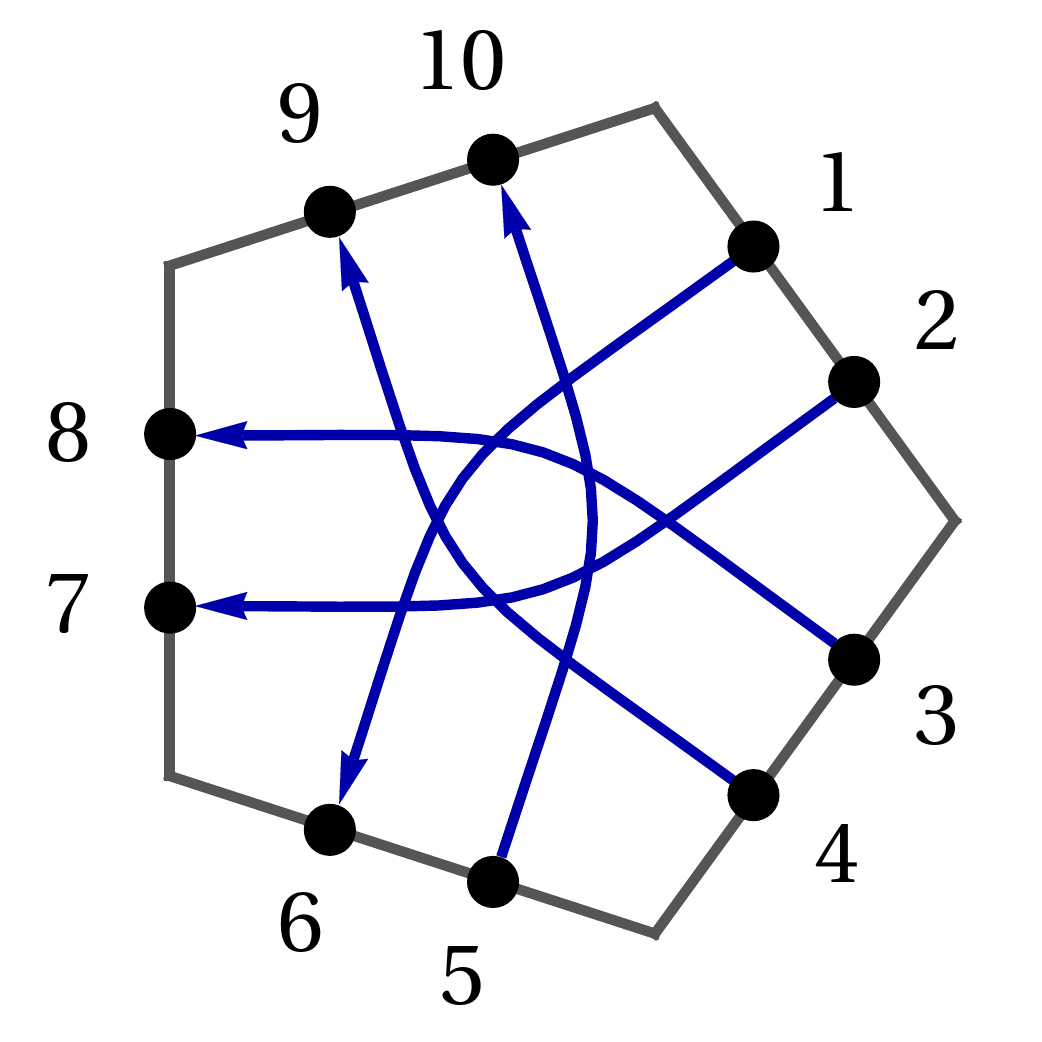}
\end{gathered}
\quad\scalebox{1.3}{$+$}\quad \beta\quad
\begin{gathered}
\includegraphics[height=0.1\textheight]{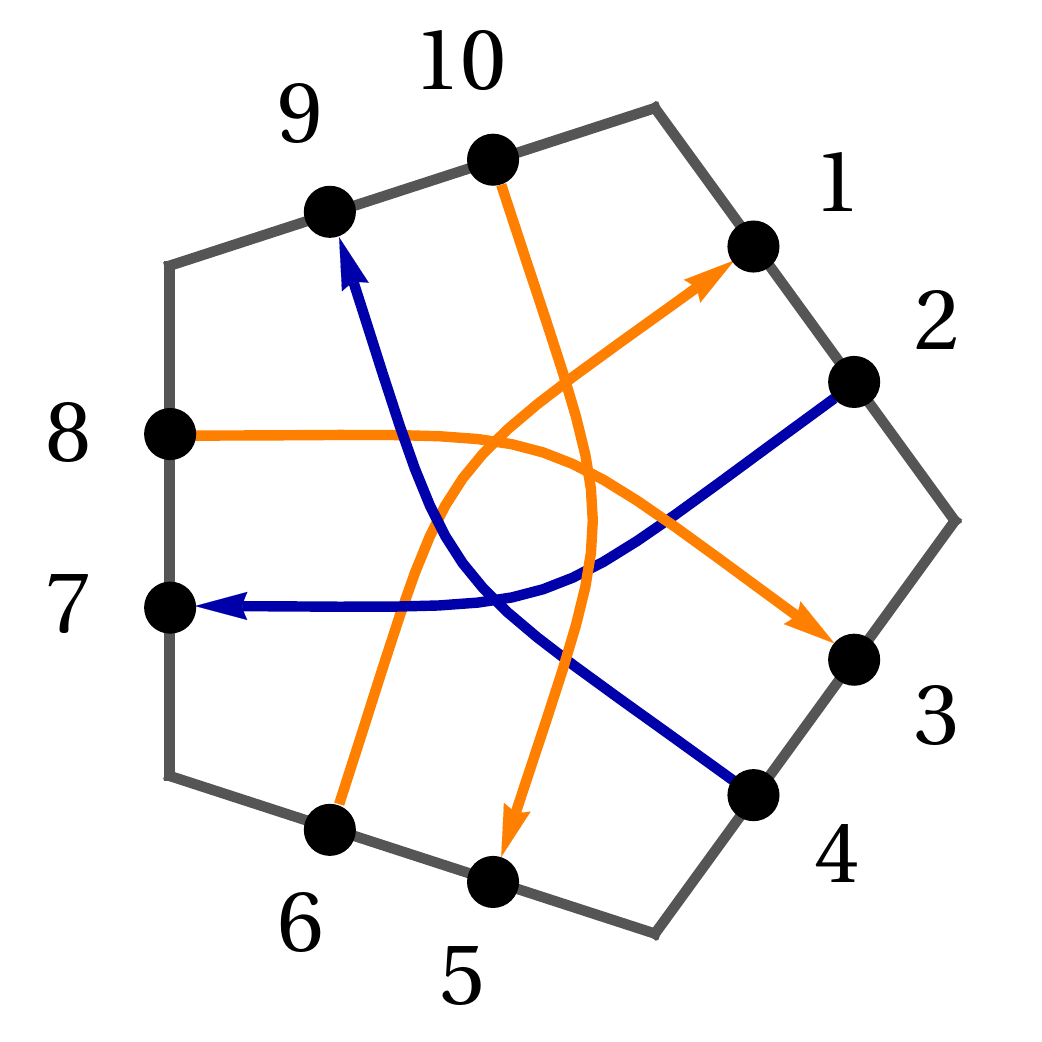}
\end{gathered} 
\quad \equiv \;
\begin{gathered}
\includegraphics[height=0.1\textheight]{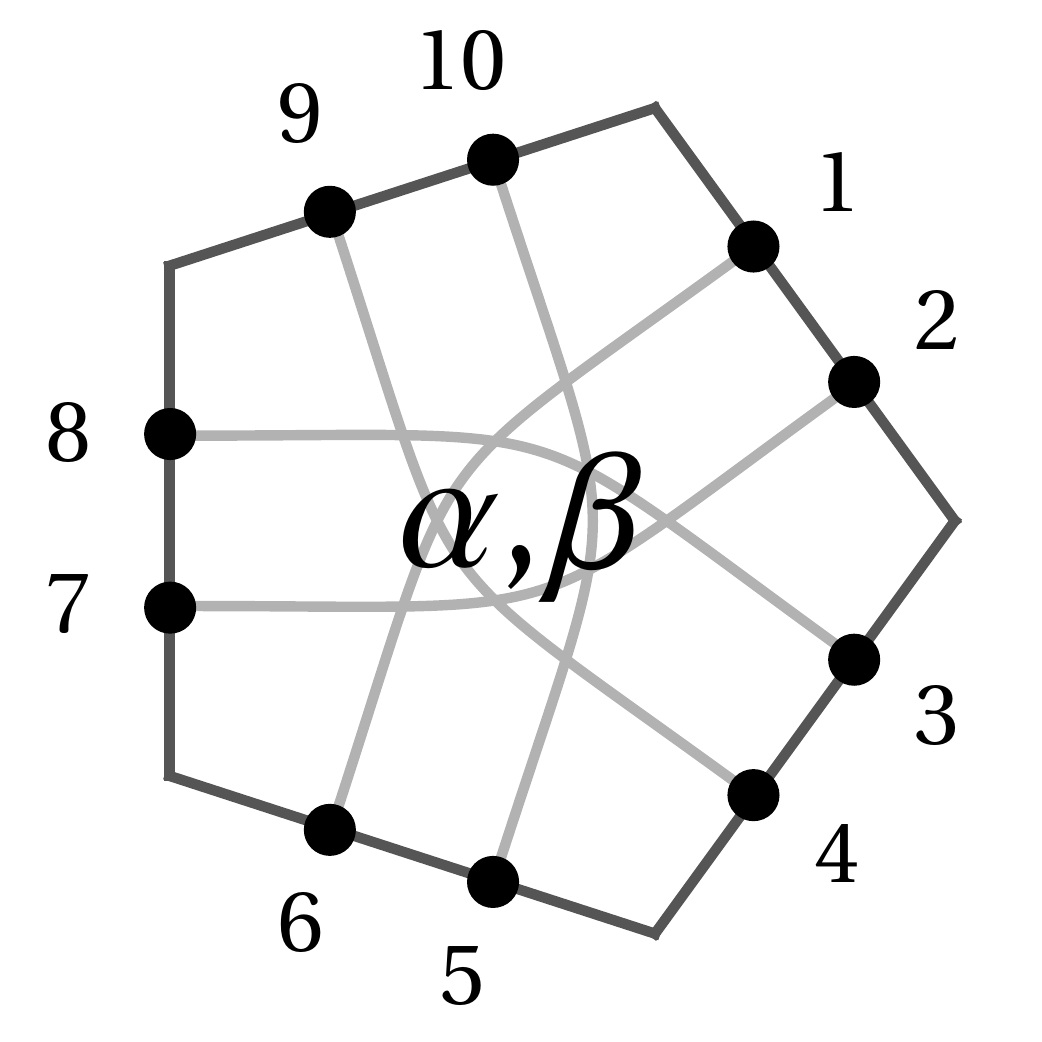}
\end{gathered} 
\end{equation}
On the right-hand side, we have used a new notation for superpositions of $[[5,1,3]]$ computational basis states with complex factors $\alpha$ and $\beta$. Normalization requires $|\alpha|^2+|\beta|^2=1$.
We now show that the reduced density matrix $\rho_A$ of this superposition becomes an identity on the subsystem $A$ when it consists of only two ($A=(1,2)$) or one edge ($A=(1)$):

\begin{align}
\label{EQ_HAPPY_SP_3C}
\rho_{\textcolor{darkgreen}{(1,2)}} \;=\; 
\begin{gathered}
\includegraphics[height=0.117\textheight]{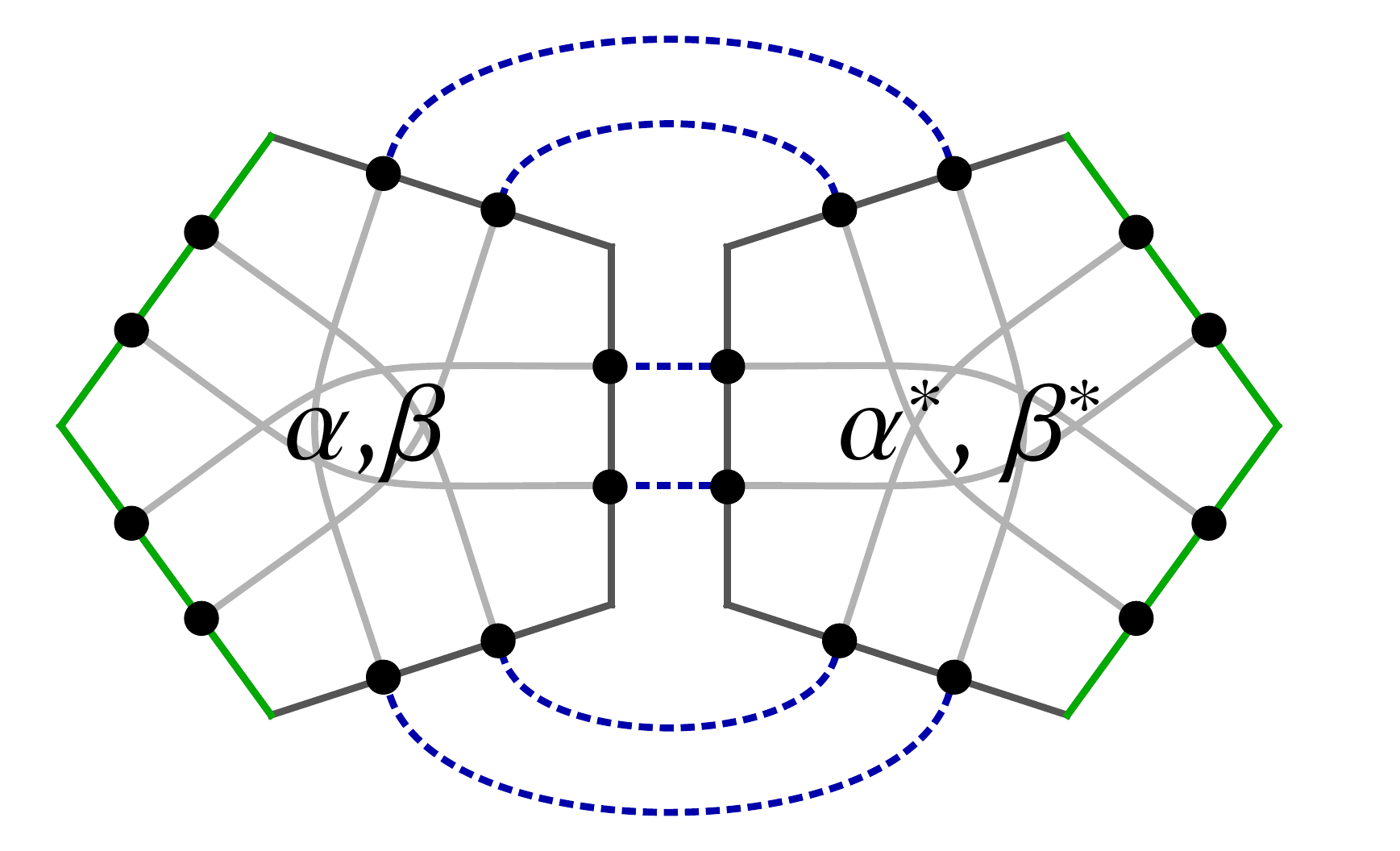}
\end{gathered}
\;&=\;
\alpha \alpha^\star
\begin{gathered}
\includegraphics[height=0.117\textheight]{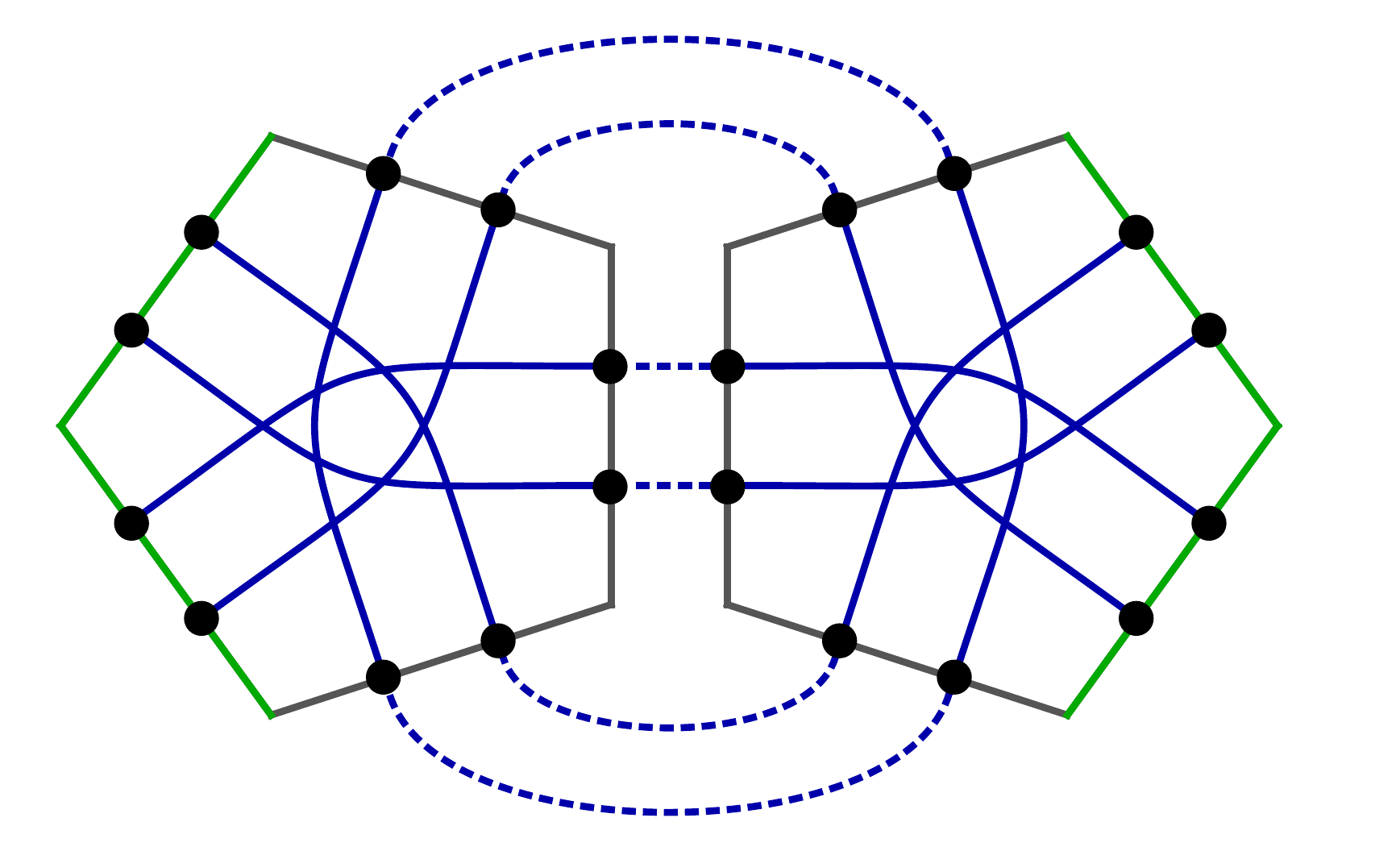}
\end{gathered}
\;+\;
\alpha \beta^\star
\begin{gathered}
\includegraphics[height=0.117\textheight]{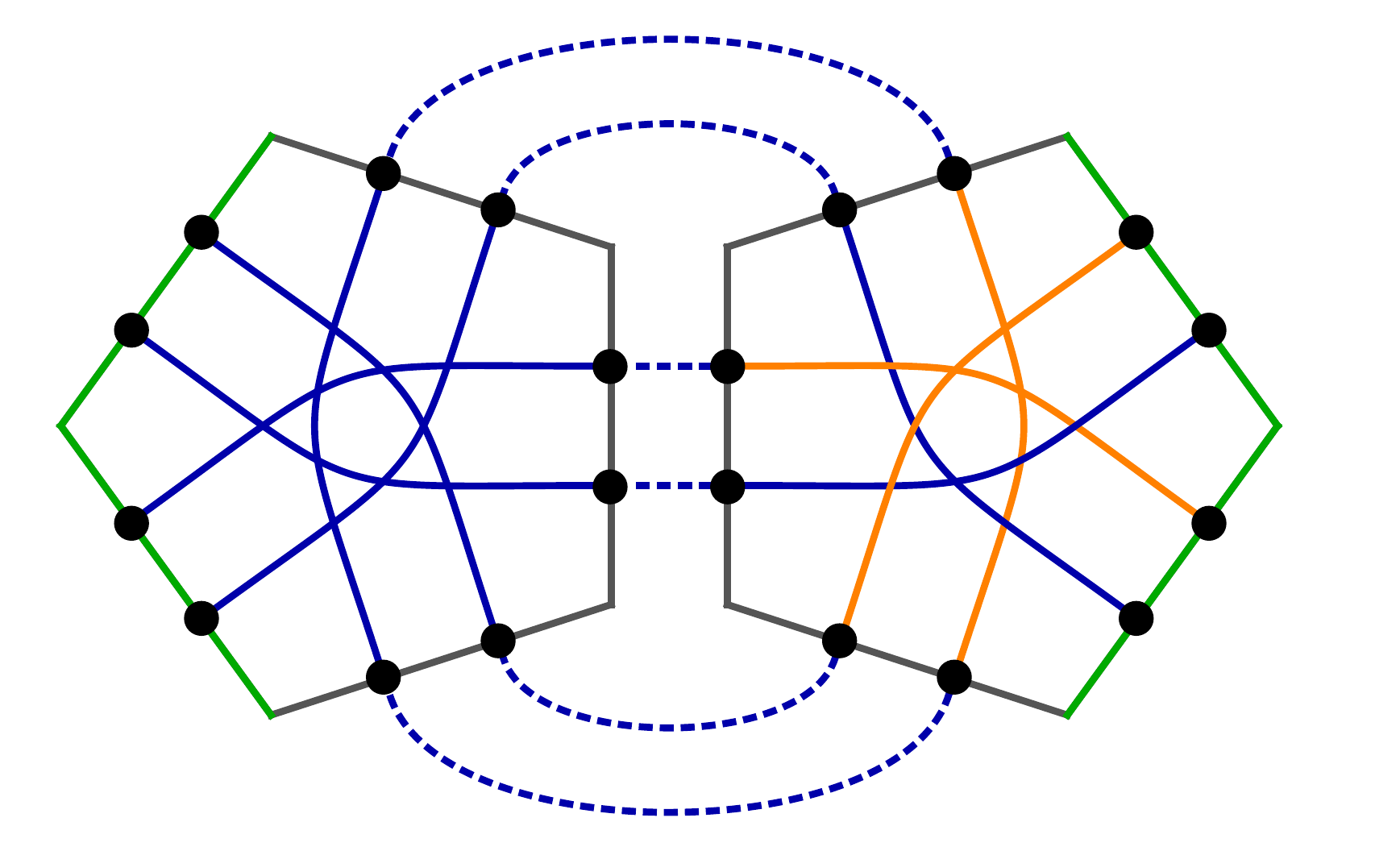}
\end{gathered} \nonumber \\
&+\;
\alpha^\star\beta
\begin{gathered}
\includegraphics[height=0.117\textheight]{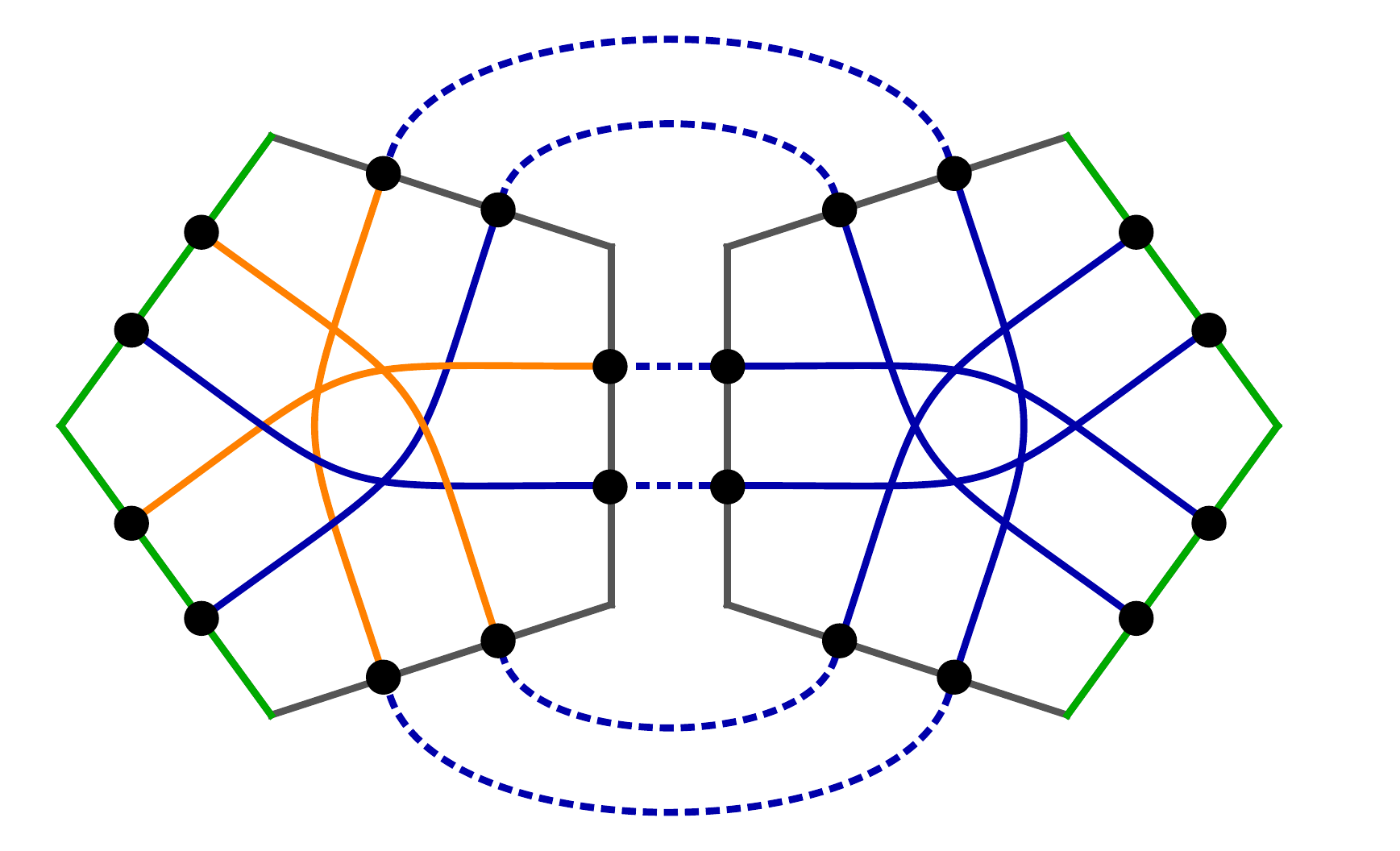}
\end{gathered}
\;+\;
\beta\beta^\star
\begin{gathered}
\includegraphics[height=0.117\textheight]{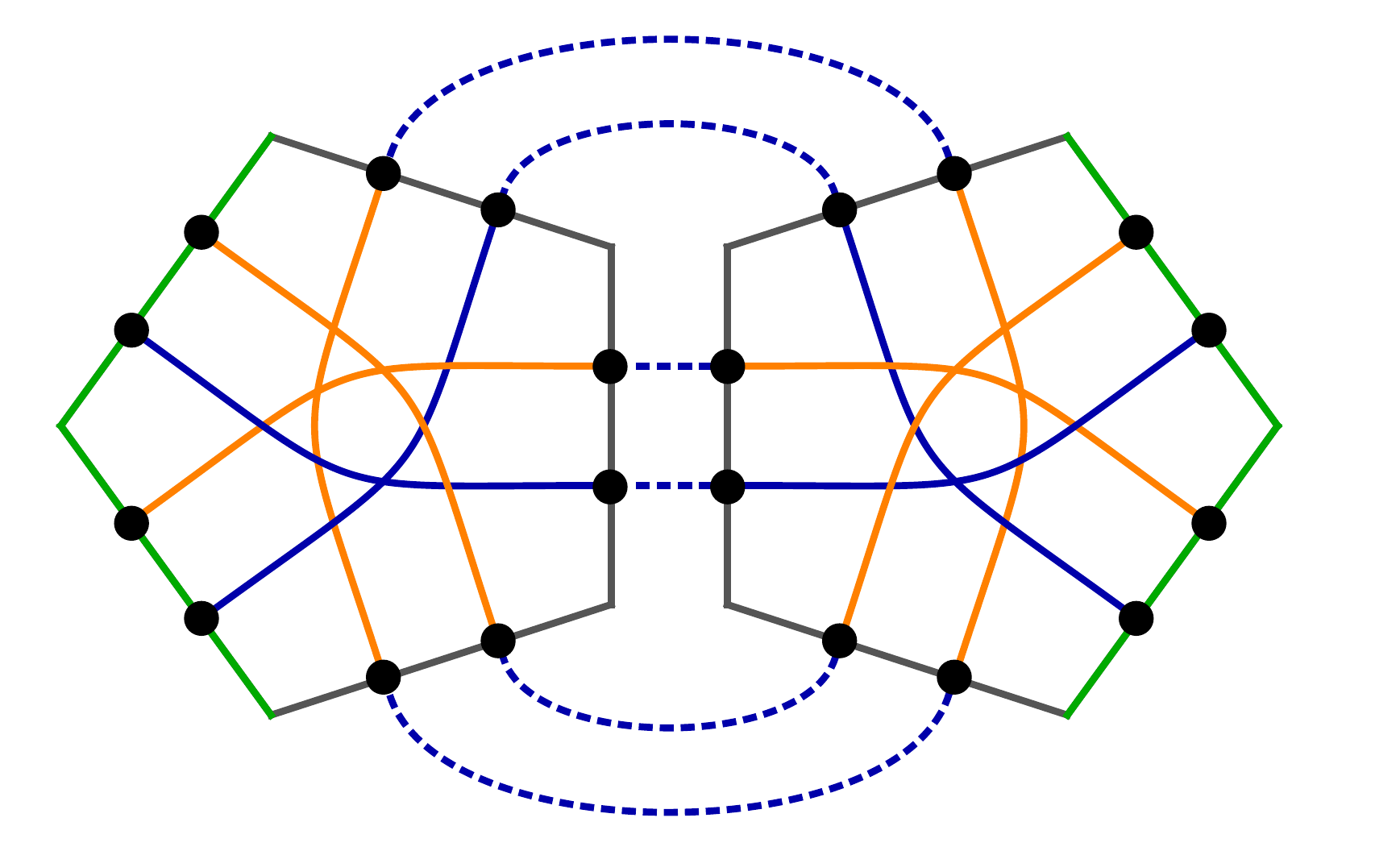}
\end{gathered} \nonumber\\
&=\; \frac{\alpha \alpha^\star + \beta\beta^\star}{2}
\begin{gathered}
\includegraphics[height=0.117\textheight]{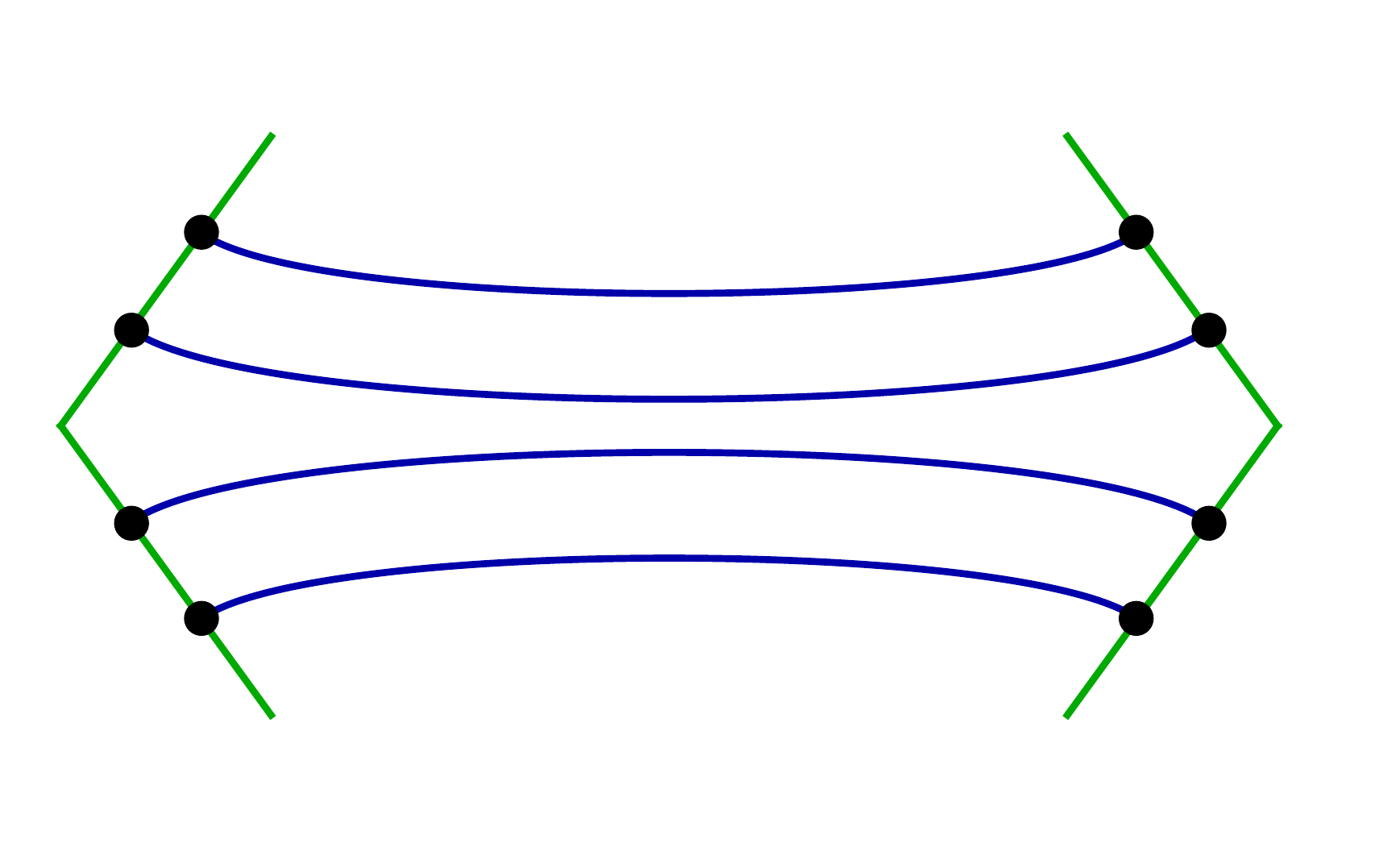}
\end{gathered}
\;=\; \frac{\id_4}{4}\ .
\end{align}
\begin{align}
\label{EQ_HAPPY_SP_4C}
\rho_{\textcolor{darkgreen}{(1)}} \;=\; 
\begin{gathered}
\includegraphics[height=0.125\textheight]{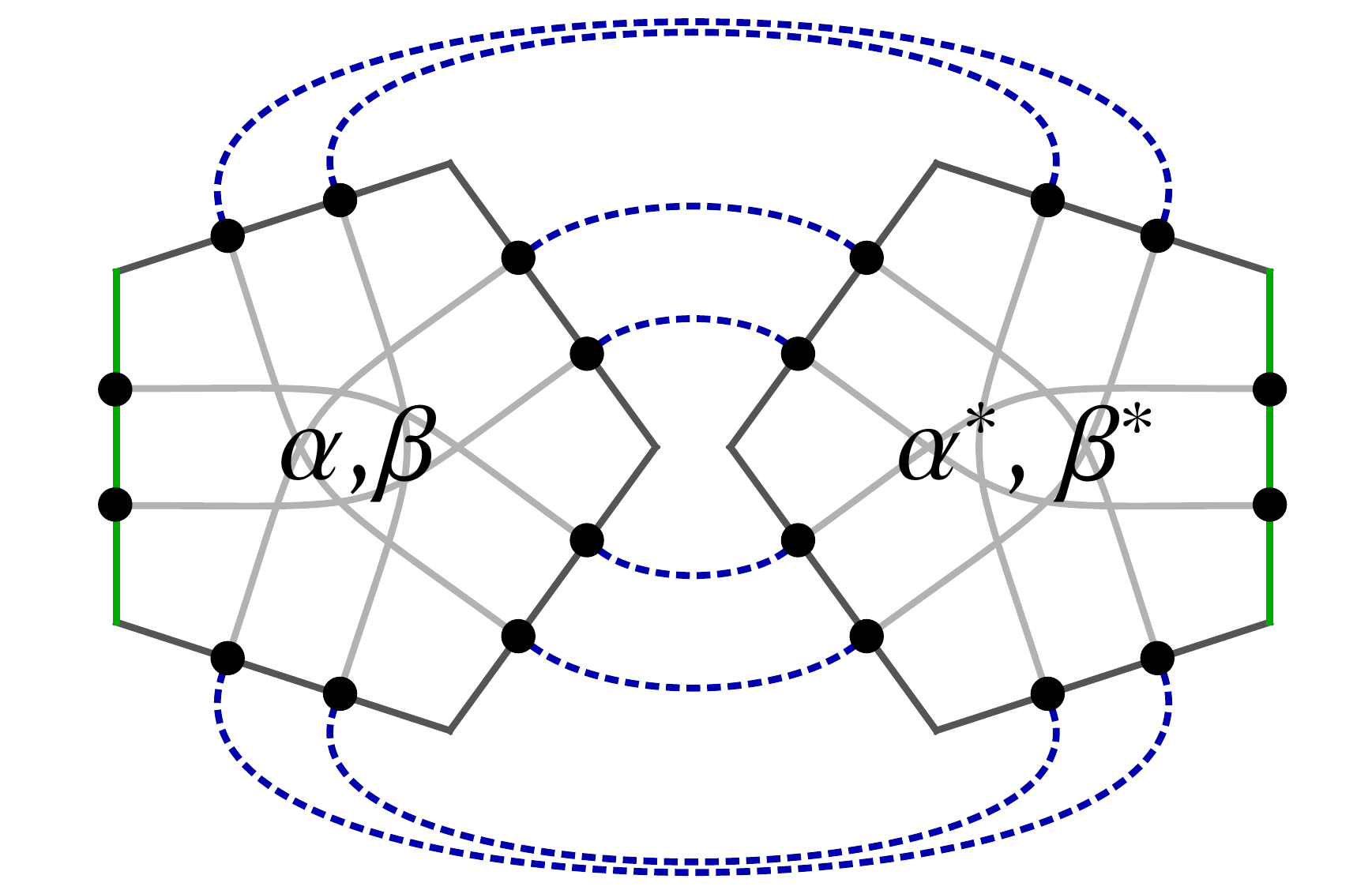}
\end{gathered}
\;&=\;
\alpha \alpha^\star
\begin{gathered}
\includegraphics[height=0.125\textheight]{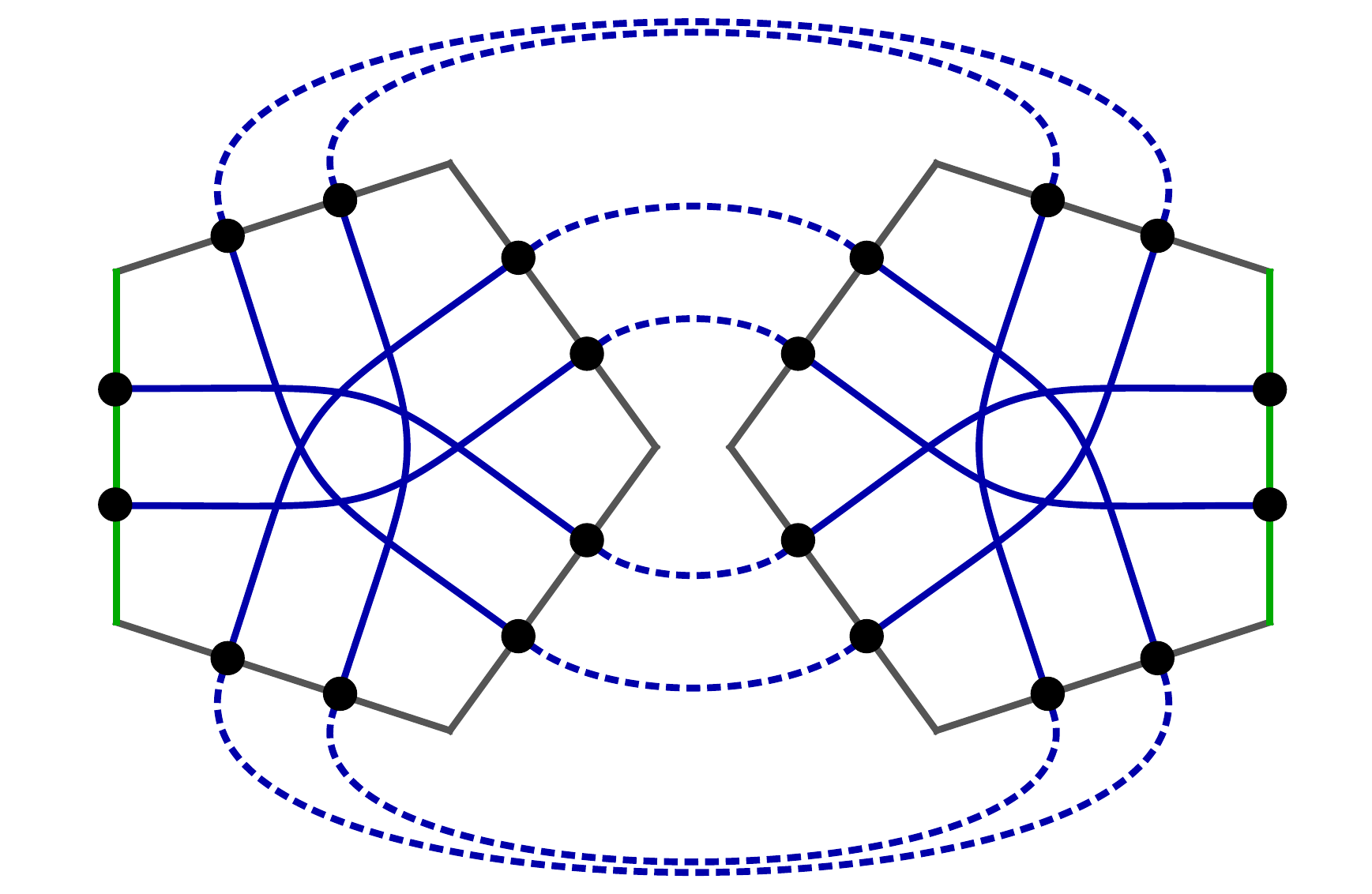}
\end{gathered}
\;+\;
\alpha \beta^\star
\begin{gathered}
\includegraphics[height=0.125\textheight]{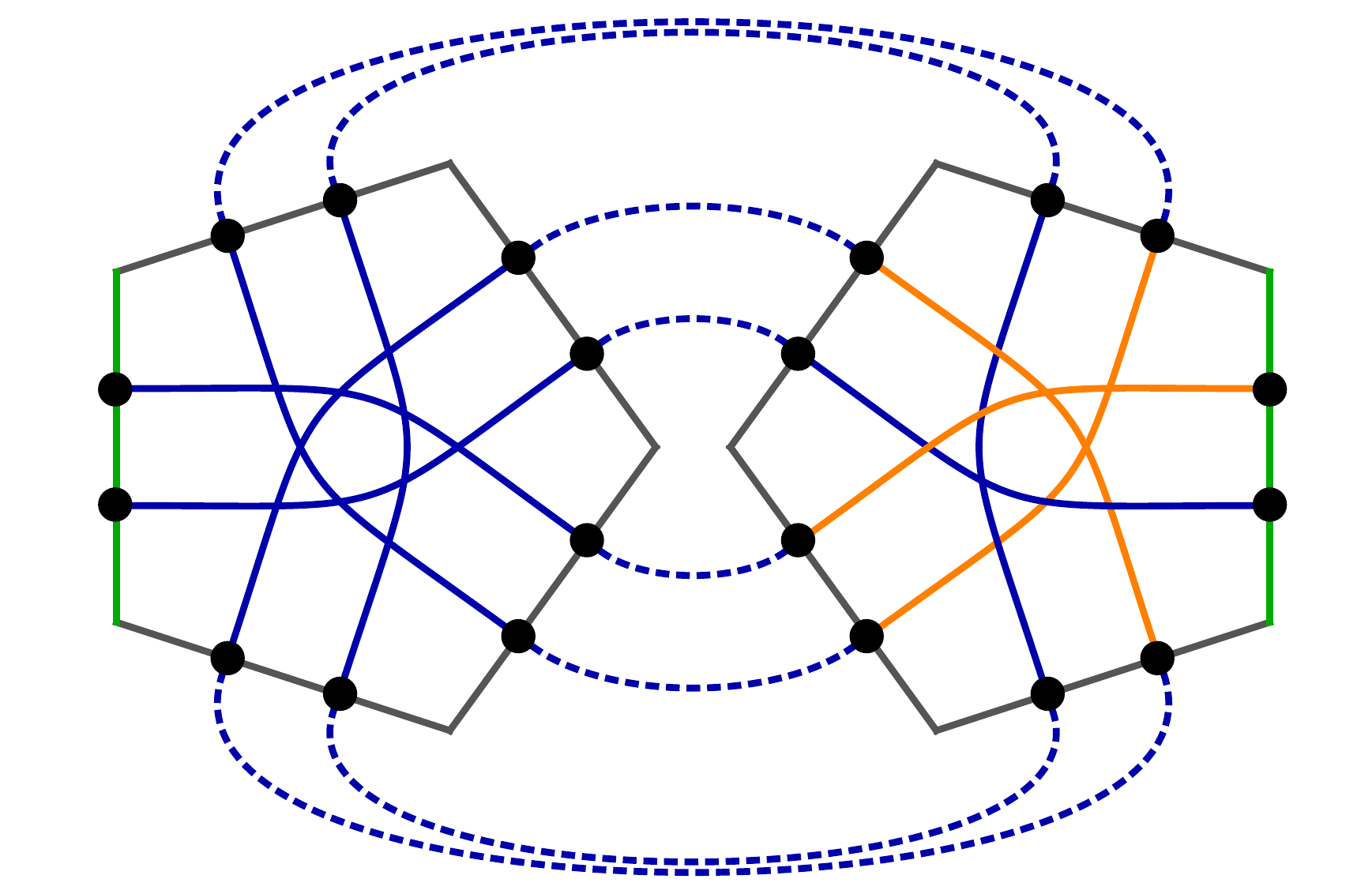}
\end{gathered} \nonumber \\
&+\;
\alpha^\star\beta
\begin{gathered}
\includegraphics[height=0.125\textheight]{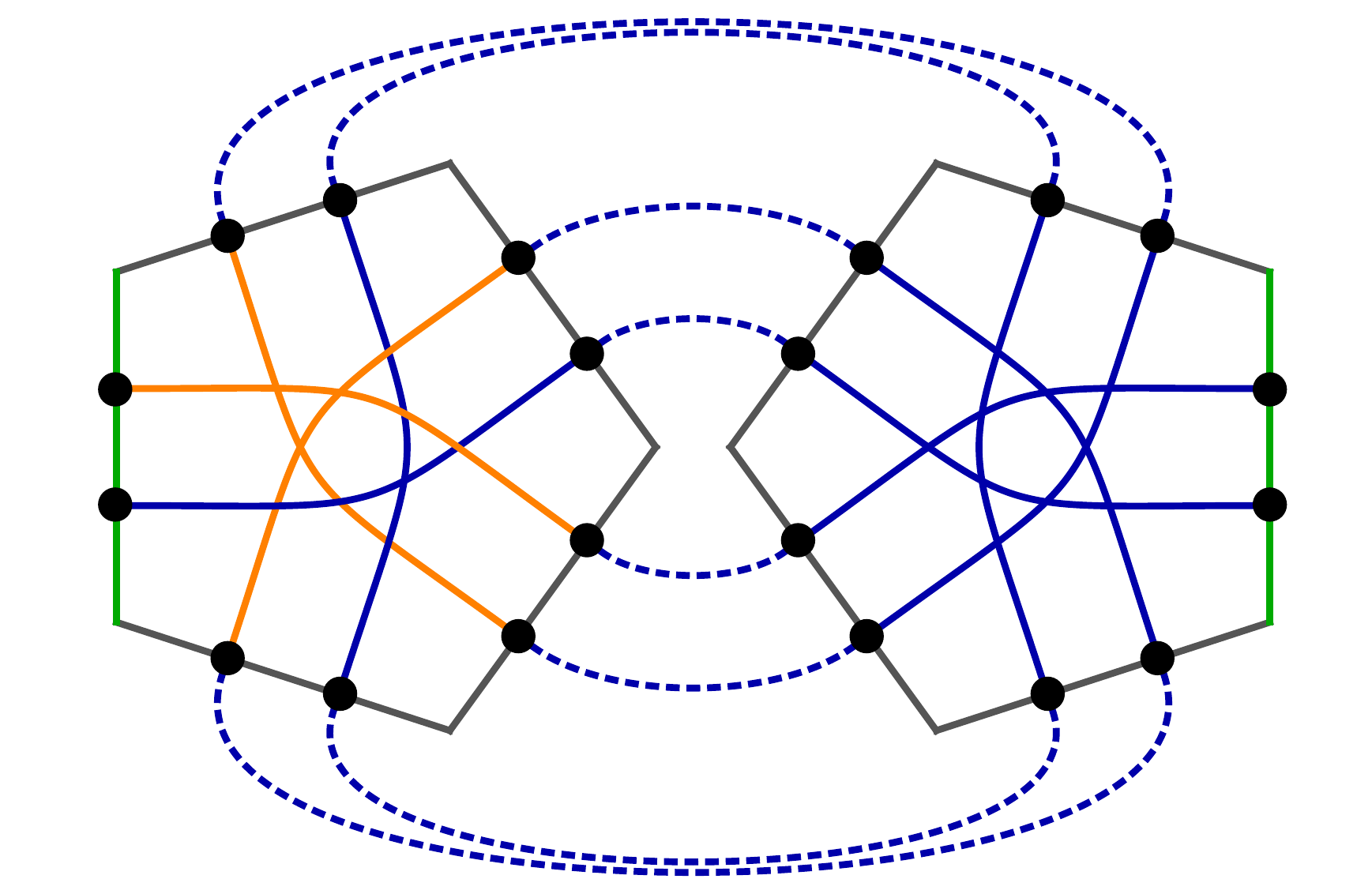}
\end{gathered}
\;+\;
\beta\beta^\star
\begin{gathered}
\includegraphics[height=0.125\textheight]{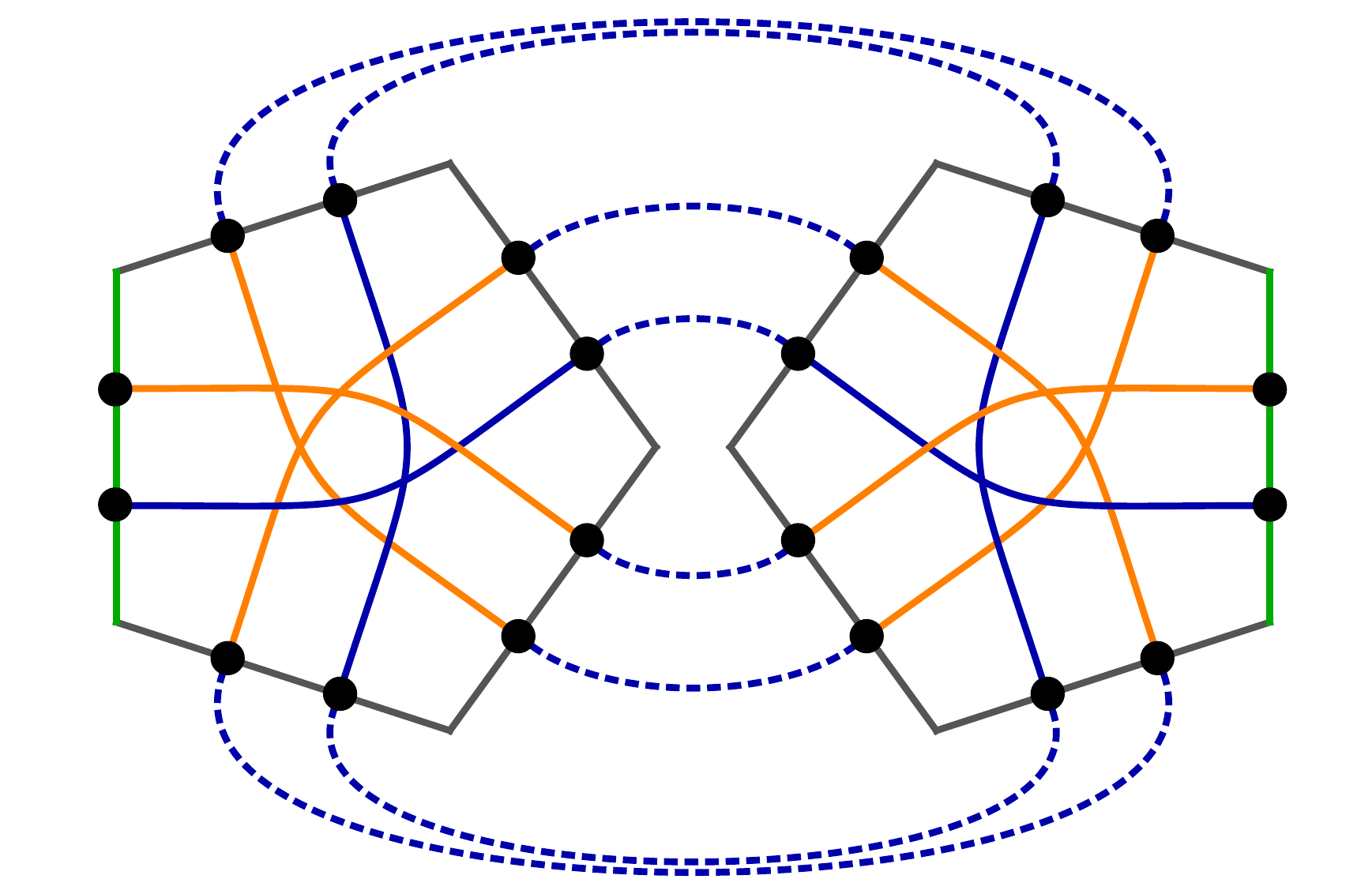}
\end{gathered} \nonumber\\
&=\; \frac{\alpha \alpha^\star + \beta\beta^\star}{\sqrt{2}}
\begin{gathered}
\includegraphics[height=0.125\textheight]{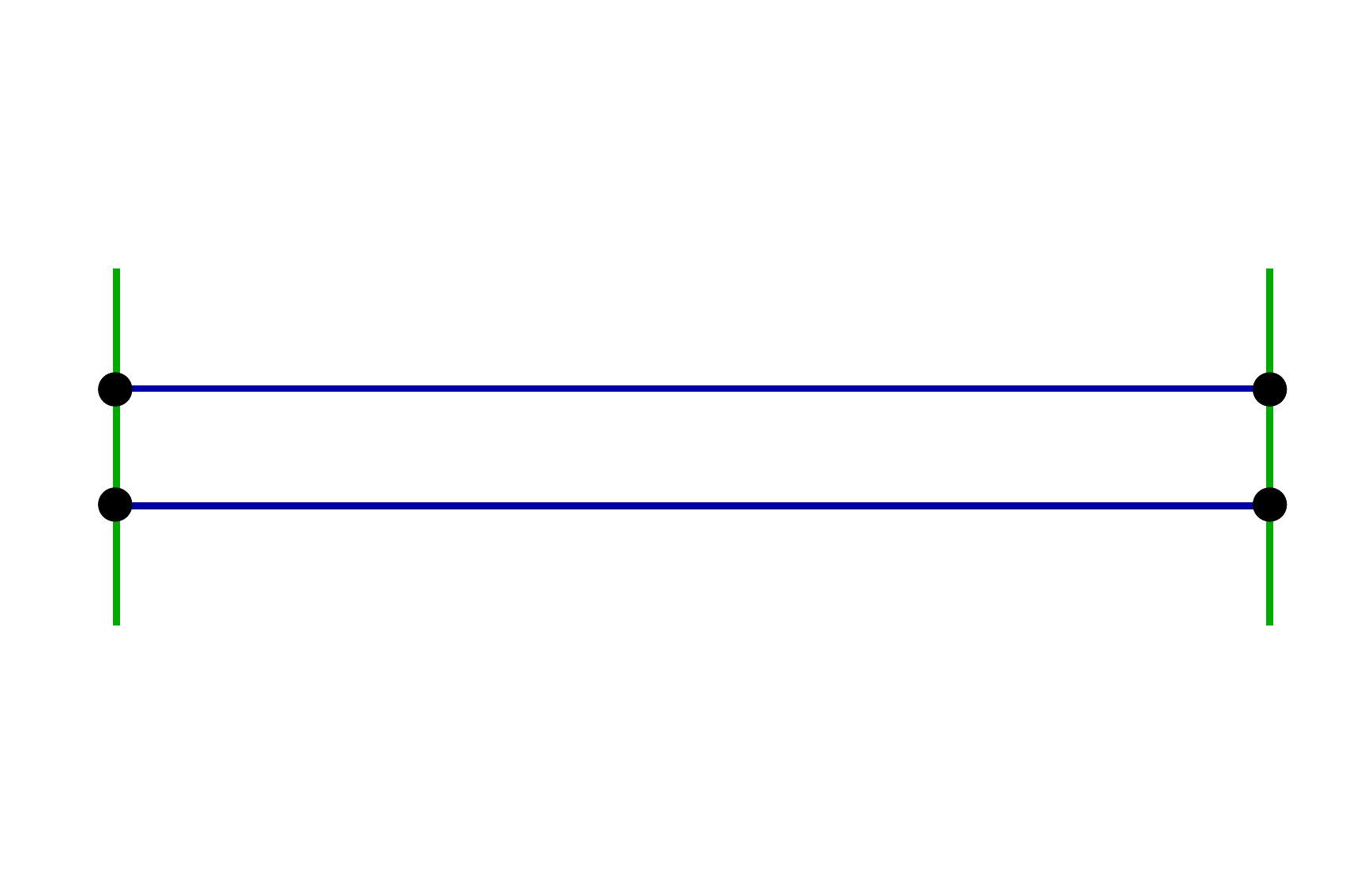}
\end{gathered}
\;=\; \frac{\id_2}{{2}}\ .
\end{align}
Note that the actual values for $\alpha$ and $\beta$ do not change $\rho_A$, and thus the entanglement spectrum is the same for \textsl{any} superposition of the $[[5,1,3]]$ logical code states. As we can easily see, $S_{(1)}=\log 2$ and $S_{(1,2)}=2\log 2$, identical to the result for the logical code states. The corresponding eigenstates are simply a complete basis of Majorana dimers on one or two edges, respectively.

Let us now consider the $|A|=3$ case. We easily find the entanglement entropy $S_A = S_{A^{\text{C}}}=2 \log 2$. The eigenstates of $\rho_A$ can be found, as in \eqref{EQ_RHOA_ES_EXAMPLE}, by starting with the state vector $\ket\psi$ and contracting a complete basis on the edges in $A^{\text{C}}$, yielding four eigenstates. We compute the first one explicitly:
\begin{align}
\label{EQ_RHO_A_513_EV1}
|\psi_{\textcolor{darkgreen}{A}}^{0,0}\rangle \;&=
\begin{gathered}
\includegraphics[height=0.1\textheight]{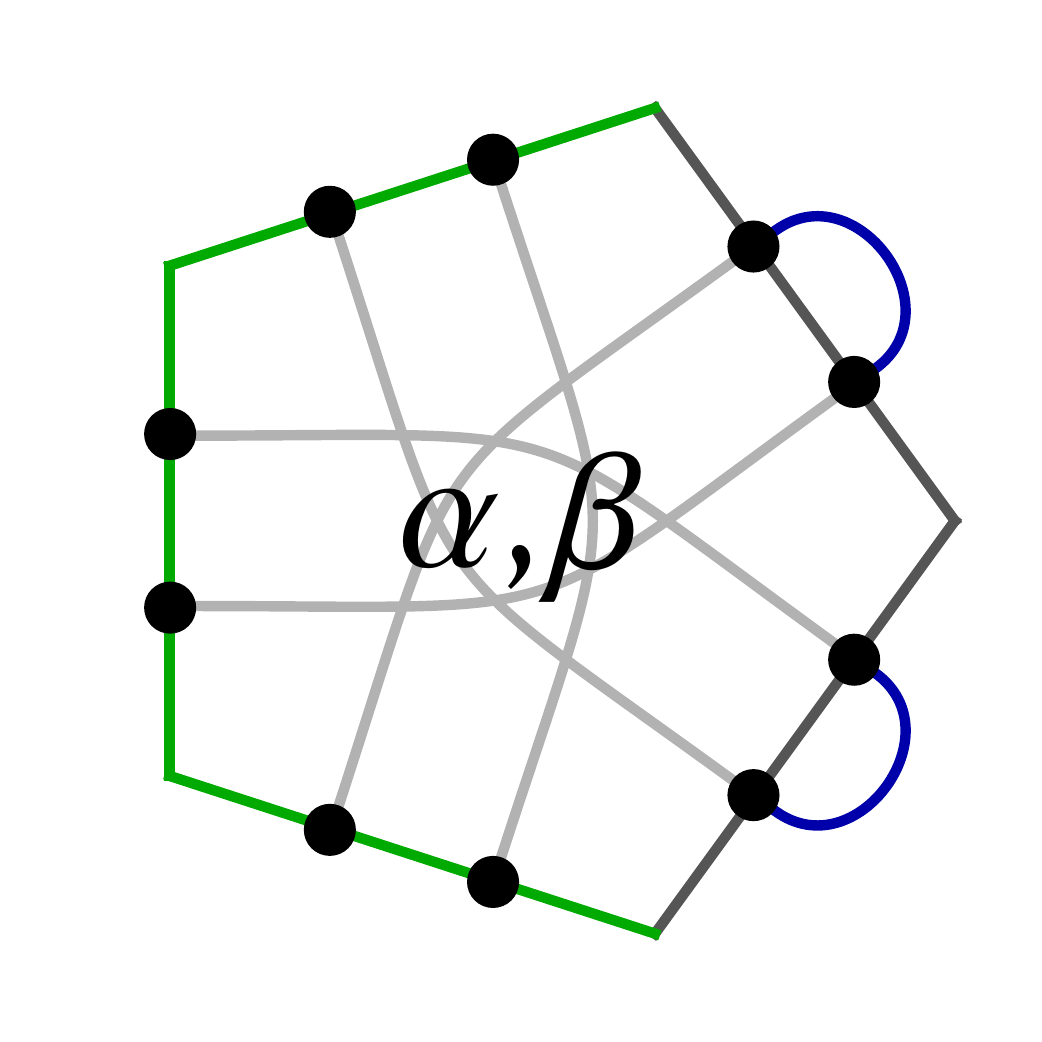}
\end{gathered}
\;=\;
\alpha
\begin{gathered}
\includegraphics[height=0.1\textheight]{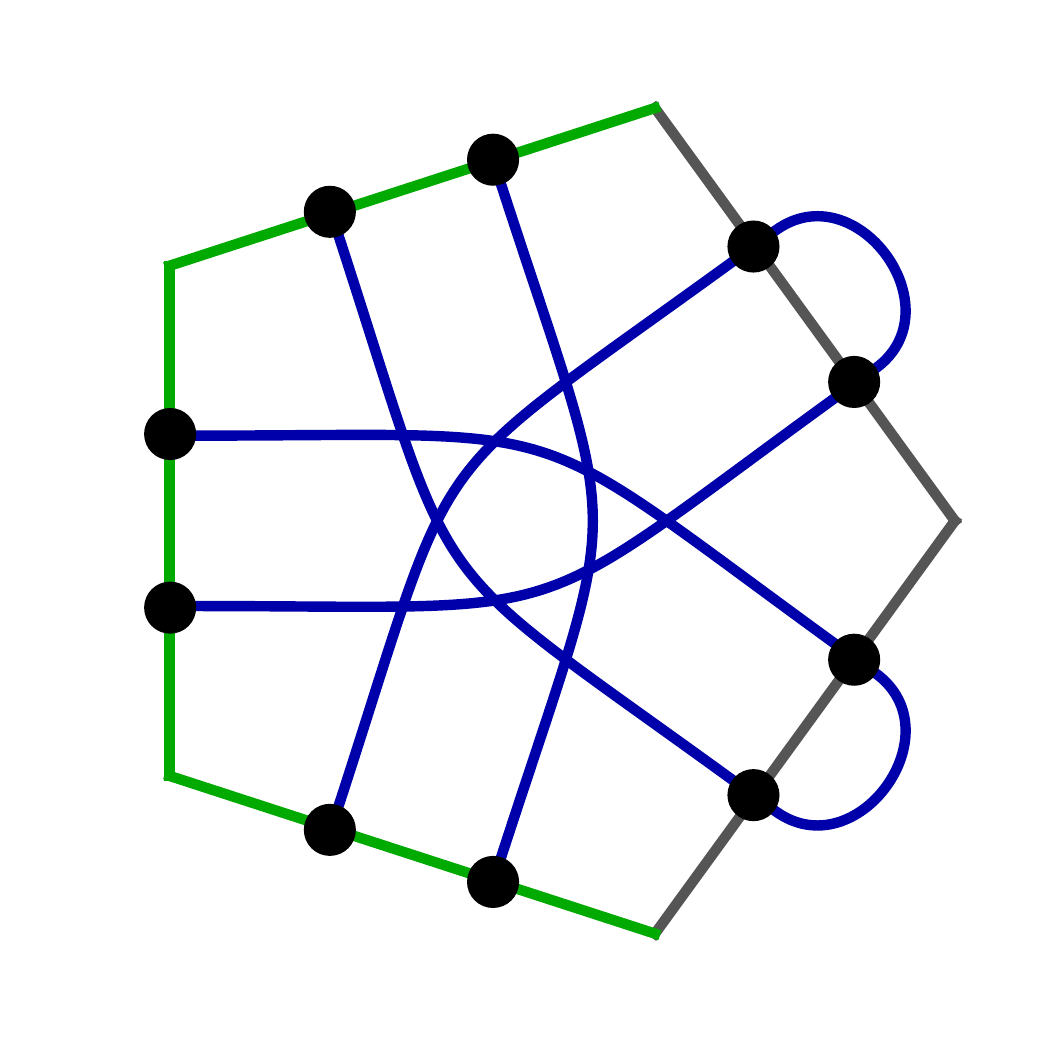}
\end{gathered}
\;+\;
\beta
\begin{gathered}
\includegraphics[height=0.1\textheight]{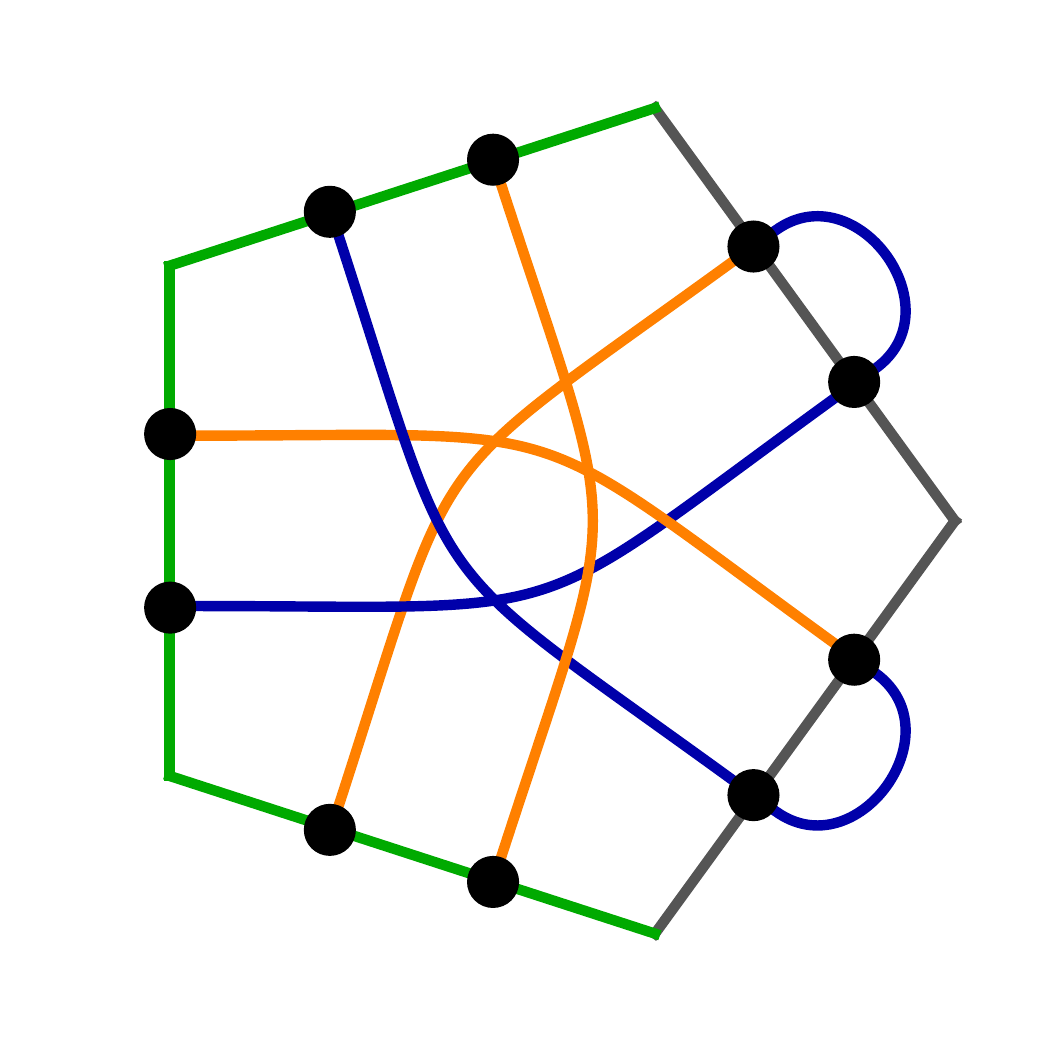}
\end{gathered}
\;=\;
\frac{\alpha}{2}
\begin{gathered}
\includegraphics[height=0.1\textheight]{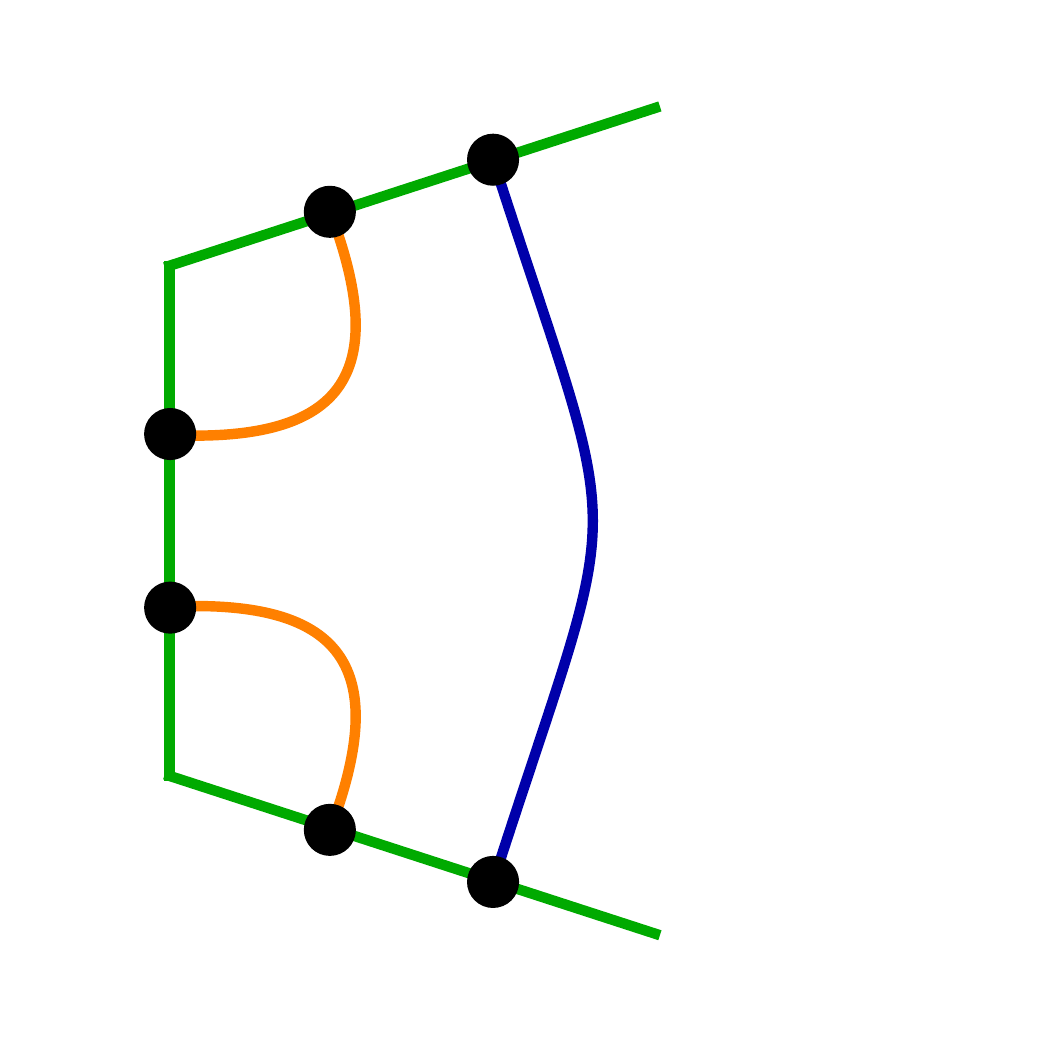}
\end{gathered}
+\;
\frac{\beta}{2}
\begin{gathered}
\includegraphics[height=0.1\textheight]{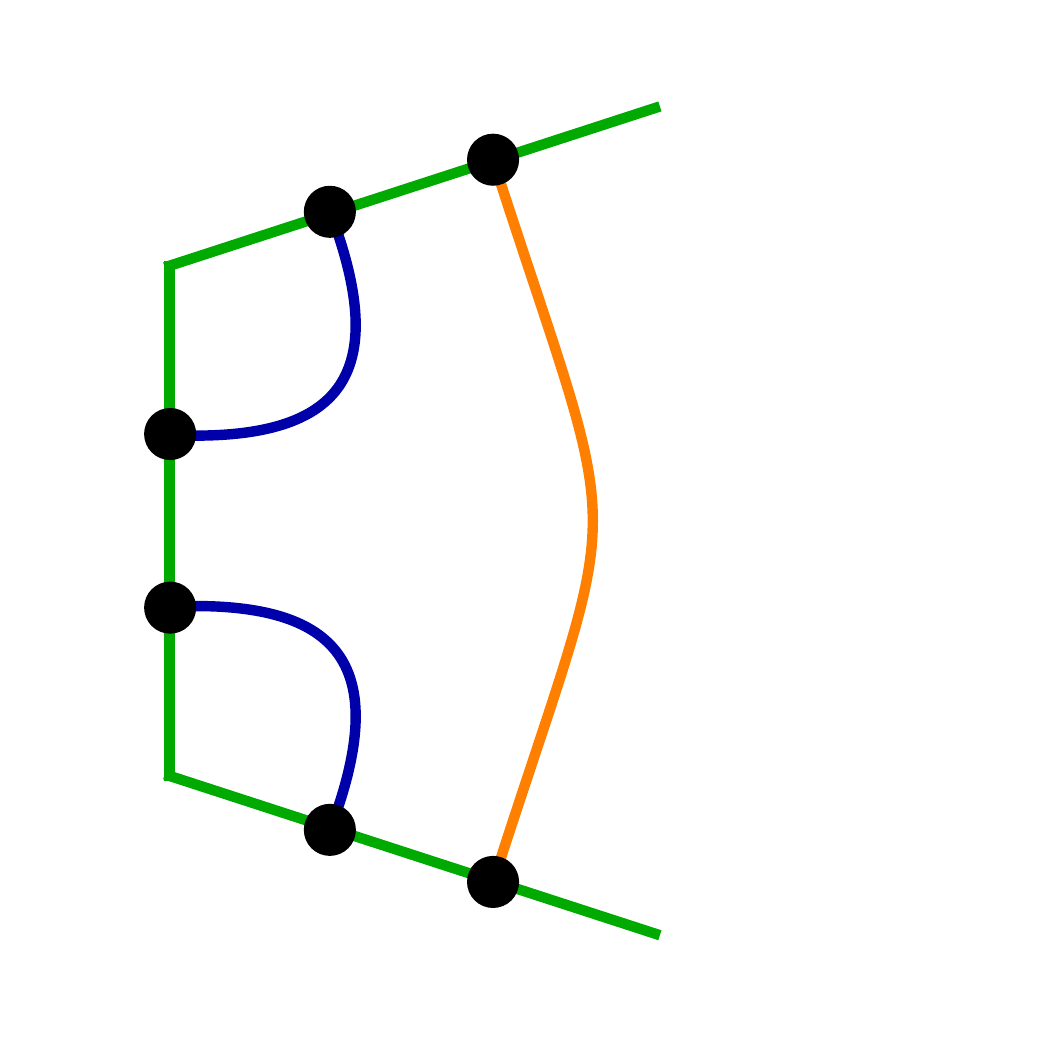}
\end{gathered}
\hspace{-1cm}
\end{align}
The remaining three eigenstates are given by:
\begin{align}
\label{EQ_RHO_A_513_EV2}
|\psi_{\textcolor{darkgreen}{A}}^{0,1}\rangle \;&=
\begin{gathered}
\includegraphics[height=0.1\textheight]{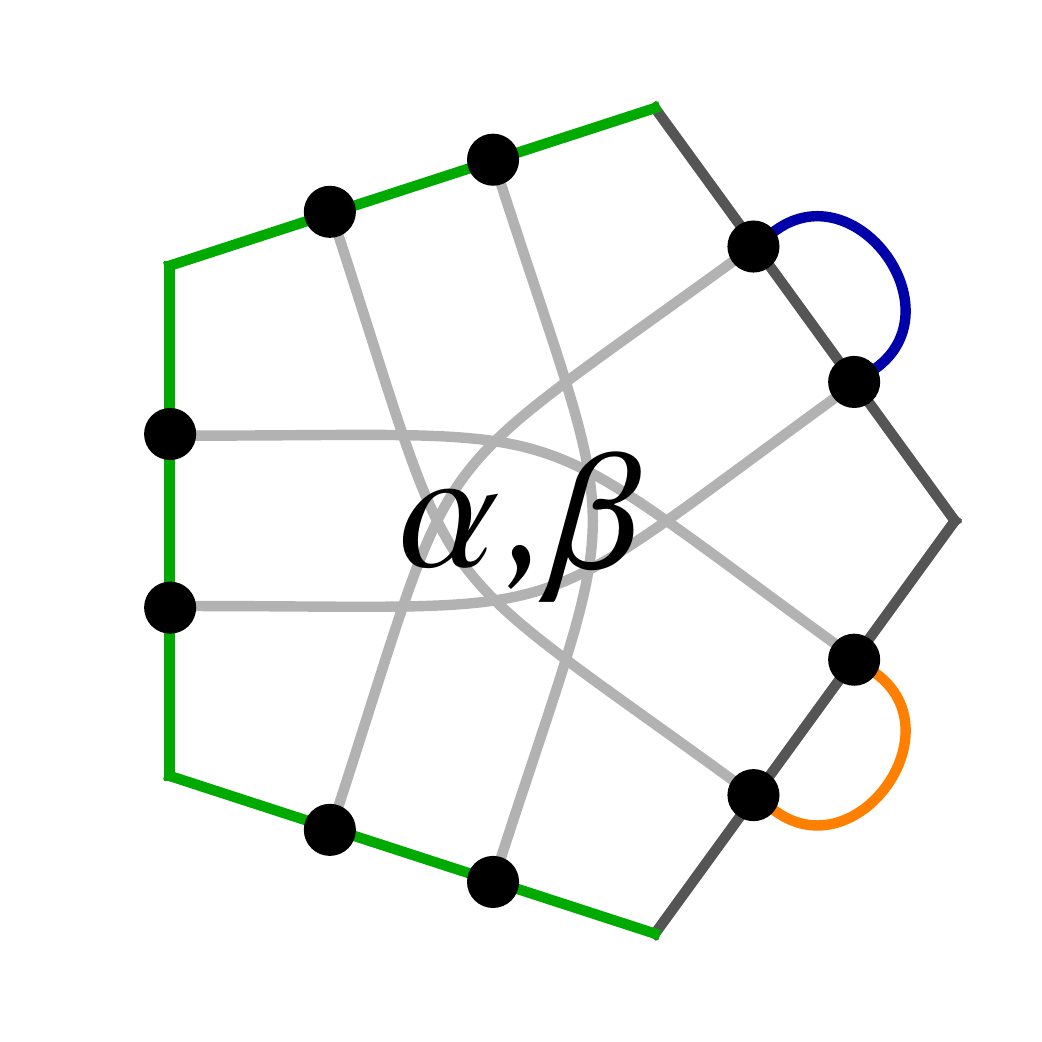}
\end{gathered}&
|\psi_{\textcolor{darkgreen}{A}}^{1,0}\rangle \;&=
\begin{gathered}
\includegraphics[height=0.1\textheight]{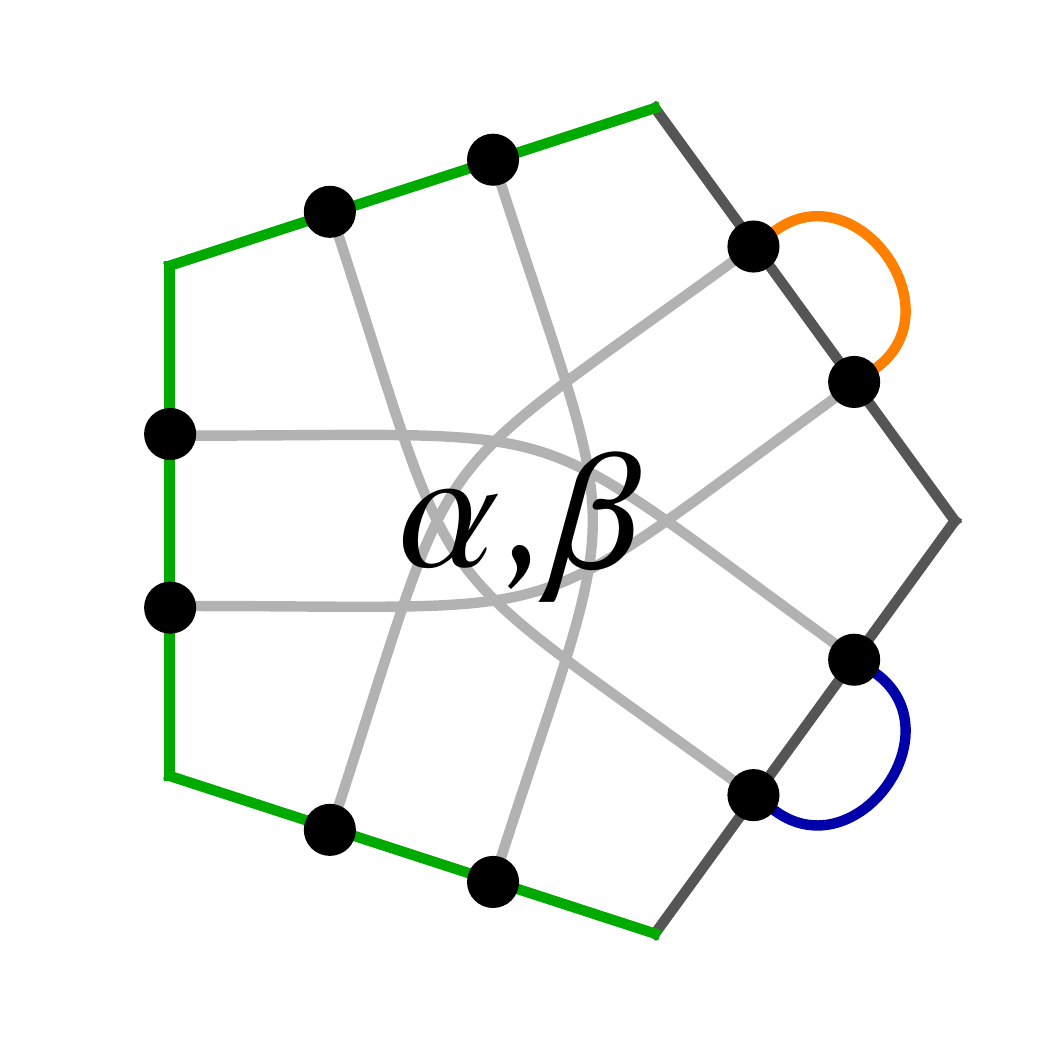}
\end{gathered}&
|\psi_{\textcolor{darkgreen}{A}}^{1,1}\rangle \;&=
\begin{gathered}
\includegraphics[height=0.1\textheight]{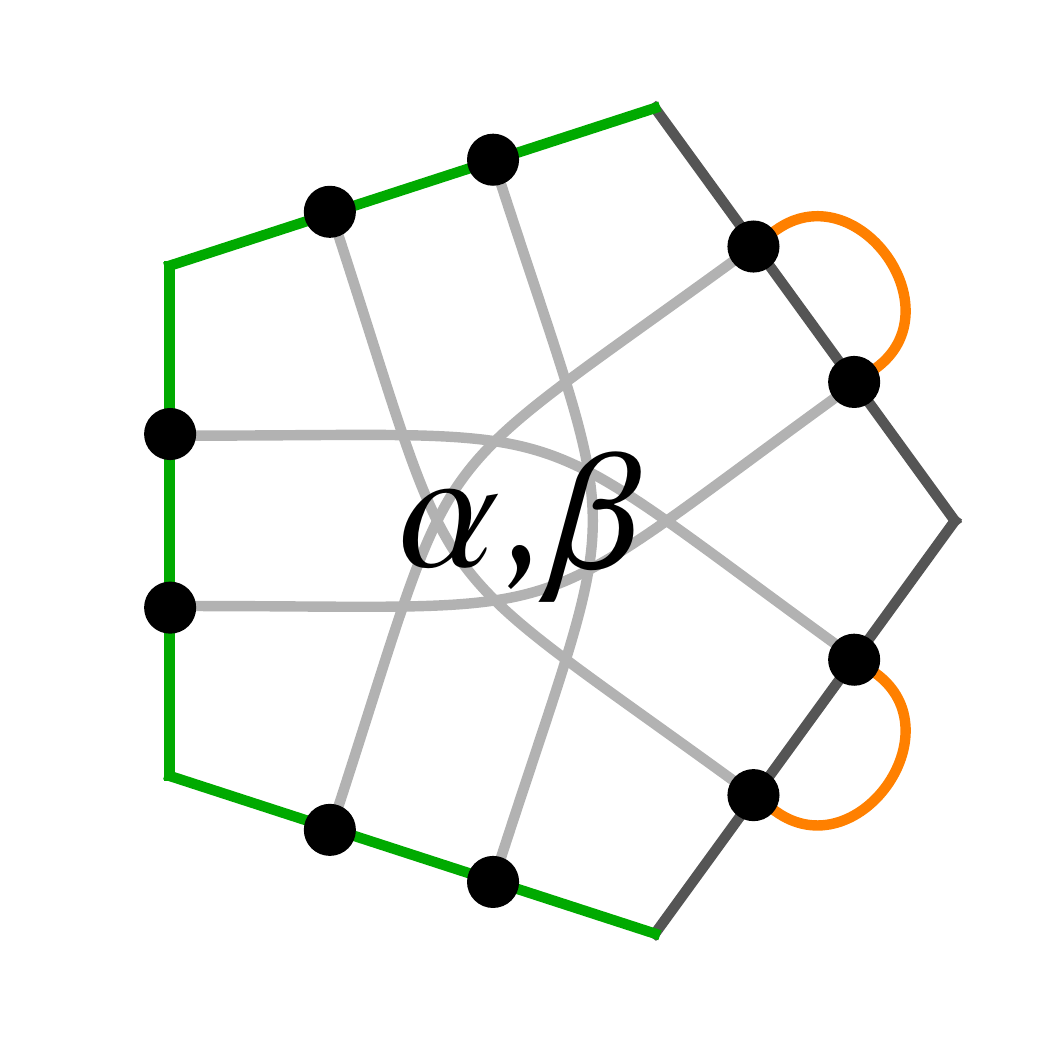}
\end{gathered}
\end{align}
To see that these are eigenstates, we do not need to actually evaluate these contractions. Instead, using \eqref{EQ_HAPPY_SP_3C}, we compute the eigenvalue equation for the first eigenvector as follows:
\begin{align}
\label{EQ_RHO_A_513}
\rho_{\textcolor{darkgreen}{A}}\, |\psi_{\textcolor{darkgreen}{A}}^{0,0}\rangle \;=
\begin{gathered}
\includegraphics[height=0.117\textheight]{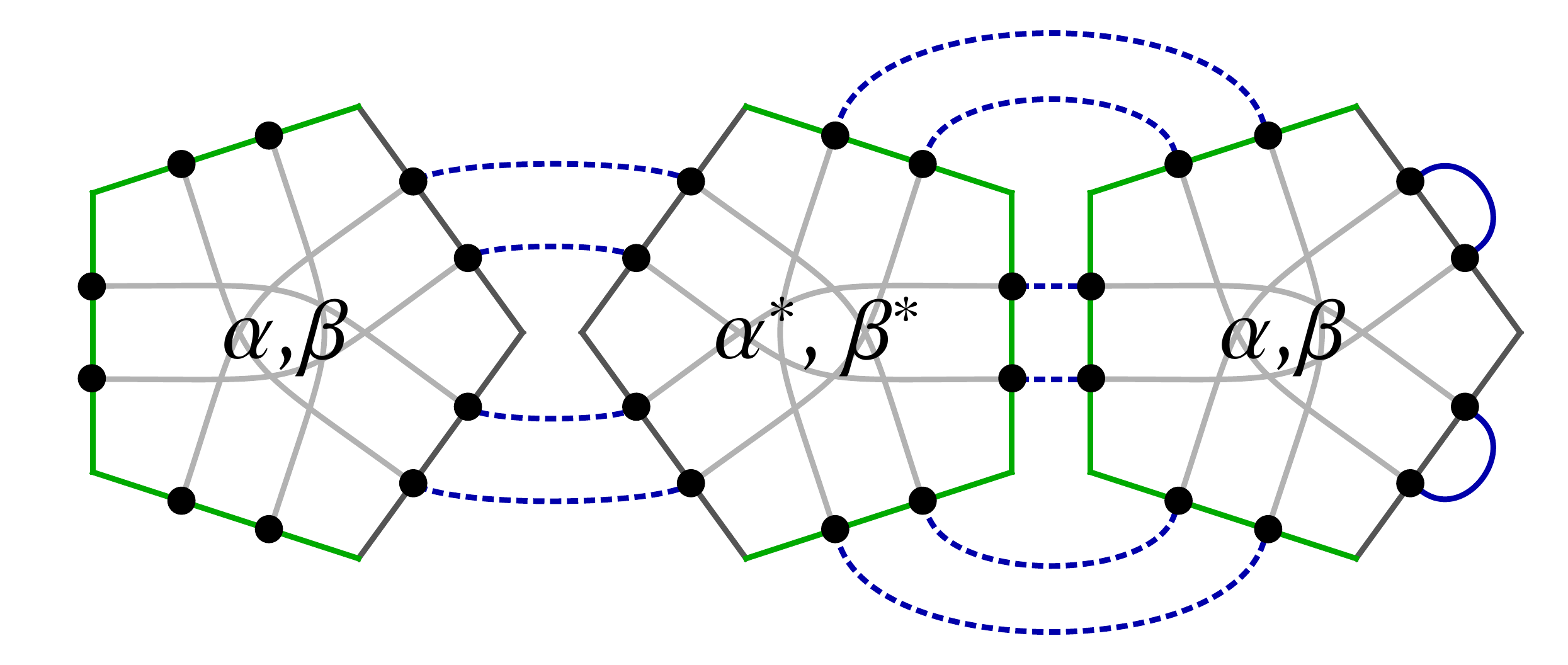}
\end{gathered}
\;&= \;\frac{1}{2}
\begin{gathered}
\includegraphics[height=0.117\textheight]{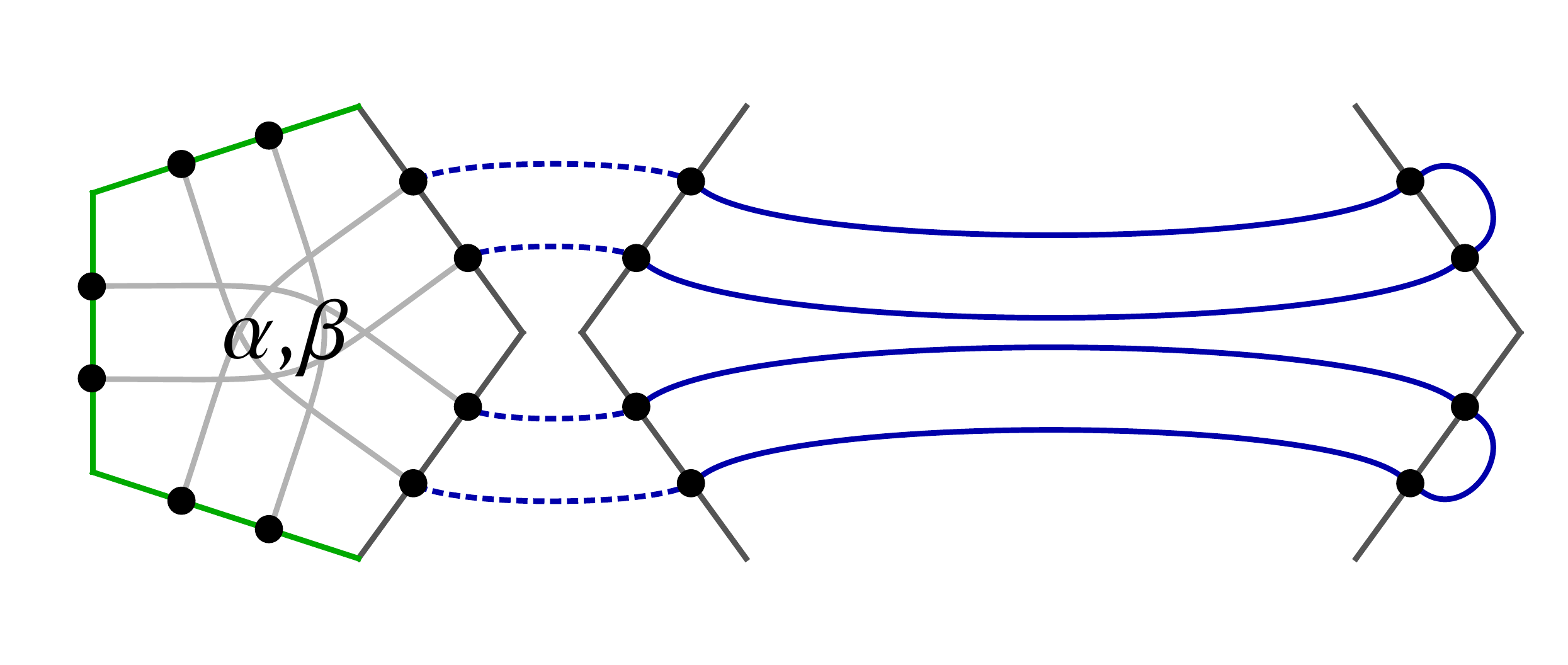}
\end{gathered} \nonumber\\
&= \;\frac{1}{4}
\begin{gathered}
\includegraphics[height=0.1\textheight]{pentagon_net_rhoA_happy3_00.pdf}
\end{gathered}
\;=\; \frac{1}{4}\; |\psi_{\textcolor{darkgreen}{A}}^{0,0}\rangle\ .
\end{align}
The equations for the other eigenstates follow equivalently, leading to an entanglement entropy $S_A=2 \log 2$ (i.e.,\ $m=2$).
For more than one tile, we can generalize \eqref{EQ_RHO_A_513} for \textsl{local} superpositions, i.e.,\ superpositions that factorize along the tiles. As an example, consider a $\ket{\psi^\prime}$ resulting from contracting two states of the form \eqref{EQ_HAPPY_SUPERPOS}:
\begin{align}
\label{EQ_HAPPY_TWO_SUPERPOS}
\ket{\psi^\prime} =\;& C_{1\leftrightarrow 4} \left( \alpha_1 \ket{\bar{0}}_{5} + \beta_1\ket{\bar{1}}_{5} \right) \left( \alpha_2 \ket{\bar{0}}_{5} + \beta_2 \ket{\bar{1}}_{5} \right)
\;=\;
\begin{gathered}
\includegraphics[height=0.117\textheight]{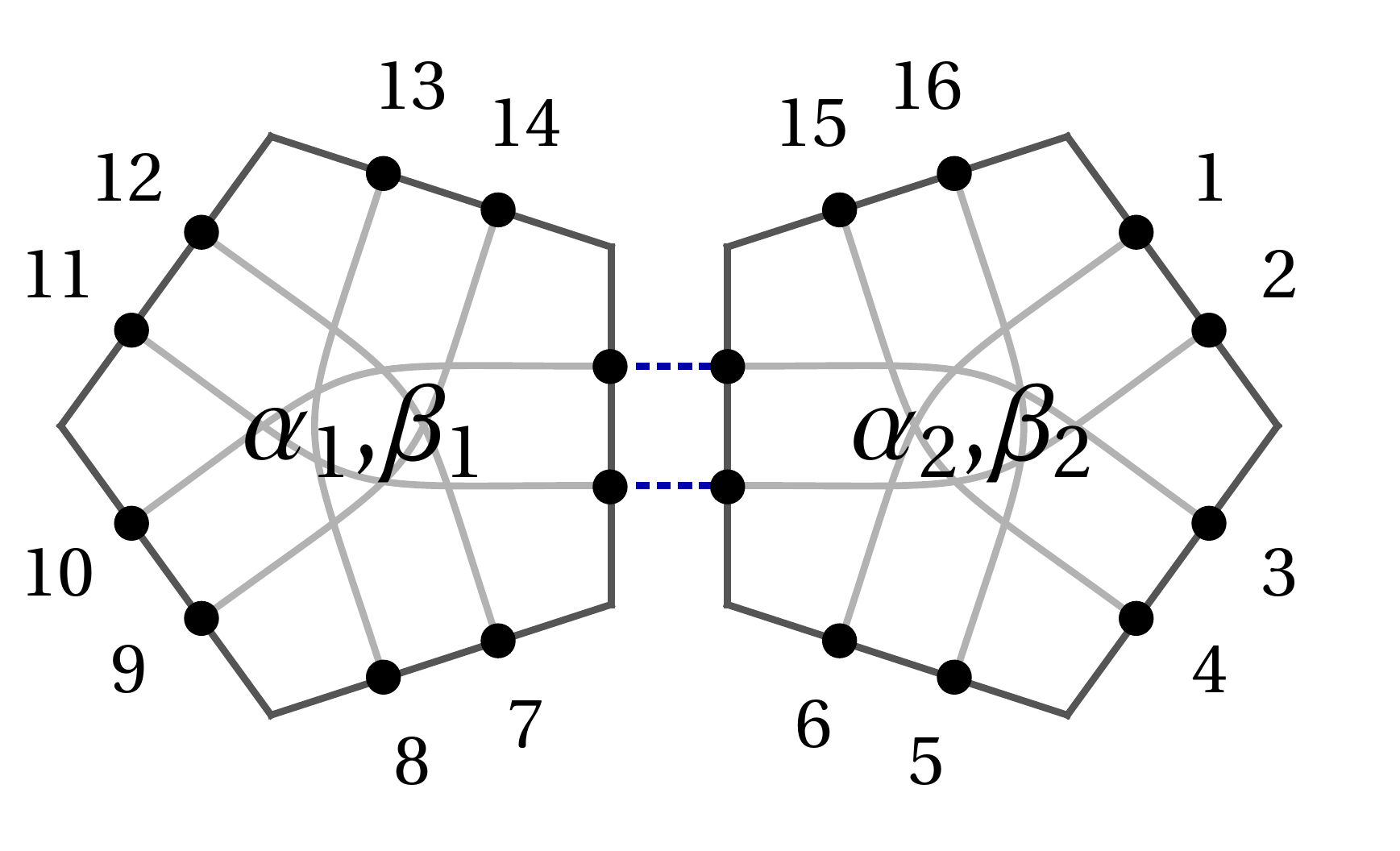}
\end{gathered}
\end{align}
Here we have defined the contraction operator $C_{j\leftrightarrow k}$ contracting the $j$th edge of the first dimer state on the $k$th edge of the second. 
We can now show that \eqref{EQ_RHO_A_513} generalizes if we extend region $A\to A^\prime$ onto a neighbouring pentagon tile. The reduced density matrix becomes
\begin{align}
\rho^\prime_{\textcolor{darkgreen}{A^\prime}} = 2\,
\begin{gathered}
\includegraphics[height=0.117\textheight]{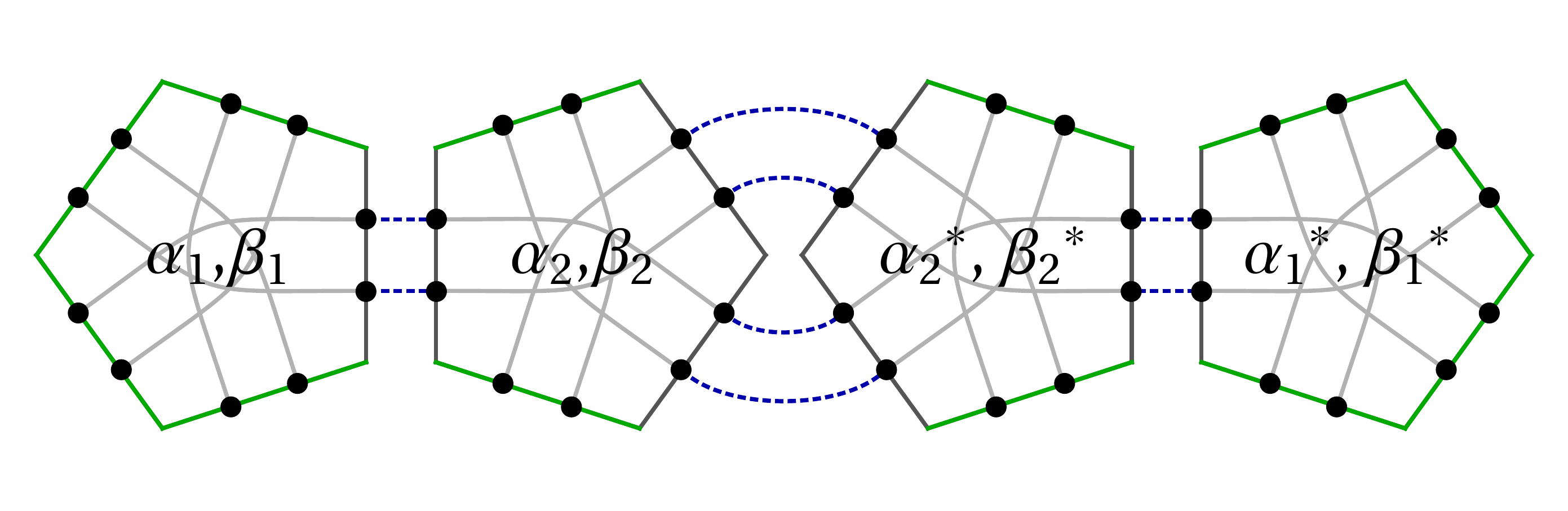}
\end{gathered}
\end{align}
Note that a normalization factor of $2$ appears as a result of the unresolved contraction within both $\ket{\psi^\prime}$ and $\bra{\psi^\prime}$.
To see that the eigenvalue spectrum of $\rho^\prime_{A^\prime}$ is the same as that of $\rho_A$, we simply extend the eigenvectors \eqref{EQ_RHO_A_513_EV1} and \eqref{EQ_RHO_A_513_EV2} onto the region $A^\prime$ by contracting them with the first pentagon, which is equivalent to contracting a complete basis on the extended vector $\ket{\psi^\prime}$. For the first eigenvector, we thus find
\begin{align}
|{\psi^\prime}_{\textcolor{darkgreen}{A^\prime}}^{0,0}\rangle \;&=\;
\begin{gathered}
\includegraphics[height=0.117\textheight]{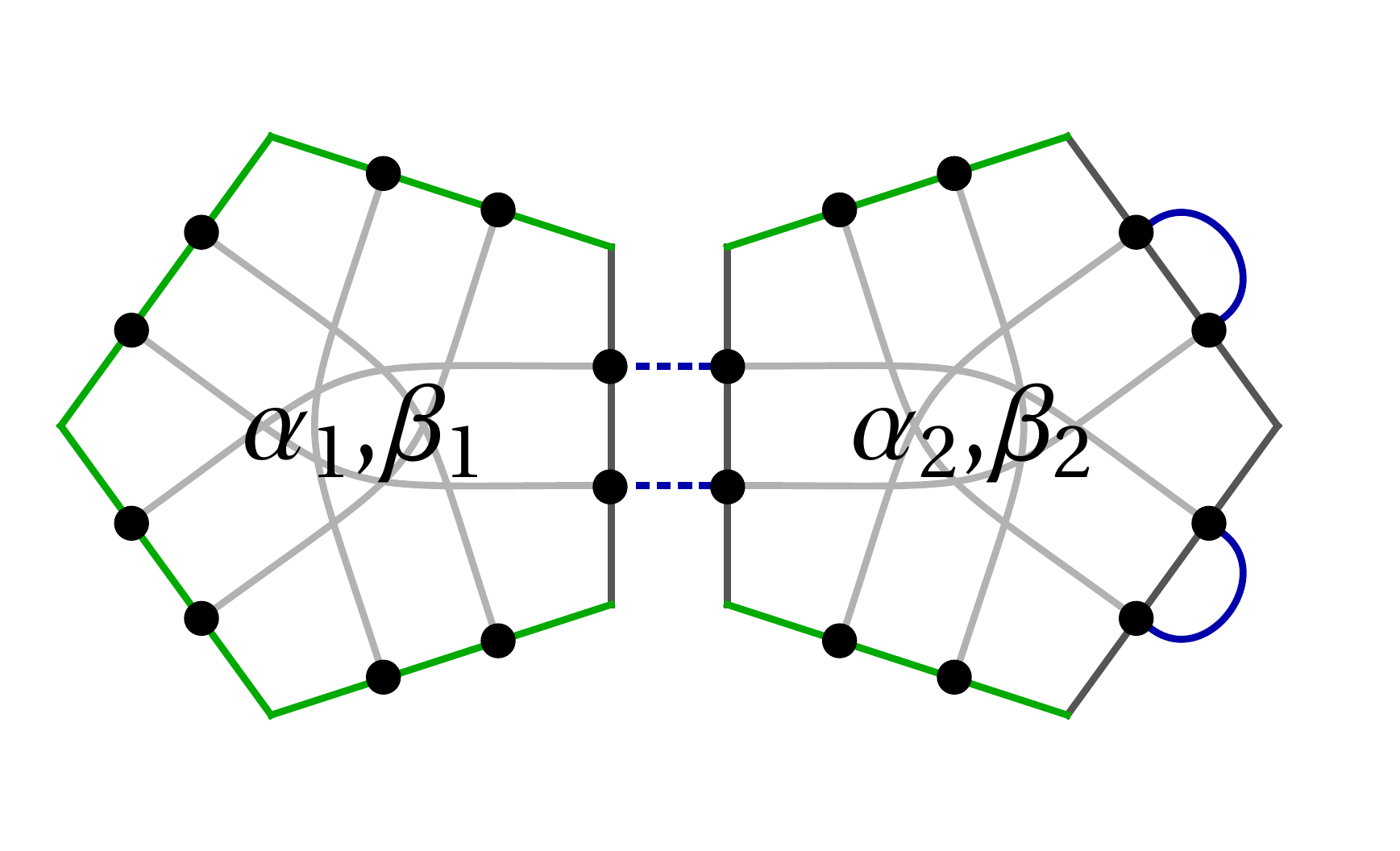}
\end{gathered}
\end{align}
The complete eigenvalue equation can be resolved by applying \eqref{EQ_HAPPY_SP_3C} and \eqref{EQ_HAPPY_SP_4C} successively:
\begin{align}
\label{EQ_HAPPY_RHO_A_RED1}
\rho^\prime_{\textcolor{darkgreen}{A^\prime}}\,  
|{\psi^\prime}_{\textcolor{darkgreen}{A^\prime}}^{0,0}\rangle
&=\; 2\,
\begin{gathered}
\includegraphics[height=0.146\textheight]{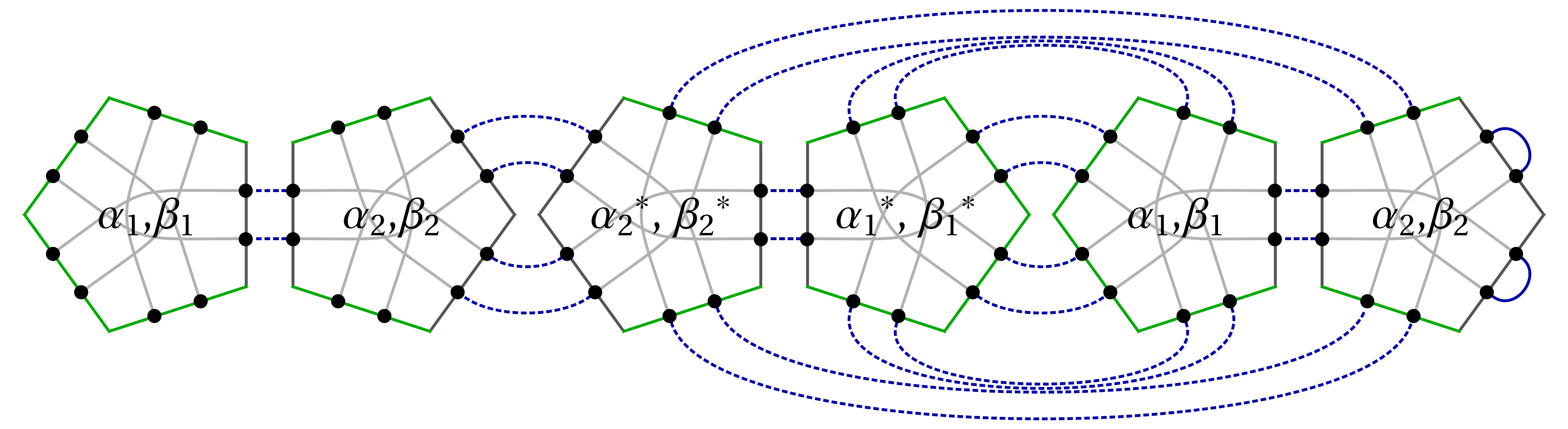}
\end{gathered} \nonumber\\
&=\; \sqrt{2}
\begin{gathered}
\includegraphics[height=0.146\textheight]{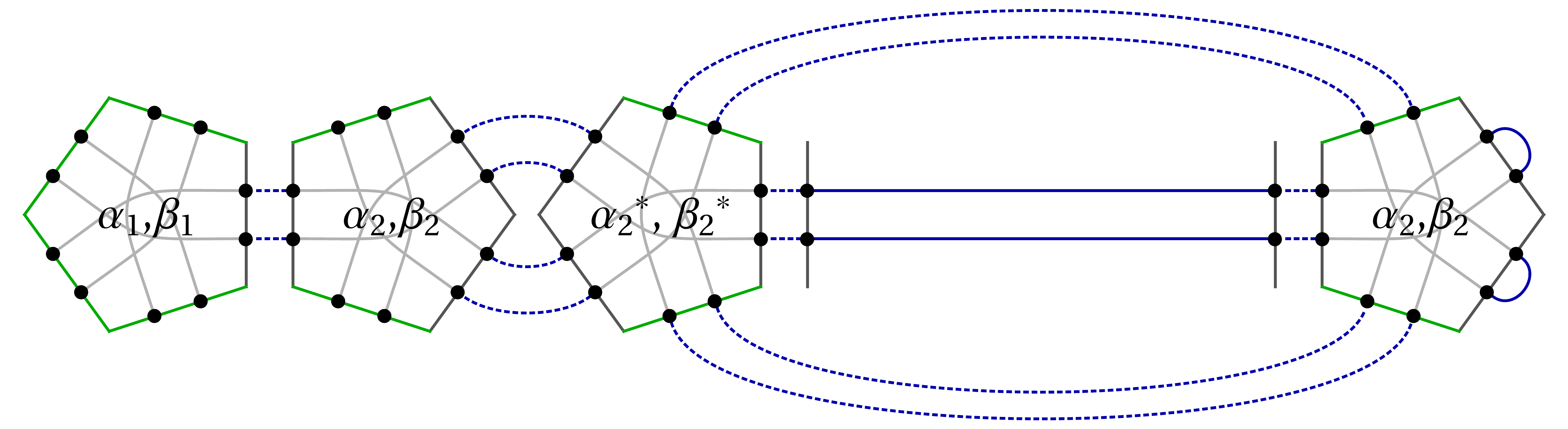}
\end{gathered} \nonumber\\
&=\;
\begin{gathered}
\includegraphics[height=0.117\textheight]{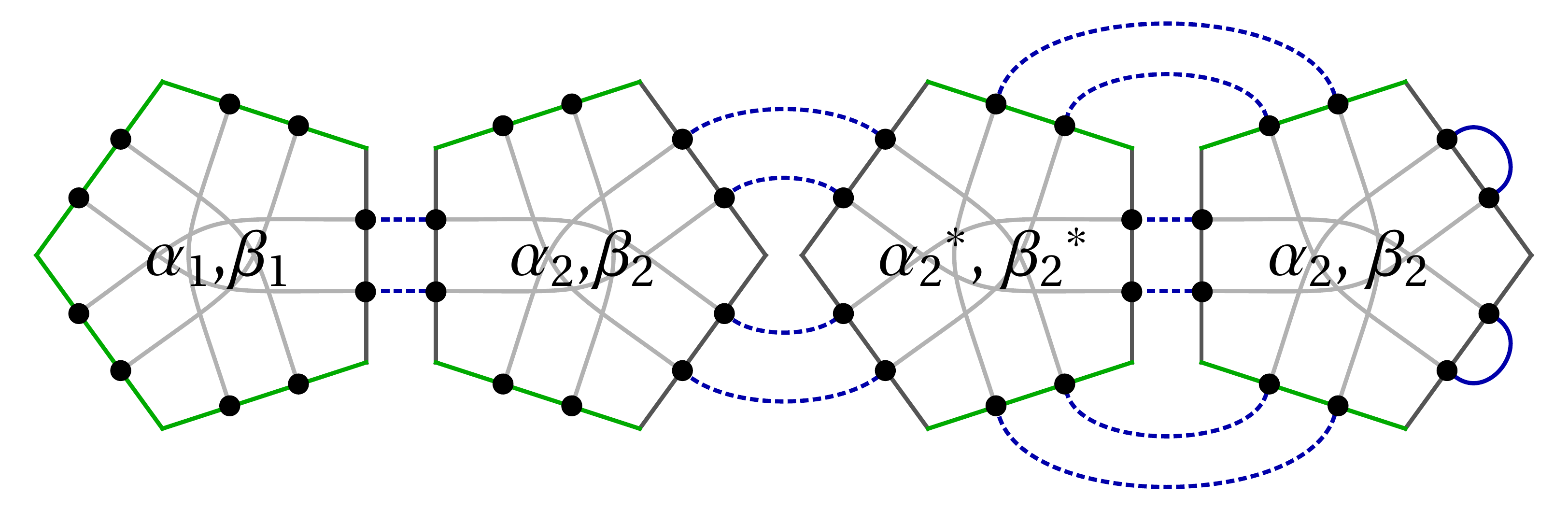}
\end{gathered} \nonumber\\
&=\; \frac{1}{2}
\begin{gathered}
\includegraphics[height=0.117\textheight]{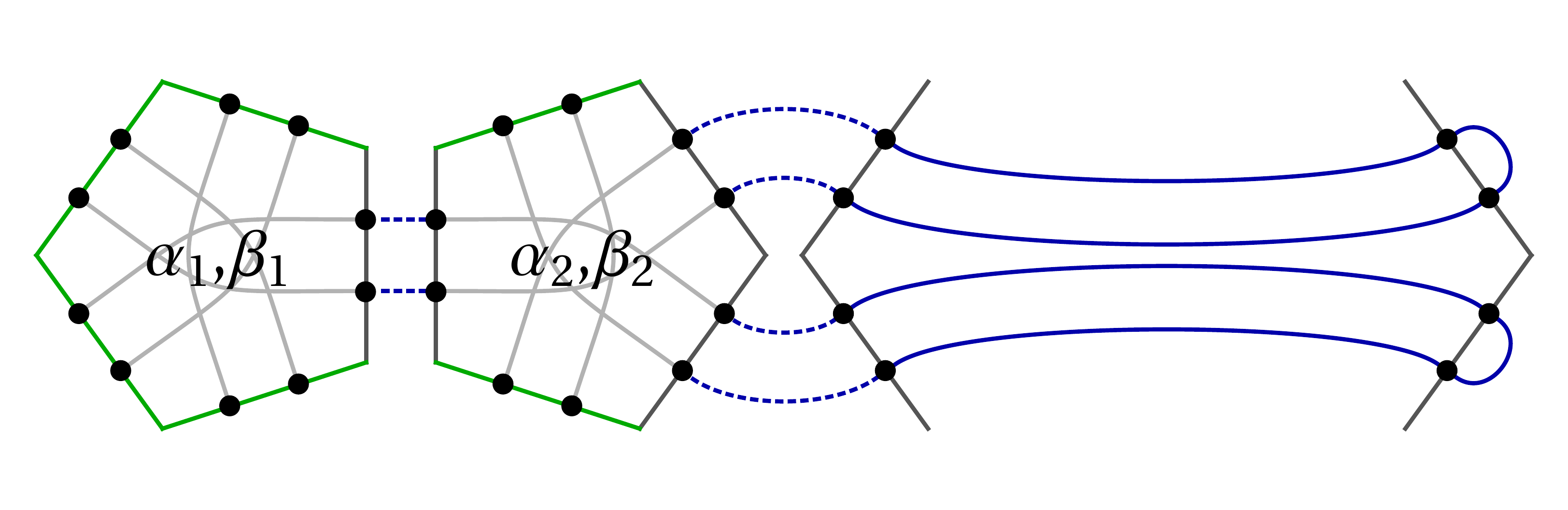}
\end{gathered} \nonumber\\
&=\; \frac{1}{4}
\begin{gathered}
\includegraphics[height=0.117\textheight]{pentagon_net_happy_sp2.pdf}
\end{gathered}
\;=\; \frac{1}{4}\; |{\psi^\prime}_{\textcolor{darkgreen}{A^\prime}}^{0,0}\rangle
\end{align}
Again, this procedure holds for all eigenstates, leading to the same eigenvalue spectrum as for $\rho_A$. Thus we see that ``gluing'' $[[5,1,3]]$ tiles onto a region $A$ on an original tile only projects the eigenvalues onto a larger space of Majorana dimer states, leaving their eigenvalues invariant. This procedure can also be extended to cases where a subsystem $B$ and its complement $B^{\text{C}}$ both cover different tiles, as in the following example:
\begin{align}
\rho_{\textcolor{darkgreen}{B}}\; =\; 2\,
\begin{gathered}
\includegraphics[height=0.191\textheight]{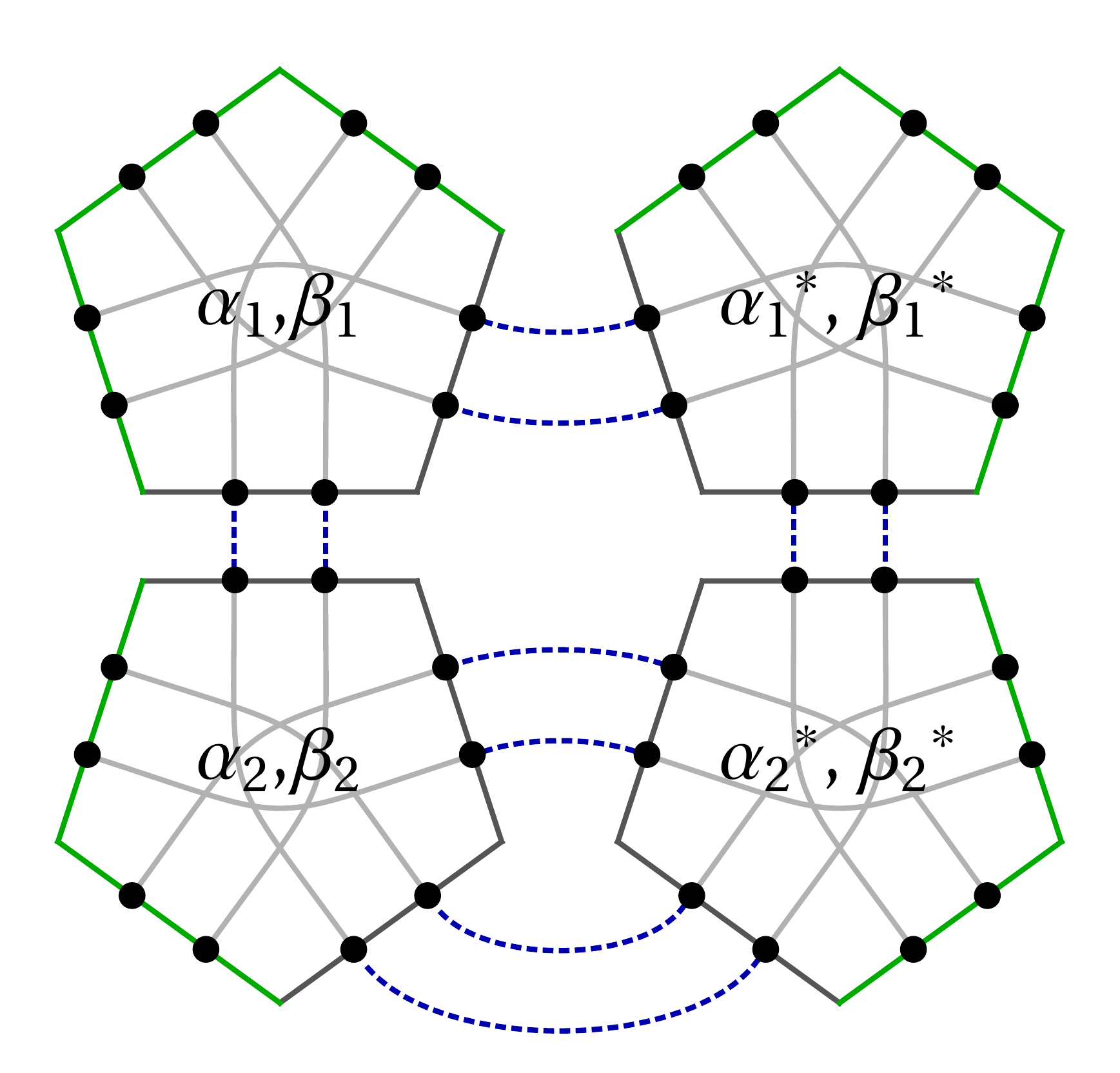}
\end{gathered}
\end{align}
Here, we have rotated the configuration \eqref{EQ_HAPPY_TWO_SUPERPOS} for easier visualization; as before, the adjoint part of $\rho_{B}^\prime$ is on the right. Even in this configuration, we can construct a set of eigenvectors by projecting a complete dimer basis onto $B^{\text{C}}$:
\begin{align}
|\psi_{\textcolor{darkgreen}{B}}^{0,0,0}\rangle \;&=\;
\begin{gathered}
\includegraphics[height=0.192\textheight]{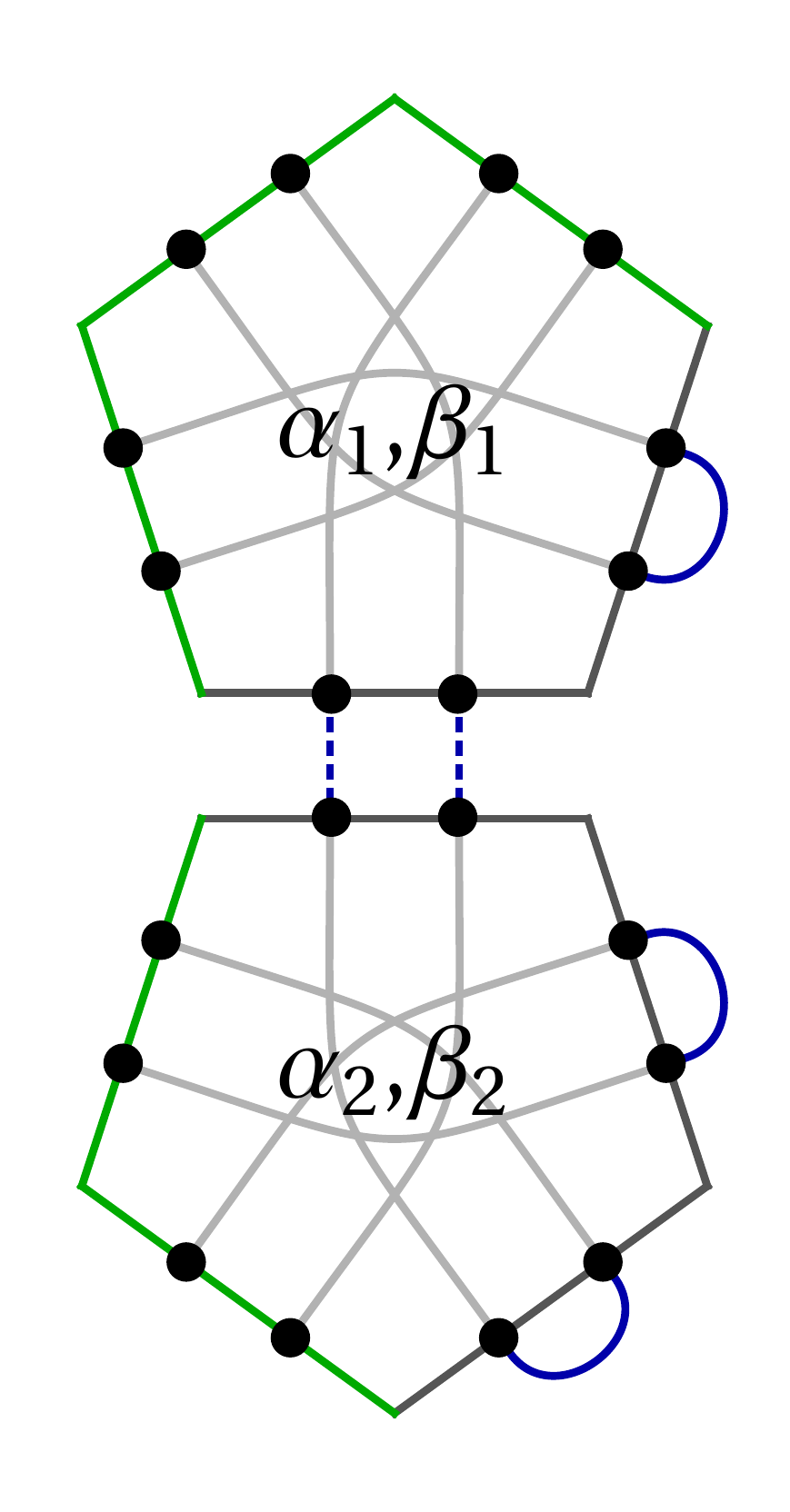}
\end{gathered}&
|\psi_{\textcolor{darkgreen}{B}}^{0,0,1}\rangle \;&=\;
\begin{gathered}
\includegraphics[height=0.192\textheight]{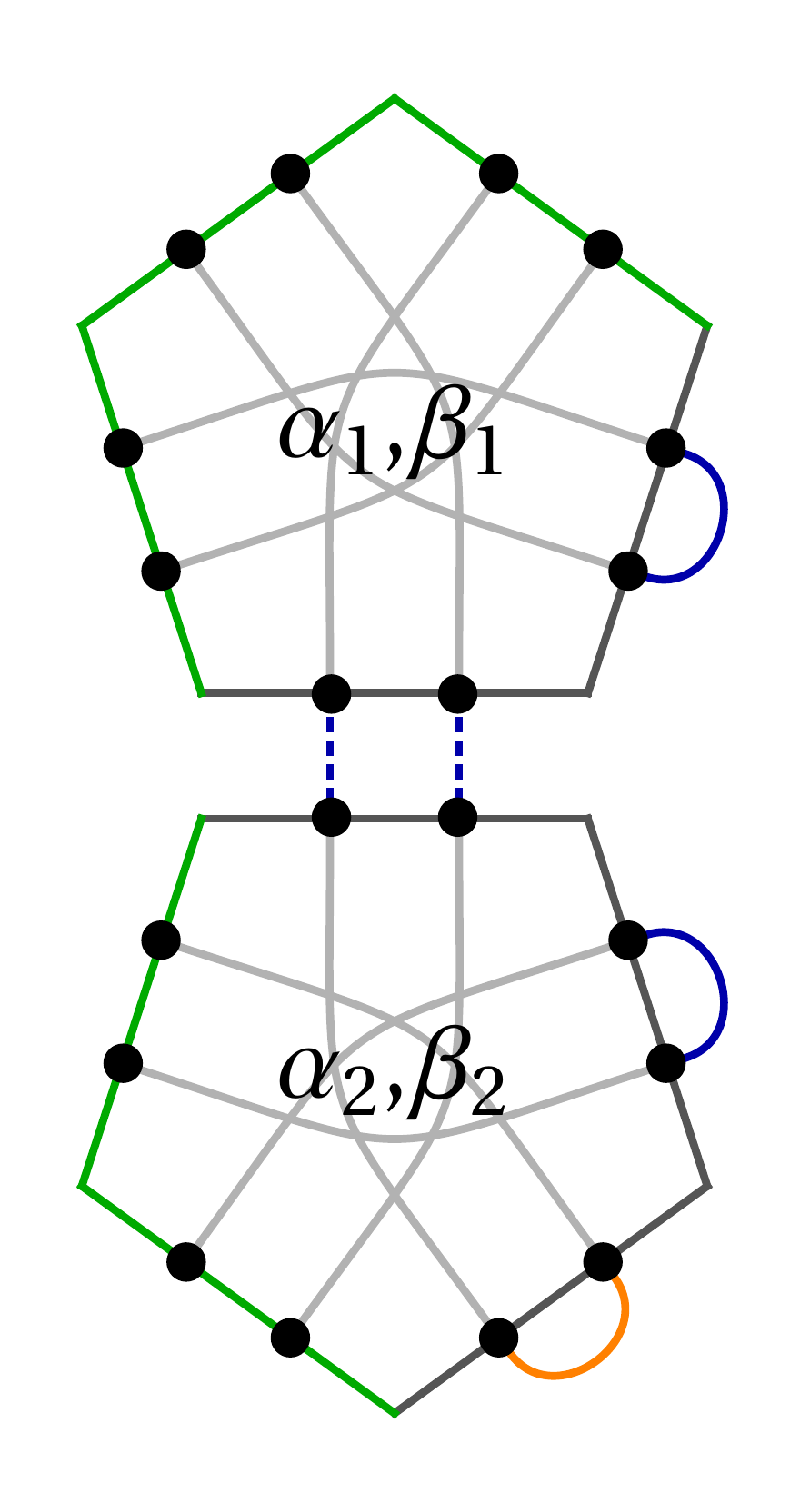}
\end{gathered}&
|\psi_{\textcolor{darkgreen}{B}}^{0,1,0}\rangle \;&=\;
\begin{gathered}
\includegraphics[height=0.192\textheight]{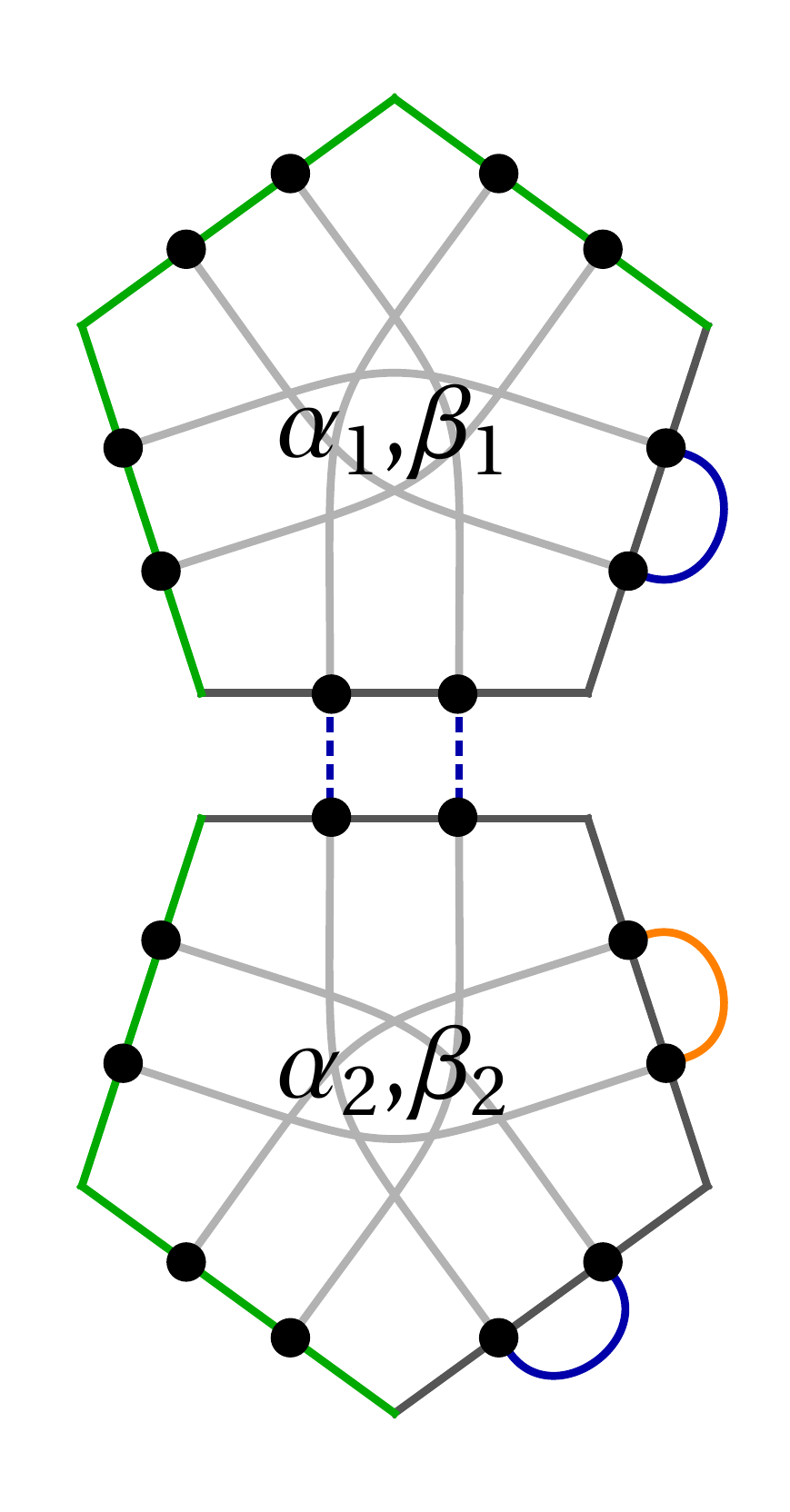}
\end{gathered}&
&\dots
\end{align}
Explicitly, the eigenvalue equation for $|\psi_{\textcolor{darkgreen}{B}}^{0,0,0}\rangle$ is given by
\begin{align}
\label{EQ_HAPPY_RHO_A_RED2}
\rho_{\textcolor{darkgreen}{B}}\, |\psi_{\textcolor{darkgreen}{B}}^{0,0,0}\rangle \; &= \; 2\;
\begin{gathered}
\includegraphics[height=0.191\textheight]{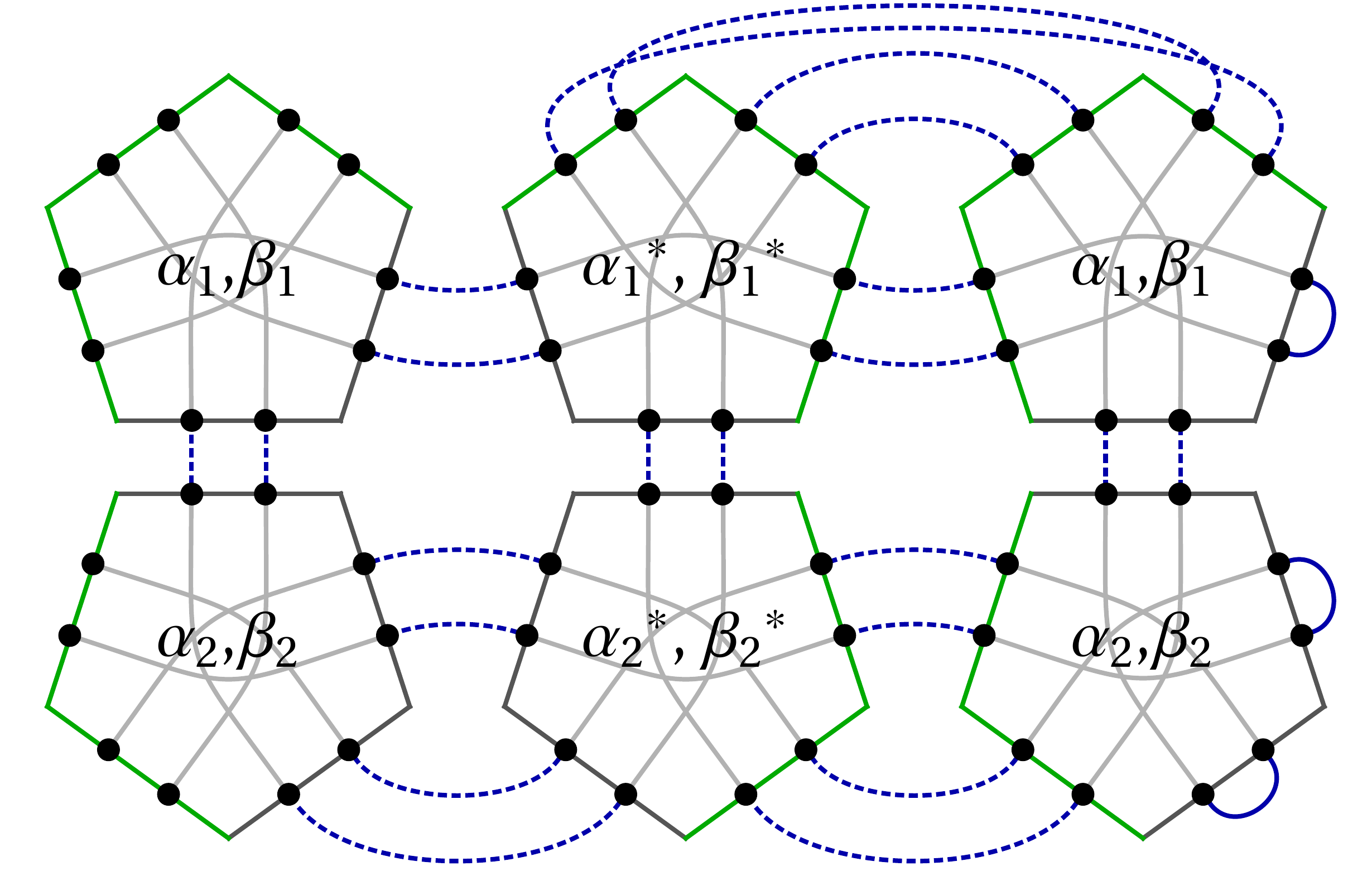}
\end{gathered} \nonumber\\
&=\; \frac{1}{2}
\begin{gathered}
\includegraphics[height=0.191\textheight]{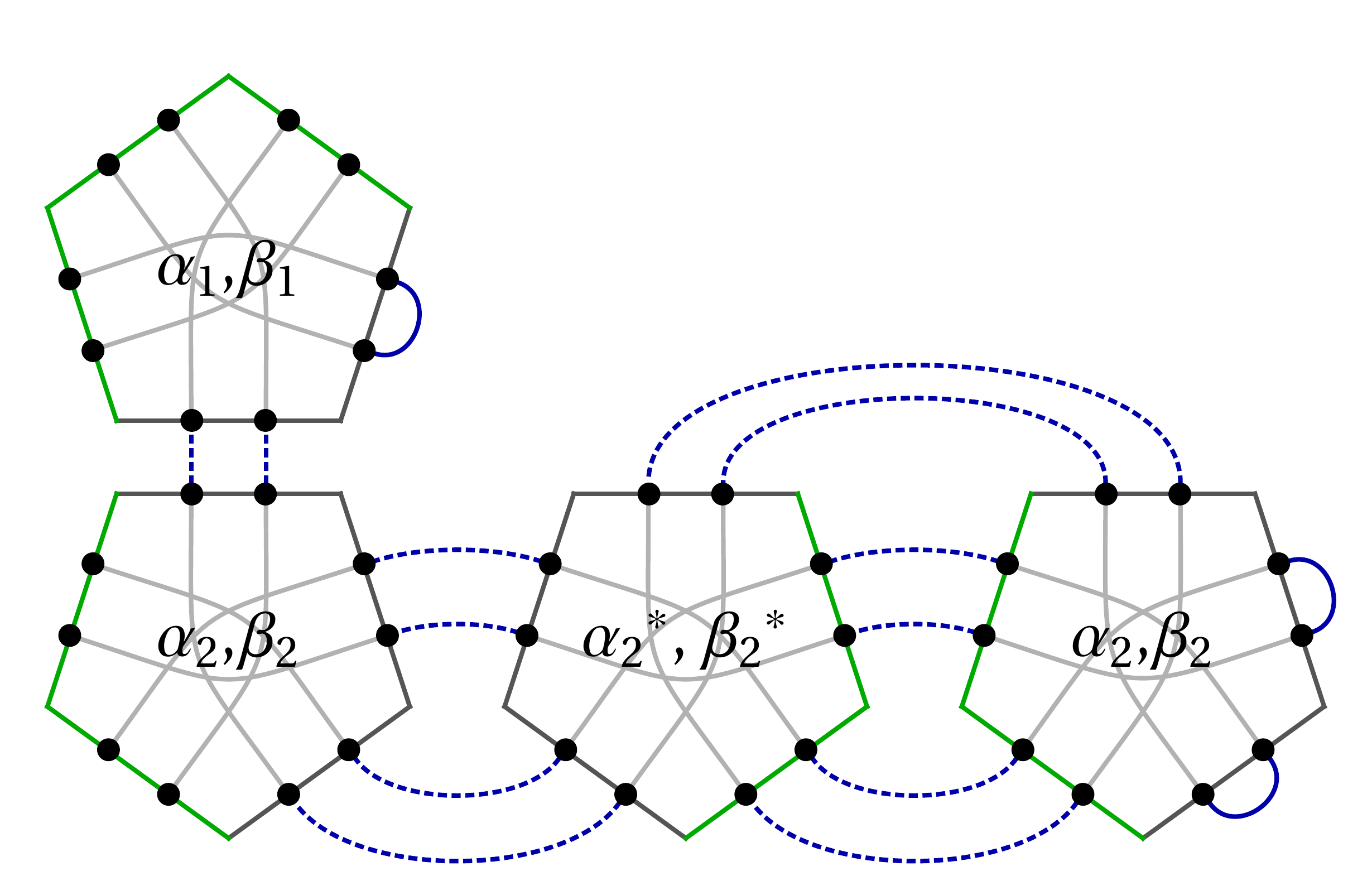}
\end{gathered}
\;=\; \frac{1}{8}
\begin{gathered}
\includegraphics[height=0.191\textheight]{pentagon_net_rhoA_7_000.pdf}
\end{gathered}
\end{align}
Again, after repeating this procedure for all eight eigenstates, we find that the entanglement entropy corresponds to the result for a fixed logical input, $S_B=3\log 2$. An important condition for computing reduced density matrix eigenstates in this way is that when projecting a complete basis of eigenvectors onto $\ket\psi$, the resulting states must be orthogonal. This is always the case when no dimers connect sites within region $A^{\text{C}}$. If they do, we can still simplify the reduced density matrix to an effective density matrix of a reduced state, as in the following example for a region $C$:
\begin{align}
\label{EQ_HAPPY_RHO_A_RED3}
\rho_{\textcolor{darkgreen}{C}} \; = \; 2\;
&\begin{gathered}
\includegraphics[height=0.2\textheight]{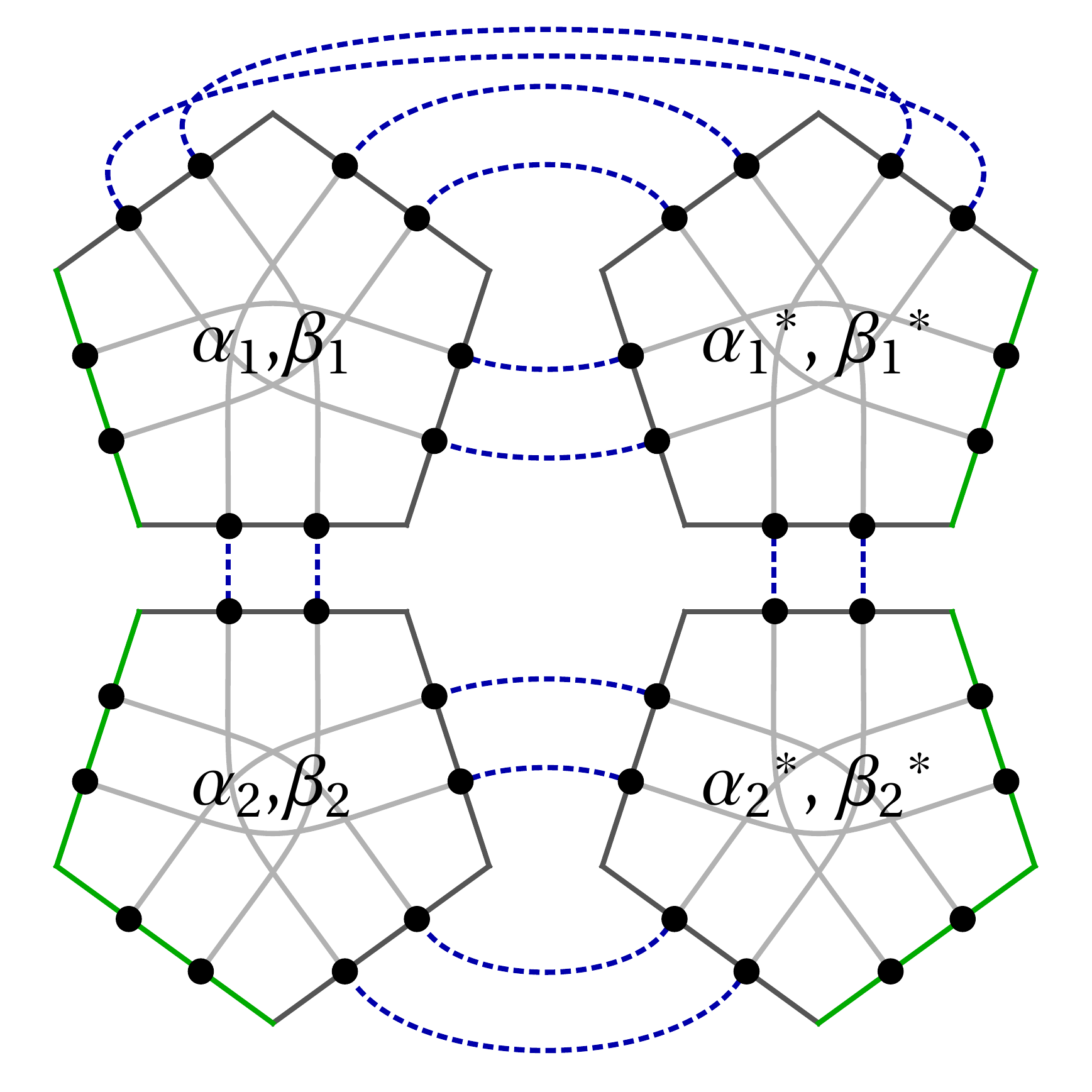}
\end{gathered}
\;=\; \frac{1}{\sqrt{2}}
\begin{gathered}
\includegraphics[height=0.2\textheight]{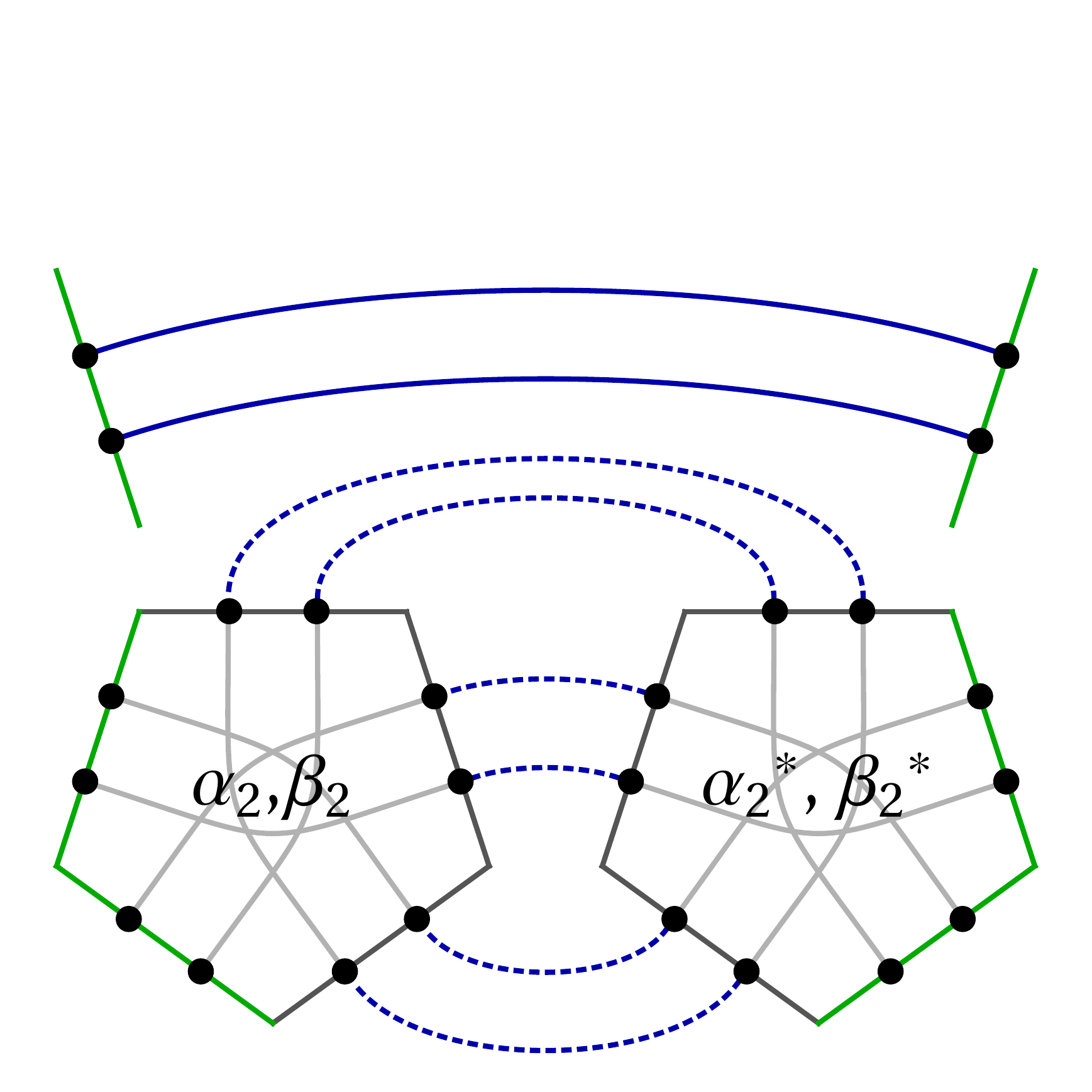}
\end{gathered} \\
=\; \frac{1}{2\sqrt{2}}
&\begin{gathered}
\includegraphics[height=0.2\textheight]{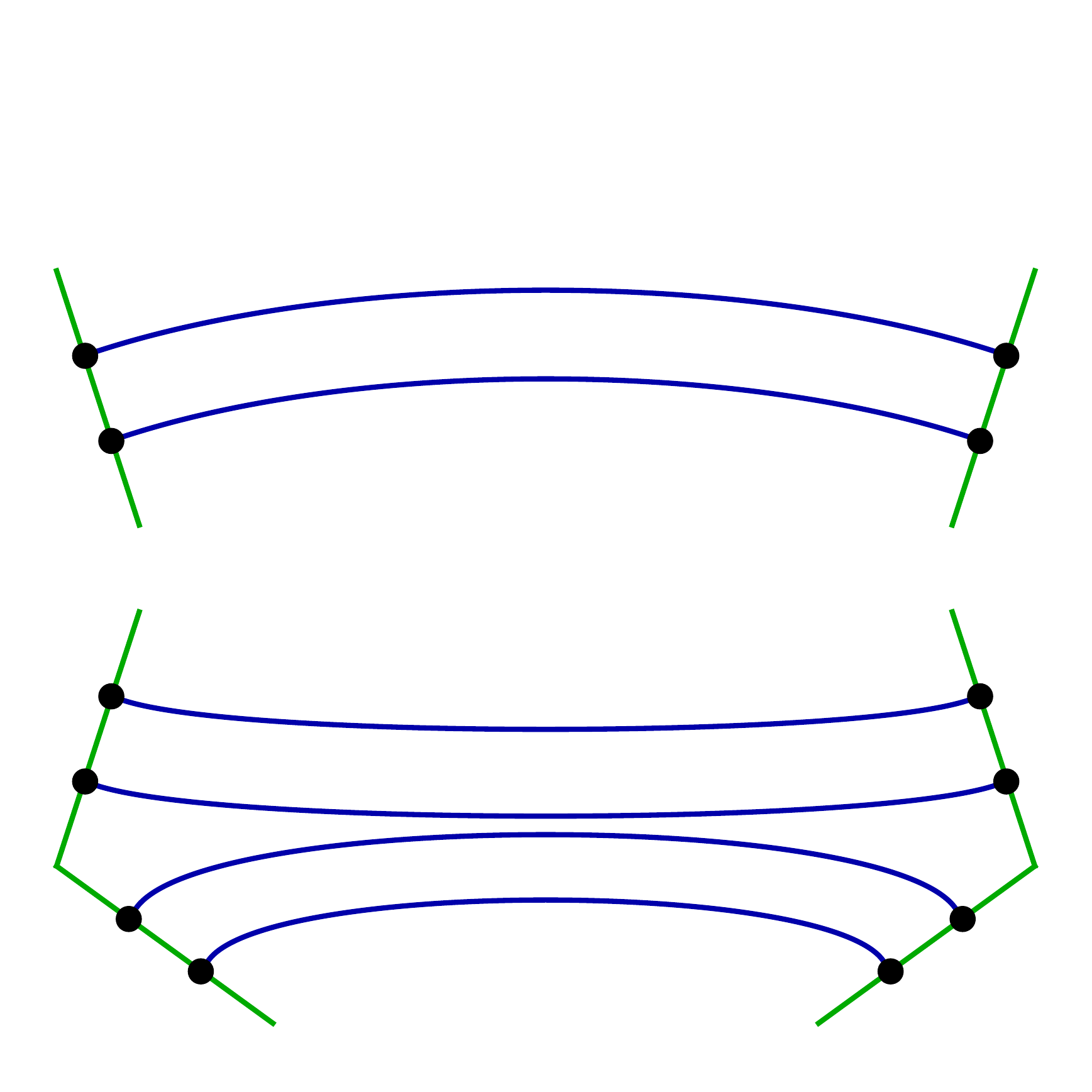}
\end{gathered} 
\;=\; \frac{\id_8}{8}
\end{align}
Instead of $2^{|C^{\text{C}}|}=32$ eigenstates, as in the previous example, we now 
only find $2^{|\gamma_C|}=8$, where $\gamma_C$ is the complement region of $C$ after simplifying $\rho_C$ (with $\partial \gamma_C = \partial C$; here, $\gamma_C = C$). This is because 
a basis set contracted onto $C^{\text{C}}$ of the original state does not lead to fully orthogonal states, 
for example:
\begin{align}
\begin{gathered}
\includegraphics[height=0.2\textheight]{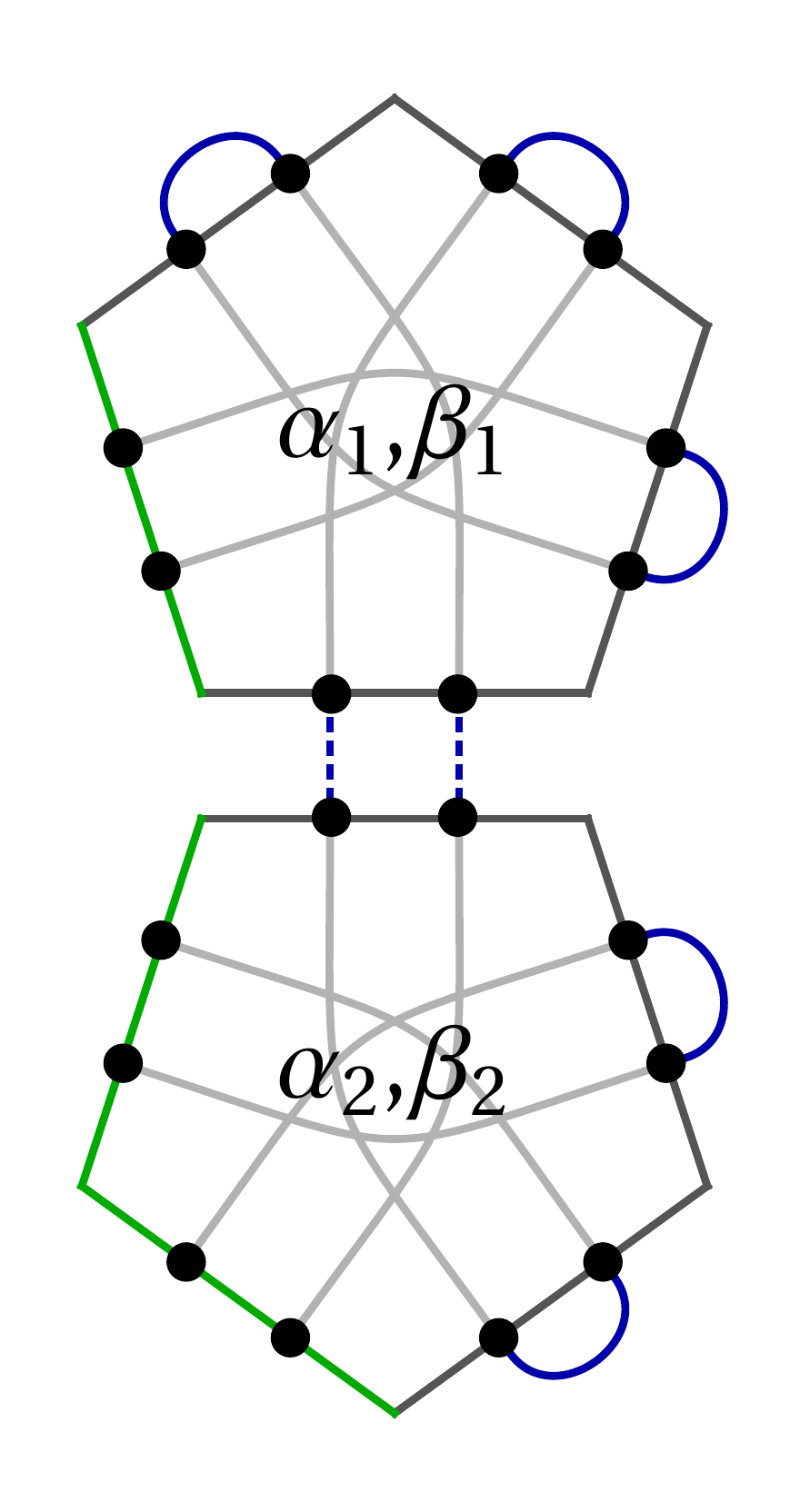}
\end{gathered}
\;&=\;
\begin{gathered}
\includegraphics[height=0.2\textheight]{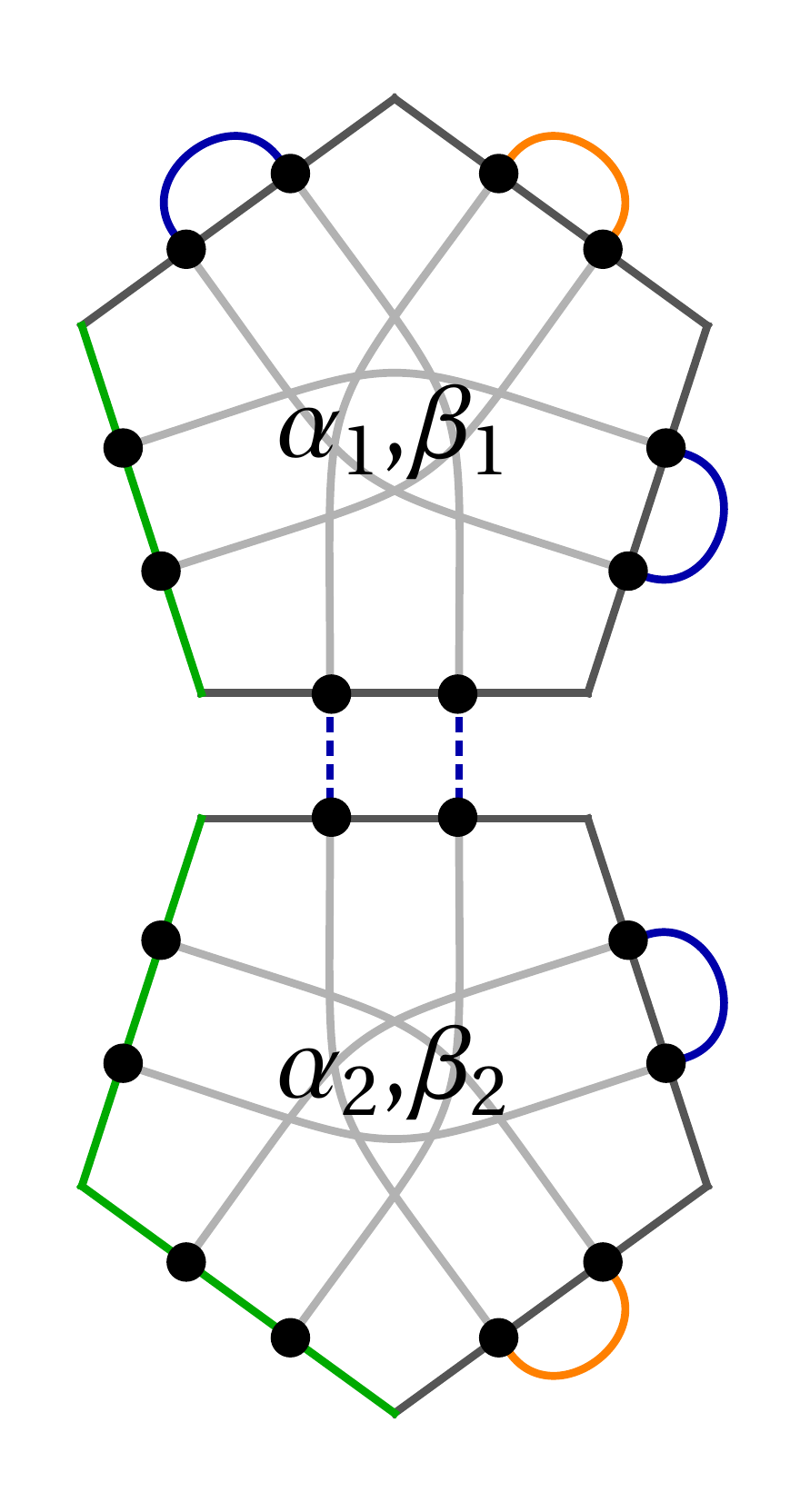}
\end{gathered} &
\begin{gathered}
\includegraphics[height=0.2\textheight]{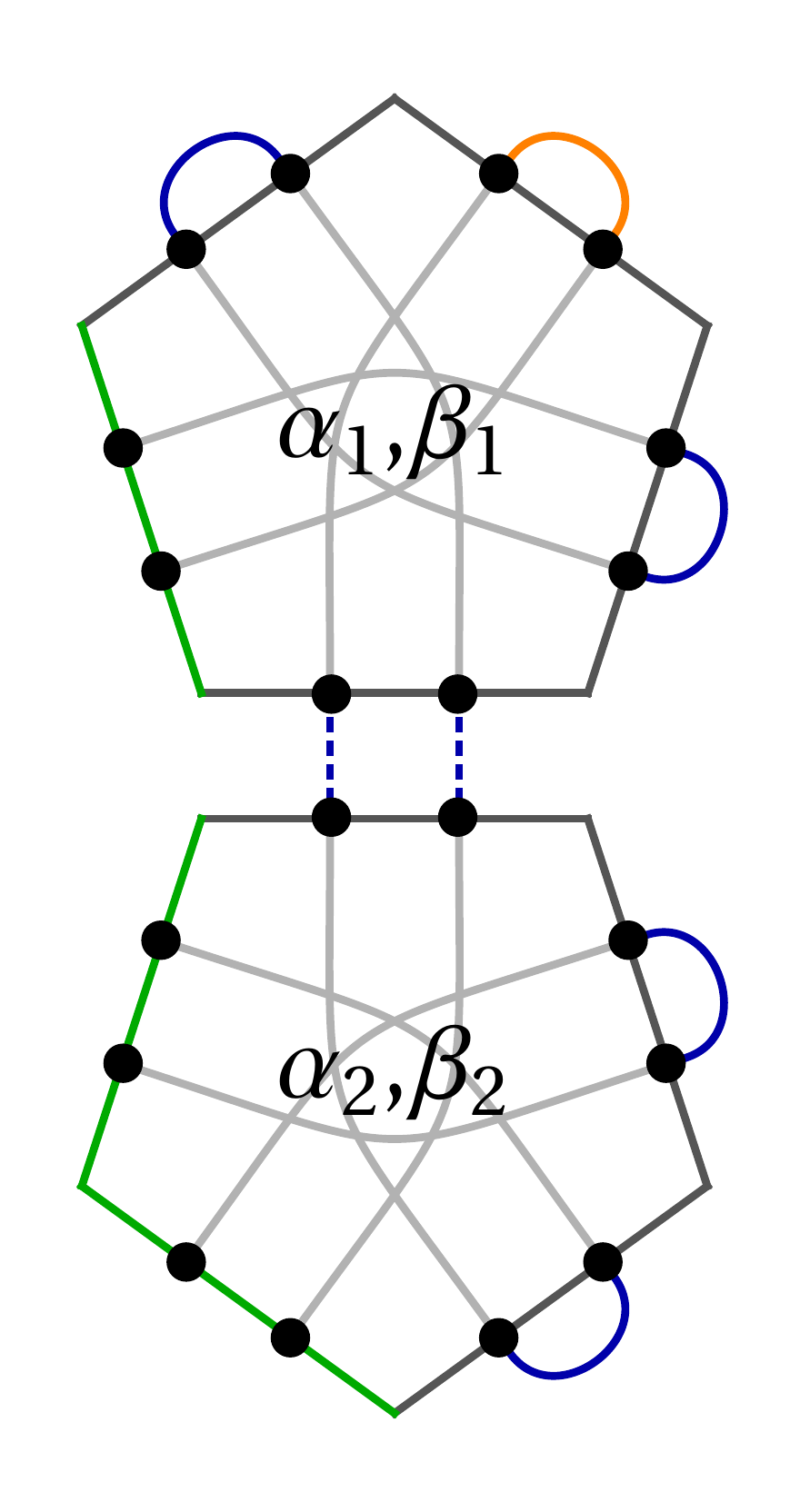}
\end{gathered}
\;&=\;
\begin{gathered}
\includegraphics[height=0.2\textheight]{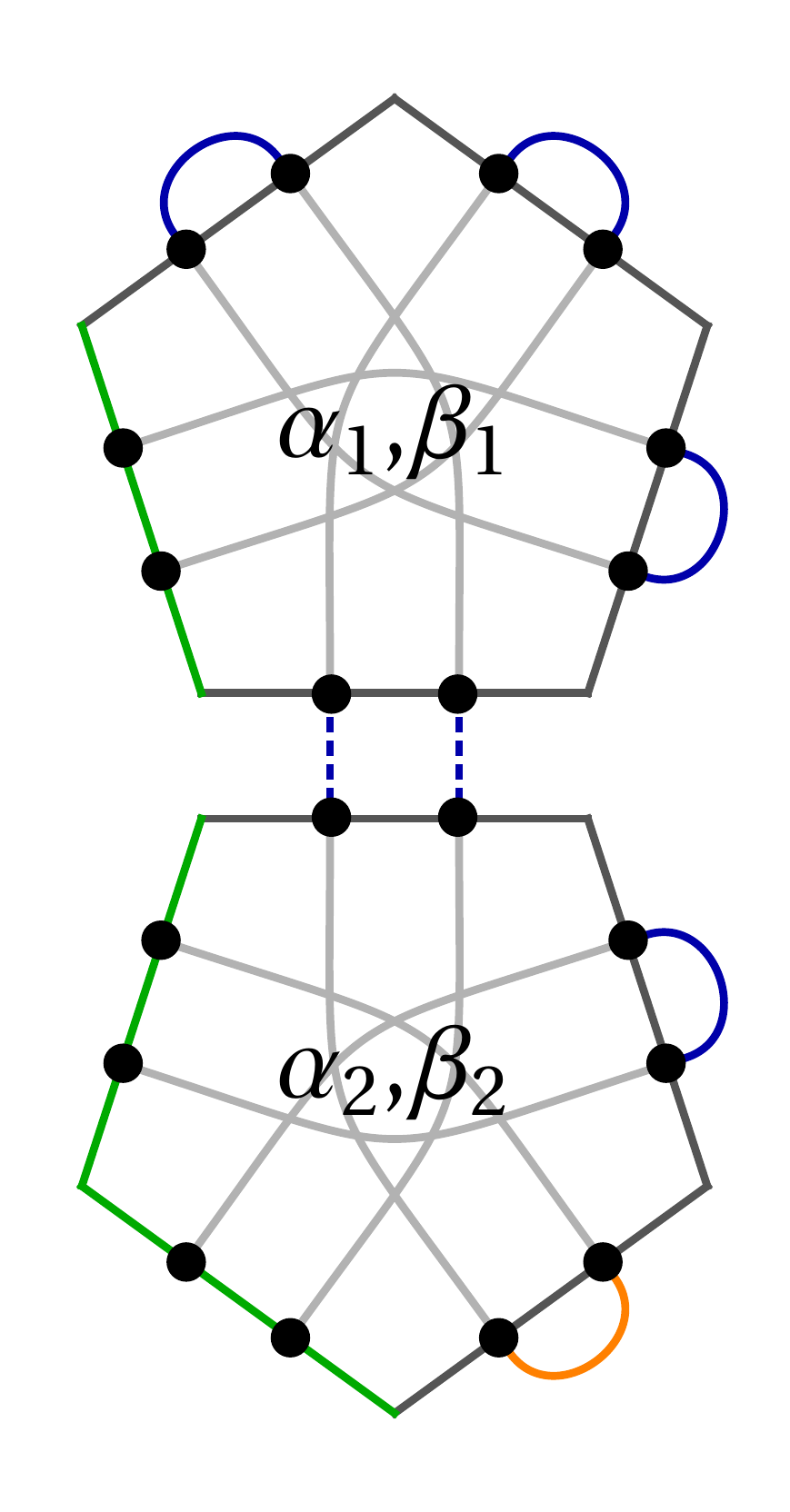}
\end{gathered} 
\end{align}
Thus we conclude that if by applying \eqref{EQ_HAPPY_SP_3C} and \eqref{EQ_HAPPY_SP_4C} a reduced density matrix $\rho_A$ can be simplified so that no dimers connect sites within $\gamma_A$, then there are $2^{|\gamma_A|}$ eigenstates with equal eigenvalues and an entanglement entropy $S_A = |\gamma_A| \log 2$.
When such a simplification is not possible, the entanglement entropy can depend on the bulk input. If we extend the region $C \to D$ onto half of the two-pentagon system, we cannot apply \eqref{EQ_HAPPY_SP_3C} and \eqref{EQ_HAPPY_SP_4C}:
\begin{align}
\rho_{\textcolor{darkgreen}{D}} \; = \; 2\;
&\begin{gathered}
\includegraphics[height=0.2\textheight]{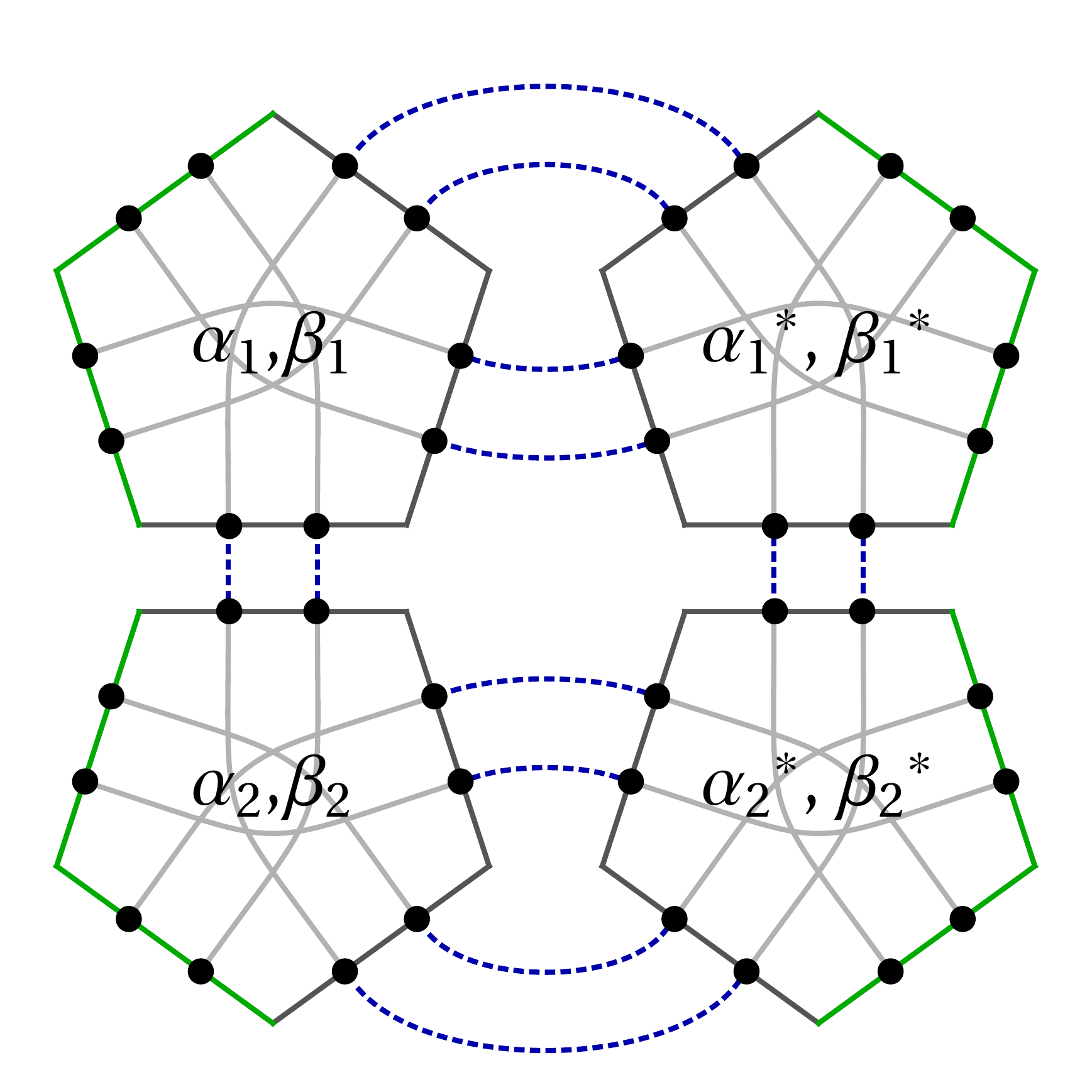}
\end{gathered}
\end{align}
Because of a dimer connecting Majorana modes \emph{within} $D$, an eigenbasis projected onto its edges becomes mixed. Indeed, the reduced density matrix $\rho_{D}$ separates into a sum of parity-even and parity-odd terms, as cross-terms between both vanish:
\begin{align}
\begin{gathered}
\includegraphics[height=0.2\textheight]{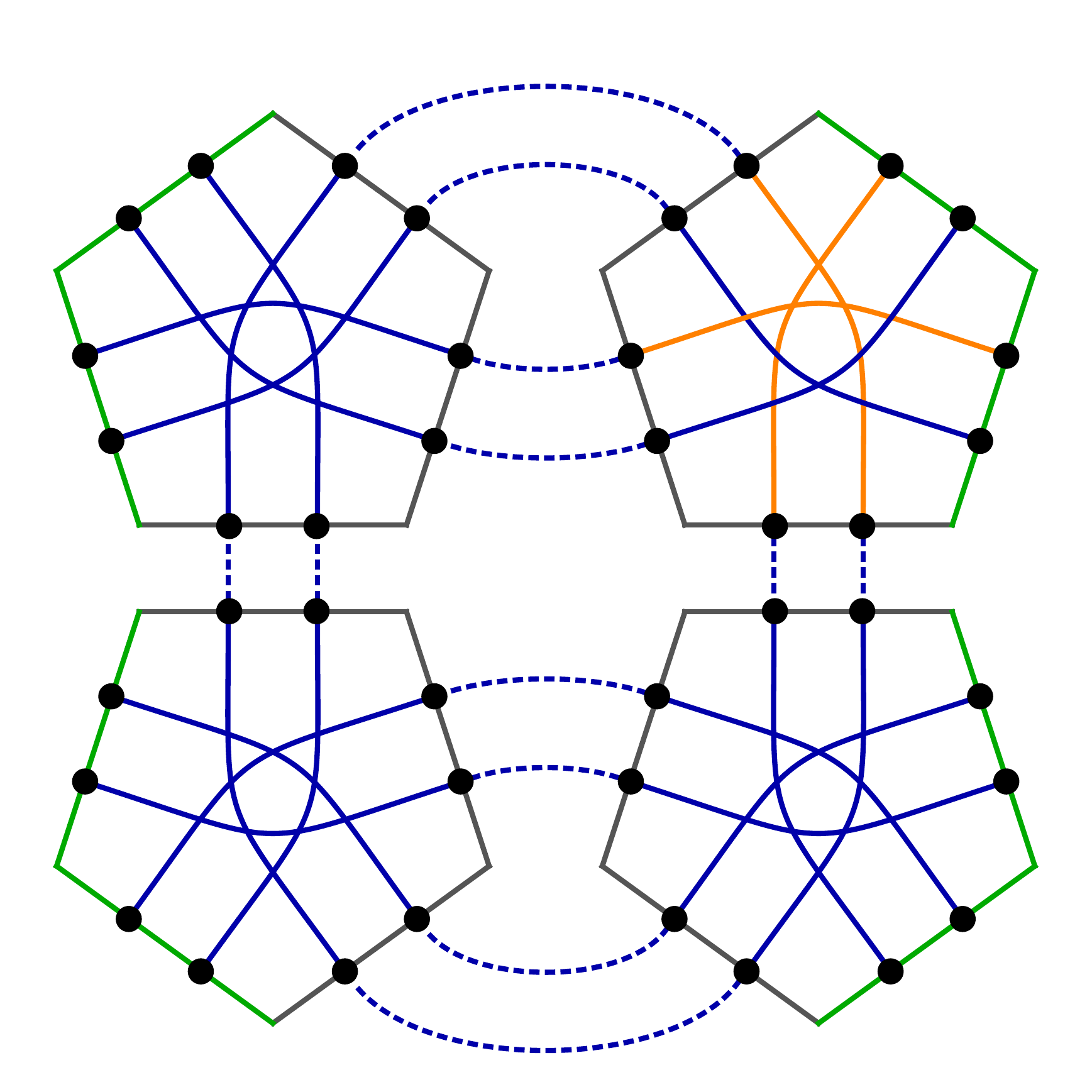}
\end{gathered}
\;=\;
&\begin{gathered}
\includegraphics[height=0.2\textheight]{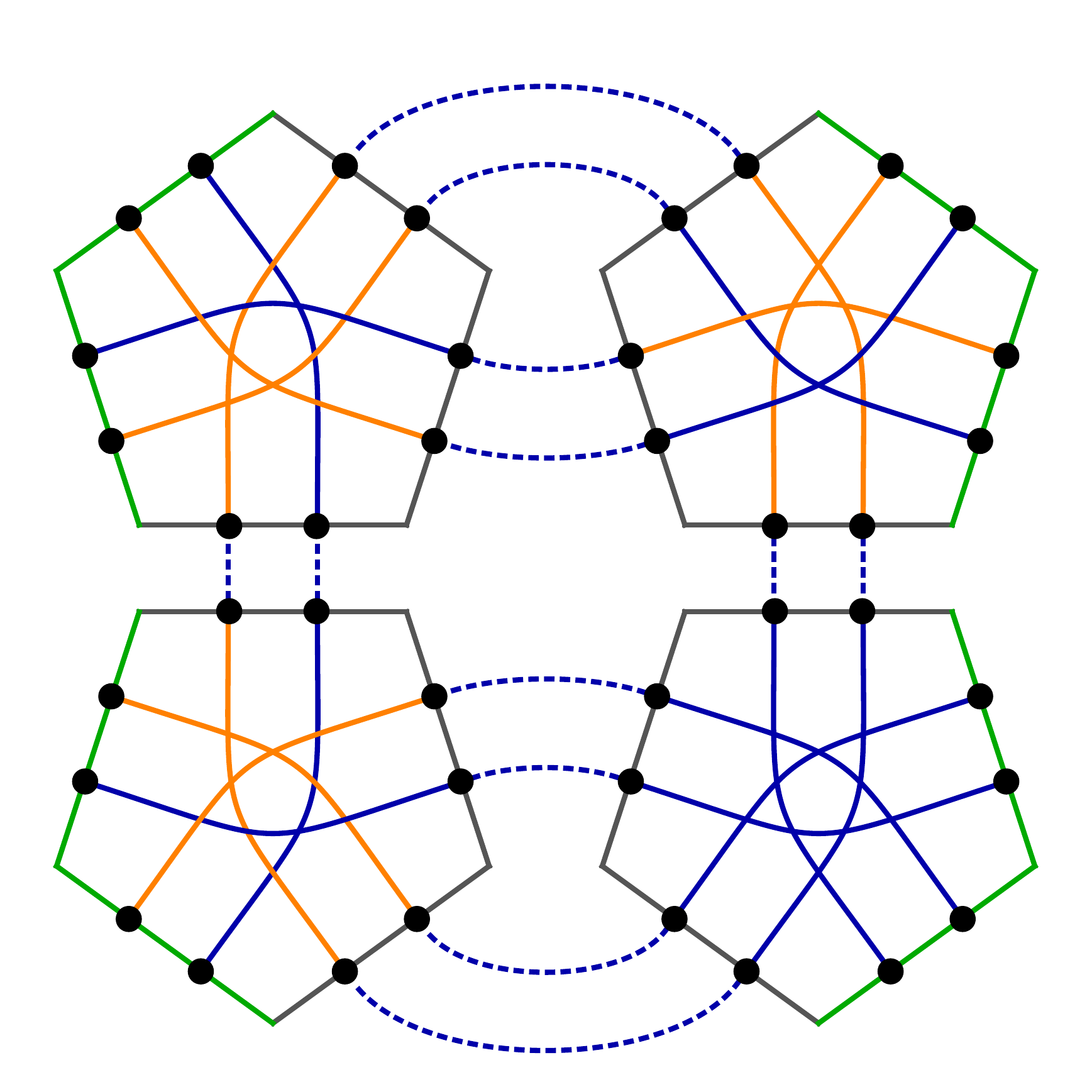}
\end{gathered}
\;=\; 0\ .
\end{align}
We can thus write
\begin{align}
\rho_D =\, &\tr_{D^\text{C}} \left( \alpha_1 \alpha_2 \ket{\bar{0},\bar{0}}+\beta_1 \beta_2 \ket{\bar{1},\bar{1}})(\alpha_1^\star \alpha_2^\star \bra{\bar{0},\bar{0}}+\beta_1^\star \beta_2^\star \bra{\bar{1},\bar{1}} \right) \nonumber\\
+ &\tr_{D^\text{C}} \left( \alpha_1 \beta_2 \ket{\bar{0},\bar{1}}+\beta_1 \alpha_2 \ket{\bar{1},\bar{0}})(\alpha_1^\star \beta_2^\star \bra{\bar{0},\bar{1}}+\beta_1^\star \alpha_2^\star \bra{\bar{1},\bar{0}} \right) \nonumber\\
\equiv\, & \tr_{D^\text{C}} |\psi^+\rangle \langle\psi^+| +  \tr_{D^\text{C}} |\psi^-\rangle \langle\psi^-| \text{ .}
\end{align}
We have defined as $|\psi^\pm\rangle$ the parity-even and parity-odd part of the total state vector $\ket\psi$. For each of the two states, we can still apply our previous approach of finding the eigenbasis by projecting a complete dimer basis on the state itself, yielding $S_D(\psi^+) = S_D(\psi^-) = 3 \log 2$, as 3 dimers connect $D$ and $D^C$. 
Following the rules for the entanglement of superpositions for biorthogonal states \cite{PhysRevLett.97.100502}, we can now compute the entanglement entropy of the full state as
\begin{align}
S_D &= \langle \psi^+ |\psi^+\rangle S_D(\psi^+) + \langle \psi^- |\psi^-\rangle S_D(\psi^-) + h_2(\langle \psi^+ |\psi^+\rangle) \nonumber\\
&= 3 \log 2 + h_2(|\alpha_1\alpha_2|^2 + |\beta_1\beta_2|^2) \leq 4 \log 2 \text{ ,}
\end{align}
where we have used the binary entropy function $x\mapsto h_2(x) := - x \log x - (1-x) \log(1-x)$. We are thus in a position to compute the entanglement entropy even for complicated superpositions of dimer states.

Assuming a boundary region $A$ that can be simplified using \eqref{EQ_HAPPY_SP_3C} and \eqref{EQ_HAPPY_SP_4C}, however, we can easily compute the entanglement entropy of the full HyPeC \emph{independent of the bulk input}. For this we follow the steps laid out in \eqref{EQ_HAPPY_RHO_A_RED1}, \eqref{EQ_HAPPY_RHO_A_RED2} and \eqref{EQ_HAPPY_RHO_A_RED3} for the construction of reduced density matrices and their eigenstates. Using our previous notation for superpositions, an example is given by 
\begin{align}
\ket\psi \; = \;2^{N_C/2}\;
&\begin{gathered}
\includegraphics[height=0.267\textheight]{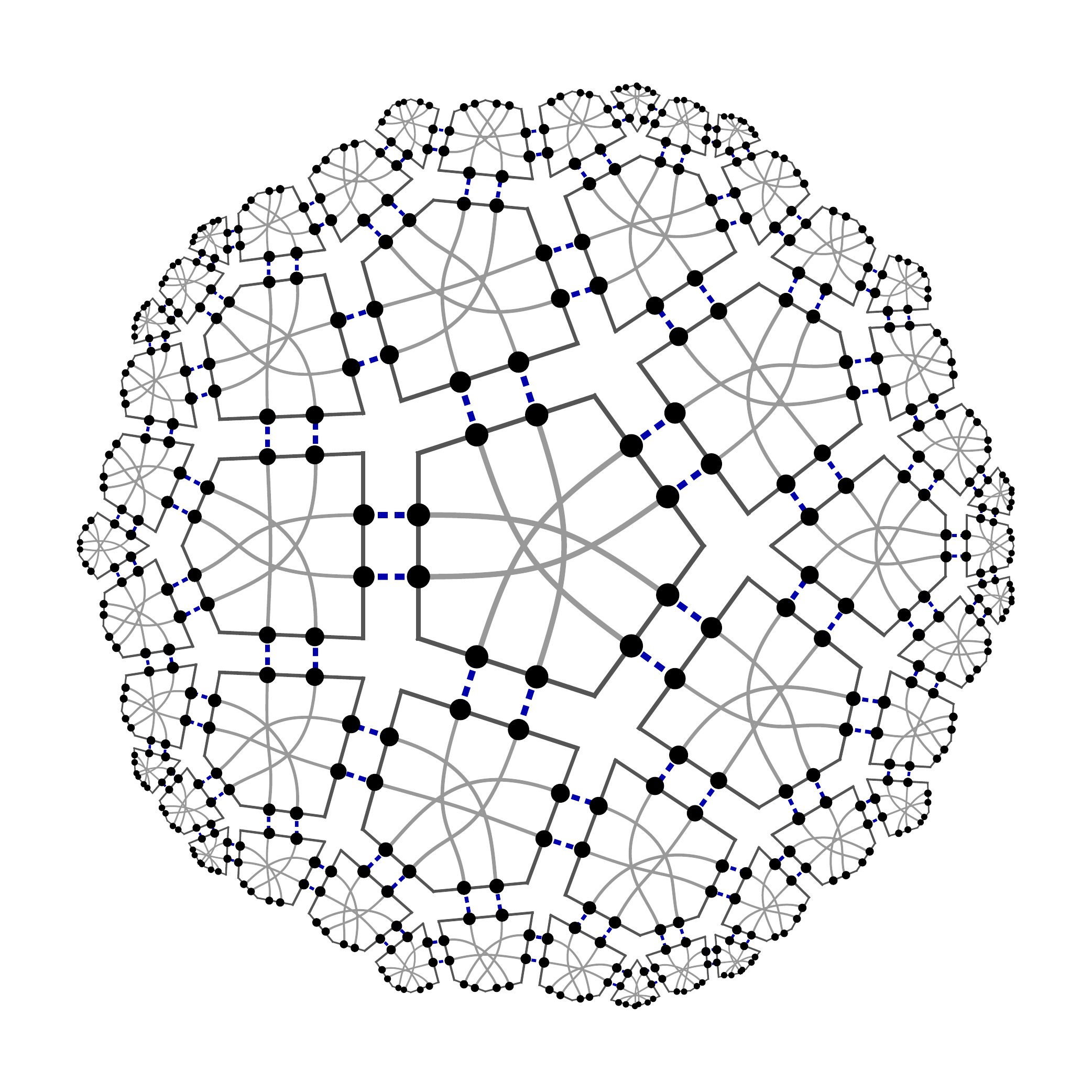}
\end{gathered}
\end{align}
The normalization depends on the number $N_C$ of internal contractions.
We omitted the superposition labels $\alpha,\beta$ for clarity, but still assume only local superpositions within each tile. Given a boundary region $A$, we first simplify the reduced density matrix $\rho_A$ as in \eqref{EQ_HAPPY_RHO_A_RED3}, being left with a wedge $\mathcal{W}$ bounded by minimal cut (or ``bulk geodesic'') $\gamma_A$:
\begin{align}
\rho_{\textcolor{darkgreen}{A}} \; = \; 2^{N_C} \;
&\begin{gathered}
\includegraphics[height=0.376\textheight]{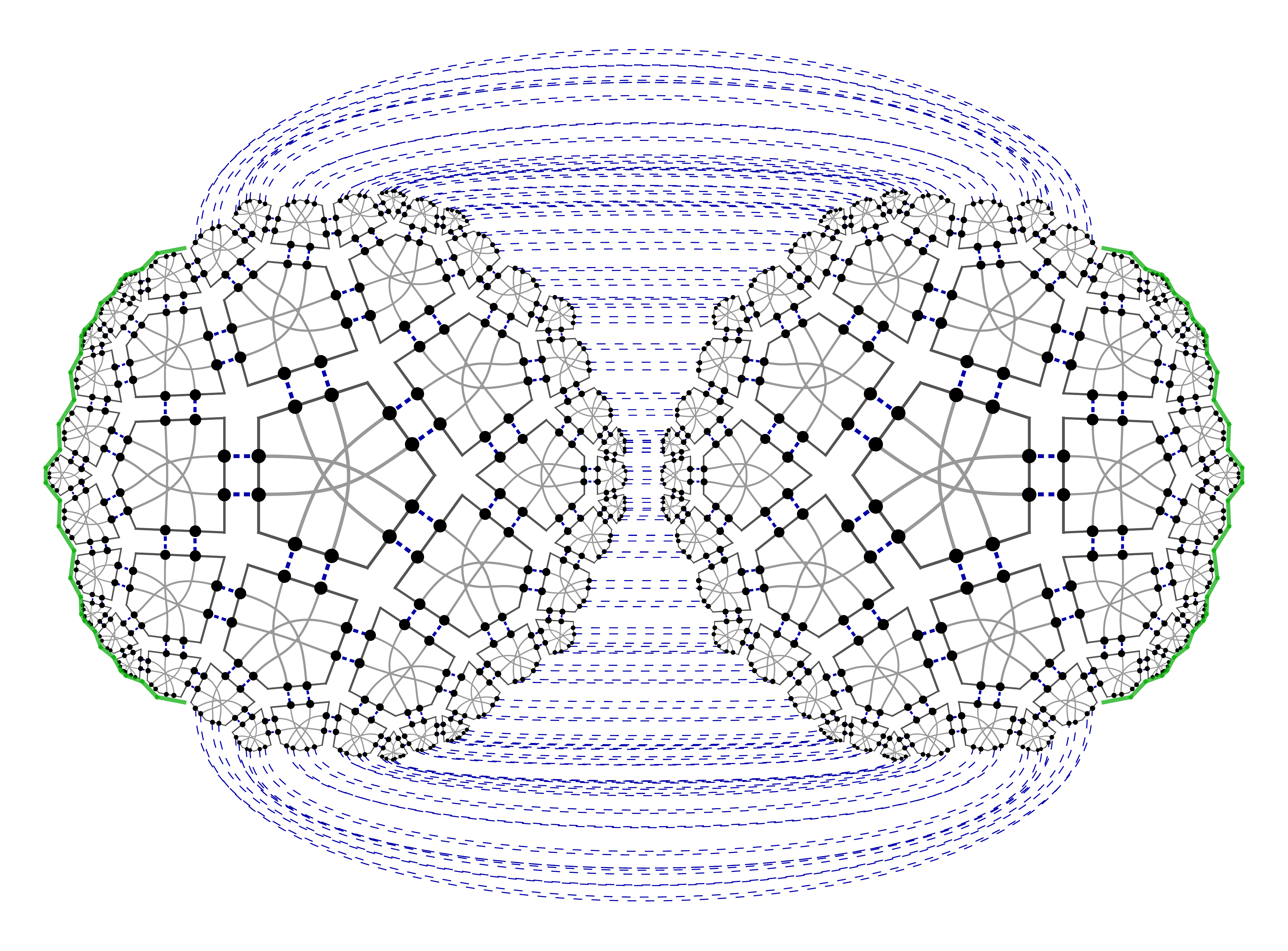}
\end{gathered} \nonumber\\
=\; 2^{N_{C,\mathcal{W}}}\;
&\begin{gathered}
\includegraphics[height=0.218\textheight]{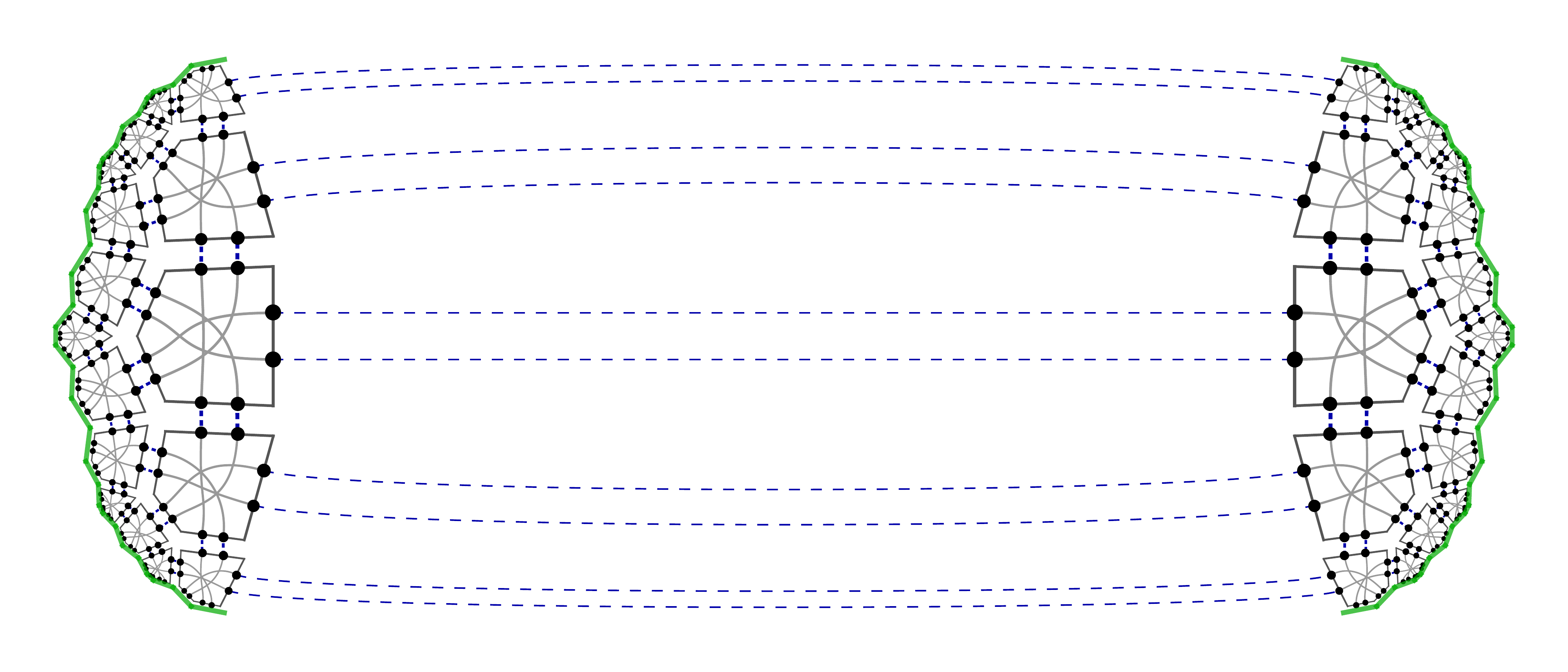}
\end{gathered}
\end{align}
Here $N_{C,\mathcal{W}}$ is the number of (still unresolved) contractions in the wedge $\mathcal{W}$. The eigenstate basis of $\rho_A$ can again be constructed by projecting a complete basis of dimer states onto the edges of $\gamma_A$, leading to a $2^{|\gamma_A|}$-dimensional space of states, where $|\gamma_A|$ is the number of edges along the cut. For illustration, we consider the eigenstate with all-even projections:
\begin{align}
|\psi_{\textcolor{darkgreen}{A}}^{0,0,0,0,0}\rangle \;&=\;
\begin{gathered}
\includegraphics[height=0.218\textheight]{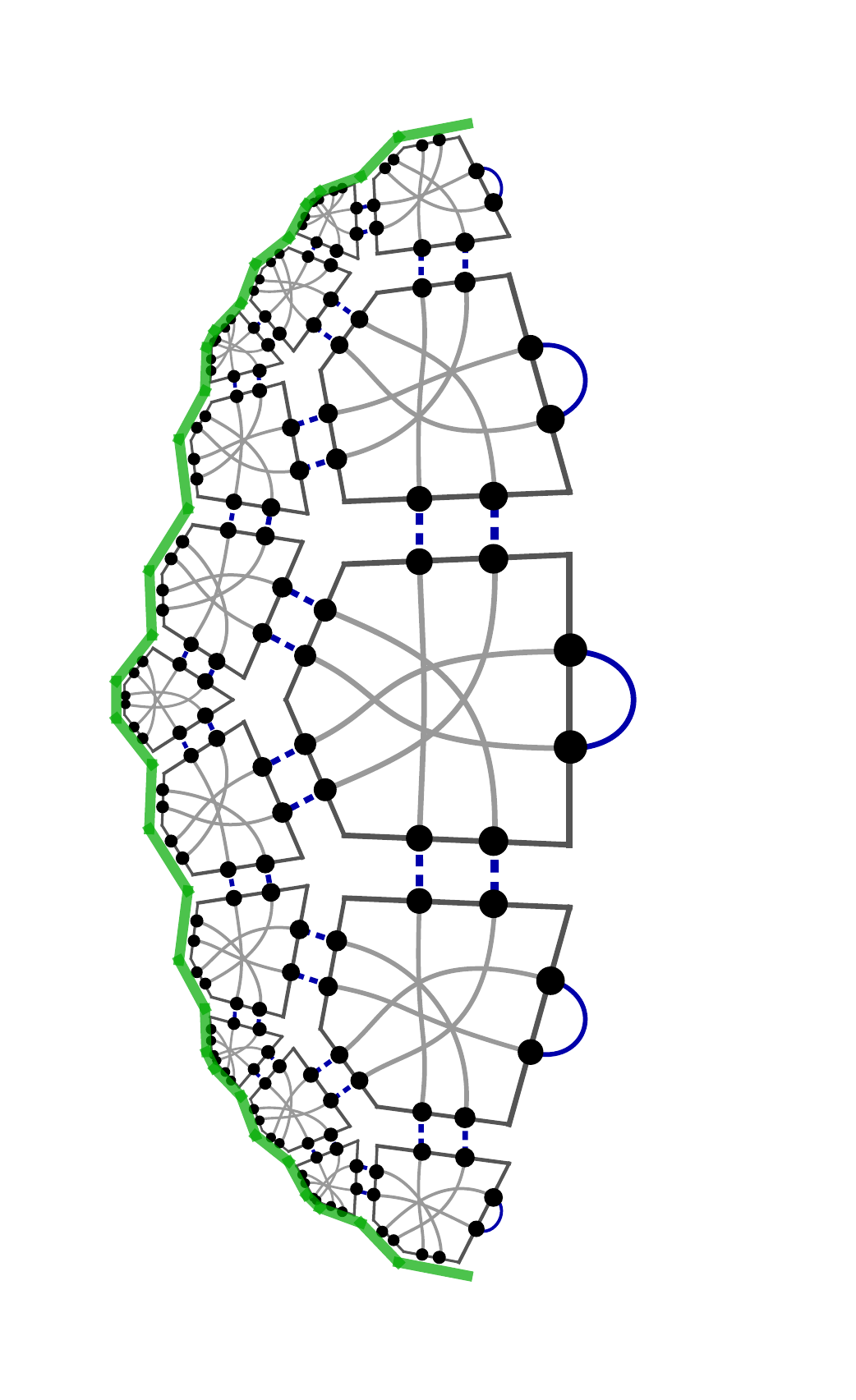}
\end{gathered}
\end{align}
The eigenvalue equation can be evaluated with the same techniques that we have used for reducing \eqref{EQ_HAPPY_RHO_A_RED1} and \eqref{EQ_HAPPY_RHO_A_RED2},
\begin{align}
\rho_{\textcolor{darkgreen}{A}}\, |\psi_{\textcolor{darkgreen}{A}}^{0,0,0,0,0}\rangle \; &= \;2^{N_{C,\mathcal{W}}}\; 
\begin{gathered}
\includegraphics[height=0.218\textheight]{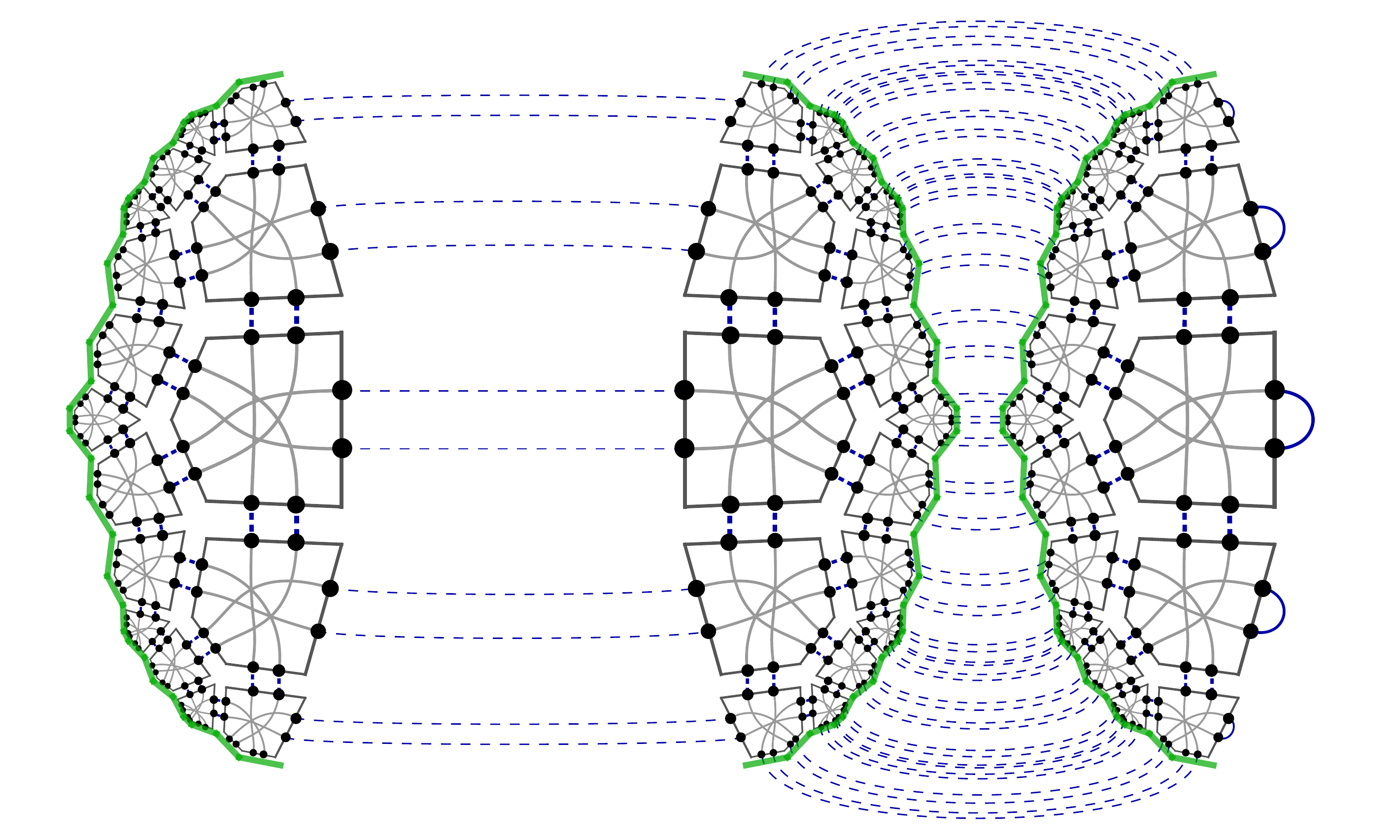}
\end{gathered} \nonumber\\
&=\; \frac{1}{2^{|\gamma_A|}}
\begin{gathered}
\includegraphics[height=0.218\textheight]{pentagon_net_rhoA_happy_full_00000.pdf}
\end{gathered}
\end{align}
Thus we find the same entanglement entropy as for the case of fixed logical input states, $S_A= |\gamma_A| \log 2$ (with $|\gamma_A|=5$). Our procedure is equivalent to the \textsl{greedy algorithm} \cite{Pastawski2015}, which in dimer language is manifested through the reduction steps \eqref{EQ_HAPPY_SP_4C} and \eqref{EQ_HAPPY_SP_3C}. As for the greedy algorithm, our approach only applies when both $A$ and its complement $A^{\text{C}}$ are reduced to the same $\gamma_A$ after simplification. In that case, we can draw the following conclusion about the reduced density matrix:
\begin{align}
\label{EQ_RHO_A_REDUC}
\rho_{\textcolor{darkgreen}{A}}^2 \; = \;2^{2N_{C,\mathcal{W}}}\;
&\begin{gathered}
\includegraphics[height=0.22\textheight]{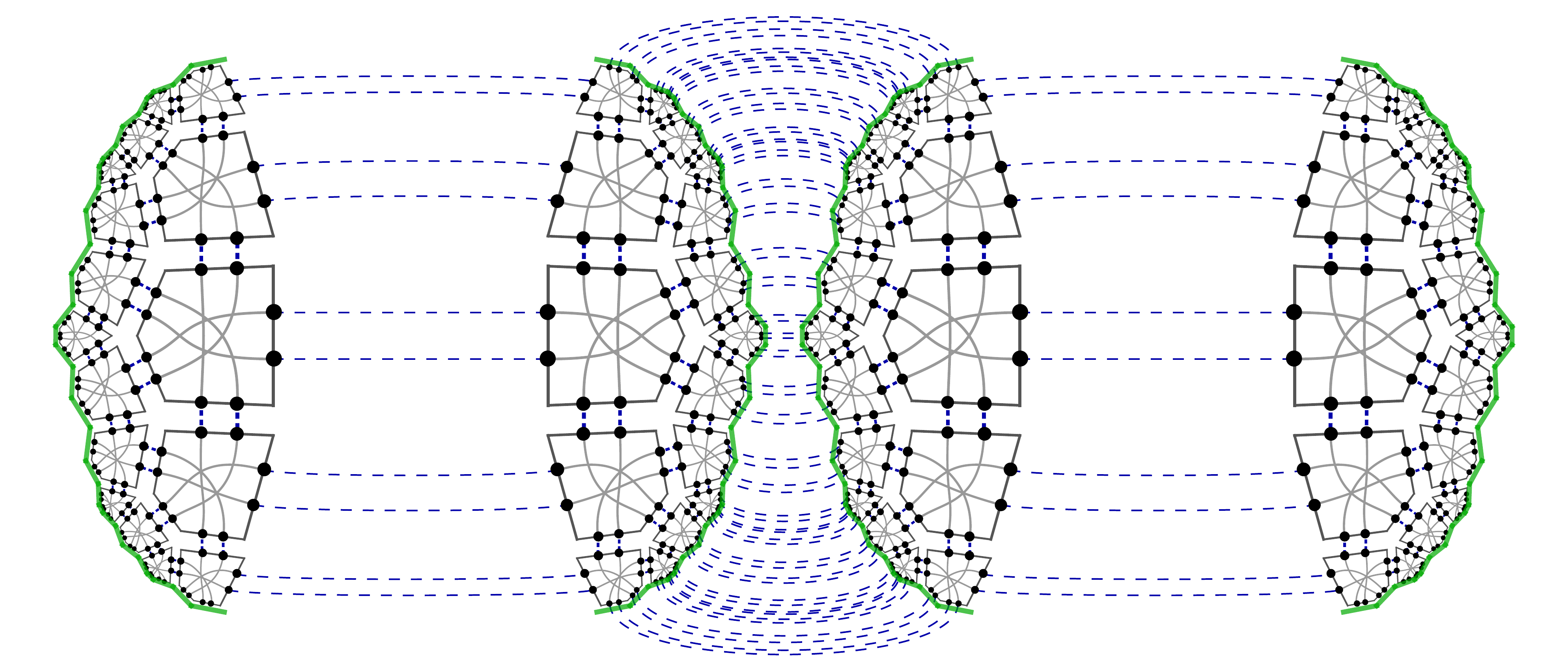}
\end{gathered}
\nonumber\\
=\;\frac{2^{N_{C,\mathcal{W}}}}{2^{|\gamma_{\textcolor{darkgreen}{A}}|}}
&\begin{gathered}
\includegraphics[height=0.22\textheight]{pentagon_net_rhoA_happy_full3.pdf}
\end{gathered}
= \frac{\rho_{\textcolor{darkgreen}{A}}}{2^{|\gamma_{\textcolor{darkgreen}{A}}|}}\text{ ,}
\end{align}
As a result, we find a flat spectrum of R\'enyi entropies $S_A^{(n)}=S_A=|\gamma|\log 2$. When reductions from $A$ and $A^{\text{C}}$ are not equivalent, i.e.,\ when the greedy algorithm does not converge to a geodesic, dimers will be ``lost'' during each power of $\rho_A$, and $S_A^{(n)}$ will decrease with $n$.

\newpage
\section{Contraction order}
\label{APP_CONTR_ORDER}

Contracting Majorana dimer states on a given tiling can give rise to ambiguities regarding contraction order: Before contraction, each tile has its own ordering of indices, some of which are contracted out and some remain on the edges of the final geometry. We consider here the HyPeC with its underlying spin tensor network description, of which the Majorana dimers form an effective representation. Let us start with the simplest case of an $\bar{0}$ (read: ``logical zero'') input everywhere in the bulk and a successive contraction of neighbouring tiles, starting from the centre:
\begin{equation}
\begin{aligned}
&\begin{gathered}
\includegraphics[height=0.17\textheight]{happy_contr1.pdf}
\end{gathered}
\quad\quad \scalebox{1.5}{$\rightarrow$}\quad\quad
\begin{gathered}
\includegraphics[height=0.17\textheight]{happy_contr2.pdf}
\end{gathered}\quad \\
\quad \scalebox{1.5}{$\rightarrow$}\quad\quad
&\begin{gathered}
\includegraphics[height=0.17\textheight]{happy_contr3.pdf}
\end{gathered}
\quad\quad \scalebox{1.5}{$\rightarrow$}\quad\quad
\begin{gathered}
\includegraphics[height=0.17\textheight]{happy_contr4.pdf}
\end{gathered}
\end{aligned}\quad\quad
\end{equation}
As all tiles have been filled with even-parity input states, the dimer parities of the fully contracted state is entirely independent of the initial ordering: As shown in \eqref{EQ_PENTAGON_STATES}, any cyclic permutation (i.e.,\ pivot shifts) of the initial tiles or at intermediate contraction steps would have left the dimer parities invariant.
For a general bulk input, however, the initial index labeling does matter: The $\bar{1}$ input has odd parity and its dimer parities thus change under cyclic permutations, as shown in \eqref{EQ_PENTAGON_STATES2}. Thus, whenever a $\bar{1}$ tile is contracted in, the total parity of the contraction changes, and while the total parity is odd, any cyclic permutation leads to a string of $Z$ edge operator, as discussed in Sec.\ \ref{APP_CYCL_PERM}. This leads to the following contraction rule for arbitrary fixed bulk input:
\begin{lemma}[Dimer parities of the fixed-input HyPeC]
\label{LEM_HAPPY_FIXED_INPUT}
Contracting fixed $[[5,1,3]]$ code states on a pentagon tiling is equivalent to multiplying dimer parities of contracted dimer pairs (regardless of the initial orientation of tiles) and adding $Z$ strings between the pivots of pairs of tiles with $\bar{1}$ input. If the number of $\bar{1}$ inputs is odd, then an additional $Z$ string connects the pivot of the unpaired $\bar{1}$ tile with the pivot of the full contraction. 
\end{lemma}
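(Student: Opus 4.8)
The plan is to proceed by induction on the number of tiles in the contraction, attaching them one at a time in an arbitrary order starting from a central tile, and to track at each step both the set of fused-dimer parities and the configuration of $Z$-edge operators (equivalently, dimer-parity flips) forced by the fermionic reorderings. Two ingredients drive the argument. The first is Theorem~\ref{THM_MS_CLOSED} together with the explicit rules of Appendix~\ref{APP_CONTR_RULES}: a single contraction, carried out in a fixed global ordering, fuses the dimers meeting the glued edges, multiplies their parities, and flips a parity once per self-crossing of the fused path. The second is the cyclic-permutation identities \eqref{EQ_PARITY_FLIP_PLUS}--\eqref{EQ_PARITY_FLIP_MINUS}: a cyclic permutation of the \emph{spin} sites of a parity-even state leaves all dimer parities untouched, whereas for a parity-odd state it flips the parities of the dimers whose endpoints are dragged past the pivot, i.e.\ acts as a string of $Z$ operators on the traversed edges (cf.\ \eqref{EQ_PENTAGON_STATES}--\eqref{EQ_PENTAGON_STATES2}). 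I would first record the elementary fact that the partial contraction of tiles $T_1,\dots,T_j$ has total parity $(-1)^{n_j}$, where $n_j$ is the number of $\bar{1}$ tiles among them, since the Bell-pair projection implementing a contraction is parity-even and removes a pair of modes, so parity is multiplicative under gluing, consistently with \eqref{EQ_TOTAL_PARITY}.

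For an all-$\bar{0}$ input every partial contraction is parity-even, so by \eqref{EQ_PARITY_FLIP_PLUS} none of the reorderings needed to interleave the remaining edges of the growing state with those of a newly glued tile ever changes a dimer parity; the only effect is the multiplicative rule of Appendix~\ref{APP_CONTR_RULES}, which is precisely the first clause of the lemma with no $Z$ strings, as in Fig.~\ref{FIG_HAPPY_CONTR_SERIES}. In the inductive step I would branch on the running parity $(-1)^{n_j}$ at the moment tile $T_{j+1}$ is attached. If it is even, the same reasoning applies and no new string appears, the one exception being that if $T_{j+1}=\bar{1}$ its own pivot, dragged into place, opens a fresh $Z$ string by \eqref{EQ_PENTAGON_STATES2}, and the running parity flips to odd.

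The core of the proof is the odd-parity case. Here \eqref{EQ_PARITY_FLIP_MINUS} shows that every cyclic permutation used to place $T_{j+1}$, or to re-interleave edges afterwards, appends a $Z$ string along its path. I would route all such strings so that their free end parks at the pivot of the most recently attached $\bar{1}$ tile; the running parity remains odd and further strings accumulate until the \emph{next} $\bar{1}$ tile is attached, whereupon its pivot supplies a second endpoint. Using the freedom, established around Fig.~\ref{FIG_HAPPY_INPUT}, to deform $Z$ strings and to discard closed $Z$ loops enclosing an even number of $\bar{1}$ tiles, the net effect of all these intermediate strings reduces to a single $Z$ string joining the two $\bar{1}$-tile pivots, and the running parity returns to even. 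Iterating pairs off the $\bar{1}$ tiles in the order they are met; if their total number is odd, the last one's string cannot terminate on another $\bar{1}$ pivot, and instead the reorderings performed afterwards (while the parity is still odd) combine into a string running from that $\bar{1}$ tile to the global pivot of the full contraction. Invoking the already-established order-independence of the boundary state and the deformation invariance of $Z$ strings then shows that the resulting configuration depends only on the positions of the $\bar{1}$ tiles.

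The main obstacle is the bookkeeping in the odd sector: one must follow how the global cyclic ordering of the boundary edges evolves as tiles are attached in an arbitrary sequence and verify that the accumulated $Z$-string flips, modulo deformations and closed loops, genuinely organize into the stated pairing independently of all choices, while keeping them cleanly separated from the self-crossing flips already captured by the multiplicative rule of Appendix~\ref{APP_CONTR_RULES}. Establishing the multiplicativity of parity under contraction and pinning down the precise endpoint of the residual string when the number of $\bar{1}$ tiles is odd are the remaining points that require care.
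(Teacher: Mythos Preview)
Your proposal is correct and follows essentially the same approach as the paper: both arguments track the running total parity as tiles are attached sequentially, use the fact that cyclic reorderings of parity-odd states deposit $Z$ strings along the traversed edges while parity-even states are inert, pair up consecutive $\bar{1}$ tiles as endpoints of these strings, and then appeal to the deformability of $Z$ strings (and the triviality of even-enclosure loops) to conclude that the final configuration is independent of the contraction order, the pairing, and the initial tile orientations. The paper presents these steps via explicit diagrams rather than a formal induction, but the underlying logic and the key ingredients (\eqref{EQ_PARITY_FLIP_PLUS}--\eqref{EQ_PARITY_FLIP_MINUS}, \eqref{EQ_PENTAGON_STATES}--\eqref{EQ_PENTAGON_STATES2}, and the contraction rules of Appendix~\ref{APP_CONTR_RULES}) are the same.
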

\begin{proof}
Without loss of generality, consider a particular contraction order and initial tile orientation. Whenever the total parity of the contraction at any step is even, contracting a $\bar{1}$ tile will cause pivot shifts in all following contraction steps to produce $Z$ strings, until another $\bar{1}$ tile is reached and total parity becomes even again. The starting and end points of these $Z$ strings are the pivots of the first and second $\bar{1}$ tile. If the number of $\bar{1}$ tiles is even, then the final contraction will contains $Z$ strings between each successive pair of $\bar{1}$ tile pivots. If it is odd, then the $Z$ string from the last $\bar{1}$ tile will continue until the boundary of the full contraction. Consider the previous contraction for a pair of $\bar{1}$ tiles, with pivots of the odd tiles (whose orientation is now relevant) shown by a small circle:
\begin{equation}
\begin{gathered}
\includegraphics[height=0.17\textheight]{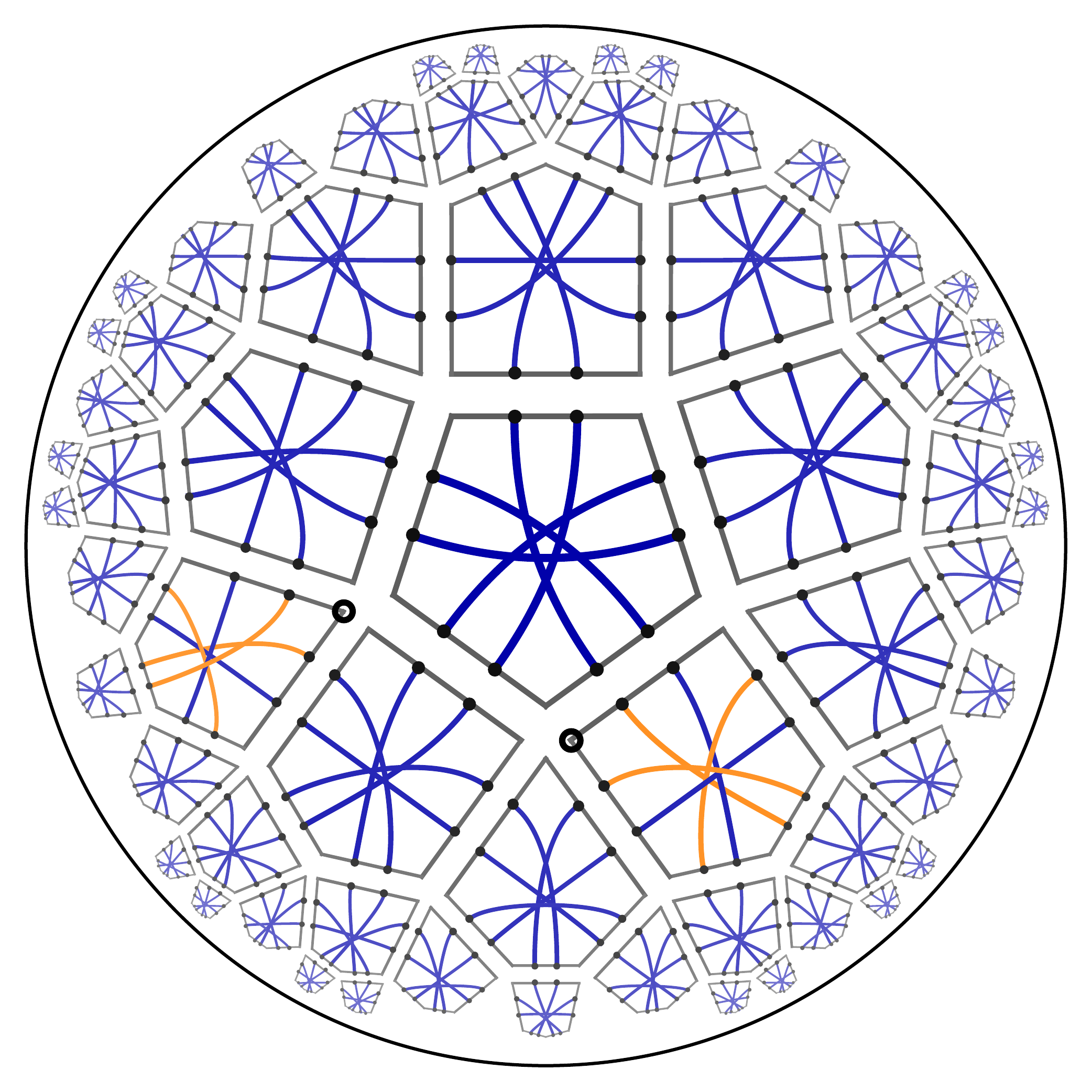}
\end{gathered}
\quad\quad \scalebox{1.5}{$\rightarrow$}\quad\quad
\begin{gathered}
\includegraphics[height=0.17\textheight]{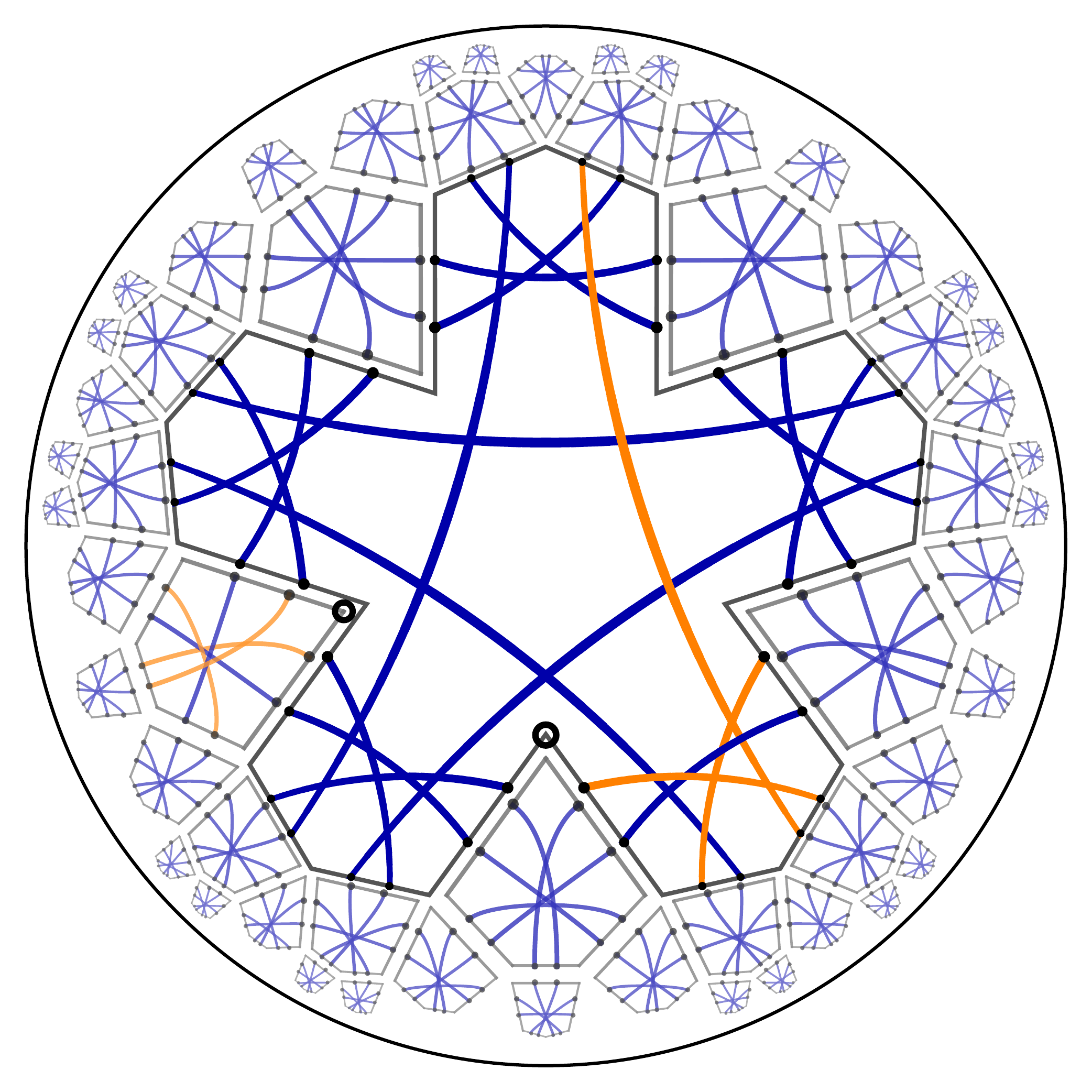}
\end{gathered}
\end{equation}
During the first iteration of contractions, the last contraction involves the first $\bar{1}$ tile and makes the contraction parity-odd. We thus need to mark it with a pivot, which is simply the pivot of the original $\bar{1}$ tile. To contract the other $\bar{1}$ tile, a pivot shift is required, which produces a $Z$ string (red line):
\begin{equation}
\begin{gathered}
\includegraphics[height=0.17\textheight]{happy_contrB_2.pdf}
\end{gathered}
\quad\quad \scalebox{1.5}{$=$}\quad\quad
\begin{gathered}
\includegraphics[height=0.17\textheight]{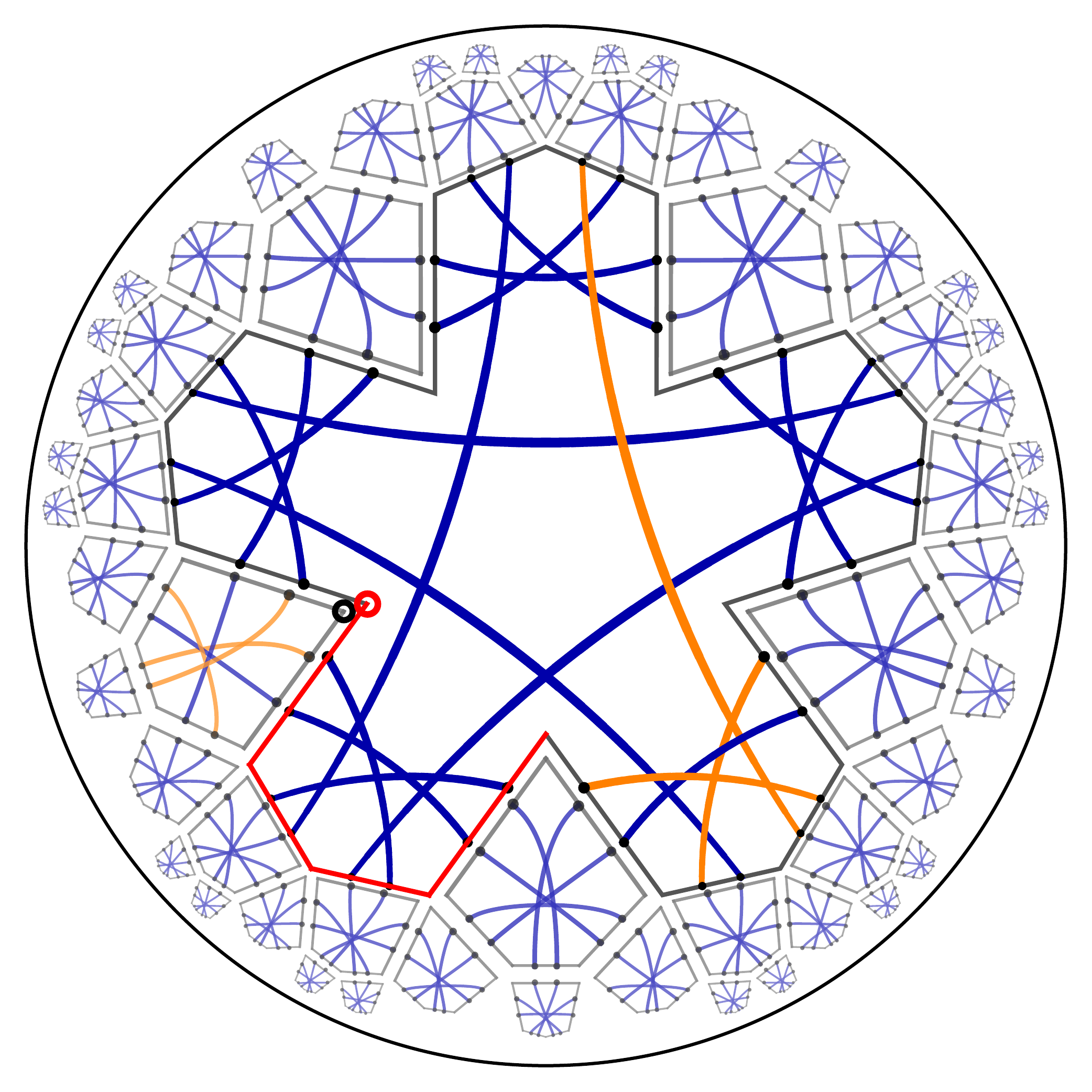}
\end{gathered}
\end{equation}
The contraction is now again parity-even (the pivots ``cancel each other out''), and the rest of the contraction can be performed without worrying about orientations:
\begin{equation}
\begin{gathered}
\includegraphics[height=0.17\textheight]{happy_contrB_2b.pdf}
\end{gathered}
\quad\quad \scalebox{1.5}{$\rightarrow$}\quad\quad
\begin{gathered}
\includegraphics[height=0.17\textheight]{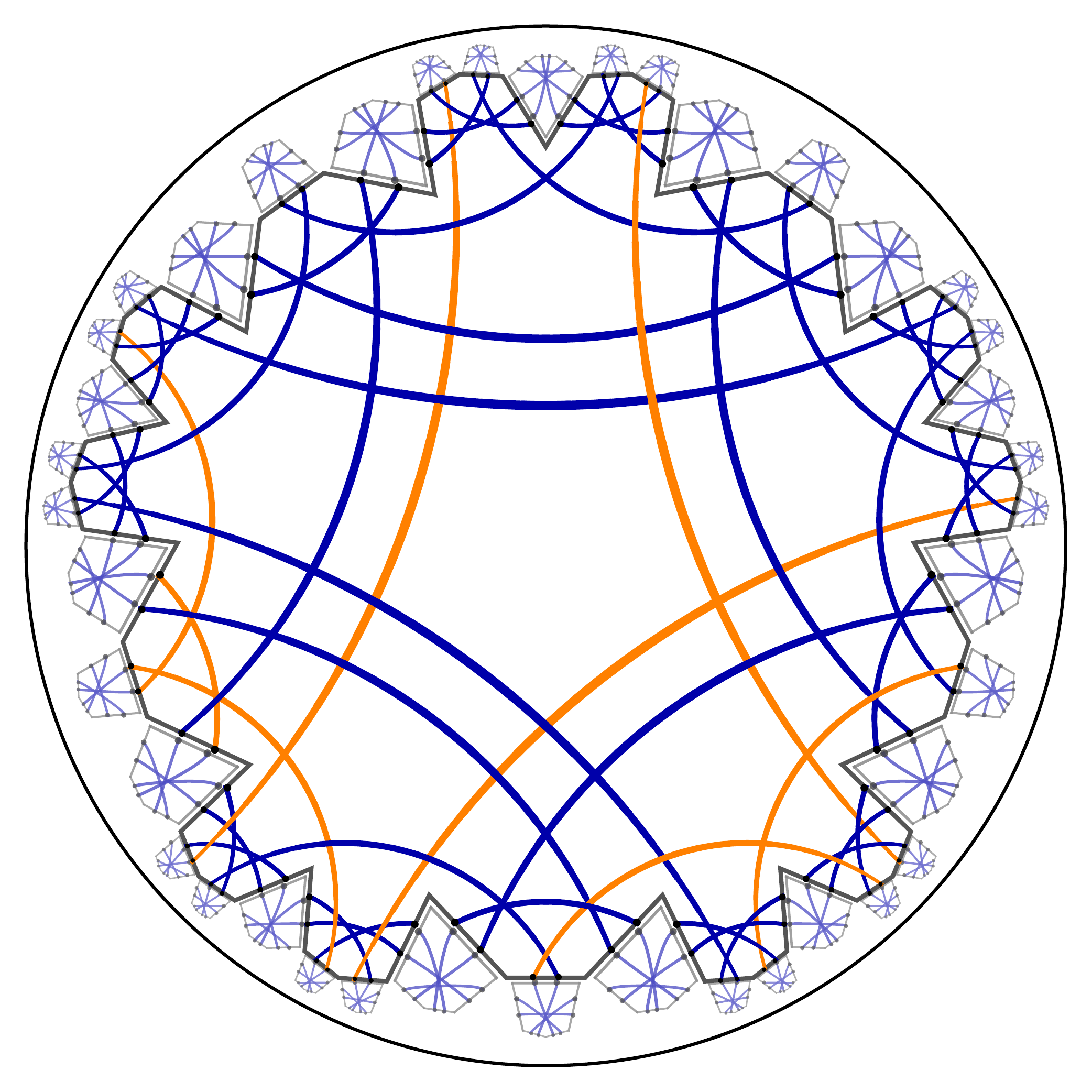}
\end{gathered}
\quad\quad \scalebox{1.5}{$\rightarrow$}\quad\quad
\begin{gathered}
\includegraphics[height=0.17\textheight]{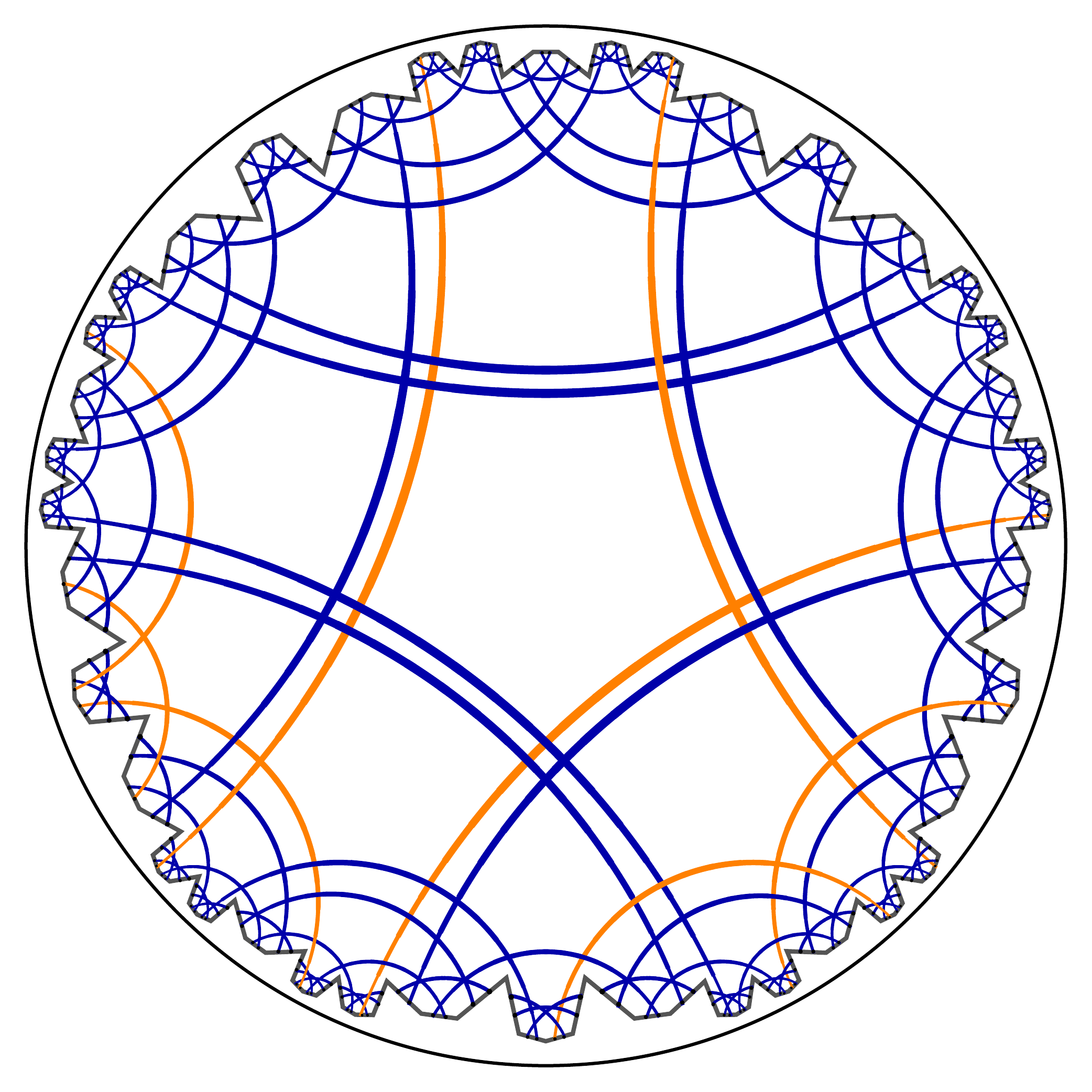}
\end{gathered}
\end{equation}
This result is independent of the ordering of the previous contraction, as we can freely deform the $Z$ strings through the $\bar{0}$ tiles:
\begin{equation}
\begin{gathered}
\includegraphics[height=0.17\textheight]{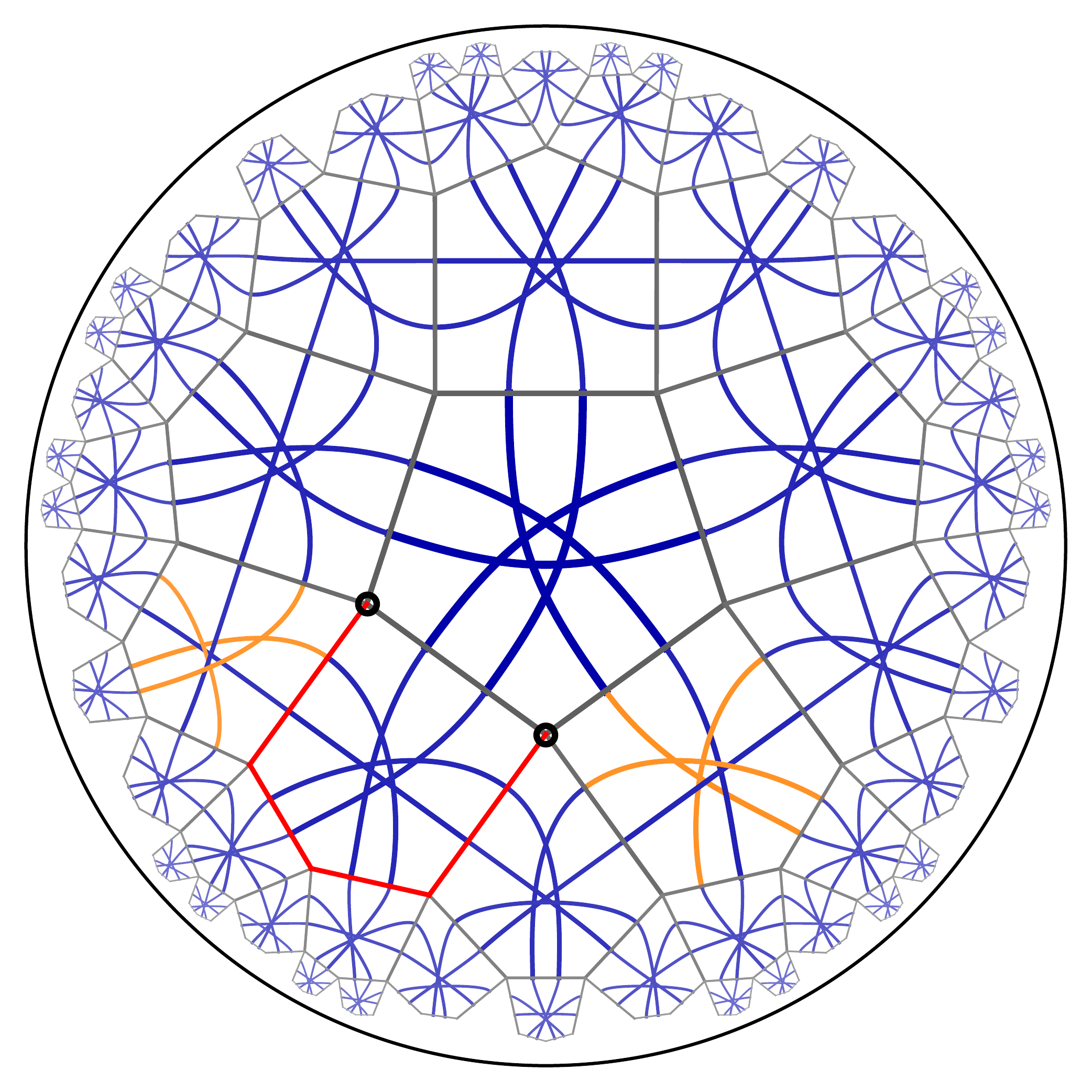}
\end{gathered}
\quad\quad \scalebox{1.5}{$=$}\quad\quad
\begin{gathered}
\includegraphics[height=0.17\textheight]{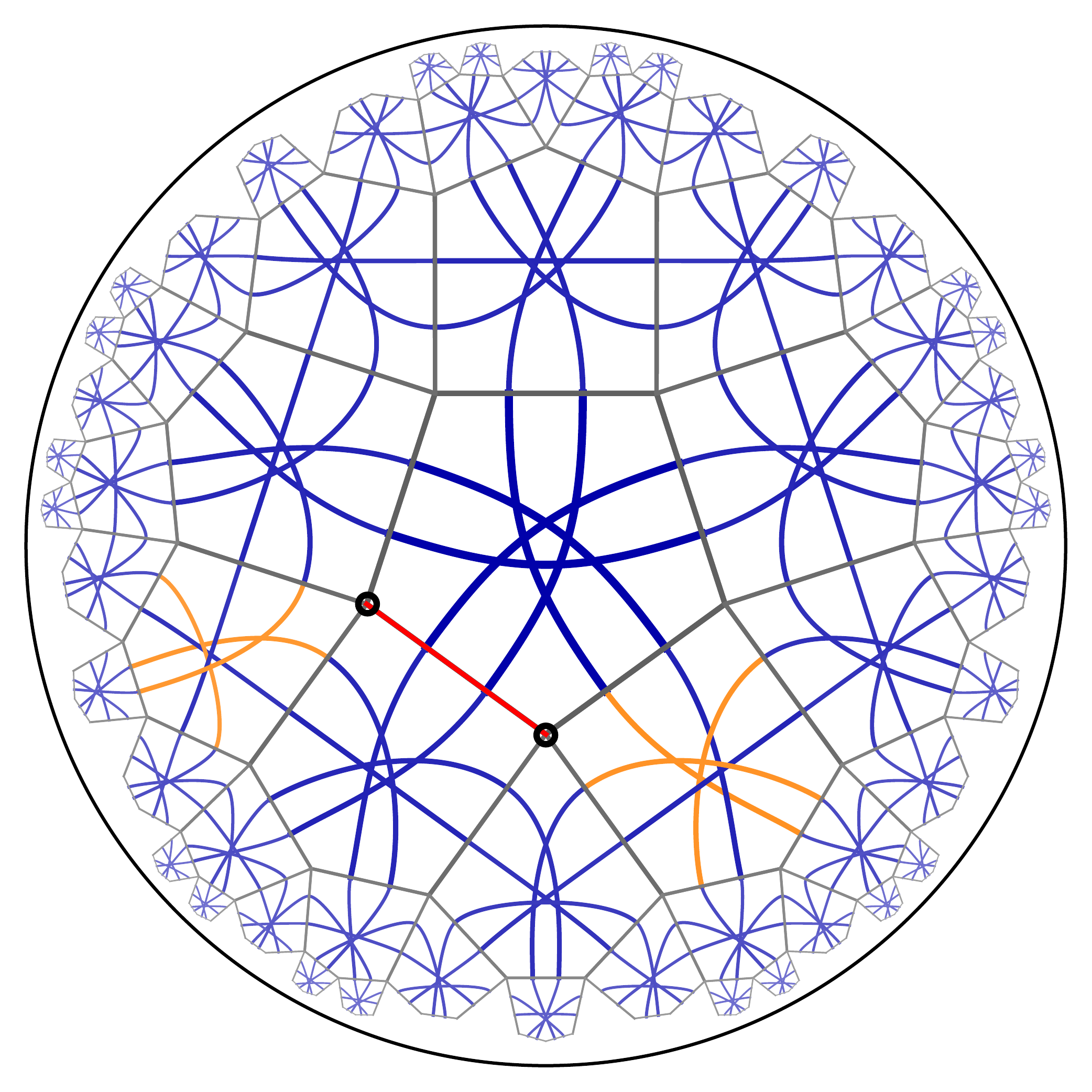}
\end{gathered}
\quad\quad \scalebox{1.5}{$=$}\quad\quad
\begin{gathered}
\includegraphics[height=0.17\textheight]{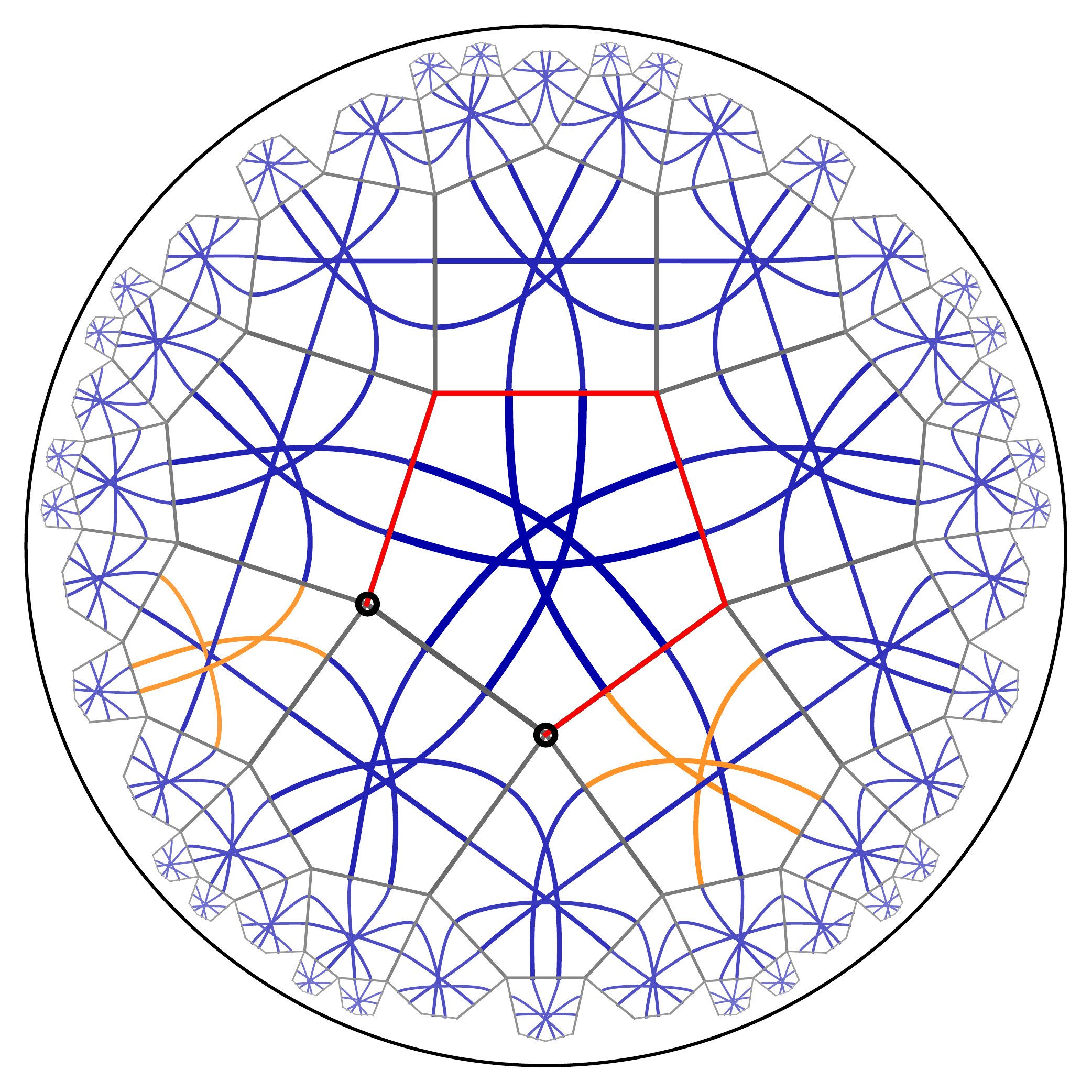}
\end{gathered}
\end{equation}
To indicate the action of the $Z$ strings on the full contraction, we have omitted the spaces between tiles in the previous diagram.
Furthermore, the result is independent of the initial orientations of the $\bar{1}$ tiles, as rotating these is equivalent to extending or shortening the $Z$ strings, as we have found in \eqref{EQ_PENTAGON_STATES}:
\begin{equation}
\begin{gathered}
\includegraphics[height=0.17\textheight]{happy_contrB_order1.pdf}
\end{gathered}
\quad\quad \scalebox{1.5}{$=$}\quad\quad
\begin{gathered}
\includegraphics[height=0.17\textheight]{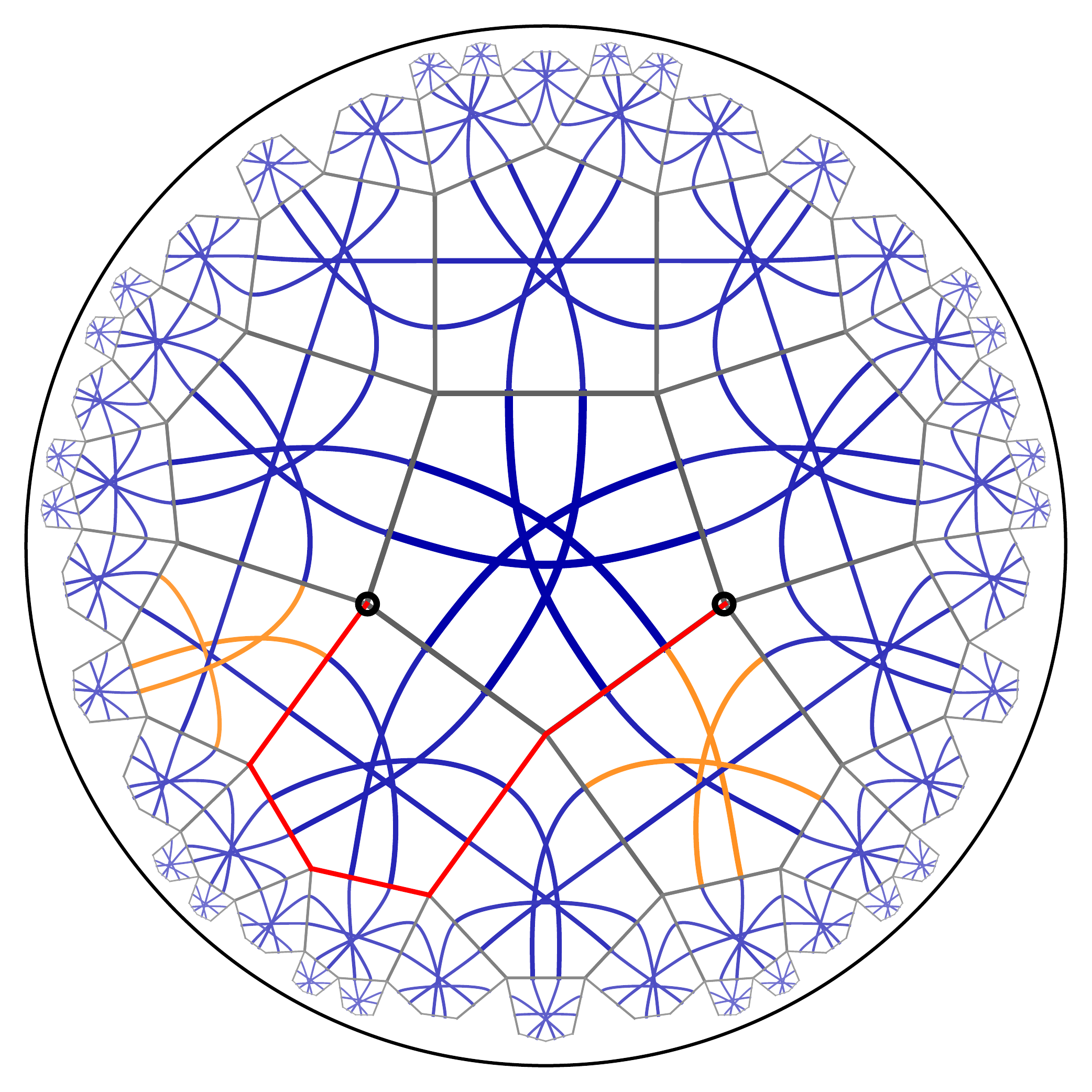}
\end{gathered}
\quad\quad \scalebox{1.5}{$=$}\quad\quad
\begin{gathered}
\includegraphics[height=0.17\textheight]{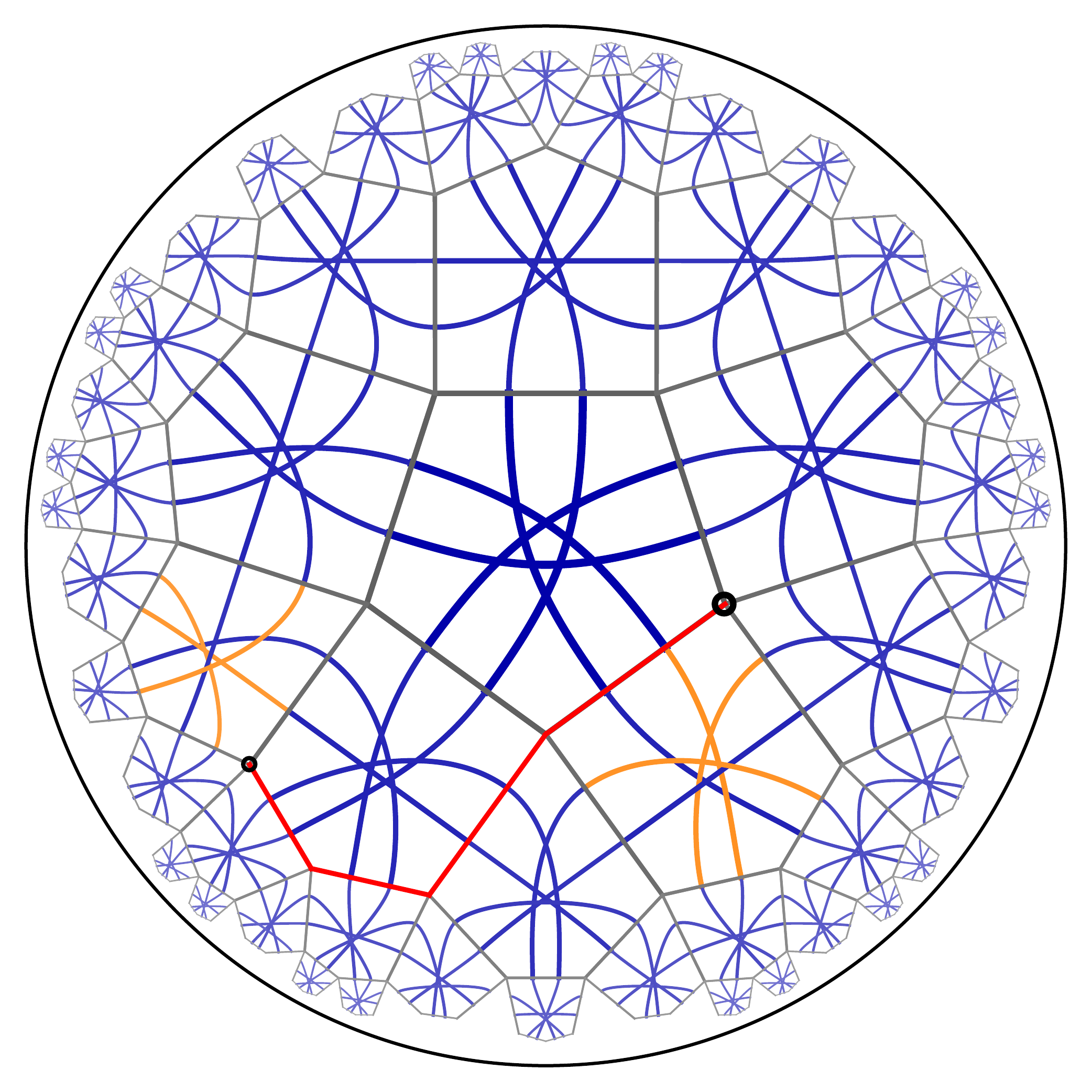}
\end{gathered}
\end{equation}
Contracting more than two $\bar{1}$ tiles will create $Z$ strings between pairs of them, in the order in which we contracted. This ordering, however, also does not affect the final contraction, as we can change this pairing using the same rules:
\begin{equation}
\begin{gathered}
\includegraphics[height=0.17\textheight]{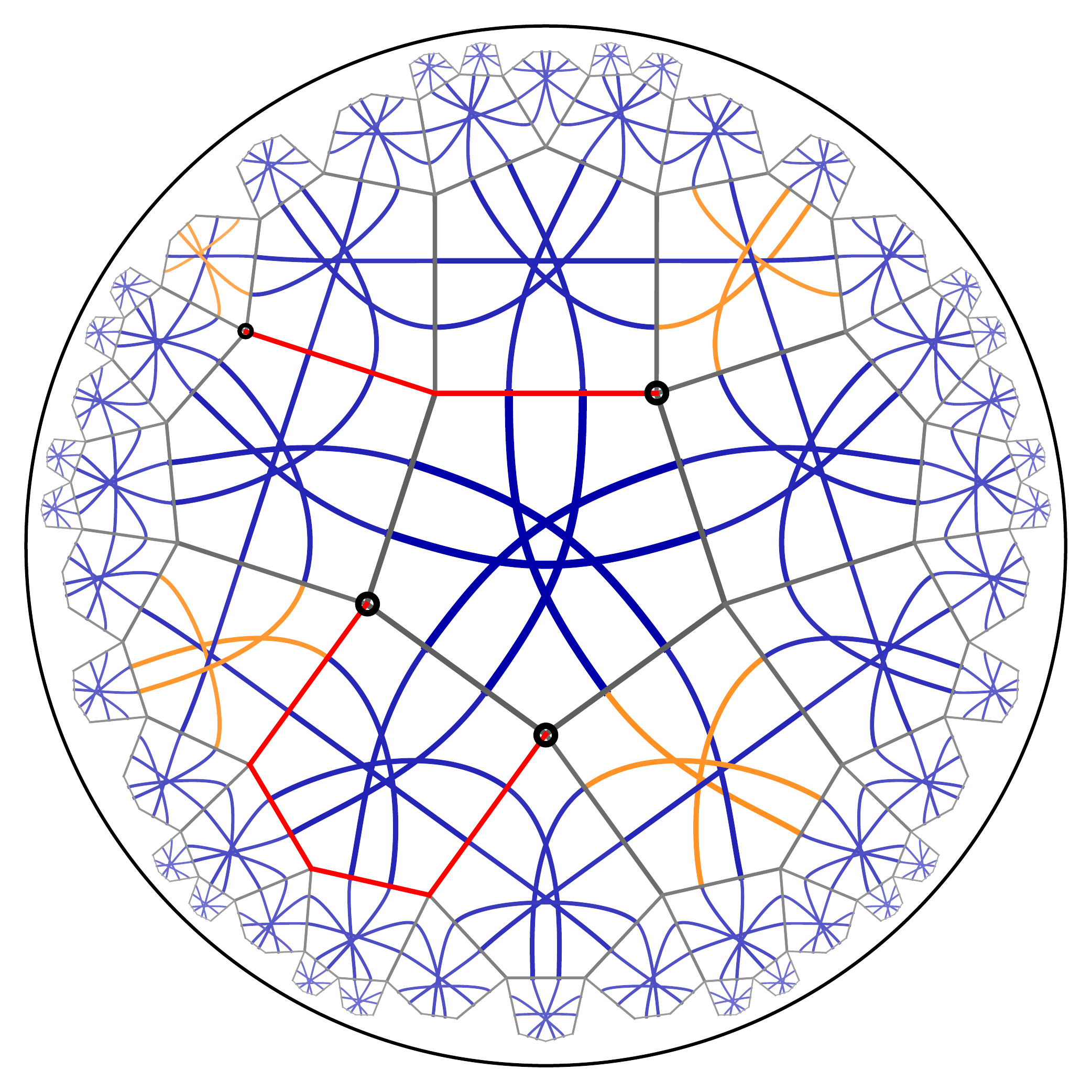}
\end{gathered}
\quad\quad \scalebox{1.5}{$=$}\quad\quad
\begin{gathered}
\includegraphics[height=0.17\textheight]{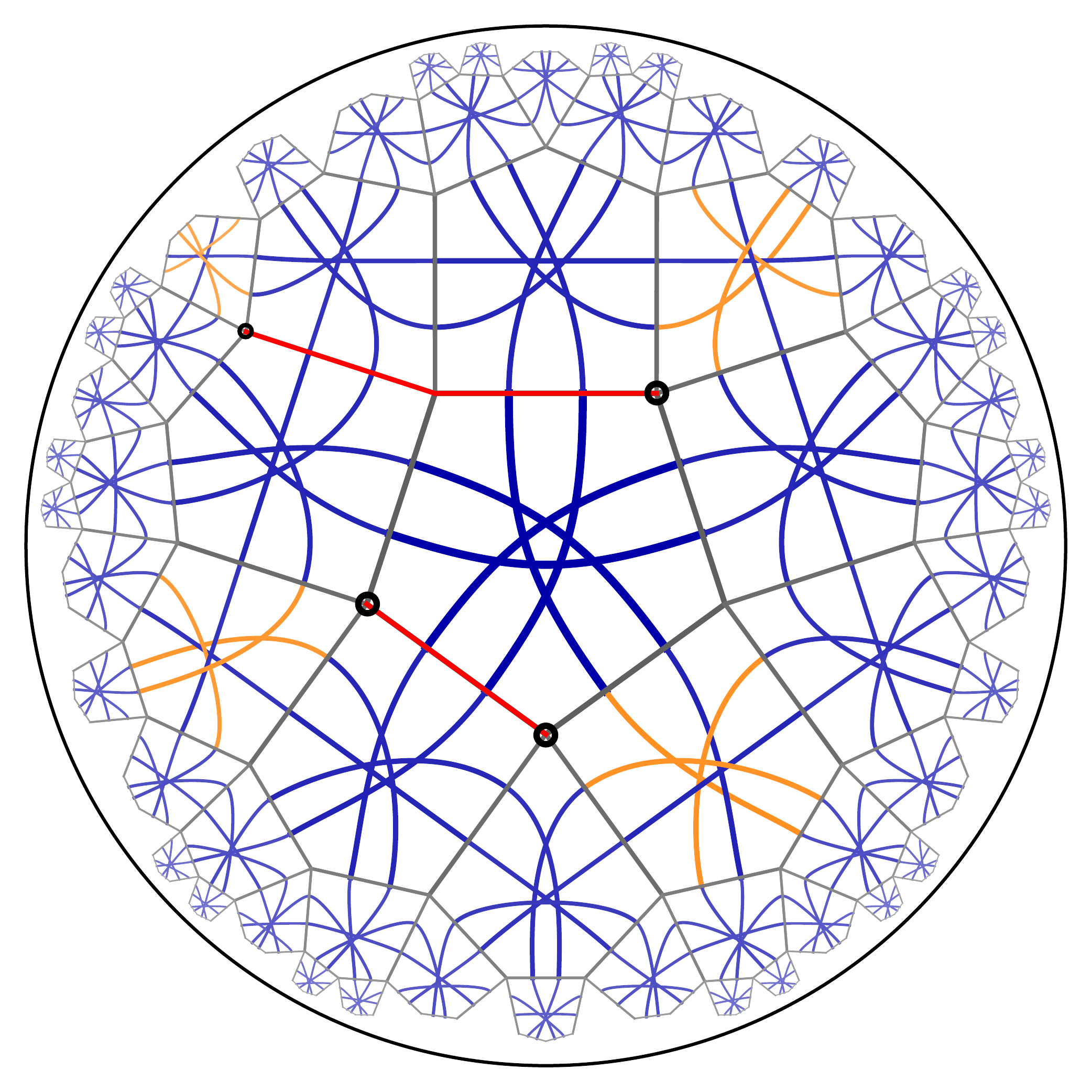}
\end{gathered}
\quad\quad \scalebox{1.5}{$=$}\quad\quad
\begin{gathered}
\includegraphics[height=0.17\textheight]{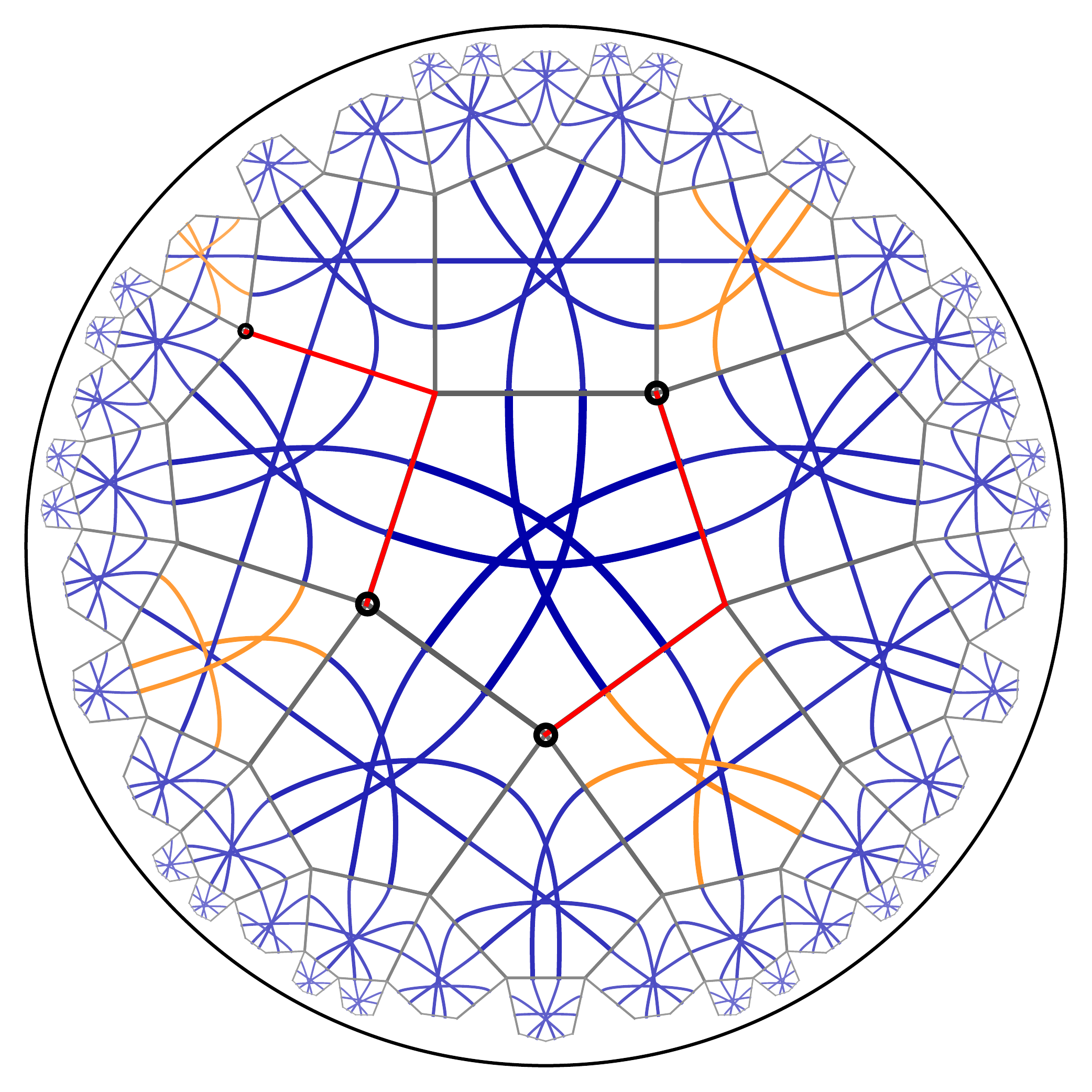}
\end{gathered}
\end{equation}
The same rules apply if we have an odd number of $\bar{1}$ tiles in the bulk. As the entire contraction now has odd parity, it also requires a pivot, which pairs up with the last $\bar{1}$ tile in the ordering. Again, this choice of a ``last'' tile does not change the outcome:
\begin{equation}
\begin{gathered}
\includegraphics[height=0.17\textheight]{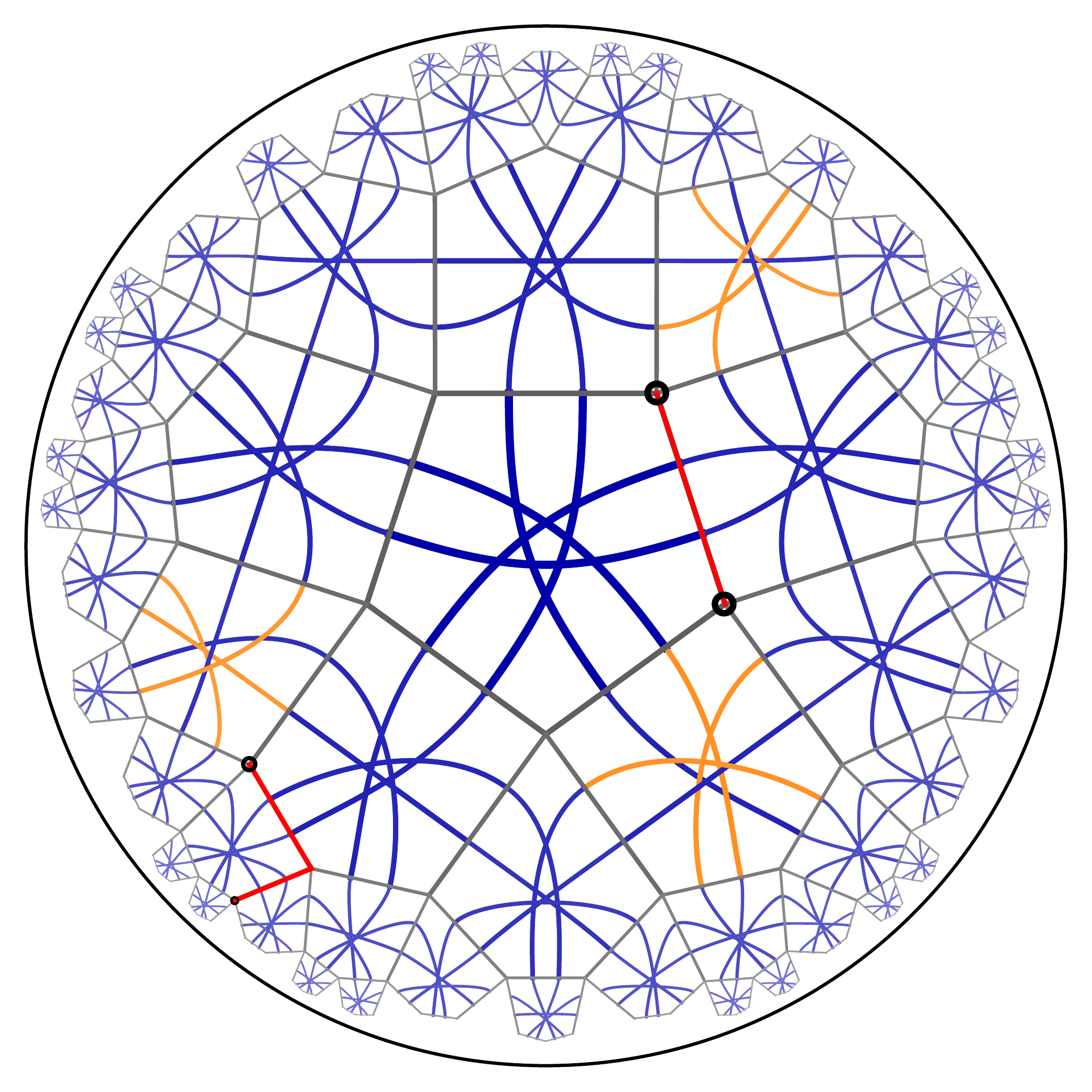}
\end{gathered}
\quad\quad \scalebox{1.5}{$=$}\quad\quad
\begin{gathered}
\includegraphics[height=0.17\textheight]{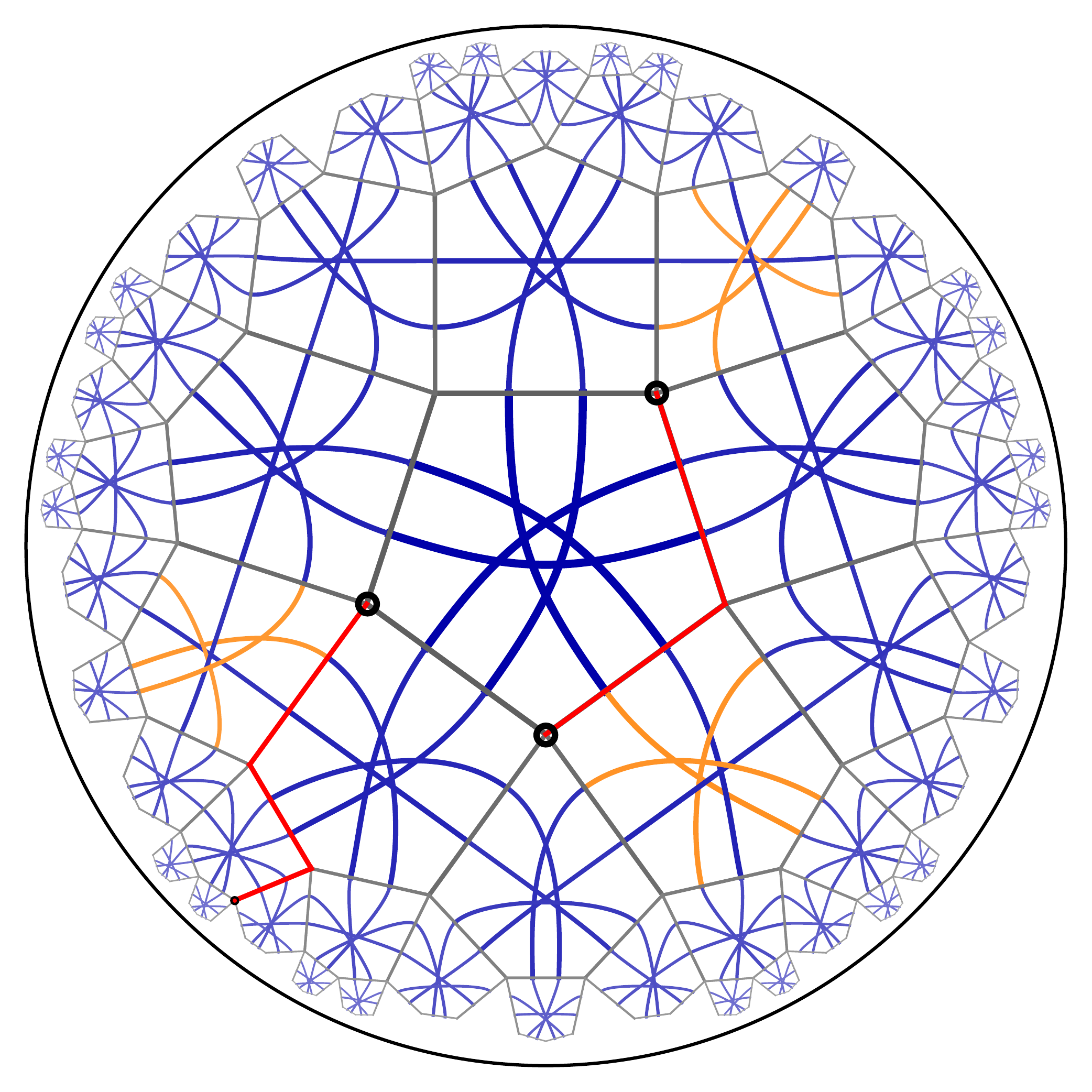}
\end{gathered}
\quad\quad \scalebox{1.5}{$=$}\quad\quad
\begin{gathered}
\includegraphics[height=0.17\textheight]{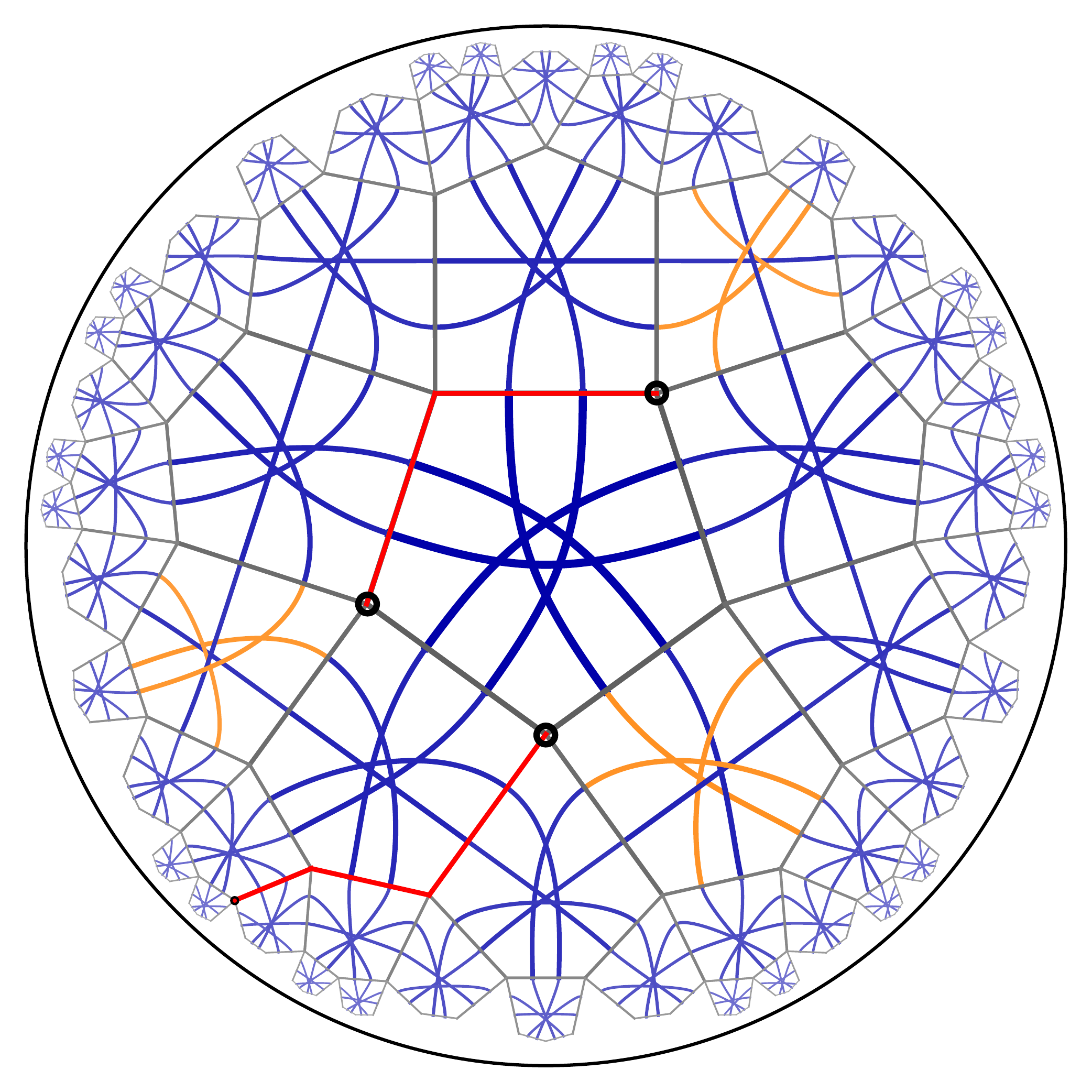}.
\end{gathered}
\end{equation}
Note that in the last step, we pushed $Z$ strings through two pentagon tiles. Consistent with the lemma above, moving the boundary pivot of the full contraction extends the $Z$ string attached to it along the boundary, which is the expected behaviour for a cyclic permutation of an odd-parity dimer state.
\end{proof}

Lemma \ref{LEM_HAPPY_FIXED_INPUT} now allows us to make some statements on the distance between Majorana boundary states for different bulk inputs. Let us define the \textsl{Majorana weight} $\mathfrak{w}$ as the number of Majorana operators (i.e.,\ dimer parity flips) required to transform one state into another. Given a boundary state vector $\ket{\bar{0},\bar{1},\bar{1},\bar{0},\dots}$ for an arbitrary bulk input, what is the lowest $\mathfrak{w}_{\text{min}}$ with respect to a state with any other bulk input? We claim the following:

\begin{lemma}[Majorana distance of HyPeC boundary states]
The boundary states of the HyPeC for fixed logical input in the bulk have a code distance $\mathfrak{w}>2$ between any two inputs. 
\end{lemma}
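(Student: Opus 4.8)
The Majorana weight $\mathfrak{w}$ is, by its definition, the number of operators $\m_k$ — equivalently of dimer–parity flips — needed to pass between two states. For distinct bulk inputs $b,b'$ the boundary states $\ket b,\ket{b'}$ are Majorana dimer states that share the \emph{same} correlation structure $\Omega$: the pairing of Majorana modes is fixed by the $\{5,4\}$ tiling and the contraction scheme, and the two $[[5,1,3]]$ basis states pair the same modes. Since $\m_k$ flips the parity of the single dimer ending at $k$, touches no other dimer, and dimers are vertex-disjoint, $\mathfrak{w}(b,b')$ equals the number $m$ of dimers of $\Omega$ whose parity differs between $\ket b$ and $\ket{b'}$. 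So the assertion is simply $m\ge 3$ whenever $T:=\{t:b_t\neq b'_t\}$ is nonempty (nonempty because the encoding is injective).

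\textbf{Parity bookkeeping.} Apply Lemma~\ref{LEM_HAPPY_FIXED_INPUT} to both states using a common contraction order. Relative to $\ket b$, the parity pattern of $\ket{b'}$ changes by (i) toggling, for each $t\in T$, the three dimers threading $t$ on which $\ket{\bar0}_5$ and $\ket{\bar1}_5$ disagree — the $(j,j{+}5)$ dimers with $j$ odd in $t$'s local labelling, cf.\ \eqref{HAPPY_ZERO}--\eqref{HAPPY_ONE}, extended through the contracted network — and (ii) a change of the $Z$-string net joining the $\bar1$-tiles. Each elementary $Z_k$ flips an even number of dimer parities (an edge's two Majorana modes lie in zero or two dimers), so (ii) contributes an even number of flips; together with (i) this forces $m\equiv|T|\pmod 2$, consistent with $p_\text{tot}(\ket b)=(-1)^{\#\{t:b_t=\bar1\}}$ and \eqref{EQ_TOTAL_PARITY}. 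Hence it suffices to rule out $m=1$ (when $|T|$ is odd) and $m=2$ (when $|T|$ is even); $m=0$ is impossible as $\ket b\neq\ket{b'}$.

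\textbf{Three uncancelled flips, and the main obstacle.} To finish, one exhibits three dimer flips that survive all cancellations. Pick $t_0\in T$ in the outermost layer of the truncated tiling; being a perfect tensor with at least three open edges, $t_0$ has three logical dimers that are distinct discrete geodesics reaching the asymptotic boundary. Using the deformation freedom of the $Z$-strings from Lemma~\ref{LEM_HAPPY_FIXED_INPUT} — they slide through $\bar0$-tiles and around tiles carrying an even number of $\bar1$-insertions — route the entire $Z$-string net (including the terminal string to the global pivot, when $|T|$ is odd) so that it crosses none of these three geodesics, and verify that none of them is re-threaded ``logically'' through a second tile of $T$; these three parities then flip exactly once, so $m\ge 3$. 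The delicate point is precisely this simultaneous arrangement of the $Z$-strings and of the logical dimers of the remaining tiles of $T$ so that they do not cancel the three flags of $t_0$: it rests on the facts that two distinct geodesics of the $\{5,4\}$ tiling meet at most once — so a logical dimer of an outermost tile need not re-thread a second tile of $T$ — and that $Z$-strings deform freely through $\bar0$-tiles. When a direct deformation argument becomes unwieldy, the cleanest route is to first apply the perfect-tensor / greedy reductions of Appendix~\ref{APP_EE_RULES} to contract the network down to a bounded neighbourhood of $t_0$, whereupon the claim reduces to the explicit single- and two-tile computations given in the main text.
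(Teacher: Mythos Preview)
Your approach and the paper's are close in spirit: both localize the argument at a boundary tile and invoke the hyperbolic geometry of the $\{5,4\}$ tiling to argue that parity flips anchored there cannot be cancelled by deeper insertions or $Z$-strings. The parity bookkeeping $m\equiv|T|\pmod 2$ is a genuine addition over the paper --- it follows cleanly from \eqref{EQ_TOTAL_PARITY} together with $p_\text{tot}(\ket b)=(-1)^{\#\{t:b_t=\bar 1\}}$ and immediately kills the cases $m=0$ and (when $|T|$ is odd) $m=2$, which the paper does not isolate.

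Where your argument weakens is the last paragraph. The appeal to the greedy algorithm is misplaced: that procedure (and the reductions \eqref{EQ_HAPPY_SP_3C}--\eqref{EQ_HAPPY_SP_4C}) simplifies a \emph{boundary region} for entanglement computations, not the bulk difference set $T$; it does not literally collapse an arbitrary $T$ to ``single- and two-tile computations''. Likewise, the claim that the three logical geodesics of $t_0$ can simultaneously avoid all other tiles of $T$ and all $Z$-string crossings is asserted rather than proved --- nothing prevents one of those geodesics from threading a second tile of $T$ deeper in the bulk, and the $Z$-string attached to $t_0$ itself (when its input flips) necessarily touches at least one edge of $t_0$. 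The paper does not resolve this cleanly either: it proceeds by exhibiting explicit small configurations (single flip $\mathfrak w=3$, adjacent pairs $\mathfrak w=4$, separated pairs $\mathfrak w=8$, etc.) and arguing heuristically that further insertions only increase $\mathfrak w$ because of hyperbolicity. So both arguments are sketches at this final step; yours is more systematic up to that point, but the closing reduction you propose is not the right tool.
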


\begin{proof}
We will now show that starting from any fixed-input HyPeC boundary state, no number of logical input flips in the bulk can lead to a state which is closer that $\mathfrak{w}=3$ to the original one. This bound is clearly saturated for such an input flip $\bar{0}\leftrightarrow\bar{1}$ of a tile on the boundary, which flips three dimer parities. If we instead push the input flip further into the bulk, we will produce a $Z$ string from the boundary (or annihilate one, if the original contraction is parity-odd). The further in the bulk the flip occurs, the longer the $Z$ string grows, increasing $\mathfrak{w}$.
Due to the hyperbolic geometry, there is also no way that the dimer flips by neighbouring $\bar{1}$ insertions can cancel each out. For neighbouring pairs of $\bar{1}$ insertions, we always find $\mathfrak{w} > 3$:
\begin{equation}
\begin{gathered}
\includegraphics[height=0.17\textheight]{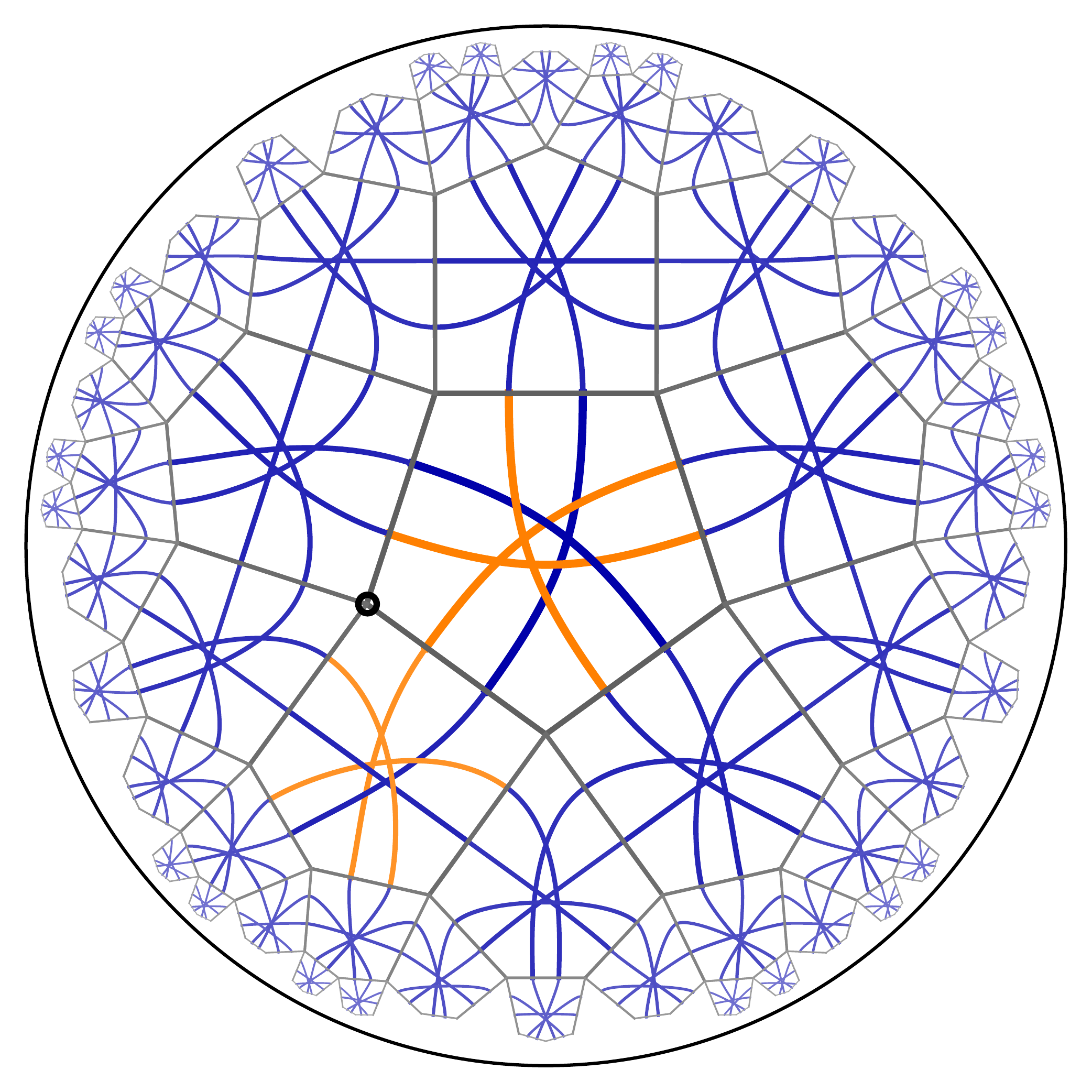}
\end{gathered}
\quad\quad \scalebox{1.5}{$=$}\quad\quad
\begin{gathered}
\includegraphics[height=0.17\textheight]{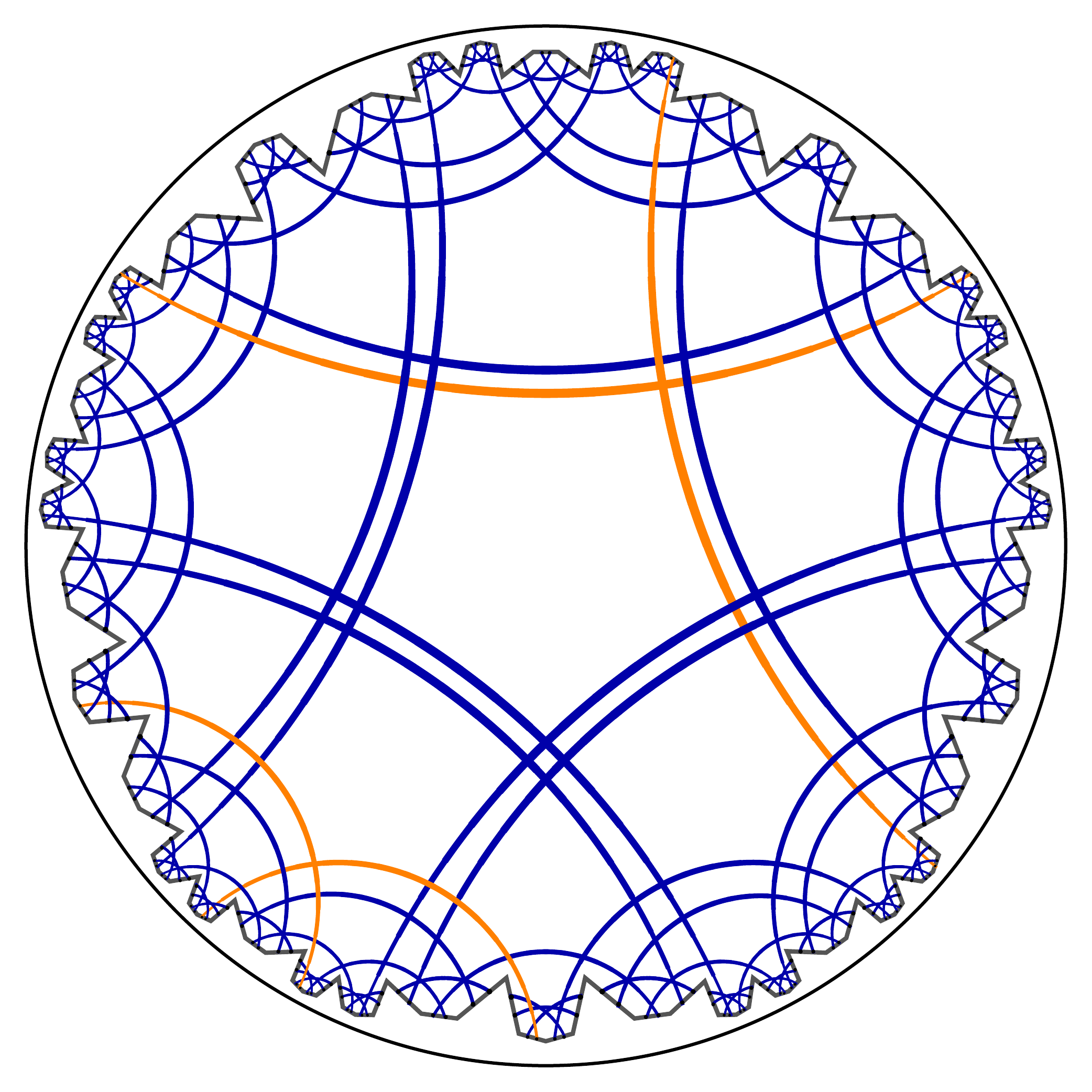}
\end{gathered}
\quad\quad\scalebox{1.1}{$(\mathfrak{w}=4)$}
\end{equation}
\begin{equation}
\begin{gathered}
\includegraphics[height=0.17\textheight]{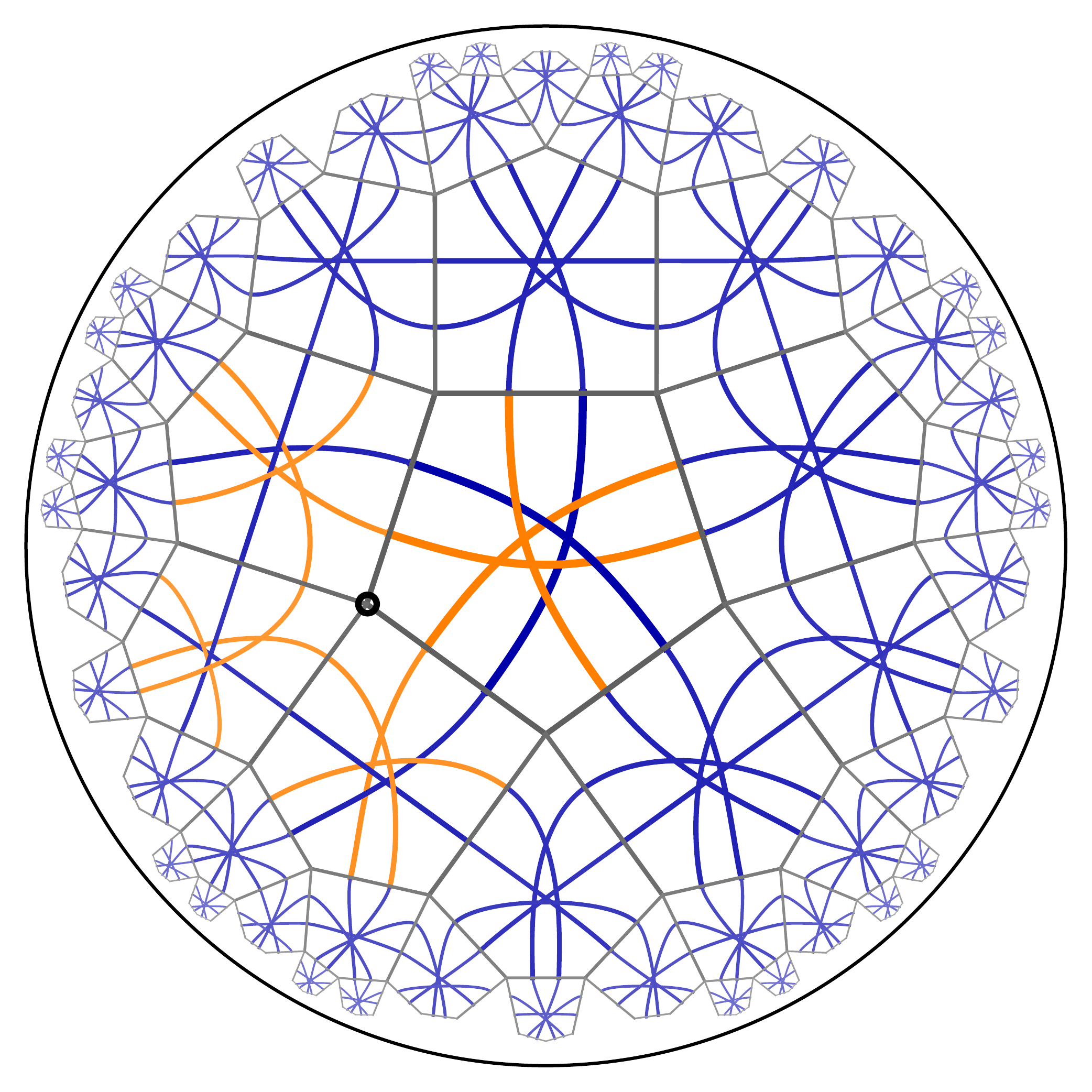}
\end{gathered}
\quad\quad \scalebox{1.5}{$=$}\quad\quad
\begin{gathered}
\includegraphics[height=0.17\textheight]{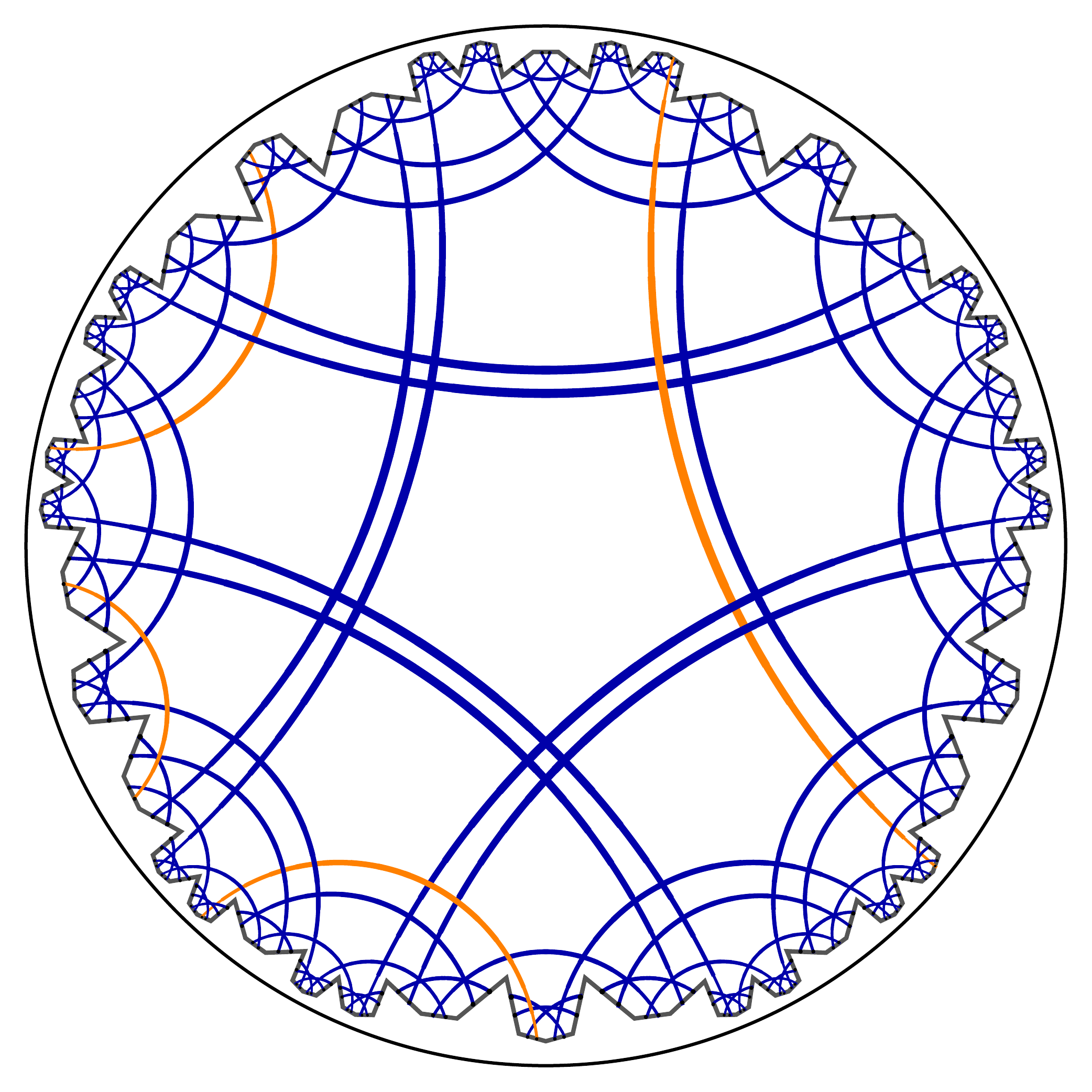}
\end{gathered}
\quad\quad\scalebox{1.1}{$(\mathfrak{w}=4)$}
\end{equation}
We have defined $\mathfrak{w}$ relative to the all-$\bar{0}$ input, but the result clearly holds for insertions on any fixed code input. When non-neighbouring pairs are added, the resulting $Z$ strings cause additional dimer flips:
\begin{equation}
\begin{gathered}
\includegraphics[height=0.17\textheight]{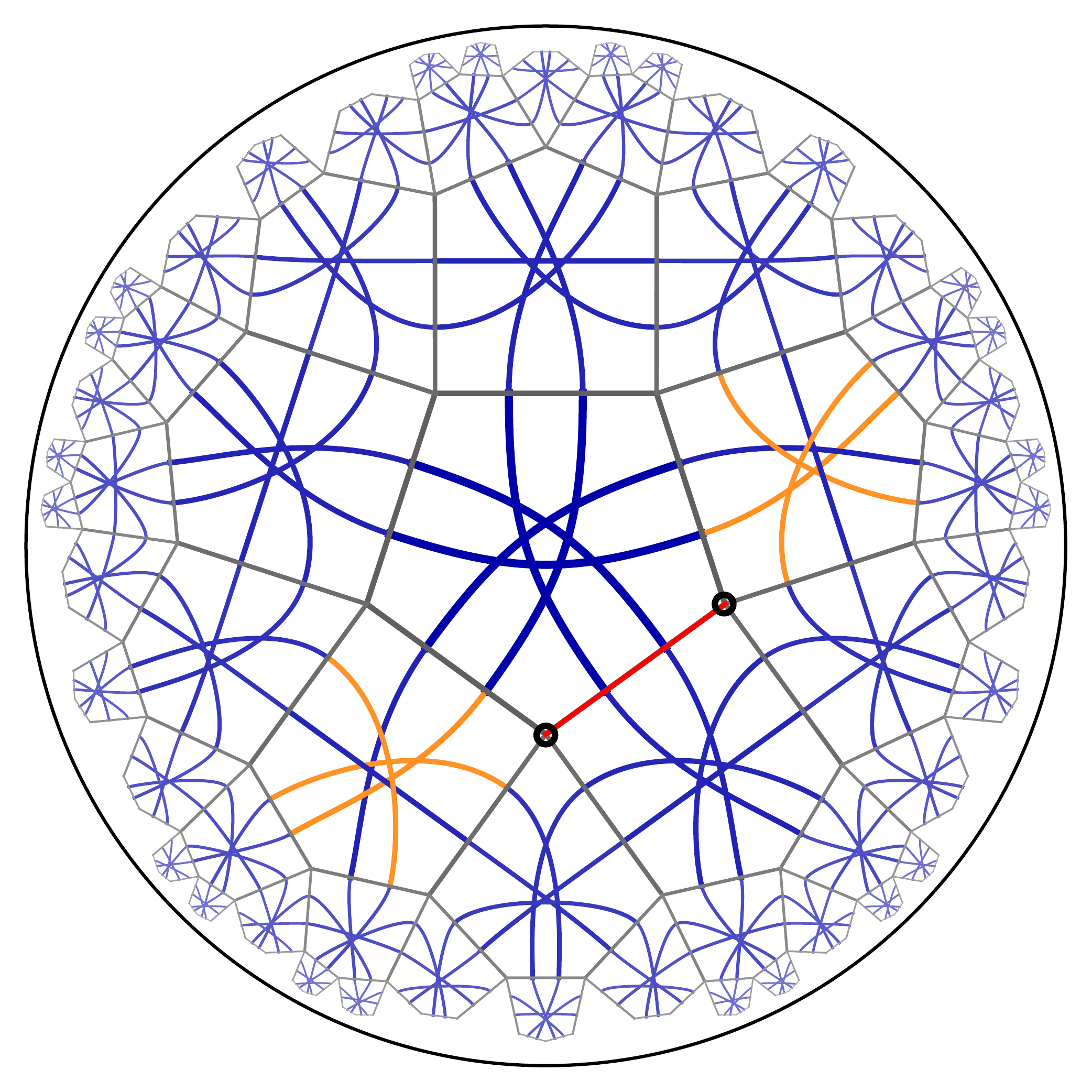}
\end{gathered}
\quad\quad \scalebox{1.5}{$=$}\quad\quad
\begin{gathered}
\includegraphics[height=0.17\textheight]{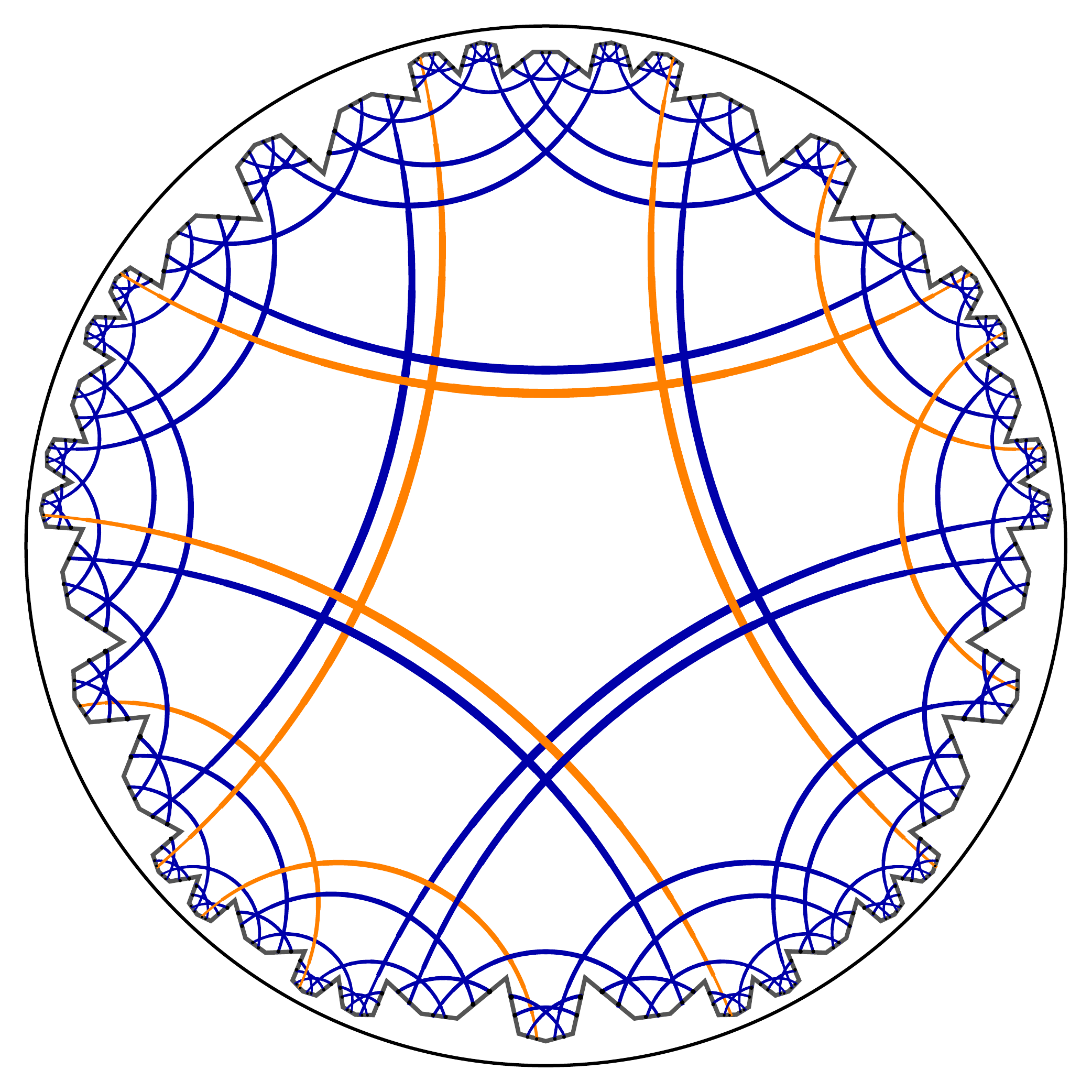}
\end{gathered}
\quad\quad\scalebox{1.1}{$(\mathfrak{w}=8)$}
\end{equation}
Similarly, adding even more pairs to make two $Z$ strings ``cancel'' out does not bring down $\mathfrak{w}$:
\begin{equation}
\begin{gathered}
\includegraphics[height=0.17\textheight]{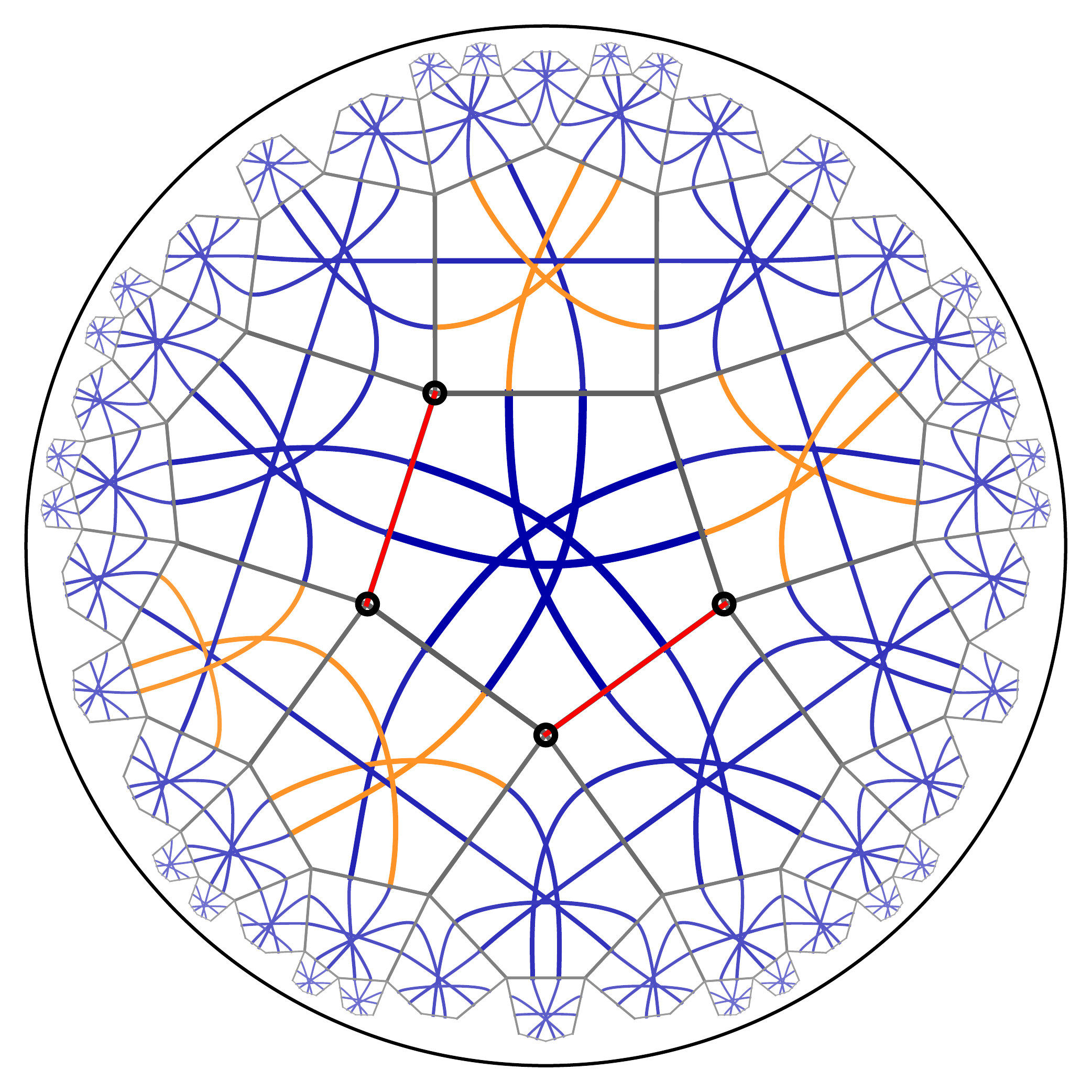}
\end{gathered}
\quad\quad \scalebox{1.5}{$=$}\quad\quad
\begin{gathered}
\includegraphics[height=0.17\textheight]{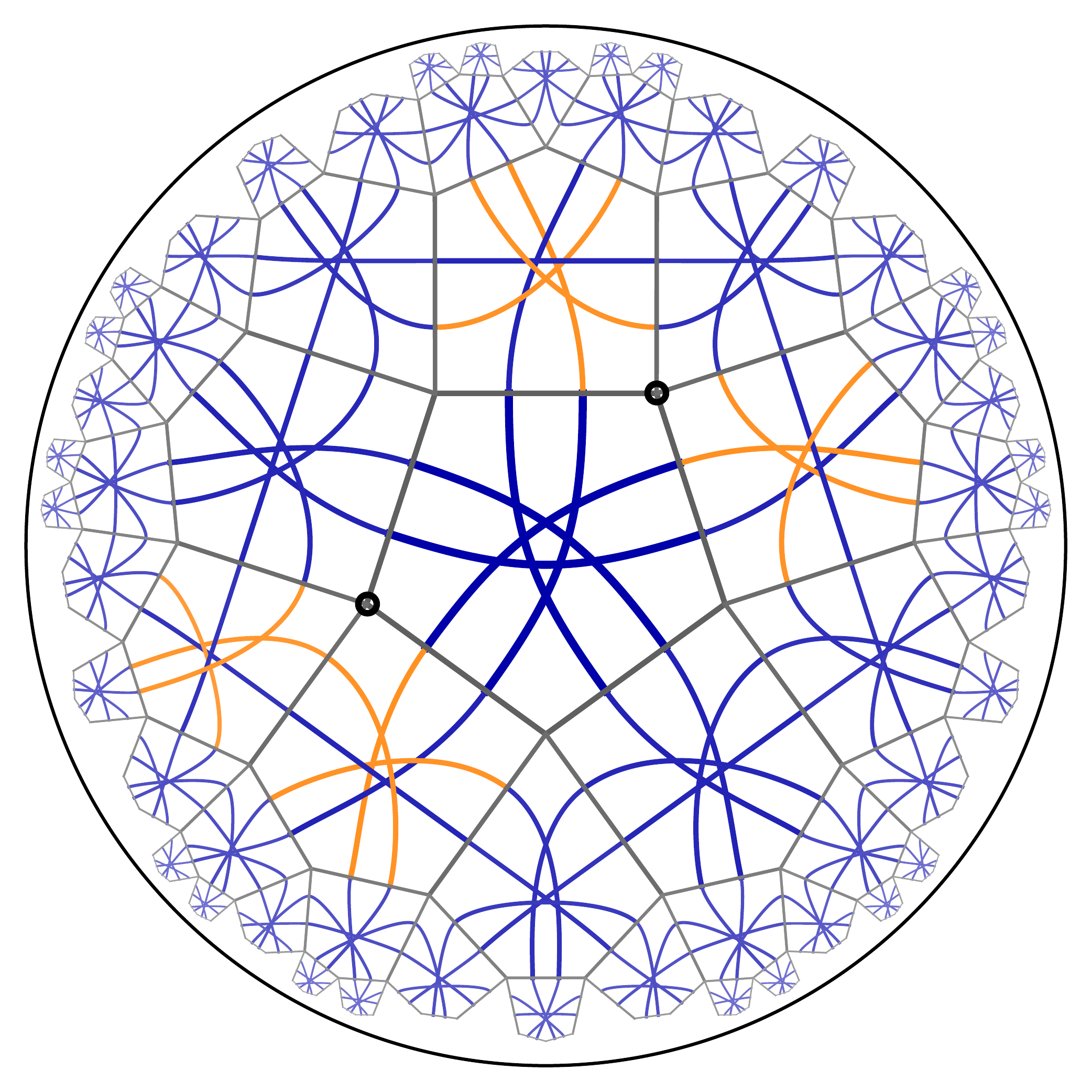}
\end{gathered}
\quad\quad \scalebox{1.5}{$=$}\quad\quad
\begin{gathered}
\includegraphics[height=0.17\textheight]{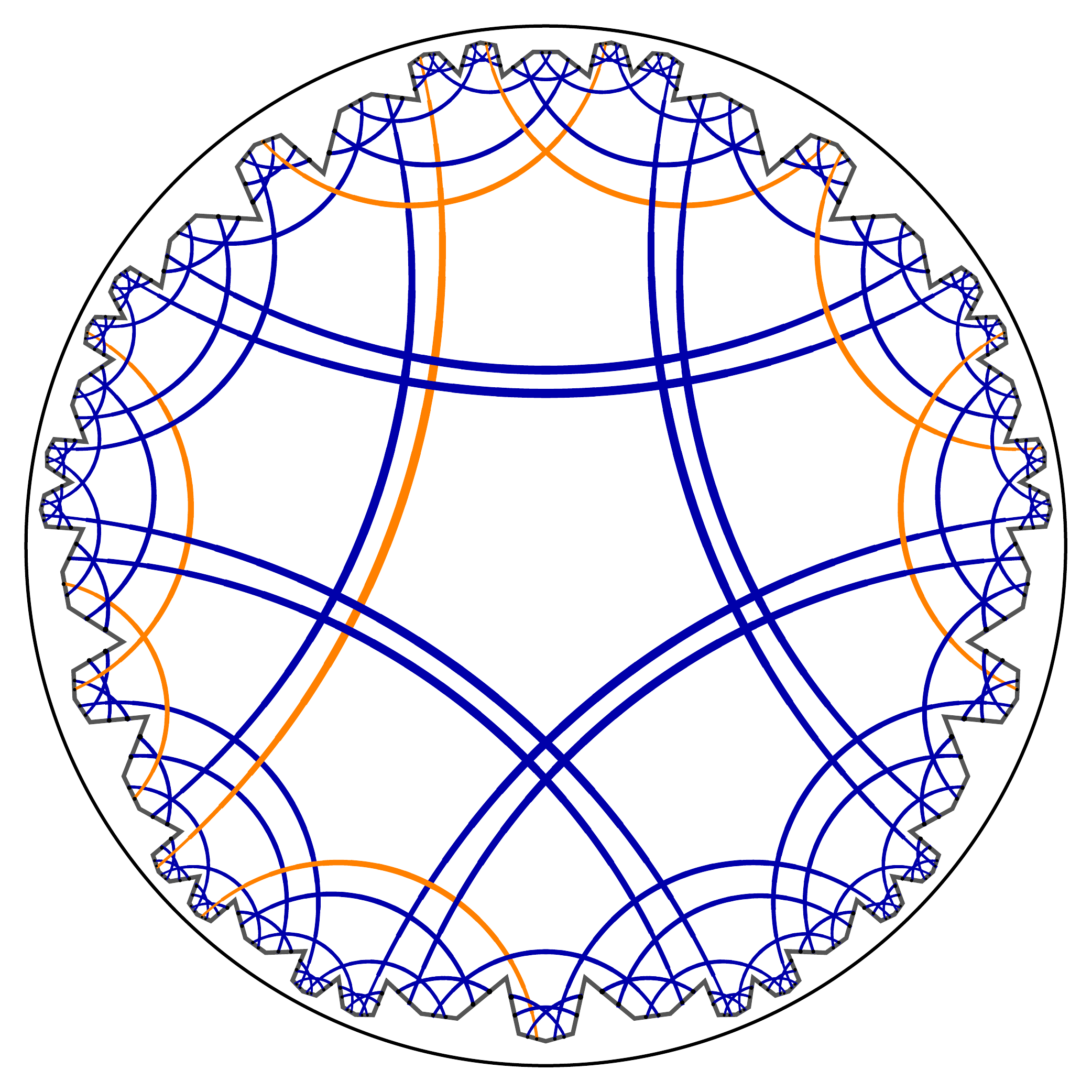}
\end{gathered}
\quad\quad\scalebox{1.1}{$(\mathfrak{w}=8)$}
\end{equation}
As a result, it is impossible to produce Majorana dimer states on the boundary of the fixed-input HyPeC that can be mapped to each other with less than $\mathfrak{w}=3$ Majorana operators. The underlying reason for this can be found in the geometrical construction: The number of possible boundary configurations $2^M$ on $L$ boundary edges increases much faster than the $2^N$ configurations on the $N$ bulk tiles, as the geometry is hyperbolic. 
\end{proof}

The property $\mathfrak{w}>2$ resembles the code properties of the HyPeC: Due to the tiles corresponding to $[[5,1,3]]$ code states, it requires three Pauli-type operations (``errors'') to map one code state to another. Thus, it requires at least three Pauli errors on the boundary to map any HyPeC state to another one. Here, we found that it also requires at least three ``Majorana errors'' to perform such a mapping. This is not a trivial result, as the number of Pauli operations corresponding to just two Majorana operations already grows in the distance between the two sites on which the Majorana operators act. For example, 
\begin{equation}
\m_2\m_{2k-1} = \i\, X_1 Z_2 Z_3 \dots Z_{k-1} X_k \text{ .}  
\end{equation}
In general, applying two Majorana operators $\m_j$ and $\m_k$ at some distance on the boundary produces a $Z$ string between the edges on which $\m_j$ and $\m_k$ act.
Fortunately, the $[[5,1,3]]$ code states upon which the HyPeC are built allow for the expression of long $Z$ strings as an action of just two Pauli operators as
\begin{align}
Z_1 Z_2 Z_3 \ket{\bar{0}}\; = \;
\begin{gathered}
\includegraphics[height=0.09\textheight]{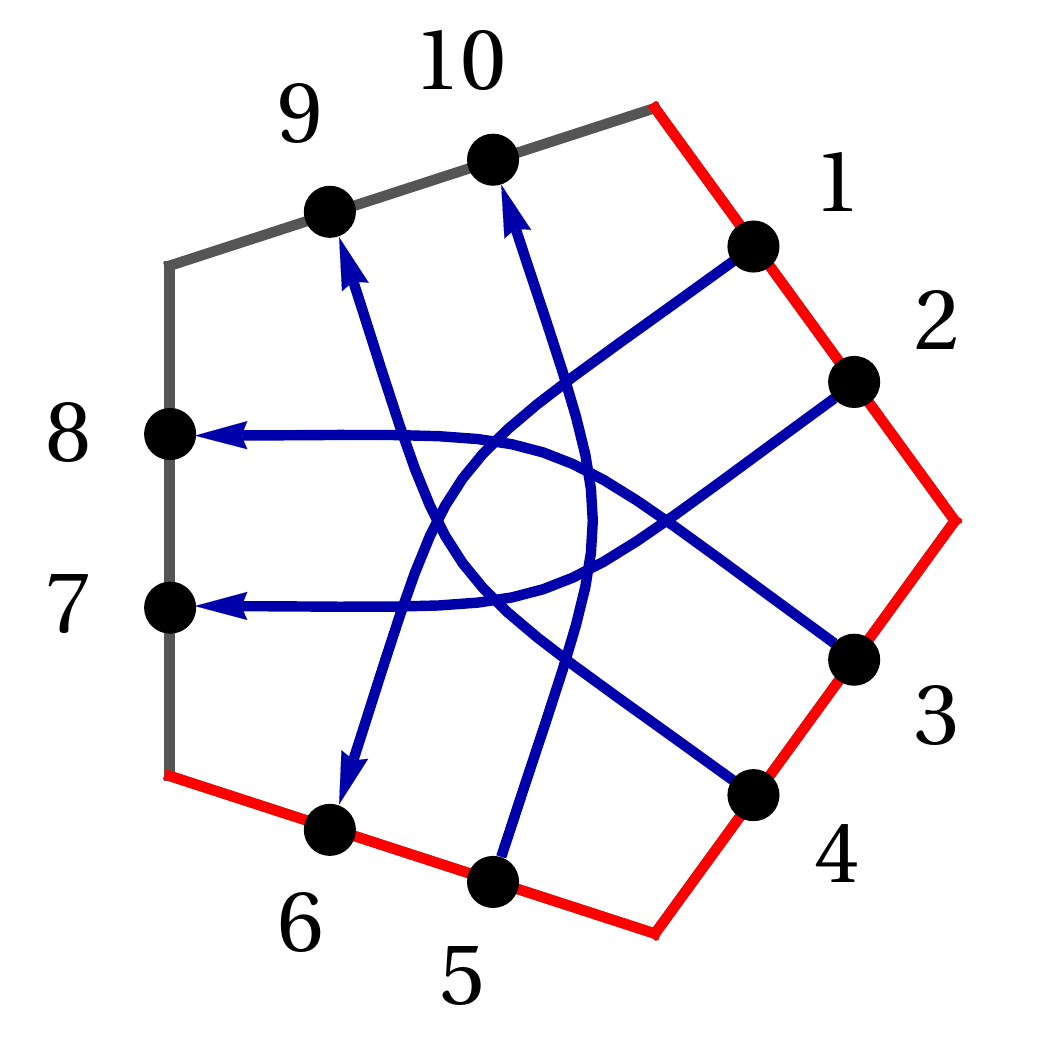}
\end{gathered}
\; = \;
\begin{gathered}
\includegraphics[height=0.09\textheight]{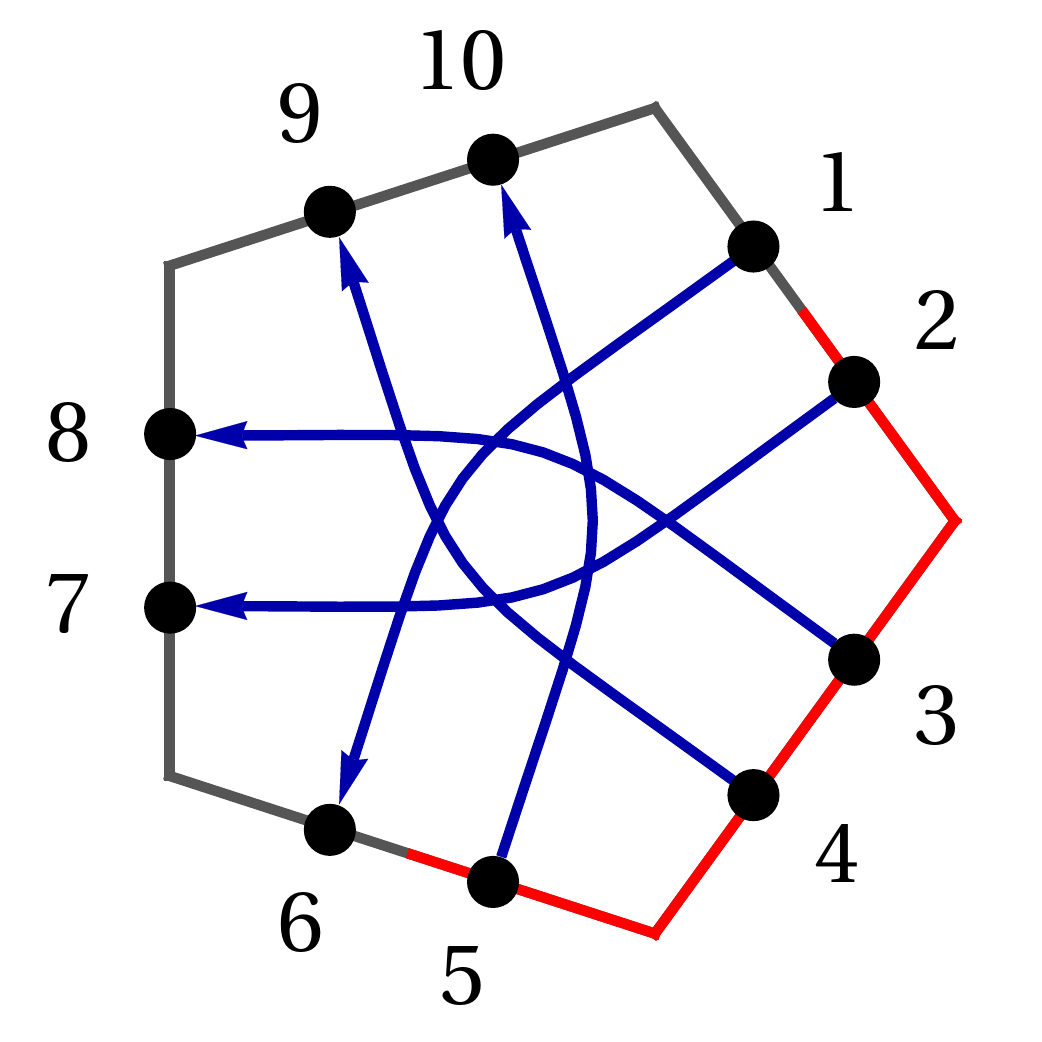}
\end{gathered} 
\; = \; X_1 X_3 \ket{\bar{0}},
\\
Z_1 Z_2 Z_3 Z_4 \ket{\bar{0}}\; = \;
\begin{gathered}
\includegraphics[height=0.09\textheight]{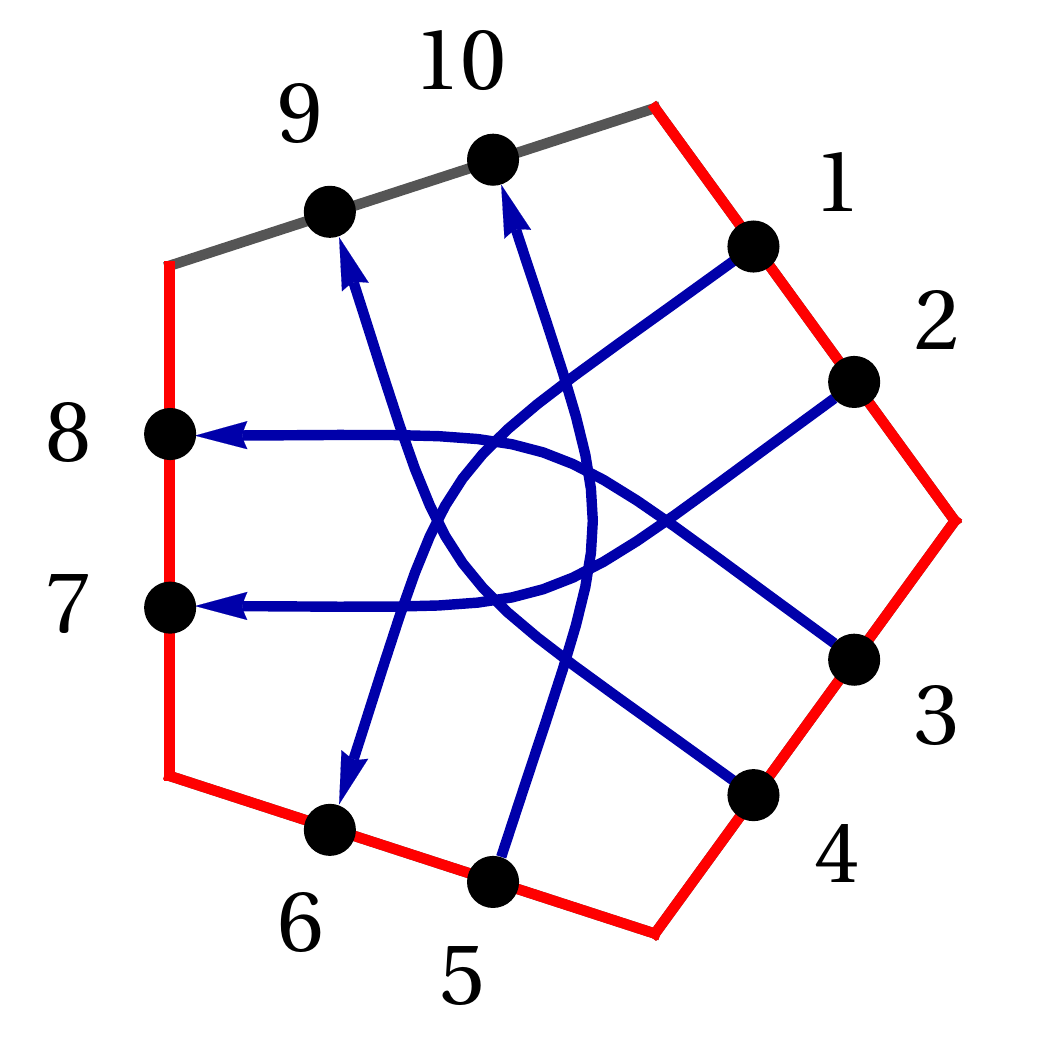}
\end{gathered}
\; = \;
\begin{gathered}
\includegraphics[height=0.09\textheight]{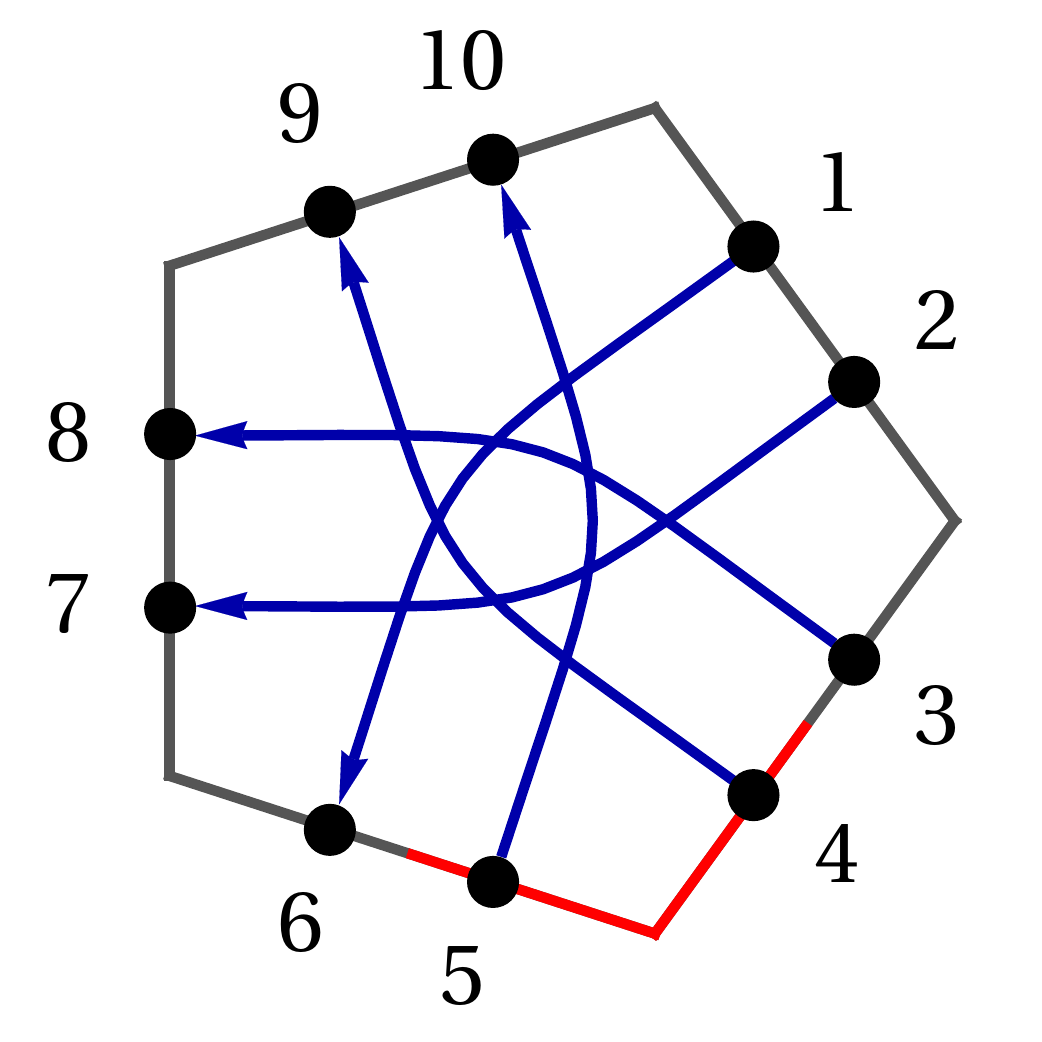}
\end{gathered} 
\; = \; X_2 X_3 \ket{\bar{0}}.
\end{align}
Here, we have used the $\bar{0}$ input for illustration. The relative sign between the left- and rand-hand side of these equations changes when using the $\bar{1}$ input instead, corresponding to a ``phase flip'' in the language of quantum error correction.
We conclude that a pair of Majorana operators on the boundary of the HyPeC is related, up to a complex phase, to no more than two Pauli operators acting on tiles on the boundary. As each of these tiles corrects one Pauli errors, no overlap between states for different bulk inputs can be produced with such an operation, supporting our earlier geometrical explanation.

\end{widetext}

\end{document}